\newcommand{\abnote}[1]{}
\newcommand{\snote}[1]{}
\newcommand{\indic}[1]{ \mathbb{I} \tlprn {#1}}
\newcommand\apnorm[2]{\left|\!\left|#1 \right|\!\right|_{#2}}
\newcommand{\mustar}{\mu^*}
\newcommand{\muhat}{\hat{\mu}}
\newcommand{\muprime}{\mu'}
\newcommand{\thetastar}{\theta^*}
\newcommand{\thetahat}{\hat{\theta}}
\newcommand{\thetaprime}{\theta'}
\newcommand{\thetals}{\theta_{\mathrm{LS}}}
\newcommand{\card}[1]{\lvert#1\rvert}
\newcommand{\Card}[1]{\left\lvert#1\right\rvert}
\newcommand\MYcurrentlabel{xxx}
\newcommand{\MYstore}[2]{%
  \global\expandafter \def \csname MYMEMORY #1 \endcsname{#2}%
}
\newcommand{\MYload}[1]{%
  \csname MYMEMORY #1 \endcsname%
}
\newcommand{\MYnewlabel}[1]{%
  \renewcommand\MYcurrentlabel{#1}%
  \MYoldlabel{#1}%
}
\newcommand{\MYdummylabel}[1]{}
\newcommand{\torestate}[1]{%
  \let\MYoldlabel\label%
  \let\label\MYnewlabel%
  #1%
  \MYstore{\MYcurrentlabel}{#1}%
  \let\label\MYoldlabel%
}
\newcommand{\restatetheorem}[1]{%
  \let\MYoldlabel\label
  \let\label\MYdummylabel
  \begin{theorem*}[Restatement of \cref{#1}]
    \MYload{#1}
  \end{theorem*}
  \let\label\MYoldlabel
}
\newcommand{\restatelemma}[1]{%
  \let\MYoldlabel\label
  \let\label\MYdummylabel
  \begin{lemma*}[Restatement of \cref{#1}]
    \MYload{#1}
  \end{lemma*}
  \let\label\MYoldlabel
}
\newcommand{\restateprop}[1]{%
  \let\MYoldlabel\label
  \let\label\MYdummylabel
  \begin{proposition*}[Restatement of \cref{#1}]
    \MYload{#1}
  \end{proposition*}
  \let\label\MYoldlabel
}
\newcommand{\restatefact}[1]{%
  \let\MYoldlabel\label
  \let\label\MYdummylabel
  \begin{fact*}[Restatement of \cref{#1}]
    \MYload{#1}
  \end{fact*}
  \let\label\MYoldlabel
}
\newcommand{\restatecorr}[1]{%
  \let\MYoldlabel\label
  \let\label\MYdummylabel
  \begin{corollary*}[Restatement of \cref{#1}]
    \MYload{#1}
  \end{corollary*}
  \let\label\MYoldlabel
}
\newcommand{\restate}[1]{%
  \let\MYoldlabel\label
  \let\label\MYdummylabel
  \MYload{#1}
  \let\label\MYoldlabel
}
\newcommand{\suchthat}{\;\middle\vert\;}
\newcommand{\cS}{\mathcal{S}}
\title{Sample-Optimal Private  Regression\\
in Polynomial Time}
\author{
Prashanti Anderson\\
\texttt{paanders@csail.mit.edu} \\
MIT
\and
Ainesh Bakshi\footnote{Supported by the NSF TRIPODS program (award DMS-2022448).} \\
\texttt{ainesh@mit.edu} \\
MIT
\and
Mahbod Majid \\
\texttt{mahbod@mit.edu} \\
MIT
\and
Stefan Tiegel\footnote{Supported by the European Union’s Horizon research and innovation programme (grant agreement no. 815464).} \\
\texttt{stefan.tiegel@inf.ethz.ch} \\
ETH Z\"urich
}
\date{}
\begin{document}

\maketitle

\begin{abstract}
We consider the task of privately obtaining prediction error guarantees in ordinary least-squares regression problems with Gaussian covariates (with unknown covariance structure).
We provide the first sample-optimal polynomial time algorithm for this task under both pure and approximate differential privacy.
We show that any improvement to the sample complexity of our algorithm would violate either statistical-query or information-theoretic lower bounds.
Additionally, our algorithm is robust to a small fraction of arbitrary outliers and achieves optimal error rates as a function of the fraction of outliers.
In contrast, all prior efficient algorithms either incurred sample complexities with sub-optimal dimension dependence, scaling with the condition number of the covariates, or obtained a polynomially worse dependence on the privacy parameters.

Our technical contributions are two-fold:
first, we leverage \textit{resilience} guarantees of Gaussians within the sum-of-squares framework. As a consequence, we obtain efficient sum-of-squares algorithms for regression with optimal robustness rates and sample complexity. Second, we generalize the recent robustness-to-privacy framework~\cite{Hopkins2023Robustness} to account for the geometry induced by the covariance of the input samples. This framework crucially relies on the robust estimators to be sum-of-squares algorithms, and combining the two steps yields a sample-optimal private regression algorithm. We believe our techniques are of independent interest, and we demonstrate this by obtaining an efficient algorithm for covariance-aware mean estimation, with an optimal dependence on the privacy parameters.

\end{abstract}

\thispagestyle{empty}
\clearpage
\newpage

\microtypesetup{protrusion=false}
\thispagestyle{empty}
\tableofcontents{}
\thispagestyle{empty}
\microtypesetup{protrusion=true}
\clearpage
\newpage

\setcounter{page}{1}

\section{Introduction}

In the increasingly data-driven fields of machine learning, econometrics, drug design and quantitative social sciences, the use of linear regression models plays a pivotal role in understanding and predicting relationships among variables. The models can often shed insight into complex phenomena such as economic growth, public health trends and social behavior. 

The data collected for the aforementioned applications is often sensitive to the identity of the individuals involved. 
Unauthorized access or exposure of this data could lead to significant privacy violations and potential harm to individuals, including discrimination and identity theft. Furthermore, trust in public and academic institutions may be eroded if personal data are mismanaged, leading to reluctance among individuals to participate in future studies or share information.

Complementary to privacy concerns of the individuals involved, the data collected for econometric and medical studies is often noisy to systematic collection errors and individuals who behave as outliers. Least-squares regression has received significant attention from both the differential privacy~\cite{alabi2022differentially, varshney2022nearly, liu2023near,asi2023robustness,brown2024insufficient} and robust statistics~\cite{diakonikolas2019efficient, klivans2018efficient, prasad2020robust, bakshi2021robust, pensia2024robust} communities in recent years, with a plethora of efficient and inefficient estimators. 

A recent line of work bridges these areas together, demonstrating that estimators that obtain the \emph{right} privacy guarantees are automatically robust to outliers~\cite{georgiev2022privacy}. Going in the opposite direction, efficient algorithms for estimating mean and covariance in the presence of arbitrary outliers were used to obtain sample-optimal private algorithms for estimating a Gaussian in total-variation distance~\cite{Hopkins2023Robustness}. These results, as well as other (inefficient) reductions between robustness and privacy~\cite{asi2023robustness} suggest that privacy and robustness are often complementary guarantees.

Focusing on ordinary least squares regression, nearly sample-optimal robust estimators based on spectral filtering are well-established and can be implemented in polynomial time~\cite{diakonikolas2019efficient}. However, a significant gap remains in the literature concerning private regression, as it is unclear whether filtering algorithms can be adapted for privacy. Currently, the best-known upper and lower bounds on the sample complexity of efficient private regression differ by polynomial factors in the dimension~\cite{brown2024insufficient}, require additional condition-number factors~\cite{varshney2022nearly, liu2023near}, and offer weak privacy guarantees. Developing sample-optimal, efficient algorithms for private regression remains a challenging endeavor, resulting in numerous private estimators with sub-optimal and often incomparable guarantees. The status of estimators for private regression remains nuanced even in one dimension~\cite{alabi2022differentially}. This leads us to the following central question:

\begin{quote}
\begin{center}
   \emph{Does there exist an efficient sample-optimal estimator for private regression?    }
\end{center}
\end{quote}

\subsection{Our Results}

We resolve this question affirmatively by presenting the first efficient, sample-optimal algorithm for private regression.\footnote{We use the term ``sample-optimal'' to mean that improving the sample complexity would violate existing lower bounds that are either information theoretic or computational (in our case in the SQ model).} Our estimator achieves both \emph{pure differential privacy} (or short, pure DP), widely regarded as the gold standard for privacy, as well as \emph{approximate differential privacy} (approx DP) simultaneously. We begin by defining our notion of privacy:

\begin{definition}[Pure Differential Privacy~\cite{dwork2006calibrating}]
Let $\calX$ be the set of all finite length input strings and let $\calO$ be the set of output strings. A randomized mechanism $\calM: \calX \to \calO$ is $(\eps, \delta)$-differentially private if for every pair of input strings $X, X' \in \calX$ with Hamming distance $1$, and every subset $S \subseteq \calO$, 
\begin{equation*}
    \prob{ \calM(X) \in S   } \leq e^{\eps} \cdot \prob{ \calM(X' ) \in S   } + \delta. 
\end{equation*}
If $\delta =0$ the mechanism satisfied pure DP and otherwise it satisfies approx DP.
\end{definition}

In addition to satisfying pure/approx differential privacy, our estimator also allows for data contamination via the most stringent model for handling outliers, known as the strong contamination model (see~\cite{diakonikolas2019recent} for a survey). 

\begin{definition}[Strong Contamination]
Fix a distribution $\calD$, a corruption rate $0<\eta<1/2$ and a set of n i.i.d. samples $\Set{x_1^*, x_2^*, \ldots, x_n^*}$ from $\calD$.
A set of points $\Set{x_1, x_2, \ldots, x_n}$ is called an \emph{$\eta$-corrupted} sample from $\calD$ (or $n$ $\eta$-corrupted samples), if $x_i = x_i^*$, for at least $(1-\eta) n$ indices.  
    
\end{definition}
Intuitively, this model allows for the adversary to be computationally unbounded and have access to the algorithm description.  We can now formally define the statistical model we consider:

\begin{problem}[Robust Regression]
\label{model:robust_regression}
Let $\Sigma \in \R^{d \times d}$ be an unknown PSD matrix and $\theta \in \R^d$ be an unknown vector. Let $\Set{(x_i^* , y_i^*)}_{i \in [n]}$ be $n$ i.i.d. samples generated as follows: $$y_i^* = \Iprod{ \theta , x_i^*} + \zeta_i ,$$ where $x_i^* \sim \calN(0, \Sigma)$, $\zeta_i \sim \calN(0,1)$. 
We refer to $\theta$ as the optimal hyperplane. The input is then an $\eta$-corruption of the samples $\Set{(x_i^* , y_i^*)}_{i \in [n]}$, denoted by $\Set{(x_i , y_i)}_{i \in [n]}$. The goal is to output $\hat{\theta}$ such that 
\begin{equation*}
    \expecf{(x,y)\sim \calD}{ \Paren{ y- \Iprod{\hat{\theta}, x} }^2 }  \leq  \expecf{(x,y)\sim \calD}{ \Paren{ y- \Iprod{\theta, x} }^2 }  + \alpha^2, 
\end{equation*}
where $\alpha \geq \eta \log(1/\eta)$.
\end{problem}

Note, the guarantee above corresponds to the \textit{ generalization error} of the estimator $\hat{\theta}$, and any computationally bounded algorithm can only achieve rate $\alpha \geq \eta \log(1/\eta)$~\cite{diakonikolas2019efficient}. 
We also note that getting the \textit{right} sample complexity for the private problem automatically implies that the estimator is robust to adversarial corruptions~\cite{georgiev2022privacy}, handling them explicitly is simply for ease of exposition. Further, in the above model, the adversary is allowed to corrupt both the samples and the labels.

\paragraph{Sample-optimal Private Regression.} Our main contribution is an efficient, sample-optimal algorithm for regression, satisfying pure differential privacy and robustness to adversarial corruptions.

\begin{theorem}[Optimal Private Regression (informal, see \cref{thm:main_private_regression,} )]
\label{thm:pure-dp-regression-informal}
Let $\theta, \Sigma$ be such that $\norm{\theta}_2 \leq R$ and $\Sigma \preceq \covscale I$ for some $R,L$.
Given $0< \alpha, \eps  <1$, and $n$ $\eta$-corrupted samples (as defined in \cref{model:robust_regression}) with parameters $\theta$ and $\Sigma$, there exists an $\eps$-differentially private algorithm which runs in $\poly(n, \log L, \log R)$-time and outputs $\hat{\theta}$ such that with probability $1-\beta$, 
\begin{equation*}
    \expecf{(x,y)\sim \calD}{ \Paren{ y- \Iprod{\hat{\theta}, x} }^2 }  \leq  \expecf{(x,y)\sim \calD}{ \Paren{ y- \Iprod{\theta, x} }^2 }  + \alpha^2, 
\end{equation*}
as long as $\alpha\geq \eta \log(1/\eta)$ and
\begin{equation*}
    n \geq \tilde{\Omega}\Paren{ \frac{d^2 +\log^2(1/\beta)}{\alpha^2} + \frac{d + \log(1/\beta)}{\alpha \eps} +  \frac{d\log(R \sqrt{\covscale} )}{\eps}  }. 
\end{equation*}
In the same setting,\footnote{Without requiring the bound on the covariance.} there exists an $(\eps,\delta)$-differentially private algorithm which runs in $\poly(n, \log L, \log R)$ time whenever 
\begin{equation*}
    n \geq \tilde{\Omega}\Paren{ \frac{d^2 +\log^2(1/\beta)}{\alpha^2} + \frac{d + \log(1/\beta)}{\alpha \eps} +  \frac{ \log(1/\delta) )}{\eps}  }   \,.
\end{equation*}
\end{theorem}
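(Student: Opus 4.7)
The plan is to execute the two-step blueprint advertised in the abstract. First, I would design a polynomial-time sum-of-squares (SoS) estimator for robust regression that attains the optimal error rate $\alpha \gtrsim \eta \log(1/\eta)$ with sample complexity $\tilde{\Omega}(d^2/\alpha^2)$, measured in the Mahalanobis metric induced by the unknown covariance $\Sigma$. Second, I would compose this estimator with a \emph{covariance-aware} generalization of the robustness-to-privacy reduction of~\cite{Hopkins2023Robustness} to privatize it while preserving the statistical rate, paying only the two DP overhead terms that appear in the theorem.

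\noindent\textbf{Step 1: robust SoS regression in the Mahalanobis metric.}
Since the excess generalization error equals $\|\Sigma^{1/2}(\hat\theta - \theta)\|_2^2$, the natural target is an error bound in that metric. I would encode an SoS relaxation with variables (i) a $\{0,1\}$-valued selector $w$ with $\sum_i w_i \ge (1-\eta) n$, (ii) a candidate parameter $\theta'$, and (iii) a candidate PSD matrix $\Sigma'$. The constraints certify that the $w$-selected subsample is Gaussian-like in the resilience sense: SoS-bounded moments for $(\Sigma')^{-1/2}x_i$ in every direction together with subgaussian control of the residuals $y_i-\langle\theta',x_i\rangle$. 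Using resilience of Gaussians and its now-standard SoS proofs, the program certifies $\|(\Sigma')^{1/2}(\theta'-\theta)\|_2 \lesssim \eta\log(1/\eta)$ once $n\gtrsim \tilde{\Omega}(d^2/\alpha^2)$, and rounding a pseudo-expectation gives $\hat\theta$ with the optimal excess risk. Crucially, the corruption budget $\eta$ and the Mahalanobis distance appear as explicit polynomial objects in the certificate, so they can be exposed to the privacy analysis in step~2.

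\noindent\textbf{Step 2: covariance-aware robustness-to-privacy.}
I would then run an exponential mechanism over a net $\mathcal{N}$ of candidate parameters whose score for $\theta'$ is (roughly) the smallest $\eta$ for which the SoS program above certifies $\|(\Sigma')^{1/2}(\theta'-\theta)\|_2 \le \alpha$ on the given input. By the argument of~\cite{Hopkins2023Robustness}, swapping a single sample changes this score by $O(1/n)$, so $\eps$-DP follows from standard exponential mechanism composition, and utility follows from the SoS guarantee in step~1. The covariance-aware modification is in the net: one has to cover $\{\theta' : \|\Sigma^{1/2}\theta'\|_2 \le R\sqrt{\covscale}\}$ at resolution $\alpha$ in the unknown Mahalanobis metric, which I would handle by first producing a coarse private estimate of $\Sigma$ and then instantiating the net. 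This yields $|\mathcal{N}| = (R\sqrt{\covscale}/\alpha)^{O(d)}$ and contributes the $d\log(R\sqrt{\covscale})/\eps$ term for pure DP. For approximate DP, I would replace the net by adding Gaussian noise in the privately-learned $(\Sigma')^{-1/2}$ geometry, trading $d\log(R\sqrt{\covscale})/\eps$ for $\log(1/\delta)/\eps$; the $d^2/\alpha^2$ and $d/(\alpha\eps)$ terms remain controlled by the SoS certificate and the $O(1/n)$ score sensitivity, respectively.

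\noindent\textbf{Main obstacle.}
The principal difficulty is the covariance-awareness itself: unlike the identity-covariance setting of~\cite{Hopkins2023Robustness}, both the parameter and the error metric here are tied to an unknown PSD matrix, so the SoS score, its sensitivity under sample replacement, and the net must all be expressed and bounded in a geometry we can only learn jointly with $\theta$. I expect the crux to be (a)~showing that the SoS-relaxed score has uniform $O(1/n)$ sensitivity despite the quadratic coupling between $\theta'$ and $\Sigma'$ in the certificate, and (b)~constructing the net (or the Gaussian-mechanism preconditioner) from the data without inflating the privacy budget beyond the $d\log(R\sqrt{\covscale})/\eps$ (resp.\ $\log(1/\delta)/\eps$) overhead. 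Once these are in place, standard composition of the private covariance-estimation subroutine with the exponential (resp.\ Gaussian) mechanism yields both claims of the theorem at the advertised sample complexity.
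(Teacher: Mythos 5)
Your high-level blueprint (robust SoS estimator followed by a robustness-to-privacy reduction) matches the paper's architecture, but the Step~2 you propose has two gaps that the paper identifies explicitly and spends most of its effort circumventing. First, you suggest producing ``a coarse private estimate of $\Sigma$'' and then instantiating a net in the learned geometry. Even a constant-multiplicative-factor private covariance estimate under pure DP costs $\Omega(d^2/\eps)$ samples --- one is privately releasing a $d^2$-dimensional object --- and that term is absent from the theorem and strictly dominates $d\log(R\sqrt{\covscale})/\eps$ whenever $d \gg \log(R\sqrt{\covscale})$. The paper points this out (end of Section~1.1 and Section~2.2): this very covariance-release step is the bottleneck that makes a naive application of~\cite{Hopkins2023Robustness} pay $d^2/(\alpha\eps)$, and the whole point of the geometry-aware score is to never publish a covariance privately. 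Second, the score you write down --- certify $\|(\Sigma')^{1/2}(\theta'-\tilde{\theta})\|\le\alpha$ with $\Sigma'$ itself an SoS indeterminate coupled to $\theta'$ --- is the paper's \emph{First Cut Score}, which they show is \emph{not} quasi-convex in $\tilde{\theta}$ because $\Sigma'$ and $\theta'$ are entangled in the same quadratic form. Without quasi-convexity the reduction of~\cref{thm:pure_dp_reduction} cannot be invoked and you cannot sample the exponential mechanism over the exponentially-large net in polynomial time; freezing $\Sigma'$ to a prior estimate restores quasi-convexity but then sensitivity is no longer $O(1)$ (their \emph{Second Cut Score}). The actual fix is to adjoin an auxiliary SoS matrix variable $Q$ constrained to be the two-sided inverse of $\Sigma'$, reformulate closeness as $\alpha^2\,\pE Q \succeq (\pE\theta'-\tilde{\theta})(\pE\theta'-\tilde{\theta})^\top$ --- which decouples $\theta'$ from the metric, is convex in $\tilde{\theta}$, keeps sensitivity $1$, and admits an efficient separation oracle --- and then separately prove an SoS inequality (\cref{lem:sos_inverse_covariance}) that $Q$ is within a constant factor of $\Sigma^{-1}$ so that the volume bound goes through without ever privatizing a covariance.

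There is also a real overstatement in your Step~1: ``resilience of Gaussians and its now-standard SoS proofs.'' The paper notes that resilience was believed not to be SoS-certifiable (citing~\cite{kothari2022polynomial}), that hypercontractivity-based SoS programs only reach an $O(\sqrt{\eta})$ rate, and that prior SoS regression~\cite{bakshi2021robust} needs quasi-polynomial degree for near-optimal rates. Their first main contribution, the SoS Selector Lemma (\cref{lemma:subset_selection}), shows resilience can be certified in SoS \emph{only} when restricted to subsets indicated by the SoS selector variables $w_i$ and \emph{only} for fixed (not indeterminate) directions $u$; the whole proof of~\cref{lem:main_sos_proof_regression} is structured around repeated applications of this restricted form. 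Likewise, for approximate DP you propose a Gaussian-mechanism preconditioner built from a privately-learned covariance, but the paper instead reuses the same exponential-mechanism reduction (\cref{thm:approx-dp-reduction}) and proves the required worst-case volume-ratio bound via product pseudo-expectations (\cref{sec:approx_dp_product_pe}) --- again precisely so that no covariance is ever privatized.
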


\begin{remark}[On Optimality.]
We show that each individual term appearing in the sample complexity above is necessary (up to log-factors).
Even in the regime where $\alpha$ is a fixed constant, the  $d^2$ term is necessary for any computationally bounded algorithm since any improvement to this term would break the statistical query (SQ) lower bound for robust regression.
In particular, for $\varepsilon = \Omega(1), \beta = 2^{-\Omega(n)}$ any $\varepsilon$-DP algorithm is robust to a small constant fraction of corruptions \cite[Theorem 3.1]{georgiev2022privacy} and hence requires $\Omega(d^2)$ samples~\cite{diakonikolas2019efficient}.\footnote{We remark that their lower bound (Theorem 3.1 in~\cite{diakonikolas2019efficient}) states the lower bound for a regression model in which the variance of the noise is unknown but bounded by 1. However, inspecting their lower bound construction, the variance of the noise can actually be taken to be in the interval $[1-\eta, 1]$. Our algorithms continue to work when the variance of the noise is promised to be in this interval (see the remarks in~\cref{sec:robust_regression_oneshot}).}
The same argument applies to approx-DP algorithms as long as $\delta = 2^{-\Omega(n)}$.
We also show that the terms $$\frac{d + \log(1/\beta)}{\alpha \eps} +  \frac{d\log(R \sqrt{\covscale} )}{\eps}$$ are information-theoretically necessary (see \cref{thm:pure-dp-lb-regresion} for a formal statement), the last term for pure DP. 
The $\log^2(1/\beta)/\alpha^2$ term can likely be improved to $\log(1/\beta)/\alpha^2$, which is again information-theoretically necessary but we did not try to optimize this particular factor. 
Therefore, improving the sample complexity of our estimator (ignoring $\log^2(1/\beta)$) would either break SQ or information-theoretic lower bounds.
As far as we are aware, there is no known lower bound for the $\Omega( \tfrac {\log (1/\delta)} \varepsilon)$ term.
\end{remark}

Prior to our work, there was no known computationally efficient algorithm for regression under pure-DP, not even when allowing sub-optimal sample complexities. A natural baseline would be to privately output the optimal regressor given by the normal equations, i.e., $\hat{\theta} = \Paren{ X^\top X }^{-1}  X^\top y$, where $X$ is a $n\times d$ matrix with each row being a sample, and the $y$ is the corresponding response vector. First, observe that $\norm{ X^\top y}_2 = \norm{ \Paren{ X^\top X } \beta }_2$ can scale as $\Omega(LR)$, where $\Sigma \preceq L I$ and $\norm{\beta}_2 \leq R$. Therefore, to isotropize the samples, we would have to learn the covariance up to accuracy $1/\bigO{LR}$ (using the estimator in~\cite{Hopkins2023Robustness}), which requires the sample complexity to scale polynomially in $L$ and $R$.   In contrast, the informational-theoretically optimal estimators scale logarithmically in $R$ and $L$ and therefore we consider any estimator that scales polynomially in $R$ and $L$ to be inefficient.  

Restricting to computationally inefficient estimators,  Asi, Ullman and Zakynthinou~\cite{asi2023robustness} obtain an estimator for pure-DP regression which requires sample complexity\footnote{We supress the dependence on the failure probability for clarity here.}
\begin{equation*}
    n \geq \Omega\Paren{\frac{d }{\alpha^2} + \frac{ \Paren{ d \log(R + \kappa)}  }{\alpha\eps}  },
\end{equation*}
where $\kappa$ is the condition number of the covariance $\Sigma$.
In contrast, our estimator is efficient, can handle arbitrarily ill-conditioned covariances as long as $\Sigma \preceq\covscale I$ (note that this does not imply a bound on the condition number, since there is no lower bound on $\Sigma$) and the $\log(R)$ term does not scale with $\alpha$. Apriori, the upper bound on the covariance may seem asymmetric and scale-invariant, but note that the following linear models are equivalent: $y_i = \Iprod{\theta , x} = \Iprod{\Sigma^{-1/2} \theta , g}$, where $x \sim \calN(0, \Sigma)$ and $g \sim \calN(0, I)$. Our combined assumption on $\Sigma$ and $\theta$ implies that the parameter $\Sigma^{1/2} \theta$ is bounded by $R\cdot\covscale$ and this is necessary for pure-DP via packing lower bounds. 

Previous efficient private estimators for regression all work in the weaker approx-DP setting and their techniques cannot achieve pure DP. Furthermore, they are also not robust to corruptions in the covariates, while our algorithm is robust to corruptions in both the covariates and labels. In particular, Varshney, Thakurta, and Jain~\cite{varshney2022nearly} obtain a sample complexity of $\Omega\Paren{\frac{d}{\alpha^2} +
\frac{\kappa \sqrt{\covscale}R d\log \delta^{-1}}{\alpha\eps} }$ and are not robust to outliers.  Liu, Jain, Kong, Oh and Suggala~\cite{liu2023near} improve it to sample complexity $\Omega\Paren{ \frac{d}{\alpha^2} + \frac{\sqrt{\kappa} d \log(1/\delta) }{ \alpha \eps} }$ and are only robust to label noise, which allows for the leading term to scale linearly in $d$. In contrast, we get rid of the condition number dependence and decouple $d/\alpha$ from the $\log(1/\delta)$ term. Finally, Brown, Hayase, Hopkins, Kong, Liu, Oh, Perdomo and Smith~\cite{brown2024insufficient} obtain sample complexity $\Omega\Paren{ \frac{d}{\alpha^2} + \frac{d \sqrt{\log(1/\delta)} }{\alpha \eps}  + \frac{d \log(1/\delta)^2}{\eps^2} } $. We decouple all $d$ and $\log(1/\delta)$ factors, and get a linear dependence on $1/\eps$.
We summarize the comparisons in \cref{tab:regression} and defer further discussion of these estimators to~\cref{sec:related-work}.

\begin{table}[h]
\begin{center}
\begin{tabular}{| c | c | c | c | c |} 
\hline
Paper & Sample Complexity & Poly-time? & Privacy \\
\hline \hline
\cite{liu2022differential} & $\frac{d}{\alpha^2} + \frac{d + \log \delta^{-1}}{\alpha \eps}$ & No & Approximate
\\ \hline
\cite{varshney2022nearly} & $\frac{d}{\alpha^2} +
\frac{\kappa \sqrt{\covscale}R d\log \delta^{-1}}{\alpha\eps}$ & Yes & Approximate
\\ \hline
\cite{liu2023near} & $\frac{d}{\alpha^2}+ \frac{\sqrt{\kappa} d \log \delta^{-1}}{\alpha \eps}$ & Yes & Approximate
\\ \hline
\cite{asi2023robustness} & $\frac{d}{\alpha^2} + \frac{d \log\paren{R + \sqrt{\kappa}}}{\alpha \eps}$ & No & Pure
\\ \hline
\cite{brown2024insufficient} &$\frac{d}{\alpha^2} + \frac{d \sqrt{\log \delta^{-1}}}{\alpha \eps} + \frac{d \log^2 \delta^{-1}}{\eps^2}$ & Yes & Approximate
\\ \hline
\Cref{thm:main_private_regression} & $\frac{d^2}{\alpha^2} + \frac{d}{\alpha \eps} + \frac{d \log R \covscale}{\eps}$ & Yes & Pure
\\ \hline
\Cref{thm:approx_dp_regression_full} & $\frac{d^2}{\alpha^2} + \frac{d}{\alpha \eps} + \frac{\log \delta^{-1}}{\eps} $ & Yes & Approximate
\\ \hline
\end{tabular}
\caption{\label{tab:regression} Private regression, omitting logarithmic factors. } 
\end{center}
\end{table}

At a high level, our algorithm proceeds via a robustness-to-privacy reduction. Similar to \cite{Hopkins2023Robustness}, we consider a convex relaxation of the score function for the exponential mechanism and relate it to a robust algorithm for regression, based on the \textit{sum-of-squares} hierarchy. However, there are two significant challenges with this approach: (a) there is no known \textit{sum-of-squares} based algorithm for robust regression that achieves optimal rates in polynomial time. The state-of-the-art robust algorithms require quasi-polynomial sample complexity and running time to obtain optimal rates, even for Gaussians~\cite{klivans2018efficient, bakshi2021robust}. And (b) unlike in \cite{Hopkins2023Robustness} we estimate parameters in an error metric specified by the unknown covariance which would be too costly to estimate privately. In particular, naive applications of~\cite{Hopkins2023Robustness} give algorithms with sample complexity scaling either with $d^2/(\alpha \eps)$ or $d \log(\covscale)/(\alpha \eps)$, where $\covscale$ is the upper bound on the covariance. We discuss these challenges in greater detail in \cref{sec:tech-overview}.

Our contributions include a new \textit{sum-of-squares} based algorithm for robust regression, which leverages \textit{resilience} properties of Gaussian samples in the sum-of-squares proof system. This statement was previously suspected to be impossible to certify (see "Resilience is likely not sos-izable" on page 6 in~\cite{kothari2022polynomial}). We show that it is not necessary to certify resilience for indeterminates in the proof system to obtain optimal rates for Gaussian regression. Our key insight is to show it suffices to invoke resilience on sum-of-squares indicated subsets of samples, which circumvents the hardness of certifying resilience. Next, we show that we can significantly extend the privacy framework of Hopkins, Kamath, Majid and Narayanan~\cite{Hopkins2023Robustness} to give private algorithms which are accurate with respect to the geometry induced by the covariance of the samples, as opposed to the Euclidean geometry. This enables us to match the information-theoretic lower bound of $d/(\alpha\eps)$.

\subsection{Application: Covariance-Aware Mean Estimation}

We believe our techniques for developing robust estimators and transforming them into private estimators are broadly applicable, enabling progress on various related problems. We demonstrate this by applying our framework to obtain sample-optimal estimators covariance-aware mean estimation under both pure-DP and approx-DP, another canonical problem in (robust) statistics.

In this problem, we obtain an $\eta$-corruption of $n$ samples drawn from $\calN(\mu, \Sigma)$. As in regression, a natural approach is to simply learn the covariance privately, using the optimal estimators from~\cite{Hopkins2023Robustness}. However, their algorithm is a two-step approach where they first learn the covariance, isotropize and then learn the mean. Observe, any such two-step algorithm must publish the covariance, incurring an $\Omega(d^2/(\alpha \eps))$ dependence on the privacy parameters. However, learning the mean in Mahalanobis distance only requires outputting a $d$-dimensional vector, and therefore the information-theoretic lower bound scales like $d/(\alpha \eps)$. We demonstrate that we can match this sample complexity with an efficient estimator:

\begin{theorem}[Covariance-Aware Mean Estimation (informal \cref{thm:main_private_cov_mean_est})]
\label{thm:informal-mean-est}
Let $\mu, \Sigma$ be such that $\norm{\mu}_2 \leq R$ and $ \Sigma \succeq \tfrac 1 {\covscale} \cdot I $ for some $R,L$.
Given $0< \alpha, \eps  <1$, and $n$ $\eta$-corrupted samples from $\calN(\mu,\Sigma)$ there exists a $\eps$-differentially private algorithm which runs in $\poly(n,\log L, \log R)$ time and outputs $\hat{\mu}$ such that with probability $1-\beta$, 
\begin{equation*}
     \norm{\Sigma^{-1/2}\Paren{\mu - \hat{\mu} } }_2 \leq \alpha,  
\end{equation*}
as long as  $\alpha \geq \eta \sqrt{\log(1/\eta)}$ and 
\begin{equation*}
    n \geq \Omega\Paren{ \frac{d^2 + \log^2 (1/\beta)}{\alpha^2} + \frac{d + \log(1/\beta)}{\alpha \eps} +  \frac{d\log(R \sqrt{\covscale} )}{\eps}  }\, .
\end{equation*}
In the same setting,\footnote{Without requiring the bound on the covariance.} there exists an $(\eps,\delta)$-differentially private algorithm which runs in $\poly(n, \log R, \log L)$ time whenever 
\begin{equation*}
    n \geq \tilde{\Omega}\Paren{ \frac{d^2 +\log^2(1/\beta)}{\alpha^2} + \frac{d + \log(1/\beta)}{\alpha \eps} +  \frac{ \log(1/\delta) )}{\eps}  }   \,.
\end{equation*}
\end{theorem}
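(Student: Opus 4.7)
The plan is to mirror the regression argument developed in the paper and instantiate the extended robustness-to-privacy framework on a new sum-of-squares (SoS) robust mean estimator, tailored so that the error is controlled in the Mahalanobis geometry induced by the unknown $\Sigma$ rather than in Euclidean geometry. Concretely, I would (i) design an SoS robust mean estimator that implicitly certifies a covariance proxy $\hat\Sigma$ together with a candidate mean $\muhat$ satisfying $\norm{\hat\Sigma^{-1/2}(\muhat - \mu)}_2 \lesssim \eta \sqrt{\log(1/\eta)}$, and then (ii) feed this estimator through a generalized exponential mechanism that releases \emph{only} $\muhat$, so that the privacy volume is paid in $d$ rather than $d^2$ dimensions.

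For step (i), the key technical ingredient is the same resilience-on-SoS-indicated-subsets trick used for regression: rather than attempting to certify Gaussian resilience on free indeterminates (which is believed not to be SoS-izable), I would add SoS constraints encoding the existence of a large subset $S$ agreeing with the input on at least $(1-\eta)n$ indices, and then invoke resilience of Gaussians on that SoS-indicated $S$ to conclude that the restricted empirical mean is close to $\mu$ in the Mahalanobis metric, uniformly over the pseudodistribution. At $n = \tilde{\Omega}(d^2/\alpha^2)$ this produces an SoS proof of $\norm{\hat\Sigma^{-1/2}(\muhat - \mu)}_2 \leq \alpha$, matching the SoS-based regression rate and the SQ-hard $d^2$ scaling.

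For step (ii), I would adapt the score function in the exponential mechanism from a Euclidean distance to an SoS-convexified surrogate for $\norm{\hat\Sigma^{-1/2}(\muhat - \muprime)}_2$, so that both the concentration of the score at the true mean and the low sensitivity under neighboring datasets are certified in the hidden covariance's geometry. This is precisely the generalization of the Hopkins--Kamath--Majid--Narayanan framework that the paper advertises as its second technical contribution. Running the mechanism against an appropriate net for $\mu$ of radius $R$ then contributes the $d\log(R\sqrt{\covscale})/\eps$ packing term for pure DP, while replacing the outer packing step with a Gaussian-mechanism-plus-propose-test-release variant yields the $\log(1/\delta)/\eps$ term for approx DP without requiring a lower bound on $\Sigma$.

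The main obstacle I anticipate is the sensitivity analysis: the distance in the score is defined by the \emph{hidden} $\Sigma$, so I must show inside the SoS proof system that any two feasible pairs $(\muhat,\hat\Sigma)$ and $(\muhat',\hat\Sigma')$ arising from neighboring datasets induce comparable metrics, i.e.\ that $\hat\Sigma$ and $\hat\Sigma'$ are spectrally close. This forces the resilience argument to be invoked in both its ``upper'' and ``lower'' spectral forms, and additionally requires that the SoS relaxation cannot be perturbed more than $O(1/n)$ in operator norm by a single sample swap. Once this spectral equivalence is established within SoS, combining it with the covariance-aware volume argument of step (ii) closes the proof and yields the stated sample complexity for both privacy regimes.
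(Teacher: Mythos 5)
Your overall architecture is the same as the paper's: an SoS robust mean estimator whose constraint system carries indicator variables for a large agreeing subset, with resilience invoked only on SoS-indicated subsets via the Selector Lemma; a quasi-convex closeness constraint of the form $\Iprod{u,\pexpecf{}{\mu'}-\tilde{\mu}}^2 \leq O(\alpha^2)\,\pexpecf{}{u^\top \Sigma' u}$ that uses the SoS covariance $\Sigma'$ as an implicit proxy for $\Sigma$; and the robustness-to-privacy reduction paying volume only in the $d$-dimensional parameter, giving the $d\log(R\sqrt{L})/\eps$ term from the lower bound on $\Sigma$.

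The one place your plan goes wrong is the sensitivity analysis. You propose to show that the covariance proxies $\hat\Sigma,\hat\Sigma'$ arising from neighboring datasets are spectrally close, and that ``the SoS relaxation cannot be perturbed more than $O(1/n)$ in operator norm by a single sample swap.'' That step would fail: a single adversarially swapped sample can move the empirical covariance (and any feasible $\Sigma'$) by an unbounded amount in operator norm, so no such stability holds. The paper avoids this entirely. Sensitivity is proved by taking the certifying linear functional $\calL$ for the original dataset and defining $\calL' p = \calL p$ on monomials not containing $w_i$ and $\calL' p = 0$ otherwise; since the closeness constraint involves only $\mu'$ and $\Sigma'$ (no $w_i$), it is preserved verbatim, and only the counting constraint needs to be re-verified with $T+1$ in place of $T$. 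No comparison of metrics across neighboring datasets is needed. The concern you raise does legitimately appear, but elsewhere: in the approximate-DP case one must bound volume ratios of low-scoring points for \emph{worst-case} data, and there the paper shows (via a product of the two certifying pseudo-expectations and the large overlap of their indicator sets) that any two feasible solutions have spectrally comparable $\Sigma'$ and close $\mu'$. Your proposal omits this product-pseudo-expectation identifiability argument, which is the actual content behind the $\log(1/\delta)/\eps$ term, rather than a ``Gaussian-mechanism-plus-propose-test-release'' step.
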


\begin{remark}[On Optimality]
We prove a matching information-theoretic lower bound of $\frac{d + \log(1/\beta)}{\alpha \eps} +  \frac{d\log(R \sqrt{\covscale} )}{\eps} $ for pure-DP (see~\cref{thm:pure-dp-lb-mean-etimation}). 
As earlier, we know that any such estimator is inherently robust and thus requires at least $\Omega(d^2)$ samples (again for $\varepsilon = \Omega(1), \beta = 2^{-\Omega(n)}$, and $\delta = 0$ or $\delta = 2^{-\Omega(n)}$), see also \cite[Section 7.3.2]{diakonikolas2024sos}.
Under Approximate-DP we know that the $d/\alpha \eps + \log(1/\delta) / \eps$ is necessary and the $\log(1/\beta)/\alpha \eps$ term is potentially improvable to $\min(\log(1/\delta),\log(1/\beta))/\alpha \eps$ even in the known covariance case from fingerprinting lower bounds \cite{karwa2018finite,kamath2019privately,kamath2022new}.
\end{remark}

We note that no efficient, sample-optimal estimators were known for covariance-aware mean estimation under pure-DP. Asi, Ullman and Zakynthinou~\cite{asi2023robustness} obtained an inefficient estimator that requires the number of samples to scale as 
\begin{equation*}
    n \geq \Omega\Paren{\frac{d }{\alpha^2} + \frac{ \Paren{ d \log(R + \kappa)}  }{\alpha\eps}  },
\end{equation*}
where $\kappa$ is the condition number. In contrast, we can handle arbitrarily ill-conditioned covariances.
The estimators obtained by Hopkins, Kamath, Majid and Narayanan have the sample complexity scale as 
\begin{equation*}
    n \geq \Omega\Paren{\frac{d^2 }{\alpha^2} + \frac{ d^2 }{\alpha\eps}   },
\end{equation*}
since a key step in their approach is to isotropize the samples. This requires privately releasing the covariance, which is a $d^2$-dimensional object. In contrast, we demonstrate that isotropizing the samples is unnecessary and we can extend the exponential mechanism to sample from a space where the induced metric is $\norm{\cdot}_{\Sigma^{-1/2}}$, even though $\Sigma$ is unknown. There are several estimators that work in the approx-DP setting and we compare to them in \cref{tab:mean-estimation}.

\begin{table}[h]
\begin{center}
\begin{tabular}{| c | c | c | c | c |} 
\hline
Paper & Sample Complexity & Poly-time? & Privacy \\
\hline \hline
\cite{kamath2019privately} & $\frac{d}{\alpha^2}+\frac{d}{\alpha \rho^{1 / 2}}+\frac{\sqrt{d} \log ^{1 / 2} R}{\rho^{1 / 2}} + \frac{d^{3 / 2} \log ^{1 / 2}\kappa}{\rho^{1 / 2}}$ & Yes & Concentrated
\\ \hline
\cite{liu2022differential} & $\frac{d}{\alpha^2} + \frac{d + \log\delta^{-1}}{\alpha \eps}$ & No & Approximate
\\ \hline
\cite{brown2021covariance} & $\frac{d}{\alpha^2} + \frac{d}{\alpha \eps} + \frac{\log \delta^{-1}}{\eps}$ & No & Approximate 
\\ \hline
\cite{brown2023fast} & $\frac{d}{\alpha^2} + \frac{d\sqrt{\log \delta^{-1}}}{\alpha \eps} + \frac{d \log \delta^{-1}}{\eps}$ & Yes & Approximate
\\ \hline
\cite{kuditipudi2023pretty} & $\frac{d}{\alpha^2} + \frac{d \log \delta^{-1}}{\alpha \eps} + \frac{d \log^2\delta^{-1}}{\eps^2}$ & Yes & Approximate
\\ \hline
\cite{asi2023robustness} & 
$\frac{d}{\alpha^2} + \frac{d\log(R+\sqrt{\kappa})}{\alpha\eps}$
& No & Pure \\
\hline
\cite{Hopkins2023Robustness} & $\frac{d^2}{\alpha^2} + \frac{d^2}{\alpha \eps} + \frac{ d^2 \log \kappa}{\eps} +\frac{d \log R}{\eps}$ & Yes & Pure 
\\ \hline
\cite{Hopkins2023Robustness} & $\frac{d^2}{\alpha^2} + \frac{d^2}{\alpha \eps} + \frac{\log 1/\delta}{\eps}$ & Yes & Approximate
\\ \hline
\Cref{thm:main_private_cov_mean_est} & $\frac{d^2}{\alpha^2} + \frac{d}{\alpha \eps} + \frac{ d \log R\sqrt{L}}{\eps}$ & Yes & Pure 
\\ \hline
\Cref{thm:approx_dp_mean_full} & $ \frac{d^2}{\alpha^2} + \frac{d}{\alpha \eps} + \frac{ \log \delta^{-1}}{\eps}$  & Yes & Approximate \\
\hline
\end{tabular}
\caption{\label{tab:mean-estimation} Private covariance-aware mean estimation of Gaussians, omitting logarithmic factors.}
\end{center}
\end{table}

\section{Technical Overview}
\label{sec:tech-overview}

Our approach consists of two key components. First, we present an efficient \textit{sum-of-squares} algorithm for robust regression that achieves optimal rates as a function of the outliers and optimal sample complexity. Second, we extend the robustness-to-privacy framework of Hopkins, Kamath, Majid and Narayanan~\cite{Hopkins2023Robustness} to accommodate the geometry induced by the covariance of the input data. In this section, we describe the main bottlenecks for private regression and key ideas we introduce to overcome them. 

\subsection{Optimal Robust Regression for Gaussians}
Filtering-based approaches are well-known to obtain optimal rates for robust regression, when the underlying covariates are drawn from a Gaussian~\cite{diakonikolas2019efficient}. However, such algorithms are unsuitable for a privacy-to-robustness reduction. On the other hand, a \textit{sum-of-squares} based estimator for mean and covariance estimation was recently shown to be private~\cite{Hopkins2023Robustness}. Therefore, a natural approach is to construct a SoS estimator for regression with optimal rates. The state-of-the-art SoS estimators for robust regression by Bakshi and Prasad~\cite{bakshi2021robust}, exploit $k$-th moment information to get an error rate of $\eta^{1-1/k}$, where $\eta$ is the fraction of outliers. This estimator uses information about order-$k$ moments and therefore requires $\Omega(d^k)$ samples. To achieve an error rate of $\tilde{\mathcal{O}}\Paren{ \eta }$, which is optimal for Gaussians, Bakshi and Prasad's estimator would require quasi-polynomially many samples and running time.

\paragraph{Robust Covariance estimation.}
We use the task of robust covariance estimation as a running example, since it is a key sub-routine in both regression and covariance-aware mean estimation. Given a set, $X^* = \Set{x_i^*}_{i \in [n]}$, of $n$ i.i.d. samples from $\calN(0, \Sigma)$, and an $\eta$-corruption denoted by $X = \Set{x_i}_{i \in [n]}$, we can demonstrate a first-cut information-theoretic argument for covariance estimation, which mimics a SoS relaxation. Consider a coupling, denoted by indicators $r_i = \indic{ x_i^* = x_i }$, i.e. the $i$-th sample remains uncorrupted. Then, the empirical covariance of the indicator selected points, $\tilde{\Sigma} = \frac{1}{n} \sum_{i \in [n]} r_i \cdot x_i x_i^\top$ , serves as a robust estimator: 
\begin{equation}
\label{eqn:hypercontractivity-computation-intro}
    \begin{split}
        \Iprod{ \Sigma - \tilde{\Sigma}, vv^\top  }^2 & \approx  \Iprod{  \frac{1}{n} \sum_{i \in [n]} (1-r_i) \cdot x_i^* {x^*_i}^\top, vv^\top  }^2   \\
        &\leq \Paren{ \frac{1}{n} \sum_{i \in [n]} (1-r_i)^2 } \Paren{ \frac{1}{n} \sum_{i \in [n]} \Iprod{x_i^*, v}^4 } \\
        & \leq \bigO{\eta} \cdot (v^\top \Sigma v)^2, 
    \end{split}
\end{equation}
where the first equality conflates the empirical covariance of the $x_i$'s and the true covariance, the first inequality is Cauchy-Schwarz and the second uses hypercontractivity of Gaussians (see~\cref{fact:hypercontractivity}). Note, this argument only ever uses the fourth-moment of the distribution, and as a result estimates the covariance only up to $\bigO{\sqrt{\eta}}$-error. To obtain better rates, we consider the notion of \textit{resilience}.

A set of samples $X^* = \Set{x^*_i}_{i \in [n]}$ is \textit{resilient}  if for every subset $\calT \subset [n]$ of size $\eta \cdot n$, for all directions $v\in\mathbb{R}^d$, 
\begin{equation}
\label{eqn:resilience-intro}
\begin{split}
      & \frac{1}{  n} \sum_{i \in \calT} \Iprod{ x^*_i ,  v}^2    \leq   \Paren{\eta \log(1/\eta) } v^\top \Sigma v, \\
      \textrm{and} \hspace{0.2in}  & \frac{1}{  n} \sum_{i \in \calT} \Iprod{ x^*_i ,  v}^4    \leq   \Paren{\eta \log(1/\eta)^2 } \Paren{v^\top \Sigma v}^2. 
\end{split}
\end{equation}
Invoking resilience instead of hypercontractivity, in a calculation similar to \cref{eqn:hypercontractivity-computation-intro}, we get improved rates:
\begin{equation*}
\begin{split}
    \Iprod{ \Sigma - \tilde{\Sigma}, vv^\top  }^2 &  \leq \Paren{ \frac{1}{n} \sum_{i \in [n]} (1-r_i) } \Paren{ \frac{1}{n} \sum_{i \in [n]} (1-r_i) \Iprod{x_i, v}^4 } \\
        & \leq \bigO{\eta^2 \log^2(1/\eta)} \cdot (v^\top \Sigma v)^2,
\end{split}
\end{equation*}
where the second inequality follows from applying \cref{eqn:resilience-intro} to the set $\calT$ indicated by $1-r_i$.  Of course, we do not know the uncorrupted set, and a brute-force search would require exponential time. The standard SoS approach, introduced by Kothari, Steinhardt and Steurer~\cite{kothari2018robust}, considers a relaxation that searches for the indicators of the uncorrupted set, subject to hypercontractivity constraints, and gets stuck at $\sqrt{\eta}$, much like the computation in \cref{eqn:hypercontractivity-computation-intro}. Encoding \textit{resilience} as a constraint instead would solve the issue, but only at the expense of exponential running time, since there are exponentially many subsets of size $\eta n$. 

Kothari, Manohar and Zhang~\cite{kothari2022polynomial} sidestep the issue of certifying \textit{resilience} in the SoS proof system by working with pseudo-expectations instead, and show that the rounded object must be close to the true covariance. This approach inherently outputs a $d^2$ dimensional object, the description of the estimated covariance and hence results in a two-shot algorithm for mean estimation (or even a hypothetical extension to regression). While this suffices for privately learning Gaussians in total variation distance~\cite{Hopkins2023Robustness}, this step loses an additional factor of $d$, getting sample complexity that scales as $d^2/(\alpha \eps)$, for covariance-aware mean estimation and regression. To avoid outputting a covariance, and isotropizing the samples, we require the SoS proof system to implicitly construct an object that is close to the true covariance, which in turn seems to require certifying resilience, bringing us back to where we started. We note that the techniques of~\cite{kothari2022polynomial} precisely avoid showing the existence of such sum-of-squares proofs.

\paragraph{Certifying Resilience in Sum-of-Squares.} We consider the following polynomial system (which is identical to the one in~\cite{kothari2022polynomial}): given an $\eta$-corrupted sample $\Set{x_i}_{i\in[n]}$, let $\Set{x_i', w_i}_{i\in [n]}$ be indeterminates and 
\begin{equation}
\label{eqn:cov-constraint-intro}
    \begin{split}
    \calA_\eta = \left \{
    \begin{aligned}
      &\forall i\in [n].
      & w_i^2
      & = w_i\hspace{0.1in}; \hspace{0.2in}  w_i x_i'  =  w_i x_i \\
      && \sum_{i \in [n]} w_i & = (1-\eta)n  \\
      & \forall v\in\mathbb{R}^d &  \frac{1}{n}\sum_{i \in [n]}  {\langle x'_i , v\rangle^{4}} 
      &\leq  \Paren{3+\eta \log^2 (1/\eta)} \left(  \frac{1}{n}\sum_{i \in [n]}  {\langle x'_i , v\rangle^{2}} \right)^{2}
    \end{aligned}
  \right \}
    \end{split}
\end{equation}
This system searches for a set of points that satisfies the hypercontractivity constraint. This constraint admits a succinct description, whenever the hypercontractivity inequality admits a sum-of-squares proof (see \cref{subsec:sos-background} for relevant background).  Our key insight is that while it is likely hard to certify resilience along all directions (cf.~\cite{hopkins2019hard}), and for all subsets, we can certify it for any fixed direction, and for the subsets that are indicated by the indeterminates $w_i$. We formalize this using the Selector Lemma (see \cref{lemma:subset_selection}): given non-negative scalars $\Set{a_i}_{i \in [n]}$ such that for all subsets $\calT$ of size $k$, $\sum_{i \in \calT} a_i \leq B$, there's a degree-two proof of the following:
\begin{equation}
\label{eqn:selector-lemma-intro}
\tag{Selector Lemma}
    \Set{ \forall i \in [n] \colon  w_i^2 = w_i ;  \sum_{i} w_i = k n } \sststile{}{} \Set{ \sum_{i \in [n]} w_i a_i \leq B  }.
\end{equation}
The \ref{eqn:selector-lemma-intro} allows us to certify resilience, as long as the only indeterminates are the SoS indicators. We use it as follows: let $\Sigma'= \frac{1}{n} \sum_{i \in [n]} x_i' (x_i')^\top$. Then, for any fixed direction $u$, we have

\begin{equation}
\label{eqn:cov-est-intro}
    \begin{split}
       \calA_\eta \sststile{}{w, x'} \Biggl\{ \Iprod{ \Sigma - \Sigma', u u^\top }^2 & = \Paren{    \frac{1}{n}\sum_{i \in [n]} \Iprod{x'_i, u}^2 - v^\top \Sigma v }^2   \\
       & \leq 2 \Paren{  \frac{1}{n}\sum_{i \in [n]} r_i w_i \Iprod{x'_i, u}^2    }^2 + 2 \Paren{ \frac{1}{n}\sum_{i \in [n]} (1-r_i w_i) \Iprod{x'_i, u}^2  } \Biggr\},
    \end{split}
\end{equation}
where the indicators $r_i$ denote whether the $i$-th sample was uncorrupted. As a consequence, $r_i w_i x_i' = r_i w_i x_i^*$, the uncorrupted samples, and therefore, the first term in \eqref{eqn:cov-est-intro} can be bounded as follows:
\begin{equation*}
\begin{split}
    \calA_\eta \sststile{}{} \Biggl\{ \Paren{  \frac{1}{n}\sum_{i \in [n]} r_i w_i \Iprod{x'_i, u}^2    }^2  & = \Paren{  \frac{1}{n}\sum_{i \in [n]} r_i w_i \Iprod{x^*_i, u}^2    }^2   \leq \bigO{\eta }  \Paren{ \frac{1}{n} \sum_{i \in [n]} w_i \cdot \Iprod{x_i^*, u}^4 }  \leq \tilde{\mathcal{O}}(\eta^2),     \Biggr\}
\end{split}
\end{equation*}
where the last inequality follows from applying resilience and the  \ref{eqn:selector-lemma-intro} to the SoS-indicators. The rest of the terms require careful analysis, but can be handled in a similar manner (see \cref{lem:main_identity_cov_sos_proof} for a complete proof). We emphasize here that the direction $u$ above is not allowed to be an indeterminate, as the Selector Lemma does not extend to this setting due to the hardness inherited from Hopkins and Li~\cite{hopkins2019hard}. However, we can still show that for all fixed $u$, 
\begin{equation}
\label{eqn:covariance-estimation-sos-intro}
   \calA_\eta \sststile{}{w,x'} \Set{ \Iprod{ \Sigma' - \Sigma , uu^\top }^2 \leq  \bigO{\eta^2 \log^2(1/\eta)} \Iprod{\Sigma, uu^\top}^2 }.
\end{equation}
This is precisely the kind of statement we require to execute a single shot regression and covariance-aware mean estimation algorithm, since we use $\Sigma'$ as an implicit proxy that captures the geometry of the samples. 

\paragraph{Single-shot Robust Regression.} We now have all the ingredients we need in order to obtain a robust regression algorithm with optimal rates and sample complexity. Here, our input is $\Set{(x_i, y_i)}_{i \in [n]}$ generated according to \cref{model:robust_regression}. Our constraint system resembles that of Bakshi and Prasad~\cite{bakshi2021robust}, but the hypercontractivity constant is set to $\Paren{3+\eta \log^2 (1/\eta)}$, as described in \cref{eqn:cov-constraint-intro}. In addition, we search for thee regression vector using an indeterminate $\theta'$ and require that $\frac{1}{n}\sum_{i \in [n]} \Paren{ \Iprod{ x_i', \theta'  } - y_i' } x_i' =0$. This constraint encodes the first-order optimality condition for the least-squares objective and was introduced as the \textit{gradient condition} in ~\cite{bakshi2021robust}. The key SoS inequality we show is as follows: 
\begin{equation}
    \calA_\eta \sststile{}{w,\theta',x'} \Set{  \Iprod{u, \Sigma\Paren{\theta' - \theta } }^2 \leq \bigO{\eta^2 \log^2(1/\eta)}  u^\top \Sigma u } 
\end{equation}
The proof is quite involved and requires applying the Selector Lemma several times (see \cref{lem:main_sos_proof_regression}). Next, observe suffices to obtain such a statement for a fixed direction $u$. To see this, let $\mu$ be a pseudo-distribution consistent with $\calA$, then for any fixed $u$, applying the pseudo-expectation operator, 
\begin{equation}
\label{eqn:pseduo-expec-regression-closeness}
    \Iprod{u, \Sigma\Paren{\pexpecf{\mu}{\theta'} - \theta } }^2  \leq \pexpecf{\mu}{ \Iprod{u, \Sigma\Paren{\theta' - \theta } }^2}  \leq \bigO{\eta^2 \log^2(1/\eta)}  u^\top \Sigma u .
\end{equation}
Setting $u = \pexpecf{\mu}{\theta'} - \theta$, we can conclude that $\Norm{\Sigma^{1/2} \Paren{  \pexpecf{}{ \theta' } - \theta  } }_2^2 \leq \bigO{\eta^2 \log^2(1/\eta)}$, as desired. The sample complexity of our algorithm is dominated by the concentration required for \textit{resilience}, and it follows from~\cite[Lemma 6.3]{Hopkins2023Robustness} that the sample complexity scales as $\bigO{\frac{d^2 + \log^2(1/\beta)}{\alpha^2}}$, where $\beta$ is the failure probability. In the regime where $\alpha$ is a constant, the SQ lower bound in \cite{diakonikolas2019robust} implies that $\Omega(d^2)$ samples are necessary.  Executing a similar argument also yields an algorithm for robust covariance-aware mean estimation (see \cref{thm:mean-est-sample-opt} for a complete proof).

\subsection{Geometry-Aware Exponential Mechanism}

For the remainder of the section, we describe how to use the single-shot robust estimator to execute an efficient version of the exponential mechanism~\cite{mcsherry2007mechanism}.

\paragraph{Idealized Exponential Mechnanism.} Consider the scenario where we know the covariance $\Sigma$ and are allowed to run in exponential time. Given a dataset $X=\Set{(x_i, y_i)}_{i \in [n]}$, let $\hat{\theta}(X)$ be a robust estimator for $X$, i.e. $\Norm{\Sigma^{1/2} \Paren{ \hat{\theta}(X) - \theta } }_2 \leq \alpha$.  We could then construct the following idealized exponential mechanism: given $0< \eps<1$ and samples $X$, we want to sample a parameter $\tilde{\theta}$ with probability 
\begin{equation*}
    \prob{\tilde{\theta} } \propto \exp\Paren{-\eps \cdot \textrm{score}( \tilde{\theta} ) } \textrm{ where }  \textrm{score}( \tilde{\theta} ) = \min\Paren{ d(X, X') \colon \Norm{ \Sigma^{1/2} \Paren{ \hat{\theta}(X') -  \tilde{\theta} } }_2 \leq \alpha },
\end{equation*}
where $\alpha \geq \eta \log(1/\eta)$, and $d(X, X')$ is the Hamming distance between datasets $X$ and $X'$. This score function assigns each  $\tilde{\theta}$ a score that is equal to the minimum number of samples that need to be changed in order for the robust estimator on the modified dataset $X'$ to be $\alpha$-close. It is folklore to show that such a score function enjoys optimal privacy guarantees. Using arguments similar to Hopkins, Kamath, Majid and Narayanan~\cite{Hopkins2023Robustness}, combined with our robust regression estimator, we can show that the score is quasi-convex and efficiently sampleable, as long as the covariance is known. However, when the covariance is unknown, it is unclear even how to check whether two candidates are close in the $\Sigma^{1/2}$ norm. A simple fix would be to ignore the geometry, and define the closeness to be $\Norm{ \Paren{ \hat{\theta}(X') -  \tilde{\theta} } }_2 \leq \alpha/\sqrt{\kappa}$, where $\kappa$ is the condition number of the covariance. However, this reduces the volume of points with low score by a factor of the condition number, and results in a sample complexity that scales proportional to $d\log(\kappa)/(\alpha\eps)$.

\paragraph{Towards an efficiently computable score function.} Following~\cite{Hopkins2023Robustness}, we can define a relaxation of the score function. Recall the system from \cref{eqn:cov-constraint-intro}, instantiated with $\eta =t/n$. Then, a natural relaxation to consider is 

\begin{equation}
\label{eqn:score-attempt}
\tag{First Cut Score}
    \textrm{score}( \tilde{\theta} )  = \min t \colon \exists  \textrm{ a degree $\mathcal{O}(1)$ } \pexpecf{}{\cdot} \textrm{s.t.}  \pexpecf{}{\cdot} \sdtstile{}{} \calA_{t/n} \textrm{ and } \pexpecf{}{  \Iprod{ \Sigma',  \Paren{\theta' - \tilde{\theta}} \Paren{\theta' - \tilde{\theta}}^\top }  }  \leq \alpha ,
\end{equation}
i.e. there exists a constant-degree pseudo-expectation operator that after changing $t$ points in $X$ satisfies the constraints, and the resulting robust estimator is close to $\tilde{\theta}$, wrt the SoS covariance $\Sigma'$. Intuitively, we established in \cref{eqn:covariance-estimation-sos-intro} that $\Sigma'$ and $\Sigma$ are close, so the closeness constraint captures $\tilde{\theta}$ and $\theta'$ being close in $\Sigma^{1/2}$ norm. The desiderata for any score function to work is threefold: it should have (a) bounded sensitivity, (b) should admit an efficient separation oracle and (c) should be quasi-convex. 
It is not hard to show that the score function in  \cref{eqn:score-attempt} satisfies bounded sensitivity and it admits an efficient separation oracle since all the constraints are linear in $\pexpecf{}{\cdot}$. However, this score function is not quasi-convex in $\tilde{\theta}$ and therefore there may be no efficient sampling procedure for it. 

One way to make the score function quasi-convex in $\tilde{\theta}$ is to first fix some $\hat{\Sigma}$ and then use this estimate to define the score function as follows:
\begin{equation}
\label{eqn:score-attempt-two}
\tag{Second Cut Score}
    \textrm{score}( \tilde{\theta} )  = \min t \colon \exists  \textrm{ a degree $\mathcal{O}(1)$ } \pexpecf{}{\cdot} \textrm{s.t.}  \pexpecf{}{\cdot} \sdtstile{}{} \calA_{t/n} \textrm{ and }  
    \left\Vert\pexpecf{}{\hat{\Sigma}^{1/2}(\theta' - \tilde{\theta})}\right\Vert^2   \leq \alpha^2 ,
\end{equation}
However, if we estimate $\hat{\Sigma}$ via a robust estimator and then fix this estimate, the score function no longer admits a non-trivial bound on sensitivity. Maintaining bounded sensitivity with this score function requires estimating $\hat{\Sigma}$ privately, which incurs a sample complexity of $d^2/(\alpha\eps)$. Thus, we cannot estimate $\hat{\Sigma}$ as a first step and can only work with implicit representations within our program. 

\paragraph{Handling Matrix Inverses in SoS.} To design a quasi-convex score, it suffices to decouple $\Sigma'$ from $\tilde{\theta}$ and again, if we were given the true covariance $\Sigma$, we could work with the following closeness constraint instead: for all $v$, 
\begin{equation}
\label{eqn:ideal-closeness-constraint}
    \Iprod{ v, \pexpecf{}{\theta'} - \tilde{\theta} }^2 \leq \alpha^2 \cdot v^\top \Sigma^{-1} v. 
\end{equation}
Observe, this constraint is equivalent to \cref{eqn:pseduo-expec-regression-closeness}, after substituting $u = \Sigma^{-1}v$. We show that this constraint is convex in $\tilde{\theta}$: given $\tilde{\theta}_1$ and $\tilde{\theta}_2$ such that \cref{eqn:ideal-closeness-constraint} holds for $\pexpecf{1}{\cdot}$ and $\pexpecf{2}{\cdot}$ respectively, for any convex combination $\lambda \tilde{\theta}_1 + (1-\lambda) \tilde{\theta}_2$, let $\pexpecf{3}{\cdot} = \lambda \pexpecf{1}{\cdot} + (1-\lambda)   \pexpecf{2}{\cdot}$. Then,
\begin{equation}
\label{eqn:quasi-convexity-intro}
    \begin{split}
        \Iprod{v, \pexpecf{3}{\theta'} - \lambda \tilde{\theta}_1 + (1-\lambda) \tilde{\theta}_2  }^2 & = \Iprod{v, \lambda\Paren{ \pexpecf{1}{\theta'} -  \tilde{\theta}_1 } +  (1-\lambda) \Paren{  \pexpecf{2}{\theta'} - \tilde{\theta}_2 }  }^2 \\
        & \leq \lambda \Iprod{v, \pexpecf{1}{\theta'} -  \tilde{\theta}_1 }^2 + (1-\lambda) \Iprod{v, \pexpecf{2}{\theta'} - \tilde{\theta}_2}^2 \\
        & \leq \alpha^2 \cdot v^\top \Sigma^{-1} v,
    \end{split}
\end{equation}
where we use convexity of $f(z)=z^2$. 
Therefore, it would suffice for us to construct an implicit representation of $\Sigma^{-1}$ in the SoS proof system. A priori, such a representation might seem implausible, since the inverse cannot be approximated by a low-degree polynomial, specially for ill-conditioned matrices. However, we need an implicit representation that is only a constant approximation in L\"owner ordering. We consider the following auxiliary constraints: 

\begin{equation}
    \calB = \Set{ \Sigma' = \frac{1}{n}\sum_{i\in[n]} x_i' (x_i')^\top, \hspace{0.1in}  Q \succeq 0 ,\hspace{0.1in} Q\Sigma' = I , \hspace{0.1in}  \Sigma' Q= I ,\hspace{0.1in}  Q \Sigma' Q = Q },
\end{equation}
where $\Sigma'$ is the covariance of the indeterminates, $Q$ is the implicit representation for $\Sigma^{-1}$ and the constraints satisfy that $Q$ is a left and right inverse for $\Sigma'$. This system is feasible for any $Q, \Sigma'$ that are symmetric and inverses of each other. Further, we can derive from the hypercontractivity constraint that $\Sigma'$ and $\Sigma$ are close (as we did in \cref{eqn:covariance-estimation-sos-intro}). We manage to show the following inequality (\cref{lem:sos_inverse_covariance}):
\begin{equation}
\label{eqn:q-closeness-intro}
    \calA \cup \calB \sststile{}{u, \Sigma' ,Q} \Set{ \Iprod{Q, uu^\top } \leq \bigO{1} \Iprod{ \Sigma^{-1}, uu^\top } },
\end{equation}
and the proof appears in \cref{sec:deferred_privacy_sos_proofs}. Our proof reduces relating $Q$ and $\Sigma^{-1}$ to bounding the closeness of $\Sigma$ and $\Sigma'$ under all bi-linear forms, i.e. 
\begin{equation*}
    \calA  \sststile{}{u,v \Sigma'} \Set{ \Iprod{ \Sigma - \Sigma', u v^\top}^2 \leq \bigO{\eta} \paren{u^\top \Sigma u}^2 + \bigO{\eta} \paren{v^\top \Sigma v}^2  }\,.
\end{equation*}
Now that we have an implicit representation of $\Sigma^{-1}$, we can rewrite the idealized closeness constraint from \cref{eqn:ideal-closeness-constraint} as follows: for all $v \in \mathbb{R}^d$, 
\begin{equation}
    \Iprod{ v, \pexpecf{}{\theta'} - \tilde{\theta} }^2 \leq \alpha^2 \cdot \Iprod{ Q, vv^\top }.
\end{equation}
This constraint admits a concise description: $\alpha^2 \pexpecf{}{Q} - \Paren{ \pexpecf{}{\theta'} - \tilde{\theta}}  \Paren{ \pexpecf{}{\theta'} - \tilde{\theta}}^\top \succeq 0$. Now, we are ready to define our geometry-aware score function.

\paragraph{Geometry-aware score.} We are finally ready to describe our SoS relaxation of the score: 
\begin{equation}
\label{eqn:geometry-aware-score}
\tag{Geometry-Aware Score}
\begin{split}
    \textrm{score}( \tilde{\theta} )   = \min t & \colon \exists  \textrm{ a degree $\mathcal{O}(1)$ } \pexpecf{}{\cdot} \textrm{s.t.}  \pexpecf{}{\cdot} \sdtstile{}{} \calA_{t/n} \textrm{ and }  \\
    & \alpha^2 \pexpecf{}{Q} - \Paren{ \pexpecf{}{\theta'} - \tilde{\theta}}  \Paren{ \pexpecf{}{\theta'} - \tilde{\theta}}^\top \succeq 0 ,
\end{split}
\end{equation}
It is easy to show that this score satisfies bounded sensitivity. Since we decoupled $Q$ and $\theta'$, it is also straightforward to show that \ref{eqn:geometry-aware-score} is convex, in fact, the argument is almost identical to \cref{eqn:quasi-convexity-intro}. Next, we show that our relaxation outputs an accurate estimate of the true regression vector:
\begin{equation}
\label{eqn:utility-intro}
\begin{split}
    \Norm{\Sigma^{1/2}\Paren{  \tilde{\theta}  - \theta } }_2 & = \Norm{\Sigma^{1/2}\Paren{  \tilde{\theta}  - \theta \pm \pexpecf{}{\theta'}  } }_2  \\
    &\leq  \Norm{\Sigma^{1/2}\Paren{  \pexpecf{}{\theta'}  - \theta    } }_2  +  \Norm{\Sigma^{1/2}\Paren{  \pexpecf{}{\theta'}  - \tilde{\theta}    } }_2 \\
    & \leq \alpha + \frac{T\log(n/T) }{n} +  \underbrace{ \Norm{\Sigma^{1/2}\Paren{  \pexpecf{}{\theta'}  - \tilde{\theta}    } }_2}_{\eqref{eqn:utility-intro}.(1)}
\end{split}
\end{equation}
We use the closeness formulation in \ref{eqn:geometry-aware-score} to bound term \eqref{eqn:utility-intro}.(1). Observe, taking quadratic forms, we have that for all $u$,
\begin{equation*}
    \Iprod{u , \pexpecf{}{ \theta'} - \tilde{\theta} }^2 \leq \alpha^2 \cdot u^\top \pexpecf{}{Q} u \leq \bigO{\alpha^2} u^{\top}\Sigma^{-1}u,
\end{equation*}
where the last inequality follows from \cref{eqn:q-closeness-intro}. In particular, setting $u \gets \Sigma^{1/2} \Paren{\pexpecf{}{ \theta'} - \tilde{\theta}} $, we have 
\begin{equation*}
    \Norm{  \Sigma^{1/2} \Paren{ \pexpecf{}{ \theta'} - \tilde{\theta}} }^2_2 \leq \bigO{\alpha^2}, 
\end{equation*}
which concludes the utility argument. 

\paragraph{Efficient Separation Oracle.}
A keen reader may have noticed that the score function is no longer linear in $\pexpecf{}{\cdot}$.  We show that the constraint 
$$\alpha^2 \pexpecf{}{Q} - \Paren{ \pexpecf{}{\theta'} - \tilde{\theta}}  \Paren{ \pexpecf{}{\theta'} - \tilde{\theta}}^\top \succeq 0$$
admits an efficiently computable hyperplane, even though it is quadratic in $\pexpecf{}{\cdot}$. Restricting to the special case of one dimension, and setting $y = \pexpecf{}{Q}$ and $x= \pexpecf{}{\theta'}-\tilde{\theta}$, we want an efficient separation oracle for the set $\alpha^2 y - x^2\geq 0$. The feasible region for this set is clearly convex since it corresponds to the area above the parabola. The separating hyperplane is then just the appropriate tangent to the parabola.  
Specifically, if $(x_0, y_0)$ does not satisfy the constraint then the separating hyperplane is 
\[ 2x_0 x - \alpha^2 y = x_0^2 \,.\]
Furthermore, we show that this argument has a natural extension to higher dimensions, and we provide a complete proof in \cref{lem:closeness-separation-oracle}.

\section{Related Work}
\label{sec:related-work}
\paragraph{Private Linear Regression.}
There has been significant work on private linear regression, see, e.g.~\cite{kifer2012private, bassily2014private, wang2015privacy, sheffet2017differentially, wang2018revisiting}. Here we will focus on a subset of the most relevant work.
Employing a high-dimensional propose-test-release approach 
\cite{liu2022differential} studies this problem under the constraint of approximate differential privacy and obtains an inefficient algorithm requiring $\frac{d}{\alpha^2} + \frac{d + \log\paren{1/\delta}}{\alpha \eps}$ many samples. This algorithm satisfies robustness to constant fraction corruptions.

Efficient algorithms for this problem that require sample complexity linear in the dimension have been studied under the constraint of approximate differential privacy. \cite{varshney2022nearly} gives an algorithm based on gradient descent for this problem, requiring sample complexity linear in $d$ but quadratic in $\kappa$, the condition number of the covariance of the input distribution. This algorithm is not robust. \cite{liu2023near} gives an algorithm for this problem using $\frac{d}{\alpha^2}+ \frac{\sqrt{\kappa}d \log\paren{1/\delta}}{\alpha \eps}$ many samples. This algorithm is robust to label noise but is not robust to corruptions in the covariates. \cite{brown2024insufficient} gives the current best known algorithm for this problem under approximate differential privacy, removing the dependence on the condition number and using $\frac{d}{\alpha^2} + \frac{d \sqrt{\log\paren{1/\delta}}}{\alpha \eps} + \frac{d \log (1/\delta)^2}{\eps^2}$ many samples. However, it requires a quadratic dependence on the privacy parameter $1/\eps$. This algorithm also does not satisfy robustness to constant fraction corruptions.
Under the assumption of pure differential privacy, \cite{asi2023robustness} gives an inefficient algorithm based on a reduction to robust algorithms, requiring $\frac{d}{\alpha^2} + \frac{d \log\paren{R + \sqrt{\kappa}}}{\alpha \eps}$ many samples.

\paragraph{Private Covariance-Aware Mean Estimation.}

In covariance-aware mean estimation, we are interested in estimating the mean of a distribution using the Mahalanobis error metric, which measures the error relative to the covariance of the distribution. In this work, our focus is on the setting where the distribution is Gaussian.
Under the assumption of approximate differential privacy, \cite{kamath2019privately} gives an efficient algorithm for this problem that requires estimating the covariance spectrally, leading to a sample complexity of $\Omega\paren{d^{1.5}}$. Overcoming this super-linear dependence on the dimension $d$ has been the focus of subsequent work.
\cite{liu2022differential} and \cite{brown2021covariance} give inefficient algorithms for this problem that require a sample size linear in the dimension: $\frac{d}{\alpha^2} + \frac{d + \log\paren{1/\delta}}{\alpha \eps}$ and $\frac{d}{\alpha^2} + \frac{d}{\alpha \eps} + \frac{\log\paren{1/\delta}}{\eps}$, respectively. Two concurrent works, \cite{brown2023fast} and \cite{kuditipudi2023pretty}, obtain fast algorithms with similar guarantees. \cite{kuditipudi2023pretty} gives an algorithm for this problem that takes $\frac{d}{\alpha^2} + \frac{d \log\paren{1/\delta}}{\alpha \eps} + \frac{d \log^2\paren{1/\delta}}{\eps^2}$ samples. \cite{brown2023fast} provides an algorithm with slightly better sample complexity of $\frac{d}{\alpha^2} + \frac{d \sqrt{\log\paren{1/\delta}}}{\alpha \eps} + \frac{d \log\paren{1/\delta}}{\eps}$, improving the dependence on $\log\paren{1/\delta}$ and $1/\eps$. Neither of these algorithms is robust to constant fraction corruptions.
Similar to the regression setting, under the assumption of pure differential privacy, \cite{asi2023robustness} gives an inefficient algorithm based on a reduction to robust algorithms, requiring $\frac{d}{\alpha^2} + \frac{d \log\paren{R + \sqrt{\kappa}}}{\alpha \eps}$ samples.

\section{Preliminaries}

\subsection{Notation}

We use $\tilde{O}(\cdot)$ and $\tilde{\Omega}(\cdot)$ to suppress logarithmic factors in $\tfrac 1 \alpha, \tfrac 1 \eta$.
Importantly, this does not suppress logarithmic factors in $\tfrac 1 \beta$, where $\beta$ is the failure probability of our algorithm.

We also use the following definition.
\begin{definition}
    \label{def:m-ball}
    For any matrix $M$, let the $M$-ball of radius $r$ around $x$ be all points $y$ such that $\norm{M(y-x)}_2 \leq r$.
\end{definition}

\subsection{Concentration Bounds}

We need the following concentration bounds.
The first set will be used for mean estimation and linear regression in case the data is isotropic.
The second (stronger) set will be used for covariance estimation in relative Frobenius norm.
While the standard Gaussian distribution will satisfy all the conditions below, they also apply to more general sub-Gaussian distributions.
In particular, the following conditions are sufficient (see the end of the second section for a more general definition in terms of what concentration properties are sufficient).
\begin{definition}
    \label{def:moment_match_sub_gaussian}
    We call a distribution $\mathcal{D}$ \emph{fourth moment matching reasonable} sub-Gaussian, if it is mean-zero and identity-covariance, and has independent $O(1)$-sub-Gaussian coordinates that have fourth moment equal to 3.
\end{definition}

\paragraph{First moment subset concentration bounds.}

We prove the bounds presented here in \cref{sec:eps_goodness_proof}.
\begin{fact}[Mean and Covariance Bounds]
\torestate{
\label{fact:mean_and_cov_of_1-eps}
Let $\eta \leq 1/e, \psi \leq O(\eta \sqrt{\log(1/\eta)})$ and $n \geq \Omega(\tfrac{d + \log(1/\beta)}{\eta^2 \log(1/\eta)})$ and $\Sigma$ such that $(1-\psi)I_d \preceq \Sigma \preceq (1+\psi) I_d$.
Given a set $\calS = \{x_1^*, x_2^*, \ldots, x_n^*\}$ of \iid samples from a distribution $\calD$ such that the distribution $\Sigma^{-1/2}(\mathcal{D} - \mu)$ is a fourth moment matching reasonable sub-Gaussian distribution.
Then, with probability $1- \tfrac \beta 2$ it holds that, for all subsets $\calT \subset \calS$ of size $(1-\eta)n$ we have the following bound‚
\begin{equation*}
    \Norm{ \frac{1}{(1-\eta)n}\sum_{x_i^* \in \calT} x_i^* - \mu }_2 \leq \bigO{\eta \sqrt{\log(1/\eta)}} %
\end{equation*}
and also
\begin{equation*}
    \Norm{\frac{1}{(1-\eta) n} \sum_{x_i^* \in \calT} \Paren{x_i^* - \mu} \Paren{x_i^* - \mu}^\top  - I_d }_{\textrm{op}} \leq  \bigO{\eta \log(1/\eta) + \psi} \,. %
\end{equation*}
}
\end{fact}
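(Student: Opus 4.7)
The plan is to reduce to isotropic sub-Gaussian samples, convert the uniform-over-subsets supremum into a top-quantile statistic along each direction, and absorb the direction quantifier with a standard $\varepsilon$-net. Write $x_i^* = \mu + \Sigma^{1/2} z_i$, where the $z_i$ are i.i.d., mean-zero, identity-covariance, and $O(1)$-sub-Gaussian. Since $\Sigma = I + O(\psi)$ in operator norm with $\psi \leq O(\eta\sqrt{\log(1/\eta)})$, a $(1\pm O(\psi))$ factor from conjugating by $\Sigma^{1/2}$ is absorbed into the target rate, reducing to proving both bounds for the $z_i$'s. For any $\mathcal{T} \subseteq [n]$ of size $(1-\eta)n$ with complement $\mathcal{T}'$ of size $\eta n$, decompose
\[
    \frac{1}{(1-\eta)n}\sum_{i \in \mathcal{T}} z_i \;=\; \frac{1}{(1-\eta)n}\Bigl(\sum_{i=1}^n z_i \;-\; \sum_{i \in \mathcal{T}'} z_i\Bigr),
\]
and analogously for $z_i z_i^\top$. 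Under the sample-complexity hypothesis, a standard covering bound gives $\|\tfrac{1}{n}\sum z_i\|_2$ and $\|\tfrac{1}{n}\sum z_i z_i^\top - I\|_{\mathrm{op}}$ both at most $O(\sqrt{(d+\log(1/\beta))/n}) = o(\eta\sqrt{\log(1/\eta)})$, so it suffices to uniformly bound the $\mathcal{T}'$-sums.

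For the mean, fix $v \in S^{d-1}$: the supremum over $|\mathcal{T}'|=\eta n$ of $\tfrac{1}{n}\sum_{i \in \mathcal{T}'}\langle v, z_i\rangle$ equals $\tfrac{1}{n}$ times the sum of the top $\eta n$ signed values of $\langle v, z_i\rangle$. Choose $\tau = \Theta(\sqrt{\log(1/\eta)})$ so that $\mathbb{P}[\langle v, z\rangle > \tau] = \eta$; sub-Gaussian integration by parts yields $\mathbb{E}[\langle v, z\rangle\, \mathbf{1}(\langle v,z\rangle > \tau)] \leq O(\eta\sqrt{\log(1/\eta)})$, and the top-$\eta n$ sum is dominated by the truncated empirical mean $\tfrac{1}{n}\sum_i \langle v, z_i\rangle \mathbf{1}(\langle v, z_i\rangle > \tau)$ plus a slack $\tau\cdot\eta n/n = O(\eta\sqrt{\log(1/\eta)})$. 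Applying Bernstein to the truncated variables, whose variance proxy is $O(\eta\log(1/\eta))$, gives concentration within $O(\eta\sqrt{\log(1/\eta)})$ with failure probability $\exp(-\Omega(n\eta^2\log(1/\eta)))$; a union bound over a $(1/4)$-net of $S^{d-1}$ of cardinality $9^d$ absorbs the direction quantifier. The covariance bound follows the same template applied to $\langle v, z_i\rangle^2$, now with the threshold $\tau = \Theta(\log(1/\eta))$ at which the upper tail mass is $\eta$; this yields $\mathbb{E}[\langle v, z\rangle^2 \mathbf{1}(\langle v, z\rangle^2 > \tau)] \leq O(\eta\log(1/\eta))$, and uniformity over $S^{d-1}$ translates to an operator-norm bound via $\|M\|_{\mathrm{op}} = \sup_v |v^\top M v|$ for symmetric $M$. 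The additive $O(\psi)$ slack records the error from conjugating the $z$-level bound by $\Sigma^{1/2}\neq I$.

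The main obstacle is squeezing the $\log(1/\eta)$ factor into the denominator of the sample complexity, so that $n \gtrsim (d+\log(1/\beta))/(\eta^2\log(1/\eta))$ suffices rather than the naive $(d+\log(1/\beta))/\eta^2$. This improvement requires Bernstein (not Hoeffding) on the truncated projections and exploits the true variance proxy $O(\eta\log(1/\eta))$ of the truncated variable instead of an $O(1)$ envelope; the almost-sure bound needed by Bernstein is obtained by a secondary hard cap at $O(\log n)$, which is vacuous with high probability. A direct union bound over all $\binom{n}{\eta n}$ subsets, or a Hoeffding-only bound on untruncated projections, would lose exactly this logarithmic factor. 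The rearrangement-to-top-quantile step is what confines the combinatorics to an $\varepsilon$-net of $S^{d-1}$ rather than to the exponentially many subsets.
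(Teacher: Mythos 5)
Your proof is essentially correct, but it takes a genuinely different route from the paper's. The paper proves this fact by reducing to the complements $\calT^c$ of size $\eta n$ and then doing exactly the thing you dismiss in your last paragraph: a brute-force union bound over all $\binom{n}{\eta n}$ small subsets, applying the standard sub-Gaussian operator-norm concentration (their Fact on empirical covariances, from Wainwright) to each one. This does \emph{not} lose the logarithmic factor, and it is worth seeing why: the union-bound cost $\log\binom{n}{\eta n} \leq 2\eta n\log(1/\eta)$ enters the per-subset bound through the terms $\sqrt{\log(1/\delta')/(\eta n)}$ and $\log(1/\delta')/(\eta n)$, contributing $O(\sqrt{\log(1/\eta)})$ and $O(\log(1/\eta))$ respectively to the $\eta n$-normalized quantity — i.e., exactly $O(\eta\log(1/\eta))$ after renormalizing by $(1-\eta)n$, independently of $n$. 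The sample-size requirement $n \gtrsim (d+\log(1/\beta))/(\eta^2\log(1/\eta))$ is then needed only to make the $\sqrt{d/(\eta n)}$-type terms subdominant. So your closing claim that the subset union bound "would lose exactly this logarithmic factor" is false; what it buys the paper is a two-line proof from an off-the-shelf matrix concentration bound. Your truncation-plus-top-quantile argument with a net over directions is a legitimate alternative (and is the more information-revealing one, since it isolates the tail-mass quantity $\E[\langle v,z\rangle^2\mathbf{1}(\langle v,z\rangle^2>\tau)] = O(\eta\log(1/\eta))$ that actually drives the rate), but it is considerably more work here.

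One technical wrinkle in your version: the hard cap at $O(\log n)$ for Bernstein is not quite right — over $9^d$ net directions and $n$ samples the relevant almost-sure envelope is $O(\sqrt{d+\log(n/\beta)})$, and carrying that through the linear term of Bernstein imposes a constraint not implied by the stated sample complexity in all regimes. The standard fix is to skip the cap and invoke the sub-exponential (Orlicz-norm) form of Bernstein for the truncated variables, which gives exponent $\Omega(n\min\{t^2/\sigma^2,\, t\}) = \Omega(n\eta)$ at $t = \Theta(\eta\sqrt{\log(1/\eta)})$; this comfortably absorbs the $9^d$ net and the $\log(1/\beta)$ under the stated hypothesis on $n$. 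With that substitution your argument goes through.
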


\begin{fact}[Covariance of all small subsets]
\torestate{
\label{fact:cov_small_subset}
Let $\eta \leq 1/e, \psi \leq O(\eta \sqrt{\log(1/\eta)})$ and $n \geq \Omega(\tfrac{d + \log(1/\beta)}{\eta^2 \log(1/\eta)})$ and $\Sigma$ such that $(1-\psi)I_d \preceq \Sigma \preceq (1+\psi) I_d$.
Given a set $\calS = \{x_1^*, x_2^*, \ldots, x_n^*\}$ of \iid samples from a distribution $\calD$ such that the distribution $\Sigma^{-1/2}(\mathcal{D} - \mu)$ is a fourth moment matching reasonable sub-Gaussian distribution.
Then, with probability at least $1-\tfrac \beta 2$ it holds that, for all subsets $\calT \subset \calS$ of size $\eta n$, we have the following bound
\begin{equation*}
    \Norm{\frac{1}{(1-\eta) n} \sum_{x_i^* \in \calT} \Paren{x_i^* - \mu} \Paren{x_i^* - \mu}^\top   }_{\textrm{op}} \leq  \bigO{\eta \log(1/\eta)} \,.%
\end{equation*}
}
\end{fact}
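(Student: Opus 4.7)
After translating by $\mu$ and using $(1-\psi) I \preceq \Sigma \preceq (1+\psi) I$ with $\psi \leq O(\eta \sqrt{\log(1/\eta)})$, the problem reduces (up to a constant multiplicative factor that is absorbed into the target $O(\eta \log(1/\eta))$) to showing that for samples $z_1,\ldots,z_n$ from a mean-zero identity-covariance reasonable sub-Gaussian distribution, we have, with probability $1-\beta/2$, that
\begin{equation*}
    \sup_{|\mathcal{T}| = \eta n} \sup_{\|v\|=1} \frac{1}{n}\sum_{i \in \mathcal{T}} \langle z_i, v\rangle^2 \leq O(\eta \log(1/\eta))\,.
\end{equation*}
For each fixed $v$, the inner supremum over $\mathcal{T}$ is exactly the sum of the top $\eta n$ values of $\langle z_i, v\rangle^2$; call this quantity $S(v)$. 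Since $v \mapsto S(v)$ is $2$-homogeneous and subadditive (via $(a+b)^2 \leq 2 a^2 + 2 b^2$ inside each summand), the standard $1/2$-net argument shows $\sup_{\|v\|=1} S(v) \leq 4 \sup_{v \in \mathcal{N}} S(v)$ where $\mathcal{N}$ is a $1/2$-net of the unit sphere of size at most $5^d$.

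For each fixed $v \in \mathcal{N}$, the random variable $\langle z_i, v\rangle$ is sub-Gaussian with variance proxy $O(1)$, so $\Pr[\langle z_i, v\rangle^2 > t] \leq 2 e^{-t/C}$ for some absolute constant $C$. I will bound $S(v)$ by the truncation (layer-cake) trick at level $T := C' \log(1/\eta)$ for a sufficiently large constant $C'$:
\begin{equation*}
    S(v) \leq \eta n \cdot T + \sum_{i=1}^n \langle z_i, v\rangle^2 \mathbf{1}\!\left[\langle z_i, v\rangle^2 > T\right]\,.
\end{equation*}
The first term is already $O(\eta n \log(1/\eta))$. For the second, each summand is non-negative with $\mathbb{E}[\langle z_i, v\rangle^2 \mathbf{1}[\langle z_i, v\rangle^2 > T]] = O(T e^{-T/C}) = O(\eta \log(1/\eta))$ by direct tail integration, and each summand is sub-exponential with the same parameter $O(1)$. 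A Bernstein bound therefore yields
\begin{equation*}
    \sum_i \langle z_i, v\rangle^2 \mathbf{1}\!\left[\langle z_i, v\rangle^2 > T\right] \leq O(\eta n \log(1/\eta))
\end{equation*}
with probability at least $1 - \exp(-\Omega(\eta n \log(1/\eta)))$.

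Taking a union bound over the $5^d$ elements of $\mathcal{N}$ and over the $\log(1/\beta)$ factor, the failure probability is at most $\exp(d \log 5 + \log(1/\beta) - \Omega(\eta n \log(1/\eta)))$, which is at most $\beta/2$ once $n \geq \Omega((d + \log(1/\beta))/(\eta \log(1/\eta)))$. This is subsumed by the stated hypothesis $n \geq \Omega((d+\log(1/\beta))/(\eta^2 \log(1/\eta)))$, finishing the proof. The main obstacle, and the reason for the specific $\log(1/\eta)$ factor in the conclusion (as opposed to a $\log n$ factor from a naive order-statistic bound), is choosing the truncation level $T = \Theta(\log(1/\eta))$ so that the ``bulk'' $\eta n \cdot T$ and the sub-exponential ``tail'' contribution balance at $O(\eta n \log(1/\eta))$; the remaining subtlety is controlling the non-quadratic $v \mapsto S(v)$ via the net argument, which works because $S(v)$ is $2$-homogeneous and subadditive in the sense above.
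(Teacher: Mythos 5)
Your proof is correct, but it takes a genuinely different route from the paper's. The paper proves this fact by union-bounding over all $\binom{n}{\eta n}$ subsets $\calT$ and applying an off-the-shelf operator-norm concentration bound for the empirical covariance of sub-Gaussian samples (\cref{fact:empirical_cov_bound}) to each subset with per-subset failure probability $\delta/\binom{n}{\eta n}$; the combinatorial entropy $\log \binom{n}{\eta n} \leq 2\eta n \log(1/\eta)$ is exactly what produces the $\eta\log(1/\eta)$ rate (this is packaged as \cref{lem:moment_control}, which is reused for the mean bound in \cref{fact:mean_and_cov_of_1-eps}). You instead discretize only over directions via a $5^d$-net, observe that for fixed $v$ the supremum over subsets is deterministically the top-$\eta n$ order-statistic sum of $\langle z_i,v\rangle^2$, and control that by truncation at level $\Theta(\log(1/\eta))$ plus a one-dimensional tail bound on the excess mass. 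Your approach is more elementary (only scalar sub-Gaussian/sub-exponential tails plus a standard net, with the 2-homogeneity/subadditivity of $S$ correctly justifying the net reduction) and makes the origin of the $\log(1/\eta)$ factor more transparent; the paper's is shorter given the black-box matrix concentration fact. One minor imprecision: a standard two-sided Bernstein bound for $O(1)$-sub-exponential summands gives failure probability $\exp(-\Omega(\eta^2 n \log^2(1/\eta)))$ at deviation $O(\eta n\log(1/\eta))$ rather than the $\exp(-\Omega(\eta n\log(1/\eta)))$ you state (the stronger rate does hold, but via a one-sided Chernoff/MGF argument exploiting that each truncated summand is nonzero only with probability $\mathrm{poly}(\eta)$); either exponent is $\gtrsim d + \log(1/\beta)$ under the stated hypothesis on $n$, so the conclusion is unaffected.
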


\begin{definition}
\label{def:eps_goodness}
We denote a set $\calS = \{x_1^*, x_2^*, \ldots, x_n^*\}$ of $n$ \iid samples as \emph{$(\eta,\psi)$
-good}, if $\calS$ satisfies the bounds in \cref{fact:mean_and_cov_of_1-eps,fact:cov_small_subset}.
If $\psi \leq O(\eta \log(1/\eta))$, we simply call the set \emph{$\eta$-good}.
\end{definition}

Note that~\cref{fact:mean_and_cov_of_1-eps,fact:cov_small_subset} imply that a set $\calS$ of $n \geq \Omega(\tfrac{d + \log(1/\beta)}{\eta^2 \log(1/\eta)})$ \iid samples from a fourth moment matching reasonable sub-Gaussian distribution with mean $\mu$ and covariance $(1-\psi) I_d \preceq \Sigma \preceq (1+\psi) I_d$  is $(\eta,\psi)$-good with probability at least $1-\beta$.

\paragraph{Second moment subset concentration bounds.}

The following bounds appeared in~\cite[Lemma 6.3]{Hopkins2023Robustness}.\footnote{Note that Lemma 6.3 as stated in~\cite{Hopkins2023Robustness} only applies to symmetric matrices $P$. We can reduce the case of asymmetric $P$ to this by noting that $\iprod{x_i x_i^\top - I_d}{P} = \iprod{x_i x_i^\top - I_d}{\tfrac 1 2 (P + P^\top)}$ and $\norm{\tfrac 1 2 (P + P^\top)}_F \leq \norm{P}_F$. } 
We remark that they stated the result only for samples from the standard Gaussian distribution.
Inspecting their proof, it becomes clear that the result hold for any mean-zero, isotropic distribution $\mathcal{D}$ such that the following conditions are met:
\begin{itemize}
    \item \textbf{Hanson-Wright:} There is a universal constant $c > 0$ such that for all symmetric matrices $P$ and all $t > 0$ it holds that $\mathbb{P}_{X \sim \mathcal{D}} (\lvert\iprod{XX^T - I_d}{P}\rvert > t) \leq 2\exp(-c \min\{\tfrac {t^2} {\norm{P}_F^2}, \tfrac t {\norm{P}_{\mathrm{op}}}\})$.
    \item \textbf{Gram Matrix Concentration:} For any $m$ large enough and $X_1, \ldots, X_m$ i.i.d.\ sampled from $D$, let $A \in \R^{m \times d}$ be the matrix with rows $X_i$. Then, there exists a constant $c > 0$ such that $\mathbb{P}(\norm{A}_{\mathrm{op}} > c \cdot (\sqrt{m} + \sqrt{d} + t) \leq 2 e^{-\Omega(t^2)}$.
    \item \textbf{Fourth Moment Condition:} For every matrix $P$, it holds that $\E_{X \sim \mathcal{D}} \iprod{XX^T - I_d}{P}^2 = 2 \pm O(\eta \log(1/\eta))$.
\end{itemize}
It follows by standard results~\cite{rudelson2013hanson,Vershynin_2018}, that fourth moment matching reasonable sub-Gaussian distributions satisfy all three conditions (the last one follows since the fourth moment tensor of fourth moment matching reasonable sub-Gaussian distributions is the same as that of the standard Gaussian, for which the expectation is exactly 2).

\begin{lemma}
\label{lem:higher-order-stability}
Let $\eta > 0$ and $n \geq \tilde{\Omega}(\tfrac{d^2 + \log^2(1/\beta)}{\eta^2})$.
Let $x_1^*, \ldots, x_n^*$ be \iid samples from a mean-zero fourth moment matching reasonable sub-Gaussian distribution.
Then with probability at least $1-\beta$ the following statements hold (simultaneously):
For all  $P \in \R^{d\times d}$ of Frobenius norm at most 1, it holds that
\begin{enumerate}
    \item $\Abs{\frac{1}{n}\sum_{i \in [n]} \iprod{x_i^*(x_i^*)^\top - I_d}{P}} \leq O(\eta \log (1 /\eta))$,
    \item $\Abs{\frac{1}{n}\sum_{i \in [n]} \iprod{x_i^*(x_i^*)^\top - I_d}{P}^2 - 2} \leq O(\eta \log^2 ( 1 /\eta))$,
    \item For any $\calT \subseteq [n]$ of size at most $\eta n$, it holds that (a) $\Abs{\frac{1}{n}\sum_{i \in \calT} \iprod{x_i^*(x_i^*)^\top - I_d}{P}} \leq O(\eta \log(1/\eta))$ and (b) $\Abs{\frac{1}{n}\sum_{i \in \calT} \iprod{x_i^*(x_i^*)^\top - I_d}{P}^2} \leq O(\eta \log^2(1/\eta))$,
    \item $\frac{1}{n}\sum_{i \in [n]} \Abs{\iprod{x_i^*(x_i^*)^\top - I_d}{P}} \leq O(1)$.
\end{enumerate}
\end{lemma}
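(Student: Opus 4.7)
The four items are all second-moment statements about the random variables $Z_i(P) := \langle x_i^*(x_i^*)^\top - I_d, P\rangle$ for $P$ ranging over the Frobenius unit ball. The plan is the textbook three-step recipe: (i) prove each statement pointwise for a fixed $P$ using Hanson--Wright together with Bernstein-type tail bounds for sub-exponential random variables; (ii) extend to all $P$ of Frobenius norm $\leq 1$ via an $\epsilon$-net, using Gram-matrix concentration to control the discretization error; (iii) upgrade the bounds over $[n]$ to bounds over every size-$\eta n$ subset via a truncation trick, so that we never have to pay the $\binom{n}{\eta n}$-sized union bound directly. Throughout, we use exactly the three structural conditions listed right before the lemma.

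\textbf{Pointwise bounds (items 1, 2, 4).} Fix $P$ with $\|P\|_F \leq 1$. Hanson--Wright says $Z_i(P)$ is mean-zero and $O(1)$-sub-exponential; in particular $\mathbb{E} Z_i(P) = 0$ and, by the fourth moment condition, $\mathbb{E} Z_i(P)^2 = 2 \pm O(\eta \log(1/\eta))$. Bernstein's inequality for sub-exponentials yields item 1 with the needed $O(\eta\log(1/\eta))$ deviation as soon as $n \gtrsim \log(1/\beta)/(\eta\log(1/\eta))^2$. For item 2 we apply Bernstein to $Z_i(P)^2$; this variable is sub-exponential of Orlicz norm $O(1)$ squared, i.e.\ sub-Weibull of shape $1/2$, and the corresponding concentration inequality gives the deviation $O(\eta \log^2(1/\eta))$ (with the extra $\log$ coming from the heavier tail)---this is what forces the $\log^2(1/\beta)$ in the sample complexity. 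Item 4 follows from Bernstein applied to $|Z_i(P)|$ using $\mathbb{E}|Z_i(P)| \leq \sqrt{\mathbb{E} Z_i(P)^2} = O(1)$.

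\textbf{Uniformization via an $\epsilon$-net.} Take an $\epsilon$-net $\calN$ of the Frobenius unit ball with $\log|\calN| = O(d^2 \log(1/\epsilon))$ and union bound the pointwise statements over $\calN$; this succeeds whenever $n \gtrsim \tilde\Omega((d^2 + \log^2(1/\beta))/\eta^2)$ as in the hypothesis. For arbitrary $P$, let $P_0 \in \calN$ be the nearest net point, so $\|P - P_0\|_F \leq \epsilon$. Then $|Z_i(P) - Z_i(P_0)| \leq \epsilon \cdot \|x_i^*(x_i^*)^\top - I_d\|_F$, and the Gram matrix concentration implies $\tfrac{1}{n}\sum \|x_i^*(x_i^*)^\top - I_d\|_F^2 = O(d^2)$ with high probability. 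Choosing $\epsilon = \mathrm{poly}(\eta/d)$ makes the discretization error negligible for items 1, 2, and 4.

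\textbf{Small-subset bounds and the main obstacle.} Item 3 is the technical crux: naively union-bounding over all $\binom{n}{\eta n} = \exp(O(\eta n \log(1/\eta)))$ subsets would blow up the sample complexity. The standard workaround is to decouple the subset sup from the net union bound through truncation. For a threshold $\tau = C\log(1/\eta)$, write $Z_i = Z_i \mathbf{1}\{|Z_i| \leq \tau\} + Z_i \mathbf{1}\{|Z_i| > \tau\}$. For any $\calT$ with $|\calT|\leq \eta n$, the truncated part contributes at most $\eta\tau = O(\eta\log(1/\eta))$, while the tail part is dominated by the single quantity $\tfrac{1}{n}\sum_{i=1}^n |Z_i|\mathbf{1}\{|Z_i|>\tau\}$, whose expectation is $\eta^{\Omega(1)}$ by sub-exponential tail integration and which concentrates by Bernstein. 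A net argument identical to Step 2 then extends (3a) to all $P$; an analogous argument with threshold $\tau^2$ and Bernstein applied to $Z_i^2 \mathbf{1}\{|Z_i|\leq \tau\}$ gives (3b). This truncation step is the main obstacle, and it is also where the $\log^2(1/\eta)$ (and, combined with the sub-Weibull Bernstein, the $\log^2(1/\beta)$) factors enter the final bound.
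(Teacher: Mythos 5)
First, a point of comparison: the paper does not actually prove this lemma --- it imports it from \cite{Hopkins2023Robustness} (their Lemma 6.3) and only checks that the three structural conditions listed above the lemma (Hanson--Wright, Gram-matrix concentration, fourth-moment matching) are what that proof needs. So your attempt has to stand on its own, and its skeleton (pointwise bound, $\epsilon$-net over the Frobenius ball, truncation to kill the $\binom{n}{\eta n}$ union bound) is the right skeleton. Items 1, 3(a) and 4 go through essentially as you describe, modulo the fact that the indicator $\mathbf{1}\{|Z_i(P)|>\tau\}$ is not Lipschitz in $P$, so the net step for the tail sums needs a two-threshold comparison rather than being ``identical to Step 2''; that is a repairable detail.

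The genuine gap is in items 2 and 3(b). There you must control $\tfrac1n\sum_i Z_i(P)^2$ (or its tail part $\tfrac1n\sum_i Z_i(P)^2\mathbf{1}\{|Z_i(P)|>\tau\}$) pointwise at confidence $e^{-t}$ with $t \asymp d^2$, since the net over the Frobenius unit ball has $e^{\Theta(d^2)}$ points. But $Z_i(P)$ is only sub-exponential (take $P=uu^\top$, so $Z_i(P)=\langle x_i,u\rangle^2-1$), hence $Z_i(P)^2$ is sub-Weibull of order $1/2$: $\Pr[Z_i(P)^2>s]\approx e^{-c\sqrt{s}}$. The corresponding Bernstein-type bound for $\tfrac1n\sum_i Z_i(P)^2$ at confidence $e^{-t}$ carries a deviation term of order $t^2/n$ --- with probability about $n e^{-c\sqrt{s}}$ a \emph{single} sample has $Z_i(P)^2> s$, so to push the failure probability below $e^{-d^2}$ per net point one must tolerate $s\asymp d^4$, i.e.\ a contribution of $d^4/n = d^2\eta^2$ to the average. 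This exceeds the target $O(\eta\log^2(1/\eta))$ whenever $\eta\geq \log^2(1/\eta)/d^2$, i.e.\ in essentially every regime of interest. Truncation at $\tau=C\log(1/\eta)$ does not rescue this: the truncated part of 3(b) is indeed $\leq\eta\tau^2$, but the leftover $\tfrac1n\sum_i Z_i(P)^2\mathbf{1}\{|Z_i(P)|>\tau\}$ inherits exactly the same one-heavy-sample problem, so ``concentrates by Bernstein'' is not a valid step at confidence $e^{-d^2}$. (You partially sense the issue --- you attribute the heavier tail to the $\log^2(1/\beta)$ term --- but the real casualty is the dimension-dependent part of the union bound.) Fixing this requires a genuinely multi-scale argument: e.g.\ a peeling over dyadic levels $s\in[\tau^2, d^4]$ in which the net resolution and the per-level count $|\{i: Z_i(P)^2\in[s,2s]\}|$ are traded off against the per-sample probability $e^{-c\sqrt{s}}$, or a chaining bound for the sub-exponential process $P\mapsto Z_i(P)$. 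This is precisely the nontrivial content of the cited Lemma 6.3, and it is missing from your write-up.
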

\begin{definition}
    \label{def:higher_order_goodness}
    We denote a set $\calS = \{x_1^*, x_2^*, \ldots, x_n^*\}$ of $n$ \iid samples as \emph{$\eta$
-higher-order-good}, if the points in $\calS$ satisfy conclusions 1-4 in~\cref{lem:higher-order-stability}.
\end{definition}

\begin{fact}[Gaussian Hypercontractivity]
\label{fact:hypercontractivity}
For any $k \in \mathbb{N}$, 
\begin{equation*}
    \expecf{x \sim \calN(0,\Sigma)}{ \Iprod{x, v}^{2k} } \leq \mathcal{O}_{k}\Paren{1} \Paren{v^\top \Sigma v }^{k}.
\end{equation*}
\end{fact}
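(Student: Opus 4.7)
The plan is to reduce the multivariate statement to the well-known formula for the even moments of a one-dimensional Gaussian. Concretely, since $x \sim \calN(0,\Sigma)$, the random variable $\Iprod{x,v}$ is a linear functional of a centered multivariate Gaussian, hence is itself a centered one-dimensional Gaussian. Its variance is $\mathrm{Var}(\Iprod{x,v}) = v^\top \Sigma v$, so $\Iprod{x,v} \sim \calN(0, v^\top \Sigma v)$.

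First, I would handle the degenerate case $v^\top \Sigma v = 0$ separately: in that case $\Iprod{x,v} = 0$ almost surely and both sides of the inequality vanish. Otherwise, let $\sigma^2 := v^\top \Sigma v > 0$ and standardize by setting $Z := \Iprod{x,v}/\sigma \sim \calN(0,1)$. Then
\begin{equation*}
\expecf{x \sim \calN(0,\Sigma)}{\Iprod{x,v}^{2k}} = \sigma^{2k} \cdot \expecf{Z \sim \calN(0,1)}{Z^{2k}} = (2k-1)!! \cdot \Paren{v^\top \Sigma v}^{k},
\end{equation*}
using the standard identity $\E[Z^{2k}] = (2k-1)!! = \tfrac{(2k)!}{2^k k!}$ for the even moments of a standard Gaussian (which follows by $k$-fold integration by parts against the Gaussian density, or equivalently by reading off coefficients of the moment generating function $\E[e^{tZ}] = e^{t^2/2}$). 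Since $(2k-1)!!$ depends only on $k$, this is $\mathcal{O}_k(1)\cdot (v^\top \Sigma v)^k$, which is precisely the claim.

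There is no real obstacle here; the only subtlety is the degenerate direction where $v$ lies in the kernel of $\Sigma$, which is handled trivially as above. If one wishes to avoid citing the explicit even-moment formula, an equally short alternative is to bound $\E[Z^{2k}] \leq C_k$ via the sub-Gaussian tail $\mathbb{P}(|Z| \geq t) \leq 2e^{-t^2/2}$ together with the layer-cake identity $\E[Z^{2k}] = 2k \int_0^\infty t^{2k-1} \mathbb{P}(|Z| \geq t)\,dt$, which yields a constant depending only on $k$.
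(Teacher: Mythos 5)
Your proof is correct, and the paper states this fact without proof, so your argument (reducing to the one-dimensional Gaussian $\Iprod{x,v}\sim\calN(0,v^\top\Sigma v)$ and invoking the even-moment formula $\E[Z^{2k}]=(2k-1)!!$, with the degenerate direction handled separately) is exactly the standard derivation the paper implicitly relies on.
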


\subsection{Sum-of-Squares Background}
\label{subsec:sos-background}

\paragraph{The sum-of-squares framework.}
\label{subsec:sos-framework}

We now provide an overview of the sum-of-squares proof system.
We closely follow the exposition as it appears in the lecture notes of Barak~\cite{barak2016proofs}.   

\paragraph{Pseudo-Distributions.}
A discrete probability distribution over $\R^m$ is defined by its probability mass function, $D\from \R^m \to \R$, which must satisfy $\sum_{x \in \mathrm{supp}(D)} D(x) = 1$ and $D \geq 0$.
We extend this definition by relaxing the non-negativity constraint to merely requiring that $D$ passes certain low-degree non-negativity tests.
We call the resulting object a pseudo-distribution.

\begin{definition}[Pseudo-distribution]
A \emph{degree-$\ell$ pseudo-distribution} is a finitely-supported function $D:\R^m \rightarrow \R$ such that $\sum_{x} D(x) = 1$ and $\sum_{x} D(x) p(x)^2 \geq 0$ for every polynomial $p$ of degree at most $\ell/2$, where the summation is over all $x$ in the support of $D$.
\end{definition}
Next, we define the related notion of pseudo-expectation.
\begin{definition}[Pseudo-expectation]
The \emph{pseudo-expectation} of a function $f$ on $\R^m$ with respect to a pseudo-distribution $\mu$, denoted by $\pexpecf{\mu(x)}{f(x)}$,  is defined as
\begin{equation*}
    \pexpecf{\mu(x)}{f(x)} = \sum_{x} \mu(x) f(x).
\end{equation*}
\end{definition}
We use the notation $\pexpecf{\mu(x)}{(1,x_1, x_2,\ldots, x_m)^{\otimes \ell}}$ to denote the degree-$\ell$ moment tensor of the pseudo-distribution $\mu$.
In particular, each entry in the moment tensor corresponds to the pseudo-expectation of a monomial of degree at most $\ell$ in $x$. 

\begin{definition}[Constrained pseudo-distributions]
\label{def:constrained-pseudo-distributions}
Let $\calA = \Set{ p_1\geq 0 , p_2\geq0 , \dots, p_r\geq 0}$ be a system of $r$ polynomial inequality constraints of degree at most $d$ in $m$ variables.
Let $\mu$ be a degree-$\ell$ pseudo-distribution over $\mathbb{R}^m$.
We say that $\mu$ \emph{satisfies} $\calA$ at degree $\ell \ge1$ if for every subset $\calS \subset [r]$ and every sum-of-squares polynomial $q$ such that $\deg(q) + \sum_{i \in \calS } \max\Paren{ \deg(p_i), d} \leq \ell$, $\pexpecf{\mu}{ q \prod_{i \in \calS} p_i } \geq 0$.
Further, we say that $\mu$ \emph{approximately satisfies} the system of constraints $\calA$ if the above inequalities are satisfied up to additive error $\pexpecf{\mu}{ q \prod_{i \in \calS} p_i } \geq -2^{-n^{\ell} } \norm{q} \prod_{i \in \calS} \norm{p_i}$, where $\norm{\cdot}$ denotes the Euclidean norm of the coefficients of the polynomial, represented in the monomial basis.  
\end{definition}

Crucially, there's an efficient separation oracle for moment tensors of constrained pseudo-distributions. 
Below gives the unconstrained statement; the constraint statement follows analogously.

\begin{fact}[\cite{shor1987approach, nesterov2000squared, parrilo2000structured, grigoriev2001complexity}]
    \label{fact:sos-separation-efficient}
    For any $m,\ell \in \N$, the following convex set has a $m^{\bigO{\ell}}$-time weak separation oracle, in the sense of \cite{grotschel1981ellipsoid}:\footnote{
        A separation oracle of a convex set $S \subset \R^M$ is an algorithm that can decide whether a vector $v \in \R^M$ is in the set, and if not, provide a hyperplane between $v$ and $S$.
        Roughly, a weak separation oracle is a separation oracle that allows for some $\eta$ slack in this decision.
    }:
    \begin{equation*}
        \Set{  \pexpecf{\mu(x)} { (1,x_1, x_2, \ldots, x_m)^{\otimes \ell } } \Big\vert \text{ $\mu$ is a degree-$\ell$ pseudo-distribution over $\R^m$}}
    \end{equation*}
\end{fact}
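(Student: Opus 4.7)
The plan is to exhibit the target set as the intersection of an affine hyperplane with a positive semidefinite cone of dimension $m^{O(\ell)}$, so that separation reduces to an approximate eigenvalue computation. Let $N = \binom{m+\lfloor\ell/2\rfloor}{\lfloor\ell/2\rfloor} = m^{O(\ell)}$ and index the rows and columns of $N \times N$ matrices by multi-indices $\alpha \in \mathbb{N}^m$ with $\lvert\alpha\rvert \leq \lfloor\ell/2\rfloor$. Given a candidate degree-$\ell$ moment tensor $T = (T_\gamma)_{\lvert\gamma\rvert \leq \ell}$, define the moment matrix $M(T)$ by $M(T)_{\alpha,\beta} = T_{\alpha+\beta}$. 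The first step is to verify that $T$ arises as the moment tensor of some degree-$\ell$ pseudo-distribution if and only if $T_0 = 1$ and $M(T) \succeq 0$. The forward direction is immediate: for any polynomial $p(x) = \sum_\alpha v_\alpha x^\alpha$ of degree at most $\ell/2$, we have $v^\top M(T) v = \pexpecf{\mu}{p(x)^2} \geq 0$. For the reverse direction, a $T$ with $T_0 = 1$ and $M(T) \succeq 0$ induces a linear functional on polynomials of degree at most $\ell$ that is normalized and non-negative on squares of degree-$\lfloor\ell/2\rfloor$ polynomials; a standard realization argument then writes this functional as integration against a finitely-supported (not necessarily non-negative) function on $\mathbb{R}^m$.

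With this representation in hand, the weak separation oracle proceeds in two stages. First, it tests the affine constraint $T_0 = 1$ and outputs the trivial halfspace if it fails. Otherwise, it forms $M(T)$ and computes, to accuracy $\eta$, its smallest eigenvalue $\lambda_{\min}$ together with a unit approximate eigenvector $v$, via Lanczos iteration or shifted power iteration; this runs in time $\poly(N, \log(1/\eta)) = m^{O(\ell)}$. If $\lambda_{\min} \geq -\eta$, the oracle declares $T$ to be $\eta$-feasible. Otherwise, it outputs the halfspace $\{T' : v^\top M(T') v \geq 0\}$, which is linear in $T'$ because the map $T' \mapsto M(T')$ is linear, and which by construction is violated at $T$.

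The main technical subtlety is the precision analysis demanded by the Gr\"otschel-Lov\'asz-Schrijver weak-separation framework: exact eigendecomposition is unavailable in polynomial time, so one must ensure that the approximate eigenvector still yields a valid separating hyperplane and that the accept/reject threshold is consistent with the $\eta$-slack required in both feasibility directions. This is standard for SDP-representable cones and incurs only a $\log(1/\eta)$ overhead, preserving the total running time of $m^{O(\ell)}$. The constrained variant follows by appending each degree-$d_i$ polynomial inequality $p_i \geq 0$ as the PSD constraint $M_{p_i}(T) \succeq 0$ on the localization matrix $M_{p_i}(T)_{\alpha,\beta} = \pexpecf{\mu}{p_i(x) \cdot x^{\alpha+\beta}}$, which is again linear in $T$ and of dimension $m^{O(\ell)}$, and then running the same eigenvalue-based oracle on the direct sum of all such matrices.
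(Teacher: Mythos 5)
Your proposal is correct and is exactly the standard moment-matrix/SDP argument; the paper does not prove this fact itself but cites it as a black-box result of \cite{shor1987approach, nesterov2000squared, parrilo2000structured, grigoriev2001complexity}, whose proofs proceed just as you describe (PSD moment matrix plus normalization, approximate minimum-eigenvector as the separating hyperplane, localization matrices for the constrained case). The only detail worth making explicit is that the tensor parametrization also requires the affine consistency constraints identifying entries that correspond to the same monomial, but these are linear and trivially separated over, so nothing changes.
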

This fact, together with the equivalence of weak separation and optimization \cite{grotschel1981ellipsoid} forms the basis of the sum-of-squares algorithm, as it allows us to efficiently approximately optimize over pseudo-distributions. 

Given a system of polynomial constraints, denoted by $ \calA$, we say that it is \emph{explicitly bounded} if it contains a constraint of the form $\{ \|x\|^2 \leq 1\}$. Then, the following fact follows from  \cref{fact:sos-separation-efficient} and \cite{grotschel1981ellipsoid}:

\begin{theorem}[Efficient optimization over pseudo-distributions]
    \label{fact:eff-pseudo-distribution}
There exists an $(m+r)^{O(\ell)} $-time algorithm that, given any explicitly bounded and satisfiable system $ \calA$ of $r$ polynomial constraints in $m$ variables, outputs a degree-$\ell$ pseudo-distribution that satisfies $ \calA$ approximately, in the sense of~\cref{def:constrained-pseudo-distributions}.\footnote{
    Here, we assume that the bit complexity of the constraints in $ \calA$ is $(m+t)^{O(1)}$.
}
\end{theorem}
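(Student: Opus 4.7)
The plan is to prove this by combining the weak separation oracle from \cref{fact:sos-separation-efficient} with the equivalence of weak separation and weak optimization due to Gr{\"o}tschel, Lov{\'a}sz, and Schrijver. The strategy is to express the set of moment tensors of degree-$\ell$ pseudo-distributions satisfying $\calA$ as the intersection of (i) the base set of unconstrained pseudo-moment tensors and (ii) a family of $(m+r)^{O(\ell)}$ additional convex constraints arising from $\calA$, build a weak separation oracle for the intersection, and then feed it into the ellipsoid method under the boundedness provided by the $\|x\|^2 \leq 1$ constraint.

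First, I would note that being a pseudo-distribution satisfying $\calA$ in the sense of \cref{def:constrained-pseudo-distributions} is, by inspection, a condition purely on the degree-$\ell$ moment tensor of $\mu$. The unconstrained portion---that $\mu$ is a degree-$\ell$ pseudo-distribution---is exactly the convex set for which \cref{fact:sos-separation-efficient} supplies a weak separation oracle in time $m^{O(\ell)}$. For each subset $\calS \subseteq [r]$ with $\sum_{i\in\calS}\max(\deg p_i, d) \leq \ell$, the requirement $\pexpecf{\mu}{q \prod_{i \in \calS} p_i} \geq 0$ for \emph{every} sum-of-squares $q$ of appropriate degree is equivalent to positive semidefiniteness of the localized moment matrix $M_{\calS}$, whose entries are linear functions of the moment tensor given by $\pexpecf{\mu}{x^\alpha x^\beta \prod_{i \in \calS} p_i}$. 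Positive semidefiniteness admits a trivial efficient weak separation oracle: compute the minimum eigenvalue and, if it is sufficiently negative, output the associated eigenvector as a separating hyperplane (which is linear in the moment tensor, hence in the original variable).

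Since there are at most $(m+r)^{O(\ell)}$ subsets $\calS$ to check and each localized matrix has size $m^{O(\ell)}$, taking the union yields a weak separation oracle for the entire constrained set in time $(m+r)^{O(\ell)}$. The explicit boundedness constraint $\|x\|^2 \leq 1$ forces every monomial of degree $\leq \ell$ to have pseudo-expectation bounded by $1$, so the feasible set is contained in a ball of radius $m^{O(\ell)}$ and contains a ball of inverse-exponential radius around any feasible point (using satisfiability of $\calA$ and standard perturbation arguments). With these parameters in hand, the ellipsoid method of \cite{grotschel1981ellipsoid} converts the weak separation oracle into a weak optimization algorithm that, in $(m+r)^{O(\ell)}$ time, returns a moment tensor satisfying every constraint up to additive error $2^{-n^{\ell}}$, which is precisely the slack permitted by the approximate-satisfaction definition.

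The main obstacle I expect is the bit-complexity bookkeeping: verifying that the guarantee produced by the ellipsoid method translates into the exact form $\pexpecf{\mu}{q \prod_{i\in\calS} p_i} \geq -2^{-n^\ell}\|q\|\prod_{i\in\calS}\|p_i\|$, since separating hyperplanes for the PSD constraints come with norms that must be rescaled by the coefficient norms of the $p_i$'s and $q$. This is handled by noting that the localized moment matrices have entries bounded in terms of $\prod_{i\in\calS}\|p_i\|$, and a Cauchy--Schwarz argument relates the Frobenius norm of any SoS witness $q$ back to its coefficient norm. Everything else---setting up the LP of linear constraints encoding the pseudo-distribution axioms, assembling the localized matrices, and applying the ellipsoid method---is standard and contributes only polylogarithmic overhead relative to the dominant $(m+r)^{O(\ell)}$ term.
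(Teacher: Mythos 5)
Your proposal is correct and follows essentially the same route the paper takes: the paper obtains this theorem directly from the weak separation oracle of \cref{fact:sos-separation-efficient} combined with the separation-to-optimization equivalence of \cite{grotschel1981ellipsoid}, which is exactly the argument you flesh out via localized moment matrices and the ellipsoid method. The one point to be careful about is that the set of exactly feasible moment tensors can have empty interior (e.g., when $\calA$ contains equality constraints), so the inner-ball step must go through a perturbation of the feasible \emph{set} rather than a ball around an exact feasible point --- which is precisely why the conclusion is stated only as approximate satisfaction in the sense of \cref{def:constrained-pseudo-distributions}.
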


We now state some standard facts for pseudo-distributions, which extend facts that hold for standard probability distributions.
These can be found in the prior works listed above.

\begin{fact}[Cauchy--Schwarz for pseudo-distributions]
Let $f,g$ be polynomials of degree at most $d$ in the variables $x \in \R^m$. Then, for any degree-$d$ pseudo-distribution $\mu$,
$\pexpecf{\mu}{fg}  \leq \sqrt{ \pexpecf{\mu}{ f^2} } \cdot \sqrt{ \pexpecf{\mu}{ g^2} }$.
    \label{fact:pseudo-expectation-cauchy-schwarz}
\end{fact}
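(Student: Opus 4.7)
The plan is to mimic the classical proof of the Cauchy--Schwarz inequality for probability distributions, replacing expectations with pseudo-expectations and using the defining property of pseudo-distributions in place of non-negativity of the measure. Specifically, I would observe that for any real parameter $\lambda$, the polynomial $(\lambda f + g)^2$ is a square, hence a sum-of-squares polynomial of degree at most $2d$, and therefore by the definition of a pseudo-distribution (applied at sufficient degree) it satisfies $\pexpecf{\mu}{(\lambda f + g)^2} \geq 0$.

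The next step is to expand this pseudo-expectation using linearity of $\pexpecf{\mu}{\cdot}$, yielding
\begin{equation*}
    \lambda^2 \cdot \pexpecf{\mu}{f^2} + 2\lambda \cdot \pexpecf{\mu}{fg} + \pexpecf{\mu}{g^2} \geq 0 \quad \text{for all } \lambda \in \R.
\end{equation*}
This is a univariate quadratic in $\lambda$ which is globally non-negative, so by elementary algebra its discriminant must be non-positive, giving
\begin{equation*}
    \pexpecf{\mu}{fg}^2 \leq \pexpecf{\mu}{f^2} \cdot \pexpecf{\mu}{g^2}.
\end{equation*}
Taking square roots yields the desired inequality.

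The one minor subtlety is the degenerate case $\pexpecf{\mu}{f^2} = 0$, where the ``quadratic in $\lambda$'' actually becomes linear. In that case, the inequality $2\lambda \cdot \pexpecf{\mu}{fg} + \pexpecf{\mu}{g^2} \geq 0$ must hold for all real $\lambda$, which forces $\pexpecf{\mu}{fg} = 0$, so the bound holds trivially. The main thing to be careful about is the degree bookkeeping: since $(\lambda f + g)^2$ has degree at most $2d$, one genuinely needs $\mu$ to have degree at least $2d$ (not just $d$) for the pseudo-expectation test to apply, which is the standard convention under which this fact is invoked. I do not expect any substantive obstacle; the argument is essentially the standard discriminant trick, and its validity for pseudo-distributions is exactly the reason the positivity axiom is phrased in terms of squares of polynomials.
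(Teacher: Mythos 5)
Your proof is correct. The paper does not actually prove \cref{fact:pseudo-expectation-cauchy-schwarz} in the text (it defers to prior works), but it does prove the essentially identical \cref{fact:pe_cs} later, and the argument there is different from yours: the paper passes to the moment matrix $M = \pexpecf{\mu}{(1,x)^{\otimes \ell}((1,x)^{\otimes \ell})^\top}$, uses that $M \succeq 0$ to write $M = M^{1/2}M^{1/2}$, and then invokes the ordinary Euclidean Cauchy--Schwarz inequality on the vectors $M^{1/2}\vec{f}$ and $M^{1/2}\vec{g}$. Your discriminant argument instead works directly from the positivity axiom $\pexpecf{\mu}{p^2} \geq 0$ together with linearity, never touching the explicit matrix representation. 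Both are standard; the moment-matrix route makes the reduction to finite-dimensional linear algebra explicit (which is also how one sees that the separation oracle of \cref{fact:sos-separation-efficient} works), while your route is more self-contained and generalizes verbatim to any linear functional satisfying the square-positivity axiom. Your handling of the degenerate case $\pexpecf{\mu}{f^2} = 0$ is correct, and your remark on degree bookkeeping is well taken: under \cref{def:constrained-pseudo-distributions}'s convention that a degree-$\ell$ pseudo-distribution certifies $\pexpecf{\mu}{p^2}\geq 0$ only for $\deg(p) \leq \ell/2$, the statement as written really does need $\mu$ to have degree $2d$ rather than $d$ (the paper is being loose here, and indeed \cref{fact:pe_cs} states the hypothesis the way you do, with $\deg(p), \deg(q) \leq d/2$ for a degree-$d$ pseudo-expectation).
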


\begin{fact}[Hölder's inequality for pseudo-distributions] \label{fact:pseudo-expectation-holder}
Let $f,g$ be polynomials of degree at most $d$ in the variables $x \in \R^m$. 
Fix $t \in \N$. Then, for any degree-$dt$ pseudo-distribution $\mu$,
\begin{equation*}
    \pexpecf{\mu}{ f^{t-1}  g} \leq \Paren{ \pexpecf{\mu}{ f^t }  }^{\frac{t-1}{t}} \cdot  \Paren{  \pexpecf{\mu }{ g^t }  }^{\frac{1}{t}}.  
\end{equation*}
In particular, when $t$ is even,
$\pexpecf{\mu}{f}^t \leq \pexpecf{\mu}{ f^t }$.
\end{fact}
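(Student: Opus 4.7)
My plan is to reduce H\"older's inequality to iterated applications of Cauchy--Schwarz for pseudo-distributions (the preceding fact), using induction on $t$. The $t=1$ case is immediate, and the $t=2$ case is precisely Cauchy--Schwarz. The key observation is that the inequality is amenable to an SoS argument in the same sense that Cauchy--Schwarz is, so no new machinery beyond positivity of square pseudo-expectations is needed.

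For the inductive step with even $t = 2s$, I will apply Cauchy--Schwarz after the split $f^{t-1} g = f^s \cdot (f^{s-1} g)$, obtaining
\begin{equation*}
    \pexpecf{\mu}{f^{t-1}g}^2 \;\le\; \pexpecf{\mu}{f^{t}} \cdot \pexpecf{\mu}{f^{t-2} g^2}.
\end{equation*}
I will then rewrite the second factor as $\pexpecf{\mu}{(f^2)^{s-1}(g^2)}$ and invoke the inductive hypothesis at exponent $s$ applied to the polynomials $f^2$ and $g^2$ (each of degree at most $2d$), yielding $\pexpecf{\mu}{f^t}^{(t-2)/t} \pexpecf{\mu}{g^t}^{2/t}$. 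Substituting and taking a square root closes the even case. For odd $t \ge 3$, I will use the analogous split $f^{t-1} g = f^{(t-1)/2} \cdot (f^{(t-1)/2} g)$ and reduce via Cauchy--Schwarz to the previously established even exponent $t-1$.

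The ``in particular'' clause for even $t$ follows by specializing Cauchy--Schwarz to $g \equiv 1$: this gives $\pexpecf{\mu}{h}^2 \le \pexpecf{\mu}{h^2}$, and iterating with $h = f, f^2, f^4, \dots$ produces $\pexpecf{\mu}{f}^{2^k} \le \pexpecf{\mu}{f^{2^k}}$ for every $k$. For non-power-of-two even $t$, one further application of H\"older interpolates between consecutive powers of two.

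The main bookkeeping obstacle will be tracking the degree of $\mu$ under the recursion: after squaring $f$ and $g$, their degrees double to $2d$, while the working exponent halves from $t$ to $s = t/2$, so the moments required at each recursive step have degree exactly $2d \cdot s = dt$. The hypothesis that $\mu$ is a degree-$dt$ pseudo-distribution is therefore precisely what is needed for the induction to close, and every intermediate inequality invoked is justified by the non-negativity of $\pexpecf{\mu}{p^2}$ for polynomials $p$ whose degree is supported by $\mu$.
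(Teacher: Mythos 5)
The paper does not actually prove this fact; it is stated as folklore and attributed to prior work, so your proposal is measured on its own merits. The core of your argument—the even-exponent reduction $\pexpecf{\mu}{f^{t-1}g}^2 \le \pexpecf{\mu}{f^t}\,\pexpecf{\mu}{(f^2)^{s-1}g^2}$ followed by recursion on $F=f^2$, $G=g^2$ at exponent $s=t/2$—is the standard iterated Cauchy--Schwarz proof, and it closes cleanly whenever the recursion never leaves the even exponents, i.e.\ when $t$ is a power of two. That covers every instance the paper actually invokes (essentially $t=2$ and the ``in particular'' clause at powers of two).

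The gap is in the odd case, and it is not repairable as stated. First, for odd $t$ the claimed inequality is simply false for general polynomials: take $\mu$ the uniform distribution on $\{-1,+1\}\subset\R$, $f(x)=x$, $g\equiv 1$, $t=3$; then $\pexpecf{\mu}{f^{t-1}g}=\E[x^2]=1$ while $\Paren{\pexpecf{\mu}{f^3}}^{2/3}\Paren{\pexpecf{\mu}{g^3}}^{1/3}=0$. So no argument can establish the Fact verbatim for odd $t$ without extra hypotheses (e.g.\ that $f$ is a square, or restricting to even $t$). Second, even in the benign situation where the odd exponent arises inside your recursion with $F,G$ squares (e.g.\ $t=6\Rightarrow s=3$), your proposed split $\pexpecf{\mu}{F^{s-1}G}\le \sqrt{\pexpecf{\mu}{F^{s-1}}}\sqrt{\pexpecf{\mu}{F^{s-1}G^2}}$ does not produce the right exponents: taking $F=G$ at $s=3$ would require $\pexpecf{\mu}{F^4}\le\Paren{\pexpecf{\mu}{F^3}}^{4/3}$, which is the \emph{reverse} of the power-mean inequality and false already for honest nonnegative random variables. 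Since every even $t$ that is not a power of two recurses down to an odd exponent greater than $1$, those cases are also left unproven. To fix this you should either restrict the statement to $t$ a power of two (which suffices for the paper), or prove general even $t$ by the standard telescoping chain $\pexpecf{\mu}{f^{t-2^k}g^{2^k}}\le\Paren{\pexpecf{\mu}{f^t}}^{1/2}\Paren{\pexpecf{\mu}{f^{t-2^{k+1}}g^{2^{k+1}}}}^{1/2}$ together with a separately proven pseudo-expectation power-mean inequality, rather than by halving the exponent.
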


\paragraph{Sum-of-squares proofs.}

Let $f_1, f_2, \ldots, f_r$ and $g$ be multivariate polynomials in the indeterminates $x \in \R^m$.
Given the constraints $\{f_1 \geq 0, \ldots, f_r \geq 0\}$, a \emph{sum-of-squares proof} of the identity $\{g \geq 0\}$ is a set of polynomials $\{p_S\}_{S \subseteq [r]}$ such that
\begin{equation*}
    g = \sum_{S \subseteq [r]} p^2_S \cdot \prod_{i \in S} f_i.
\end{equation*}
As its name suggests, the existence of such an SoS proof shows that if the constraints $\{f_i \geq 0 \mid i \in [r]\}$ are satisfied, then the identity $g \geq 0$ is also satisfied.
We say that this SoS proof has \emph{degree $\ell$} if for every set $S \subseteq [r]$, the polynomial $p^2_S \Pi_{i \in S} f_i$ has degree at most $\ell$.
If there is a degree-$\ell$ SoS proof that $\{f_i \geq 0 \mid i \in [r]\}$ implies $\{g \geq 0\}$, we write
\begin{equation}
    \{f_i \geq 0 \mid i \in [r]\} \sststile{\ell}{x}\{g \geq 0\}.
\end{equation}
We will sometimes drop the indeterminate in $\sststile{\ell}{x}$ when this causes no confusion.
For all polynomials $f,g\colon\R^m \to \R$ and for all coordinate-wise polynomials $F\colon \R^m \to \R^{m_F}$, $G\colon \R^m \to \R^{m_G}$, $H\colon \R^{m_H} \to \R^m$, we have the following inference rules.\footnote{
    This notation should be read in the following way: given the proofs above the bar line, we can derive the proof below the bar line.
}
\begin{figure}[h]
    \renewcommand{\arraystretch}{1.5}
    \begin{center}
    \begin{tabular}{c c}
        Addition Rule & Multiplication Rule \\
        \vspace{1em}
        $\displaystyle\frac{ \calA \sststile{\ell}{} \{f \geq 0, g \geq 0 \} } { \calA \sststile{\ell}{} \{f + g \geq 0\}}$
        & $\displaystyle\frac{ \calA \sststile{\ell}{} \{f \geq 0\},\quad \calA \sststile{\ell'}{} \{g \geq 0\}} { \calA \sststile{\ell+\ell'}{} \{f \cdot g \geq 0\}}$ \\
        Transitivity Rule & Substitution Rule \\
        $\displaystyle\frac{ \calA \sststile{\ell}{}  \calB,\quad \calB \sststile{\ell'}{} C}{ \calA \sststile{\ell \cdot \ell'}{} C}$
        & $\displaystyle\frac{\{F \geq 0\} \sststile{\ell}{} \{G \geq 0\}}{\{F(H) \geq 0\} \sststile{\ell \cdot \deg(H)} {} \{G(H) \geq 0\}}$
    \end{tabular}
    \end{center}
    \renewcommand{\arraystretch}{1}
\end{figure}

Sum-of-squares proofs allow us to deduce properties of pseudo-distributions that satisfy some constraints.
\begin{fact}[Soundness]
  \label{fact:sos-soundness}
  Let $\mu$ be a degree-$\ell$ pseudo-distribution.
  If $\mu$ is consistent with the set of degree-$d_A$ polynomial constraints $\calA$, denoted $\mu \sdtstile{d_A}{} \calA$, and there is a degree-$d_B$ sum-of-squares proof that $ \calA \sststile{d_B}{} \calB$, and $\ell \geq d_Ad_B$, then $\mu \sdtstile{d_A d_B}{}  \calB$.
\end{fact}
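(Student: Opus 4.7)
The plan is to verify the definition of $\mu \sdtstile{d_A d_B}{} \calB$ directly, by expressing each required non-negativity test against a $\calB$-monomial as a linear combination of non-negativity tests against $\calA$-monomials, and then invoking the hypothesis $\mu \sdtstile{d_A}{} \calA$ term by term. Write $\calA = \{p_i \geq 0\}_{i \in [r]}$ and $\calB = \{q_j \geq 0\}_{j \in [s]}$. By definition of an SoS proof, for each $j \in [s]$ there is a decomposition
\[ q_j \;=\; \sum_{S \subseteq [r]} \sigma_{S,j}^2 \prod_{i \in S} p_i \]
in which every summand has total degree at most $d_B$. This is the only structural input from the SoS-proof hypothesis that I will use.

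To check $\mu \sdtstile{d_A d_B}{} \calB$, I would fix an arbitrary subset $T \subseteq [s]$ and an arbitrary sum-of-squares polynomial $\rho$ whose combined degree satisfies the budget of \cref{def:constrained-pseudo-distributions} with $d$ replaced by $d_A d_B$. The goal is to prove $\pexpecf{\mu}{\rho \prod_{j \in T} q_j} \geq 0$. First I would substitute each $q_j$ by its SoS decomposition and expand the product, producing a finite sum of terms of the form
\[ \rho \cdot \Bigl(\prod_{j \in T} \sigma_{S_j, j}\Bigr)^{\!2} \cdot \prod_{i \in [r]} p_i^{c_i} \]
indexed by tuples $(S_j)_{j \in T}$, where $c_i \in \{0, 1, \ldots, |T|\}$ counts the multiplicity of $i$ across the tuple. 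Next I would normalize each expanded term into the canonical shape required by \cref{def:constrained-pseudo-distributions}: for every index $i$ with $c_i \geq 2$, absorb $p_i^{2\lfloor c_i/2\rfloor}$ into the SoS prefactor as the square of $p_i^{\lfloor c_i / 2 \rfloor}$, and let $S^\star \subseteq [r]$ be the set of indices of odd multiplicity. Each normalized term then has the canonical form $\tilde\sigma^2 \prod_{i \in S^\star} p_i$ for a sum-of-squares $\tilde\sigma^2$ and a subset $S^\star \subseteq [r]$, so its pseudo-expectation is non-negative by $\mu \sdtstile{d_A}{} \calA$, provided the degree budget is respected.

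The main technical work is a careful degree audit, which is also the only real obstacle: one must verify that, after substitution, expansion, and normalization, the total degree of each summand is at most $\ell$ and that each summand meets the degree requirement of \cref{def:constrained-pseudo-distributions} at level $d_A$. This is where the hypothesis $\ell \geq d_A d_B$ enters: each SoS-proof summand contributes at most $d_B$; the product over $|T|$ copies plus $\deg(\rho)$ fits under the $\calB$-satisfaction budget at level $d_A d_B$; and after normalizing, the resulting $\calA$-certificate has degree still at most $\ell$, so the hypothesis applies. Once every summand has non-negative pseudo-expectation, linearity of $\pexpecf{\mu}{\cdot}$ yields $\pexpecf{\mu}{\rho \prod_{j \in T} q_j} \geq 0$, completing the verification for an arbitrary $T$ and $\rho$, and hence the proof.
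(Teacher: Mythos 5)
The paper does not actually prove \cref{fact:sos-soundness}; it is stated as a background fact imported from the standard SoS literature (Barak's lecture notes and the works cited around \cref{fact:sos-separation-efficient}), so there is no in-paper proof to compare against. Your outline is the standard folklore argument for soundness, and its structure is correct: substitute the SoS decomposition of each $q_j$, expand the product, fold even powers $p_i^{2\lfloor c_i/2\rfloor}$ into the sum-of-squares prefactor, reduce each summand to the canonical form $\tilde\sigma^2\prod_{i\in S^\star}p_i$, apply $\mu \sdtstile{d_A}{} \calA$ termwise, and conclude by linearity. The observation that a product of a sum of squares with perfect squares is again a sum of squares is the right justification for the normalization step.

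The one substantive gap is exactly the step you flag and then wave through: the degree audit. It is not enough to note that each expanded summand has total degree at most $d_A d_B \le \ell$. The satisfaction condition in \cref{def:constrained-pseudo-distributions} charges each selected constraint $\max(\deg p_i, d_A)$ rather than its actual degree, so what must be verified is $\deg(\tilde\sigma^2) + \sum_{i\in S^\star}\max(\deg p_i, d_A) \le \ell$, and this can strictly exceed the raw degree of the summand when some $p_i$ have degree below $d_A$ (e.g., linear constraints inside a degree-$d_A$ system). One has to bound $\lvert S^\star\rvert$ via the degree-$d_B$ budget of each SoS-proof summand and check that $\deg(\rho) + \lvert T\rvert d_B + \lvert S^\star\rvert d_A$ stays within $\ell = d_A d_B$ under the $\calB$-side budget on $\rho$ and $T$; this is where the hypothesis $\ell \ge d_A d_B$ is actually consumed, and it is the only nontrivial content of the statement. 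As written, your proposal asserts the audit succeeds without performing it, so the proof is incomplete at precisely its load-bearing point, even though the overall route is the right one.
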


We also have a converse to \cref{fact:sos-soundness}: every property of low-level pseudo-distributions can be derived by low-degree sum-of-squares proofs.
\begin{fact}[Completeness]
  \label{fact:sos-completeness}
  Let $d \geq r \geq r'$.
  Suppose $ \calA$ is a collection of polynomial constraints with degree at most $r$, and $ \calA \sststile{}{x} \{ \sum_{i = 1}^m x_i^2 \leq 1\}$.   Let $\{g \geq 0 \}$ be a polynomial constraint.
  If every degree-$d$ pseudo-distribution that satisfies $D \sdtstile{r}{}  \calA$ also satisfies $D \sdtstile{r'}{} \{g \geq 0 \}$, then for every $\eta > 0$, there is a sum-of-squares proof $ \calA \sststile{d}{} \{g \geq - \eta \}$.
\end{fact}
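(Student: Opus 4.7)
The plan is to prove this completeness statement by a convex-duality / Hahn--Banach separation argument, which is standard in the Lasserre hierarchy / Positivstellensatz literature. Fix $d$ and consider the finite-dimensional real vector space $V_d$ of polynomials in $x_1,\ldots,x_m$ of degree at most $d$. Let $C_d \subseteq V_d$ be the convex cone of polynomials admitting a degree-$d$ SoS certificate from $\calA$, i.e.\ all polynomials of the form $\sum_{S \subseteq [r]} p_S^2 \prod_{i \in S} f_i$ whose total degree is at most $d$. My goal is to show that $g + \eta \in C_d$ for every $\eta > 0$.

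Assume for contradiction that for some fixed $\eta > 0$ we have $g + \eta \notin C_d$. The first key step is to argue that $C_d$ is a \emph{closed} convex cone in $V_d$. This is exactly where the explicit boundedness hypothesis $\calA \sststile{}{x} \{\sum_i x_i^2 \leq 1\}$ is used: it gives uniform coefficient bounds on the SoS multipliers $p_S$ that are needed to express any given polynomial in $C_d$, so any convergent sequence in $C_d$ has a subsequence of SoS certificates converging to an SoS certificate of the limit. Given closedness, the Hahn--Banach separation theorem produces a linear functional $L : V_d \to \R$ with $L(q) \geq 0$ for all $q \in C_d$ and $L(g+\eta) < 0$; rescaling, I may normalize $L(1) = 1$.

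The next step is to interpret $L$ as the pseudo-expectation operator of a degree-$d$ pseudo-distribution $D$ satisfying $\calA$. Nonnegativity on the $S = \emptyset$ part of $C_d$ gives $L(p^2) \geq 0$ for every polynomial $p$ of degree at most $d/2$, so $L$ is a valid degree-$d$ pseudo-expectation. Nonnegativity on the $|S| \geq 1$ parts gives $L\bigl(p^2 \prod_{i \in S} f_i\bigr) \geq 0$ for every admissible $S$ and every $p$ with $\deg(p^2) + \sum_{i \in S} \deg(f_i) \leq d$, which is exactly the definition of $D \sdtstile{r}{} \calA$ from \cref{def:constrained-pseudo-distributions}. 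By hypothesis, any such $D$ must also satisfy $D \sdtstile{r'}{} \{g \geq 0\}$, and in particular $L(g) = \pexpecf{D}{g} \geq 0$. But $L(g) = L(g+\eta) - \eta \cdot L(1) < -\eta < 0$, a contradiction.

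The main obstacle, and the only non-routine part, is establishing closedness of $C_d$: without it the separation theorem yields only a functional that is nonnegative on the closure of $C_d$, which would produce a pseudo-distribution satisfying $\calA$ only \emph{approximately}, possibly not strongly enough to invoke the hypothesis. The boundedness constraint controls this by preventing unbounded SoS multipliers, so the cone is closed; a quantitative version of this argument, tracking the slack carefully, also explains why we obtain $\{g \geq -\eta\}$ rather than $\{g \geq 0\}$ in the conclusion.
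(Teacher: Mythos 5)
The paper does not actually prove this statement: it is recorded as a ``Fact'' imported from the sum-of-squares literature (it is the standard completeness/strong-duality statement for the Lasserre hierarchy under an explicit boundedness condition), so there is no in-paper argument to compare against. Your proposal is the standard proof of that standard fact --- dualize, separate, and read the separating functional as a pseudo-expectation --- and its overall architecture is correct.

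That said, two points in your write-up are asserted rather than proved, and the first is the entire technical content of the statement. \textbf{(1) Closedness of $C_d$.} Saying that the boundedness hypothesis ``gives uniform coefficient bounds on the SoS multipliers'' is the conclusion of a nontrivial lemma, not an observation: the multipliers $p_S$ are paired with products $\prod_{i\in S} f_i$ that need not be nonnegative anywhere, so you cannot bound $\sum_S \lVert p_S\rVert^2$ by simply evaluating against a positive measure. The actual argument (closedness of truncated quadratic modules under an Archimedean condition, as in Josz--Henrion or Marshall) repeatedly uses $\calA \sststile{}{} \{\sum_i x_i^2 \le 1\}$ to dominate high-degree moments by low-degree ones and extract a convergent subsequence of certificates. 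If you only separate $g+\eta$ from the \emph{closure} of $C_d$, the dual cone is unchanged, so the contradiction in case $g+\eta \notin \overline{C_d}$ still goes through; but to handle $g+\eta \in \overline{C_d}\setminus C_d$ you need essentially the same Archimedean lemma (e.g.\ that adding a positive constant moves a boundary point into the cone), so the $\eta$-slack does not let you dodge it --- and relatedly, your closing remark that the $-\eta$ arises from ``tracking the slack'' in the closedness argument is not the right explanation. \textbf{(2) Normalization.} You set $L(1)=1$ without ruling out $L(1)=0$. This is fixable: from $L(p^2)\ge 0$, Cauchy--Schwarz for the moment bilinear form, and the boundedness certificate one shows $L(1)=0$ forces $L$ to vanish on all of $V_d$, contradicting $L(g+\eta)<0$; but the step should be stated. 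With those two items filled in, your argument is a complete and standard proof of \cref{fact:sos-completeness}.
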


\paragraph{Basic sum-of-squares proofs.}

Now, we recall some basic facts about sum-of-squares proofs.
First, any univariate polynomial inequality admits a sum-of-squares proof over the reals.
\begin{fact}[Univariate polynomial inequalities admit SoS proofs~\cite{laurent2009sums}]
\label{fact:univariate-sos-proofs}
Let $p$ be a polynomial of degree $d$.
If $p(x)\geq 0$ for all $x \geq 0$, we have $\sststile{d}{x} \Set{ p(x) \geq 0 } $.
If $p(x)\geq 0$ for all $x \in [a, b]$, then $\Set{ x\geq a, x\leq b } \sststile{d}{x} \Set{ p(x) \geq 0 }$.
\end{fact}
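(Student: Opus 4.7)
The plan is to invoke the classical Markov--Lukács type characterization of univariate non-negative polynomials, obtained via factorization over $\R$. The key primitive is the observation that if $p(x)\geq 0$ for all $x\in\R$, then $p$ is a sum of two squares of polynomials of degree at most $\deg(p)/2$: factoring $p(x) = c\prod_i (x-r_i)^{m_i}\prod_j\paren{(x-c_j)^2 + d_j^2}^{n_j}$ over $\R$, each real root of $p$ must occur with even multiplicity (else $p$ changes sign), and each irreducible quadratic $(x-c_j)^2+d_j^2$ is already a sum of two squares. Closure of sums of squares under multiplication (via $(a^2+b^2)(c^2+d^2) = (ac+bd)^2 + (ad-bc)^2$) then yields a degree-$d$ SoS certificate.

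For the first claim, I would reinterpret the statement as $\Set{x\geq 0}\sststile{d}{x}\Set{p(x)\geq 0}$ (an unconditional proof is impossible when $p$ takes negative values for $x<0$). Factor $p$ over $\R$. Real roots lying in $[0,\infty)$ must appear with even multiplicity; real roots $r<0$ may occur with any multiplicity, but each linear factor can be written as $(x-r) = (\sqrt{-r})^2 + x\cdot 1^2$, which is of the form $\sigma_0 + x\sigma_1$ with $\sigma_0,\sigma_1$ sums of squares. Using the product identity
\begin{equation*}
    (\sigma_0 + x\sigma_1)(\tau_0 + x\tau_1) \;=\; (\sigma_0\tau_0 + x^2\sigma_1\tau_1) + x(\sigma_0\tau_1 + \sigma_1\tau_0),
\end{equation*}
and the fact that SoS is closed under product and under multiplication by $x^2 = (x)^2$, this form is preserved. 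Combining with the even-multiplicity factors (which contribute to $\sigma_0$ directly) and the complex-quadratic factors (which are SoS), we obtain $p(x) = \sigma_0(x) + x\,\sigma_1(x)$ where $\sigma_0, \sigma_1$ are SoS with $\deg\sigma_0\leq d$ and $\deg\sigma_1\leq d-1$, giving the required degree-$d$ SoS certificate.

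For the interval case, I would invoke the Markov--Lukács theorem directly: if $\deg p = 2m$, then $p(x) = f(x)^2 + (x-a)(b-x)\,g(x)^2$ with $\deg f\leq m$, $\deg g\leq m-1$; if $\deg p = 2m+1$, then $p(x) = (x-a)\,h(x)^2 + (b-x)\,k(x)^2$ with $\deg h, \deg k\leq m$. Both decompositions are degree-$d$ SoS proofs from the axioms $\Set{x-a\geq 0,\ b-x\geq 0}$, since $(x-a)(b-x) = (x-a)\cdot(b-x)$ is a product of axioms. The proof of Markov--Lukács itself follows the same factorization template as above: group roots in $(-\infty,a)$, $[a,b]$, and $(b,\infty)$; roots in $(a,b)$ must have even multiplicity; for $r<a$ use $(x-r) = (a-r)\cdot 1^2 + (x-a)\cdot 1^2$; for $r>b$ use $(r-x) = (r-b)\cdot 1^2 + (b-x)\cdot 1^2$; handle signs from odd-multiplicity roots beyond $b$ by pairing (an odd number is incompatible with $p\geq 0$ on $[a,b]$ unless compensated by the leading coefficient); then apply the analogous product identity to preserve the target form.

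The main obstacle is bookkeeping rather than any deep step: carefully tracking sign contributions from odd-multiplicity roots outside $[a,b]$ (and verifying that compatibility with $p\geq 0$ on $[a,b]$ forces these into groupings expressible in the target form), and verifying that the product identity does not inflate degrees past $d$. Once the factorization is in place, closure of SoS under products and the identities above mechanically deliver the certificate within the required degree budget.
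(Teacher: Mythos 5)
The paper states this fact without proof, citing \cite{laurent2009sums}; your argument is precisely the standard Markov--Luk\'acs/P\'olya--Szeg\H{o} factorization proof found there (including the correct reinterpretation of the first claim as a derivation from the axiom $x \geq 0$, without which it would be false), and the degree bookkeeping is right. One small correction: on the half-line only roots in the \emph{open} ray $(0,\infty)$ are forced to have even multiplicity --- a root at $0$ itself may be odd (e.g.\ $p(x)=x$) --- but your normal form $\sigma_0 + x\,\sigma_1$ absorbs such a factor as $x = 0 + x\cdot 1^2$, exactly as you correctly restrict to the open interval $(a,b)$ in the Markov--Luk\'acs case.
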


Second, if $p \geq 0$ and $p$ is a quadratic, then this admits a sum-of-squares proof.
\begin{fact}[Quadratic polynomial inequalities admit SoS proofs]
    \label{fact:nonnegative-quadratic}
Let $p$ be a polynomial in the indeterminates $x \in \R^m$ such that $p$ has degree $2$ and $p \geq 0$ for all $x \in \mathbb{R}^m$. Then $\sststile{2}{x} \Set{ p(x) \geq 0  }$. 
\end{fact}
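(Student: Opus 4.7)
}

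The plan is to produce a degree-2 sum-of-squares certificate by reducing to a canonical form via completion of squares. Writing the quadratic as $p(x) = x^\top A x + b^\top x + c$ with $A = A^\top \in \R^{m\times m}$, the first step is to argue that $A \succeq 0$: if $A$ had a negative eigenvalue with eigenvector $v$, then $p(tv) = t^2 v^\top A v + t\,b^\top v + c \to -\infty$ as $|t| \to \infty$, contradicting the hypothesis $p \geq 0$ on all of $\R^m$.

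Next, I would diagonalize. Let $A = U \Lambda U^\top$ with $U$ orthogonal and $\Lambda = \operatorname{diag}(\lambda_1,\dots,\lambda_k,0,\dots,0)$ where $\lambda_1,\dots,\lambda_k > 0$. Introduce the linear change of variables $y = U^\top x$ and set $\beta := U^\top b$; note that $y_i$ is a degree-1 polynomial in $x$. In these coordinates
\begin{equation*}
  p(x) \;=\; \sum_{i=1}^k \lambda_i y_i^2 \;+\; \sum_{i=1}^m \beta_i y_i \;+\; c.
\end{equation*}
For any $i > k$, choosing $y_i \to \pm\infty$ with the other coordinates fixed shows that $p$ is unbounded below unless $\beta_i = 0$. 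So the linear term only has contributions from the directions with strictly positive eigenvalue, and we may complete the square in each of those:
\begin{equation*}
  p(x) \;=\; \sum_{i=1}^k \lambda_i \Paren{ y_i + \tfrac{\beta_i}{2\lambda_i} }^2 \;+\; \Paren{ c - \sum_{i=1}^k \tfrac{\beta_i^2}{4\lambda_i} }.
\end{equation*}
Evaluating at the minimizer $y_i = -\beta_i/(2\lambda_i)$ (and arbitrary $y_{k+1},\dots,y_m$) shows that the leftover constant $e := c - \sum_{i=1}^k \beta_i^2/(4\lambda_i)$ is non-negative.

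Finally, each term $\lambda_i (y_i + \beta_i/(2\lambda_i))^2$ is the square of the degree-1 polynomial $\sqrt{\lambda_i}\,(y_i + \beta_i/(2\lambda_i))$ in $x$ (since $y_i$ is a linear function of $x$), and the constant $e \geq 0$ equals $(\sqrt{e})^2$. Hence $p(x)$ is expressed as a sum of squares of polynomials of degree at most $1$, which is a degree-2 SoS certificate of $p \geq 0$, as required. I do not expect any real obstacle in executing this plan; the only point deserving care is the vanishing of $\beta_i$ in the kernel directions of $A$, which follows from the global nonnegativity hypothesis as sketched above.
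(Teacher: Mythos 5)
Your proof is correct. It takes a mildly different route from the paper's: the paper homogenizes the quadratic as $p(x) = v(x)^\dagger M v(x)$ with $v(x) = (1, x_1, \dots, x_m)$, shows the single $(m+1)\times(m+1)$ matrix $M$ is PSD (using a limiting argument to handle test vectors whose first coordinate vanishes), and then factors $M = \sum_i u_i u_i^\dagger$ to read off the squares. You instead work with the inhomogeneous decomposition $p = x^\top A x + b^\top x + c$, diagonalize $A$, and complete squares, which forces you to verify three separate facts (that $A \succeq 0$, that the linear part vanishes on $\ker A$, and that the residual constant is nonnegative) --- each of which you do correctly via the appropriate unboundedness or minimization argument. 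The paper's version is more compact because the PSD check of the homogenized matrix absorbs all three of your case analyses at once (your kernel-direction and constant-term arguments correspond precisely to the paper's treatment of test vectors with $v_1 = 0$ and $v_1 \neq 0$ respectively); your version is more explicit and produces a concrete list of at most $k+1$ squares. Both yield a degree-2 certificate, so there is nothing to fix.
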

\begin{proof}
Let $M$ be the unique $(m + 1) \times (m+1 )$ Hermitian matrix such that, for $v(x) = (1,x_1, \dots , x_m)^\dagger$,
\[
p(x_1, \dots , x_m) = v(x)^\dagger M v(x) \,.
\]
The inequality $p \geq 0$ implies that $M$ is PSD: consider a vector $v  = (v_1, \dots , v_{m+1}) \in \R^{m+1}$.
If $v_1 \neq 0$, then $v^\dagger M v = p(w) \geq 0$ for $w_1 = v_2/v_1, \dots , w_m = v_{m+1}/v_1$.
If $v_1 = 0$, then $v^\dagger M v = \lim_{c \to \infty} p(c \cdot w) \geq 0$ for $w_1 = v_2, \dots , w_m = v_{m+1}$.
This shows that $M$ must be PSD, so we can write $M = \sum_{i = 1}^{m+1} u_iu_i^\dagger$ for some vectors $u_i \in \R^{m+1}$.  Thus,
\[
p(x_1, \dots , x_m) = v(x)^\dagger M v(x) = \sum_{i = 1}^{m+1} \langle u_i, v(x)\rangle^2
\]
which is a degree-$2$ SoS polynomial and we are done.
\end{proof}

We also use the following basic sum-of-squares proofs.
For further details, we refer the reader to a recent monograph~\cite{fleming2019semialgebraic}.

\begin{fact}[Operator norm bound]
\label{fact:operator_norm}
For a symmetric matrix $A \in \R^{d \times d}$ and a vector $v \in \mathbb{R}^d$,
\[
\sststile{2}{v} \bracks[\Big]{ v^\dagger A v \leq \norm{A}\norm{v}^2 }.
\]
\end{fact}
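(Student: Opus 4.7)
The plan is to reduce the claim to the fact that any globally nonnegative quadratic in $v$ admits a degree-2 sum-of-squares proof of nonnegativity, which is precisely~\cref{fact:nonnegative-quadratic}. Consider the quadratic polynomial $p(v) = \norm{A}\norm{v}^2 - v^\dagger A v$. If we can argue that $p(v) \geq 0$ for every $v \in \R^d$, then \cref{fact:nonnegative-quadratic} immediately yields $\sststile{2}{v} \{p(v) \geq 0\}$, which is exactly the claim.

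To see that $p$ is globally nonnegative, I would use symmetry of $A$ together with the definition of the operator norm. Since $A = A^\dagger$, its eigenvalues are real, and $\norm{A}$ (the operator norm) equals $\max_i \lvert \lambda_i(A)\rvert$. Consequently every eigenvalue of the symmetric matrix $M := \norm{A}\cdot I - A$ is of the form $\norm{A} - \lambda_i(A) \geq 0$, so $M$ is PSD. For any $v \in \R^d$ we then have $p(v) = v^\dagger M v \geq 0$, as required.

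For a slightly more self-contained argument that bypasses the appeal to \cref{fact:nonnegative-quadratic}, I could expose the SoS certificate directly: spectrally decompose the PSD matrix $M$ as $M = \sum_{i=1}^d u_i u_i^\dagger$ (e.g., via its eigendecomposition), giving
\begin{equation*}
\norm{A}\norm{v}^2 - v^\dagger A v \;=\; v^\dagger M v \;=\; \sum_{i=1}^d \langle u_i, v\rangle^2,
\end{equation*}
which is manifestly a sum of squares of linear forms in $v$ and hence a degree-$2$ SoS proof.

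There is no real obstacle here; the only thing to be careful about is the use of symmetry of $A$, without which $\norm{A}\cdot I - A$ need not be PSD (and the claim as stated need not hold with the operator norm on the right-hand side). Otherwise the argument is a one-line reduction to the existence of a square-root decomposition of a symmetric PSD matrix, exactly as in the proof of \cref{fact:nonnegative-quadratic}.
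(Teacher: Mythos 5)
Your proof is correct: writing $\norm{A}\norm{v}^2 - v^\dagger A v = v^\dagger(\norm{A}I - A)v$, observing that symmetry of $A$ makes $\norm{A}I - A$ PSD, and then either invoking \cref{fact:nonnegative-quadratic} or exhibiting the square-root decomposition directly gives exactly the degree-2 certificate claimed. The paper states this fact without proof (deferring to the cited monograph), and your argument is the standard one it implicitly relies on, including the correct observation that symmetry is the essential hypothesis.
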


\begin{fact}[SoS Cross Terms Identity]
\label{eqn:cross_term_id}
For all $i \in [n]$, let $a_i, b_i $, be non-negative indeterminates, then 
\[
\Set{ \forall i \in [n],  a_i \geq 0 , b_i \geq 0 } \sststile{2}{} \Set{\sum_{i \in [n]} a_i b_i \leq  \Paren{\sum_{i \in [n]} a_i} \Paren{\sum_{i \in [n]} b_i} }  
\]
\end{fact}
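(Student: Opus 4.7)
The plan is to expand the right-hand side of the claimed inequality, cancel the diagonal terms against $\sum_i a_i b_i$, and then write the remaining off-diagonal sum as a transparently nonnegative combination of the assumed constraints.

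First I would observe the elementary algebraic identity
\begin{equation*}
  \Paren{\sum_{i \in [n]} a_i}\Paren{\sum_{j \in [n]} b_j} - \sum_{i \in [n]} a_i b_i \;=\; \sum_{\substack{i, j \in [n] \\ i \neq j}} a_i b_j,
\end{equation*}
which holds in the polynomial ring and so certainly holds in the SoS proof system (it is a degree-$2$ identity in the indeterminates, derivable by the substitution/addition rules without using any of the constraints).

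Next, for each individual pair $(i,j)$ with $i \neq j$, I would invoke the multiplication rule from \cref{subsec:sos-framework} on the constraints $a_i \geq 0$ and $b_j \geq 0$ (each of which is a degree-$1$ polynomial inequality in the constraint system): this yields $\Set{a_i \geq 0, b_j \geq 0} \sststile{2}{} \Set{a_i b_j \geq 0}$. Formally, the SoS certificate here is the trivial one, $a_i b_j = 1^2 \cdot a_i \cdot b_j$, which fits the template $\sum_S p_S^2 \prod_{i \in S} f_i$ with $S = \{a_i \geq 0,\, b_j \geq 0\}$ and $p_S^2 = 1$; the total degree is $2$.

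Finally, I would apply the addition rule to aggregate these $n(n-1)$ inequalities, concluding $\Set{\forall i,\, a_i, b_i \geq 0} \sststile{2}{} \{\sum_{i \neq j} a_i b_j \geq 0\}$, and then combine with the opening identity to obtain the claim at degree $2$. There is no substantive obstacle here: the entire argument is a one-line certificate, and the only thing to be careful about is bookkeeping the degree of the resulting SoS proof, which is $2$ because each summand $a_i b_j$ is a product of two linear constraints with coefficient $1^2$.
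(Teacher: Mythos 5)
Your proof is correct. The paper states this fact without proof, and your argument — expanding $\Paren{\sum_i a_i}\Paren{\sum_j b_j} - \sum_i a_i b_i = \sum_{i \neq j} a_i b_j$ and certifying each off-diagonal term as $1^2 \cdot a_i \cdot b_j$, a degree-$2$ product of two constraint polynomials — is exactly the canonical degree-$2$ certificate the authors implicitly rely on.
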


\begin{fact}[Almost triangle inequality] \label{fact:sos-almost-triangle}
Let $f_1, f_2, \ldots, f_r$ be indeterminates. Then
\[
\sststile{2t}{f_1, f_2,\dots,f_r} \Set{ \parens[\Big]{\sum_{i\leq r} f_i}^{2t} \leq r^{2t-1} \parens[\Big]{\sum_{i =1}^r f_i^{2t}}}.
\]
\end{fact}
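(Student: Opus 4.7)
The claim is a standard power-mean inequality, and my plan is to establish it by induction on $t$, producing a degree-$2t$ SoS certificate at each stage. The base case $t = 1$ is Cauchy--Schwarz, which admits the explicit degree-2 witness
\begin{equation*}
r \sum_{i=1}^r f_i^2 \;-\; \Bigl(\sum_{i=1}^r f_i\Bigr)^2 \;=\; \tfrac{1}{2}\sum_{i,j=1}^r (f_i - f_j)^2 .
\end{equation*}

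For the inductive step I would first observe that SoS inequalities can be multiplied: if $0 \leq A \leq B$ and $0 \leq C \leq D$ both have SoS proofs with $A, B, C, D$ themselves SoS, then so does $AC \leq BD$, because $BD - AC = B(D - C) + C(B - A)$ is a sum of products of SoS polynomials, with degree at most the sum of the degrees of the original inequalities. Multiplying the inductive hypothesis $(\sum_i f_i)^{2t} \leq r^{2t-1}\sum_i f_i^{2t}$ (degree $2t$) by the base case $(\sum_j f_j)^2 \leq r \sum_j f_j^2$ (degree $2$) yields a degree-$(2t+2)$ SoS proof of
\begin{equation*}
\Bigl(\sum_{i=1}^r f_i\Bigr)^{2(t+1)} \;\leq\; r^{2t} \sum_{i,j=1}^r f_i^{2t} f_j^2 .
\end{equation*}
To close the induction, I would sum the pointwise weighted AM--GM inequality $f_i^{2t} f_j^2 \leq \tfrac{t}{t+1} f_i^{2(t+1)} + \tfrac{1}{t+1} f_j^{2(t+1)}$ over all $i, j \in [r]$, which collapses the right-hand side to $r \sum_i f_i^{2(t+1)}$ and produces the overall constant $r^{2(t+1)-1}$.

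The main obstacle is an explicit degree-$2(t+1)$ SoS certificate for this weighted AM--GM step. For this, I would use the polynomial identity (after substituting $u = f_i^2$, $v = f_j^2$)
\begin{equation*}
t\, u^{t+1} + v^{t+1} - (t+1)\, u^t v \;=\; (u - v)^2 \sum_{k=0}^{t-1}(t-k)\, u^{t-1-k}\, v^k ,
\end{equation*}
which can be checked directly by expanding the right-hand side and matching coefficients of $u^{t+1-k} v^k$. The crucial point is that the coefficients $(t-k)$ of the second factor are non-negative, and each monomial $u^{t-1-k} v^k$ becomes the square $(f_i^{t-1-k} f_j^k)^2$ after the substitution, while $(u - v)^2 = (f_i - f_j)^2 (f_i + f_j)^2$ is itself a square. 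The product therefore expands into an explicit sum of squares in $f_i, f_j$ of degree $2(t+1)$, certifying the AM--GM step and completing the induction.
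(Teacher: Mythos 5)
Your proof is correct. The paper itself does not prove this fact: it is listed among the ``basic sum-of-squares proofs'' and deferred to the monograph of Fleming--Kothari--Pitassi, so there is no in-paper argument to compare against; your induction is a valid self-contained certificate. The three ingredients all check out: (i) the base case identity $r\sum_i f_i^2 - (\sum_i f_i)^2 = \tfrac12\sum_{i,j}(f_i-f_j)^2$ is exact; (ii) the product step is legitimate because in $BD - AC = B(D-C) + C(B-A)$ both multipliers $B = r^{2t-1}\sum_i f_i^{2t}$ and $C = (\sum_j f_j)^2$ are themselves manifestly SoS (a point worth stating explicitly, since the multiplication rule needs the factors, not just the differences, to be SoS); and (iii) the weighted AM--GM identity $t\,u^{t+1} + v^{t+1} - (t+1)u^t v = (u-v)^2\sum_{k=0}^{t-1}(t-k)u^{t-1-k}v^k$ is correct (the coefficient of $u^{t+1-m}v^m$ telescopes to $0$ for $2 \le m \le t$), and after the substitution $u = f_i^2$, $v = f_j^2$ every factor is a square with non-negative coefficients, giving the required degree-$2(t+1)$ certificate. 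One could alternatively get the degree-$2t$ proof in a single shot from SoS H\"older (\cref{fact:sos-holder} with all $w_i = 1$) or by a power-mean/convexity argument on the matrix representation, but your route is fully explicit and elementary.
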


\begin{fact}[SoS Cauchy-Schwarz]
\label{fact:sos_cs}
It holds that
\[
\sststile{2}{x_1,\ldots,x_n,y_1,\ldots,y_n} \Set{\paren{\sum_{i=1}^n x_i y_i}^2 \leq \Paren{\sum_{i=1}^n x_i^2}\Paren{\sum_{i=1}^n y_i^2}} \,.
\]
\end{fact}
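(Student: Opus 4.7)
The plan is to invoke the classical Lagrange identity, which gives a direct sum-of-squares decomposition of the difference. Specifically, I would propose to establish the polynomial identity
\begin{equation*}
\Paren{\sum_{i=1}^n x_i^2}\Paren{\sum_{j=1}^n y_j^2} - \Paren{\sum_{i=1}^n x_i y_i}^2 = \frac{1}{2}\sum_{i,j \in [n]} (x_i y_j - x_j y_i)^2,
\end{equation*}
which, together with the observation that each $(x_i y_j - x_j y_i)^2$ is a square of a polynomial of degree $2$ in the $x$'s and $y$'s jointly, yields the desired sum-of-squares certificate.

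To verify the identity, I would simply expand both sides as formal polynomials in the indeterminates $x_1,\dots,x_n,y_1,\dots,y_n$. The left-hand side expands to $\sum_{i,j} x_i^2 y_j^2 - \sum_{i,j} x_i y_i x_j y_j$. For the right-hand side, expanding the square gives
\begin{equation*}
\tfrac{1}{2}\sum_{i,j}(x_i^2 y_j^2 + x_j^2 y_i^2 - 2 x_i x_j y_i y_j) = \sum_{i,j} x_i^2 y_j^2 - \sum_{i,j} x_i x_j y_i y_j,
\end{equation*}
where the first simplification uses the symmetry of the index swap $(i,j) \leftrightarrow (j,i)$ on the quadratic-squared terms. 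Comparing the two expansions establishes the identity.

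Since $\frac{1}{2}\sum_{i,j}(x_i y_j - x_j y_i)^2$ is an explicit sum of squares (each summand being the square of a degree-$2$ polynomial in the indeterminates), this exhibits the required SoS certificate witnessing that $\Paren{\sum_i x_i^2}\Paren{\sum_j y_j^2} - \Paren{\sum_i x_i y_i}^2 \geq 0$ as a polynomial inequality. There is no real obstacle here: the entire argument reduces to recognizing the Lagrange identity and performing a routine polynomial expansion to check it. The one thing to pay attention to is the degree accounting in the SoS notation used by the paper, to make sure the formal degree assigned to the proof matches the convention employed in the earlier facts (\cref{fact:operator_norm,fact:sos-almost-triangle}).
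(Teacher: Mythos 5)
Your proposal is correct and matches the paper's own proof, which also establishes the inequality via the Lagrange identity by writing the difference as $\sum_{i<j}(x_i y_j - x_j y_i)^2$ (your $\tfrac12\sum_{i,j}$ formulation is the same decomposition). The degree bookkeeping you flag is handled identically (loosely) in the paper, so there is nothing further to reconcile.
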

\begin{proof}
    This holds since
    \[
    \Paren{\sum_{i=1}^n x_i^2}\Paren{\sum_{i=1}^n y_i^2} - \paren{\sum_{i=1}^n x_i y_i}^2 = \sum_{i,j=1}^n x_i^2 y_j^2 - x_i y_i x_j y_j = \sum_{1 = i < j = n} \paren{x_i y_j - x_j y_i}^2 \,.
    \]
\end{proof}

\begin{fact}[SoS Hölder's inequality]\label{fact:sos-holder}
Let $w_1, \ldots w_n$ be indeterminates and let $f_1,\ldots f_n$ be polynomials of degree $d$ in the variables $x \in \R^m$. 
Let $k$ be a power of 2.  
Then
\[
\Set{w_i^2 = w_i, \forall i\in[n] } \sststile{2kd}{x,w} \Set{  \parens[\Big]{\frac{1}{n} \sum_{i = 1}^n w_i f_i}^{k} \leq \parens[\Big]{\frac{1}{n} \sum_{i = 1}^n w_i}^{k-1} \parens[\Big]{\frac{1}{n} \sum_{i = 1}^n f_i^k}}. 
\]
\end{fact}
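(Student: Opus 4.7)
The proof plan is induction on $t$ where $k = 2^t$, with \cref{fact:sos_cs} (SoS Cauchy--Schwarz) as the main engine. I will first establish the slightly stronger statement
\[
\Paren{\tfrac{1}{n}\sum_i w_i f_i}^k \leq \Paren{\tfrac{1}{n}\sum_i w_i}^{k-1} \Paren{\tfrac{1}{n}\sum_i w_i f_i^k},
\]
and then drop the weights on the right-hand side at the end.

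The workhorse is a doubling identity: substituting $x_i \mapsto w_i$ and $y_i \mapsto w_i f_i^p$ into \cref{fact:sos_cs} and simplifying both sides via the Boolean constraint $w_i^2 = w_i$ yields
\begin{equation*}
\Paren{\sum_i w_i f_i^p}^2 \;\leq\; \Paren{\sum_i w_i}\Paren{\sum_i w_i f_i^{2p}} \qquad (\star)
\end{equation*}
for every integer $p \geq 1$. Taking $p = 1$ gives the $k = 2$ base case directly.

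For the inductive step from $k/2$ to $k$ (where $k \geq 4$), I square the inductive hypothesis $A \leq B$ with $A = (\sum w_i f_i)^{k/2}$ and $B = (\sum w_i)^{k/2-1}(\sum w_i f_i^{k/2})$. Squaring an SoS inequality is valid via the factorization $B^2 - A^2 = (B-A)(B+A)$ whenever both $A \geq 0$ and $B \geq 0$ have SoS certificates. Here $A$ is itself a square since $k/2$ is a power of $2$, while $B$ factors into non-negative pieces: $\sum w_i = \sum w_i^2 \geq 0$, and each summand $w_i f_i^{k/2} = w_i \cdot (f_i^{k/4})^2$ is a product of non-negatives (using $w_i \geq 0$, which follows from $w_i = w_i^2$). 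After squaring, applying $(\star)$ with $p = k/2$ and multiplying by $(\sum w_i)^{k-2} \geq 0$ yields
\[
\Paren{\sum_i w_i f_i}^k \;\leq\; \Paren{\sum_i w_i}^{k-2}\Paren{\sum_i w_i f_i^{k/2}}^2 \;\leq\; \Paren{\sum_i w_i}^{k-1}\Paren{\sum_i w_i f_i^k}.
\]

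Finally, I drop the weights using the degree-$2$ SoS identity $(1-w_i)^2 = 1 - 2w_i + w_i^2 = 1 - w_i$, which certifies $1 - w_i \geq 0$; combined with $f_i^k \geq 0$ (a square, since $k$ is even), this gives $(1-w_i) f_i^k \geq 0$ in SoS, hence $\sum_i w_i f_i^k \leq \sum_i f_i^k$. Normalizing by $n^k$ completes the proof. The main obstacle is the squaring step inside the induction, which crucially relies on both endpoints of the inductive inequality being certifiably non-negative; this is exactly what the constraint $w_i^2 = w_i$ provides by supplying both $w_i \geq 0$ and $1 - w_i \geq 0$. Degree bookkeeping is then routine and fits within the stated $2kd$ whenever $d \geq 1$.
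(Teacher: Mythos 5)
Your proof is correct: the paper states \cref{fact:sos-holder} without proof (it is a standard fact), and your argument — iterated SoS Cauchy--Schwarz with the Booleanity constraint absorbing the squared weights, squaring the inductive hypothesis via the $(B-A)(B+A)$ factorization (which is exactly the paper's \cref{fact:sos-squaring}, noting $B+A \ge 0$ already follows from $A\ge 0$ and $A\le B$), and finally dropping the weights using $1-w_i=(1-w_i)^2$ — is the standard derivation. No gaps.
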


\begin{fact}
    \label{lem:square_root_sos}
    Let $a$ be an indeterminate and $C > 0$ be a constant.
    Then
    \[
        \Set{a^2 \leq C^2} \sststile{2}{a} \Set{a \leq C}
    \]
\end{fact}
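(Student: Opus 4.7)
The plan is to give an explicit degree-2 SoS decomposition of $C - a$ modulo the single constraint $C^2 - a^2 \ge 0$. Specifically, I would propose the algebraic identity
\begin{equation*}
C - a \;=\; \frac{1}{2C}\,(C^2 - a^2) \;+\; \frac{1}{2C}\,(C - a)^2,
\end{equation*}
which can be verified in one line by factoring $(C-a)$ out of the right-hand side: $(C-a)(C+a) + (C-a)^2 = (C-a)\bigl[(C+a)+(C-a)\bigr] = 2C(C-a)$, and then dividing by $2C$.

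Once the identity is established, reading it as an SoS certificate is immediate. Since $C > 0$ is a fixed positive constant by hypothesis, the scalar $\tfrac{1}{2C}$ is a positive real number. The term $\tfrac{1}{2C}(C-a)^2$ is therefore a sum-of-squares polynomial in $a$ of degree $2$, and the term $\tfrac{1}{2C}(C^2 - a^2)$ is a nonnegative scalar multiple of the single axiom $C^2 - a^2 \ge 0$, contributing degree $2$ as well (with the SoS multiplier being the constant $\tfrac{1}{2C}$, of degree $0$). Taking $S = \emptyset$ with $p_\emptyset^2 = \tfrac{1}{2C}(C-a)^2$ and $S = \{1\}$ with $p_{\{1\}}^2 = \tfrac{1}{2C}$ (here $f_1 = C^2 - a^2$) gives precisely the form required by the definition of a degree-2 SoS proof, so $\{a^2 \le C^2\} \sststile{2}{a} \{a \le C\}$.

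There is essentially no obstacle here beyond discovering the identity; the only mild subtlety is the use of $C > 0$, which is needed so that $\tfrac{1}{2C}$ is a legitimate positive scalar (an SoS constant). The degenerate case $C = 0$ is not covered by the identity, but it is also trivial: the hypothesis $a^2 \le 0$ already yields $a \le 0$ in any pseudo-distribution since $\pexpecf{}{a^2} = 0$ together with Cauchy--Schwarz forces $\pexpecf{}{a} = 0$. Since the statement of the fact assumes $C > 0$, the identity above suffices.
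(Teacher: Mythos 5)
Your proof is correct and is essentially the same as the paper's: the paper writes the chain $a = \bigl(\tfrac{a}{\sqrt{C}}\bigr)\sqrt{C} \leq \tfrac{1}{2}\bigl(\tfrac{a^2}{C} + C\bigr) \leq C$, and expanding those two inequalities yields exactly your identity $C - a = \tfrac{1}{2C}(C^2 - a^2) + \tfrac{1}{2C}(C-a)^2$. You have simply made the underlying SoS certificate explicit rather than presenting it as AM--GM followed by the axiom.
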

\begin{proof}
    \[
       \Set{a^2 \leq C^2} \sststile{2}{a} \Set{a = \Paren{\frac a {\sqrt{C}}} \sqrt{C} \leq \frac 1 2 \Paren{\frac {a^2} C + C} \leq C} \,.
    \]
\end{proof}
\begin{fact}
    \label{fact:cancellation_two_sos}
    Let $a$ be an indeterminate and $C > 0$ be a constant.
    Then
    \[
        \Set{a^4 \leq C a^2} \sststile{4}{a} \Set{a^2 \leq C} \,.
    \]
\end{fact}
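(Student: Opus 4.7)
The plan is to exhibit an explicit algebraic identity writing $C - a^2$ as the sum of an SoS polynomial and a positive multiple of $Ca^2 - a^4$, both of degree at most $4$ in $a$.

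The key observation is a completion-of-square trick. Expanding the obvious square $(2C - 2a^2)^2 = 4C^2 - 8Ca^2 + 4a^4$, we compute
\[
4C(C - a^2) \;=\; 4C^2 - 4Ca^2 \;=\; (2C - 2a^2)^2 + 4(Ca^2 - a^4).
\]
Since $C > 0$ is a positive constant, we may divide through by $4C$ to obtain
\[
C - a^2 \;=\; \frac{1}{4C}(2C - 2a^2)^2 \;+\; \frac{1}{C}(Ca^2 - a^4).
\]
The first term on the right-hand side is a square of a degree-$2$ polynomial in $a$, hence a degree-$4$ SoS in $a$; the second term is a positive scalar multiple of the hypothesis $Ca^2 - a^4 \geq 0$, also of degree $4$ in $a$. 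By the definition of SoS proofs, this identity witnesses
\[
\Set{a^4 \leq Ca^2} \;\sststile{4}{a}\; \Set{a^2 \leq C},
\]
as required.

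The only nontrivial step is finding the identity, which is straightforward by matching coefficients: one needs a square of the form $(\lambda C + \mu a^2)^2$ contributing the necessary $a^4$ term to cancel against the hypothesis, which pins down $\lambda = -\mu$ up to sign and then a single scalar match. Alternatively, the statement is a univariate polynomial inequality on the region $\{a^4 \leq Ca^2\}$ and thus admits a degree-$4$ SoS proof by Fact~\ref{fact:univariate-sos-proofs}, but the explicit identity above is more direct and makes the degree transparent.
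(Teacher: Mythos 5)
Your identity
\[
C - a^2 \;=\; \frac{1}{4C}(2C - 2a^2)^2 \;+\; \frac{1}{C}(Ca^2 - a^4)
\]
is correct and, after noting $\frac{1}{4C}(2C-2a^2)^2 = \frac{1}{C}(a^2-C)^2$, is exactly the identity that underlies the paper's proof. The paper reaches it via the AM--GM step $a^2 = \big(\tfrac{a^2}{\sqrt{C}}\big)\sqrt{C} \le \tfrac12\big(\tfrac{a^4}{C}+C\big)$ followed by substituting the hypothesis and rearranging; you have simply written out the resulting SoS certificate explicitly. Same approach, equivalent content.
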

\begin{proof}
    \[
       \Set{a^4 \leq C a^2} \sststile{4}{a}  \Set{a^2 =  \Paren{ \frac {a^2} {\sqrt{C}} } \sqrt{C} \leq \frac 1 2 \Paren{\frac {a^4} C + C} \leq \frac {a^2} 2 + \frac{C}{2}  } \,.
    \]
    The inequality follows by rearranging.
\end{proof}

\begin{fact}
\label{fact:cov-hack-tmp}
    Let $a, b$ be indeterminates. Then we have that
    \[ \sststile{}{a, b} \Set{a^3 b \leq a^4 + b^4}\,.\]
\end{fact}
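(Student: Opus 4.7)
The plan is to give an explicit degree-$4$ sum-of-squares decomposition of $a^4 + b^4 - a^3 b$. The basic idea is to use two AM-GM-type squares: the first, $(a^2 - ab)^2 = a^4 - 2a^3 b + a^2 b^2 \geq 0$, trades the unwanted cross term $a^3 b$ for the term $a^2 b^2$; the second, $(a^2 - b^2)^2 = a^4 - 2a^2 b^2 + b^4 \geq 0$, then absorbs $a^2 b^2$ into the pure fourth powers.

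Concretely, I would compute the combination $\tfrac{1}{2}(a^2 - ab)^2 + \tfrac{1}{4}(a^2 - b^2)^2$. Expanding gives
\begin{equation*}
\tfrac{1}{2}(a^4 - 2 a^3 b + a^2 b^2) + \tfrac{1}{4}(a^4 - 2 a^2 b^2 + b^4) = \tfrac{3}{4} a^4 - a^3 b + \tfrac{1}{4} b^4 \,.
\end{equation*}
Subtracting this from the target $a^4 + b^4 - a^3 b$ leaves $\tfrac{1}{4} a^4 + \tfrac{3}{4} b^4$, which is manifestly SoS as $\tfrac{1}{4}(a^2)^2 + \tfrac{3}{4}(b^2)^2$. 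Therefore
\begin{equation*}
a^4 + b^4 - a^3 b \;=\; \tfrac{1}{2}(a^2 - ab)^2 + \tfrac{1}{4}(a^2 - b^2)^2 + \tfrac{1}{4}(a^2)^2 + \tfrac{3}{4}(b^2)^2 \,,
\end{equation*}
which is a valid degree-$4$ SoS proof in $a, b$ and yields the claimed relation $\sststile{4}{a,b} \Set{a^3 b \leq a^4 + b^4}$.

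There is no real obstacle: the only step requiring thought is choosing the coefficients $\tfrac{1}{2}$ and $\tfrac{1}{4}$ in front of the two squares so that the $a^3 b$ term comes out with coefficient exactly $-1$ and so that the residual in $a^2 b^2$ cancels. Both choices are forced once we fix the combination to use. One could alternatively present the proof stepwise via the inference rules of Section~\ref{subsec:sos-framework}: first derive $\{2 a^3 b \leq a^4 + a^2 b^2\}$ from $(a^2 - ab)^2 \geq 0$, then derive $\{2 a^2 b^2 \leq a^4 + b^4\}$ from $(a^2 - b^2)^2 \geq 0$, and combine with the addition and multiplication rules to conclude $\{4 a^3 b \leq 3 a^4 + b^4 \leq 4 a^4 + 4 b^4\}$, but the single-line SoS identity above is more direct.
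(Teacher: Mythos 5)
Your proof is correct and rests on the same two squares the paper uses: the identity $\tfrac{1}{2}(a^2-ab)^2 + \tfrac{1}{4}(a^2-b^2)^2 + \tfrac{1}{4}(a^2)^2 + \tfrac{3}{4}(b^2)^2$ does expand to $a^4 - a^3b + b^4$, and your stepwise alternative at the end (applying $xy \leq \tfrac{1}{2}x^2 + \tfrac{1}{2}y^2$ first with $x=a^2, y=ab$ and then to $a^2b^2$) is exactly the paper's argument, which concludes $a^3b \leq \tfrac{3}{4}a^4 + \tfrac{1}{4}b^4$. The only difference is presentational: you package the chain as one explicit degree-$4$ SoS certificate.
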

\begin{proof}
    By applying the SoS inequality that $ab \leq \frac{1}{2}a^2 + \frac{1}{2} b^2$ twice we have that
    \[ \sststile{}{a, b} \Set{a^3 b \leq \frac{1}{2}a^4 + \frac{1}{2}a^2b^2 \leq \frac{3}{4}a^4 + \frac{1}{4}b^4}\,.\]
\end{proof}

\begin{fact}
\label{fact:sos-squaring}
    Let $a,b$ be indeterminates. Then we have that 
    \[ \Set{a \leq b, a\geq 0} \sststile{}{a,b} \Set{a^2 \leq b^2}\,.\]
\end{fact}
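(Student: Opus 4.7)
The plan is to derive $b^2 - a^2 \geq 0$ by a factoring identity that makes the two hypotheses visible on the right-hand side. The natural starting point is the polynomial identity $b^2 - a^2 = (b-a)(b+a)$. The factor $b+a$ is not itself a hypothesis, so I would rewrite it as $(b-a) + 2a$, which is a sum of the two nonnegative quantities available to us. Substituting and expanding yields the clean identity
\[
b^2 - a^2 \;=\; (b-a)^2 + 2a(b-a),
\]
verifiable by direct expansion. This is the entire structural content of the proof.

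The second step is to exhibit each summand on the right as SoS-derivable from $\{a \geq 0,\; b - a \geq 0\}$. The first term $(b-a)^2$ is a perfect square and hence trivially SoS of degree $2$, needing no hypothesis. For the second term I would apply the multiplication inference rule from the table of SoS inference rules in \cref{subsec:sos-background}: combining the degree-$1$ axioms $\{a \geq 0\}$ and $\{b-a \geq 0\}$ gives a degree-$2$ derivation of $\{a(b-a) \geq 0\}$, and scaling by the positive constant $2$ preserves nonnegativity. Applying the addition rule to the two pieces then gives $\sststile{2}{a,b} \{(b-a)^2 + 2a(b-a) \geq 0\}$, which by the identity above is exactly $\{b^2 - a^2 \geq 0\}$, as claimed.

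There is no real obstacle here; the entire argument is a degree-$2$ SoS proof and reduces to a single polynomial identity. The only point worth flagging is that the assumption $a \geq 0$ is genuinely needed: without it one could take $a = -2, b = -1$ to violate the conclusion, so any correct SoS proof must use the product of the two hypotheses, and the decomposition above does so in the minimal possible way.
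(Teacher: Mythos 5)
Your proof is correct and is essentially the paper's argument made explicit: the paper derives $b-a\geq 0$ and $b+a\geq 0$ and multiplies them, and your identity $b^2-a^2=(b-a)^2+2a(b-a)$ is exactly what that product becomes once $b+a$ is unfolded as $(b-a)+2a$. Both are degree-$2$ SoS proofs using the same product of hypotheses.
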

\begin{proof}
    Note that 
    \[ \Set{a \leq b, a\geq 0} \sststile{}{a,b} \Set{b-a \geq 0, b+a \geq 0}\,.\]
    The conclusion then follows direction from the fact that if there is an SoS proof that $f \geq 0, g \geq 0$ then there is also an SoS proof that $fg \geq 0$.
\end{proof}

\begin{lemma}[SoS Selector Lemma]
\torestate{
\label{lemma:subset_selection}
Let $B, a_1, \ldots, a_n \in \mathbb{R}$, and $k \in \mathbb{N}$ such that for all $T \subseteq [n]$ of size $k$ we have $\sum_{i \in T} a_i \leq B$, then also the following holds: 
\begin{equation*}
    \Set{\forall i \in [n] \colon 0 \leq z_i \leq 1, \sum_{i=1}^n z_i = k} \sststile{2}{z_1, \ldots, z_n} \Set{\sum_{i=1}^n z_i a_i \leq B } \,.
\end{equation*}
Further, if the $a_i$ are non-negative, then the same conclusion holds also if the constraints only include $\sum_{i=1}^n z_i \leq k$.
}
\end{lemma}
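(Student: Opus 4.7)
The claim says that an inequality valid on every \emph{integer} selection of $k$ out of $n$ elements lifts to a degree-$2$ SoS proof of validity on the fractional relaxation $\{0 \le z_i \le 1, \sum z_i = k\}$. The plan is to produce an explicit identity that writes $B - \sum_i z_i a_i$ as a non-negative combination of the hypothesis constants, the box constraints $z_i \geq 0$ and $1 - z_i \geq 0$, and (a multiple of) the equality constraint $\sum z_i = k$.

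WLOG reorder so that $a_1 \geq a_2 \geq \ldots \geq a_n$. The hypothesis applied to $T = \{1, \ldots, k\}$ yields $\sum_{i=1}^k a_i \leq B$, so the constant $c := B - \sum_{i=1}^k a_i$ is non-negative. The main algebraic step is to verify the identity
\begin{align*}
 B - \sum_{i=1}^n z_i a_i
 &= c + \sum_{i=1}^k (a_i - a_k)(1 - z_i) + \sum_{i=k+1}^n (a_k - a_i)\, z_i + a_k \Paren{ k - \sum_{i=1}^n z_i } \,.
\end{align*}
Once this is in hand, every summand on the right is manifestly a non-negative scalar times a hypothesis: $c \geq 0$ by the hypothesis on subsets of size $k$; $a_i - a_k \geq 0$ for $i \leq k$ and $a_k - a_i \geq 0$ for $i \geq k+1$ by the ordering, multiplied by the constraint polynomials $1 - z_i \geq 0$ and $z_i \geq 0$ respectively; and the last term is $a_k$ times the equality constraint $k - \sum z_i = 0$, which can be added freely with any scalar coefficient. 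All multipliers are scalars, so the derivation is a degree-$2$ SoS proof (in fact degree $1$ in $z$, wrapped in the degree-$2$ SoS formalism).

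For the strengthening to $\sum z_i \leq k$ under the additional assumption $a_i \geq 0$, the same identity is used, but now $a_k \geq 0$ (since the sorted sequence starts non-negative and ends non-negatively in this case), so the final term $a_k(k - \sum z_i)$ is a product of a non-negative scalar with the non-negative constraint $k - \sum z_i \geq 0$, which is a valid SoS combination. The verification of the algebraic identity itself is routine and splits cleanly: expand $(a_i - a_k)(1 - z_i)$ and $(a_k - a_i)z_i$, collect coefficients of each $z_i$, and use that the telescoped $a_k$-terms from $\sum_{i=1}^k (1 - z_i)$ and $-\sum_{i=k+1}^n z_i$ combine with the equality term to cancel every stray $a_k$. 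The only mild subtlety is tracking signs for the case when some $a_i$'s (and in particular $a_k$) are negative; this is precisely why the first statement needs the \emph{equality} $\sum z_i = k$, whereas the non-negative version can get away with the inequality.
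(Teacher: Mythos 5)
Your proof is correct and takes essentially the same approach as the paper: both sort the $a_i$, use $a_k$ as a threshold, and write $B - \sum_i z_i a_i$ as a non-negative combination of the box constraints and the counting constraint (your single explicit identity versus the paper's two-step bound using $a_k$ and $a_{k+1}$ is a cosmetic difference). The handling of the non-negative case via the sign of the $a_k(k - \sum_i z_i)$ term also matches the paper.
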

\begin{proof}
We focuse on the case when our constraints include $\sum_{i=1}^n z_i = k$.
The case for non-negative $a_i$ will follow.
Without loss of generality, assume that $a_1 \geq \ldots \geq a_n$.
Define $l = k - \sum_{i \leq k} z_i$.
First, note that our constraints imply at degree 2 that $\sum_{i \leq k} z_i a_i \leq \sum_{i \leq k} a_i - l \cdot a_k$ since
\begin{align*}
    \sum_{i \leq k} a_i - l \cdot a_k - \sum_{i \leq k} z_i a_i = \sum_{i \leq k} (a_i - a_k) - \sum_{i\leq k} z_i \cdot (a_i - a_k) = \sum_{i \leq k} (1-z_i) \cdot (a_i - a_k) \geq 0 \,,
\end{align*}
where we used that $z_i \leq 1$.
Second, the constraints also yield, again at degree 2, that $\sum_{i > k} z_i a_i \leq a_{k+1} l$ since using $a_i \leq a_{k+1}$ for $i \geq k + 1$ implies
\[
    a_{k+1} l - \sum_{i > k} z_i a_i \geq a_{k+1} \cdot \Paren{k - \sum_{i \leq k} z_i - \sum_{i > k} z_i} = 0 \,,
\]
where we also used that $z_i \geq 0$.
Note that when our constraints only include $\sum_{i=1}^n z_i \leq k$ but the $a_i$ are non-negative, the last term is still at least 0.
This is the only place where the proof needs to be adapted.

Together, this implies that
\[
  \Set{\forall i \in [n] \colon 0 \leq z_i \leq 1, \sum_{i=1}^n z_i \leq k} \sststile{2}{z_1, \ldots, z_n}  \Set{\sum_{i=1}^n z_i a_i \leq \sum_{i \leq k} a_i - l \cdot (a_k - a_{k+1}) \leq B} \,.
\]
\end{proof}

\begin{fact}[Cauchy-Schwarz for Pseudo-Expectations]
\label{fact:pe_cs}
    Let $\pE$ be a degree-$d$ pseudo-expectations and $p,q$ be two polynomials of degree at most $d/2$.
    Then it holds that $\paren{\pE p q}^2 \leq \pE p^2 \pE q^2$.
    In particular, setting $q \equiv 1$ it holds that $\paren{\pE p}^2 \leq \pE p^2$.
\end{fact}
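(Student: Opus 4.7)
The plan is to prove this by the standard discriminant argument, which is entirely analogous to the classical proof of Cauchy--Schwarz but relies only on the positivity axiom of pseudo-expectations applied to squared polynomials of degree at most $d/2$.

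Concretely, I would first observe that for any scalar $\lambda \in \R$, the polynomial $(p - \lambda q)^2 = p^2 - 2\lambda pq + \lambda^2 q^2$ has degree at most $d$ and is manifestly a square of a polynomial of degree at most $d/2$. By the definition of a degree-$d$ pseudo-distribution, the pseudo-expectation operator satisfies $\pE (p - \lambda q)^2 \geq 0$. Expanding using linearity of $\pE$, this rearranges to
\begin{equation*}
    \lambda^2 \cdot \pE q^2 \;-\; 2\lambda \cdot \pE pq \;+\; \pE p^2 \;\geq\; 0 \qquad \text{for every } \lambda \in \R.
\end{equation*}

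The second step is to extract the inequality from the non-negativity of this quadratic in $\lambda$. Note that $\pE q^2 \geq 0$ and $\pE p^2 \geq 0$ by the same positivity axiom applied to $q^2$ and $p^2$. If $\pE q^2 > 0$, I would minimize the quadratic in $\lambda$ by setting $\lambda = \pE pq / \pE q^2$; substituting yields $\pE p^2 - (\pE pq)^2 / \pE q^2 \geq 0$, which rearranges to the desired $(\pE pq)^2 \leq \pE p^2 \cdot \pE q^2$. If instead $\pE q^2 = 0$, then the displayed inequality becomes $\pE p^2 - 2\lambda \pE pq \geq 0$ for all $\lambda \in \R$, which forces $\pE pq = 0$, and the desired bound holds trivially as $0 \leq 0$.

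The second part of the fact follows immediately from the first by setting $q \equiv 1$, noting that $\pE 1 = 1$ by the normalization condition in the definition of a pseudo-distribution, so $(\pE p)^2 \leq \pE p^2 \cdot 1 = \pE p^2$. There is essentially no obstacle here: the argument is one line of algebra once one notes that $(p - \lambda q)^2$ is a valid test polynomial of degree at most $d$, and the only mild care needed is the degenerate case $\pE q^2 = 0$, which is handled as above.
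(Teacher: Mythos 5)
Your proof is correct, but it takes a different route from the paper's. You use the classical discriminant argument: expand $\pE (p - \lambda q)^2 \geq 0$ as a quadratic in $\lambda$, optimize at $\lambda = \pE pq / \pE q^2$, and handle the degenerate case $\pE q^2 = 0$ separately via the linear term. The paper instead passes to the moment matrix $M$ of $\pE$, which is positive semidefinite, writes $\pE pq = \Iprod{M^{1/2}\Vec{p}, M^{1/2}\Vec{q}}$ using a symmetric square root, and invokes the standard Cauchy--Schwarz inequality for vectors. The two arguments are equally rigorous; yours is slightly more elementary in that it needs only the positivity axiom applied to the single test polynomial $p - \lambda q$ and no linear-algebraic machinery, while the paper's is shorter given the moment-matrix formalism (already set up elsewhere in the preliminaries) and makes transparent that $\,(f,g) \mapsto \pE fg\,$ is a PSD bilinear form, so any inner-product inequality transfers wholesale. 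Your handling of the edge case $\pE q^2 = 0$ is a point of care the paper's proof sidesteps automatically, since the vector Cauchy--Schwarz inequality has no such degeneracy.
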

\begin{proof}
Let $M$ be the moment matrix associated with $\pE$ and $\Vec{p},\Vec{q}$ the vectors containing the coefficients of $p$ and $q$, respectively.
Since $M$ is positive semi-definite, there exists a symmetric matrix $M^{1/2}$ such that $M^{1/2} M^{1/2} = M$.
It follows by the standard Cauchy-Schwarz Inequality that
\[
\paren{\pE p q}^2  = \Iprod{\Vec{p}, M\Vec{q}}^2 =\Iprod{M^{1/2}\Vec{p},M^{1/2}\Vec{q}}^2 \leq \Norm{M^{1/2}\Vec{p}}_2^2\Norm{M^{1/2}\Vec{q}}_2^2 = \pE p^2 \pE q^2 \,.
\]
\end{proof}

We will also need the following definition for the reduction between robustness and privacy.
\begin{definition}[Approximately Satisfying Linear Operator]
    \label{def:tau_relaxed_system}
    Let $n \in \N$ and $T \in [n]$.
    Let $\calA$ be a system of polynomial inequalities in variables $w = (w_1, \ldots, w_n)$ and $z$ (potentially many) such that
    \[
        \calA_T = \Set{q_1(w,z) \geq 0, \ldots, q_m(w,z) \geq 0} \cup \Set{\sum_{i=1}^n w_i \geq n - T} \,.
    \]
    Let $\tau > 0$.
    We say that a linear operator $\calL$ $\tau$-approximately satisfies $\calA_T$ at degree $D$ if the following hold
    \begin{enumerate}
        \item $\calL 1 = 1$.
        \item For all polynomials $p$ such that $\mathrm{deg}(p^2) \leq D$ and $\norm{\calR(p)}_2 \leq 1$ (where $\calR(p)$ is the vector representation of the coefficients of $p$) it holds that $\calL p^2 \geq - \tau T$.
        \item For all $i  = 1, \ldots, m$ and polynomials $p$ such that $\mathrm{deg}(p^2 \cdot q_i) \leq D$ and $\norm{\calR(p)}_2 \leq 1$ (where $\calR(p)$ is the vector representation of the coefficients of $p$) it holds that $\calL p^2 q_i \geq - \tau T$.
        \item  For every polynomial $p$ such that $\mathrm{deg}(p^2 \cdot (\sum_{i=1}^n w_i -n + T))\leq D$ and $\norm{\calR(p)}_2 \leq 1$ (where $\calR(p)$ is the vector representation of the coefficients of $p$) it holds that $\calL (\sum_{i=1}^n) w_i -n + T)p^2 \geq - 5 \tau T n$.
    \end{enumerate}
\end{definition}

\subsection{Robustness to Privacy}

We apply the following theorem in order to transform robust algorithms into private ones.

\begin{theorem}[\cite{Hopkins2023Robustness}]
\label{thm:pure_dp_reduction}
    Let $0 < \eta, r < 1 < R$ be fixed parameters.
    Suppose we have a score function $\cS(\theta, \mathcal{Y}) \in [0, n]$ that takes as input a dataset $\mathcal{Y} = \{y_1, \dots, y_n\}$ and a parameter $\theta \in \Theta \subset \mathbb{B}(R)^d$ (where $\Theta$ is convex and contained in a ball of radius $R$), with the following properties:
\begin{itemize}
    \item (Bounded Sensitivity) For any two adjacent datasets $\mathcal{Y}, \mathcal{Y}'$ and any $\theta \in \Theta$, $|\cS(\theta, \mathcal{Y})-\cS(\theta, \mathcal{Y}')| \le 1.$
    \item (Quasi-Convexity) For any fixed dataset $\mathcal{Y}$, any $\theta, \theta' \in \Theta$, and any $0 \le \lambda \le 1$, we have that $\cS(\lambda \theta + (1-\lambda) \theta', \mathcal{Y}) \le \max(\cS(\theta, \mathcal{Y}), \cS(\theta', \mathcal{Y}))$.
    \item (Efficiently Computable) For any given $\theta \in \Theta$ and dataset $\mathcal{Y}$, we can compute $\cS(\theta, \mathcal{Y})$ up to error $\gamma$ in $\poly(n, d, \log \frac{R}{r}, \log \gamma^{-1})$ time for any $\gamma > 0$.
    \item (Robust algorithm finds low-scoring point) For a given dataset $\mathcal{Y}$, let $T = \min_{\theta_0 \in \Theta} \cS(\theta_0, \mathcal{Y})$. Then, we can find some point $\theta$ such that for all $\theta'$ within distance $r$ of $\theta$, $\cS(\theta', \mathcal{Y}) \le T+1$, in time $\poly(n, d, \log \frac{R}{r})$.
    \item (Volume) For any given dataset $\mathcal{Y}$ and $\eta' \ge \eta$, let $V_{\eta'}(\mathcal{Y})$ represent the $d$-dimensional volume of points $\theta \in \Theta \subset \R^d$ with score at most $\eta' n$. (Note that $V_1(\mathcal{Y})$ is the full volume of $\Theta$).
\end{itemize}
    Then, 
    we have a pure $\eps$-DP algorithm $\mathcal{A}$ on datasets of size $n$, that runs in $\poly(n, d, \log \frac{R}{r})$ time, with the following property. For any dataset $\mathcal{Y}$, if there exists $\theta$ with $\cS(\theta, \mathcal{Y}) \le \eta n$ and if 
    \[n \ge \Omega\left(\max\limits_{\eta': \eta \le \eta' \le 1}\frac{\log(V_{\eta'}(\mathcal{Y})/V_{\eta}(\mathcal{Y})) + \log (1/(\beta \cdot \eta))}{\eps \cdot \eta'}\right)\,,\]
    then $\mathcal{A}(\mathcal{Y})$ outputs some $\theta \in \Theta$ of score at most $2\eta n$ with probability $1-\beta$.
\end{theorem}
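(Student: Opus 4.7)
The plan is to implement the standard exponential mechanism with the supplied score function: sample $\theta \in \Theta$ from the density proportional to $\exp(-\tfrac{\eps}{2}\cS(\theta,\mathcal{Y}))$. Pure $\eps$-differential privacy is then immediate from the bounded-sensitivity hypothesis via the textbook exponential-mechanism analysis: swapping one sample changes $\cS(\theta,\mathcal{Y})$ pointwise by at most one, so both the unnormalized density and the partition function change by a factor of at most $e^{\eps/2}$, yielding an $e^{\eps}$ ratio on output probabilities.

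For utility, I would reduce to a volume-ratio calculation. The normalizer of the exponential mechanism is at least $V_{\eta}(\mathcal{Y}) \cdot e^{-\eps \eta n / 2}$, since the sub-level set $\Set{\cS \leq \eta n}$ is nonempty by hypothesis and contributes this much mass. The mass on $\Set{\cS(\theta,\mathcal{Y}) > 2\eta n}$ is at most $\sum_{t \geq 2\eta n} V_{t/n}(\mathcal{Y}) \cdot e^{-\eps t / 2}$, partitioned by integer score level. Writing $\eta' = t/n$, the ratio of the two is dominated by $\sum_{\eta' \geq 2\eta} \exp\Paren{\log\Paren{V_{\eta'}(\mathcal{Y})/V_\eta(\mathcal{Y})} - \eps (\eta' - \eta) n / 2}$, and the hypothesis on $n$ forces each such term to be at most $O(\beta \eta)$; summing the $O(1/\eta)$ contributing levels bounds the failure probability by $\beta$, matching the conclusion.

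The main challenge is making the sampler run in time $\poly(n, d, \log(R/r))$. I would exploit quasi-convexity: each sub-level set $C_t = \Set{\theta : \cS(\theta,\mathcal{Y}) \leq t}$ is convex, and the ``efficiently computable'' hypothesis provides a weak membership/separation oracle for $C_t$ by approximately evaluating $\cS$. The ``robust algorithm finds a low-scoring point'' hypothesis furnishes an interior $r$-ball inside $C_{\eta n + 1}$, supplying the inner-ball containment needed to instantiate standard convex-body algorithms. I would then discretize along the integer score levels $t \in \Set{0, 1, \ldots, n}$, approximate the shell volumes $\mathrm{Vol}(C_{t+1} \setminus C_t)$ via convex-body volume estimation (Lov\'asz--Vempala style), sample a level $t$ with probability proportional to $\mathrm{Vol}(C_{t+1} \setminus C_t) \cdot e^{-\eps t / 2}$, and finally sample uniformly from that shell by log-concave sampling.

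The most delicate step, and the principal obstacle, is preserving \emph{pure} DP through these approximate subroutines: log-concave sampling and volume estimation produce distributions only within TV distance $\gamma$ of their targets, which naively yields only approximate DP. Following Hopkins--Kamath--Majid--Narayanan, the effective output distribution can be re-interpreted as an exponential mechanism against a slightly perturbed but still $1$-sensitive score, so pure DP is preserved exactly, while the sampler error $\gamma$ is absorbed into the utility $\beta$-term by taking $\gamma = 2^{-\Omega(n)}$ (affordable because convex-body samplers run in time $\mathrm{polylog}(1/\gamma)$). The remaining work---choosing a sufficiently fine discretization of score levels, and verifying that the separation-oracle error from the ``efficiently computable'' hypothesis composes with the sampler error without blowing up constants---is routine bookkeeping of approximation parameters.
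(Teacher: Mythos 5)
This theorem is imported from \cite{Hopkins2023Robustness} and the paper gives no proof of it, so there is nothing internal to compare against; your proposal correctly reconstructs the argument of that cited work — exponential mechanism with privacy from bounded sensitivity, utility from the level-set volume-ratio calculation, efficient sampling via quasi-convexity of sublevel sets plus Lov\'asz--Vempala-style volume estimation seeded by the inner $r$-ball, and the reinterpretation of the approximate sampler as an exact exponential mechanism against a perturbed $1$-sensitive score to preserve pure DP. The only soft spot is the summation over score levels in the utility bound (there are up to $n$ integer levels, not $O(1/\eta)$, so one should either telescope via volume increments or use the geometric decay of $e^{-\eps t/2}$ across blocks of width $\eta n$), but this is exactly the routine bookkeeping you flag and is handled the same way in the original proof.
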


We will also need the following theorem for transforming robust algorithms into algorithms that satisfy approximate differential privacy.

\begin{theorem}
\label{thm:approx-dp-reduction}
    Let $0 < \eta < 0.1$ and $r< 1 <R$ be fixed parameters. Suppose we have a score function $\calS(\theta, \calY) \in [0, \infty)$ that takes as input a dataset $\calY = \{y_1, \ldots , y_n\}$ and a parameter $\theta \in \Theta \subset \mathbb{R}^d$ (where $\Theta$ is convex and contained in a ball of radius $R$), with the same properties as in~\cref{thm:pure_dp_reduction}. In addition, fix some parameter $\eta^* \in [10\eta, 1]$. Suppose that $n \geq \Omega{\left(\frac{\log (1/\delta) + \log(V_{\eta^*}(\calY)/V_{0.8\eta^*}(\calY))}{\eps \cdot \eta^*}\right)}$ for all $\calY$ suc that there exists $\theta$ with $\calS(\theta, \calY) \leq 0.7\eta^*n$. Then, we have that there exists an $(\eps, \delta)$-DP algorithm $\calA$, that runs in $\poly(n, d, \log (\tfrac R r))$ time, such that for any dataset $\calY$, if there exists $\theta$ with $\calS(\theta, \calY) \leq \eta n$ and if $n \geq \Omega \left(\max_{\eta':\eta \leq \eta'\leq \eta^*} \frac{\log (V_{\eta'}(\calY)/V_{\eta}(\calY)) + \log (1/(\beta \cdot \eta))}{\eps \cdot \eta}\right)$
\end{theorem}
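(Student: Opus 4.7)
The plan is to follow the template of \cref{thm:pure_dp_reduction}, modifying the exponential mechanism to exploit the slack afforded by approximate differential privacy. The gain over the pure DP version is that we only need to control the volume ratio $V_{\eta^*}/V_{0.8\eta^*}$ rather than the ratio $V_{\eta'}/V_\eta$ over all levels $\eta' \in [\eta, 1]$, at the cost of an additive $\log(1/\delta)/(\eps \eta^*)$ term.

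The key modification is to truncate the score. Define
\[
\widetilde{\calS}(\theta, \calY) := \min\bigl(\calS(\theta, \calY),\, \eta^* n\bigr).
\]
This truncated score inherits bounded sensitivity, quasi-convexity, and efficient computability from $\calS$, since taking a minimum with a constant preserves all three properties, and it agrees with $\calS$ on the sublevel set $\{\calS \leq \eta^* n\}$, which is where the exponential mechanism's weighting is meaningful. I would then wrap the sampler in a Propose-Test-Release step: use the robust algorithm of the hypothesis to approximately compute $T := \min_\theta \calS(\theta, \calY)$, release $\widehat{T} = \widetilde{T} + \mathrm{Lap}(2/\eps)$, and abort with $\bot$ if $\widehat{T} > 0.8 \eta^* n$. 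On datasets satisfying $\min_\theta \calS(\theta, \calY) \leq \eta n \leq 0.1 \eta^* n$, the test passes except with probability $\delta/2$; on all other datasets the mechanism outputs $\bot$, which is harmless and absorbs the $\delta$ in the privacy guarantee while also ensuring that the hypothesized sample complexity bound applies, since it only needs to hold for $\calY$ with $\min_\theta \calS(\theta, \calY) \leq 0.7 \eta^* n$.

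Conditioned on the test passing, I would sample from the density $\mu(\theta) \propto \exp(-\eps \widetilde{\calS}(\theta, \calY)/4)$ using the same log-concave sampler built from the efficient separation oracle for sublevel sets of $\calS$ as in the pure DP case. Privacy composes: the PTR step contributes $(\eps/2, \delta)$-DP by stability of the Laplace test combined with sensitivity $1$ of $T$, and the sampler contributes $(\eps/2, 0)$-DP by the standard exponential mechanism analysis applied to $\widetilde{\calS}$. For utility, the mass on $\{\calS \leq 2 \eta n\}$ is at least $V_\eta \cdot e^{-\eps \eta n / 2}$, while the bad mass splits into the contribution from $\{2\eta n < \calS \leq \eta^* n\}$, which is at most $V_{\eta^*} \cdot e^{-\eps \eta n}$, and the tail mass $\{\calS > \eta^* n\}$, which truncation caps at $V_\Theta \cdot e^{-\eps \eta^* n/4}$. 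The first is controlled by the $\max_{\eta' \in [\eta,\eta^*]} \log(V_{\eta'}/V_\eta)/(\eps \eta')$ term as in the pure DP argument, and the second by the PTR-guaranteed $\log(V_{\eta^*}/V_{0.8\eta^*})/(\eps \eta^*) + \log(1/\delta)/(\eps \eta^*)$ term.

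The \textbf{main obstacle} will be the precise bookkeeping: allocating privacy and utility losses across the PTR step and the sampler so that the final dependence is exactly $\tfrac{\log(1/\delta) + \log(V_{\eta^*}/V_{0.8\eta^*})}{\eps \eta^*} + \max_{\eta'} \tfrac{\log(V_{\eta'}/V_\eta)}{\eps \eta'}$ without incurring spurious constants from union bounds or the truncation boundary. One should also verify that quasi-convexity and the separation oracle pass through truncation (both are immediate, since $\min$ with a constant preserves quasi-convexity and the sublevel sets of $\widetilde{\calS}$ and $\calS$ coincide below $\eta^* n$), and that the robust algorithm's approximation of $T$ is accurate enough for the Laplace test. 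These steps are routine once the framework is set up, and I would model them after the proof of \cref{thm:pure_dp_reduction}.
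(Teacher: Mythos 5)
There is a genuine gap in the utility analysis, and it stems from truncating the score while still sampling over all of $\Theta$. With $\widetilde{\calS} = \min(\calS, \eta^* n)$ and density $\propto \exp(-\eps\widetilde{\calS}/4)$ on $\Theta$, every point with $\calS > \eta^* n$ still receives weight $e^{-\eps\eta^* n/4}$, so the tail mass is bounded only by $\mathrm{vol}(\Theta)\cdot e^{-\eps\eta^* n/4}$, not by anything involving $V_{\eta^*}$. Comparing this to the good mass $V_\eta\cdot e^{-\eps\eta n/2}$ forces $n \gtrsim \log(\mathrm{vol}(\Theta)/V_\eta)/(\eps\eta^*)$, which in the applications is the packing-type term $d\log(R\sqrt{\covscale})/\eps$ — exactly the term the approximate-DP reduction exists to remove (compare \cref{thm:main_private_regression} with \cref{thm:approx_dp_regression_full}). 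Your claim that this tail is "controlled by the PTR-guaranteed $\log(V_{\eta^*}/V_{0.8\eta^*})/(\eps\eta^*)$ term" does not follow: $V_{\eta^*}$ can be exponentially smaller than $\mathrm{vol}(\Theta)$, and the PTR test on $\min_\theta\calS$ says nothing about the volume of high-scoring points.

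The repair is to \emph{restrict the sampling domain} to the data-dependent sublevel set $\{\theta : \calS(\theta,\calY)\leq \eta^* n\}$ rather than truncate the score; then the tail mass is identically zero and no $\mathrm{vol}(\Theta)$ term appears. The price is that the mechanism is no longer pure DP, because the domain (hence the normalizing constant) changes with the dataset. This is where $\delta$ and the ratio $V_{\eta^*}/V_{0.8\eta^*}$ actually enter: for neighboring datasets the sublevel sets differ only on an annulus of scores near $\eta^* n$, and the mass of that annulus relative to the total is at most roughly $\tfrac{V_{\eta^*}}{V_{0.8\eta^*}}\, e^{-\Omega(\eps\eta^* n)}$, which the hypothesis $n\geq\Omega\bigl(\tfrac{\log(1/\delta)+\log(V_{\eta^*}/V_{0.8\eta^*})}{\eps\eta^*}\bigr)$ makes at most $\delta$. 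In other words, the $V_{\eta^*}/V_{0.8\eta^*}$ condition belongs to the \emph{privacy} argument (bounding the additive failure of the likelihood-ratio bound at the boundary), not to the utility/PTR side as your proposal places it. Your privacy bookkeeping for the Laplace test is also slightly off — releasing a noised sensitivity-$1$ statistic is pure $\eps/2$-DP; the $\delta$ does not come from "stability of the Laplace test" — but that is secondary to the domain-restriction issue above.
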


\section{Private and Robust Covariance-Aware Regression}
\label{sec:robust_regression_oneshot}

In this section, we will present a oneshot algorithm, i.e., that outputs an estimator for the regression vector without computing an estiamte for the unknown covariance first, for robust regression with (nearly) optimal error (in terms of the corruptions level).
In~\cref{sec:private_regression} we will use this algorithm and~\cref{thm:pure_dp_reduction} to deduce our main theorem for private regression.
The main private regression theorem we obtain is as follows: 
\begin{theorem}[Sample-Optimal Private Regression, Full Version of~\cref{thm:pure-dp-regression-informal}]
\torestate{
\label{thm:main_private_regression}
Let $\theta \in \R^d$ such that $\norm{\theta} \leq R$ and $\Sigma$ such that $\Sigma \preceq \covscale I_d$.
Let $0 <\eta$ be less than a sufficiently small constant and let $0 < \alpha, \beta, \eps$ and $\alpha < 1$.
Let $\cX = \Set{(x_1, y_1), \ldots, (x_n,y_n)}$ be $n \geq n_0$ samples following~\cref{model:robust_regression} (robust regression model) with corruption level $\eta$ and optimal hyperplane $\theta$.
There exists an  $(\eps, 0)$-differentially private algorithm that, given $\eta, \alpha, \eps$, and $\cX$, runs in time $\poly(n,\log L, \log R)$ and with probability at least $1-\beta$ outputs an estimate $\thetahat$ satisfying
    \[ \Norm{\Sigma^{1/2}\left(\thetahat - \theta\right)} \leq \bigO{ \alpha } \,,
    \]
    whenever 
    $$n_0 = \tilde{\Omega}\Paren{ \frac{d^2 + \log^2(1/\beta)}{\alpha^2} + \frac{d + \log(1/\beta)}{\alpha \eps} + \frac{d \log(R) + d \log(\covscale)}{\eps} }, $$
and $\alpha \geq \Omega( \eta \log(1/\eta)) $.
}
\end{theorem}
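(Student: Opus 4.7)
The plan is to instantiate the pure-DP reduction of \cref{thm:pure_dp_reduction} with a geometry-aware SoS score function built on top of our single-shot robust regression estimator. The score of a candidate $\tilde\theta$ measures the smallest number $t$ of sample changes after which there is a constant-degree pseudo-expectation $\pE$ satisfying the combined system $\calA_{t/n} \cup \calB$ (the hypercontractivity, gradient, and indicator constraints from \cref{eqn:cov-constraint-intro} extended by the regression gradient condition, and the auxiliary inverse-matrix constraints $\calB$ introducing an indeterminate $Q$ that plays the role of $(\Sigma')^{-1}$) together with the PSD closeness constraint
\[
\alpha^2 \, \pE[Q] \;-\; \bigl(\pE[\theta'] - \tilde\theta\bigr)\bigl(\pE[\theta'] - \tilde\theta\bigr)^\top \;\succeq\; 0.
\]
Given such a score we then verify the five hypotheses of \cref{thm:pure_dp_reduction} and plug in bounds on volumes of low-scoring regions to derive the stated sample complexity.

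First I would verify \textbf{bounded sensitivity} of this score: changing one sample only requires modifying at most one $w_i, x_i'$ pair, so any feasible $\pE$ for $\calY$ can be transformed into a feasible $\pE$ for the neighbouring $\calY'$ at the cost of one extra unit in $\sum_i w_i$. Next I would prove \textbf{quasi-convexity} exactly as sketched in \cref{eqn:quasi-convexity-intro}: because $\pE[Q]$ has been decoupled from $\pE[\theta']$, convex combinations of feasible pseudo-expectations remain feasible, and the PSD closeness constraint is preserved under convex combinations (the outer product is jointly convex in $\tilde\theta$ once $\pE[Q]$ is linearized). \textbf{Efficient computability} reduces to separating a pseudo-expectation from the constraint set; the only nonlinear constraint is the PSD closeness one, for which I would invoke the tangent-hyperplane argument indicated in the overview (formalized in \cref{lem:closeness-separation-oracle}) to produce a separating hyperplane in polynomial time, combining it with the standard SoS separation oracle of \cref{fact:sos-separation-efficient}.

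For the \textbf{low-scoring-point} property I would use our single-shot robust SoS regression estimator (\cref{lem:main_sos_proof_regression} together with the $Q$ inequality \cref{eqn:q-closeness-intro} proved in \cref{lem:sos_inverse_covariance}): on an $\eta$-good sample (in the sense of \cref{def:eps_goodness,def:higher_order_goodness}), the true indicators for the uncorrupted samples together with $\theta'=\theta$ and $Q = \Sigma^{-1}$ give a genuine distribution satisfying the system with $t = \eta n$, and the pseudo-expectation bound $\Iprod{u,\pE[\theta']-\theta}^2 \le O(\alpha^2)\, u^\top \Sigma^{-1} u$ implies that any $\tilde\theta$ within the $\Sigma^{1/2}$-ball of radius $O(\alpha)$ around $\pE[\theta']$ also satisfies the closeness constraint via \cref{eqn:q-closeness-intro}. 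This gives a ball of low-scoring points around which the score is at most $T+1$, as required. The accuracy computation then follows exactly \cref{eqn:utility-intro}.

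The last and \textbf{main obstacle} is the \textbf{volume bound}. The ratio $V_{\eta'}(\calY)/V_\eta(\calY)$ must be controlled by a quantity that gives the promised $\tilde O\bigl(\tfrac{d+\log(1/\beta)}{\alpha\eps} + \tfrac{d\log(R\sqrt\covscale)}{\eps}\bigr)$ term; since low-scoring $\tilde\theta$ live inside a $\Sigma^{1/2}$-ball around $\pE[\theta']$ and $\Theta$ is contained in a Euclidean ball of radius $R$ (with $\Sigma \preceq \covscale I$), I would upper-bound the high-volume region by the Euclidean ball of radius $R$ and lower-bound the low-volume region by the $\Sigma^{1/2}$-ball of radius $O(\alpha)$, giving $\log(V_1/V_\eta) \lesssim d\log(R\sqrt\covscale/\alpha)$; while the intermediate ratios $V_{\eta'}/V_\eta$ scale with $\tilde O(d/\eta')$ in the covariance-aware metric, which after plugging into $\tfrac{1}{\eps\eta'}\cdot\log(V_{\eta'}/V_\eta)$ and taking the worst case over $\eta' \in [\eta,1]$ yields the two claimed terms. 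To control the intermediate volumes in the $\Sigma^{1/2}$ geometry one must again appeal to the $Q \approx \Sigma^{-1}$ inequality, which is why the auxiliary system $\calB$ is crucial. For the approximate-DP statement I would repeat the argument with \cref{thm:approx-dp-reduction} in place of \cref{thm:pure_dp_reduction}, replacing the $d\log(R\sqrt\covscale)/\eps$ term by $\log(1/\delta)/\eps$ and dropping the upper bound on $\Sigma$.
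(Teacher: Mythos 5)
Your proposal is correct and follows essentially the same route as the paper: the same geometry-aware score built from $\calA_{t/n}$ plus the auxiliary inverse system and the PSD closeness constraint, the same five-property verification feeding into \cref{thm:pure_dp_reduction}, and the same volume/accuracy analysis via \cref{lem:main_sos_proof_regression,lem:sos_inverse_covariance}. The only cosmetic differences are that the paper instantiates the low-score witness with $\theta'=\thetals$ and $Q$ equal to the inverse of the empirical covariance (rather than $\theta$ and $\Sigma^{-1}$), and it additionally records the well-definedness of the score (every candidate has score at most $n$) and the reduction to invertible $\Sigma$ by adding small Gaussian noise to the covariates.
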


As discussed earlier, all terms in the sample-complexity are (nearly) optimal up to under either SQ or information-theoretic lower bounds, except for the $\log^2(1/\beta)$ term which could potentially be improved to $\log(1/\beta)$.
Note that our error guarantees imply small generalization error as stated in~\cref{thm:pure-dp-regression-informal}.
Indeed, this follows since
\begin{align*}
    \E \Brac{(y- \Iprod{\hat{\theta}, x})^2} &= \E \Brac{(y- \Iprod{\hat{\theta} - \theta + \theta, x})^2} \\
    &= \E\Brac{(y - \Iprod{\theta, x})^2} + \E\Brac{\Iprod{\hat{\theta} - \theta, x}^2} + 2\E\Brac{(y - \Iprod{\theta, x})\Iprod{\hat{\theta} - \theta, x}} \\
    &= \E\Brac{(y - \Iprod{\theta, x})^2} + \Norm{\Sigma^{1/2}(\hat{\theta} - \theta)}^2 \,.
\end{align*}

Along the way, we show  the following theorem about robust regression
\begin{theorem}[Sample-Optimal Robust Regression]    \label{thm:main_robust_regression}
    Let $\theta \in \R^d$ and $0 <\eta$ be less than a sufficiently small constant.
    Let $0 < \alpha, \beta$.
Let $\cX = \Set{(x_1, y_1), \ldots, (x_n,y_n)}$ be $n \geq n_0$ samples following~\cref{model:robust_regression} (robust regression model) with corruption level $\eta$ and optimal hyperplane $\theta$.
There exists an algorithm that, given $\eta, \alpha, \eps$, and $\cX$, runs in time $n^{\mathcal{O}(1)}$ and with probability at least $1-\beta$ outputs an estimate $\thetahat$ satisfying
    \[
        \Norm{\Sigma^{1/2}\left(\thetahat - \theta\right)} \leq \bigO{\eta \log(1/\eta)} \,,
    \]
    whenever $n_0 = \tilde{\Omega}\Paren{ \frac{d^2 + \log^2(1/\beta)}{\eta^2}}$.
\end{theorem}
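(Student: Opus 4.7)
The plan is to execute the single-shot sum-of-squares strategy outlined in Section~\ref{sec:tech-overview}: set up a constant-degree SoS program encoding a hypercontractive, gradient-consistent subsample, round by taking the pseudo-expectation of the regression indeterminate $\theta'$, and certify closeness to $\theta$ one direction at a time via the Selector Lemma. Concretely, I would work with the constraint system $\calA_\eta$ on indeterminates $\{w_i, x_i', y_i', \theta'\}_{i\in [n]}$ consisting of (i) boolean selectors $w_i^2 = w_i$ with $\sum_i w_i = (1-\eta)n$, (ii) agreement $w_i x_i' = w_i x_i$ and $w_i y_i' = w_i y_i$, (iii) the hypercontractivity certificate
\begin{equation*}
  \tfrac{1}{n}\sum_{i\in[n]}\langle x_i', v\rangle^4 \;\leq\; \bigl(3+\eta\log^2(1/\eta)\bigr)\bigl(\tfrac{1}{n}\sum_{i\in[n]}\langle x_i', v\rangle^2\bigr)^2,
\end{equation*}
and (iv) the gradient condition $\tfrac{1}{n}\sum_i (\langle x_i', \theta'\rangle - y_i')\, x_i' = 0$ of Bakshi--Prasad. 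Feasibility is witnessed by $w_i = r_i$, $(x_i', y_i') = (x_i^*, y_i^*)$, $\theta' = \theta$ whenever the clean sample is $\eta$-higher-order-good in the sense of Definition~\ref{def:higher_order_goodness}, which occurs with probability $1-\beta$ once $n \geq \tilde\Omega((d^2 + \log^2(1/\beta))/\eta^2)$ by Lemma~\ref{lem:higher-order-stability}; the hypercontractivity constraint has a degree-4 SoS certificate on any higher-order-good set.

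The heart of the proof is the single-direction SoS inequality
\begin{equation*}
  \calA_\eta \;\sststile{O(1)}{w,x',y',\theta'}\; \Set{\langle u,\Sigma(\theta' - \theta)\rangle^2 \;\leq\; O(\eta^2 \log^2(1/\eta))\cdot u^\top \Sigma u}
\end{equation*}
for every fixed $u \in \R^d$. I would derive it by subtracting the gradient identities at $\theta'$ and $\theta$ on the indicator-selected subsample: using $y_i' = \langle x_i', \theta\rangle + \zeta_i$ on uncorrupted coordinates, this yields an SoS identity equating $\Sigma'(\theta'-\theta)$, where $\Sigma' = \tfrac{1}{n}\sum_i x_i'(x_i')^\top$, with three error pieces, namely a noise term $\tfrac{1}{n}\sum_i w_i\zeta_i x_i^*$, a corruption-gradient term carried by the factors $1 - r_i w_i$, and a bias term coming from $\Sigma' - \Sigma$ acting on $\theta'-\theta$. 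Each piece is bounded by SoS Cauchy--Schwarz combined with the Selector Lemma: for the noise term I invoke resilience of $\{w_i\zeta_i x_i^*\}$ in direction $u$ on the $\eta n$-size complement of the clean subset; for the corruption term I use the Selector Lemma on $1 - r_i w_i$ together with hypercontractivity of the $x_i'$; and for the bias term I use the covariance-closeness certificate \eqref{eqn:covariance-estimation-sos-intro}, which likewise reduces to Selector-Lemma invocations on the $w_i$. All resilience inputs to the Selector Lemma are deterministic consequences of higher-order-goodness, and every application costs only $O(\eta\log(1/\eta))$ rather than $O(\sqrt{\eta})$, which is why we reach the optimal rate.

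Given the SoS identity, I would solve the relaxation in time $n^{O(1)}$ via Theorem~\ref{fact:eff-pseudo-distribution} to obtain a degree-$O(1)$ pseudo-expectation $\tilde{\mathbb{E}}$ consistent with $\calA_\eta$ and output $\hat\theta = \tilde{\mathbb{E}}[\theta']$. For any fixed $u$, applying $\tilde{\mathbb{E}}$ to the SoS identity and using Fact~\ref{fact:pe_cs} gives
\begin{equation*}
  \langle u,\Sigma(\hat\theta - \theta)\rangle^2 \;\leq\; \tilde{\mathbb{E}}\bigl[\langle u, \Sigma(\theta'-\theta)\rangle^2\bigr] \;\leq\; O(\eta^2 \log^2(1/\eta))\cdot u^\top \Sigma u,
\end{equation*}
and plugging in $u = \hat\theta - \theta$ and cancelling one factor of $\|\Sigma^{1/2}(\hat\theta-\theta)\|^2$ yields $\|\Sigma^{1/2}(\hat\theta-\theta)\| \leq O(\eta\log(1/\eta))$. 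The sample complexity is exactly what Lemma~\ref{lem:higher-order-stability} requires for the hypercontractivity and subset-resilience conditions to hold simultaneously.

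The main obstacle is carrying out the SoS derivation for the full regression system rather than just the covariance system from \eqref{eqn:covariance-estimation-sos-intro}: corruption now happens in both $x_i$ and $y_i$ simultaneously, so the error decomposition must cleanly separate label noise, covariate corruption, and the bilinear interaction of $\theta'-\theta$ with $\Sigma - \Sigma'$. The key difficulty is that $\theta' - \theta$ is an indeterminate, so one cannot apply resilience or the Selector Lemma to it directly; the workaround is to first pass to $\Sigma'(\theta'-\theta)$, which appears explicitly via the gradient condition, and then extract a fixed-direction inner product by contracting against a (possibly pseudo-averaged) $u$ and using \eqref{eqn:covariance-estimation-sos-intro} as an auxiliary SoS lemma to swap $\Sigma'$ for $\Sigma$ inside squared inner products. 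Once this direction-at-a-time structure is in place, the rest of the argument is a bookkeeping exercise combining the standard SoS facts in Section~\ref{subsec:sos-background}.
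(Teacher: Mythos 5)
Your overall architecture matches the paper's: the same constraint system (selectors, agreement, hypercontractivity, gradient condition), rounding via $\hat\theta = \pE[\theta']$, a fixed-direction SoS inequality, Cauchy--Schwarz for pseudo-expectations with $u = \hat\theta - \theta$, and feasibility/sample complexity from higher-order goodness. However, there is a genuine gap in how you propose to reach the key inequality $\langle u, \Sigma(\theta'-\theta)\rangle^2 \leq O(\eta^2\log^2(1/\eta))\, u^\top\Sigma u$ in one pass. When you bound the corruption-gradient term $\tfrac 1 n\sum_i(1-r_iw_i)\langle u, x_i'\rangle(\langle x_i',\theta'\rangle - y_i')$ by SoS Cauchy--Schwarz, the second factor is $\tfrac 1 n\sum_i(1-r_iw_i)(\langle x_i',\theta'\rangle - y_i')^2$. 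Controlling it requires constraints you omitted from your program --- the residual second-moment constraint $\tfrac 1 n\sum_i(y_i'-\langle\theta',x_i'\rangle)^2 \leq 1+O(\eta)$ and its fourth-moment analogue --- and even with them, the resulting upper bound is not $O(\eta\log(1/\eta))$ alone but $O(\eta\log(1/\eta)) + O(1)\cdot\|\tilde\Sigma^{1/2}(\theta'-\thetals)\|^2$ (this is the paper's Term B.1 computation). So the best one-pass inequality has an extra term $O(\eta\log(1/\eta))\|\tilde\Sigma^{1/2}(\theta'-\thetals)\|^2\cdot u^\top\Sigma u$ on the right-hand side, and you cannot absorb it by choosing $u \propto \theta'-\theta$, because $u$ must be a fixed vector while $\theta'$ is an indeterminate.

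The paper closes this loop with a separate bootstrapping step (Lemma~\ref{lem:sos_regression_bootstrapping} / Lemma~\ref{lem:bootstrapping_regression}): an a priori SoS bound $\|\tilde\Sigma^{1/2}(\theta'-\thetals)\|^2 \leq O(\eta)$ proved by a different argument that tolerates the $\sqrt\eta$ rate, allows the test direction to be an indeterminate, and relies on the fourth-moment residual constraint together with a covariance-closeness statement valid for indeterminate directions (Lemma~\ref{lem:boostrapping_covariance}). Substituting this coarse bound into the fine inequality yields the claimed $\eta^2\log^2(1/\eta)$ rate. Your proposal has no analogue of this step, and without it the derivation does not terminate. (A minor, non-fatal difference: the paper compares $\theta'$ to the clean least-squares solution $\thetals$ rather than to $\theta$, exploiting the exact first-order optimality of $\thetals$ on the uncorrupted samples; your direct comparison to $\theta$ with an explicit noise term can be made to work but is slightly messier.)
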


We will prove~\cref{thm:main_robust_regression} in this section and~\cref{thm:main_private_regression} in the next.

\paragraph{Intuition for robust estimator.}
Note that in the non-robust setting, when we are given $n$ \iid samples $(x_1^*, y_1^*), \ldots, (x_n^*,y_n^*)$ from the above model, the least-squares estimator defined as
\[
    \thetals = \argmin_{\tilde{\theta} \in \R^d} \tfrac 1 n \sum_{i=1}^n \paren{y_i^* - \Iprod{x_i^*, \tilde{\theta}}}^2
\]
satisfies $\norm{\Sigma^{1/2}(\thetals - \theta)} \leq \bigO{\sqrt{\tfrac {d + \log(1/\beta)}{n}}}$
Note that this error is at most $O(\eta \log(1/\eta))$ for our choice of $n$.
On a very high level, our algorithm can be interpreted as searching for a $(1-\eta)$-fraction of the input samples satisfying concentration bounds that would be satisfied by the Gaussian distribution and outputting the associated least-squares estimator.

\paragraph{Sum-of-Squares algorithm.}

Our algorithm is based on the sum-of-squares hierarchy.
We will use the following notation:
We denote by $\{x_1, x_2 , \ldots x_n\}$ the $\eta$-corrupted version of $\cS = \{x_1^*, \ldots, x_n^*, \}$ we observe.
Further, let $r_i \in \{0,1\}$, be 1 if the $i$-th sample is uncorrupted, i.e., $(x_i^*,y_i^*) = (x_i,y_i)$, and 0 otherwise.
We will use the following constraint system in vector-valued variables $\thetaprime, x_1', \ldots, x_n'$, scalar-valued variables $y_1', \ldots, y_n'$ and $w_1, \ldots, x_n$, and the matrix-valued variable $R$.
Note that the last constraint is a one-dimensional version of the second-last one about Gaussian concentration.
The third-last constraint encodes that $\thetaprime$ solves the least-squares optimization problem on pairs $(x_1',y_1'), \ldots, (x_n',y_n')$.
\begin{equation*}
\calA_{\eta}=
  \left \{
    \begin{aligned}
      &\forall i\in [n].
      & w_i^2
      & = w_i \\
      & &\sum_{i\in [n]} w_i &\geq (1-\eta)n \\
      &&\tfrac{1} n \sum_{i \in [n]} x_i'(x_i')^\top &= \Sigma' \\
      &\forall i\in [n] & w_i(x'_i - x_i) = 0&\,,\; w_i(y'_i - y_i) =0 \\
      &&\tfrac{1} n \sum_{i \in [n]} \Paren{\Iprod{x_i', \thetaprime} - y_i'}x_i' &= 0 \\
      &&\tfrac{1} n \sum_{i \in [n]} \Paren{y_i' - \Iprod{\thetaprime, x_i'}}^2 &\leq 1+O(\eta) \\
      &\forall v \in \mathbb{R}^d
      & \frac{1}{n}\sum_{i \in [n]}  {\langle x'_i , v\rangle^{4}} 
      &\leq  \Paren{3+\eta \log^2 (1/\eta)} \left(  v^\top \Sigma' v \right)^{2} \\
      & & \tfrac{1} n \sum_{i \in [n]} \Paren{y_i' - \Iprod{\thetaprime, x_i'}}^4 &\leq \bigO{1}
    \end{aligned}
  \right \}
\end{equation*}

Note that for the inequalities that must be satisfied for all $v$, we require that there is an SoS proof of the inequality for any $v$. These constraints can be efficiently encoded as polynomial inequalities in our program variables by using auxiliary variables (see~\cite{fleming2019semialgebraic}). 

\paragraph{Feasibility.} The feasibility of the above program follows by setting $w_i = r_i, x_i' = x_i^*$ and $\thetaprime$ to be the least-squares solution for pairs $(x_1^*,y_1^*), \ldots, (x_n^*,y_n^*)$.
We defer a formal proof of feasibility to \cref{sec:feasibility_regression}.
 Our algorithm will be the following:

\begin{mdframed}
  \begin{algorithm}[Optimal Robust Regression]
    \label{algo:robust_regression}\mbox{}
    \begin{description}
    \item[Input:] $\eta$-corrupted sample $(x_1,y_1), \ldots, (x_n,y_n)$, corruption level $\eta$.
    \item[Output:] An estimate $\thetahat$ attaining the guarantees of \cref{thm:main_robust_regression}.
    
    \item[Operations:]\mbox{}
    \begin{enumerate}
    \item Find a degree-$\bigO{1}$ pseudo-distribution $\zeta$ satisfying $\calA_{\eta}$.
    \item Output $ \thetahat = \pE_{\zeta}[\thetaprime]$.
    \end{enumerate}
    \end{description}
  \end{algorithm}
\end{mdframed}

Our main technical lemma will be the following which we will prove at the end of this section.
Throughout the rest of this section, we use $\tilde{\Sigma}$ to denote $\tfrac 1 n \sum_{i=1}^n x_i^* (x_i^*)^\top$.
\begin{lemma}
\label{lem:main_sos_proof_regression}
Let $0 < \eta $ be smaller than a sufficiently small constant and let $\thetals$ be the least-squares solution associated with $(x_1^*,y_1^*), \ldots, (x_n^*, y_n^*)$.
Suppose $x_1^*, \ldots, x_n^*$ and $y_1^*, \ldots, y_n^*$ are $\eta$-good and $\eta$-higher-order-good (cf.~\cref{def:eps_goodness,def:higher_order_goodness})\footnote{In the case of $y_i^*$ we mean by $\eta$-good the case of $d=1$ in \cref{def:eps_goodness}.}.
For any fixed vector $u$, i.e., that is not an indeterminate, we have that
\begin{align*}
    \calA_\eta \sststile{\bigO{1}}{ w, \theta'} \Biggl\{ \Iprod{u, \tilde{\Sigma} \Paren{\thetaprime - \thetals} }^2 \leq \bigO{\eta^2\log^2(1/\eta)}  \Paren{u^\top \Sigma u} \Biggr\}
\end{align*}
and that
\begin{align*}
    \calA_\eta \sststile{\bigO{1}}{ w, \theta'} \Biggl\{ \Iprod{u, \thetaprime - \thetals}^2 \leq \bigO{\eta^2\log^2(1/\eta)}  \Paren{u^\top \Sigma^{-1} u} \Biggr\} \,.
\end{align*}
\end{lemma}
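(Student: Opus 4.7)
}

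My plan is to derive a key SoS identity decomposing $\tilde{\Sigma}(\theta' - \thetals)$ into at most four ``error'' terms, each supported on an SoS indicator of a subset of size at most $O(\eta n)$, and then bound each term via SoS Cauchy--Schwarz combined with the Selector Lemma (\cref{lemma:subset_selection}) and the $\eta$-goodness / $\eta$-higher-order-goodness assumptions on the uncorrupted samples. The crucial observation is that the fixed direction $u$ in the statement is not an indeterminate, so invoking the Selector Lemma on fixed quantities like $\iprod{u, x_i^*}^2$ (and $\zeta_i^2$, $e_i^2$ where $e_i = y_i^* - \iprod{\thetals, x_i^*}$) is legal and yields the desired $O(\eta \log (1/\eta))$ bounds against the resilience constants furnished by $\eta$-goodness.

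The identity is derived as follows. Starting from the gradient constraint $\tfrac 1 n \sum_i (\iprod{x_i',\theta'} - y_i') x_i' = 0$ in $\calA_\eta$ and multiplying by the decomposition $1 = w_i + (1-w_i)$, I use $w_i x_i' = w_i x_i$, $w_i y_i' = w_i y_i$ (from $\calA_\eta$) and the constant-level identities $r_i x_i = r_i x_i^*$, $r_i y_i = r_i y_i^*$ (true since $r_i,x_i,y_i$ are fixed) together with the gradient identity $\tilde{\Sigma}\thetals = \tfrac{1}{n}\sum_i y_i^* x_i^*$ at the true least-squares solution. After telescoping the ``agreement'' contributions (those multiplied by $r_i w_i$, which cancel), this yields
\[
\tilde{\Sigma}(\theta' - \thetals) = \tfrac{1}{n}\sum_i (1-r_i w_i) \iprod{x_i^*, \theta'-\theta}\, x_i^* \;-\; \tfrac{1}{n}\sum_i (1-r_i w_i)\, \zeta_i\, x_i^* \;-\; \tfrac{1}{n}\sum_i (1-r_i) w_i (\iprod{x_i,\theta'}-y_i)\, x_i \;-\; \tfrac{1}{n}\sum_i (1-w_i)(\iprod{x_i',\theta'}-y_i')\, x_i' .
\]
Each of the four SoS indicators $(1-r_i w_i)$, $(1-r_i)w_i$, $(1-w_i)$ is boolean and sums to at most $2\eta n$.

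Taking the inner product with $u$, squaring, and applying SoS Cauchy--Schwarz (\cref{fact:sos_cs}) term-by-term, I get products of the form $\bigl(\tfrac{1}{n}\sum_i m_i z_i^2\bigr)\bigl(\tfrac{1}{n}\sum_i m_i \iprod{u, x_i^\star}^2\bigr)$, with $x_i^\star \in \{x_i^*, x_i'\}$. The factor involving $\iprod{u,x_i^*}^2$ is bounded by $O(\eta \log(1/\eta)) u^\top \Sigma u$ by the Selector Lemma applied to the fixed values $\iprod{u, x_i^*}^2$ (using resilience from \cref{fact:cov_small_subset}); for the terms involving $x_i'$ I first use $w_i x_i' = w_i x_i^*$ on the ``selected uncorrupted'' portion and the hypercontractivity constraint from $\calA_\eta$ on the complementary portion. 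For the $z_i^2$ factor: for $z_i = \zeta_i$ and $z_i = e_i$, the Selector Lemma gives $O(\eta \log(1/\eta))$ directly from goodness; for $z_i = \iprod{x_i^*,\theta'-\theta}$ and the analogous SoS residual $\iprod{x_i',\theta'}-y_i'$, I use the constraints $\tfrac 1 n \sum_i (y_i' - \iprod{\theta',x_i'})^2 \leq 1+O(\eta)$ and $\tfrac 1 n \sum_i (y_i' - \iprod{\theta',x_i'})^4 \leq O(1)$ from $\calA_\eta$, together with SoS Hölder (\cref{fact:sos-holder}) to introduce the $(1-w_i)$ indicator, eventually bounding them by $O(\eta \log(1/\eta)) u^\top \Sigma u + O(\eta \log(1/\eta)) \iprod{u, \tilde{\Sigma}(\theta' - \thetals)}^2$ (the latter self-referential term arising through the hypercontractivity constraint linking $\theta' - \theta$ to the target quantity).

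I expect the self-referential bound to be the main obstacle: after collecting the four term bounds I will arrive at an inequality of the schematic form $\iprod{u, \tilde{\Sigma}(\theta' - \thetals)}^2 \leq O(\eta^2 \log^2(1/\eta)) u^\top \Sigma u + O(\eta \log(1/\eta)) \iprod{u, \tilde{\Sigma}(\theta' - \thetals)}^2$, which I close by absorbing the second term using \cref{fact:cancellation_two_sos}. For the second inequality of the lemma, I apply the first inequality to the fixed vector $\Sigma^{-1} u$; this yields $\iprod{\Sigma^{-1} u, \tilde{\Sigma}(\theta'-\thetals)}^2 \leq O(\eta^2 \log^2(1/\eta)) u^\top \Sigma^{-1} u$. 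Since $\eta$-higher-order-goodness (\cref{lem:higher-order-stability}) implies $\tilde{\Sigma} = \Sigma + E$ with $\|\Sigma^{-1/2} E \Sigma^{-1/2}\|_{\mathrm{op}} \leq O(\eta \log(1/\eta))$, I have $\Sigma^{-1}\tilde{\Sigma} = I + \Sigma^{-1}E$ with small operator-norm remainder, so $\iprod{\Sigma^{-1}u, \tilde{\Sigma}(\theta'-\thetals)} = \iprod{u, \theta'-\thetals} + \iprod{u, \Sigma^{-1}E(\theta'-\thetals)}$; the second term can be absorbed via another application of the first inequality (setting $u$ to appropriate fixed directions), yielding the claimed bound.
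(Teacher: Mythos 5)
Your decomposition and term-by-term strategy (Cauchy--Schwarz on each indicator-weighted sum, Selector Lemma against the fixed scalars $\iprod{u,x_i^*}^2$, $\zeta_i^2$, $e_i^2$, hypercontractivity for the $x_i'$-portions) is essentially the paper's proof of its intermediate Lemma~\ref{lem:sos_regression_without_bootstrapping}, and your derivation of the second claim from the first is fine. But there is a genuine gap at the step where you close the self-referential term.

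After Cauchy--Schwarz, the terms with $z_i = \iprod{x_i^*,\theta'-\theta}$ (and the analogous SoS residual) produce an error of the form $O(\eta\log(1/\eta))\cdot\Norm{\tilde{\Sigma}^{1/2}(\theta'-\thetals)}^2\cdot(u^\top\Sigma u)$: the factor $\tfrac 1 n\sum_i m_i\iprod{x_i^*,\theta'-\theta}^2$ is only controlled by the \emph{full} sum $\tfrac 1 n\sum_i\iprod{x_i^*,\theta'-\theta}^2 = \Norm{\tilde{\Sigma}^{1/2}(\theta'-\theta)}^2$, not by the squared projection onto the fixed direction $u$. Your schematic inequality replaces this full Mahalanobis norm by $\iprod{u,\tilde{\Sigma}(\theta'-\thetals)}^2$, but that substitution goes the wrong way (Cauchy--Schwarz gives $\iprod{u,\tilde{\Sigma}(\theta'-\thetals)}^2\leq\Norm{\tilde{\Sigma}^{1/2}u}^2\Norm{\tilde{\Sigma}^{1/2}(\theta'-\thetals)}^2$, not the reverse), so the absorption via \cref{fact:cancellation_two_sos} is not available: the error term is not a small multiple of the left-hand side. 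This is precisely why the paper splits the argument into two lemmas. The second, a separate \emph{bootstrapping} step (\cref{lem:sos_regression_bootstrapping,lem:bootstrapping_regression}), establishes the coarse a priori bound $\calA_\eta\sststile{O(1)}{}\Set{\Norm{\tilde{\Sigma}^{1/2}(\theta'-\thetals)}^2\leq O(\eta)}$, which when substituted into the main inequality yields the claimed $O(\eta^2\log^2(1/\eta))(u^\top\Sigma u)$. That bootstrapping bound cannot be obtained by your fixed-$u$ argument, because controlling the norm requires taking $u = \theta'-\thetals$, an indeterminate, which invalidates the Selector Lemma applications; the paper instead works with the fourth power of the quadratic form $(\theta'-\thetals)^\top\tilde{\Sigma}(\theta'-\thetals)$, using the explicit fourth-moment constraints in $\calA_\eta$ (this is the reason the constraint $\tfrac 1 n\sum_i(y_i'-\iprod{\theta',x_i'})^4\leq O(1)$ is in the program at all) together with a covariance-closeness lemma valid for indeterminate directions (\cref{lem:boostrapping_covariance}). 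You need to supply this bootstrapping argument, or an equivalent a priori norm bound, for your proof to go through.
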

Note that the second claim is almost equivalent to the first via the transformation $u \mapsto \Sigma^{-1} u$.
The second formulation will be more useful for the reduction to privacy.

\paragraph{Handling unknown variance of the noise between $1-\eta$ and 1.}
We state all of our results in the setting when the variance of the noise is known and assumed to be 1.
Yet, our algorithms also work in the (very slightly harder) setting when the variance of the noise is unknown but promised to be in $[1-\eta,1]$.
The proofs of feasibility are not affected since the constraints on the $y_i' - \Iprod{x_i',\theta'}$ are only easier to satisfy.
We will point out in the SoS proofs below why the proof of~\cref{lem:main_sos_proof_regression} still holds.

We can deduce \cref{thm:main_robust_regression} from~\cref{lem:main_sos_proof_regression} as follows.
\begin{proof}[Proof of \cref{thm:main_robust_regression}]
    First, note that with probability at least $1-\beta$ the least-squares estimator with respect to $(x_1^*,y_1^*), \ldots, (x_n^*, y_n^*)$ satisfy $\eta$-goodness and $\eta$-higher-order-goodness~\cref{fact:mean_and_cov_of_1-eps,fact:cov_small_subset,lem:higher-order-stability}.
    We condition on this event.
    Recall that for our choice of $n$
    \[
        \Norm{\Sigma^{1/2}(\thetals - \theta)} \leq \bigO{\sqrt{\frac {d + \log(1/\beta)} n}} \leq \eta \,.
    \]
    Thus, by triangle inequality, it is enough to bound $\Norm{\Sigma^{1/2}(\thetahat - \thetals)}$.
    Further, let $\tilde{\Sigma} = \tfrac{1}{n} \sum_{i=1}^n x_i^* (x_i^*)^\top$.
    We know that $\Sigma$ and $\tilde{\Sigma}$ are $O(\eta\sqrt{\log(1/\eta)})$-close in every direction.
    Thus, it is enough to show that $\Norm{\tilde{\Sigma}^{1/2}(\thetahat - \thetals)}$
    
    Let $\zeta$ be a degree-$\bigO{1}$ pseudo-expectation satisfying $\calA_{\eta}$.
    By~\cref{fact:eff-pseudo-distribution}, this can be computed in time $(n \cdot d)^{\mathcal{O}(1)}$.
    Let $\thetahat = \pE_\zeta \thetaprime$ and $u = \thetahat - \thetals$.
    By~\cref{lem:main_sos_proof_regression} and Cauchy-Schwarz for pseudo-expectations (cf.~\cref{fact:pe_cs}) it holds that
    \begin{align*}
        \Norm{\tilde{\Sigma}^{1/2}(\thetahat - \thetals)}^4 &= \Brac{\pE_\zeta \Iprod{u, \tilde{\Sigma} \Paren{\thetaprime - \thetals}}}^2 \leq \pE_\zeta \Iprod{u, \Paren{\tfrac 1 n \sum_{i=1}^n x_i^* (x_i^*)^\top} \Paren{\thetaprime - \thetals}}^2 \\
        &\leq \bigO{\eta^2 \log^2(1/\eta)} \cdot u^\top \Sigma u  \leq \bigO{\eta^2 \log^2(1/\eta)} \Norm{\tilde{\Sigma}^{1/2}(\thetahat - \thetals)}^2\,,
    \end{align*}
    Canceling the factors of  $\Norm{\tilde{\Sigma}^{1/2}(\thetahat - \thetals)}^2$ and taking square roots completes the proof.

\end{proof}

\paragraph{Obtain the main SoS guarantee from intermediate lemmas.}

The first part of \cref{lem:main_sos_proof_regression} will directly follow by combining the following two lemmas.
\begin{lemma}
\label{lem:sos_regression_without_bootstrapping}
Let $0 < \eta $ be smaller than a sufficiently small constant and let $\thetals$ be the least-squares solution associated with $(x_1^*,y_1^*), \ldots, (x_n^*, y_n^*)$.
Suppose $x_1^*, \ldots, x_n^*$ and $y_1^*, \ldots, y_n^*$ are $\eta$-good and $\eta$-higher-order-good.
For any fixed vector $u$, i.e., that is not an indeterminate, we have that
\begin{align*}
    \calA_\eta \sststile{\bigO{1}}{ w, \theta'} \Biggl\{ &\Iprod{u, \tilde{\Sigma} \Paren{\thetaprime - \thetals} }^2 \\
    &\leq \bigO{\eta^2\log^2(1/\eta)}  \Paren{u^\top \Sigma u} + \bigO{\eta \log(1/\eta)}\Norm{\tilde{\Sigma}^{1/2}\Paren{\thetaprime - \thetals}}^2  \Paren{u^\top \Sigma u}\Biggr\}  \,.
\end{align*}
\end{lemma}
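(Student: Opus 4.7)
My plan is to exploit the first-order optimality conditions for both $\theta_{LS}$ (the ordinary normal equations $\tilde\Sigma\theta_{LS} = \tfrac{1}{n}\sum y_i^* x_i^*$) and for $\theta'$ (the gradient constraint in $\calA_\eta$). Subtracting them and using $w_i(x_i' - x_i) = 0$, $w_i(y_i'-y_i)=0$ together with the coupling $r_i(x_i - x_i^*) = 0$, the contributions from indices with $r_iw_i = 1$ cancel, and I expect to derive the polynomial identity
\[
\tilde\Sigma(\theta'-\theta_{LS}) = \tfrac{1}{n}\sum_{i=1}^n (1-r_iw_i)\Bigl[(\langle x_i^*,\theta'\rangle - y_i^*)\, x_i^* - (\langle x_i',\theta'\rangle - y_i')\, x_i'\Bigr]
\]
in the SoS indeterminates. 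Taking the inner product with the fixed direction $u$, squaring, and applying the SoS almost-triangle inequality (\cref{fact:sos-almost-triangle}) splits the resulting square into a ``starred'' term $T_1$ involving only the fixed data on the $(1-r_iw_i)$ indices and a ``primed'' term $T_2$ involving the indeterminates $x_i', y_i'$.

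For $T_1$, I would apply SoS Cauchy-Schwarz (\cref{fact:sos_cs}) to factor out $\tfrac{1}{n}\sum (1-r_iw_i)\langle x_i^*, u\rangle^2$, which is $O(\eta\log(1/\eta))\, u^\top \Sigma u$ by the Selector Lemma (\cref{lemma:subset_selection}) applied to the fixed numbers $\langle x_i^*,u\rangle^2$ combined with first-moment resilience of $\{x_i^*\}$. The other factor, $\tfrac{1}{n}\sum (1-r_iw_i)(\langle x_i^*,\theta'\rangle - y_i^*)^2$, I will expand using $y_i^* = \langle x_i^*,\theta\rangle + \zeta_i$ and the split $\theta'-\theta = (\theta'-\theta_{LS}) + (\theta_{LS}-\theta)$. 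The SoS triangle inequality then bounds it by $O(1)\|\tilde\Sigma^{1/2}(\theta'-\theta_{LS})\|^2 + O(\eta\log(1/\eta))$: the first summand comes from the trivial bound $\tfrac{1}{n}\sum \langle x_i^*,\theta'-\theta_{LS}\rangle^2 = \|\tilde\Sigma^{1/2}(\theta'-\theta_{LS})\|^2$ (since $\theta'-\theta_{LS}$ is an indeterminate and we cannot invoke resilience), while the $(1-r_iw_i)\zeta_i^2$ and $(1-r_iw_i)\langle x_i^*, \theta_{LS}-\theta\rangle^2$ pieces are handled by the Selector Lemma combined with noise resilience and the sample-complexity bound $\|\Sigma^{1/2}(\theta_{LS}-\theta)\| \leq O(\eta)$. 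Multiplying the two factors produces exactly the two summands on the right-hand side of the conclusion.

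For $T_2$, SoS Cauchy-Schwarz yields the product of $\tfrac{1}{n}\sum (1-r_iw_i)\langle x_i', u\rangle^2$ and $\tfrac{1}{n}\sum (1-r_iw_i)(\langle x_i',\theta'\rangle - y_i')^2$. The main obstacle is that $x_i'$ is an indeterminate, so the Selector Lemma does not apply directly; a naive application of hypercontractivity combined with Cauchy-Schwarz loses a factor of $\sqrt{\eta}$ per factor, giving an overall $O(\eta)$ bound rather than the required $O(\eta^2\log^2(1/\eta))$. My plan to overcome this is the ``laundering'' identity
\[
\tfrac{1}{n}\sum(1-r_iw_i)\langle x_i',u\rangle^2 = u^\top\Sigma' u - \tfrac{1}{n}\sum r_iw_i \langle x_i^*,u\rangle^2,
\]
where the constraint $r_iw_i x_i' = r_iw_i x_i^*$ converts the indeterminate into the fixed sample inside the second sum. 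The first term is $\leq (1+O(\eta\log(1/\eta)))\, u^\top \Sigma u$ by the SoS covariance-estimation inequality of~\cref{eqn:covariance-estimation-sos-intro}, and the second is $\geq (1-O(\eta\log(1/\eta)))\, u^\top \Sigma u$ by the Selector Lemma plus resilience of $\{x_i^*\}$, so the difference is $O(\eta\log(1/\eta))\, u^\top \Sigma u$. I will handle the companion factor analogously: subtracting the $r_iw_i$-part turns it into $\tfrac{1}{n}\sum r_iw_i(\langle x_i^*,\theta'-\theta_{LS}\rangle + e_i^*)^2$ with $e_i^* = \langle x_i^*,\theta_{LS}\rangle - y_i^*$; the LS normal-equations identity $\tfrac{1}{n}\sum e_i^* x_i^* = 0$ kills the cross term up to a $(1-r_iw_i)$-error (controlled by Selector Lemma plus Cauchy-Schwarz), and any residual $\|\tilde\Sigma^{1/2}(\theta'-\theta_{LS})\|^2$ contribution can be absorbed since it is itself a nonnegative SoS polynomial, yielding an $O(\eta\log(1/\eta))$ bound on the factor and hence $T_2 \leq O(\eta^2\log^2(1/\eta))\, u^\top \Sigma u$. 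Combining $T_1$ and $T_2$ gives the stated inequality, with the $\|\tilde\Sigma^{1/2}(\theta'-\theta_{LS})\|^2$ term coming entirely from the non-resilient bound on indeterminate $\theta'-\theta_{LS}$ inside $T_1$; eliminating it is precisely the role of the bootstrapping argument in the main lemma.
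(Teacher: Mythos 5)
Your proposal is correct and follows essentially the same route as the paper's proof: the identical identity obtained from the two gradient conditions, the same split into a starred term and a primed term via the SoS triangle inequality, the same Cauchy--Schwarz factorizations with the Selector Lemma plus resilience for the fixed-sample factors, and the same ``laundering'' of $\tfrac 1 n\sum(1-r_iw_i)\Iprod{x_i',u}^2$ through $u^\top\Sigma' u$ and the covariance-closeness SoS proof (\cref{lem:main_identity_cov_sos_proof}). The one over-claim is that $T_2 \leq \bigO{\eta^2\log^2(1/\eta)}\,u^\top\Sigma u$ outright: in the paper's version of this step the $(1-r_iw_i)$-correction leaves behind a $+\bigO{1}\Norm{\tilde{\Sigma}^{1/2}(\theta'-\thetals)}^2$ contribution to the companion factor that enters with a \emph{positive} sign and so cannot be discarded as a nonnegative polynomial, but this is harmless since the resulting $\bigO{\eta\log(1/\eta)}\Norm{\tilde{\Sigma}^{1/2}(\theta'-\thetals)}^2(u^\top\Sigma u)$ term is already permitted by the lemma's right-hand side.
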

\begin{lemma}
\label{lem:sos_regression_bootstrapping}
Let $0 < \eta $ be smaller than a sufficiently small constant and let $\thetals$ be the least-squares solution associated with $(x_1^*,y_1^*), \ldots, (x_n^*, y_n^*)$.
Suppose $x_1^*, \ldots, x_n^*$ and $y_1^*, \ldots, y_n^*$ are $\eta$-good and $\eta$-higher-order-good.
It holds that 
For any fixed vector $u$, i.e., that is not an indeterminate, we have that
\begin{align*}
    \calA_\eta \sststile{\bigO{1}}{ w, \theta'} \Biggl\{ \Norm{\tilde{\Sigma}^{1/2}\Paren{\thetaprime - \thetals}}^2 \leq \bigO{\eta} \Biggr\}  \,.
\end{align*}
\end{lemma}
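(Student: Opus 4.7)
The plan is to reduce the claim to bounding $\tfrac 1 n \sum_i (y_i^* - \Iprod{x_i^*, \theta'})^2$ via a least-squares identity, and then use the residual constraint $\tfrac 1 n \sum_i (y_i' - \Iprod{x_i', \theta'})^2 \leq 1 + \bigO{\eta}$ together with clean-sample concentration. Let $\Delta = \theta' - \thetals$ and $\epsilon_i^* = y_i^* - \Iprod{x_i^*, \thetals}$ be the fixed least-squares residuals on the clean data. The LS optimality $\tfrac 1 n \sum_i \epsilon_i^* x_i^* = 0$ yields the exact degree-$2$ identity
\[
  \tfrac 1 n \sum_i (y_i^* - \Iprod{x_i^*, \theta'})^2 \;=\; \tfrac 1 n \sum_i (\epsilon_i^*)^2 \;+\; \Delta^\top \tilde{\Sigma} \Delta\,.
\]
Under the assumed goodness conditions and $n \gg d/\eta$, concentration gives $\tfrac 1 n \sum_i (\epsilon_i^*)^2 = 1 - \bigO{\eta}$, so the desired bound will follow if I can upper bound the LHS by $1 + \bigO{\eta} + (\text{a self-bounded correction in } \Delta^\top \tilde{\Sigma} \Delta)$.

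To achieve this upper bound, I split the LHS based on whether $w_i r_i = 1$, where $r_i \in \{0,1\}$ is the fixed indicator that sample $i$ is uncorrupted. On matched indices ($w_i r_i = 1$), the constraints $w_i(x_i' - x_i) = 0$ and $w_i(y_i' - y_i) = 0$, combined with the defining property of $r_i$, let me rewrite $w_i r_i(y_i^* - \Iprod{x_i^*,\theta'})^2 = w_i r_i (y_i' - \Iprod{x_i', \theta'})^2$, so the residual constraint bounds this piece by $1 + \bigO{\eta}$. On unmatched indices I use $y_i^* - \Iprod{x_i^*, \theta'} = \epsilon_i^* - \Iprod{x_i^*, \Delta}$ together with \cref{fact:sos-almost-triangle} and $1 - w_i r_i \leq (1-w_i) + (1-r_i)$ to produce three cross terms.

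Each cross term is handled differently. For $\tfrac 1 n \sum_i (1-w_i r_i)(\epsilon_i^*)^2$, the $(\epsilon_i^*)^2$ are fixed nonnegative constants, so the Selector Lemma (\cref{lemma:subset_selection}) combined with a resilience-type bound on the clean residuals yields $\bigO{\eta \log(1/\eta)}$. For $\tfrac 1 n \sum_i (1-r_i)\Iprod{x_i^*,\Delta}^2$, this is a fixed PSD quadratic form in $\Delta$ whose operator norm is controlled by~\cref{fact:cov_small_subset}, giving $\bigO{\eta \log(1/\eta)} \Delta^\top \tilde{\Sigma} \Delta$. The delicate term is $\tfrac 1 n \sum_i (1-w_i)\Iprod{x_i^*, \Delta}^2$, where both $w_i$ and $\Delta$ are indeterminates; I bound its square via SoS H\"older (\cref{fact:sos-holder}) with $k = 2$ together with data-level Gaussian hypercontractivity to get $\Paren{\tfrac 1 n \sum_i (1-w_i)\Iprod{x_i^*,\Delta}^2}^2 \leq \bigO{\eta} (\Delta^\top \tilde{\Sigma} \Delta)^2$.

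Setting $Y := \Delta^\top \tilde{\Sigma} \Delta$ and $X := \tfrac 1 n \sum_i (1-w_i)\Iprod{x_i^*, \Delta}^2$, assembling the pieces gives $Y \leq \bigO{\eta \log(1/\eta)} + \bigO{\eta \log(1/\eta)} Y + 2X$ alongside $X^2 \leq \bigO{\eta} Y^2$. Multiplying the first by $Y \geq 0$ and applying SoS AM-GM (with balancing parameter $\sqrt{\eta}$) to bound $2XY \leq \bigO{\sqrt{\eta}} Y^2$, I obtain $Y^2 \leq \bigO{\eta \log(1/\eta)} Y + \bigO{\sqrt{\eta}} Y^2$; rearranging and invoking the cancellation fact~\cref{fact:cancellation_two_sos} yields $Y \leq \bigO{\eta}$. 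The main obstacle throughout is precisely the $(1-w_i)\Iprod{x_i^*, \Delta}^2$ term: since $\Delta$ is an SoS indeterminate, the Selector Lemma cannot be invoked with $a_i = \Iprod{x_i^*, \Delta}^2$---this is the ``resilience is not SoS-izable'' obstruction highlighted in~\cref{sec:tech-overview}---and the only available tool, SoS H\"older plus hypercontractivity, gives the weaker $\sqrt{\eta}$ rate. Fortunately this loss is self-bounded in $Y$ and can be absorbed, giving the crude $\bigO{\eta}$ estimate, which is in turn precisely what is needed to bootstrap~\cref{lem:sos_regression_without_bootstrapping} to the sharp rate in~\cref{lem:main_sos_proof_regression}.
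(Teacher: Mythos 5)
Your proof is correct in substance but takes a genuinely different route from the paper's. The paper proves this statement (as \cref{lem:bootstrapping_regression}) by starting from the normal-equations identity
$\tilde{\Sigma}(\theta'-\thetals) = \tfrac 1n\sum_i(1-r_iw_i)x_i^*(\Iprod{x_i^*,\theta'}-y_i^*) - \tfrac 1n\sum_i(1-r_iw_i)x_i'(\Iprod{x_i',\theta'}-y_i')$,
pairing with $\Delta=\theta'-\thetals$, raising to the fourth power, and bounding the three resulting terms via SoS Cauchy--Schwarz together with the \emph{fourth}-moment constraints on both the covariates and the residuals; it then needs \cref{lem:boostrapping_covariance} to convert the resulting $(\Delta^\top\Sigma'\Delta)^2$ factor back into $(\Delta^\top\tilde{\Sigma}\Delta)^2$. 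You instead start from the loss-value identity $\tfrac 1n\sum_i(y_i^*-\Iprod{x_i^*,\theta'})^2=\tfrac 1n\sum_i(\epsilon_i^*)^2+\Delta^\top\tilde{\Sigma}\Delta$ and use only the \emph{second}-moment residual constraint plus the lower bound $\tfrac 1n\sum_i(\epsilon_i^*)^2\geq 1-\bigO{\eta}$ on the clean empirical loss; because your unmatched part is expressed entirely in terms of the starred samples, you never touch $\Sigma'$ and hence need neither the covariance-bootstrapping lemma nor the fourth-moment residual constraint (nor, in fact, the gradient constraint of the program). This is a real simplification, and each individual step you invoke (the split $(1-w_ir_i)\leq(1-w_i)+(1-r_i)$, the Selector Lemma on the fixed scalars $(\epsilon_i^*)^2$, SoS H\"older plus certified hypercontractivity for the $(1-w_i)\Iprod{x_i^*,\Delta}^2$ term, and the final self-bounding cancellation) is legitimately available at constant degree.

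One quantitative slip: your final chain gives $Y^2\leq\bigO{\eta\log(1/\eta)}Y$ (the logarithm enters through the resilience/Selector bounds on $\tfrac 1n\sum_i(1-w_ir_i)(\epsilon_i^*)^2$ and on the $(1-r_i)$ quadratic form), so the cancellation yields $Y\leq\bigO{\eta\log(1/\eta)}$ rather than the $\bigO{\eta}$ you assert and the lemma states. This weaker bound still suffices for the only place the lemma is used, namely absorbing the $\bigO{\eta\log(1/\eta)}\Norm{\tilde{\Sigma}^{1/2}\Delta}^2\,(u^\top\Sigma u)$ error term of \cref{lem:sos_regression_without_bootstrapping} into $\bigO{\eta^2\log^2(1/\eta)}(u^\top\Sigma u)$, but as written your argument does not establish the constant-factor $\bigO{\eta}$ claim; the paper's fourth-power/Cauchy--Schwarz route avoids the logarithm because it never passes through a resilience bound.
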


We next show how to derive the second part of~\cref{lem:main_sos_proof_regression} from the first.
Via the transformation $u \mapsto \Sigma^{-1} u$, we obtain that
\begin{align*}
    \calA_\eta \sststile{\bigO{1}}{ w, \theta'} \Biggl\{ \Iprod{u, \thetaprime - \thetals}^2 \leq \bigO{\eta^2\log^2(1/\eta)}  \Paren{u^\top \Sigma^{-1} \tilde{\Sigma} \Sigma^{-1} u} \Biggr\} \,.
\end{align*}
By $\eta$-goodness, it follows that $\tilde{\Sigma} \preceq \bigO{1} \Sigma$ which implies the claim.

\paragraph{Proof of coarse SoS guarantee.}

We start by proving~\cref{lem:sos_regression_without_bootstrapping}.
This proof does not use the last constraint (bound on fourth moments of the noise), we will need it for the proof of~\cref{lem:bootstrapping_regression} however.
\begin{proof}[Proof of \cref{lem:sos_regression_without_bootstrapping}]
    First, note that as a minimizer of the squared loss, the least-squares estimator satisfies
    \[
        \tfrac 1 n \sum_{i=1}^n x_i^* \Paren{\Iprod{x_i^*, \thetals} - y_i^*} = 0 \,.
    \]
    For $i \in [n]$, let $r_i = 1$ if the $i$-th input point was not corrupted and 0 if it was.
    Note that our constraints imply that $r_i w_i x_i' = r_i w_i x_i^*$ and similarly $r_i w_i y_i' = r_i w_i y_i^*$.
    Hence, using the above and our constraint that $\tfrac{1} n \sum_{i \in [n]} x_i' \cdot \Paren{\Iprod{x_i', \thetaprime} - y_i'} = 0$ it follows that
    \begin{align*}
        \calA_\eta \sststile{}{} \Biggl\{\tilde{\Sigma}(\theta' - \thetals) &= \tfrac 1 n \sum_{i=1}^n x_i^* \cdot \Iprod{x_i^*, \thetaprime - \thetals} = \tfrac 1 n \sum_{i=1}^n x_i^* \cdot \Paren{\Iprod{x_i^*, \thetaprime } - y_i^*} + \tfrac 1 n \sum_{i=1}^n x_i^* \cdot \Paren{y_i^* - \Iprod{x_i^*, \thetals}} \\
        &= \tfrac 1 n \sum_{i=1}^n x_i^* \cdot \Paren{\Iprod{x_i^*, \thetaprime } - y_i^*} - \tfrac{1} n \sum_{i \in [n]} x_i' \cdot \Paren{\Iprod{x_i', \thetaprime} - y_i'} \\
        &= \tfrac 1 n \sum_{i=1}^n (1- r_i w_i) x_i^* \cdot \Paren{\Iprod{x_i^*, \thetaprime } - y_i^*} - \tfrac{1} n \sum_{i \in [n]} (1- r_i w_i) x_i' \cdot \Paren{\Iprod{x_i', \thetaprime} - y_i'} \Biggr\}\,.
    \end{align*}
    Let $u$ be an arbitrary unit vector.
    Substituting the equality above in the expression below and using the SoS triangle inequality (cf. \cref{fact:sos-almost-triangle}) we obtain
    \begin{align}
        \nonumber \calA_\eta \sststile{}{} \Biggl\{ &\Iprod{u, \tilde{\Sigma} \Paren{\thetaprime - \thetals} }^2  \\
        & \nonumber = \Paren{\Iprod{u, \tfrac 1 n \sum_{i=1}^n (1- r_i w_i) x_i^* \cdot \Paren{\Iprod{x_i^*, \thetaprime } - y_i^*}} - \Iprod{u, \tfrac{1} n \sum_{i \in [n]} (1- r_i w_i) x_i' \cdot \Paren{\Iprod{x_i', \thetaprime} - y_i'}}}^2 \\
        &\leq 2\underbrace{\Paren{\tfrac 1 n \sum_{i=1}^n (1- r_i w_i) \Iprod{u, x_i^*} \cdot \Paren{\Iprod{x_i^*, \thetaprime } - y_i^*}}^2}_{\text{Term A}} + 2\underbrace{\Paren{ \tfrac 1 n \sum_{i=1}^n (1- r_i w_i) \Iprod{u, x_i'} \cdot \Paren{\Iprod{x_i', \thetaprime } - y_i'} }^2}_{\text{Term B}} \label{eq:sos_robust_regression} \Biggr\}
    \end{align}

    \paragraph*{Bounding Term A in \cref{eq:sos_robust_regression}}
    We start with Term A in \cref{eq:sos_robust_regression}.
    By the SoS Cauchy-Schwarz Inequality (cf. \cref{fact:sos_cs}) it follows that
    \begin{equation}
    \begin{split} 
        \calA \sststile{}{} \Biggl\{ & \Paren{\tfrac 1 n \sum_{i=1}^n (1- r_i w_i) \Iprod{u, x_i^*} \cdot \Paren{\Iprod{x_i^*, \thetaprime } - y_i^*}}^2 \\
        & \leq \underbrace{\Paren{\tfrac 1 n \sum_{i=1}^n (1- r_i w_i) \Iprod{u, x_i^*}^2}}_{\text{Term A.0}} \underbrace{\Paren{\tfrac 1 n \sum_{i=1}^n (1- r_i w_i) \Paren{\Iprod{x_i^*, \thetaprime } - y_i^*}^2}}_{\text{Term A.1}}  \Biggr\}
    \end{split}
    \end{equation}
    We first bound Term A.0.
    By \cref{fact:empirical_cov_bound} it holds that $\tfrac 1 n \sum_{i=1}^n \Iprod{u,x_i^*}^2 \leq (1+ \bigO{\eta \log(1/\eta)}\left(u^\top \Sigma u\right)$.
    Further, by \cref{lemma:small_subset} (the second part with $\mustar = 0$) we obtain
    \[
        \calA_\eta \sststile{}{} \Set{ \tfrac 1 n \sum_{i=1}^n r_i w_i \Iprod{u, x_i^*}^2\geq \left(1 - \bigO{\eta \log(1/\eta)}\right)\left(u^\top \Sigma u\right) } \,.
    \]
    Hence, $\calA_\eta \sststile{}{} \Set{\tfrac 1 n \sum_{i=1}^n (1- r_i w_i) \Iprod{u, x_i^*}^2 \leq \bigO{\eta \log(1/\eta)} \left(u^\top \Sigma u\right)}$. 
    We now bound Term A.1. By SoS Triangle Inequality it holds that
    \begin{align*}
        \calA_\eta \sststile{}{} \Biggl\{ \tfrac 1 n \sum_{i=1}^n (1- r_i w_i) \Paren{\Iprod{x_i^*, \thetaprime } - y_i^*}^2 &\leq  \tfrac 2 n \sum_{i=1}^n (1- r_i w_i) \Paren{\Iprod{x_i^*, \theta } - y_i^*}^2\\
        &+ \tfrac 2 n \sum_{i=1}^n (1- r_i w_i) \Iprod{x_i^*, \thetaprime -\theta}^2 \Biggr\}
    \end{align*}
    The first of these two terms is at most $\bigO{\eta \log(1/\eta)}$ in the same way as in Term A.0 (applying the same reasoning with $d =1$)\footnote{If we only have an upper bound of 1 on the variance of the ``uncorrupted noise'', this upper bound still holds.}.
    Again by \cref{fact:empirical_cov_bound} and since $\theta$ is sufficiently close to $\thetals$, we can bound the second term as follows
    \begin{align*}
    \calA_\eta \sststile{}{} \Biggl\{ \tfrac 2 n \sum_{i=1}^n (1- r_i w_i) \Iprod{x_i^*, \thetaprime -\theta}^2 &\leq \tfrac 2 n \sum_{i=1}^n \Iprod{x_i^*, \thetaprime -\theta}^2 \leq 4 \Norm{\tilde{\Sigma}^{1/2} (\thetaprime - \thetals)}^2\,.
    \end{align*}
    Thus, we conclude that Term A is at most
    \begin{align*}
        \calA_\eta \sststile{}{} \Biggl\{ &\Paren{\tfrac 1 n \sum_{i=1}^n (1- r_i w_i) \Iprod{u, x_i^*} \cdot \Paren{\Iprod{x_i^*, \thetaprime } - y_i^*}}^2 \\
        &\leq \bigO{\eta^2 \log^2(1/\eta)} \left(u^\top \Sigma u\right) + \bigO{\eta \log(1/\eta)} \cdot \Norm{\tilde{\Sigma}^{1/2}(\thetaprime - \thetals)}^2 \left(u^\top \Sigma u\right) \Biggr\}  \,.
    \end{align*}
    
    \paragraph*{Bounding Term B in \cref{eq:sos_robust_regression}}
    Next, we bound Term B in \cref{eq:sos_robust_regression}.
    The strategy is analogous to the one for Term A but slightly more complex.
    First, by the Cauchy-Schwarz Inequality it follows that
    \begin{equation}
    \begin{split}
        \calA_\eta &\sststile{}{} \Biggl\{ \Paren{ \tfrac 1 n \sum_{i=1}^n (1- r_i w_i) \Iprod{u, x_i'} \cdot \Paren{\Iprod{x_i', \thetaprime } - y_i'} }^2 \\
        &\leq \underbrace{\Paren{\tfrac 1 n \sum_{i=1}^n (1- r_i w_i) \Iprod{u, x_i'}^2}}_{\text{Term B.0}} \cdot \underbrace{\Paren{\tfrac 1 n \sum_{i=1}^n (1- r_i w_i) \cdot \Paren{\Iprod{x_i', \theta' } - y_i'}^2}}_{\text{Term B.1}} \,
    \end{split}
    \end{equation}
    We start with Term B.0:
    Note that we have that
    \[
    \Paren{\tfrac 1 n \sum_{i=1}^n (1- r_i w_i) \Iprod{u, x_i'}^2} = u^\top \Sigma' u - \tfrac{1} n  \sum r_iw_i \Iprod{u, x_i^*}^2\,.
    \]
    By~\cref{fact:cov_small_subset} and~\cref{lemma:subset_selection} we have that
    \[ \calA_{\eta} \sststile{}{} \Set{\tfrac{1} n  \sum r_iw_i \Iprod{u, x_i^*}^2 \geq (1-\eta\log(1/\eta) u^\top \Sigma u} \,.\]
    Thus, we can bound Term B.0 as follows:
    \[ \calA_{\eta} \sststile{}{} \Set{\Paren{\tfrac 1 n \sum_{i=1}^n (1- r_i w_i) \Iprod{u, x_i'}^2} \leq \Iprod{\Sigma - \Sigma', uu^\top} + \eta \log (1/\eta) \left(u^\top \Sigma u \right) \leq \eta \log (1/\eta) \left(u^\top \Sigma u\right)} \,.\]
    Where in the last inequality we used that by~\cref{lem:main_identity_cov_sos_proof} we have that 
    \begin{align*}
        \calA_\eta \sststile{6}{w} \Set{ \Iprod{ \Sigma' - \Sigma,   uu^\top }^2  \leq \bigO{\eta^2 \log^2 (1/\eta)}  \Paren{ u^\top \Sigma u }^2 } \,.
    \end{align*}
    And that by~\cref{lem:square_root_sos} this implies that $\calA_\eta \sststile{6}{w} \Set{ \Iprod{ \Sigma' - \Sigma,   uu^\top }  \leq \bigO{\eta \log (1/\eta)} \Paren{u^\top \Sigma u }}$.
    Note that we have that term B.1 is bounded by
    \begin{equation}
    \label{eqn:b1-term-regression-expansion}
        \begin{split}
            \calA_\eta \sststile{}{} \Biggl\{\tfrac 1 n \sum_{i=1}^n (1- r_i w_i) \cdot \Paren{\Iprod{x_i', \theta' } - y_i'}^2 &= \tfrac 1 n \sum_{i=1}^n\cdot \Paren{\Iprod{x_i', \theta' } - y_i'}^2 - \tfrac 1 n \sum_{i=1}^n r_i w_i \cdot \Paren{\Iprod{x_i', \theta' } - y_i'}^2 \\
            &\leq 1 + \eta - \tfrac 1 n \sum_{i=1}^n r_i w_i \cdot \Paren{\Iprod{x_i^*, \theta' } - y_i^*}^2\,, \Biggr\}
        \end{split}
    \end{equation}
    where the second inequality comes from our constraint on the covariance of the noise.
    Now observe that we can write the second term of Equation~\eqref{eqn:b1-term-regression-expansion} as 
    \begin{equation}
    \label{eqn:tmp-think-of-something}
        \begin{split}
            \sststile{}{} \Biggl\{ & - \tfrac 1 n \sum_{i=1}^n r_i w_i \cdot \Paren{\Iprod{x_i^*, \theta' } - y_i^*}^2  \\
            &= - \tfrac 1 n \sum_{i \in [n]}  \Paren{\Iprod{x_i^*, \theta' } - y_i^*}^2 + \tfrac 1 n \sum_{i=1}^n (1-r_i w_i) \cdot \Paren{\Iprod{x_i^*, \theta' } - y_i^*}^2 \Biggr\} \,. 
        \end{split}
    \end{equation}
    We bound the first and second sum of Equation~\eqref{eqn:tmp-think-of-something} separately.
    Note that the first sum is equal to
    \begin{align*}
       \sststile{}{} \Set{ \tfrac 1 n \sum_{i \in [n]}  \Paren{\Iprod{x_i^*, \thetals } - y_i^*}^2 + \tfrac 1 n \sum_{i \in [n]} \Iprod{x_i^*, \thetals - \theta'}^2 + \tfrac 2 n \sum_{i \in [n]} \left(\Iprod{x_i^*, \thetals} - y_i^*\right)\left(\Iprod{x_i^*, \thetals - \theta'}\right) }\,.
    \end{align*}
    The last term in this sum is equal to $0$ by the gradient condition on the uncorrupted samples and $\thetals$. Furthermore, the second term is equal to $\norm{\tilde{\Sigma}^{1/2} (\theta' - \thetals)}^2 \geq 0$. Finally, we note that the first term contains no indeterminates and is bounded by
    \begin{align*}
        \calA_\eta \sststile{}{} \Biggl\{  & \tfrac 1 n \sum_{i \in [n]}  \Paren{\Iprod{x_i^*, \thetals } - y_i^*}^2 
        \\
        &= \tfrac 1 n \sum_{i \in [n]}  \Paren{\Iprod{x_i^*, \thetals } - y_i^*}\Paren{\Iprod{x_i^*, \theta } - y_i^*} - \tfrac 1 n \sum_{i \in [n]}  \Paren{\Iprod{x_i^*, \thetals } - y_i^*}\Paren{\Iprod{x_i^*, \theta - \thetals }}\\
        &= \tfrac 1 n \sum_{i \in [n]}  \Paren{\Iprod{x_i^*, \thetals } - y_i^*}\Paren{\Iprod{x_i^*, \theta } - y_i^*} \\
        &= \tfrac 1 n \sum_{i \in [n]} \Paren{\Iprod{x_i^*, \theta } - y_i^*}^2 - \tfrac 1 n \sum_{i \in [n]}  \Paren{\Iprod{x_i^*, \theta - \thetals}}\Paren{\Iprod{x_i^*, \theta } - y_i^*} \\
        &= \tfrac 1 n \sum_{i \in [n]} \Paren{\Iprod{x_i^*, \theta } - y_i^*}^2 - \tfrac 1 n \sum_{i \in [n]} \Iprod{x_i^*, \theta - \thetals}^2 \\
        &\geq 1 - \eta  \Biggr\}  \,,
    \end{align*} 
    using the gradient condition on the uncorrupted samples, bounds on the one-dimensional noise\footnote{If the variance of the ``uncorrupted noise'' is only promised to be at least $1-\eta$, the lower bound only changes by an additional additive factor of $\eta$ that doesn't affect the rest of the proof.}, and that $\norm{\tilde{\Sigma}^{1/2}(\theta - \thetals)}^2 \leq \bigO{\eta^2}$.
    Thus, we conclude that 
    \[ \sststile{}{} \Set{ - \tfrac 1 n \sum_{i \in [n]}  \Paren{\Iprod{x_i^*, \theta' } - y_i}^2 \leq -1 + \eta  } \,.\]
    We now bound the second sum in Equation~\eqref{eqn:tmp-think-of-something}. We have by SoS Almost Triangle Inequality and~\cref{fact:cov_small_subset} that
    \begin{align*}
        \calA_{\eta} \sststile{}{} \Biggl\{ \tfrac 1 n \sum_{i=1}^n (1-r_i w_i) \cdot \Paren{\Iprod{x_i^*, \theta' } - y_i^*}^2 &\leq \tfrac 2 n \sum_{i=1}^n (1-r_i w_i) \cdot \Paren{\Iprod{x_i^*, \theta } - y_i^*}^2 + \tfrac 2 n \sum_{i=1}^n (1-r_iw_i) \langle x_i^*, \theta' - \theta\rangle^2 \\
        &\leq \eta \log (1/\eta) + 2 \norm{\tilde{\Sigma}^{1/2} (\theta'-\theta)}^2 \Biggr\} \,.
    \end{align*}
    Thus, we can derive the following upper bound on~\eqref{eqn:tmp-think-of-something}
    \[
        -1 + \bigO{\eta \log(1/\eta)} + \bigO{1} \norm{\tilde{\Sigma}^{1/2}(\theta' - \theta)} \,.
    \]
    Combining these facts we conclude that Term B.1 is at most $\bigO{\eta \log(1/\eta)} + 2 \norm{\tilde{\Sigma}^{1/2} (\theta'-\theta)}^2$. Thus, we have the following bound on term B:
    \begin{align*}
        \calA_{\eta} \sststile{}{} \Biggl\{&\Paren{ \tfrac 1 n \sum_{i=1}^n (1- r_i w_i) \Iprod{u, x_i'} \cdot \Paren{\Iprod{x_i', \thetaprime } - y_i'} }^2 \\
        &\leq \bigO{\eta^2 \log^2 (1/\eta)}\cdot (u^\top \Sigma u) + \bigO{\eta \log (1/\eta)} \cdot \norm{\tilde{\Sigma}^{1/2}(\theta' - \thetals}^2\cdot (u^\top \Sigma u)\Biggr\} \,.
    \end{align*}

    \paragraph*{Putting Things Together}
    Using the bounds on Term A and Term B obtained above, we conclude that
    \begin{align*}
        \calA_\eta \sststile{2}{u, w, \theta'} \Biggl\{ &\Iprod{u, \Paren{\tfrac 1 n \sum_{i=1}^n x_i^* (x_i^*)^\top} \Paren{\thetaprime - \thetals} }^2 \\
        &\leq \bigO{\eta^2\log^2(1/\eta)}  \Paren{u^\top \Sigma u} + \bigO{\eta \log(1/\eta)}\Norm{\tilde{\Sigma}^{1/2}\Paren{\thetaprime - \thetals}}^2  \Paren{u^\top \Sigma u}\Biggr\}  \,.
    \end{align*}
\end{proof}

\subsection{Bootstrapping}

It remains to prove~\cref{lem:sos_regression_bootstrapping}.
This follows from a result of~\cite{bakshi2021robust} which gives a $\sqrt{\eta}$ rate for our model.
We explicitly reprove it below for completeness using the terminology introduced in our work.
The main lemma we need is as follows:
\begin{lemma}[Bootstrapping Regression]
\label{lem:bootstrapping_regression}
Suppose $x_1^*, \ldots, x_n^*$ and $y_1^*, \ldots, y_n^*$ are $\eta$-good and $\eta$-higher-order-good.
Then it holds that
\begin{equation*}
\calA_\eta \sststile{O(1)}{} \Set{ 
\norm{ \tilde{\Sigma}^{1/2} \Paren{ \theta' - \thetals}  }_2^2
 \leq \bigO{ \eta  }}   \,.
\end{equation*}
\end{lemma}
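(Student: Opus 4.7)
The plan is to bootstrap the squared-norm $A := \norm{\tilde{\Sigma}^{1/2}(\theta' - \thetals)}_2^2 = \tfrac{1}{n}\sum_i \langle x_i^*, u\rangle^2$ (with $u := \theta' - \thetals$) by expressing $A$ as a sum over the corrupted indices (weighted by $1 - r_iw_i$, where $r_i$ indicates whether the $i$-th sample is uncorrupted) and controlling each term via SoS Cauchy--Schwarz. Concretely, combining the (scalar) least-squares optimality $\tfrac{1}{n}\sum x_i^*(y_i^* - \langle x_i^*, \thetals\rangle) = 0$ (which rewrites $A = \tfrac{1}{n}\sum\langle x_i^*, u\rangle(\langle x_i^*,\theta'\rangle - y_i^*)$) with the program's gradient condition $\tfrac{1}{n}\sum x_i'(\langle x_i',\theta'\rangle - y_i') = 0$, then splitting by $r_iw_i$ versus $1 - r_iw_i$ and swapping primes for stars on the clean piece via $r_iw_i x_i' = r_iw_i x_i^*$ and $r_iw_i y_i' = r_iw_i y_i^*$, yields the SoS identity $A = S_1 + T_0 - S_2'$ with
\begin{align*}
T_0 &= \tfrac{1}{n}\sum_i (1-r_iw_i)\langle x_i^*, u\rangle^2, \\
S_1 &= \tfrac{1}{n}\sum_i (1-r_iw_i)\langle x_i', u\rangle(y_i' - \langle x_i',\theta'\rangle), \\
S_2' &= \tfrac{1}{n}\sum_i (1-r_iw_i)\langle x_i^*, u\rangle\, e_i,
\end{align*}
where $e_i := y_i^* - \langle x_i^*, \thetals\rangle$ are the (fixed) clean residuals.

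I would then bound each term by SoS Cauchy--Schwarz, using $\sum(1-r_iw_i) \leq 2\eta n$, $(1-r_iw_i)^2 \leq (1-r_iw_i)$, and the fourth-moment bounds (a) $\tfrac{1}{n}\sum\langle x_i^*, u\rangle^4 \leq O((u^\top\Sigma u)^2)$ from $\eta$-higher-order-goodness (valid because $u$ is the only indeterminate and the $x_i^*$ are fixed), (b) the program's hypercontractivity constraint on $x_i'$, (c) the program's fourth-moment constraint $\tfrac{1}{n}\sum (y_i' - \langle x_i',\theta'\rangle)^4 \leq O(1)$, and (d) the analogous one-dimensional bound $\tfrac{1}{n}\sum e_i^4 \leq O(1)$ from goodness of the noise. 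This gives $T_0 \leq O(\sqrt\eta)\,A$ (after using $\eta$-goodness to replace $\Sigma$ by $\tilde\Sigma$), $S_1^2 \leq O(\eta)\,A$ (after a short self-improving step showing $u^\top\Sigma' u \leq O(1)\,A$ from $u^\top\Sigma' u \leq A + O(\sqrt\eta)\,u^\top\Sigma' u$), and $(S_2')^2 \leq O(\eta)\,A$.

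To conclude, from $A - T_0 = S_1 - S_2'$ the SoS almost-triangle inequality yields $(A-T_0)^2 \leq 2S_1^2 + 2(S_2')^2 \leq O(\eta)\,A$. Expanding and substituting the bound on $T_0$ gives $A^2 \leq O(\sqrt\eta)\,A^2 + O(\eta)\,A$, and for sufficiently small $\eta$ this rearranges to $A^2 \leq O(\eta)\,A$; \cref{fact:cancellation_two_sos} then gives $A \leq O(\eta)$. The main obstacle I expect is achieving this rate rather than the naive $O(\sqrt\eta)$ one: using only the variance constraint $\tfrac{1}{n}\sum(y_i'-\langle x_i',\theta'\rangle)^2 \leq 1 + O(\eta)$ on $S_1$ would yield $S_1^2 \leq O(\sqrt\eta)\,A$ and thus $A \leq O(\sqrt\eta)$. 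The crucial ingredient is invoking the program's fourth-moment residual constraint alongside $(1-r_iw_i)^2 \leq (1-r_iw_i)$, which buys the extra $\sqrt\eta$ factor that closes the argument at the right rate.
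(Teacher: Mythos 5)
Your decomposition is the same as the paper's: write $\tilde{\Sigma}(\theta'-\thetals)$ via the two gradient conditions, split by $r_iw_i$ versus $1-r_iw_i$, and bound the three resulting terms using the fourth-moment constraints (on both $\langle x_i',v\rangle$ and the residuals), which is indeed what buys the $O(\eta)$ rate over the naive $O(\sqrt{\eta})$. The gap is in the SoS-validity of your intermediate inequalities. The bounds $S_1^2 \le O(\eta)A$, $(S_2')^2 \le O(\eta)A$, and the self-improving step $u^\top\Sigma' u \le A + O(\sqrt{\eta})\,u^\top\Sigma' u$ each require extracting a square root of a certified inequality whose right-hand side contains the indeterminate $u$: SoS Cauchy--Schwarz (\cref{fact:sos_cs}) only certifies, e.g., $\bigl(\tfrac1n\sum \langle x_i',u\rangle^2 e_i'^2\bigr)^2 \le \bigl(\tfrac1n\sum\langle x_i',u\rangle^4\bigr)\bigl(\tfrac1n\sum e_i'^4\bigr) \le O(1)\,(u^\top\Sigma' u)^2$, and there is no degree-$O(1)$ derivation of $X \le Y$ from $X^2 \le Y^2$ when $Y$ is a non-constant polynomial (the cancellation facts \cref{lem:square_root_sos,fact:cancellation_two_sos} work only against constants). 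Decoupling the products by AM--GM instead ($\langle x_i',u\rangle^2 e_i'^2 \le c\langle x_i',u\rangle^4 + c^{-1}e_i'^4$) is SoS-valid but yields $S_1^2 \le O(\eta)A^2 + O(\eta)$, which only closes to $A \le O(\sqrt{\eta})$ and loses the claimed rate.

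The repair is exactly the paper's route: raise the identity $A = S_1 + T_0 - S_2'$ to the \emph{fourth} power, so that $A^4 \le O(1)(S_1^4 + (S_2')^4 + T_0^4)$ and each term is bounded by two applications of Cauchy--Schwarz that always land on perfect squares, giving $S_1^4, (S_2')^4 \le O(\eta^2)(u^\top\Sigma' u)^2$ (resp.\ $O(\eta^2)(u^\top\tilde{\Sigma}u)^2$) and $T_0^4 \le \eta^2 A^4$ (absorbed into the left-hand side). The comparison between $\Sigma'$ and $\tilde{\Sigma}$ is likewise done at the squared level, $(u^\top(\Sigma'-\tilde{\Sigma})u)^2 \le O(\eta)(u^\top\tilde{\Sigma}u)^2$, which is the content of \cref{lem:boostrapping_covariance}. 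One then arrives at $A^4 \le O(\eta^2)A^2$ and only now extracts roots against the constant via \cref{fact:cancellation_two_sos,lem:square_root_sos} to conclude $A \le O(\eta)$. (Your handling of $T_0$ is the one piece that survives as stated, since the cross term $2AT_0$ can be controlled from $T_0^2 \le O(\eta)A^2$ by AM--GM with a constant weight, without ever needing $T_0 \le O(\sqrt{\eta})A$ itself.)
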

We will need the following intermediate lemma which we will prove at the end of this section.
Note that the difference to the results in~\cref{sec:cov-estimation-spectral} is that we allow $u$ to be an SoS variable instead of just a fixed vector.
This is essential for obtaining~\cref{lem:bootstrapping_regression}.
\begin{lemma}
\label{lem:boostrapping_covariance}
    Suppose $x_1^*, \ldots, x_n^*$ and $y_1^*, \ldots, y_n^*$ are $\eta$-good and $\eta$-higher-order-good.
    Then it holds that
    \begin{equation*}
    \calA_\eta \sststile{O(1)}{u} \Set{ \Paren{u^\top \Paren{\Sigma' - \tilde{\Sigma}} u}^2 \leq \bigO{ \eta  } \Paren{u^\top \tilde{\Sigma} u}^2  } \,.
    \end{equation*}
\end{lemma}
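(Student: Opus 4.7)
The plan is to decompose $u^\top(\Sigma' - \tilde\Sigma)u$ using the Boolean SoS indicators $r_iw_i$, bound each resulting piece via SoS Cauchy--Schwarz and a hypercontractivity estimate, and finally bootstrap away a residual copy of the left hand side. Keeping $u$ as an SoS indeterminate (rather than fixing it) is essential for downstream use in~\cref{lem:bootstrapping_regression}.

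Since $w_i(x'_i - x_i) = 0$, $r_i(x_i - x_i^*) = 0$ (by the definition of the uncorrupted indicator), and both $w_i$ and $r_i$ are Boolean as SoS identities, we have $r_iw_i \, x'_i = r_iw_i \, x_i^*$ and hence $r_iw_i \langle u, x'_i\rangle^2 = r_iw_i \langle u, x_i^*\rangle^2$. This lets us write
\begin{align*}
u^\top(\Sigma' - \tilde\Sigma)u = \frac{1}{n}\sum_i (1 - r_iw_i)\langle u, x'_i\rangle^2 \;-\; \frac{1}{n}\sum_i (1 - r_iw_i)\langle u, x_i^*\rangle^2,
\end{align*}
so the SoS almost-triangle inequality (\cref{fact:sos-almost-triangle}) bounds the squared left hand side by $2A^2 + 2B^2$ for the two sums $A, B$. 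For each of $A^2, B^2$ I would apply SoS Cauchy--Schwarz (\cref{fact:sos_cs}) with weights $(1-r_iw_i)$ on both sides, using the Boolean identity $(1-r_iw_i)^2 = 1 - r_iw_i$ to pull out the factor $\frac{1}{n}\sum_i (1-r_iw_i) \leq 2\eta$ (SoS-derivable from $\sum_i w_i \geq (1-\eta)n$ together with at most $\eta n$ corruptions). For the remaining fourth-moment average in $A^2$, the hypercontractivity constraint in $\calA_\eta$ instantiated at $v = u$ gives $\tfrac{1}{n}\sum_i \langle u, x'_i\rangle^4 \leq (3+\eta\log^2(1/\eta))(u^\top\Sigma' u)^2$. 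For the analogous average in $B^2$, bound (2) of $\eta$-higher-order goodness (\cref{lem:higher-order-stability}) encodes a PSD inequality for the empirical fourth-moment tensor of the uncorrupted samples; substituting the rank-one $P = uu^\top$ and converting from $\Sigma$ to $\tilde\Sigma$ via the spectral closeness supplied by $\eta$-goodness yields an SoS proof in $u$ that $\tfrac{1}{n}\sum_i \langle u, x_i^*\rangle^4 \leq O(1)(u^\top\tilde\Sigma u)^2$, whence $B^2 \leq O(\eta)(u^\top\tilde\Sigma u)^2$.

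To finish, I would translate the $A^2$ bound from $(u^\top\Sigma'u)^2$ to $(u^\top\tilde\Sigma u)^2$: writing $u^\top\Sigma'u = u^\top\tilde\Sigma u + u^\top(\Sigma'-\tilde\Sigma)u$ and squaring via SoS almost-triangle yields $(u^\top\Sigma' u)^2 \leq 2(u^\top\tilde\Sigma u)^2 + 2(u^\top(\Sigma'-\tilde\Sigma)u)^2$. Substituting back produces an inequality of the form $X \leq O(\eta)Y + O(\eta)X$ with $X = (u^\top(\Sigma'-\tilde\Sigma)u)^2$ and $Y = (u^\top\tilde\Sigma u)^2$; rearranging and dividing by $1 - O(\eta) > 0$ (valid for $\eta$ below a small constant) absorbs the $O(\eta)X$ on the right and gives the claim.

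I expect the main obstacle to be step two, specifically certifying $\tfrac{1}{n}\sum_i \langle u, x_i^*\rangle^4 \leq O(1)(u^\top\tilde\Sigma u)^2$ in SoS with $u$ kept as an indeterminate: higher-order goodness naturally provides a $d^2 \times d^2$ PSD bound phrased in the $\Sigma$-normalized frame, and achieving the right scaling with respect to the empirical $\tilde\Sigma$ requires carefully instantiating $P = uu^\top$ in that matrix inequality, cancelling the $\|u\|^2 \cdot u^\top\tilde\Sigma u$-type cross terms that arise upon expanding $(\langle u, x_i^*\rangle^2 - u^\top\Sigma u)^2$, and then exchanging $\Sigma$ for $\tilde\Sigma$ via the SoS-valid implication $\Sigma \preceq (1+O(\eta\log(1/\eta)))\tilde\Sigma \;\Longrightarrow\; \sststile{2}{u}\{u^\top\Sigma u \leq (1+O(\eta\log(1/\eta)))u^\top\tilde\Sigma u\}$.
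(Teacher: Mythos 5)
Your proposal is correct and follows essentially the same route as the paper's proof: the same $(1-r_iw_i)$ decomposition, SoS Cauchy--Schwarz to extract the $O(\eta)$ factor times fourth moments, the program's hypercontractivity constraint for the $x_i'$ term, certifiable hypercontractivity of the uncorrupted samples (which the paper packages as \cref{fact:cert_bounded_moments_from_concentration}) for the $x_i^*$ term, and the final absorption of the residual $O(\eta)(u^\top(\Sigma'-\tilde{\Sigma})u)^2$ term by rearranging. The obstacle you flag is exactly what \cref{fact:cert_bounded_moments_from_concentration} resolves, via PSD domination of the flattened empirical fourth-moment matrix rather than a literal substitution of $P=uu^\top$ into a fixed-$P$ bound.
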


\begin{proof}

We start by giving the proof of~\cref{lem:bootstrapping_regression}
\begin{proof}[Proof of~\cref{lem:bootstrapping_regression}]
    Note that since for $C > 0$, $\Set{a^4 \leq C a^2} \sststile{\bigO{1}}{a} \Set{a^2 \leq C}$ and $\Set{a^2 \leq C} \sststile{\bigO{1}}{a} \Set{a \leq \sqrt{C}}$ (cf.~\cref{lem:square_root_sos,fact:cancellation_two_sos}), it is enough to show that
    \[
        \calA_\eta \sststile{O(1)}{} \Paren{(\theta' - \thetals)^\top \tilde{\Sigma} (\theta' - \thetals)}^4 \leq \bigO{\eta^2} \Paren{(\theta' - \thetals)^\top \tilde{\Sigma} (\theta' - \thetals)}^2 \,.
    \]
    Analogously as in the proof of~\cref{lem:sos_regression_without_bootstrapping} it holds that
    \[
        \calA_\eta \sststile{\bigO{1}}{} \Biggl\{ \tilde{\Sigma} (\theta' - \thetals)  = \tfrac 1 n \sum_{i=1}^n (1- r_i w_i) x_i^* \cdot \Paren{\Iprod{x_i^*, \thetaprime } - y_i^*} - \tfrac{1} n \sum_{i \in [n]} (1- r_i w_i) x_i' \cdot \Paren{\Iprod{x_i', \thetaprime} - y_i'} \Biggr\} \,.
    \]
    Adding and subtracting $\thetals$ in the inner product of the first sum, we obtain that $\calA$ implies at constant degree that
    \begin{align*}
        \Biggl\{ \tilde{\Sigma} (\theta' - \thetals)  &= \tfrac 1 n \sum_{i=1}^n (1- r_i w_i) x_i^* \cdot \Paren{\Iprod{x_i^*, \thetals } - y_i^*} - \tfrac{1} n \sum_{i \in [n]} (1- r_i w_i) x_i' \cdot \Paren{\Iprod{x_i', \thetaprime} - y_i'}  \\
        &+ \tfrac 1 n \sum_{i=1}^n (1- r_i w_i) x_i^* \cdot \Iprod{x_i^*, \theta' - \thetals }\Biggr\} \,.
    \end{align*}
    Thus, by SoS triangle inequality we obtain that
    \begin{align*}
        \calA_\eta \sststile{\bigO{1}}{} \Biggl\{ \Paren{(\theta' - \thetals)^\top \tilde{\Sigma} (\theta' - \thetals)}^4 &\leq \bigO{1} \underbrace{\Paren{\tfrac 1 n \sum_{i=1}^n (1-r_i w_i) \Iprod{x_i^*, \theta' - \thetals} \Paren{\Iprod{x_i^*,\thetals} - y_i^*}}^4}_{\text{Term A}} \\
        &+\bigO{1} \underbrace{\Paren{\tfrac 1 n \sum_{i=1}^n (1-r_i w_i) \Iprod{x_i', \theta' - \thetals} \Paren{\Iprod{x_i',\theta'} - y_i'}}^4}_{\text{Term B}} \\
        &+\bigO{1} \underbrace{\Paren{\tfrac 1 n \sum_{i=1}^n (1-r_i w_i) \Iprod{x_i^*, \theta' - \thetals}^2}^4}_{\text{Term C}}\Biggr\} \,.
    \end{align*}
\end{proof}
By SoS triangle inequality and $\eta$-higher-order-goodness and~\cref{fact:cert_bounded_moments_from_concentration} (SoS proof of hypercontractive fourth moments) it follows that Term C above is at most $\eta^2 \Paren{(\theta' - \thetals)^\top \tilde{\Sigma}(\theta' - \thetals)}^4$ which is a small multiple of the left-hand side.
Thus, we can ignore the last summand by rearranging.

\paragraph{Bounding Terms A and B.}
We continue to bound Terms A and B.
For Term A, we obtain using again~\cref{fact:cert_bounded_moments_from_concentration}, SoS Cauchy Schwarz, and that $ \tfrac 1 n \sum_{i=1}^n \Paren{\Iprod{x_i^*,\thetals} - y_i^*}^4 \leq \bigO{1}$ (cf.~\cref{sec:feasibility_regression})
\begin{align*}
    \calA_\eta &\sststile{\bigO{1}}{} \Biggl\{ \Paren{\tfrac 1 n \sum_{i=1}^n (1-r_i w_i) \Iprod{x_i^*, \theta' - \thetals} \Paren{\Iprod{x_i^*,\thetals} - y_i^*}}^4 \\
    &\leq \eta ^2 \Paren{\tfrac 1 n \sum_{i=1}^n \Iprod{x_i^*, \theta' - \thetals}^2 \Paren{\Iprod{x_i^*,\thetals} - y_i^*}^2}^2 \\
    &\leq \eta ^2 \Paren{\tfrac 1 n \sum_{i=1}^n \Iprod{x_i^*, \theta' - \thetals}^4} \Paren{ \tfrac 1 n \sum_{i=1}^n \Paren{\Iprod{x_i^*,\thetals} - y_i^*}^4}  \\
    &\leq \bigO{\eta^2} \Paren{(\theta' - \thetals)^\top \tilde{\Sigma} (\theta' - \thetals)}^2\Biggr\}
\end{align*}

Similarly, using the constraints that there is an SoS proof in variables $v$ that $\tfrac{1}{n} \sum_{i=1}^n \Iprod{x_i',v}^4 \leq \bigO{1} \Iprod{u, \Sigma' u}^2$ and that $\tfrac{1}{n} \sum_{i=1}^n (\Iprod{x_i', \theta'} - y_i')^4 \leq \bigO{1}$, we can bound Term B as follows
\begin{align*}
    \calA_\eta &\sststile{\bigO{1}}{} \Biggl\{ \Paren{\tfrac 1 n \sum_{i=1}^n (1-r_i w_i) \Iprod{x_i', \theta' - \thetals} \Paren{\Iprod{x_i',\theta'} - y_i'}}^4 \\
    &\leq \eta^2 \Paren{\tfrac 1 n \sum_{i=1}^n \Iprod{x_i', \theta' - \thetals}^2 \Paren{\Iprod{x_i',\theta'} - y_i'}^2}^2 \\
    &\leq \eta^2 \Paren{\tfrac 1 n \sum_{i=1}^n \Iprod{x_i', \theta' - \thetals}^4} \Paren{ \tfrac 1 n \sum_{i=1}^n\Paren{\Iprod{x_i',\theta'} - y_i'}^4} \\
    &\leq \bigO{\eta^2} \Paren{(\theta' - \thetals)^\top \Sigma' (\theta' - \thetals)}^2\Biggr\}
\end{align*}

Thus, overall we have shown that
\begin{align*}
    \calA_\eta &\sststile{\bigO{1}}{} \Biggl\{\Paren{(\theta' - \thetals)^\top \tilde{\Sigma} (\theta' - \thetals)}^4 \\
    &\leq \bigO{\eta^2} \Paren{\Paren{(\theta' - \thetals)^\top \tilde{\Sigma} (\theta' - \thetals)}^2 + \Paren{(\theta' - \thetals)^\top \Sigma' (\theta' - \thetals)}^2}\Biggr\} \,.
\end{align*}
Applying SoS triangle and~\cref{lem:boostrapping_covariance} (bootstrapping of covariance estimation), we can deduce that $\calA_\eta$ implies at constant degree that the right-hand side of the above is at most
\begin{align*}
    &\bigO{\eta^2} \Paren{\Paren{(\theta' - \thetals)^\top \tilde{\Sigma} (\theta' - \thetals)}^2 + \Paren{(\theta' - \thetals)^\top \Sigma' (\theta' - \thetals)}^2}  \\
    &\leq \bigO{\eta^2} \Paren{\Paren{(\theta' - \thetals)^\top \tilde{\Sigma} (\theta' - \thetals)}^2 + \Paren{(\theta' - \thetals)^\top \Paren{\Sigma' - \tilde{\Sigma}} (\theta' - \thetals)}^2}  \\
    &\leq \bigO{\eta^2} \Paren{(\theta' - \thetals)^\top \tilde{\Sigma} (\theta' - \thetals)}^2 \,,
\end{align*}
as desired.

\end{proof}

It remains to prove~\cref{lem:boostrapping_covariance}.
\begin{proof}[Proof of~\cref{lem:boostrapping_covariance}]
    Note that by SoS triangle inequality, it holds that 
    \[
        \sststile{O(1)}{u} \Biggl\{\eta \Paren{u^\top \Sigma' u}^2 \leq \bigO{\eta} \Paren{u^\top \tilde{\Sigma} u}^2 + \bigO{\eta} \Paren{u^\top \Paren{\Sigma' - \tilde{\Sigma}} u}^2 \Biggr\}\,.
    \]
    Thus, by rearranging, it is enough to show that
    \begin{align}
        \sststile{O(1)}{u} \Biggl\{\Paren{u^\top \Paren{\Sigma' - \tilde{\Sigma}} u}^2 \leq \bigO{\eta} \Paren{u^\top \tilde{\Sigma} u}^2 + \bigO{\eta} \Paren{u^\top \Sigma' u}^2 \Biggr\}\,. \label{eq:to_show_cov_bootstrapping}
    \end{align}
    Plugging in the definition of $\tilde{\Sigma}$ and $\Sigma'$ and using SoS triangle inequality and SoS Cauchy-Schwarz, we obtain
    \begin{align*}
        \calA_\eta \sststile{O(1)}{u} \Biggl\{\Paren{u^\top\Paren{\Sigma' - \tilde{\Sigma}} u}^2 &= \Paren{\tfrac 1 n \sum_{i=1}^n (1-r_iw_i) \Iprod{u, x_i'}^2}^2 + \Paren{\tfrac 1 n \sum_{i=1}^n (1-r_iw_i) \Iprod{u, x_i^*}^2}^2 \\
        &\leq \eta \cdot \tfrac 1 n \sum_{i=1}^n \Iprod{u, x_i'}^4 + \eta \cdot \tfrac 1 n \sum_{i=1}^n \Iprod{u, x_i^*}^4
        \Biggr\} \,.
    \end{align*}
    Now, by our constraint on the fourth moments, we now that there exists an SoS proof in variables $u$ that the sum in the first term is at most $\bigO{1} (u^\top \Sigma' u)^2$.
    Further, by $\eta$-higher-order-goodness and~\cref{fact:cert_bounded_moments_from_concentration} it follows that there also exists an SoS proof in variables $u$ that the sum in the second term is at most $\bigO{1} (u^\top \Sigma' u)^2$.
    Together, this implies~\cref{eq:to_show_cov_bootstrapping}.
\end{proof}

\subsection{Private Regression}
\label{sec:private_regression}

We will now show how to transform the previous robust regression algorithm into a private algorithm.
This will prove~\cref{thm:main_private_regression} restated below.
\restatetheorem{thm:main_private_regression}

Throughout this section, we will assume that $\Sigma$ is invertible. However, it is easy to reduce to this setting when working with private algorithms by adding Gaussian noise to the covariates with variance $\alpha^2/R$. Note that we have that this is equivalent to a model where the noise on labels is equal to $1 + \norm{\theta} \cdot \alpha^2 / R \leq 1 + \alpha^2$ and we note that our robust estimator (and therefore the resulting private estimator) works in the setting where the label noise has variance at least $1$ and at most $1 + \bigO{\alpha / \log(1/\alpha)}$.

Our algorithm will be an instantiation of the exponential mechanism.
The score function we use is based on the following modification of the SoS constraint system introduced in the previous section, where $T \in [n]$. Throughout this section when showing the utility and volume guarantees of our score function we will assume that $x_i$ are an $\alpha/\sqrt{1/\alpha}$-corruption of Gaussian samples and thus our algorithm will also be robust to the same fraction of corruptions.
For technical reasons that will become clear later on, we will also require an SoS variable which encodes the inverse of the ``SoS covariance matrix'' $\Sigma'$.
\begin{equation*}
\calB_{T}(x,y)=
  \left \{
    \begin{aligned}
      &\forall i\in [n].
      & w_i^2
      & = w_i \\
      & &\sum_{i\in [n]} w_i &\geq n - T \\
      &&\tfrac{1} n \sum_{i \in [n]} x_i'(x_i')^\top &= \Sigma' \\
      &\forall i\in [n] & w_i(x'_i - x_i) = 0&\,,\; w_i(y'_i - y_i) =0 \\
      &&\tfrac{1} n \sum_{i \in [n]} \Paren{\Iprod{x_i', \thetaprime} - y_i'}x_i' &= 0 \\
      &&\tfrac{1} n \sum_{i \in [n]} \Paren{y_i' - \Iprod{\thetaprime, x_i'}}^2 &\leq 1+O(\eta) \\
      &\forall v \in \mathbb{R}^d
      & \frac{1}{n}\sum_{i \in [n]}  {\langle x'_i , v\rangle^{4}} 
      &\leq  \Paren{3+\eta \log^2 (1/\eta)} \left(  v^\top \Sigma' v \right)^{2} \\
      & & \tfrac{1} n \sum_{i \in [n]} \Paren{y_i' - \Iprod{\thetaprime, x_i'}}^4 &\leq \bigO{1} \\
      & &Q = Q^\top \,,\quad \Sigma' Q \Sigma' = \Sigma' \,,\quad  Q \Sigma' Q &= Q \,,\quad Q \Sigma' = I_d \,,\quad \Sigma' Q = I_d
    \end{aligned}
  \right \}
\end{equation*}

Roughly speaking, the score candidate point $\tilde{\theta}$ will be the smallest $T$ such that there exists a pseudo-expectation $\pE$ such that the following three conditions are met
\begin{enumerate}
    \item $\pE \sdtstile{}{} \calB_T$ at degree $\bigO{1}$.
    \item $\norm{\pE \theta'}_2 \leq R$.
    \item For all fixed vectors $u \in \R^d$ (i.e., that do depending on indeterminates), it holds that
    \[
        \Iprod{u, \pE \theta' - \tilde{\theta}}^2 \leq \bigO{\alpha^2} \pE \Iprod{u, Q u} \,.
    \]
\end{enumerate}

The last condition is the most crucial one.
On a high level, the score corresponds to the smallest number of input data points we have to change such that the candidate $\tilde{\theta}$ is close the output of our robust estimator in the ``$\Sigma^{1/2}$''-error-metric.
That is, such that $\norm{\Sigma^{1/2}(\pE \theta' - \tilde{\theta})} \leq O(\alpha)$.
Note that this depends on the unknown covariance $\Sigma$ and we can thus not evaluate it.
The closeness constraints above acts as a proxy for this, by instead considering (some version of) the metric induced by $\pE \Sigma'$.
The form above turns out to be the right one that satisfies necessary preconditions of~\cref{thm:pure_dp_reduction} (namely, quasi-convexity and efficient computability).

\paragraph{Certifiable parameters and definition of the score function.}
Before giving the formal definition of our score function, we need one other definition.
\begin{definition}[Certifiable Parameter]
    \label{def:cert_parameter_regression}
    Let $\alpha, \tau \in \R^{\ge 0}$, $n \in \N$ and $T \in [0,n]$.
    Further, let $x_1, \dots, x_n \in \R^d$, $y_1, \dots, y_n \in \R$.
    We call a parameter $\tilde{\theta}$, a $\paren{\alpha, \tau, T}$ certifiable parameter for $\paren{x_1, y_1}, \dots, \paren{x_n, y_n}$ if there exists a linear functional $\mathcal{L}$ which $\tau$-approximately satisfies $\calB_T(x,y)$ (cf.~\cref{def:tau_relaxed_system})\footnote{Our constraint system includes constraints regarding the existence of SoS proofs for all vectors $v$ but these constraints can be encoded as polynomial inequalities and we apply the approximate satisfiability definition to this form of the constraint.} and such that
    \begin{enumerate}
        \item $\norm{\calR \paren{\calL}}_F \le R' + \tau \cdot T$,  where $R' = \poly\paren{n, d, R}$ is sufficiently large and $\calR \paren{\calL}$ is the matrix representation of $\calL$.
        \item $\bigO{\alpha^2} \cdot \calL \brac{Q} - \calL\brac{\theta' - \tilde{\theta}} \calL\brac{\theta' - \tilde{\theta}}^\top \succeq -\tau T \cdot I_d$ 
        \item $\norm{\calL \theta'}_2 \leq 2R + \tau \cdot T$
    \end{enumerate}
    Furthermore, we will refer to the linear functional $\calL$ as a $(\alpha, \tau, T)$ certificate for $((x,y),\tilde{\theta})$.
\end{definition}
For our purposes, we will end up setting $\tau = 1/(n \cdot d \cdot R \cdot \alpha^{-1} \cdot \covscale)^C$, for a large enough absolute constant $C$.
Now we use this definition to define a score function.
\begin{definition}[Score Function]
\label{def:score_function_regression}
Let $\alpha, \tau \in \R^{\ge 0}$, $n \in \N$ and $T \in [0,n]$.
Further, let $x_1, \dots, x_n \in \R^d$, $y_1, \dots, y_n \in \R$. Let $\mathbb{B}^d \paren{2R + n \tau + \alpha \sqrt{\covscale}}$ denote the $\ell_2$-ball of radius $2R + n \tau + \alpha \sqrt{\covscale}$ around the origin in $\R^d$. We define the score function $\calS\paren{\cdot \;; x, y; \alpha, \tau}: \mathbb{B}^d \paren{2R + n \tau + \alpha \sqrt{\covscale}} \to \R$ as follows:
\begin{equation*}
    \calS\paren{\tilde{\theta}; x, y; \alpha, \tau} = \min_{T \in [n]} \quad  \text{such that}\quad \tilde{\theta} \text{ is a } \paren{\alpha, \tau, T} \text{ certifiable parameter for } \paren{x, y}.
\end{equation*}
\end{definition}

\begin{remark}[On the domain of the score function]
    Note that the domain of the score function is chosen such that even if the true parameter $\theta$ is on the edge of the $2R$ ball we are promised it is contained in, all points such that $\norm{\Sigma^{1/2}(\tilde{\theta} - \theta)} \leq \alpha$ are contained within the domain of the score function. This is important because these are our low scoring points and we need a large enough volume of them for our sampling algorithm to output a point with low score.
\end{remark}

\begin{remark}[On the well-definedness of the score function]
    Note that in general for closeness constraints and arbitrary size domains this function would not necessarily be well defined, because there might be \emph{no} $T$ such that a candidate point is a $(\alpha, \tau, T)$ certifiable parameter.
    For example, taking the closeness constraint $\norm{\calL \theta' - \tilde{\theta}}_\infty \leq \alpha/\sqrt{d}$ (which was used in~\cite{Hopkins2023Robustness}) points $\tilde{\theta}$ with norm $2R + \tau T + 2\alpha$ are not necessarily $(\alpha, \tau, n)$-certifiable parameters since there may not be $\theta'$ which is simultaneously norm at most $2R + \tau T$ and also $\alpha$-close to $\tilde{\theta}$. However, in our case every point has score at most $n$, which we will show in~\cref{lem:score-function-ub-regression}.
\end{remark}

In the rest of this section we will show that the score function defined above satisfies the conditions of \Cref{thm:pure_dp_reduction}.
\begin{enumerate}
    \item Bounded Sensitivty (\cref{lem:bounded-sensitivity-regression}: We will show that the score function has bounded sensitivity with respect to the input data (i.e., $(x,y)$).
    \item Quasi Convexity (\cref{lem:regression-quasiconvexity}): We will show that the score function is quasi-convex with respect to the parameter $\tilde{\theta}$.
    \item Volume (\cref{cor:volume_considerations_regression}): We will show that the volume of the set of points $\tilde{\theta}$ that have score at most $\eta \cdot n$ is sufficiently large and the volume of the points with score at most $\eta' \cdot n$ for $\eta' > \eta$ is sufficiently small.
    \item Efficient Computability (\cref{lem:efficient-comp-regression}): We will verify that the score is efficiently computable for a fixed $\tilde{\theta}, (x, y)$.
    \item Robust algorithm finds a low-scoring parameter efficiently (\cref{lem:find-low-score-regression}): We verify that finding $\tilde{\theta}$ that minimizes the score up to error $1$ for a fixed $(x, y)$ can be done efficiently.
\end{enumerate}
We will also show that a low scoring point $\tilde{\theta}$ achieves good accuracy, i.e., such that $\norm{\Sigma^{1/2}(\tilde{\theta} - \theta)}$ is small.
We first focus on quasi-convexity, accuracy and volume, for which our new ``closeness constraint'' is key.
We will then address the other required properties, which are similar to previous applications of this transformation in \cite{Hopkins2023Robustness}.

We introduce some additional notation.
For any given desired accuracy $\alpha$ we let $\eta(\alpha) = \alpha/\sqrt{\log (1/\alpha)}$.
Further, we denote by $\eta^*$ the breakdown point of our robust estimator.
For simplicity, we may omit the dependence of the score function on the parameters $\alpha, \tau$ in the notation and write $\eta = \eta(\alpha)$. 

On a high level, our proof strategy is as follows:
We will show that every point $\tilde{\theta}$ with score $\eta(\alpha) n$ has accuracy at least $O(\alpha)$.\footnote{$\eta(\alpha)$ will also correspond to the fraction of corruptions our algorithm is robust to.}
To show that the exponential mechanism outputs such a point with high probability, we show that the volume of low-scoring points is not too small and the volume of high-scoring points is not too large.
To this end, we analyze points of score less or more than $\eta^* n$ separately, appealing to properties of our robust estimator in the first case, and boundedness of the domain in the second.

\subsubsection{Properties of the Closeness Constraint}

We record the following lemmas we need about our closeness constraint.
We remark that in the lemma below we do note require an SoS proof of the inequality, but just that it be true as an inequality after applying pseudo-expectations.

The two key innovative properties of our closeness constraint is that we are able to enforce closeness in (an adequate proxy of) $\Sigma^{1/2}$-norm without explicitly estimating $\Sigma$ or paying unnecessary factors of $\log L$ in volume computations.

In particular, we show the following lemma, which characterizes $\tilde{\theta}$ which have score at most $T$ via relation to the true parameter $\theta^*$. It will imply both the utility of our final estimator (via accuracy of low-scoring points) as well as the bounded volume of points with score which is large, but below the breakdown point of our estimator. 
\begin{lemma}
\label{lem:closeness_constraint_soundness}
    Let $T \leq \eta^* n$.
    Let $\tilde{\theta} \in \R^d$ be such that there exists a pseudo-distribution $\pE$ such that $\pE \sdtstile{}{} \calB_T$ and it holds that for every fixed vector $u \in \R^d$ the following inequality is true
    \begin{equation}
    \label{eq:reg_closeness}
        \Iprod{u,\pE\theta' - \tilde{\theta}}^2 \leq \bigO{\alpha^2} \cdot \pE \Iprod{u, Q u} \,.
    \end{equation}
    Then, it follows that $\norm{\Sigma^{1/2}(\tilde{\theta} - \theta^*)} \leq \bigO{\alpha + \tfrac T n \sqrt{\log (\tfrac n T)}}$.
    Further, the same holds for linear operators that $\tau$-approximately satisfy $\calB_T$ for $\tau = (ndR)^{-C}$ for $C$ a large enough absolute constant and for which~\cref{eq:reg_closeness} holds up to slack $\tau T \norm{u}^2$.
\end{lemma}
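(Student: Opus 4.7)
The plan is to bound $\norm{\Sigma^{1/2}(\tilde\theta - \theta^*)}$ by the triangle-inequality decomposition
\[
  \Norm{\Sigma^{1/2}(\tilde\theta - \theta^*)} \;\leq\; \underbrace{\Norm{\Sigma^{1/2}(\tilde\theta - \pE\theta')}}_{(\mathrm{I})} \;+\; \underbrace{\Norm{\Sigma^{1/2}(\pE\theta' - \theta^*)}}_{(\mathrm{II})}\,,
\]
and to handle the two terms by two distinct mechanisms: the closeness constraint \eqref{eq:reg_closeness} for $(\mathrm{I})$, and the robust regression SoS guarantee \cref{lem:main_sos_proof_regression} (with corruption level $\eta = T/n$, since $\calB_T$ contains all constraints of $\calA_{T/n}$) for $(\mathrm{II})$. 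Term $(\mathrm{II})$ carries the $\tfrac{T}{n}\sqrt{\log(n/T)}$ contribution while term $(\mathrm{I})$ carries the $\alpha$ contribution.

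For term $(\mathrm{I})$, I would instantiate \eqref{eq:reg_closeness} at the \emph{fixed} vector $u = \Sigma(\tilde\theta - \pE\theta')$, which gives
\[
  \Norm{\Sigma^{1/2}(\tilde\theta - \pE\theta')}^{4} \;=\; \Iprod{u,\,\pE\theta' - \tilde\theta}^2 \;\leq\; \bigO{\alpha^2}\cdot \pE\,\Iprod{u, Q u}\,.
\]
To convert the right-hand side back into a $\Sigma^{-1}$-bilinear form, I would invoke \cref{lem:sos_inverse_covariance}, which certifies that for a fixed $u$, $\pE\,\Iprod{u,Qu} \leq \bigO{1}\cdot \Iprod{u,\Sigma^{-1}u}$. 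With our choice of $u$, this reads $\pE\,\Iprod{u,Qu} \leq \bigO{1}\cdot \Norm{\Sigma^{1/2}(\tilde\theta - \pE\theta')}^2$. Substituting back and canceling a factor of $\Norm{\Sigma^{1/2}(\tilde\theta - \pE\theta')}^2$ yields $(\mathrm{I}) \leq \bigO{\alpha}$.

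For term $(\mathrm{II})$, the second part of \cref{lem:main_sos_proof_regression} (with $\eta = T/n \leq \eta^*$) gives, for any fixed $u$, an SoS proof of $\Iprod{u,\theta' - \thetals}^2 \leq \bigO{(T/n)^2 \log^2(n/T)}\,u^\top \Sigma^{-1} u$. Applying $\pE$, then Cauchy--Schwarz for pseudo-expectations (\cref{fact:pe_cs}), and finally setting the fixed vector $u = \Sigma(\pE\theta' - \thetals)$ gives $\Norm{\Sigma^{1/2}(\pE\theta'-\thetals)} \leq \tilde{O}(T/n)$. Combining with the standard least-squares concentration bound $\Norm{\Sigma^{1/2}(\thetals - \theta^*)} \leq \bigO{\sqrt{(d+\log(1/\beta))/n}}$, which is absorbed into the first $\alpha$ term for our sample size, we obtain the required bound for $(\mathrm{II})$ (up to the $\sqrt{\log}$ vs.\ $\log$ gap, which can be tightened by invoking the subset resilience bound in its $\sqrt{\log}$ form instead of the hypercontractive $\log^2$ form).

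The main obstacle is showing that $\pE Q$ is a valid proxy for $\Sigma^{-1}$ under fixed-direction quadratic forms, i.e., \cref{lem:sos_inverse_covariance}; once available, the argument above is a clean two-step triangle inequality with a single substitution of $u$. The last piece is the approximate-satisfaction extension: I would carry through the same calculation replacing $\pE$ by a linear operator $\calL$ that $\tau$-approximately satisfies $\calB_T$ and for which \eqref{eq:reg_closeness} holds up to additive slack $\tau T \Norm{u}^2$. Every use of an inequality above collects an error of order $\tau T \cdot \poly(\Norm{u}, \Norm{\Sigma}, d)$, and because $\tau = (ndR)^{-C}$ for a sufficiently large absolute constant $C$, all such errors are dominated by $\alpha$ and are absorbed into the final bound.
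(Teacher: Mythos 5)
Your proposal is correct and follows essentially the same route as the paper's proof: both reduce the bound to the decomposition through $\pE\theta'$ and $\thetals$, control the $\pE\theta'$-to-$\thetals$ piece with the second part of \cref{lem:main_sos_proof_regression} at corruption level $T/n$, and control the $\tilde{\theta}$-to-$\pE\theta'$ piece by combining the closeness constraint with \cref{lem:sos_inverse_covariance} to replace $\pE Q$ by $\Sigma^{-1}$. The only difference is presentational — you substitute specific choices of $u$ and cancel, while the paper states the equivalent bound uniformly over all directions $u$ — and your handling of the approximate-satisfaction extension matches the paper's (which likewise only asserts it is "immediate").
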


Before we give the proof of~\cref{lem:closeness_constraint_soundness} we state the following lemma which we need in the proof. We defer its proof to~\cref{sec:deferred_privacy_sos_proofs}.
\begin{lemma}
\label{lem:sos_inverse_covariance}
Let $0 < \eta $ be smaller than a sufficiently small constant and suppose $x_1^*, \ldots, x_n^*$ are $\eta$-good and $\eta$-higher-order-good (cf.~\cref{def:eps_goodness,def:higher_order_goodness}).
For any fixed vector $u$, i.e., that is not an indeterminate, we have that
\begin{align*}
    \calB_{\eta n} \sststile{\bigO{1}}{} \Biggl\{ \Iprod{u, Q u}^2 \leq \bigO{1}  \Paren{u^\top \Sigma^{-1} u}^2 \Biggr\} \,.
\end{align*}
\end{lemma}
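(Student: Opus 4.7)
The plan is to leverage the inverse constraints $Q\Sigma' = \Sigma'Q = I_d$ from $\calB_{\eta n}$ to relate $Q$ to $\Sigma^{-1}$ via the perturbation $\Sigma - \Sigma'$, and then control this perturbation using a \emph{bilinear} SoS analogue of~\cref{lem:main_identity_cov_sos_proof}. The starting point is the polynomial identity
\[
  Qu = \Sigma^{-1}u + Q(\Sigma - \Sigma')\Sigma^{-1}u \,,
\]
obtained by multiplying the constraint $Q\Sigma = I_d + Q(\Sigma - \Sigma')$ on the right by $\Sigma^{-1}$. Taking the inner product with $u$, squaring, and applying the SoS triangle inequality reduces the task to bounding $\Iprod{\Sigma - \Sigma',\,(Qu)(\Sigma^{-1}u)^\top}^2$ by $\bigO{1}(u^\top \Sigma^{-1}u)^2$ plus a small multiple of $(u^\top Qu)^2$ that can later be absorbed.

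The key intermediate step is the bilinear SoS bound
\[
  \calB_{\eta n}\sststile{\bigO{1}}{a,b}\Set{\Iprod{\Sigma - \Sigma',\,ab^\top}^2 \leq \bigO{\eta}\Paren{(a^\top \Sigma a)^2 + (b^\top \Sigma b)^2}} \,,
\]
which, unlike the diagonal version in~\cref{lem:main_identity_cov_sos_proof}, must hold uniformly over SoS indeterminates $a$ and $b$. I will prove it by splitting $\Sigma - \Sigma' = (\Sigma - \tilde{\Sigma}) + (\tilde{\Sigma} - \Sigma')$. The fixed-matrix piece is handled by~\cref{fact:nonnegative-quadratic} applied to the PSD inequality $(\Sigma - \tilde{\Sigma})\Sigma^{-1}(\Sigma - \tilde{\Sigma}) \preceq \bigO{\eta^2 \log^2(1/\eta)}\Sigma$, which itself follows from $\eta$-goodness of the samples (\cref{fact:mean_and_cov_of_1-eps}). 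The second piece is expanded using the coupling identity $r_i w_i x_i' = r_i w_i x_i^*$ and then bounded via SoS Cauchy--Schwarz combined with the constraint $\sum_i (1-w_i) \leq \eta n$, the hypercontractivity constraint on the $x_i'$, and Gaussian hypercontractivity of the $x_i^*$. Substituting $a = Qu$ and $b = \Sigma^{-1}u$ (valid by the SoS substitution rule, since $u$ is fixed and $Q$ is an SoS indeterminate) then specializes the bilinear bound to our setting.

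To absorb the leftover $(u^\top Q\Sigma Qu)^2$ term, I use the constraint $Q\Sigma'Q = Q$ to write $u^\top Q\Sigma Qu = u^\top Qu + \Iprod{\Sigma - \Sigma',\,(Qu)(Qu)^\top}$, and apply the diagonal case of the bilinear bound together with~\cref{lem:square_root_sos} to deduce $(u^\top Q\Sigma Qu)^2 \leq \bigO{1}(u^\top Qu)^2$ after rearrangement. Combining everything gives
\[
  (u^\top Qu)^2 \leq \bigO{1}(u^\top\Sigma^{-1}u)^2 + \bigO{\eta}(u^\top Qu)^2 \,,
\]
and a final SoS rearrangement (valid for $\eta$ smaller than an absolute constant) yields the claim.

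The main obstacle is the bilinear bound itself: the proof of~\cref{lem:main_identity_cov_sos_proof} critically uses the Selector Lemma (\cref{lemma:subset_selection}), which requires the summand coefficients $\Iprod{a, x_i^*}^2$ to be fixed scalars and hence does not directly lift to SoS $a,b$. I work around this by replacing each appeal to the Selector Lemma with a coarser SoS Cauchy--Schwarz step---for instance, $\Paren{\tfrac 1 n \sum_i (1-r_iw_i)\Iprod{a, x_i^*}^2}^2 \leq \Paren{\tfrac 1 n \sum_i(1-r_iw_i)^2}\Paren{\tfrac 1 n \sum_i \Iprod{a, x_i^*}^4} \leq \bigO{\eta}(a^\top \Sigma a)^2$, using $(1-r_iw_i)^2 = 1-r_iw_i$, $\sum_i (1-r_iw_i) \leq 2\eta n$, and Gaussian fourth-moment concentration in SoS. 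This weakens the per-direction rate from $\eta \log(1/\eta)$ to $\sqrt{\eta}$, but since the final rearrangement only requires the coefficient of $(u^\top Qu)^2$ on the right to be strictly less than $1$, the weaker rate is sufficient for establishing the $\bigO{1}$-approximation of $\Sigma^{-1}$ by $Q$.
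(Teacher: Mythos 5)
Your proposal is correct and follows essentially the same route as the paper: both proofs pivot on a \emph{bilinear} SoS closeness bound $\Iprod{\tilde{\Sigma} - \Sigma', ab^\top}^2 \lesssim \eta\,(a^\top\Sigma' a)^2 + \eta\,(b^\top\Sigma' b)^2$ for indeterminate $a,b$ (\cref{lem:sos_assymetric_cov_estimation}), proved precisely by replacing the Selector Lemma with coarser SoS Cauchy--Schwarz plus the hypercontractivity constraints, and then combine it with the inverse constraints on $Q$ and an absorption step. The one notable difference is in the algebraic manipulation of the inverse constraints: the paper expands $Q - \tilde{\Sigma}^{-1}$ via a chain of substitutions $Q = Q\Sigma'Q\Sigma'Q$, which arranges for the quadratic form appearing after the bilinear bound to be exactly $u^\top Q\Sigma' Q u = u^\top Q u$; your identity $Q = \Sigma^{-1} + Q(\Sigma - \Sigma')\Sigma^{-1}$ is simpler to write down but leaves you with $u^\top Q\Sigma Q u$, forcing the extra nested absorption step via $Q\Sigma'Q = Q$ and a second application of the diagonal bilinear bound. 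Both work; the paper's version is slightly more economical, and also bounds the bilinear quantity in terms of $\Sigma'$ rather than $\Sigma$ for exactly this reason.
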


\begin{proof}[Proof of~\cref{lem:closeness_constraint_soundness}]
    We only show the proof for exact pseudo-expectations.
    The extension to the approximate case is immediate.
    Let $T \leq \eta^*n$ and $\tilde{\theta}$ be such that there exists a pseudo-distribution $\pE$ such that $\pE \sdtstile{\bigO{1}}{} \calB_T$ and such that for every fixed vector $u$ it holds that
    \[
        \Iprod{u, \pE\theta' - \tilde{\theta}}^2 \leq \bigO{\alpha^2} \cdot \pE \Iprod{u, Q u} \,.
    \]
    Recall that we want to show that $\norm{\Sigma^{1/2}(\tilde{\theta}- \theta^*)} \leq \bigO{\alpha + \tfrac T n \log (\tfrac{n}{T})}$.
    It is enough to show that $\norm{\Sigma^{1/2}(\tilde{\theta}- \thetals)} \leq \bigO{\alpha + \tfrac T n \log (\tfrac{n}{T})}$.
    Let $\gamma =  \tfrac T n \log (\tfrac{n}{T})$.
    Note that this is equivalent to show that for every vector $u$, it holds that
    \[
        \Iprod{u, \Sigma^{1/2} (\tilde{\theta} - \thetals)}^2 \leq \bigO{\alpha^2 + \gamma^2} \cdot \norm{u}^2 \,.
    \]
    Which again is equivalent to showing that for every vector $u$, it holds that
    \[
        \Iprod{u, \tilde{\theta} - \thetals}^2 \leq \bigO{\alpha^2 + \gamma^2} \cdot \Iprod{u, \Sigma^{-1} u} \,.
    \]
    
    First, note that by Cauchy-Schwarz for pseudo-expectations (cf.~\cref{fact:pseudo-expectation-cauchy-schwarz}) and since $(a+b)^2 \leq 2a + 2b$ it holds that (for all real numbers)
    \begin{align*}
        \Iprod{u, \tilde{\theta} - \theta^*}^2 = \Iprod{u, \paren{\pE[\theta'] - \tilde{\theta}} - \paren{\thetals - \pE[\theta']}]}^2 \leq 2 \Paren{\pE \Iprod{u, \tilde{\theta} - \theta'}}^2 + 2 \Paren{\pE \Iprod{u, \theta' - \thetals}}^2 \,.
    \end{align*}
    The second term is at most $\bigO{\gamma^2} \Iprod{u, \Sigma^{-1} u}$ by the second part of~\cref{lem:main_sos_proof_regression}.
    By assumption, the first term is at most $\bigO{\alpha^2} \pE \Iprod{u, Q u}$.
    Thus, by~\cref{lem:sos_inverse_covariance} it follows that the second term is at most $\bigO{\alpha^2} \pE \Iprod{u, \Sigma^{-1} u}$ which yields the claim.
\end{proof}

\paragraph{Volume of High and Low Scoring Points.}
\begin{corollary}
    \label{cor:volume_considerations_regression}
    Let $T \leq \eta^* n$ and $n \geq \Omega\Paren{\frac{n^2 + \log^2(1/\beta)}{\alpha^2}}$.
    Then it holds that:
    \begin{enumerate}
        \item Every point in a $\Sigma^{1/2}$-ball of radius $O(\eta \sqrt{\log (1/\eta)})$ around $\thetals$ has score $T \leq \eta n  $
        \item Every point of score at most $\eta^* n$ is contained in a $\Sigma^{1/2}$-ball around $\theta$ of radius $O(1)$.
    \end{enumerate}
\end{corollary}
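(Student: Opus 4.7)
The plan is to treat the two parts separately. Part 2 is essentially immediate from \cref{lem:closeness_constraint_soundness}: if $\tilde\theta$ has score at most $\eta^* n$, then by \cref{def:cert_parameter_regression,def:score_function_regression} there exists a linear functional $\calL$ that $\tau$-approximately satisfies $\calB_T$ for some $T \le \eta^* n$ along with the closeness PSD inequality and the norm bound on $\calL[\theta']$. Unrolling the PSD inequality into quadratic forms against arbitrary unit $u \in \R^d$ recovers exactly the hypothesis of the approximate-operator version of \cref{lem:closeness_constraint_soundness}, which then gives $\|\Sigma^{1/2}(\tilde\theta - \theta^*)\| \le O(\alpha + (T/n)\sqrt{\log(n/T)})$. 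Since $\eta^*$ is a universal constant (the breakdown point of our robust estimator) and $\alpha \le 1$, the right-hand side is $O(1)$, proving Part 2.

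For Part 1, I will fix any $\tilde\theta$ with $\|\Sigma^{1/2}(\tilde\theta - \thetals)\| \le O(\eta\sqrt{\log(1/\eta)})$ and construct an explicit $(\alpha,\tau,\eta n)$-certificate, forcing the score to be at most $\eta n$. The natural candidate is the ``honest'' pseudo-expectation witnessing feasibility of $\calB_{\eta n}$ from \cref{sec:feasibility_regression}, extended to the variables $\Sigma', Q$: set $w_i = r_i$ (the indicator that the $i$-th sample is uncorrupted, so $\sum w_i = (1-\eta)n \ge n - \eta n$), $(x_i',y_i') = (x_i^*, y_i^*)$, $\theta' = \thetals$, $\Sigma' = \tilde\Sigma := \tfrac{1}{n}\sum_i x_i^*(x_i^*)^\top$, and $Q = \tilde\Sigma^{-1}$. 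At our sample size, $\tilde\Sigma$ is a constant-factor spectral approximation of $\Sigma$ by $\eta$-goodness, so $Q$ is well-defined and satisfies the symmetry and two-sided inverse constraints tying it to $\Sigma'$, with a polynomially bounded Frobenius norm. The norm condition $\|\calL[\theta']\|_2 = \|\thetals\|_2 \le R + O(\eta) \le 2R$ and boundedness of all required moments are standard consequences of Gaussian concentration.

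The crux is then the PSD closeness inequality $O(\alpha^2)\,\tilde\Sigma^{-1} \succeq (\thetals - \tilde\theta)(\thetals - \tilde\theta)^\top$, equivalently that for every $u \in \R^d$,
\[
\Iprod{u, \thetals - \tilde\theta}^2 \;\le\; O(\alpha^2)\cdot u^\top \tilde\Sigma^{-1} u.
\]
Applying Cauchy--Schwarz to the decomposition $\Iprod{u, \thetals - \tilde\theta} = \Iprod{\Sigma^{-1/2} u,\; \Sigma^{1/2}(\thetals - \tilde\theta)}$ bounds the left side by $\|\Sigma^{1/2}(\thetals - \tilde\theta)\|_2^2 \cdot u^\top \Sigma^{-1} u$; $\eta$-goodness gives $\tilde\Sigma^{-1} \succeq \Omega(1)\cdot \Sigma^{-1}$; and the hypothesis gives $\|\Sigma^{1/2}(\thetals - \tilde\theta)\|_2^2 = O(\eta^2 \log(1/\eta)) \le O(\alpha^2)$, using $\alpha \ge \Omega(\eta\log(1/\eta))$. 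Combining the three yields the inequality with $\Omega(1)$ margin to spare. The main subtlety I anticipate is that the score really requires an approximate linear operator rather than an exact pseudo-expectation, so each of the above bounds must carry enough slack to absorb the $O(\tau T n)$ loss terms permitted by \cref{def:tau_relaxed_system}; with $\tau \le (ndRL/\alpha)^{-C}$ for a sufficiently large constant $C$, this absorption is automatic and the certificate construction goes through verbatim.
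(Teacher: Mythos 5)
Your proposal is correct and follows essentially the same route as the paper: Part~2 is the same direct appeal to \cref{lem:closeness_constraint_soundness}, and Part~1 constructs the same single-point certificate ($w_i = r_i$, $x_i' = x_i^*$, $\theta' = \thetals$, $\Sigma'$ the empirical covariance, $Q$ its inverse) and verifies the closeness constraint via the same Cauchy--Schwarz computation, passing from the empirical to the true covariance by concentration. Your remark about carrying $O(\tau T n)$ slack for the approximate-operator version is a detail the paper leaves implicit, but it does not change the argument.
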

\begin{proof}
    We have already shown in~\cref{sec:feasibility} that a distribution supported on a single point with $x_i'$ equal to the true samples, $\Sigma'$ equal to the empirical covariance, and $\theta' = \thetals$ is feasible for the original robustness program.
    Note that the additional constraints involving the variable $Q$ can trivially be satisfied by setting $Q$ to be the inverse of the empirical covariance matrix.
    Furthermore, since $\theta$ has norm at most $R$ by the closeness of $\theta, \thetals$ we have that $\thetals$ has norm at most $2R + \tau T$.
    To show that any point $\tilde{\theta}$ such that $\norm{\Sigma^{1/2} (\tilde{\theta} - \theta^*)} \leq \bigO{\eta \sqrt{\log (1/\eta)}}$ has score at most $\eta n$ it suffices to show that for this single point distribution above that the closeness constraint holds for such $\tilde{\theta}$. Let $\bar{\Sigma}$ be the empirical covariance of the true samples and $Q = \bar{\Sigma}^{-1}$. We want to show that for all vectors $u$ it holds that 
    \[ \Iprod{u, \thetals - \tilde{\theta}}^2 \leq \bigO{\alpha^2} \cdot  \Iprod{u, Q u} \]
    or equivalently that 
    \[ \Iprod{u, (\bar{\Sigma})^{1/2} \left(\thetals - \tilde{\theta}\right)}^2 \leq \bigO{\alpha^2} \cdot  \norm{u}^2 \]
    Note that by concentration of the empirical covariance (\cref{fact:empirical_cov_bound}) it suffices to instead show that
    \[ \Iprod{u, \Sigma^{1/2}(\theta' - \tilde{\theta})}^2 \leq \bigO{\alpha^2} \cdot  \norm{u}^2\,.\]
    We have that
    \[ \Iprod{u, \Sigma^{1/2}(\theta' - \tilde{\theta})}^2 \leq \norm{u}^2 \cdot \norm{\Sigma^{1/2}(\theta' - \tilde{\theta})}^2 \leq \bigO{\alpha^2} \cdot \norm{u}^2\,,\]
    so this completes the proof.
    
    Consider any point $\tilde{\theta}$ of score $T < \eta^* n$. We have by~\cref{lem:closeness_constraint_soundness} that 
    \[\norm{\Sigma^{1/2}(\tilde{\theta} - \theta^*)} \leq \bigO{\alpha + \tfrac T n \sqrt{\log (\tfrac n T)}} \leq \bigO{1}\,,\]
    which completes the proof of the second half of the corollary.
\end{proof}

\paragraph{Quasi-Convexity.}
\begin{lemma}[Quasi-Convexity]
\label{lem:regression-quasiconvexity}
The score function $\calS\paren{\tilde{\theta}; x, y; \alpha, \tau}$ as defined in \Cref{def:score_function_regression} is quasi-convex with respect to the parameter $\tilde{\theta}$.
\end{lemma}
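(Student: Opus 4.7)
The plan is to verify that averaging the two certifying linear functionals produces a valid certificate for the convex combination of parameters at the larger of the two scores. Specifically, fix $\tilde{\theta}_1, \tilde{\theta}_2$ in the domain with scores $T_j := \calS(\tilde{\theta}_j; x, y; \alpha, \tau)$ for $j \in \{1,2\}$, with corresponding $(\alpha, \tau, T_j)$-certificates $\calL_j$ as in Definition~\ref{def:cert_parameter_regression}. Set $T := \max(T_1, T_2)$, $\tilde{\theta}_\lambda := \lambda \tilde{\theta}_1 + (1-\lambda) \tilde{\theta}_2$, and define $\calL_\lambda := \lambda \calL_1 + (1-\lambda) \calL_2$. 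The goal is to show that $\calL_\lambda$ is an $(\alpha, \tau, T)$-certificate for $((x,y), \tilde{\theta}_\lambda)$; this immediately gives $\calS(\tilde{\theta}_\lambda; x, y; \alpha, \tau) \le T = \max(\calS(\tilde{\theta}_1; \cdot), \calS(\tilde{\theta}_2; \cdot))$, which is the desired quasi-convexity.

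Conditions (1) and (3) of Definition~\ref{def:cert_parameter_regression}, as well as the requirement that $\calL_\lambda$ $\tau$-approximately satisfies $\calB_T(x,y)$, all follow by direct bookkeeping. All constraints in $\calB_T(x,y)$ aside from $\sum_i w_i \ge n - T$ are independent of $T$, and $\calB_{T_j}$ implies $\calB_T$ since $T \ge T_j$; hence $\calL_\lambda$ $\tau$-approximately satisfies $\calB_T(x,y)$ by linearity of each of the approximate satisfaction requirements in Definition~\ref{def:tau_relaxed_system}. The Frobenius norm bound $\|\calR(\calL_\lambda)\|_F \le R' + \tau T$ follows from the triangle inequality on matrix representations together with $T \ge T_j$, and $\|\calL_\lambda \theta'\|_2 \le 2R + \tau T$ follows analogously.

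The only nontrivial step is verifying the closeness constraint, Condition (2). Writing $v_j := \calL_j[\theta'] - \tilde{\theta}_j$ and using linearity gives $\calL_\lambda[\theta' - \tilde{\theta}_\lambda] = \lambda v_1 + (1-\lambda) v_2$. Averaging the two certifying inequalities $\bigO{\alpha^2}\,\calL_j[Q] - v_j v_j^\top \succeq -\tau T_j\, I_d$ yields
\[
    \bigO{\alpha^2}\, \calL_\lambda[Q] \;\succeq\; \lambda v_1 v_1^\top + (1-\lambda) v_2 v_2^\top - \tau T \, I_d,
\]
so it remains to establish the matrix Jensen inequality $\lambda v_1 v_1^\top + (1-\lambda) v_2 v_2^\top \succeq (\lambda v_1 + (1-\lambda) v_2)(\lambda v_1 + (1-\lambda) v_2)^\top$. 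This is a one-line computation: the difference equals $\lambda(1-\lambda)(v_1 - v_2)(v_1 - v_2)^\top$, which is PSD. I do not expect any real obstacle, since the design of the closeness constraint in \eqref{eqn:geometry-aware-score} was made precisely so that $\calL_\lambda[Q]$ is literally the convex combination of $\calL_j[Q]$ (decoupled from $\theta'$), reducing quasi-convexity to matrix Jensen with no cross-terms, as foreshadowed by \eqref{eqn:quasi-convexity-intro}.
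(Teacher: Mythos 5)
Your proposal is correct and follows the same route as the paper: same convex combination $\calL_\lambda = \lambda\calL_1 + (1-\lambda)\calL_2$, and your matrix-Jensen step
\[
\lambda v_1 v_1^\top + (1-\lambda) v_2 v_2^\top - (\lambda v_1 + (1-\lambda) v_2)(\lambda v_1 + (1-\lambda) v_2)^\top = \lambda(1-\lambda)(v_1-v_2)(v_1-v_2)^\top \succeq 0
\]
is exactly the paper's scalar-convexity argument $\Iprod{u,\lambda v_1 + (1-\lambda)v_2}^2 \leq \lambda\Iprod{u,v_1}^2 + (1-\lambda)\Iprod{u,v_2}^2$ run direction by direction. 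One small step you've glossed over under ``direct bookkeeping'': lifting $\calL_1$'s $\tau$-approximate satisfaction of $\calB_{T_1}$ to $\calB_T$ with $T = \max(T_1,T_2)$ is not purely linearity. The counting constraint $\calL_1(\sum_i w_i - n + T)p^2 \geq -5\tau T n$ produces the extra term $(T-T_1)\,\calL_1 p^2$, which must be bounded below by $-(T-T_1)\tau T_1 \geq -5(T-T_1)\tau n$ using the non-negativity condition $\calL_1 p^2 \geq -\tau T_1$ (Definition~\ref{def:tau_relaxed_system}, condition 2) together with $T_1 \leq n$; this is precisely what the paper's explicit telescoping computation tracks. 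With that detail filled in your argument is complete and agrees with the paper's.
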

\begin{proof}
Suppose $\tilde{\theta}_1, \tilde{\theta}_2 \in \mathbb{B}^d \paren{2R + n \tau + \alpha \sqrt{\covscale}}$, and $\calS\paren{\tilde{\theta}_1; x, y ; \alpha, \tau} = T_1$, and $\calS\paren{\tilde{\theta}_2; x, y ; \alpha, \tau} = T_2$. We will show that for any $\lambda \in [0, 1]$, we have $\calS\paren{\lambda \tilde{\theta}_1 + (1-\lambda) \tilde{\theta}_2; x, y ; \alpha, \tau} \le \max\{T_1, T_2\}$. 

Let $\tilde{\theta}_3 = \lambda \tilde{\theta}_1 + (1-\lambda) \tilde{\theta}_2$. Since $\calS\paren{\tilde{\theta}_1; x, y ; \alpha, \tau} = T_1$, there exists a $\paren{\alpha, \tau, T_1}$ certificate $\mathcal L_1$ for $(x, y)$. Similarly, since $\calS\paren{\tilde{\theta}_2; x, y ; \alpha, \tau} = T_2$, there exists a $\paren{\alpha, \tau, T_2}$ certificate $\mathcal L_2$ for $(x, y)$. We will construct a $\paren{\alpha, \tau, \max\{T_1, T_2\}}$ certificate $\mathcal L_3$ for $(x, y)$. We construct
$\mathcal L_3$ as follows: $\mathcal L_3 = \lambda \mathcal L_1 + (1-\lambda) \mathcal L_2 $. All of the constraints would then be satisfied trivially except for the counting constraint and the closeness constraint. For the closeness constraint, note that we have that
\[ \Iprod{u, \calL_3 \theta' - \tilde{\theta_3}}^2  =  \Iprod{u, \lambda (\calL_1 \theta' - \tilde{\theta_1}) + (1-\lambda) (\calL_2 \theta' - \tilde{\theta_2})}^2 \leq \lambda \Iprod{u, \calL_1 \theta' - \tilde{\theta_1}}^2 + (1-\lambda) \Iprod{u, \calL_2 \theta' - \tilde{\theta_2}}^2\,.\]
Applying that both $\calL_1, \calL_2$ satisfy the closeness constraint we have that
\[ \Iprod{u, \calL_3 \theta' - \tilde{\theta_3}}^2 \leq \bigO{\alpha^2} \cdot \left( \lambda u^\top \calL_1 Q u + (1-\lambda) u^\top \calL_2 Q u\right) = \bigO{\alpha^2} u^\top \calL_3 Q u\,.\]
To verify the counting constraint, we have
to show that for any polynomial $p$ where $\norm{\mathcal R \paren{p}}_2 \le 1$ that $\mathcal L_3 \paren{\sum_{j \in [n]} w_j - n + T_2} p^2 \geq -4 \cdot \max(T_1, T_2) \cdot n$. Without loss of generality assume $T_2 = \max\{T_1, T_2\}$. We have
\begin{align*}
\mathcal L_3 \paren{\sum_{j \in [n]} w_j - n + T_2} p^2 
&= \lambda \mathcal L_1 \paren{\sum_{j \in [n]} w_j - n + T_2 + T_1 - T_1} p^2 + (1-\lambda) \mathcal L_2 \paren{\sum_{j \in [n]} w_j - n + T_2} p^2 \\
&\ge - 5 \lambda \tau \cdot T_1 \cdot n - 5 (1-\lambda) \tau \cdot T_2 \cdot n + \lambda  \paren{T_2 - T_1} \mathcal L_1  p^2 \\
&\ge - 5 \lambda \tau \cdot T_1 \cdot n - 5 (1-\lambda) \tau \cdot T_2 \cdot n - \lambda  \paren{T_2 - T_1} \tau \cdot T_1 \\
&\ge - 5 \lambda \tau \cdot T_1 \cdot n - 5 (1-\lambda) \tau \cdot T_2 \cdot n - 5 \lambda  \tau \cdot  \paren{T_2 - T_1} \cdot n \\
& = - 5 \tau \cdot T_2 \cdot n.
\end{align*}
This verifies that $\mathcal L_3$ is a $\paren{\alpha, \tau, \max\{T_1, T_2\}}$ certificate for $(x, y)$, as desired.
\end{proof}

\paragraph{Efficient Computability.}
\begin{lemma}[Efficient Computation of Regression Score Function]
\label{lem:efficient-comp-regression}
    Let $\tilde{\theta}$ be some point in the domain of the score function and let $T = \calS(\tilde{\theta}, \calY; \alpha, \tau)$ be the score of $\tilde{\theta}$. Then we can compute $T$ in time $\poly(n, \log(R \sqrt{\covscale}), \log(1/\gamma))$ up to accuracy $\bigO{\gamma}$.
\end{lemma}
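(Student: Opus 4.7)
The plan is to compute $\calS(\tilde{\theta}, \calY; \alpha, \tau)$ via binary search over $T \in [0, n]$: at each step, we decide whether there exists a linear operator $\calL$ witnessing that $\tilde{\theta}$ is a $(\alpha, \tau, T)$-certifiable parameter in the sense of \cref{def:cert_parameter_regression}. Since $T$ is an integer, the binary search takes $\bigO{\log n}$ iterations, so it suffices to solve the feasibility problem at each $T$ in $\poly(n, \log (R\sqrt{\covscale}), \log(1/\gamma))$ time.

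For each fixed $T$, the feasible set $\mathcal{K}_T$ of linear operators $\calL$ is convex. Condition~1 is a Frobenius-norm bound on the matrix representation $\calR(\calL)$, which is convex and also bounds the starting ellipsoid by $\poly(n,d,R)$. The $\tau$-approximate satisfaction of $\calB_T(x,y)$ as in \cref{def:tau_relaxed_system} is a family of linear inequalities in $\calL$, one for each polynomial $p$ (of bounded coefficient norm) up to the relevant degree. Condition~3 is another convex norm bound on the vector $\calL \theta'$. Finally, condition~2 is equivalent, via Schur complement, to the linear matrix inequality
\begin{equation*}
    \begin{pmatrix} \bigO{\alpha^2} \cdot \calL[Q] + \tau T \cdot I_d & \calL[\theta' - \tilde{\theta}] \\ \calL[\theta' - \tilde{\theta}]^\top & 1 \end{pmatrix} \succeq 0,
\end{equation*}
which is linear in $\calL$ and hence defines a convex constraint.

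To apply the ellipsoid method we require a weak separation oracle for $\mathcal{K}_T$. The norm-based constraints 1 and 3 admit standard oracles. For $\tau$-approximate satisfaction of the inequalities in $\calB_T$---including the universally quantified SoS constraints such as the hypercontractivity bound on $\tfrac{1}{n} \sum_i \Iprod{x_i',v}^4$, which are encoded via auxiliary SoS variables and reduce to polynomially many polynomial inequalities linear in $\calL$---the separation oracle follows from \cref{fact:sos-separation-efficient}. The main new difficulty is the closeness constraint (condition~2), which is quadratic in $\calL$; for this we invoke \cref{lem:closeness-separation-oracle} to obtain an explicit tangent-hyperplane separator, the multi-dimensional generalization of the one-dimensional parabola argument sketched in \cref{sec:tech-overview}.

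Combining these ingredients, the ellipsoid method solves each feasibility subproblem in the claimed time, and the outer binary search over $T$ yields the score up to accuracy $\gamma$ within the stated bound. The main obstacle, and what required careful design of the constraint system, is handling the closeness constraint: it is crucial that we introduced the SoS variable $Q$ to decouple $\calL[Q]$ from $\calL[\theta']$, enabling a genuinely convex feasibility problem (whose quadratic piece admits an explicit separator) rather than only a quasi-convex one.
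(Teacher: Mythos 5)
Your proposal is correct and follows essentially the same route as the paper: the score is computed by iterating over $T$ and solving each feasibility problem with the ellipsoid method, reusing the separation oracles from the prior work for the robustness constraints and invoking \cref{lem:closeness-separation-oracle} for the two new constraints (the norm bound on $\calL\theta'$ and the closeness constraint). Your Schur-complement reformulation of condition~2 as a linear matrix inequality in $\calL$ is a clean alternative way to see convexity and obtain a separator for that constraint, but it does not change the overall argument.
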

We will show efficient computability via the ellipsoid algorithm. Note that our original robustness constraints are in the same form as the constraints in~\cite{Hopkins2023Robustness}. Their efficient computability proof (Lemma C.6 of~\cite{Hopkins2023Robustness}) mostly applies to our program, with the exception of requiring a separation oracle for the two new constraints (the $\ell_2$ constraint on $\calL \theta'$ and the closeness constraint). We give a proof of the separation oracle for these constraints in the appendix~\cref{lem:closeness-separation-oracle}.

\subsubsection{Properties of the Score Function}

\paragraph{Well-Definedness of the Score Function.}

We first show that any point must have score bounded by $n$.

\begin{lemma}[Score Function Upper Bound]
\label{lem:score-function-ub-regression}
For any input $(x, y)$ and $\tilde{\theta} \in \mathbb{B}^d \paren{2R + n \tau + \alpha \sqrt{\covscale}}$, we have $\calS\paren{\tilde{\theta}; x, y ; \alpha, \tau} \le n$, for $\calS$ as defined in \Cref{def:score_function_regression}.
\end{lemma}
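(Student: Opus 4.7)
The plan is to construct, for every $\tilde{\theta}$ in the domain and every input $(x,y)$, an explicit linear operator $\calL$ witnessing that $\tilde{\theta}$ is $(\alpha, \tau, n)$-certifiable per \cref{def:cert_parameter_regression}. Since $T = n$, the counting constraint in $\calB_n$ reduces to $\sum_i w_i \geq 0$, so I would set $w_i = 0$ for every $i$. This satisfies the idempotency $w_i^2 = w_i$ and trivializes the coupling constraints $w_i(x_i' - x_i) = w_i(y_i' - y_i) = 0$, decoupling the remaining variables from the input samples entirely.

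Setting $\theta' = 0$ and $y_i' = 0$ for all $i$ satisfies the gradient condition $\tfrac{1}{n}\sum_i(\Iprod{x_i', \theta'} - y_i') x_i' = 0$, the noise-variance and fourth-moment bounds (both sides vanish), and the norm constraint $\|\calL \theta'\|_2 = 0 \leq 2R + \tau n$. Under these choices, the closeness constraint simplifies to
\[
    \bigO{\alpha^2} \calL[Q] - \tilde{\theta}\tilde{\theta}^\top \succeq -\tau n \cdot I_d.
\]
Using $\|\tilde{\theta}\|^2 \leq (2R + n\tau + \alpha\sqrt{\covscale})^2$, it suffices to ensure $\calL[Q] \succeq c \cdot \rho^{-1} \cdot I_d$ for a sufficiently small absolute constant $c$ and $\rho := \Theta\bigl(\alpha^2 / (R + \alpha\sqrt{\covscale})^2\bigr)$.

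To realize this while respecting the $Q$-constraints $\Sigma' Q = Q \Sigma' = I_d$ (which pin $Q$ to $\Sigma'^{-1}$), I would take $x_1', \ldots, x_n'$ to be any fixed realization of $n$ i.i.d.\ samples from $\calN(0, \rho I_d)$ that is $\eta$-good and $\eta$-higher-order-good (such realizations exist for $n \geq n_0$ by \cref{fact:mean_and_cov_of_1-eps} and \cref{lem:higher-order-stability}). Then $\Sigma' := \tfrac{1}{n}\sum_i x_i'(x_i')^\top$ is concentrated around $\rho I_d$, so setting $Q := \Sigma'^{-1}$ gives $\calL[Q] \succeq (1-o(1))\rho^{-1} I_d$, which suffices for the closeness constraint. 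The remaining $Q$-equalities $Q = Q^\top$, $\Sigma' Q \Sigma' = \Sigma'$, $Q \Sigma' Q = Q$ are automatic. The SoS-certifiable hypercontractivity constraint $\tfrac{1}{n}\sum_i \Iprod{x_i', v}^4 \leq (3 + \eta\log^2(1/\eta))(v^\top \Sigma' v)^2$ on the $x_i'$ follows from the concentration-to-SoS arguments used to establish feasibility of $\calA_\eta$ in \cref{sec:feasibility_regression}, applied to uncorrupted Gaussian samples at scale $\rho$.

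Finally, $\calL$ is a point-mass evaluation on the prescribed values, hence a valid exact pseudo-expectation of every degree, and all variables take values polynomial in $n, d, R, \covscale, 1/\alpha$, so $\|\calR(\calL)\|_F$ is comfortably bounded by $R' + \tau n$ for sufficiently large $R'$. The main obstacle is verifying the SoS hypercontractivity certificate for our explicit Gaussian realization, which mirrors the feasibility argument for $\calA_\eta$; once that is in hand, $\tilde\theta$ is $(\alpha, \tau, n)$-certifiable and so $\calS(\tilde{\theta}; x, y; \alpha, \tau) \leq n$.
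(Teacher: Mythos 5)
Your construction is correct, and it takes a genuinely different (and somewhat cleaner) route than the paper's. The paper also uses a point-mass operator with $w_i = 0$ and synthetic Gaussian covariates, but it keeps the certificate ``meaningful'': it draws $x_i' \sim \calN(0, \tfrac{1}{2L} I_d)$, generates labels $y_i' = \Iprod{\tilde{\theta}, x_i'} + \zeta_i$ from the linear model with parameter (a projection of) $\tilde{\theta}$, and sets $\theta'$ to the least-squares solution of that synthetic instance; closeness then follows because $\theta'$ concentrates near $\tilde{\theta}$ at scale $\sqrt{L}\alpha$ while $Q = \Sigma'^{-1} \approx 2L I_d$. You instead decouple everything: $\theta' = y_i' = 0$ trivially satisfies the gradient, noise-variance, and fourth-moment-of-noise constraints (all of which are one-sided upper bounds, so zero is admissible), and you compensate for $\calL[\theta' - \tilde{\theta}]\calL[\theta' - \tilde{\theta}]^\top = \tilde{\theta}\tilde{\theta}^\top$ by shrinking the synthetic covariance to $\rho I_d$ with $\rho = \Theta(\alpha^2/(R+\alpha\sqrt{L})^2)$ so that $\bigO{\alpha^2} Q$ dominates $\|\tilde{\theta}\|^2 I_d$. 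This avoids the projection step and the concentration argument for the synthetic least-squares estimator entirely; the only price is that the entries of $Q$ now scale like $(R+\alpha\sqrt{L})^2/\alpha^2$ rather than $L$, which still fits within the bound $\|\calR(\calL)\|_F \leq R'$ at the same level of informality as the paper (whose own certificate already has $Q$-entries of order $L$, so $R'$ must in any case be read as polynomial in $L$ and $1/\alpha$ as well as $n, d, R$). Your reliance on $n \geq n_0$ for the existence of an $\eta$-good, $\eta$-higher-order-good Gaussian realization satisfying the SoS hypercontractivity certificate matches the paper's implicit use of the same fact.
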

\begin{proof}
It suffices to show that for $T = n$, and for any $\tilde{\theta} \in \mathbb{B}^d \paren{2R + n \tau + \alpha \sqrt{\covscale}}$, there exists a $\paren{\alpha, \tau, n}$ certificate $\mathcal L$ for $(x, y)$. 

We will define a linear functional $\calL$ which for every monomial $p$ assigns a value equal to the expectation of $p$ over a distribution supported on a single point. Furthermore, we will show that $\calL$ is feasible for our robustness system as well as the additional constraints. Let $\theta$ be $\tilde{\theta}$ projected to the solid $2R + \tau T - \psi$ radius sphere for some choice of $\psi$ to be decided later. Let $x_i' \sim N(0, 1/(2\covscale) \cdot I_d)$, $y_i' = \langle \tilde{\theta}, x_i'\rangle + \zeta_i$ where $\zeta_i \sim N(0,1)$, and $\theta' = \thetals$ for this instance while $\Sigma'$ is the empirical covariance and $Q$ is its inverse.
Furthermore, let all $w_i = 0$. We will show that with non-zero probability this certifies that $\tilde{\theta}$ has score $n$.
Note that with probability $1-\beta$ this system is feasible for the robustness constraints and $\Sigma \preceq 1/\covscale \cdot I_d \preceq \Sigma'$ (and therefore $\covscale I_d \preceq Q$). Furthermore, note that with probability $1-\beta$ we have that $\norm{\theta' - \theta} \leq \bigO{\sqrt{\covscale} \cdot \alpha}$. Thus, if we take $\psi = \sqrt{\covscale} \cdot \alpha$ we have that $\norm{\theta'} \leq 2R + \tau T$. 

It remains to show that the closeness constraint is satisfied. We have that
\[ \langle u, \thetals - \tilde{\theta}\rangle^2 \leq \langle u, \thetals - \theta\rangle^2 + \langle u, \theta - \tilde{\theta}\rangle^2  \,.\]
Note that by closeness of $\theta, \thetals$ we have that the first term is at most $\bigO{\covscale \alpha^2} \cdot \norm{u}^2 \leq \bigO{\alpha^2} \cdot u^\top Q u$. Furthermore, we have that the second term is bounded by $\bigO{\alpha^2 \covscale} \leq \bigO{\alpha^2} \cdot u^\top Q u$. Therefore, we have that the closeness constraint is satisfied and we have given a linear functional $\calL$ which certifies that $\tilde{\theta}$ has score at most $n$.
\end{proof}

\paragraph{Bounded sensitivity.}

\begin{lemma}[Bounded Sensitivity]
\label{lem:bounded-sensitivity-regression}
The score function $\calS\paren{\tilde{\theta}; x, y ; \alpha, \tau}$ as defined in \Cref{def:score_function_regression} has sensitivity $1$ with respect to the input data $(x, y)$.
\end{lemma}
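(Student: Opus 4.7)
The plan is to show sensitivity~$1$ via the standard construction for SoS-based score functions: given two datasets $(x,y)$ and $(x',y')$ differing only at index $j$, I will transform any $(\alpha,\tau,T)$-certificate $\calL$ for $((x,y),\tilde{\theta})$ into an $(\alpha,\tau,T+1)$-certificate $\calL'$ for $((x',y'),\tilde{\theta})$. This yields $\calS(\tilde{\theta};x',y';\alpha,\tau)\leq\calS(\tilde{\theta};x,y;\alpha,\tau)+1$, and the symmetric construction gives the reverse inequality, so sensitivity~$1$ follows.

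The construction forces $w_j=0$ via formal substitution: for every polynomial $p$ in the SoS variables, define $\calL'[p]:=\calL[p\big|_{w_j=0}]$. This is a linear operator; moreover, substituting $w_j=0$ does not increase degree and sends $\calR(p)$ to a coordinate-subvector of itself, hence is coefficient-norm non-increasing. Consequently, every $\tau$-approximate non-negativity condition of \cref{def:tau_relaxed_system} that $\calL$ satisfies at degree $D$ is automatically inherited by $\calL'$. I would split the constraints of $\calB_{T+1}(x',y')$ into three groups. First, the constraints not involving $w_j$ or the $j$-th data point---hypercontractivity, the gradient condition, the second- and fourth-moment bounds on the noise, and the algebraic identities relating $Q$ and $\Sigma'$---are identical in $\calB_T(x,y)$ and $\calB_{T+1}(x',y')$ and transfer directly from $\calL$ with slack $\tau T\leq\tau(T+1)$. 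Second, the constraints involving only the $j$-th sample---$w_j^2=w_j$, $w_j(x^{\mathrm{SoS}}_j-x'_j)=0$, and $w_j(y^{\mathrm{SoS}}_j-y'_j)=0$---all reduce to $0=0$ after the substitution, and are thus satisfied exactly by $\calL'$.

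The one step requiring actual computation is the relaxed counting constraint $\sum_i w_i\geq n-(T+1)$. For any admissible test polynomial $p$, setting $q:=p\big|_{w_j=0}$ (of no larger degree or coefficient norm), the identity
\[
\calL'\bigl[(\textstyle\sum_i w_i - n + T + 1)\,p^2\bigr]=\calL\bigl[(\textstyle\sum_i w_i - n + T)\,q^2\bigr]+\calL\bigl[(1-w_j)\,q^2\bigr]
\]
holds, and the two summands are at least $-5\tau T n$ and $-O(\tau T)$ respectively; the second bound uses the degree-$2$ SoS proof of $1-w_j\geq 0$ from $w_j^2=w_j$ (obtained by expanding $(1-w_j)^2\geq 0$ and applying $w_j^2=w_j$) together with the approximate satisfaction of $w_j^2=w_j$ by $\calL$. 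Summing yields a lower bound of $-5\tau(T+1)n$ as required.

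Finally, the constraints unique to our score function---the PSD closeness inequality, the $\ell_2$-norm bound on $\calL'[\theta']$, and the Frobenius bound on $\calR(\calL')$---transfer essentially for free, which is where the design of the closeness constraint pays off. Since neither $\theta'$ nor $Q$ depends on $w_j$, we have $\calL'[\theta']=\calL[\theta']$ and $\calL'[Q]=\calL[Q]$, so the $\ell_2$- and PSD constraints hold for $\calL'$ verbatim with slack $\tau T\leq\tau(T+1)$; the Frobenius norm of $\calR(\calL')$ is obtained from that of $\calR(\calL)$ by zeroing out coordinates indexed by monomials containing $w_j$, and is therefore no larger. The main obstacle is really just the careful bookkeeping of approximate-satisfaction slacks under the substitution map; no new mathematical idea beyond the analogous sensitivity arguments of~\cite{Hopkins2023Robustness} is required.
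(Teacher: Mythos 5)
Your proposal is correct and follows essentially the same route as the paper's proof: the same certificate transformation (zeroing $\calL$ on monomials containing $w_j$, i.e.\ the substitution $w_j \mapsto 0$), the same transfer of all constraints not involving $w_j$ (including the closeness and norm constraints, which do not depend on $w_j$), and the same decomposition of the counting constraint into $\calL[(\sum_i w_i - n + T)q^2] + \calL[(1-w_j)q^2]$ with the latter bounded via $(1-w_j) = (1-w_j)^2 + (w_j - w_j^2)$ and the approximate satisfaction of $w_j^2 = w_j$. The only difference is that the paper tracks the explicit constants ($-4\tau T - \tau T$) where you write $-O(\tau T)$, but the bookkeeping closes the same way.
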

\begin{proof}
Suppose $(x, y), (x', y')$ are two neighbouring datasets, and $\tilde{\theta} \in \mathbb{B}^d \paren{2 R + n \tau + \alpha \sqrt{\covscale}}$. Moreover, assume that $\calS\paren{\tilde{\theta}; x, y; \alpha, \tau} = T$. If we show that $\calS\paren{\tilde{\theta}; x', y'; \alpha, \tau}\le T + 1$, then we are done by symmetry. Since $\calS\paren{\tilde{\theta}; x, y; \alpha, \tau} = T$, there exists a $\paren{\alpha, \tau, T}$ certificate  $\mathcal L$ for $(x, y)$. We will show that there exists a $\paren{\alpha, \tau, T+1}$ certificate $\mathcal L'$ for $(x', y')$. If we show this we conclude that $\calS\paren{\tilde{\theta}; x', y'; \alpha, \tau} \le T + 1$, and we are done.

Without loss of generality, assume $(x, y)$, and $(x', y')$ differ in one point, say $(x_i, y_i)$ and $(x_i', y_i')$. We will construct $\mathcal L'$ from $\mathcal L$ by modifying the value of $\mathcal L$ on the monomials. We will show that $\mathcal L'$ is a $\paren{\alpha, \tau, T+1}$ certificate for $(x', y')$. For any monomial $p$, let $\mathcal L' p = \mathcal L p$, if $p$ does not contain $w_i$, otherwise let $\mathcal L' p = 0$. 

Now let's verify that the conditions hold for this definition of $\mathcal L' p$. Note that if $p = q + w_i r$, where $q$ does not contain $w_i$, then $\mathcal L' s p^2 = \mathcal L s q^2$, for all polynomials $s$. Moreover, $\norm{\mathcal R \paren{q}}_2 \le 1$. Therefore, all of the constraints that do not contain $w_i$ will be satisfied, with the same value $T$. Furthermore, the closeness constraint will also be satisfied since it does not depend on $w_i$.

It remains to verify the constraints that contain $w_i$ from the relaxed version of our robustness program. All of these except the counting constraint are satisfied because $\mathcal L' p = 0$ for all monomials that contain $w_i$. It remains to verify the counting constraint. We have
\begin{align*}
\mathcal L' \paren{\sum_{j \in [n]} w_j - n + T + 1} p^2 &= \mathcal L \paren{\sum_{j \neq i} w_j - n + T + 1} q^2 \\
&= \mathcal L \paren{\sum_{j \in [n]} w_j - n + T} q^2 + \mathcal L q^2 - \mathcal L w_i q^2 \\
& \ge - 5 \tau \cdot T \cdot n + \mathcal L \paren{1 - w_i} q^2  \\
& = - 5 \tau \cdot T \cdot n + \mathcal L' \paren{1 - w_j}^2 q ^2 + \mathcal L' (w_j - w_j^2)q^2.
\end{align*}
Now consider the polynomial $(1-w_j) q$, for the representation of this polynomial we have that $\norm{\mathcal R \paren{(1-w_j) q}}_2 \le 2 \norm{\mathcal R \paren{q}}_2 \le 2 $. Therefore, we have that $\mathcal L' \paren{1 - w_j}^2 q \ge - 4\tau \cdot T$. Moreover, we have that $\mathcal L' (w_j - w_j^2)q^2 \ge  -\tau T$. Therefore we have that $\mathcal L' \paren{\sum_{j \in [n]} w_j - n + T + 1} p^2 \ge - 5 \tau \cdot T \cdot n - 4 \tau \cdot T - \tau \cdot T \ge - 5\tau \cdot \paren{T+ 1} \cdot n$. This verifies that $\mathcal L'$ is a $\paren{\alpha, \tau, T+1}$ certificate for $(x', y')$, as desired.
\end{proof}

\paragraph{Efficiently Finding a Low Scoring Point.}
\begin{lemma}
\label{lem:find-low-score-regression}
There exists an algorithm which runs in time $\poly(n, d, \log(R \sqrt{\covscale})$ that outputs $\hat{\theta}$ such that $\calS(\hat{\theta}, \calY; \alpha, \tau) \leq \min_{\tilde{\theta}} \calS(\tilde{\theta}, \calY; \alpha, \tau) + 1$ and such that all $\theta$ within distance $\alpha/\sqrt{\covscale}$ of $\hat{\theta}$ have score at most $\calS(\hat{\theta}, \calY; \alpha, \tau)$.
\end{lemma}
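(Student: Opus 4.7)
The plan is to perform a binary search over $T \in \{0, 1, \ldots, n\}$ and, for each candidate value, solve a feasibility SDP over linear operators $\calL$ using the ellipsoid method as in~\cref{lem:efficient-comp-regression}. Specifically, I would look for $\calL$ that $\tau$-approximately satisfies $\calB_T(x,y)$ augmented with the additional constraint $Q \succeq \tfrac{1}{\covscale} I_d$, subject to $\|\calR(\calL)\|_F \leq R' + \tau T$ and $\|\calL[\theta']\|_2 \leq 2R + \tau T$. The added constraint on $Q$ is feasible whenever $\Sigma \preceq \covscale I_d$, since then the true $\Sigma^{-1} \succeq \tfrac{1}{\covscale} I_d$, and including it does not affect any of the previous analyses (feasibility, quasi-convexity, sensitivity). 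Crucially, the closeness constraint from~\cref{def:cert_parameter_regression} is \emph{not} enforced at this stage; it will be recovered by choosing $\hat{\theta} = \calL[\theta']$ after the fact. Let $T^*$ be the smallest feasible $T$, let $\calL^*$ be the witnessing operator, and output $\hat{\theta} = \calL^*[\theta']$. This uses $\mathcal{O}(\log n)$ feasibility tests, each running in $\poly(n, d, \log(R\sqrt{\covscale}))$ time.

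To verify the score bound, observe that any $\tilde{\theta}$ attaining the minimum score $T_{\min}$ has a certificate that is immediately feasible for our SDP at $T_{\min}$, so $T^* \leq T_{\min}$. Conversely, $\calL^*$ certifies $\hat{\theta} = \calL^*[\theta']$ at score $T^*$: the closeness term $\calL^*[\theta' - \hat{\theta}]\calL^*[\theta' - \hat{\theta}]^\top$ vanishes by our choice of $\hat{\theta}$, so $\bigO{\alpha^2}\calL^*[Q] \succeq 0 \succeq -\tau T^* I_d$ trivially. Hence $\calS(\hat{\theta}) = T^* \leq T_{\min} + 1$, where the $+1$ absorbs any binary-search/ellipsoid slack.

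For the ball condition, fix any $\hat{\theta}'$ with $\|\hat{\theta}' - \hat{\theta}\|_2 \leq \alpha/\sqrt{\covscale}$ and re-use $\calL^*$ as a certificate for $\hat{\theta}'$. Every constraint in $\calB_{T^*}$ is independent of $\tilde{\theta}$ and thus still satisfied; only the closeness condition needs rechecking. Using $\calL^*[\theta'] = \hat{\theta}$, one has
\[
\calL^*[\theta' - \hat{\theta}']\calL^*[\theta' - \hat{\theta}']^\top = (\hat{\theta} - \hat{\theta}')(\hat{\theta} - \hat{\theta}')^\top \preceq \tfrac{\alpha^2}{\covscale} I_d\,.
\]
Our added constraint yields $\calL^*[Q] \succeq \tfrac{1}{\covscale} I_d - \tau T^* I_d$, so for the implicit constant in the $\bigO{\alpha^2}$ scaling chosen sufficiently large, $\bigO{\alpha^2} \calL^*[Q] - (\hat{\theta} - \hat{\theta}')(\hat{\theta} - \hat{\theta}')^\top \succeq -\tau T^* I_d$ holds. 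Thus $\calS(\hat{\theta}') \leq T^* = \calS(\hat{\theta})$.

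The main obstacle is ensuring $\calL^*[Q]$ admits a PSD lower bound of order $1/\covscale$; since the existing constraints on $Q$ only encode $Q = (\Sigma')^{-1}$ without an a priori upper bound on $\Sigma'$, we cannot derive such a bound from the program as stated. We circumvent this by explicitly adding $Q \succeq \tfrac{1}{\covscale} I_d$ to the system, which is valid under our assumption $\Sigma \preceq \covscale I_d$. The ball radius $\alpha/\sqrt{\covscale}$ is then precisely calibrated so that the perturbation $\|\hat{\theta} - \hat{\theta}'\|^2 \leq \alpha^2/\covscale$ is exactly absorbed by the $\bigO{\alpha^2} \calL^*[Q]$ slack.
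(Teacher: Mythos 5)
Your algorithm is structurally the same as the paper's: drop the closeness constraint, find the smallest feasible $T$ by testing feasibility of the relaxed system, and output $\hat{\theta} = \calL^*[\theta']$, which then trivially satisfies the closeness constraint for itself (the term $\calL^*[\theta'-\hat{\theta}]\calL^*[\theta'-\hat{\theta}]^\top$ vanishes) and, by reuse of the same operator, certifies nearby points as well. You also correctly put your finger on the step the paper's own write-up glosses over, namely that certifying the whole ball requires a PSD lower bound on $\calL^*[Q]$ of order $1/\covscale$, which does not follow from the constraints $Q\Sigma' = \Sigma' Q = I_d$ alone.

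However, your fix --- augmenting the feasibility program with the explicit constraint $Q \succeq \tfrac{1}{\covscale} I_d$ --- breaks the approximate-minimality claim. The score function $\calS$ is defined via the \emph{original} system $\calB_T$, so a certificate $\calL$ witnessing the true minimum score $T_{\min}$ need not satisfy $Q \succeq \tfrac{1}{\covscale} I_d$: the indices with $w_i = 1$ force $x_i' = x_i$, so for a worst-case dataset (which is exactly the regime in which this lemma is invoked by \cref{thm:pure_dp_reduction}) the selected points can force $\Sigma'$ to have eigenvalues far above $\covscale$, hence $\calL[Q] = \calL[(\Sigma')^{-1}]$ far below $\tfrac{1}{\covscale} I_d$. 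Your augmented program can therefore be infeasible at $T_{\min}$, and its minimal feasible value $T^*$ can strictly exceed $T_{\min}+1$, violating the guarantee $\calS(\hat{\theta}) \leq \min_{\tilde{\theta}}\calS(\tilde{\theta})+1$; your justification that ``any certificate at $T_{\min}$ is immediately feasible for our SDP'' is exactly where this fails. The validity of $Q \succeq \tfrac{1}{\covscale} I_d$ for the \emph{true} covariance is irrelevant here, since the minimizing certificate is not required to be built from the true covariance. (A secondary, minor point: the feasibility witness in \cref{lem:score-function-ub-regression} uses $x_i' \sim N(0, \tfrac{1}{2\covscale} I_d)$, whose empirical inverse covariance is about $2\covscale I_d \succeq \tfrac{1}{\covscale} I_d$, so well-definedness of the score survives your extra constraint; the problem is solely with matching the minimum of the original score function.)
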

\begin{proof}
    Consider the constraint system $\calB_T$ which is equal to $\calB_T$ without the closeness constraint. Our algorithm is as follows: we check for $T = 1, \ldots n$ whether $\calB_T$ is satisfiable. This is computable in time $\poly(n, d, \log(R \sqrt{\covscale})$ by~\cref{lem:efficient-comp-regression}. Let $T_{\min}$ be the smallest $T$ for which this is true and let $\calL$ be the corresponding linear operator. We have that $\hat{\theta}$ approximately minimizing score is just $\calL \theta'$. Note that $\calL$ also satisfies $\calB_{T_{\min}}$ with $\tilde{\theta} = \hat{\theta}$ since the additional constraint is the closeness constraint which is trivially satisfied since $\tilde{\theta} = \calL \theta'$. Thus, $\hat{\theta}$ has score at most $T_{\min}$. Furthermore, note that this also holds for all $\tilde{\theta}$ such that $\norm{\Sigma^{1/2} (\tilde{\theta} - \hat{\theta})}_2 \leq \alpha$ which includes all $\tilde{\theta}$ within distance $\alpha / \sqrt{\covscale}$.

    Furthermore, the true minimum score is at most $T_{\min} - 1$. Consider $\calL$ which certifies the minimum score, and note that $\calL$ also satisfies $\calB$ with the same value of $T$. Therefore the true minimum score must be greater than $T_{\min} - 1$ since by definition $T_{\min}$ was the smallest integer $T$ which was feasible for $\calB$.
\end{proof}

\subsubsection{Proof of Main Private Regression Theorem (\cref{thm:main_private_regression})}

We now return to the proof of the main theorem. We have shown all the preconditions for the reduction hold and thus it suffices to bound the number of samples and note that a point of $2\eta$ score has good accuracy.

\begin{proof}
    We choose the parameters in~\cref{thm:pure_dp_reduction} as $\eta = \eta(\alpha)$, $R$ to be the bound on the norm of $\theta$ (also denoted by $R$), and $r = \alpha / \sqrt{L}$.
    Assume that $2\eta \leq \eta^*$.

    Note that for the reduction to succeed it suffices to take 
    \[ n \ge \Omega\left(\max\limits_{\eta': \eta \le \eta' \le 1}\frac{\log(V_{\eta'}(\mathcal{Y})/V_{\eta}(\mathcal{Y})) + \log (1/(\beta \cdot \eta))}{\eps \cdot \eta'}\right)\,.\] 
    Let $\eta^*$ be a sufficiently small absolute constant. We have that the total volume of parameters is at most $\left(2R\right)^d$ by our bound on the domain\footnote{Note that this is technically a ball of radius $R + \alpha \sqrt{L} + \tau T$ however when $R > \Omega(\alpha \sqrt{L}$ this radius is $O(R)$ and if $R$ is small we can simply select a uniformly random point to output as our parameter and this will satisfy the accuracy guarantee.} and by~\cref{cor:volume_considerations_regression} and the upper bound on the covariance we have that the volume of low scoring points is at least $(\eta/\sqrt{\covscale})^d$. Thus,
    \begin{align*}
        \max\limits_{\eta': \eta^* \le \eta' \le 1}\frac{\log(V_{\eta'}(\mathcal{Y})/V_{\eta}(\mathcal{Y})) + \log (1/(\beta \cdot \eta))}{\eps \cdot \eta'} &\leq  \frac{\log(R^d/(\eta/\sqrt{\covscale})^d) + \log (1/(\beta \cdot \eta))}{\eps} \\
        &\leq \frac{d \log(R \sqrt{\covscale} / \eta) + \log (1/(\beta \cdot \eta))}{\eps}\,. \\
    \end{align*}
    Furthermore, note that by invoking~\cref{cor:volume_considerations_regression} we have that
    \begin{align*}
        \max\limits_{\eta': \eta \le \eta' \le \eta^*}\frac{\log(V_{\eta'}(\mathcal{Y})/V_{\eta}(\mathcal{Y})) + \log (1/(\beta \cdot \eta))}{\eps \cdot \eta'} &\leq \bigO{\frac{\log \left( (1/\eta)^d \right) + \log (1/(\beta \cdot \eta))}{\eps \cdot \eta}} \\
        &\leq \bigO{\frac{d \log(1/ \eta) + \log (1/(\beta \cdot \eta))}{\eps \cdot \eta}}\,.
    \end{align*}
    Combining these two bounds we have that
    \[\max\limits_{\eta': \eta \le \eta' \le 1}\frac{\log(V_{\eta'}(\mathcal{Y})/V_{\eta}(\mathcal{Y})) + \log (1/(\beta \cdot \eta))}{\eps \cdot \eta'} \leq \frac{d \log (R \sqrt{\covscale})}{\eps} + \frac{d \log (1 /\eta)}{\eps \cdot \eta}  + \frac{\log (1/(\beta \cdot \eta))}{\eps \cdot \eta}\,.\]
    Furthermore, we have that the algorithm outputs a point with score at most $2 \eta n$. By~\cref{lem:closeness_constraint_soundness} we have that this implies that the outputted point $\hat{\mu}$ satisfies
    \[ \norm{\Sigma^{1/2} (\hat{
    \mu} - \mu)}_2 \leq \bigO{\alpha}\,.\]
\end{proof}

\section{Private and Robust Covariance-Aware Mean Estimation}
\label{sec:mean_estimation}

The goal of this section is to show our main theorem for private covariance-aware mean estimation.
\begin{theorem}[Private Covariance-Aware Mean Estimation, Full Version of~\cref{thm:informal-mean-est}]
\torestate{
\label{thm:main_private_cov_mean_est}
Let $\mu \in \R^d$ such that $\norm{\theta} \leq R$ and $\Sigma$ such that $\tfrac 1 {\covscale} I_d \preceq \Sigma$.
Let $0 <\eta$ be less than a sufficiently small constant and let $0 < \alpha, \beta, \eps$ and $\alpha < 1$.
Let $x_1^*, \ldots, x_n^*$ be $n \geq n_0$ i.i.d. samples from $N(\mu, \Sigma)$.\footnote{More generally, they can be i.i.d. samples from a distribution $\cD$ such that $\Sigma^{-1/2}(\cD - \mu)$ is fourth moment reasonable sub-gaussian.}
Let $\cX = \Set{x_1, \ldots, x_n}$ be an $\eta$-corruption of $x_1^*, \ldots, x_n^*$.
There exists an  $(\eps, 0)$-differentially private algorithm that, given $\eta, \alpha, \eps$, and $\cX$, runs in time $\poly(n,\log L, \log R)$ and with probability at least $1-\beta$ outputs an estimate $\muhat$ satisfying
    \[ \Norm{\Sigma^{-1/2}\left(\muhat - \mu\right)} \leq \bigO{ \alpha } \,,
    \]
    whenever 
    $$n_0 = \tilde{\Omega}\Paren{ \frac{d^2 + \log^2(1/\beta)}{\alpha^2} + \frac{d + \log(1/\beta)}{\alpha \eps} + \frac{d \log(R) + d \log(\covscale)}{\eps} }, $$
and $\alpha \geq \Omega( \eta \log(1/\eta)) $.
}
\end{theorem}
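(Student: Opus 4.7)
\textbf{Proof Plan for \cref{thm:main_private_cov_mean_est}.}
The strategy mirrors the proof of \cref{thm:main_private_regression}: first build an efficient SoS-based robust estimator for covariance-aware mean estimation that implicitly captures the geometry induced by $\Sigma$, then feed it into the robustness-to-privacy reduction of \cref{thm:pure_dp_reduction} using a geometry-aware score. I will write $\bar{\mu} = \tfrac 1 n \sum_i x_i^*$ for the empirical mean of the uncorrupted samples, which by standard Gaussian concentration satisfies $\|\Sigma^{-1/2}(\bar{\mu}-\mu)\| \le \tilde{O}(\sqrt{(d+\log(1/\beta))/n})$; for our choice of $n_0$ this is $O(\alpha)$, so it suffices to estimate $\bar{\mu}$.

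For the robust component, I would consider the constraint system in indeterminates $w_1,\dots,w_n$, $x_1',\dots,x_n'$, $\mu'$, $\Sigma'$, $Q$ consisting of: $w_i^2 = w_i$, $w_i(x_i'-x_i)=0$, $\sum w_i \ge (1-\eta)n$, $\mu' = \tfrac 1 n \sum x_i'$, $\Sigma' = \tfrac 1 n \sum (x_i'-\mu')(x_i'-\mu')^\top$, a hypercontractivity constraint with constant $3+\eta\log^2(1/\eta)$ on the centered $x_i'$, together with the matrix constraints $Q=Q^\top$, $Q\Sigma' = I$, $\Sigma' Q = I$, $Q\Sigma' Q = Q$. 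Exactly as in \cref{lem:main_sos_proof_regression}, applying the Selector Lemma to the good indicators $r_i w_i$ (which satisfy $r_i w_i x_i' = r_i w_i x_i^*$) and using resilience of the Gaussian samples (\cref{fact:mean_and_cov_of_1-eps,fact:cov_small_subset}), I would prove that for every fixed $u \in \R^d$,
\[
\calA_\eta \sststile{O(1)}{w,x'} \Set{\Iprod{u,\mu' - \bar{\mu}}^2 \le O(\eta^2 \log(1/\eta))\, u^\top \Sigma^{-1} u},
\]
together with the analogue of \cref{lem:sos_inverse_covariance} certifying that $\Iprod{u, Q u} \le O(1)\Iprod{u,\Sigma^{-1} u}$. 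This SoS identity immediately yields a robust mean estimator $\muhat = \pE \mu'$ with $\|\Sigma^{-1/2}(\muhat-\mu)\| \le O(\eta\sqrt{\log(1/\eta)})$ by taking $u = \Sigma^{-1}(\muhat-\bar\mu)$ and applying Cauchy--Schwarz for pseudo-expectations, establishing the robust counterpart of \cref{thm:main_robust_regression} and handling the leading $d^2/\alpha^2$ term via \cref{lem:higher-order-stability}.

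For the private component, I would mimic \cref{sec:private_regression}: define $\calB_T$ by relaxing the counting constraint to $\sum w_i \ge n - T$, and score a candidate $\tilde{\mu}$ by the smallest $T$ for which there exists a $\tau$-approximately feasible linear operator $\calL$ satisfying (i) $\|\calL\mu'\| \le 2R + \tau T$ and (ii) the geometry-aware closeness constraint
\[
\alpha^2 \calL[Q] - (\calL[\mu'] - \tilde{\mu})(\calL[\mu'] - \tilde{\mu})^\top \succeq -\tau T \cdot I_d,
\]
with domain $\mathbb{B}^d(2R + n\tau + \alpha\sqrt{L})$. Quasi-convexity in $\tilde{\mu}$ follows exactly as in \cref{lem:regression-quasiconvexity}, since averaging two linear operators preserves both the SoS constraints and the closeness constraint (by the same pointwise $\lambda$-convexity calculation as in \cref{eqn:quasi-convexity-intro}, now applied with $Q$ as the implicit inverse-covariance proxy). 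Bounded sensitivity is verbatim \cref{lem:bounded-sensitivity-regression} since $\tilde{\mu}$ does not appear in any constraint involving $w_i$. The separating-hyperplane argument of \cref{lem:closeness-separation-oracle} transfers directly, giving efficient computability. For the volume bounds I would use the analogue of \cref{lem:closeness_constraint_soundness}: combining the main SoS identity above with $Q \preceq O(\Sigma^{-1})$ shows that score-$T$ points satisfy $\|\Sigma^{-1/2}(\tilde{\mu}-\mu)\| \le O(\alpha + (T/n)\sqrt{\log(n/T)})$, giving the desired upper bound on the volume of high-scoring points; feasibility with $x_i' = x_i^*$, $\mu' = \bar\mu$, $\Sigma'$ the empirical covariance, and $Q = \Sigma'^{-1}$ shows that every $\tilde{\mu}$ in an $O(\alpha)$-radius $\Sigma^{-1/2}$-ball around $\mu$ has score $O(\eta) n$, giving the lower bound.

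The main obstacle is the SoS proof that $\Iprod{u, Qu} \le O(1)\Iprod{u,\Sigma^{-1}u}$ for fixed $u$, under the mean-centered covariance $\Sigma' = \tfrac 1 n \sum (x_i'-\mu')(x_i'-\mu')^\top$ rather than the uncentered one used in regression; I expect this to require first certifying $(u^\top(\Sigma-\Sigma')v)^2 \le O(\eta\log(1/\eta))(u^\top \Sigma u)(v^\top \Sigma v)$ under the bilinear form, which in turn requires a selector-lemma argument combining resilience of second moments with control over $\mu' - \bar{\mu}$ via the main SoS identity itself, so the two claims must be proved together. Plugging everything into \cref{thm:pure_dp_reduction} with $\eta = \alpha/\sqrt{\log(1/\alpha)}$, $R$ the norm bound, and $r = \alpha/\sqrt{L}$, and estimating $\log(V_{\eta'}/V_\eta)$ via the $R\sqrt{L}$-diameter of the domain and the $(\eta/\sqrt{L})^d$ volume of low-scoring points, yields exactly the stated sample complexity.
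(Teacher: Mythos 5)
Your overall architecture (robust SoS estimator $\to$ geometry-aware score $\to$ \cref{thm:pure_dp_reduction}) matches the paper, but you have inverted the geometry, and this breaks the utility argument. For covariance-aware mean estimation the target guarantee is $\norm{\Sigma^{-1/2}(\muhat-\mu)}\le\alpha$, which by duality is equivalent to $\Iprod{u,\muhat-\mu}^2\le\alpha^2\, u^\top\Sigma\, u$ for all $u$ — the quadratic form of $\Sigma$ itself, not of $\Sigma^{-1}$. Accordingly, the correct robust SoS identity (the paper's \cref{lem:main_identity_mean_sos_proof}, after substituting $u\mapsto\Sigma^{1/2}u$) reads $\Iprod{u,\mu'-\mu}^2\le O(\eta^2\log(1/\eta))\,u^\top\Sigma u$, and the correct closeness constraint in the score function is $\Iprod{u,\calL\mu'-\tilde\mu}^2\le O(\alpha^2)\,\calL[u^\top\Sigma' u]$, with $\Sigma'$ appearing directly and no inverse-covariance variable $Q$ at all. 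Your proposed identity $\Iprod{u,\mu'-\bar\mu}^2\le O(\eta^2\log(1/\eta))\,u^\top\Sigma^{-1}u$ and closeness constraint $\alpha^2\calL[Q]-(\calL\mu'-\tilde\mu)(\calL\mu'-\tilde\mu)^\top\succeq -\tau T I_d$ instead control $\norm{\Sigma^{1/2}(\tilde\mu-\calL\mu')}$. This only upper-bounds the Mahalanobis error when $\Sigma\succeq\Omega(1)I$; under the actual hypothesis $\Sigma\succeq\tfrac1L I$ it permits errors up to $\alpha\cdot L$ in low-variance directions, so the claimed conclusion $\norm{\Sigma^{-1/2}(\muhat-\mu)}\le O(\alpha)$ does not follow. (Your own sanity check fails too: plugging $u=\Sigma^{-1}(\muhat-\bar\mu)$ into your identity produces $v^\top\Sigma^{-3}v$ on the right, not $\norm{\Sigma^{-1/2}v}^2$.)

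The inverse-covariance machinery ($Q$, \cref{lem:sos_inverse_covariance}, the bilinear-form closeness of $\Sigma$ and $\Sigma'$) is needed for \emph{regression}, where the error metric is $\norm{\Sigma^{1/2}(\cdot)}$ and hence the dual quadratic form involves $\Sigma^{-1}$; for mean estimation it is unnecessary, and the "main obstacle" you identify at the end (certifying $\Iprod{u,Qu}\le O(1)\Iprod{u,\Sigma^{-1}u}$ for the mean-centered covariance) is self-inflicted. Once the constraint is stated with $\Sigma'$, everything else you describe — quasi-convexity via convexity of $z\mapsto z^2$, bounded sensitivity, the separation oracle of \cref{lem:closeness-separation-oracle} applied to $(\calL\mu'-\tilde\mu)(\calL\mu'-\tilde\mu)^\top\preceq O(\alpha^2)\calL[\Sigma']$, feasibility of the single-point distribution, and the volume computation with $\Sigma^{-1/2}$-balls of volume at least $(\eta/\sqrt{L})^d$ — goes through exactly as in the paper. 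You do still need the spectral closeness of $\Sigma'$ and $\Sigma$ (the paper's \cref{cor:non_zero_mean_cov_close_sos}) to convert the $\Sigma'$-based closeness constraint into a $\Sigma$-based accuracy guarantee, and proving that jointly with the mean identity is the genuine (and correctly anticipated) coupling in the argument.
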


Again, we first obtain a robust estimator achieving (nearly) optimal error.
In particular, we prove the following:
\begin{theorem}[Robust Covariance-Aware Mean Estimation]
\label{thm:mean-est-sample-opt}
Let $\cD$ be a distribution with mean $\mu$ and covariance $\Sigma$ such that $\Sigma^{-1/2}(\cD - \mu)$ is fourth moment matching reasonable sub-gaussian.
Given $0 < \eta$ smaller than a sufficiently small constant$, 0 < \beta$ and $n \geq n_0$ points that are an $\eta$-corruption of $n$ i.i.d.\ samples from $\cD$, there exists an algorithm that runs in time $(n \cdot d)^{\mathcal{O}(1)}$ and with probability at least $1-\beta$, outputs $\muhat$ such that 
\begin{equation*}
    \norm{ \Sigma^{-1/2}\left(\muhat - \mu \right) }_2 \leq \bigO{\eta \sqrt{\log(1/\eta)}} \,,
\end{equation*}
whenever 
\begin{equation*}
    n_0  = \tilde{\Omega}\Paren{ \frac{d^2 + \log^{2}(1/\beta)}{ \eta^2  } }.
\end{equation*}
\end{theorem}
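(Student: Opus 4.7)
}
The plan is to mirror the single-shot SoS approach used for regression and to adapt it to the covariance-aware mean estimation setting. I would set up a polynomial constraint system $\calA_\eta$ in indeterminates $w_1,\dots,w_n$ (SoS indicators), pseudo-samples $x_1',\dots,x_n'$, a pseudo-mean $\mu'$, and a pseudo-covariance $\Sigma'$:
\begin{equation*}
\calA_\eta =
  \left\{
  \begin{aligned}
    &\forall i.\; w_i^2 = w_i,\; w_i(x_i' - x_i) = 0,\quad \sum_i w_i \geq (1-\eta)n \\
    & \mu' = \tfrac{1}{n}\sum_i x_i',\quad \Sigma' = \tfrac{1}{n}\sum_i (x_i' - \mu')(x_i' - \mu')^\top\\
    & \forall v.\; \tfrac{1}{n}\sum_i \langle x_i' - \mu', v\rangle^4 \leq \paren{3 + \eta\log^2(1/\eta)}\paren{v^\top \Sigma' v}^2
  \end{aligned}
  \right\}\,.
\end{equation*}
Feasibility at the planted assignment $w_i = r_i$, $x_i' = x_i^*$ follows (with probability $1-\beta$) from the $\eta$-goodness and $\eta$-higher-order-goodness conditions of~\cref{def:eps_goodness,def:higher_order_goodness} together with the SoS proof of certifiable hypercontractivity for centered Gaussian samples (as in~\cref{fact:cert_bounded_moments_from_concentration}), provided $n \geq \tilde{\Omega}((d^2 + \log^2(1/\beta))/\eta^2)$. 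The algorithm finds a degree-$O(1)$ pseudo-distribution $\zeta$ satisfying $\calA_\eta$ in polynomial time via~\cref{fact:eff-pseudo-distribution} and outputs $\muhat = \pE_\zeta[\mu']$.

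The heart of the argument is the SoS closeness identity: for every \emph{fixed} direction $u \in \R^d$,
\[
\calA_\eta \sststile{O(1)}{w,x',\mu'} \Set{\Iprod{u, \mu' - \bar{\mu}}^2 \leq O(\eta^2 \log(1/\eta)) \cdot u^\top \Sigma u}\,,
\]
where $\bar{\mu} = \tfrac{1}{n}\sum_i x_i^*$ is the empirical mean of the uncorrupted samples. To prove it, the key identity is $\mu' - \bar{\mu} = \tfrac{1}{n}\sum_i (1 - r_iw_i)(x_i' - x_i^*)$, which follows because $r_iw_i(x_i' - x_i^*) = 0$ under $\calA_\eta$. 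Applying the SoS triangle inequality and SoS Cauchy--Schwarz (\cref{fact:sos-almost-triangle,fact:sos_cs}) yields
\[
\Iprod{u, \mu' - \bar{\mu}}^2 \leq 2\eta \cdot \underbrace{\tfrac{1}{n}\sum_i (1-r_iw_i)\Iprod{u, x_i'}^2}_{\text{Term A}} + 2\eta \cdot \underbrace{\tfrac{1}{n}\sum_i (1-r_iw_i)\Iprod{u, x_i^*}^2}_{\text{Term B}}\,,
\]
using that $\sum_i (1-r_iw_i) \leq 2\eta n$. Term B is bounded by $O(\eta \log(1/\eta))\,u^\top \Sigma u$ by the \hyperref[lemma:subset_selection]{Selector Lemma} applied to the nonnegative scalars $\Iprod{u,x_i^*}^2$ together with the small-subset concentration bound of~\cref{fact:cov_small_subset}; this step crucially uses that $u$ is a fixed direction. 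Term A is handled by relating $\Sigma'$ to $\Sigma$ along the fixed direction $u$, in exactly the same way as in the covariance estimation identity sketched in~\eqref{eqn:covariance-estimation-sos-intro} and formalized in~\cref{lem:main_identity_cov_sos_proof}: the hypercontractivity constraint with the tight constant $3 + \eta\log^2(1/\eta)$, combined again with the Selector Lemma on $w_i\Iprod{u, x_i^* - \mu}^4$, yields that $\calA_\eta$ implies $\Iprod{u, \Sigma' u} \leq (1 + O(\eta\log(1/\eta)))\,u^\top \Sigma u$, after which one absorbs the deficit $1-r_iw_i$ using Cauchy--Schwarz and the Selector Lemma once more.

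Given the SoS identity, the rounding step is routine: by Cauchy--Schwarz for pseudo-expectations (\cref{fact:pe_cs}),
\[
\Iprod{u, \muhat - \bar{\mu}}^2 = \paren{\pE_\zeta\Iprod{u, \mu' - \bar{\mu}}}^2 \leq \pE_\zeta\Iprod{u, \mu' - \bar{\mu}}^2 \leq O(\eta^2 \log(1/\eta))\,u^\top\Sigma u\,,
\]
so taking $u = \Sigma^{-1}(\muhat - \mu)$, combined with the standard isotropic-concentration bound $\Norm{\Sigma^{-1/2}(\bar{\mu} - \mu)} \leq O(\sqrt{(d + \log(1/\beta))/n}) = O(\eta)$ for our choice of $n$, gives $\Norm{\Sigma^{-1/2}(\muhat - \mu)} \leq O(\eta\sqrt{\log(1/\eta)})$ by triangle inequality. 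The main obstacle is the SoS identity for Term A, which requires certifying a mean-bound analogue of resilience inside the proof system; as in regression, this is possible only because $u$ is fixed, so the Selector Lemma can be invoked on the scalars indexed by $w_i$, sidestepping the obstruction to sos-izing resilience in general directions.
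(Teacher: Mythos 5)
Your outline follows the paper's strategy up to the decomposition of $\mu' - \bar{\mu}$ into $r_iw_i$- and $(1-r_iw_i)$-indexed parts and the use of the Selector Lemma for the fixed-scalar term, but there is a genuine gap in how you handle Term A, and this is exactly the technical crux of the paper's proof.

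You claim that the hypercontractivity constraint plus the Selector Lemma already gives an SoS proof that $u^\top \Sigma' u \leq (1+O(\eta\log(1/\eta)))\,u^\top \Sigma u$, ``in exactly the same way as in the covariance estimation identity.'' This transfer does not go through: the covariance estimation program ($\cref{lem:main_identity_cov_sos_proof}$) has $\Sigma' = \tfrac 1 n \sum_i x_i'(x_i')^\top$ with no centering, whereas here $\Sigma' = \tfrac 1 n \sum_i (x_i' - \mu')(x_i'-\mu')^\top$ is centered at the \emph{indeterminate} $\mu'$. When you try to compare $u^\top \Sigma' u$ with $u^\top\Sigma u$ by expanding $\langle u, x_i^* - \mu'\rangle^2 = \langle u, x_i^*-\mu\rangle^2 + 2\langle u, x_i^*-\mu\rangle\langle u,\mu-\mu'\rangle + \langle u, \mu-\mu'\rangle^2$, you are forced to carry $\langle u, \mu'-\mu\rangle$ terms --- i.e.\ exactly the quantity you are trying to bound. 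The paper's Corollary~\ref{cor:cov_mean_sos} makes this explicit: the best SoS-certified bound relating $\Sigma'$ to $\Sigma$ here is of the form
\[
\eta^2\,\Iprod{\Sigma'-\Sigma, uu^\top}^2 \;\leq\; \tfrac{1}{2}\Iprod{\mu'-\mu,u}^4 + O(\eta^4\log^2(1/\eta))(u^\top\Sigma u)^2,
\]
and \emph{not} a direct spectral bound on $\Sigma'$.

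Feeding that back into your intermediate second-moment inequality leaves a term $\eta\langle\Sigma'-\Sigma,\cdot\rangle$ that cannot be traded for a self-referential term of degree two with a sub-unit coefficient: formally you would need to take a square root of the displayed corollary, which is not an available SoS inference (\cref{lem:square_root_sos} only applies when the bounding side is a nonnegative \emph{constant}, and there the quantity involves indeterminates). The paper resolves this by lifting the intermediate inequality to the \emph{fourth power} via \cref{fact:sos-squaring}, applying \cref{cor:cov_mean_sos} at the level of squares, and then absorbing the resulting $\tfrac 1 2\Iprod{\mu'-\mu,u}^4$ into the left-hand side; that is why \cref{lem:main_identity_mean_sos_proof} is stated as a bound on $\Iprod{\Sigma^{-1/2}(\mu'-\mu),u}^4$ rather than on the square. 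Your plan should be restructured around a fourth-power SoS identity, or you need to supply a genuinely different argument that breaks the circular dependence between $\mu'-\mu$ and $\Sigma'-\Sigma$ inside the proof system.

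Two smaller points: (i) your decomposition should write centered terms $\langle u, x_i'-\mu\rangle$, $\langle u, x_i^*-\mu\rangle$ rather than $\langle u, x_i'\rangle$, $\langle u, x_i^*\rangle$ --- with nonzero $\mu$ the uncentered scalars are not controlled by the concentration or hypercontractivity facts you invoke; (ii) the comparison to $\bar\mu$ instead of $\mu$ is a harmless stylistic deviation, but then the Selector Lemma application needs the small-subset bounds recentered at $\bar\mu$ (or a triangle inequality to swap $\bar\mu$ for $\mu$), which is fine but should be stated.
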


We will use the following notation:
We denote the uncorrupted samples by $\{x^*_1, x^*_2 , \ldots x^*_n\}$.
We denote by $\{x_1, x_2 , \ldots x_n\}$ the $\eta$-corrupted version of $\{x_1^*, \ldots, x_n^*, \}$ we receive as input.
Further, let $r_i \in \{0,1\}$, be 1 if the $i$-th sample is uncorrupted, i.e., $x_i^* = x_i$, and 0 otherwise.

We will use the following constraint system in vector-valued variables $x_1', \ldots, x_n', \mu'$, scalar-valued variables $w_i$, and matrix valued variables $\Sigma'$.

\begin{equation*}
\calA_{\eta}\colon
  \left \{
    \begin{aligned}
      &\forall i\in [n].
      & w_i^2
      & = w_i \\
      & &\sum_{i\in [n]} w_i &= (1-\eta)n \\
      &\forall i\in [n] & w_i(x'_i - x_i) &=0 \\
      &&\tfrac{1} n \sum_{i \in [n]} x_i' &= \mu' \\
    && \frac{1}{n} \sum_{i \in [n]} \Paren{ x_i' - \mu' } \Paren{ x_i' - \mu' }^\top & = \Sigma'  \\ 
    & \exists \text{ the following SoS proof }  & \sststile{4}{v} \Biggl\{\frac{1}{n} \sum_{i \in [n]} \Iprod{x_i'  - \mu', v}^4  \leq & \Paren{3 + \eta \log^2(1/\eta)} \Paren{ v^\top \Sigma' v}^2 \Biggr\}
    \end{aligned}
  \right \}
\end{equation*}

Note that for the inequalities that must be satisfied for all $v$, we require that there is an SoS proof of the inequality for any $v$. These constraints can be efficiently encoded as polynomial inequalities in our program variables by using auxiliary variables (see~\cite{fleming2019semialgebraic}).

\paragraph{Feasibility.} The feasibility of the above program follows by setting $w_i = r_i$ and $\eta$-goodness and $\eta$-higher-order-goodness of the true samples.
We will give a formal proof in \cref{sec:feasibility}.

\begin{mdframed}
  \begin{algorithm}[Optimal Robust Mean Estimation in Polynomial Time]
    \label{algo:mean_estimation}\mbox{}
    \begin{description}
    \item[Input:] $\eta$-corrupted sample $x_1, \ldots, x_n$, corruption level $\eta$, failure probability $\beta$.
    \item[Output:] An estimate $\muhat$ of the mean attaining the guarantees of \cref{thm:mean-est-sample-opt}.
    
    \item[Operations:]\mbox{}
    \begin{enumerate}
    \item Find a degree-$\bigO{1}$ pseudo-distribution $\zeta$ satisfying $\calA_{\eta}$.
    \item Output $ \muhat = \pE_{\zeta}[\muprime]$.
    \end{enumerate}
    \end{description}
  \end{algorithm}
\end{mdframed}

Our main technical lemma will be the following
\begin{lemma}[SoS Mean to True Mean]
\label{lem:main_identity_mean_sos_proof}
Suppose $0 < \eta$ is smaller than a sufficiently small constant.
Let $x_1^*, \ldots, x_n^*$ be the uncurrupted samples and assume $\Sigma^{-1/2}(x_1^* - \mu), \ldots, \Sigma^{-1/2}(x_n^* - \mu)$ are $\eta$-good and $\eta$-higher-order-good.
Then, for any fixed $u \in \mathbb{R}^d$ the following SoS proof exists
\begin{align*}
    \calA_\eta \sststile{6}{w} \Set{ \Iprod{ \Sigma^{-1/2} \Paren{ \mu' - \mu } ,  u }^4 \leq \bigO{\eta^4 \log^2 (1/ \eta) }  \norm{u}^4 } \,.
\end{align*}
\end{lemma}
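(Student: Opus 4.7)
The plan is to mirror the covariance-estimation argument (\cref{lem:main_identity_cov_sos_proof}), adapted to the mean. First I would reduce to the isotropic setting: the constraint system $\calA_\eta$ is equivariant under the affine transformation $x \mapsto \Sigma^{-1/2}(x-\mu)$ (the $\mu'$, $\Sigma'$ indeterminates, the hypercontractivity inequality, and the booleanity/selection constraints on $w_i, x_i'$ all transport through), so it suffices to prove, under the convention $\mu = 0$ and $\Sigma = I$, that $\calA_\eta \sststile{6}{w} \{ \Iprod{\mu', u}^4 \leq \bigO{\eta^4 \log^2(1/\eta)} \norm{u}^4 \}$ for any fixed $u$. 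Writing $r_i = \indic{x_i = x_i^*}$, the constraints $w_i x_i' = w_i x_i$ together with $r_i x_i = r_i x_i^*$ give the SoS identity $r_i w_i x_i' = r_i w_i x_i^*$, so that with $\bar\mu = \tfrac 1 n \sum_i x_i^*$ one has
\[
  \mu' \;=\; \bar\mu \;+\; \frac{1}{n}\sum_i (1-r_iw_i)\,(x_i' - x_i^*) \,.
\]
Contracting with $u$ and applying the SoS almost-triangle inequality (\cref{fact:sos-almost-triangle}) reduces the problem to bounding $\Iprod{\bar\mu, u}^4$, $(\tfrac 1 n \sum_i (1-r_iw_i)\Iprod{x_i^*, u})^4$, and $(\tfrac 1 n \sum_i (1-r_iw_i)\Iprod{x_i', u})^4$ by $\bigO{\eta^4 \log^2(1/\eta)}\norm{u}^4$.

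The first term is handled by concentration: for the chosen sample size $n \gtrsim d^2/\eta^2$, $\eta$-goodness gives $\norm{\bar\mu} \leq \bigO{\eta}$, which controls $\Iprod{\bar\mu, u}^4 \leq \bigO{\eta^4}\norm{u}^4$ without any indeterminates. For the second term, I would use the booleanity $(1-r_iw_i)^2 = (1-r_iw_i)$ together with SoS H\"older (\cref{fact:sos-holder}) with $k=4$, obtaining
\[
  \Paren{\tfrac 1 n \sum_i (1-r_iw_i)\Iprod{x_i^*, u}}^4 \;\leq\; \Paren{\tfrac 1 n \sum_i (1-r_iw_i)}^3 \cdot \tfrac 1 n \sum_i (1-r_iw_i)\Iprod{x_i^*, u}^4 \,.
\]
Since $\sum_i (1-r_iw_i) \leq 2\eta n$ at SoS degree~$2$, the prefactor is $\bigO{\eta^3}$. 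The second factor is a sum of the nonnegative \emph{constants} $a_i = \Iprod{x_i^*, u}^4$ against an SoS-indicator of an $\leq 2\eta n$ subset; by $\eta$-higher-order goodness (which, for a fixed $u$, gives a resilience-type subset fourth-moment bound $\tfrac 1 n \sum_{i \in T} \Iprod{x_i^*, u}^4 \leq \bigO{\eta \log^2(1/\eta)}\norm{u}^4$ for every $T$ with $\Card{T} \leq 2\eta n$), the Selector Lemma (\cref{lemma:subset_selection}) lifts this to an SoS proof at degree~$2$, giving $\bigO{\eta^4 \log^2(1/\eta)}\norm{u}^4$ for the second term.

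The heart of the argument is the third term, where the variables $x_i'$ prevent a direct application of the Selector Lemma. After applying SoS H\"older again to pull out an $\bigO{\eta^3}$ factor, I am reduced to bounding $\tfrac 1 n \sum_i (1-r_iw_i)\Iprod{x_i', u}^4$ by $\bigO{\eta \log^2(1/\eta)}\norm{u}^4$ plus a term proportional to $\Iprod{\mu', u}^4$. To do this, I would recenter using $\Iprod{x_i', u}^4 \leq 8\Iprod{x_i' - \mu', u}^4 + 8\Iprod{\mu', u}^4$; the hypercontractivity constraint of $\calA_\eta$ controls $\tfrac 1 n \sum_i \Iprod{x_i' - \mu', u}^4$ by $\bigO{1}(u^\top \Sigma' u)^2$, and then invoking the covariance estimate (\cref{lem:main_identity_cov_sos_proof}) isotropized to $(u^\top \Sigma' u)^2 \leq \bigO{1}\norm{u}^4$ finishes this factor. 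The leftover $\Iprod{\mu', u}^4$ contribution comes with a prefactor of $\bigO{\eta^4}$, so after collecting terms one obtains an SoS inequality of the form $\Iprod{\mu', u}^4 \leq \bigO{\eta^4 \log^2(1/\eta)}\norm{u}^4 + \bigO{\eta^4}\Iprod{\mu', u}^4$. Because $\Iprod{\mu', u}^4$ is a square and $\eta$ is smaller than a sufficiently small constant, this rearranges inside SoS to the claimed bound.

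The main obstacle is precisely this last step: certifying a resilience-style subset bound on the fourth moments of the \emph{indeterminate} vectors $x_i'$. The Selector Lemma does not apply directly, and the workaround — combining the program's hypercontractivity inequality with the separately proven covariance identity to close the bootstrap — is what forces us to invoke~\cref{lem:main_identity_cov_sos_proof} and what dictates the degree-$6$ budget in the conclusion.
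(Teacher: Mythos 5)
There is a genuine gap in your handling of the third term, and it is fatal to the rate. After SoS H\"older pulls out the factor $\bigO{\eta^3}$, you correctly state that you need to bound $\tfrac 1 n \sum_i (1-r_iw_i)\Iprod{x_i',u}^4$ by $\bigO{\eta\log^2(1/\eta)}\norm{u}^4$ plus a multiple of $\Iprod{\mu',u}^4$. But the argument you then give discards the indicator $(1-r_iw_i)$ entirely and bounds the \emph{full} empirical fourth moment via the hypercontractivity constraint, which only yields $\bigO{1}(u^\top\Sigma'u)^2 = \bigO{1}\norm{u}^4$ — there is no factor of $\eta\log^2(1/\eta)$ anywhere. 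The third term therefore comes out as $\bigO{\eta^3}\norm{u}^4$ rather than $\bigO{\eta^4\log^2(1/\eta)}\norm{u}^4$, and after taking fourth roots you would only certify $|\Iprod{\mu'-\mu,\Sigma^{-1/2}u}| \leq \bigO{\eta^{3/4}}\norm{u}$, i.e.\ the suboptimal $\eta^{3/4}$ rate that the whole construction is designed to beat. Dropping the indicator and invoking hypercontractivity of the full sum is exactly the move that gets stuck at fractional powers of $\eta$; the extra $\eta\log^2(1/\eta)$ can only come from a cancellation between the full sum and the $r_iw_i$-indicated sum (resilience), and your write-up never produces a lower bound on the indicated sum.

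The paper avoids this by working at \emph{second} order for the mean: the bad-set contribution is Cauchy--Schwarzed to $\eta \cdot \tfrac 1 n \sum_i (1-r_iw_i)\Iprod{\Sigma^{-1/2}(x_i'-\mu),u}^2$, and this second moment over the bad set is written as the full second moment (which expands to $\norm{u}^2 + \Iprod{\Sigma'-\Sigma,\cdot} + \Iprod{\Sigma^{-1/2}(\mu'-\mu),u}^2$) minus the $r_iw_i$-indicated second moment (which is $\geq (1-\bigO{\eta\log(1/\eta)})\norm{u}^2$ by \cref{lemma:small_subset}); the difference is $\bigO{\eta\log(1/\eta)}\norm{u}^2$ plus error terms, which is where the sharp rate comes from. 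This yields an intermediate degree-2 inequality with a residual covariance-error term, which is then squared and closed using \cref{cor:cov_mean_sos} (whose right-hand side deliberately carries a $\tfrac 12 \Iprod{\mu'-\mu,u}^4$ term to be absorbed). Your decomposition of $\mu'$ and your treatment of the first two terms are fine, and your plan to recover a resilience-type subset fourth-moment bound for the fixed vectors $x_i^*$ from higher-order goodness via the Selector Lemma is sound; but to repair the third term you would need to reproduce the full-minus-good cancellation (at second or fourth moment level) for the indeterminates $x_i'$, which is precisely the content of the paper's Terms \eqref{eqn:bad_set}.1 and \eqref{eqn:bad_set}.2. (A minor additional slip: your leftover $\Iprod{\mu',u}^4$ carries prefactor $\bigO{\eta^3}$, not $\bigO{\eta^4}$, though that alone would not break the rearrangement.)
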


Given the aforementioned lemmas, the proof of \cref{thm:mean-est-sample-opt} is fairly straight-forward. 
\begin{proof}[Proof of \cref{thm:mean-est-sample-opt}]
    By~\cref{fact:mean_and_cov_of_1-eps,fact:cov_small_subset} and~\cref{lem:higher-order-stability} we have that the $\Sigma^{-1/2}(x_i^* - \mu)$ are $\eta$-good and $\eta$-higher-order good with probability at least $1-\beta$.
    We henceforth condition on this event.
    Let $\zeta$ be a degree-$\bigO{1}$ pseudo-expectation satisfying $\calA_{\eta}$.
    By~\cref{fact:eff-pseudo-distribution}, this can be computed in time $(n \cdot d)^{\mathcal{O}(1)}$.
    Let $\muhat = \pE_\zeta \muprime$ and $u = \tfrac{ \Sigma^{-1/2} \Paren{ \muhat - \mu  } }{\Norm{\Sigma^{-1/2} \Paren{ \muhat - \mu  }}}$.
    Then it follows by \cref{lem:main_identity_mean_sos_proof} that
    \begin{equation*}
    \begin{split}
        \Norm{ \Sigma^{-1/2} \Paren{ \muhat - \mu  }  }^4 = \Iprod{ \Sigma^{-1/2} \Paren{ \muhat - \mu  } , u}^4 & = \Paren{\pE_\zeta \Iprod{ \Sigma^{-1/2} \Paren{ \mu' - \mu  } , u}}^4 \\
        & \leq \pE_\zeta \Iprod{\Sigma^{-1/2} \Paren{ \mu' - \mu  } , u}^4 \\
        & \leq \bigO{\eta^4 \log^2 (1/ \eta)}  \,,
    \end{split}
    \end{equation*}
    where the inequality follows by Cauchy-Schwarz for pseudo-expectations (cf.~\cref{fact:pe_cs}).
    \cref{thm:mean-est-sample-opt} then follows by taking fourth roots.
\end{proof}

\subsection{Proof of \cref{lem:main_identity_mean_sos_proof} }

Our main technical tool for proving \cref{lem:main_identity_mean_sos_proof} will be the following simple fact which allows us to port the notion of $\eta$-goodness (cf.~\cref{def:eps_goodness}) to the sum-of-squares framework. We show that if all subsets of size $\eta n$ are well-behaved for the true samples, then a sum-of-squares indicated subset also inherits these properties.
\begin{lemma}
\label{lemma:small_subset}
Suppose $\Sigma^{-1/2}(x_1^* - \mu), \ldots, \Sigma^{-1/2}(x_n^* - \mu)$ are $\eta$-good.
Then, for any $u\in\mathbb{R}^d$, it holds that:
\begin{align}
    \label{eqn:small_subset_upper_bound}
    \mathcal{A} \sststile{2}{w_1,\ldots,w_n  } \Set{ \frac{1}{n}\sum_{i \in [n]} r_i \Paren{1 - w_i} \Iprod{ \Sigma^{-1/2}\Paren{ x_i^* - \mu} , u}^2 \leq \bigO{\eta \log(1/\eta)  } \norm{u}_2^2  },
\end{align}
where $r_i = 1$ if $x_i^* = x_i$ and $0$ otherwise. Further,
\begin{align}
\label{eqn:small_subset_lower_bound}
    \mathcal{A} \sststile{2}{w_1,\ldots,w_n} \Set{ \frac{1}{n}\sum_{i \in [n]} r_i w_i \Iprod{ \Sigma^{-1/2}\Paren{ x_i^* - \mu}  , u}^2 \geq (1 - \bigO{\eta \log(1/\eta)}  )  \norm{u}_2^2 } \,.
\end{align}
\end{lemma}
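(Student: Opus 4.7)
\noindent\textbf{Proof plan for \cref{lemma:small_subset}.} The plan is to reduce both inequalities to a direct application of the \ref{eqn:selector-lemma-intro} (\cref{lemma:subset_selection}). Set $z_i := 1 - w_i$ and $a_i := r_i \cdot \langle \Sigma^{-1/2}(x_i^* - \mu), u\rangle^2$. Note that the $a_i$ are \emph{fixed} nonnegative scalars (not SoS variables), since $r_i \in \{0,1\}$ indicates whether $x_i$ was uncorrupted and $u$ is a fixed vector.

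First, I would verify that $\calA_\eta$ implies the hypotheses of the Selector Lemma at degree $2$. From $w_i^2 = w_i$ we get the degree-$2$ SoS identities $w_i = w_i^2 \geq 0$ and $1 - w_i = (1-w_i)^2 \geq 0$, and hence $0 \leq z_i \leq 1$; combined with $\sum_i w_i = (1-\eta)n$ this yields $\sum_i z_i = \eta n$. Second, I would bound the scalar $B$ needed by the Selector Lemma: for every $\calT \subset [n]$ with $|\calT| = \eta n$,
\[
  \sum_{i \in \calT} a_i \;\leq\; \sum_{i \in \calT} \langle \Sigma^{-1/2}(x_i^* - \mu), u\rangle^2 \;\leq\; O(\eta n \log(1/\eta)) \cdot \|u\|^2,
\]
where the last inequality is \cref{fact:cov_small_subset} applied to the $\eta$-good samples $\Sigma^{-1/2}(x_i^*-\mu)$ (which have mean $0$ and identity covariance). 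Then \cref{lemma:subset_selection} (with $B$ as above) gives exactly
\[
  \calA_\eta \sststile{2}{w} \Set{ \tfrac{1}{n}\sum_i r_i (1-w_i) \langle \Sigma^{-1/2}(x_i^* - \mu), u\rangle^2 \leq O(\eta \log(1/\eta)) \|u\|^2 },
\]
which is \eqref{eqn:small_subset_upper_bound}.

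For \eqref{eqn:small_subset_lower_bound}, I would write $r_i w_i = r_i - r_i(1-w_i)$, so
\[
  \tfrac{1}{n}\sum_i r_i w_i \langle \Sigma^{-1/2}(x_i^*-\mu), u\rangle^2 \;=\; \underbrace{\tfrac{1}{n}\sum_i r_i \langle \Sigma^{-1/2}(x_i^*-\mu), u\rangle^2}_{\text{fixed scalar}} \;-\; \tfrac{1}{n}\sum_i r_i(1-w_i) \langle \Sigma^{-1/2}(x_i^*-\mu), u\rangle^2.
\]
The subtracted term is controlled by \eqref{eqn:small_subset_upper_bound}. For the fixed scalar, at least $(1-\eta)n$ of the $r_i$ equal $1$, so restricting to any size-$(1-\eta)n$ subset of the uncorrupted indices and applying \cref{fact:mean_and_cov_of_1-eps} (again with mean $0$ and covariance $I$) gives a lower bound of $(1-O(\eta\log(1/\eta)))\|u\|^2$. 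Combining the two yields \eqref{eqn:small_subset_lower_bound}.

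\noindent\textbf{Main obstacle.} There is no deep obstacle; the proof is essentially an exercise in setting up the Selector Lemma correctly. The one subtle point worth emphasizing is why the reduction succeeds here even though certifying resilience for arbitrary SoS-indicated subsets and \emph{arbitrary} SoS directions is believed to be hard: the direction $u$ is fixed (not an indeterminate), which is precisely what lets us treat the $a_i$ as fixed nonnegative scalars and apply \cref{lemma:subset_selection} at degree $2$. Absorbing the indicator $r_i$ into $a_i$ (rather than leaving it separate) is what allows the scalar bound $B$ to be proved uniformly over all size-$\eta n$ subsets using only the $\eta$-goodness of the uncorrupted samples.
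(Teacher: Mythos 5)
Your proposal is correct and follows essentially the same route as the paper: both inequalities reduce to the Selector Lemma applied to the fixed nonnegative scalars $\langle\Sigma^{-1/2}(x_i^*-\mu),u\rangle^2$ (possible precisely because $u$ is not an indeterminate), with the lower bound obtained by complementing against a fixed empirical second moment bounded via the goodness/concentration facts. The only (immaterial) difference is in the lower-bound decomposition: you write $r_iw_i = r_i - r_i(1-w_i)$ and reuse \eqref{eqn:small_subset_upper_bound} directly, while the paper writes $r_iw_i = 1-(1-r_iw_i)$ and bounds $\sum_i(1-r_iw_i)\le 2\eta n$ separately; both yield the same conclusion.
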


\begin{proof}
Let $n' = (1-\eta)n$
We start with \cref{eqn:small_subset_upper_bound}.
Since $r_i \leq 1$, 
\begin{equation}
    \sststile{4}{w} \Set{ \frac{1}{n}\sum_{i \in [n]} r_i \Paren{1 - w_i} \Iprod{ \Sigma^{-1/2}\Paren{ x_i^* - \mu } , u}^2 \leq  \frac 1 {n'} \sum_{i \in [n]} \Paren{1 - w_i} \Iprod{ \Sigma^{-1/2}\Paren{ x_i^* - \mu}  , u}^2  } \,.
\end{equation}

Further, by $\eta$-goodness of $\Sigma^{-1/2}(x_1^* - \mu), \ldots, \Sigma^{-1/2}(x_n^* - \mu)$ (cf.~\cref{fact:cov_small_subset}), we know that for every fixed set $T \subseteq [n]$ of size $\eta n$, it holds that
\begin{equation}
    \begin{split}
        \sststile{4}{w} \Biggl\{  \frac{1}{n'}  \sum_{i \in T} \Iprod{\Sigma^{-1/2}(x_i^* - \mu) , u}^2 & \leq\Norm{\frac{1}{n'} \sum_{i \in T} \left(\Sigma^{-1/2}(x_i^* - \mu)\right) \left(\Sigma^{-1/2}(x_i^* - \mu)\right)^\top}_2 \cdot \norm{u}^2_2 \\
        & \leq \bigO{\eta \log (1/\eta)} \norm{u}_2^2 \Biggr\}  \,,
    \end{split}
\end{equation}
Setting $a_i = \Iprod{\Sigma^{-1/2}(x_i^* - \mu), u}^2$ and using that $\calA_{\eta}$ implies at degree 2 that $\sum_{i=1}^n (1-w_i) = \eta n$ and $0 \leq w_i \leq 1$ the bound of \cref{eqn:small_subset_upper_bound} then follows by Lemma~\ref{lemma:subset_selection}.

Next, we prove \cref{eqn:small_subset_lower_bound}.
Similarly to the above and using that $\sum_{i=1}^n 1 - r_i w_i \leq 2 \eta n$, it follows that 
\begin{equation}
    \sststile{4}{w} \Set{ \frac{1}{n}\sum_{i \in [n]} (1- r_i w_i) \Iprod{\Sigma^{-1/2}(x_i^* - \mu),u}^2 \leq \bigO{\eta \log (1/\eta)} \norm{u}_2^2 \, } .
\end{equation}
Using standard concentration bounds (cf. \cref{fact:empirical_cov_bound}) it also follows that
\[
    \sststile{4}{w} \Set{ \frac{1}{n}\sum_{i \in [n]} \Iprod{\Sigma^{-1/2}(x_i^* - \mu),u}^2\geq \left(1 - \eta \log (1/\eta)\right)\norm{u}_2^2} \,.
\]
We can now see that there is a degree-2 proof that $\calA$ implies
\begin{align*}
    \frac{1}{n}\sum_{i \in [n]} r_i w_i \Iprod{\Sigma^{-1/2}(x_i^* - \mu), u}^2 &= \frac{1}{n}\sum_{i \in [n]} \Iprod{\Sigma^{-1/2}(x_i^* - \mu), u}^2- \frac{1}{n'}\sum_{i \in [n]} (1-r_i w_i) \Iprod{\Sigma^{-1/2}(x_i^* - \mu), u}^2  \\
    &\geq \left(1 - 2\eta \log (1/\eta)\right)\norm{u}_2^2 \,,
\end{align*}
where the last inequality follows from \cref{eqn:small_subset_upper_bound}.
Abusing the $\bigO{\cdot}$-notation, we write $2\eta \log(1/\eta) = \eta \log(1/\eta)$.
\end{proof}

Furthermore, we will also use an intermediate bound on the spectral error of $\Sigma'$, which we prove in~\cref{sec:cov-sos-for-mean}.
\begin{corollary}
\torestate{
\label{cor:cov_mean_sos}
    Suppose $0 < \eta$ is smaller than a sufficiently small constant and assume that $\Sigma^{-1/2}(x_1^* - \mu), \ldots, \Sigma^{-1/2}(x_n^* - \mu)$ are $\eta$-good and $\eta$-higher-order-good.
    Then, for any fixed $u \in \mathbb{R}^d$,
    \begin{align*}
    \calA_\eta \sststile{\bigO{1}}{} \Biggl\{ \eta^2 \Iprod{ \Sigma' - \Sigma,   uu^\top }^2  &\leq \frac{1}{2}\Iprod{\mu' - \mu, u}^4 +   \bigO{\eta^4 \log^2 (1/\eta)}  \Paren{ ( u^\top \Sigma u )^2} \Biggr\} \,.
    \end{align*}
}
\end{corollary}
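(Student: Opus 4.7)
The plan is to reduce the claim to a covariance-estimation SoS bound applied to the $\mu$-centered surrogate $M' := \tfrac{1}{n}\sum_i (x_i' - \mu)(x_i' - \mu)^\top$, exploiting the $\tfrac{1}{2}\Iprod{\mu' - \mu, u}^4$ slack on the RHS to absorb the rank-one correction between $M'$ and $\Sigma'$. Expanding $(x_i' - \mu) = (x_i' - \mu') + (\mu' - \mu)$ and using the defining constraint $\mu' = \tfrac{1}{n}\sum x_i'$ to cancel the cross terms yields the polynomial identity $M' = \Sigma' + (\mu' - \mu)(\mu' - \mu)^\top$; taking inner products with $uu^\top$ gives $\Iprod{\Sigma' - \Sigma, uu^\top} = \Iprod{M' - \Sigma, uu^\top} - \Iprod{\mu' - \mu, u}^2$. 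Squaring via SoS almost-triangle (\cref{fact:sos-almost-triangle}) and multiplying by $\eta^2$, the term $2\eta^2 \Iprod{\mu' - \mu, u}^4$ is at most $\tfrac{1}{2}\Iprod{\mu' - \mu, u}^4$ for $\eta$ smaller than a sufficiently small constant, reducing the claim to bounding $2\eta^2 \Iprod{M' - \Sigma, uu^\top}^2$ by $\bigO{\eta^4 \log^2(1/\eta)}(u^\top \Sigma u)^2$, modulo further $\Iprod{\mu' - \mu, u}^4$ contributions that can be absorbed on the RHS.

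To prove this intermediate bound, decompose $\Iprod{M' - \Sigma, uu^\top} = A + D$, where $A := \tfrac{1}{n}\sum \Iprod{x_i^* - \mu, u}^2 - u^\top \Sigma u$ is a fixed (non-SoS) quantity bounded by $\bigO{\eta \log(1/\eta)}(u^\top \Sigma u)$ via conclusion 1 of $\eta$-higher-order-goodness (\cref{lem:higher-order-stability}, applied with $P = \Sigma^{1/2}uu^\top \Sigma^{1/2}/(u^\top \Sigma u)$), and $D := \tfrac{1}{n}\sum (1-r_iw_i)[\Iprod{x_i' - \mu, u}^2 - \Iprod{x_i^* - \mu, u}^2]$ is an SoS deviation on the complement of the uncorrupted-and-selected subset (using the constraint $r_i w_i x_i' = r_i w_i x_i^*$). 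Split $D^2 \leq 2 D_1^2 + 2 D_2^2$ by SoS almost-triangle. For $D_2^2 = (\tfrac{1}{n}\sum(1-r_iw_i)\Iprod{x_i^* - \mu, u}^2)^2$, apply SoS Cauchy--Schwarz (\cref{fact:sos_cs}) followed by the Selector Lemma (\cref{lemma:subset_selection}) to the fixed non-negative quantities $\Iprod{x_i^* - \mu, u}^4$; $\eta$-higher-order-goodness bounds their sub-sums over any subset of size $2\eta n$ by $\bigO{\eta \log^2(1/\eta)}(u^\top \Sigma u)^2$, giving $D_2^2 \leq \bigO{\eta^2 \log^2(1/\eta)}(u^\top \Sigma u)^2$.

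For the remaining term $D_1^2 = (\tfrac{1}{n}\sum(1-r_iw_i)\Iprod{x_i' - \mu, u}^2)^2$, expand $\Iprod{x_i' - \mu, u}^2 \leq 2\Iprod{x_i' - \mu', u}^2 + 2\Iprod{\mu' - \mu, u}^2$ via SoS almost-triangle. The mean-shift piece contributes at most $\bigO{\eta^2}\Iprod{\mu' - \mu, u}^4$, which lands harmlessly on the RHS. The centered piece, via SoS Cauchy--Schwarz and the hypercontractivity constraint $\tfrac{1}{n}\sum \Iprod{x_i' - \mu', u}^4 \leq (3 + \eta \log^2(1/\eta))(u^\top \Sigma' u)^2$, is bounded by $\bigO{\eta}(u^\top \Sigma' u)^2$; expanding $(u^\top \Sigma' u)^2 \leq 2(u^\top \Sigma u)^2 + 2 \Iprod{\Sigma' - \Sigma, uu^\top}^2$ produces a recursive contribution that, after multiplication by $\eta^2$, appears with coefficient $\bigO{\eta^3}$ and is absorbed back into the LHS. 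The main obstacle will be closing the constant bookkeeping so that the non-recursive contribution matches the target $\bigO{\eta^4 \log^2(1/\eta)}(u^\top \Sigma u)^2$ rather than the looser $\bigO{\eta^3}(u^\top \Sigma u)^2$ from the naive application; I expect this to follow by first extracting a tighter centered-fourth-moment inequality of the form $\tfrac{1}{n}\sum_{i \in T}(\Iprod{x_i^* - \mu, u}^2 - u^\top \Sigma u)^2 \leq \bigO{\eta \log^2(1/\eta)}(u^\top \Sigma u)^2$ for all $T \subseteq [n]$ of size $2\eta n$ from $\eta$-higher-order-goodness, and then propagating this refinement to the indeterminates $x_i'$ through the SoS constraints, mirroring the proof of \cref{lem:main_identity_cov_sos_proof} in \cref{sec:cov-estimation-spectral}.
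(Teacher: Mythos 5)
Your reduction to the bad-set term is sound: the identity $M' = \Sigma' + (\mu'-\mu)(\mu'-\mu)^\top$, the bound on the fixed quantity $A$ via higher-order goodness applied to $P = \Sigma^{1/2}uu^\top\Sigma^{1/2}/(u^\top\Sigma u)$, the bound $D_2^2 \leq \bigO{\eta^2\log^2(1/\eta)}(u^\top\Sigma u)^2$ via SoS Cauchy--Schwarz plus the Selector Lemma on the fixed scalars $\Iprod{x_i^*-\mu,u}^4$, and the absorption of the various $\Iprod{\mu'-\mu,u}^4$ remainders all check out. The gap is exactly where you flag it, and your proposed repair does not close it. After Cauchy--Schwarz, your bound on $D_1^2$ runs through the raw fourth moments $\tfrac1n\sum(1-r_iw_i)\Iprod{x_i'-\mu',u}^4$ of the \emph{indeterminates}, and the only control $\calA_\eta$ provides on these is the hypercontractivity constraint over the full sum, giving $\bigO{1}(u^\top\Sigma'u)^2$ and hence $\bigO{\eta^3}$ after the outer factors. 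You cannot "propagate" the tighter subset (resilience-type) bound from the $x_i^*$ to the $x_i'$: the Selector Lemma applies only to fixed scalars, and avoiding the need to certify resilience for indeterminates is precisely the point the paper emphasizes. So the $\eta^3$ versus $\eta^4\log^2(1/\eta)$ discrepancy is not constant bookkeeping in your decomposition; it is a consequence of bounding $D_1$ by raw fourth moments at all.

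The mechanism that actually achieves the stated rate (the paper's \cref{lem:cov_mean_sos}, from which the corollary is then derived in a few lines) is different. After Cauchy--Schwarz one controls $\tfrac1n\sum(1-r_iw_i)\Paren{\Iprod{x_i'-\mu,u}^2 - u^\top\Sigma u}^2$ by writing it as the full sum minus the $r_iw_i$-selected sum: the full sum is at most $(2+\bigO{\eta\log^2(1/\eta)})(u^\top\Sigma'u)^2$ plus cross terms (this uses the sharp constant $3+\eta\log^2(1/\eta)$ in the hypercontractivity constraint), the selected sum is at least $(2-\bigO{\eta\log^2(1/\eta)})(u^\top\Sigma u)^2$ by goodness plus the Selector Lemma, and the near-cancellation of the leading $2$'s leaves $\bigO{\eta}\bigl[(u^\top\Sigma'u)^2-(u^\top\Sigma u)^2\bigr]$, which factors as a product and is split into $\bigO{\eta^2}(\cdots)^2 + \tfrac{1}{100}\Iprod{\Sigma-\Sigma',uu^\top}^2$, the latter absorbed into the left-hand side. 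Carrying this out with unknown mean also forces you to confront the odd cross term $\tfrac{2}{n}\sum\Iprod{x_i'-\mu',u}^3\Iprod{\mu-\mu',u}$, which your expansion of $\Iprod{x_i'-\mu,u}^2$ around $\mu'$ generates inside the square as well; the paper kills it with the asymmetric AM--GM $a^3b\leq a^4+b^4$ (\cref{fact:cov-hack-tmp}) applied to $a=10\eta\Iprod{x_i'-\mu',u}$ and $b=\tfrac{1}{1000}\Iprod{\mu-\mu',u}$, which is where the coefficient $\tfrac12$ on $\Iprod{\mu'-\mu,u}^4$ in the statement comes from. Your proposal omits both the cancellation of the leading constants and the cubic cross term, so as written it establishes only the weaker bound with $\bigO{\eta^3}(u^\top\Sigma u)^2$ in place of $\bigO{\eta^4\log^2(1/\eta)}(u^\top\Sigma u)^2$.
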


We can now prove \cref{lem:main_identity_mean_sos_proof} as follows: 
\begin{proof}[Proof of \cref{lem:main_identity_mean_sos_proof}]
We will first prove the following SoS inequality: 
\[
    \calA_\eta \sststile{6}{w} \Set{ \Iprod{ \Sigma^{-1/2} \Paren{ \mu' - \mu } ,  u }^2 \leq \bigO{\eta^2 \log (1/ \eta) }  \norm{u}^2  + \bigO{\eta}\Iprod{\Sigma' - \Sigma, (\Sigma^{-1/2}u)(\Sigma^{-1/2}u)^\top}} \,.
\]
Let $n' = (1-\eta)n$.
Note that for all $i \in [n]$ $r_ix_i = r_i x_i^*$.
Further, our constraints imply, at constant degree, that $w_i x_i' = w_i x_i$ and hence, also that $r_i w_i x_i' = r_i w_i x_i^*$.
Using this and the SoS triangle inequality (cf.~\cref{fact:sos-almost-triangle}), we start to bound
\begin{equation}
\label{eqn:quantity_of_interest}
\begin{split}
     \calA_\eta \sststile{2}{u,w} \Biggl\{ & \Iprod{ \Sigma^{-1/2}\Paren{ \mu' - \mu } ,u }^2 \\
     & =  \Iprod{ \frac{1}{n}\sum_{i \in [n]} \Sigma^{-1/2} \Paren{ x'_i - \mu } ,u }^2 = \Paren{ \frac{1}{n}\sum_{i \in [n]} \Iprod{ \Sigma^{-1/2} \Paren{ x_i' - \mu} , u} }^2\\
     & \leq2 \underbrace{\Paren{ \frac{1}{n}\sum_{i \in [n]} r_i w_i\Iprod{ \Sigma^{-1/2} \Paren{ x_i' - \mu}  , u} }^2}_{\ref{eqn:quantity_of_interest}.1} + 2 \underbrace{\Paren{ \frac{1}{n}\sum_{i \in [n]} (1-r_i w_i)\Iprod{ \Sigma^{-1/2} \Paren{ x_i' - \mu}  , u} }^2}_{\ref{eqn:quantity_of_interest}.2} \Biggr\}
\end{split}
\end{equation}

We bound the above two terms separately.
For the first one, using that $r_i^2 = r_i$, $w_i^2 = w_i$ and $r_i w_i x_i' = r_i w_i x_i^*$, we observe
\begin{equation}
\label{eqn:mean_of_sos_variables_around_mu}
    \begin{split}
        \calA_\eta \sststile{4}{w} \Biggl\{  & \Paren{ \frac{1}{n}\sum_{i \in [n]} r_i w_i\Iprod{ \Sigma^{-1/2}\Paren{ x_i' - \mu } , u} }^2  \\
        &  = \Paren{ \frac{1}{n}\sum_{i \in [n]}r_i w_i \Iprod{ \Sigma^{-1/2} \Paren{ x_i^* -\mu } , u} }^2 \\
        & \leq 2 \Paren{ \frac{1}{n}\sum_{i \in [n]} r_i \Iprod{ \Sigma^{-1/2}\Paren{ x_i^* - \mu } , u} }^2 + 2 \Paren{ \frac{1}{n}\sum_{i \in [n]}r_i \Paren{1 - w_i} \Iprod{ \Sigma^{-1/2}\Paren{ x_i^* - \mu } , u} }^2 \\
        & \leq \frac{2n'}{n}  \Iprod{ \Sigma^{-1/2}\Paren{ \frac{1}{n'}\sum_{i \in [n]} r_i  x^*_i - \mu } , u }^2 +  2\Paren{ \frac{1}{n}\sum_{i \in [n]}r_i \Paren{1 - w_i} \Iprod{ \Sigma^{-1/2}\Paren{ x_i^* - \mu}  , u} }^2\\
        & \leq  \bigO{\eta^2 \log(1/\eta)  } \norm{u}^2   +  2 \underbrace{\Paren{ \frac{1}{n}\sum_{i \in [n]}r_i \Paren{1 - w_i} \Iprod{\Sigma^{-1/2}\Paren{ x_i^* - \mu}  , u} }^2}_{\ref{eqn:mean_of_sos_variables_around_mu}} \Biggr\}
    \end{split}
\end{equation}
where the last inequality follows from $\eta$-goodness of $\Sigma^{-1/2} (x_1^* - \mu), \ldots, \Sigma^{-1/2} (x_n^* - \mu)$ (cf.~\cref{fact:mean_and_cov_of_1-eps}).
For Term \ref{eqn:mean_of_sos_variables_around_mu} it follows by the SoS version of Cauchy-Schwarz (cf.~\cref{fact:sos_cs}) and \cref{eqn:small_subset_upper_bound} in \cref{lemma:small_subset} that
\begin{align*}
    \calA_\eta \sststile{4}{w} \Biggl\{ \Paren{ \frac{1}{n}\sum_{i \in [n]}r_i \Paren{1 - w_i} \Iprod{x_i^* - \mu , u} }^2 &\leq \eta \cdot \Paren{\frac{1}{n}\sum_{i \in [n]}r_i \Paren{1 - w_i} \Iprod{ \Sigma^{-1/2}\Paren{ x_i^* - \mu} , u}^2} \\
    &\leq \bigO{\eta^2 \log(1/\eta)  }\norm{u}^2  \Biggr\} \,.
\end{align*}
Note that we also used that $r_i^2(1-w_i)^2 = r_i (1-w_i)$.
Hence, our constraints imply at degree 2 that Term~\ref{eqn:quantity_of_interest}.1 is at most $\bigO{\eta^2\log(1/\eta)} \norm{u}^2$.

Next, we turn to bounding Term~\ref{eqn:quantity_of_interest}.2.
Again by the SoS version of the Cauchy Schwarz Inequality, it follows that
\begin{equation}
\label{eqn:bad_set}
\begin{split}
    \calA_\eta \sststile{6}{w} \Biggl\{ &\Paren{ \frac{1}{n}\sum_{i \in [n]} (1-r_i w_i)\Iprod{ \Sigma^{-1/2}\Paren{ x_i' - \mu } , u} }^2 \\
    & \leq \eta \Paren{\frac{1}{n}\sum_{i \in [n]} (1-r_i w_i)\Iprod{ \Sigma^{-1/2}\Paren{ x_i' - \mu} , u}^2 } \\
    &= \eta \Bigg(  \underbrace{  \frac{1}{n}\sum_{i \in [n]} \Iprod{ \Sigma^{-1/2}\Paren{ x_i' - \mu} , u}^2 }_{ \eqref{eqn:bad_set}.1 } - \underbrace{ \frac{1}{n}\sum_{i \in [n]} r_iw_i\Iprod{ \Sigma^{-1/2}\Paren{ x_i' - \mu } , u}^2 }_{\eqref{eqn:bad_set}.2} \Bigg) \Biggr\} \,.
\end{split}
\end{equation}
Expanding Term \eqref{eqn:bad_set}.1 and using that $\tfrac 1 n \sum_{i \in [n]}x_i' = \mu'$ it follows that 
\begin{equation}
\label{eqn:bounding-bad-set-part-1}
    \begin{split}
        \calA_\eta \sststile{6}{w} \Biggl\{ &   \frac{1}{n}\sum_{i \in [n]} \Iprod{ \Sigma^{-1/2}\Paren{ x_i' - \mu } , u}^2 \\
        & =   \frac{1}{n}\sum_{i \in [n]} \Iprod{ \Sigma^{-1/2} \Paren{ x_i' - \mu'  + (\mu'- \mu) } , u}^2  \\
    &=  \Paren{ \frac{1}{n} \sum_{i \in [n]} \Iprod{  \Sigma^{-1/2}\Paren{ x_i' - \mu' }  , u}^2 }  + \Iprod{ \Sigma^{-1/2}\Paren{ \mu' - \mu} ,u }^2 \\
    & \hspace{0.2in} +  \frac{2}{n} \sum_{i \in [n]} \Iprod{ x_i' - \mu', \Sigma^{-1/2}u } \Iprod{\mu' - \mu , \Sigma^{-1/2} u}  \\
    & = \frac{1}{n} \sum_{i \in [n]} \Iprod{ \Sigma^{-1/2} \Paren{ x_i' - \mu' } \Paren{ x_i' - \mu' }^\top \Sigma^{-1/2} , u u^\top  }    + \Iprod{ \Sigma^{-1/2} \Paren{ \mu' - \mu } ,u }^2 \\
    & = \Iprod{ \Sigma^{-1/2} \Sigma' \Sigma^{-1/2} , uu^\top  } + \Iprod{ \Sigma^{-1/2} \Paren{ \mu' - \mu } ,u }^2  \\
    & = \norm{u}_2^2 + \Iprod{\Sigma' - \Sigma, (\Sigma^{-1/2} u)(\Sigma^{-1/2}u)^\top} + \Iprod{ \Sigma^{-1/2}\Paren{ \mu' - \mu} , u }^2 
     \Biggr\} \,.
    \end{split}
\end{equation}
 Next, using \cref{eqn:small_subset_lower_bound} in \cref{lemma:small_subset} and that $r_i w_i x_i' = r_i w_i x_i^*$, 

\begin{equation}
\label{eqn:bounding-bad-set-part-2}
    \begin{split}
        \calA_\eta \sststile{6}{w} \Biggl\{   \frac{1}{n}\sum_{i \in [n]} r_iw_i\Iprod{ \Sigma^{-1/2}\Paren{ x_i' - \mu } , u}^2   &  =   \frac{1}{n}\sum_{i \in [n]} r_i w_i\Iprod{ \Sigma^{-1/2}\Paren{ x_i^* - \mu } , u}^2 \\
        & \geq \Paren{1- \bigO{\eta \log(1/\eta) } }\norm{u}_2^2
     \Biggr\},
    \end{split}
\end{equation}
Substituting the bounds obtain in \eqref{eqn:bounding-bad-set-part-1} and \eqref{eqn:bounding-bad-set-part-2}, back into Equation \eqref{eqn:bad_set}, we have the following bound on Term~\ref{eqn:quantity_of_interest}.2
\begin{equation}
\begin{split}
\label{eqn:bound-on-bad-set}
    \calA_\eta \sststile{6}{w} \Biggl\{  &\Paren{ \frac{1}{n}\sum_{i \in [n]} (1-r_i w_i)\Iprod{\Sigma^{-1/2}\Paren{x_i' - \mu} , u} }^2  \leq \bigO{ \eta^2 \log(1/\eta)}  \norm{u}^2 \\
    &+ \bigO{\eta}\Iprod{\Sigma' - \Sigma, (\Sigma^{-1/2} u)(\Sigma^{-1/2}u)^\top} + \bigO{\eta}\Iprod{ \Sigma^{-1/2}\Paren{ \mu' - \mu} , u }^2\Biggr\}
\end{split}
\end{equation}
Combining Equation \eqref{eqn:bound-on-bad-set} with our bound on Term~\ref{eqn:quantity_of_interest}.1 implies that
\begin{align*}
    \calA_\eta \sststile{6}{w} \Biggl\{ \Iprod{ \Sigma^{-1/2 }\Paren{ \mu' - \mu } ,  u }^2 &\leq \bigO{\eta^2 \log (1/ \eta) }  \norm{u}^2 + \eta \Iprod{\Sigma' - \Sigma, (\Sigma^{-1/2} u)(\Sigma^{-1/2}u)^\top} \\
    &+ \eta \cdot \Iprod{ \Sigma^{-1/2 }\Paren{ \mu' - \mu} , u}^2\Biggr\} \,.
\end{align*}

\paragraph{Applying the Covariance Bound.}

Rearranging and renormalizing yields the intermediate SoS inequality we aimed to prove. We now show how to use this SoS inequality to derive~\cref{lem:main_identity_mean_sos_proof}. Note that the RHS is a square and thus there exists an SoS proof that there is non-negative. Thus, applying~\cref{fact:sos-squaring} we have that 
\begin{align*}
    \calA_\eta \sststile{12}{w} \Biggl\{ \Iprod{ \Sigma^{-1/2}\Paren{ \mu' - \mu } ,  u }^4 \leq \Paren{\bigO{\eta^2 \log (1/ \eta) }  \norm{u}^2 + \eta \Iprod{\Sigma' - \Sigma, (\Sigma^{-1/2} u)(\Sigma^{-1/2}u)^\top} }^2 \Biggr\} \,.
\end{align*}
Applying SoS triangle inequality and~\cref{cor:cov_mean_sos} we have that 
\begin{align*}
    \calA_\eta \sststile{12}{w} \Biggl\{ \Iprod{ \Sigma^{-1/2} \Paren{ \mu' - \mu },  u }^4 &\leq \bigO{\eta^4 \log^2 (1/ \eta) }  \norm{u}^4 + \eta^2 \Iprod{\Sigma' - \Sigma, (\Sigma^{-1/2} u)(\Sigma^{-1/2}u)^\top}^2 \\
    &\leq \bigO{\eta^4 \log^2 (1/ \eta)}  \norm{u}^4 + \frac{1}{2} \Iprod{ \Sigma^{-1/2} \Paren{ \mu' - \mu }, u}^4 \Biggr\} \,.
\end{align*}
Once again rearranging and renormalizing completes the proof.
\end{proof}

\subsection{Intermediate Covariance Guarantees}
\label{sec:cov-sos-for-mean}

We now show a key SoS lemma and a corollary which is used to bound the difference between $\Sigma$ and $\Sigma'$ in the above proof.

\begin{lemma}[SoS Covariance to True Covariance]
\label{lem:cov_mean_sos}
Suppose $0 < \eta$ is smaller than a sufficiently small constant and assume $\Sigma^{-1/2}(x_1^* - \mu), \ldots, \Sigma^{-1/2}(x_n^* - \mu)$ are $\eta$-good and $\eta$-higher-order-good.
Then, for any fixed $u \in \mathbb{R}^d$,
\begin{align*}
    \calA_\eta \sststile{6}{w} \Biggl\{ \Iprod{ \Sigma' - \Sigma,   uu^\top }^2  &\leq \bigO{1}\Iprod{\mu' - \mu, u}^4 +   \bigO{\eta^2 \log^2 (1/\eta)}  \Paren{ ( u^\top \Sigma u )^2 + ( u^\top \Sigma' u )^2} \\
    &+ \frac{2}{n} \sum_{i \in [n]} \eta \Iprod{x_i' -\mu', u}^3\Iprod{\mu-\mu',u} \Biggr\} \,.
\end{align*}
\end{lemma}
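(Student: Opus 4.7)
}
The plan is to express $\Iprod{\Sigma' - \Sigma, uu^\top} = u^\top \Sigma' u - u^\top \Sigma u$, decompose the empirical second moment of the $x_i'$ according to the indicator weights $r_iw_i$ versus $1 - r_iw_i$, and square via SoS triangle. The key algebraic identity to start from is
\[
u^\top \Sigma' u \;=\; \tfrac{1}{n}\sum_{i} \Iprod{x_i' - \mu, u}^2 \;-\; \Iprod{\mu' - \mu, u}^2,
\]
which follows from writing $x_i' - \mu' = (x_i' - \mu) - (\mu' - \mu)$ and using $\tfrac 1 n \sum x_i' = \mu'$. Exploiting $r_iw_i x_i' = r_iw_i x_i^*$, the $r_iw_i$-weighted part of the first sum equals $\tfrac{1}{n}\sum_i r_iw_i \Iprod{x_i^* - \mu, u}^2$, which by \cref{lemma:small_subset} combined with $\eta$-goodness agrees with $u^\top \Sigma u$ up to an error $|E| \leq O(\eta \log(1/\eta)) \, u^\top \Sigma u$. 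This yields
\[
\Iprod{\Sigma' - \Sigma, uu^\top} \;=\; E \;+\; B \;-\; \Iprod{\mu'-\mu, u}^2, \qquad B \;:=\; \tfrac{1}{n}\sum_i (1-r_iw_i)\, \Iprod{x_i'-\mu, u}^2.
\]

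Squaring and applying SoS triangle inequality gives $\Iprod{\Sigma'-\Sigma, uu^\top}^2 \leq O(1)\bigl(E^2 + B^2 + \Iprod{\mu'-\mu, u}^4\bigr)$. The term $E^2$ already delivers $O(\eta^2\log^2(1/\eta))(u^\top \Sigma u)^2$, and the term $\Iprod{\mu'-\mu, u}^4$ is exactly in the target. The main work is in bounding $B^2$. I would apply SoS Cauchy--Schwarz with weight $(1-r_iw_i)^2 = 1-r_iw_i$ together with the counting constraint $\sum_i (1-r_iw_i) \leq 2\eta n$, reducing to controlling $\tfrac{1}{n}\sum_i \Iprod{x_i'-\mu, u}^4$, i.e., an empirical fourth moment.

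To bound the fourth moment, I would expand $x_i' - \mu = (x_i' - \mu') + (\mu' - \mu)$ via the binomial theorem, and use $\tfrac 1 n \sum_i \Iprod{x_i' - \mu', u} = 0$ to kill the linear cross-term. This produces
\[
\tfrac 1 n \sum_i \Iprod{x_i'-\mu, u}^4 \;=\; \tfrac 1 n \sum_i \Iprod{x_i' - \mu', u}^4 + 4\Iprod{\mu'-\mu, u}\!\cdot\! \tfrac 1 n \sum_i \Iprod{x_i'-\mu', u}^3 + 6\Iprod{\mu'-\mu, u}^2 (u^\top \Sigma' u) + \Iprod{\mu'-\mu, u}^4.
\]
The hypercontractivity constraint of our system bounds the first summand by $(3 + \eta \log^2(1/\eta))(u^\top\Sigma' u)^2$, which after multiplication by $2\eta$ contributes to the $(u^\top\Sigma' u)^2$ target. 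SoS AM-GM on $6\Iprod{\mu'-\mu, u}^2 (u^\top\Sigma' u)$ distributes into the $\Iprod{\mu'-\mu, u}^4$ and $(u^\top \Sigma' u)^2$ targets, and the final $\Iprod{\mu'-\mu, u}^4$ is absorbed directly.

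The main obstacle---and the reason the lemma retains an explicit cubic term---is the cross-term $\tfrac{1}{n}\sum_i \Iprod{x_i' - \mu', u}^3 \Iprod{\mu - \mu', u}$: it admits no SoS bound compatible with the target rate (any Cauchy--Schwarz or Young-style bound loses the $\log^2(1/\eta)$ factor or introduces an $\Iprod{\mu'-\mu,u}^4/\eta^{\Theta(1)}$ that is too large at this stage). It is therefore left explicit in the statement with coefficient tracking the product of the factor $2\eta$ from the Cauchy--Schwarz step above and the $4$ from the binomial expansion. This term will be removed later in \cref{cor:cov_mean_sos} by applying a Young-type inequality with exponents $(4/3,4)$ after the whole bound is multiplied through by $\eta^2$, which spreads the cross-term between an absorbable $\tfrac 1 2 \Iprod{\mu'-\mu, u}^4$ piece and an $O(\eta^3)(u^\top\Sigma' u)^2$ piece that can be converted into $(u^\top \Sigma u)^2$ via $(u^\top\Sigma' u)^2 \leq 2(u^\top\Sigma u)^2 + 2\Iprod{\Sigma'-\Sigma, uu^\top}^2$ and rearranged.
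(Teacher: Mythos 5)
There is a genuine gap in the treatment of the term $B^2$, and it is exactly the ``$\sqrt{\eta}$ barrier'' this paper is built to circumvent. In your decomposition, $B = \tfrac 1n\sum_i (1-r_iw_i)\Iprod{x_i'-\mu,u}^2$ is a sum of \emph{uncentered, non-negative} quantities over the SoS-indicated outlier set. After SoS Cauchy--Schwarz you are left with $B^2 \leq 2\eta\cdot\tfrac 1n\sum_i(1-r_iw_i)\Iprod{x_i'-\mu,u}^4$, and the only handle on the fourth moments of the indeterminates $x_i'$ on the bad indices is the \emph{global} hypercontractivity constraint (the Selector Lemma does not apply to them, since the $\Iprod{x_i'-\mu,u}^4$ are not fixed scalars). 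Bounding the weighted sum by the full sum and invoking hypercontractivity gives $B^2 \leq 2\eta\cdot(3+o(1))(u^\top\Sigma'u)^2 + \ldots = \Theta(\eta)(u^\top\Sigma'u)^2$, which is larger than the target $O(\eta^2\log^2(1/\eta))(u^\top\Sigma'u)^2$ by a factor of $1/(\eta\log^2(1/\eta))$. Your remark that the leading summand ``after multiplication by $2\eta$ contributes to the $(u^\top\Sigma'u)^2$ target'' overlooks that $6\eta \gg \eta^2\log^2(1/\eta)$. This is precisely the first-cut Cauchy--Schwarz-plus-hypercontractivity computation from the technical overview that gets stuck at the $\sqrt{\eta}$ rate.

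The paper avoids this by keeping the outlier sum \emph{centered}: it works with $\tfrac 1n\sum_i(1-r_iw_i)\bigl(\Iprod{x_i'-\mu,u}^2 - u^\top\Sigma u\bigr)$, so that after Cauchy--Schwarz the relevant quantity is $\tfrac 1n\sum_i(1-r_iw_i)\bigl(\Iprod{x_i'-\mu,u}^2 - u^\top\Sigma u\bigr)^2$, written as the full sum minus the $r_iw_i$-indicated sum. The full sum is at most $(2+O(\eta\log^2(1/\eta)))(u^\top\Sigma'u)^2$ plus the cross terms (including the cubic one you correctly isolate), by the hypercontractivity constraint; the $r_iw_i$-indicated sum is at least $(2-O(\eta\log^2(1/\eta)))(u^\top\Sigma u)^2$ by $\eta$-higher-order-goodness and the Selector Lemma applied to the fixed scalars $(\Iprod{x_i^*-\mu,u}^2-u^\top\Sigma u)^2$. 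The near-cancellation of the constant $2$ leaves $O(\eta\log^2(1/\eta))$ residuals plus a term $O(1)\cdot\bigl((u^\top\Sigma'u)^2-(u^\top\Sigma u)^2\bigr)$, which after multiplying by the outer factor of $\eta$ is split via $ab\leq a^2+b^2$ into $O(\eta^2)$ times the variances plus a small multiple of $\Iprod{\Sigma'-\Sigma,uu^\top}^2$ that is moved to the left-hand side. The remaining pieces of your proposal (the binomial expansion around $\mu'$, the explicit retention of the cubic term and its removal in the corollary via a Young-type split and reabsorption, the treatment of $E$ and of $\Iprod{\mu'-\mu,u}^4$) are consistent with the paper, but without the centering-and-cancellation step the claimed rate does not follow.
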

Before we prove this lemma we will prove an easy corollary of it which is used in the previous section.
\restatecorr{cor:cov_mean_sos}

\begin{proof}
    Note that after applying~\cref{lem:cov_mean_sos} and multiplying by $\eta^2$ it suffices to bound the following expression:
    \[  \bigO{\eta^4 \log^2 (1/\eta)} \cdot ( u^\top \Sigma' u )^2 + \frac{2}{n} \sum_{i \in [n]} \eta^3 \Iprod{x_i' -\mu', u}^3\Iprod{\mu-\mu',u}\,.\]
    We first consider the second term. We have by~\cref{fact:cov-hack-tmp} with $a=10\eta\langle x_i' - \muprime, u\rangle$ and $b = \frac{1}{1000} \langle \mu - \mu', u\rangle$ that
    \[ \sststile{}{} \Set{\frac{2}{n} \sum_{i \in [n]} \eta^3 \Iprod{x_i' -\mu', u}^3\Iprod{\mu-\mu',u} \leq \frac{1}{10^{12}} \Iprod{\mu-\mu',u}^4 + \frac{10000\eta^4}{n} \sum_{i \in [n]} \Iprod{x_i' -\mu', u}^4}\,.\]
    Applying the bounded fourth moment constraint we have that $\frac{\eta^4}{n} \sum_{i \in [n]}  \Iprod{x_i' -\mu', u}^4 \leq 4 \eta^4 \left(u^\top \Sigma' u\right)^2$ and thus it just remains to bound $\bigO{\eta^4 \log^2 (1/\eta)} \cdot ( u^\top \Sigma' u )^2$.

    Note that we have that $u^\top \Sigma' u = u^\top \Sigma u + u^\top \left(\Sigma - \Sigma'\right) u$. Plugging this into the original expression and applying SoS Triangle Inequality (\cref{fact:sos-almost-triangle}) we have that
    \begin{align*}
    \calA_\eta \sststile{}{} \Biggl\{ \eta^2 \Iprod{ \Sigma' - \Sigma,   uu^\top }^2  &\leq \frac{1}{100}\Iprod{\mu' - \mu, u}^4 +   \bigO{\eta^4 \log^2 (1/\eta)}  \Paren{ ( u^\top \Sigma u )^2} \\
    &+ \bigO{\eta^4 \log^2 (1/\eta)}  \Iprod{ \Sigma' - \Sigma,   uu^\top }^2 \,.
    \end{align*}
    Since $\eta^4 \log^2(1/\eta) \leq \eta^2$ when $\eta < 1/e$ we can move the last term to the LHS and renormalize, yielding the claim.
\end{proof}

We now prove~\cref{lem:cov_mean_sos}.
\begin{proof}
Let $r_i$ be the indicators for the uncorrupted samples where $x_i^* = x_i$. Using SoS almost triangle inequality (\cref{fact:sos-almost-triangle}) we have that
\begin{equation}
\label{eqn:cov-spectral-expansion-1}
\begin{split}
    \calA_\eta  \sststile{}{} \Biggl\{ & \Iprod{ \Sigma' - \Sigma , uu^\top }^2  \\
    &= \Paren{  \frac{1}{n } \sum_{i \in [n]} (1 - r_i w_i + r_i w_i) \Paren{ \Iprod{ x_i'  -\mu',u }^2 -  u^\top \Sigma u }  }^2    \\ 
    & \leq 2 \underbrace{ \Paren{  \frac{1}{n } \sum_{i \in [n]}  r_i w_i \Paren{ \Iprod{ x_i' -\mu' ,u }^2 -  u^\top \Sigma u }  }^2}_{\eqref{eqn:cov-spectral-expansion-1}.(1) }  + 2 \underbrace{ \Paren{ \frac{1}{n } \sum_{i \in [n]}  (1-r_i w_i) \Paren{ \Iprod{ x_i' - \mu'  ,u }^2 -  u^\top \Sigma u } }^2}_{\eqref{eqn:cov-spectral-expansion-1}.(2) }   \Biggr\}
\end{split}
\end{equation}
We now handle each term separately, starting with the first one. We have that
\begin{equation}
\label{eqn:expanding-the-first-spect-cov-term}
\begin{split}
    \calA_{\eta}\sststile{}{} \Biggl\{ & \Paren{  \frac{1}{n } \sum_{i \in [n]}  r_i w_i \Paren{ \Iprod{ x_i' - \mu' ,u }^2 -  u^\top \Sigma u }  }^2  \\
    & = \Paren{  \frac{1}{n } \sum_{i \in [n]}  r_i w_i \Paren{ \Iprod{ x_i^* - \mu' \pm \mu  ,u }^2 -  u^\top \Sigma u }  }^2  \\
    & = \Paren{  \frac{1}{n } \sum_{i \in [n]}  r_i w_i \Paren{ \Iprod{ x_i^*  - \mu  ,u }^2 -  u^\top \Sigma u  + \Iprod{\mu - \mu', u}^2 - 2 \Iprod{x_i^* - \mu, u} \Iprod{\mu - \mu', u} }   }^2 \\
    & \leq 8 \Paren{  \frac{1}{n } \sum_{i \in [n]}  r_i w_i \Paren{ \Iprod{ x_i^*  - \mu  ,u }^2 -  u^\top \Sigma u    }  }^2 +  8 \Iprod{\mu - \mu', u}^4  \\
    & \hspace{0.2in }+ 16 \Iprod{\mu - \mu', u}^2  \Paren{  \Iprod{ \frac{1}{n}\sum_{i\in [n]}  r_iw_i (x_i^* - \mu ) , u}  }^2  \\
    & =  \bigO{\eta^2 \log^2(1/\eta) } (u^\top \Sigma u)^2  + \bigO{1} \Iprod{\mu - \mu', u}^4 \Biggr\} \,,
\end{split}
\end{equation}
where the last inequality follows from the SoS triangle inequality.
Note that the terms $\Paren{ \Iprod{ x_i^*  ,u }^2 -  u^\top \Sigma u }^2$ and $\Iprod{x_i^* - \mu, u}$  are fixed scalars and thus we can apply SoS subset selection (c.f.~\cref{lemma:subset_selection}) together with $\eta$-goodness and $\eta$-higher-order-goodness.
We obtain that $\frac{1}{n } \sum_{i \in [n]}  r_i w_i \Paren{ \Iprod{ x_i^*  - \mu  ,u }^2}$ is in between $(\tfrac 1 n \sum_{i=1}^n r_i w_i) u^\top \Sigma u \pm O(\eta \log(1/\eta))$ and (also using that there is an SoS proof that $ab \leq a^2 + b ^2$) that
\[
    \Iprod{\mu - \mu', u}^2  \Paren{  \Iprod{ \frac{1}{n}\sum_{i\in [n]}  r_iw_i (x_i^* - \mu ) , u}  }^2 \leq \Iprod{\mu - \mu', u}^4 +  \Paren{  \Iprod{ \frac{1}{n}\sum_{i\in [n]}  r_iw_i (x_i^* - \mu ) , u}  }^4 \,.
\]
and that the second term is at most $\eta^4 \log^2(1/\eta) (u^\top \Sigma u)^2 \leq \eta^2 \log^2(1/\eta) (u^\top \Sigma u)^2$.
Together, this implies the bound
\[
    \bigO{\eta^2 \log^2(1/\eta) } (u^\top \Sigma u)^2  + \bigO{1} \Iprod{\mu - \mu', u}^4 \,.
\]

We now focus on bounding term \eqref{eqn:cov-spectral-expansion-1}.(2) using SoS Triangle Inequality (\cref{fact:sos-almost-triangle}): 
\begin{equation}
\label{eqn:i-cant-think-of-anything-useful-right-now}
\begin{split}
\calA_\eta \sststile{}{} \Biggl\{ 
& \Paren{ \frac{1}{n } \sum_{i \in [n]}  (1-r_i w_i) \Paren{ \Iprod{ x_i' - \mu' ,u }^2 -  u^\top \Sigma u } }^2 \\
& = \Paren{ \frac{1}{n } \sum_{i \in [n]}  (1-r_i w_i) \Paren{ \Iprod{ x_i' - \mu' \pm \mu ,u }^2 -  u^\top \Sigma u  } }^2 \\
& = \Paren{ \frac{1}{n } \sum_{i \in [n]}  (1-r_i w_i) \Paren{ \Iprod{ x_i' - \mu ,u }^2 + 2\Iprod{ x_i' - \mu ,u } \Iprod{\mu - \mu', u} + \Iprod{\mu - \mu', u}^2 -  u^\top \Sigma u  } }^2 \\
&\leq 4  \Paren{ \frac{1}{n } \sum_{i \in [n]}  (1-r_i w_i) \Paren{ \Iprod{ x_i' -  \mu ,u }^2 -  u^\top \Sigma u  } }^2 + 16 \Iprod{\mu - \mu', u}^4 \\
& \quad + 64 \Paren{\frac{1}{n} \sum_{i \in [n]} (1-r_iw_i) \Iprod{ x_i' - \mu ,u } \Iprod{\mu - \mu', u}}^2 \Biggr\}
\end{split}
\end{equation}
Note that we can bound the cross terms as follows:
\begin{align*}
    \mathcal{A}_\eta \sststile{}{} &\Biggl\{ 64 \Paren{\frac{1}{n} \sum_{i \in [n]} (1-r_iw_i) \Iprod{ x_i' - \mu ,u } \Iprod{\mu - \mu', u}}^2 \\
    &= 64 \Paren{\frac{1}{n} \sum_{i \in [n]} (1-r_iw_i) \Iprod{ x_i' - \mu \pm \mu' ,u } \Iprod{\mu - \mu', u}}^2 \\
    &\leq 128 \eta \Iprod{\mu - \mu', u}^2 \Paren{\frac{1}{n} \sum_{i \in [n]}  \Iprod{ x_i' - \mu' ,u }^2} \\
    &= \bigO{\eta \Iprod{\mu - \mu', u}^2 (u^\top \Sigma' u)} \\
    &\leq \bigO{\Iprod{\mu - \mu', u}^4 + \eta^2 (u^\top \Sigma' u)^2}
    \Biggr\}\,,
\end{align*}
where in the last inequality we used that there is an SoS proof that $ab \leq \bigO{a^2 + b^2}$. Plugging this into the inequalities above we conclude that
\begin{align*}
    \calA_\eta \sststile{}{} \Biggl\{ 
    &\Paren{ \frac{1}{n } \sum_{i \in [n]}  (1-r_i w_i) \Paren{ \Iprod{ x_i' - \mu' ,u }^2 -  u^\top \Sigma u } }^2 \\
    & \leq 4  \Paren{ \frac{1}{n } \sum_{i \in [n]}  (1-r_i w_i) \Paren{ \Iprod{ x_i' -  \mu ,u }^2 -  u^\top \Sigma u  } }^2  + \bigO{1} \Iprod{\mu - \mu', u }^4 + \bigO{\eta^2} (u^\top \Sigma' u)^2   \\
    & \leq 4 \eta \Paren{ \frac{1}{n} \sum_{i \in [n]} ( 1- r_i w_i) \Paren{ \Iprod{x_i' - \mu, u }^2 - u^{\top} \Sigma u }^2   }  + \bigO{1} \Iprod{\mu - \mu', u }^4 + \bigO{\eta^2} (u^\top \Sigma' u)^2 \\
    & =4 \eta \Biggl(  \underbrace{ \frac{1}{n} \sum_{i \in [n]}  \Paren{ \Iprod{x_i' - \mu, u }^2 - u^{\top} \Sigma u }^2  }_{\eqref{eqn:i-cant-think-of-anything-useful-right-now}.(1) }   -      \underbrace{ \frac{1}{n} \sum_{i \in [n]}  r_i w_i \Paren{ \Iprod{x_i^* - \mu, u }^2 - u^{\top} \Sigma u }^2 }_{\eqref{eqn:i-cant-think-of-anything-useful-right-now}.(2) }   \Biggr)  \\
    & \hspace{0.2in} +\bigO{1} \Iprod{\mu - \mu', u }^4 + \bigO{\eta^2} (u^\top \Sigma' u)^2 \Biggr\} \\
\end{align*}
We bound Term \eqref{eqn:i-cant-think-of-anything-useful-right-now}.1 as follows: (where in the last inequality we also use that there is an SoS proof that $\Iprod{\mu' - \mu, u}^2 (u^\top (\Sigma' - \Sigma) u) \leq \Iprod{\mu' - \mu, u}^4 + \Iprod{\Sigma' - \Sigma, uu^\top}^2$)
\begin{equation}
\begin{split}
\calA_\eta \sststile{}{} \Biggl\{ & \frac{1}{n } \sum_{i \in [n]}   \Paren{ \Iprod{ x_i' -\mu ,u }^2 -  u^\top \Sigma u   }^2 \\
& = \frac{1}{n } \sum_{i \in [n]}   \Paren{ \Iprod{ x_i'  - \mu' + \mu' - \mu ,u }^2 -  u^\top \Sigma' u  + (u^\top \Sigma' u - u^\top \Sigma u ) }^2 \\
& = \frac{1}{n } \sum_{i \in [n]}   \Paren{ \Iprod{ x_i'  - \mu'  ,u }^2 + \Iprod{\mu' - \mu, u}^2 + 2 \Iprod{x_i' - \mu' ,u }\Iprod{\mu' - \mu}  -  u^\top \Sigma' u  + ( u^\top \Sigma' u - u^\top \Sigma u ) }^2  \\
& = \frac{1}{n } \sum_{i \in [n]}   \Paren{ \Iprod{ x_i' -\mu'  ,u }^2 -  u^\top \Sigma' u }^2       \\
& \hspace{0.2in}+\frac{1}{n} \sum_{i \in [n]}\Paren{ 2\Iprod{x_i' - \mu' , u}\Iprod{\mu' -\mu}  +\Iprod{\mu'-  \mu, u}^2 +  (u^\top \Sigma' u - u^\top \Sigma u ) }^2   \\
& \hspace{0.2in}+ \frac{2}{n } \sum_{i \in [n]}  \Paren{\Iprod{ x_i' -\mu'  ,u }^2 -  u^\top \Sigma' u } \Paren{2\Iprod{x_i' - \mu' , u}\Iprod{\mu' -\mu}  +\Iprod{\mu'-  \mu, u}^2 +  (u^\top \Sigma' u - u^\top \Sigma u ) }\\
& \leq   \Paren{ 2 + \eta \log^2(1/\eta)  } \Paren{  u^\top \Sigma' u }^2  + 4 u^\top \Sigma' u \Iprod{\mu' - \mu }^2  \\
 & \hspace{0.2in} + 4 \Iprod{\mu' - \mu, u}^4 + 4 
 \Iprod{\Sigma' - \Sigma, uu^\top}^2  + \frac{2}{n} \sum_{i \in [n]} \Iprod{x_i' -\mu', u}^3\Iprod{\mu-\mu',u}  \Biggr\}
\end{split}
\end{equation}
It remains to bound term \eqref{eqn:i-cant-think-of-anything-useful-right-now}.(2).
Note that by the SoS selector lemma and $\eta$-higher-order goodness (cf.~\cref{lemma:subset_selection,lem:higher-order-stability}) we have that this is at least $\eta \log^2 (1/\eta) (u^\top \Sigma u)^2$:
\begin{equation}
\label{eqn:i-cant-think-2}
    \begin{split}
    \calA_\eta \sststile{}{} \Biggl\{ &\frac{1}{n } \sum_{i \in [n]}   r_i w_i \Paren{ \Iprod{ x_i^*  ,u }^2 -  u^\top \Sigma u   }^2  \geq \Paren{2 - \eta \log^2(1/\eta)} \Paren{u^\top \Sigma u }^2 \Biggr\} \,,
    \end{split}    
\end{equation}
Combining \cref{eqn:i-cant-think-of-anything-useful-right-now} and \cref{eqn:i-cant-think-2} we have 
\begin{equation}
    \begin{split}
\calA_\eta \sststile{}{} \Biggl\{ 
& \Paren{ \frac{1}{n } \sum_{i \in [n]}  (1-r_i w_i) \Paren{ \Iprod{ x_i' - \mu' ,u }^2 -  u^\top \Sigma u } }^2 \\
&  \leq \bigO{\eta} \Paren{ (u^\top \Sigma' u)^2  - (u^\top \Sigma u)^2  } + \bigO{\eta^2 \log(1/\eta ) } \Paren{ (u^\top \Sigma' u)^2 +  (u^\top \Sigma u)^2 } \\
& \hspace{0.2in}+ \bigO{\eta} \Iprod{\Sigma' - \Sigma , uu^\top}^2 + \bigO{\eta^2  } (u^\top \Sigma' u)^2 + \bigO{1}\Iprod{\mu - \mu', u}^4  \\
& \leq \bigO{\eta} \Paren{ (u^\top \Sigma' u)^2  - (u^\top \Sigma u)^2    } + \bigO{\eta^2 \log(1/\eta ) } \Paren{ (u^\top \Sigma' u)^2 +  (u^\top \Sigma u)^2 } \\
& \hspace{0.2in} + \bigO{1}\Iprod{\mu - \mu', u}^4  + \bigO{\eta} \Iprod{\Sigma' - \Sigma , uu^\top}^2 \Biggr\} 
\end{split}
\end{equation}

Overall, we get 
\begin{equation*}
\begin{split}
    \calA_{\eta}\sststile{}{} \Biggl\{ \Iprod{\Sigma' - \Sigma, uu^\top}^2 & \leq \bigO{1}\Iprod{\mu - \mu', u}^4  + \bigO{\eta} \Paren{ (u^\top \Sigma' u)^2  - (u^\top \Sigma u)^2    }  \\
    & \hspace{0.2in}+ \bigO{\eta^2 \log(1/\eta ) } \Paren{ (u^\top \Sigma' u)^2 +  (u^\top \Sigma u)^2 }   \Biggr\}\,.
\end{split}
\end{equation*}
This almost matches the guarantee we want except for the second term on the RHS. To finish the proof note that 
\begin{align*}
    \sststile{}{} \Biggl\{ \bigO{\eta} \Paren{ (u^\top \Sigma' u)^2  - (u^\top \Sigma u)^2    } &= \bigO{\eta} \Paren{ (u^\top \Sigma' u)  + (u^\top \Sigma u) } \Paren{(u^\top \Sigma' u)  - (u^\top \Sigma u)} \\
    &\leq 100 \bigO{\eta^2} \Paren{ (u^\top \Sigma' u)  + (u^\top \Sigma u) }^2 + \frac{1}{100} \Iprod{\Sigma - \Sigma', uu^\top}^2 \\
    &\leq \bigO{\eta^2} \Paren{ (u^\top \Sigma' u)^2  + (u^\top \Sigma u)^2 } + \frac{1}{100} \Iprod{\Sigma - \Sigma', uu^\top}^2\Biggr\}\,,
\end{align*}
where we used SoS Triangle Inequality (\cref{fact:sos-almost-triangle}) and that there is an SoS proof that $ab \leq a^2 + b^2$ with $a = 10 \bigO{\eta} \Paren{ (u^\top \Sigma' u)  + (u^\top \Sigma u) }$ and $b = \frac{1}{10} \Iprod{\Sigma - \Sigma', uu^\top}$. We can then move the $\Iprod{\Sigma - \Sigma', uu^\top}^2$ term to the LHS of our inequality and renormalize to get the desired conclusion. 
\end{proof}

\subsection{Private Mean Estimation}

In this section we prove~\cref{thm:main_private_cov_mean_est} restated below.
\restatetheorem{thm:main_private_cov_mean_est}

Our goal is to apply \Cref{thm:pure_dp_reduction}. We transform the constraint system from \Cref{algo:mean_estimation} into a score function. Throughout this section when showing the utility and volume guarantees of our score function we will assume that $x_i$ are an $\alpha/\sqrt{1/\alpha}$-corruption of Gaussian samples and thus our algorithm will also be robust to the same fraction of corruptions. Note that unlike in \cite{Hopkins2023Robustness} we utilize a single one-shot program which allows us to avoid paying a sample complexity of $O(d^2/\alpha \eps)$ to isotropize the samples privately. We will design this program by modifying our robust score function. We will define $\calB_T$ as follows:

\begin{equation*}
\calB_{T}\colon
  \left \{
    \begin{aligned}
      &\forall i\in [n].
      & w_i^2
      & = w_i \\
      & &\sum_{i\in [n]} w_i &\geq n-T \\
      &\forall i\in [n] & w_i(x'_i - x_i) &=0 \\
      &&\tfrac{1} n \sum_{i \in [n]} x_i' &= \mu' \\
    && \frac{1}{n} \sum_{i \in [n]} \Paren{ x_i' - \mu' } \Paren{ x_i' - \mu' }^\top & = \Sigma'  \\ 
    & \exists \text{ the following SoS proof }  & \sststile{4}{v} \Biggl\{\frac{1}{n} \sum_{i \in [n]} \Iprod{x_i'  - \mu', v}^4  \leq & \Paren{3 + \eta \log^2(1/\eta)} \Paren{ v^\top \Sigma' v}^2 \Biggr\}
    \end{aligned}
  \right \}
\end{equation*}

On a high level, and similar to the previous section about private regression, we will be looking for a pseudo-expectation $\pE$ such that the following two conditions are met
\begin{enumerate}
    \item $\pE \sdtstile{}{} \calB_T$ at degree $\bigO{1}$.
    \item $\norm{\pE \mu'}_2 \leq R$.
    \item For all fixed, i.e., not depending on indeterminates, vectors $u \in \R^d$, it holds that
    \[
        \Iprod{u, \pE \mu' - \tilde{\mu}}^2 \leq \bigO{\alpha^2} \pE \Iprod{u, \Sigma' u} \,.
    \]
\end{enumerate}
Again, the last condition acts as a proxy for the condition that $\norm{\Sigma^{-1/2}(\pE \mu' - \tilde{\mu})} \leq O(\alpha)$.

\paragraph{Certifiable parameters and our score function.}

As in the last section, we make the following definition.
\begin{definition}
    \label{def:cert_parameter_mean}
    Let $\alpha, \tau \in \R^{\ge 0}$, $n \in \N$ and $T \in [0,n]$.
    Further, let $x_1, \dots, x_n \in \R^d$.
    We call a parameter $\tilde{\mu}$, a $\paren{\alpha, \tau, T}$ certifiable mean for $x_1, \dots, x_n$ if there exists a linear functional $\mathcal{L}$ which $\tau$-approximately satisfies $\calB_T(x)$ (cf.~\cref{def:tau_relaxed_system})\footnote{Our constraint system includes constraints regarding the existence of SoS proofs for all vectors $v$ but these constraints can be encoded as polynomial inequalities and we apply the approximate satisfiability definition to this form of the constraint.} and such that
    \begin{enumerate}
        \item $\norm{\calR \paren{\calL}}_F \le R' + \tau$,  where $R' = \poly\paren{n, d, R}$ is sufficiently large and $\calR \paren{\calL}$ is the matrix representation of $\calL$.
        \item For all fixed, i.e., not depending on indeterminates, vectors $u \in \R^d$, it holds that
    \[
        \Iprod{u, \calL \mu' - \tilde{\mu}}^2 \leq \bigO{\alpha^2} \calL \Iprod{u, \Sigma' u} + \tau T \,.
    \]
        \item $\norm{\calL \mu'}_2 \leq 2R + \tau$
    \end{enumerate}
    Furthermore, we will refer to the linear functional $\calL$ as a $(\alpha, \tau, T)$ certificate for $(x,\tilde{\mu})$.
\end{definition}
For our purposes, we will end up setting $\tau = 1/(n \cdot d \cdot R \cdot \alpha^{-1} \cdot \covscale)^C$, for a large enough absolute constant $C$.
Now we use this definition to define a score function.

\begin{definition}[Score Function]
    \label{def:score-mean}
    Let $\mathbb{B}^d(2 R + n \tau + \alpha \sqrt{\covscale})$ denote the $\ell_2$-ball of radius $2 R + n \tau + \alpha \sqrt{\covscale}$ in $\R^d$ centered at the origin.
    Let $\alpha, \tau \in \R^{\ge 0}$, $x_1, \dots x_n \in \R^d$ (with $\mathcal{Y} = \{x_1, \dots, x_n\}$) and $\tmu \in \R^d$.
    We define the score function $\cS : \mathbb{B}^d(2 R + n \tau + \alpha \sqrt{\covscale}) \to \N_{\geq 0}$ (viewed as a function of $\tmu$) as
    \begin{equation*}
        \cS\paren{\tmu, \mathcal{Y}; \alpha, \tau} = \min_{T \ge 0} \text{ such that $\tmu$ is a $\paren{\alpha, \tau, T}$ certifiable mean for $\mathcal{Y} = \{y_1, \dots, y_n\}$}.
    \end{equation*}
\end{definition}

Note that for arbitrary domains and closeness constraints it is not necessarily true that this is well defined (ie that there exists $T$ such that $\tilde{\mu}$ is a $(\alpha, \tau, T)$ certifiable mean for every $\tilde{\mu}$ in the domain). However, for our closeness constraint we have that any point $\tilde{\mu}$ in the domain is a $(\alpha, \tau, n)$ certifiable mean as we will show in~\cref{lem:score-function-ub-mean}. 

In the rest of this section we will show that the score function defined above satisfies the conditions of \Cref{thm:pure_dp_reduction}.
\begin{enumerate}
    \item Bounded Sensitivity (\cref{lem:sensitivity-mean}): We will show that the score function has sensitivity at most 1 with respect to the input data (i.e., $x$).
    \item Quasi Convexity (\cref{lem:mean-quasiconvexity}): We will show that for every fixed dataset, the score function is quasi-convex with respect to the parameter $\tilde{\mu}$.
    \item Efficient Computability (\cref{lem:efficient-comp-mean}): We will verify that the score is efficiently computable for a fixed $\tilde{\mu}, x$.
    \item Volume (\cref{cor:volume_considerations_mean}): We will show that the volume of the set of points $\tilde{\mu}$ that have score at most $\eta \cdot n$ is sufficiently large and the volume of the points with score at most $\eta' \cdot n$ for $\eta' > \eta$ is sufficiently small.
    \item Robust Algorithm Finds a Low-scoring Parameter Efficiently (\cref{lem:find-low-score-mean}): We verify that finding $\tilde{\mu}$ that minimizes the score up to error $1$ for a fixed $x$ can be done efficiently.
\end{enumerate}

We will also show that a low scoring point $\tilde{\mu}$ achieves good accuracy, i.e., such that $\norm{\Sigma^{-1/2}(\tilde{\mu} - \mu)}$ is small.
We first focus on quasi-convexity, accuracy and volume, for which our new ``closeness constraint'' is key.
We will then address the other required properties, which are similar to previous applications of this transformation in \cite{Hopkins2023Robustness}.

As for regression, for any given desired accuracy $\alpha$ we let $\eta(\alpha) = \alpha/\sqrt{\log (1/\alpha)}$.
Further, we denote by $\eta^*$ the breakdown point of our robust estimator.
For simplicity, we may omit the dependence of the score function on the parameters $\alpha, \tau$ in the notation and write $\eta = \eta(\alpha)$. 

We follow the same proof strategy as for regression, repeated here for convenience:
We will show that every point $\tilde{\mu}$ with score $\eta(\alpha) n$ has accuracy at least $O(\alpha)$.\footnote{$\eta(\alpha)$ will also correspond to the fraction of corruptions our algorithm is robust to.}
To show that the exponential mechanism outputs such a point with high probability, we show that the volume of low-scoring points is not too small and the volume of high-scoring points is not too large.
To this end, we analyze points of score less or more than $\eta^* n$ separately, appealing to properties of our robust estimator in the first case, and boundedness of the domain in the second.

\subsubsection{Properties of the Closeness Constraint}

In this section, we will show several key features of our closeness constraint and score function, starting with the utility of our score function and its implications for volumes of points.

In particular, we show the following lemma, which characterizes $\tilde{\mu}$ which have score at most $T$ via relation to the true parameter $\mu^*$. It will imply both the utility of our final estimator (via accuracy of low-scoring points) as well as the bounded volume of points with score which is large, but below the breakdown point of our estimator. 

We first show that our robustness program also implies spectral closeness of $\Sigma$ and the ``SoS covariance`` $\Sigma'$.
In the SoS proofs for the robustness algorithms we did not explicitly
derive this bound, but our lemma statements and the final closeness of the means imply closeness of $\Sigma', \Sigma$.
\begin{corollary}
    \label{cor:non_zero_mean_cov_close_sos}
    Suppose $T/n$ is smaller than a sufficiently small constant and assume that $\Sigma^{-1/2}(x_1^* - \mu), \ldots, \Sigma^{-1/2}(x_1^* - \mu)$ are $T/n$-good and $T/n$-higher-order good.
    Then, for any fixed vector $u \in \R^d$ it holds that
    \[
        \calA_{T/n} \sststile{\bigO{1}}{} \Set{\Iprod{\Sigma' - \Sigma, uu^\top}^2 \leq \bigO{(T/n)^2 \log^2(n/T) \Iprod{u, \Sigma u}^2}} \,.
    \]
\end{corollary}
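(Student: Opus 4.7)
}

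The plan is to combine two results already established in the mean-estimation section: Corollary~\ref{cor:cov_mean_sos}, which controls $\Iprod{\Sigma' - \Sigma, uu^\top}^2$ in terms of $\Iprod{\mu'-\mu, u}^4$ and a ``clean'' $\eta^2\log^2(1/\eta) (u^\top \Sigma u)^2$ term, together with Lemma~\ref{lem:main_identity_mean_sos_proof}, which controls the mean deviation $\Iprod{\Sigma^{-1/2}(\mu'-\mu), v}^4$ by $\bigO{\eta^4 \log^2(1/\eta)}\|v\|^4$. Writing $\eta = T/n$ throughout, the corollary will drop out of the algebra after a single substitution.

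First, I would instantiate Lemma~\ref{lem:main_identity_mean_sos_proof} with the fixed vector $v = \Sigma^{1/2} u$ (which is a fixed vector since $u$ is fixed and $\Sigma$ is a fixed matrix), obtaining
\[
\calA_{T/n} \sststile{\bigO{1}}{w} \Set{\Iprod{\mu'-\mu, u}^4 \;\leq\; \bigO{(T/n)^4 \log^2(n/T)} \, (u^\top \Sigma u)^2}.
\]
Next, I would invoke Corollary~\ref{cor:cov_mean_sos} at the same vector $u$, yielding
\[
\calA_{T/n} \sststile{\bigO{1}}{w} \Set{(T/n)^2 \Iprod{\Sigma'-\Sigma, uu^\top}^2 \;\leq\; \tfrac{1}{2}\Iprod{\mu'-\mu, u}^4 + \bigO{(T/n)^4 \log^2(n/T)} (u^\top \Sigma u)^2}.
\]
Substituting the bound on $\Iprod{\mu'-\mu,u}^4$ into the right-hand side and dividing by $(T/n)^2$ (a fixed positive constant, so this preserves the SoS proof at the cost of a constant-degree overhead) gives
\[
\calA_{T/n} \sststile{\bigO{1}}{w} \Set{\Iprod{\Sigma'-\Sigma, uu^\top}^2 \;\leq\; \bigO{(T/n)^2 \log^2(n/T)} (u^\top \Sigma u)^2},
\]
which is the claimed bound.

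There is essentially no obstacle here: the corollary is purely a consequence of the two prior SoS inequalities, and the key work was done in proving Lemma~\ref{lem:main_identity_mean_sos_proof} and Corollary~\ref{cor:cov_mean_sos}. The only subtlety is verifying that fixing $v = \Sigma^{1/2} u$ in Lemma~\ref{lem:main_identity_mean_sos_proof} is legal, but since the lemma holds for any fixed (non-indeterminate) direction and $u, \Sigma$ are fixed, this is immediate. The goodness assumptions (both $T/n$-good and $T/n$-higher-order-good) are precisely the hypotheses under which both ingredient lemmas apply, so no additional conditions are needed.
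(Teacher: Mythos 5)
Your proposal is correct and matches the paper's own proof essentially line for line: both apply \cref{cor:cov_mean_sos} at $u$, bound the $\Iprod{\mu'-\mu,u}^4$ term by instantiating \cref{lem:main_identity_mean_sos_proof} at the direction $\Sigma^{1/2}u$ (the paper writes the substitution as $u \mapsto \Sigma^{-1/2}u$, but your version is the correct reading), and then cancel the factor of $(T/n)^2$.
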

\begin{proof}
    By~\cref{cor:cov_mean_sos} it holds that
    \[
        \calA_{T/n} \sststile{6}{} \Set{(T/n)^2 \Iprod{\Sigma' - \Sigma, uu^\top}^2 \leq \frac 1 2 \Iprod{\mu' - \mu, u}^4 + \bigO{(T/n)^4 \log^2(n/T) \Iprod{u, \Sigma u}^2}} \,.
    \]
    Further, by~\cref{lem:main_identity_mean_sos_proof} it holds that (substituting $u \rightarrow \Sigma^{-1/2} u$)
    \[
        \calA_{T/n} \sststile{}{} \Set{\Iprod{\mu' - \mu, u}^4 \leq \bigO{(T/n)^4 \log^2(n/T) \Iprod{u, \Sigma u}^2}} \,.
    \]
    Substituting this in the equation above and canceling the $(T/n)^2$ factor yields the claim.
\end{proof}

We can now use this to prove bounds on the utility of points based on their score.
\begin{lemma}
\label{lem:closeness_constraint_soundness_mean}
    Let $T \leq \eta^* n$.
    Let $\tilde{\mu} \in \R^d$ be such that there exists a pseudo-distribution $\pE$ such that $\pE \sdtstile{}{} \calB_T$ and it holds that for every fixed vector $u \in \R^d$ the following inequality is true
    \[
        \Iprod{u, \pE \mu' - \tilde{\mu}}^2 \leq \bigO{\alpha^2} \cdot \pE \Iprod{u, \Sigma' u} \,.
    \]
    Then, it follows that $\norm{\Sigma^{-1/2}(\tilde{\mu} - \mu^*)} \leq \bigO{\alpha + \tfrac T n \log (\tfrac n T)}$.
\end{lemma}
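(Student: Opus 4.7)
The plan is to mirror the proof of the analogous Lemma \ref{lem:closeness_constraint_soundness} from the regression section, with two key replacements dictated by the differences between the two score functions: the closeness constraint now uses $\Sigma'$ (rather than the inverse-proxy $Q$), and the target error metric is $\Sigma^{-1/2}$ (rather than $\Sigma^{1/2}$). Observe that showing $\norm{\Sigma^{-1/2}(\tilde{\mu} - \mu^*)} \leq C$ is equivalent to showing $\Iprod{u, \tilde{\mu} - \mu^*}^2 \leq C^2 \cdot u^\top \Sigma u$ for every fixed vector $u \in \R^d$ (by substituting $u = \Sigma^{-1/2}v$ and ranging over unit $v$). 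So I fix an arbitrary such $u$ and bound $\Iprod{u, \tilde{\mu}-\mu^*}^2$.

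I would split via the pseudo-expectation of $\mu'$:
\begin{equation*}
  \Iprod{u, \tilde{\mu} - \mu^*}^2 \leq 2 \Iprod{u, \pE \mu' - \tilde{\mu}}^2 + 2 \Paren{\pE \Iprod{u, \mu' - \mu^*}}^2.
\end{equation*}
For the first term, the hypothesis gives $\Iprod{u, \pE\mu' - \tilde{\mu}}^2 \leq O(\alpha^2) \cdot \pE \Iprod{u, \Sigma' u}$, so it suffices to argue that $\pE\Iprod{u, \Sigma' u} \leq O(1) \cdot u^\top \Sigma u$. This follows from Corollary \ref{cor:non_zero_mean_cov_close_sos}: applying $\pE$ and Cauchy-Schwarz for pseudo-expectations (\cref{fact:pe_cs}) to its conclusion yields $\abs{\pE\Iprod{u, \Sigma' u} - u^\top \Sigma u} \leq O((T/n)\log(n/T)) \cdot u^\top \Sigma u$, which for $T \leq \eta^* n$ (with $\eta^*$ a sufficiently small constant) is at most $\tfrac{1}{2} u^\top \Sigma u$. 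For the second term, I apply \cref{lem:main_identity_mean_sos_proof} with the substitution $u \mapsto \Sigma^{1/2} u$, which, after taking pseudo-expectations and applying \cref{fact:pe_cs} once more, gives $(\pE\Iprod{u, \mu' - \mu^*})^2 \leq O((T/n)^2 \sqrt{\log(n/T)}) \cdot u^\top \Sigma u$, well within the claimed bound.

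Combining the two pieces produces $\Iprod{u, \tilde{\mu} - \mu^*}^2 \leq O(\alpha^2 + (T/n)^2 \log(n/T)) \cdot u^\top \Sigma u$ for every fixed $u$, and taking square roots yields the claim. I do not expect any serious obstacle: the only subtlety is that \cref{lem:main_identity_mean_sos_proof} and \cref{cor:non_zero_mean_cov_close_sos} are only known to hold for \emph{fixed} $u$ (not as SoS proofs in $u$), but this is precisely why the proof is organized by fixing $u$ at the outset and then invoking $\Sigma^{-1/2}$-duality at the end — exactly as in the regression proof. The extension to linear operators that only $\tau$-approximately satisfy the system (as needed for the efficient version) is then immediate by tracking the $\tau T$ slack terms through the two inequalities above, analogously to the corresponding remark in \cref{lem:closeness_constraint_soundness}.
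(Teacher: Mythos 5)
Your proposal is correct and follows essentially the same route as the paper: fix $u$, use \cref{cor:non_zero_mean_cov_close_sos} (plus Cauchy--Schwarz for pseudo-expectations) to replace $\pE\Iprod{u,\Sigma' u}$ by $O(1)\cdot u^\top\Sigma u$ in the closeness constraint, invoke the robust-estimator utility (\cref{lem:main_identity_mean_sos_proof}) for $\Iprod{u,\pE\mu'-\mu^*}^2$, combine by triangle inequality, and convert to the $\Sigma^{-1/2}$-norm statement at the end. The only quibble is a harmless slip in the logarithmic factor for the second term (\cref{lem:main_identity_mean_sos_proof} yields $O((T/n)^2\log(n/T))\cdot u^\top\Sigma u$ rather than $O((T/n)^2\sqrt{\log(n/T)})\cdot u^\top\Sigma u$), which is still well within the claimed bound.
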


\begin{proof}
    Note that by~\cref{cor:non_zero_mean_cov_close_sos} we have that
    \[ \calB_T \sststile{\bigO{1}}{} \Biggl\{ \Iprod{ \Sigma' - \Sigma,   uu^\top }^2  \leq \bigO{(\tfrac T n)^2 \log^2 (\tfrac n T)}  \Paren{ ( u^\top \Sigma u )^2}  \Biggr\} \,.\]
    Therefore, the closeness constraint $\Iprod{u, \pE \mu' - \tilde{\mu}}^2 \leq \bigO{\alpha^2} \cdot \pE \Iprod{u, \Sigma' u}$ also implies that 
    \[ \Iprod{u, \pE \mu' - \tilde{\mu}}^2 \leq \bigO{\alpha^2 u^\top \Sigma u}\,.\]
    From the utility of our robust estimator, we have that for any vector $u$ that
    \[ \Iprod{u, \pE \mu' - \mu^*}^2 \leq \bigO{(\tfrac T n)^2 \log^2 (\tfrac n T)} \cdot u^\top \Sigma u\,,\]
    so by triangle inequality we have that
    \[ \Iprod{u, \tilde{\mu} - \mustar}^2 \leq \bigO{\Paren{\alpha^2 + (\tfrac T n)^2 \log^2 (\tfrac n T)}} \cdot u^\top \Sigma u \,,\]
    which implies that
    \[ \Iprod{u, \Sigma^{-1/2} \Paren{\tilde{\mu} - \mustar}}^2 \leq \bigO{\Paren{\alpha^2 + (\tfrac T n)^2 \log^2 (\tfrac n T)}} \cdot \norm{u}^2\,.\]
    This statement is equivalent to $\norm{\Sigma^{-1/2}(\tilde{\mu} - \mu^*)} \leq \bigO{\alpha + \tfrac T n \log (\tfrac n T)}$ which was the desired bound.
\end{proof}

\paragraph{Volume of High and Low Scoring Points.}
\begin{corollary}
    \label{cor:volume_considerations_mean}
    Let $T \leq \eta^* n$ and $n \geq \tilde{\Omega}\Paren{ \frac{d^2 + \log^{2}(1/\beta)}{ \eta^2  }}$.
    Then it holds that:
    \begin{enumerate}
        \item Every point in a $\Sigma^{-1/2}$-ball of radius $O(\eta \sqrt{\log (1/\eta)})$ around $\bar{\mu} = \frac{1}{n} \sum_{i \in [n]} x_i$ has score $T \leq \eta n  $
        \item Every point of score at most $\eta^* n$ is contained in a $\Sigma^{-1/2}$-ball around $\mu$ of radius $\bigO{1}$.
    \end{enumerate}
\end{corollary}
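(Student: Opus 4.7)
The plan mirrors the regression analogue (\cref{cor:volume_considerations_regression}): for the lower bound on the volume of low-scoring points, we exhibit a single explicit linear operator $\calL$ witnessing that the score of any $\tilde{\mu}$ near $\bar{\mu}$ is at most $\eta n$; for the upper bound, we simply invoke the soundness of the closeness constraint (\cref{lem:closeness_constraint_soundness_mean}).

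First I would handle part (1). By the feasibility argument from \cref{sec:feasibility}, the single-atom pseudo-distribution with $w_i = r_i$ (the indicator of uncorrupted samples), $x_i' = x_i^*$, $\mu' = \tfrac{1}{(1-\eta) n}\sum_{i : r_i = 1} x_i^*$, and $\Sigma'$ the corresponding empirical second moment, satisfies $\calB_{\eta n}$ at constant degree. It remains to verify the closeness condition at this $\calL$ for any $\tilde{\mu}$ with $\norm{\Sigma^{-1/2}(\tilde{\mu} - \bar{\mu})} \leq c\eta\sqrt{\log(1/\eta)}$. Since $\calL \mu' = \mu'$ is the empirical mean of the clean samples, $\eta$-goodness (\cref{fact:mean_and_cov_of_1-eps}) gives $\Norm{\Sigma^{-1/2}(\mu' - \mu)} \leq O(\eta\sqrt{\log(1/\eta)})$ and $\Norm{\Sigma^{-1/2}(\bar\mu - \mu)} \leq O(\eta\sqrt{\log(1/\eta)})$, so by triangle inequality $\Norm{\Sigma^{-1/2}(\mu' - \tilde{\mu})} \leq O(\alpha)$. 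Combining this with the spectral bound $\tfrac{1}{2}\Sigma \preceq \Sigma' \preceq 2\Sigma$ (again from $\eta$-goodness), we get for every fixed $u$,
\[
   \Iprod{u, \mu' - \tilde{\mu}}^2 \;\le\; \Norm{\Sigma^{-1/2}(\mu' - \tilde\mu)}^2 \cdot u^\top \Sigma u \;\le\; O(\alpha^2) \cdot u^\top \Sigma' u \,,
\]
as required. Since the $\Sigma^{-1/2}$-ball of radius $c\eta\sqrt{\log(1/\eta)}$ around $\bar{\mu}$ lies inside the domain $\mathbb{B}^d(2R + n\tau + \alpha\sqrt{L})$ for our parameter regime, this proves (1).

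For part (2), let $\tilde{\mu}$ have score at most $\eta^* n$, so there is a pseudo-distribution $\pE$ satisfying $\calB_{\eta^* n}$ together with the closeness constraint for $\tilde{\mu}$. Applying \cref{lem:closeness_constraint_soundness_mean} with $T = \eta^* n$ (and $\eta^*$ chosen a sufficiently small absolute constant), we obtain
\[
  \Norm{\Sigma^{-1/2}(\tilde{\mu} - \mu)} \;\le\; O\!\bigl(\alpha + \tfrac{T}{n}\log(\tfrac{n}{T})\bigr) \;\le\; O(1) \,,
\]
which is exactly the claimed containment.

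The main obstacle is verifying the closeness constraint for part (1) using only what the single-point operator $\calL$ actually knows: we cannot reference $\Sigma$ directly inside the constraint, so the argument must route through $\Sigma'$ (which the operator does define). The key quantitative point is that for an honest single-point $\calL$ on the clean samples, $\calL[\Sigma'] = \Sigma'$ is spectrally $(1 \pm O(\eta\log(1/\eta)))$-close to $\Sigma$ by $\eta$-goodness, which allows us to turn the Euclidean-style bound $\Norm{\Sigma^{-1/2}(\mu' - \tilde\mu)} = O(\alpha)$ into the required $\calL$-quadratic bound $\Iprod{u,\mu' - \tilde{\mu}}^2 \le O(\alpha^2) u^\top \Sigma' u$. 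The $\tau T$ slack is absorbed trivially since $\tau$ is polynomially small.
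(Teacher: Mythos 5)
Your proposal is correct and follows essentially the same route as the paper: part (1) is witnessed by the single-point operator on the clean samples, with the closeness constraint verified via $\eta$-goodness and the spectral closeness of the empirical covariance to $\Sigma$, and part (2) is a direct application of \cref{lem:closeness_constraint_soundness_mean}. The only cosmetic difference is that you route the mean comparison through $\mu$ by triangle inequality while the paper bounds $\Iprod{u,\bar\mu-\tilde\mu}^2$ directly by Cauchy--Schwarz; both are equivalent.
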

\begin{proof}
    We have already shown in~\cref{sec:feasibility} that a distribution supported on a single point with $x_i'$ equal to the true samples, $\Sigma'$ equal to the empirical covariance is feasible for the original robustness program.
    Furthermore, note that the empirical mean is sufficiently concentrated such that the empirical mean also has norm at most $2R + \tau T$ when the original mean $\mu$ had norm at most $R$.
    We will show that any point $\tilde{\mu}$ such that $\norm{\Sigma^{-1/2} (\tilde{\mu} - \bar{\mu})} \leq \bigO{\eta \sqrt{\log (1/\eta)}}$ has score at most $\eta n$. It suffices to show that for this single point distribution above that the closeness constraint holds for such $\tilde{\mu}$. If we let $\bar{\Sigma}$ be the empirical covariance of the true samples, equivalently we want to show that for all vectors $u$ it holds that
    \[ \Iprod{u, (\bar{\mu} - \tilde{\mu})}^2 \leq \bigO{\alpha} \cdot  \Iprod{u, \bar{\Sigma} u} = \bigO{\eta \sqrt{\log (1/\eta)}} \cdot  \Iprod{u, \bar{\Sigma} u} \]
    when $\norm{\Sigma^{-1/2} (\tilde{\mu} - \bar{\mu})} \leq \bigO{\eta \sqrt{\log (1/\eta)}}$. 
    Note that by concentration of the empirical covariance (\cref{fact:empirical_cov_bound}) it suffices to instead show that
    \[ \Iprod{u, \bar{\mu} - \tilde{\mu}}^2 \leq \bigO{\alpha} \cdot  \Iprod{u, \Sigma u} = \bigO{\eta \sqrt{\log (1/\eta)}} \cdot  \Iprod{u, \Sigma u}\,.\]
    We have that
    \[ \Iprod{u, \bar{\mu} - \tilde{\mu}}^2 = \Iprod{\Sigma^{1/2} u, \Sigma^{-1/2}(\bar{\mu} - \tilde{\mu})}^2 \leq \norm{\Sigma^{-1/2}(\bar{\mu} - \tilde{\mu})}_2^2 \cdot \norm{\Sigma^{1/2} u}_2^2\,.\]
    Note that $\norm{\Sigma^{-1/2}(\bar{\mu} - \tilde{\mu})}_2^2 \leq \bigO{\eta^2 \log (1/\eta)}$ and $\norm{\Sigma^{1/2} u}_2^2 = \Iprod{u, \Sigma u}$. Furthermore, all such $\tilde{\mu}$ are within the domain of the score function since $\bar{\mu}$ has norm at most $2R + \tau T$ and $\norm{\bar{\mu} - \tilde{\mu}} \leq \alpha \sqrt{\covscale}$. 
    
    Consider any point $\tilde{\mu}$ of score $T < \eta^* n$. We have by~\cref{lem:closeness_constraint_soundness_mean} that 
    \[\norm{\Sigma^{-1/2}(\tilde{\mu} - \mu^*)} \leq \bigO{\alpha + \tfrac T n \sqrt{\log (\tfrac n T)}} \leq \bigO{1}\,,\]
    which completes the proof of the second half of the corollary.
\end{proof}

\paragraph{Quasiconvexity.}
\begin{lemma}[Quasi-Convexity]
\label{lem:mean-quasiconvexity}
The score function $\calS\paren{\tilde{\mu}; x}$ as defined in \Cref{def:score-mean} is quasi-convex with respect to the parameter $\tilde{\mu}$.
\end{lemma}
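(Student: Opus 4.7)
The plan is to mirror the proof of quasi-convexity for the regression score (Lemma on regression quasi-convexity) essentially verbatim, with the closeness constraint replaced by its mean-estimation analogue. Specifically, let $\tilde{\mu}_1, \tilde{\mu}_2 \in \mathbb{B}^d(2R + n\tau + \alpha\sqrt{\covscale})$ with $\calS(\tilde{\mu}_1; x; \alpha, \tau) = T_1$ and $\calS(\tilde{\mu}_2; x; \alpha, \tau) = T_2$, witnessed by $(\alpha, \tau, T_1)$ and $(\alpha, \tau, T_2)$ certificates $\calL_1$ and $\calL_2$ respectively. For $\lambda \in [0,1]$, set $\tilde{\mu}_3 = \lambda \tilde{\mu}_1 + (1-\lambda) \tilde{\mu}_2$ and $\calL_3 = \lambda \calL_1 + (1-\lambda) \calL_2$. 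The goal is to show that $\calL_3$ is a $(\alpha, \tau, \max\{T_1, T_2\})$ certificate for $(x, \tilde{\mu}_3)$.

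First, I would verify the structural and linear constraints. The Frobenius norm bound follows from the triangle inequality, and the norm bound $\norm{\calL_3 \mu'}_2 \leq 2R + \tau$ is immediate from convexity. All polynomial inequalities in $\calB_T$ except the counting constraint are linear in the linear operator and therefore preserved under convex combinations. For the closeness condition, I would use convexity of $z \mapsto z^2$: for any fixed $u \in \mathbb{R}^d$,
\begin{align*}
    \Iprod{u, \calL_3 \mu' - \tilde{\mu}_3}^2 &= \Iprod{u, \lambda(\calL_1 \mu' - \tilde{\mu}_1) + (1-\lambda)(\calL_2 \mu' - \tilde{\mu}_2)}^2 \\
    &\leq \lambda \Iprod{u, \calL_1 \mu' - \tilde{\mu}_1}^2 + (1-\lambda) \Iprod{u, \calL_2 \mu' - \tilde{\mu}_2}^2 \\
    &\leq \bigO{\alpha^2}\paren{\lambda \calL_1 \Iprod{u, \Sigma' u} + (1-\lambda) \calL_2 \Iprod{u, \Sigma' u}} + \tau \max\{T_1, T_2\} \\
    &= \bigO{\alpha^2} \calL_3 \Iprod{u, \Sigma' u} + \tau \max\{T_1, T_2\},
\end{align*}
as required by \cref{def:cert_parameter_mean}.

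The only delicate step, and the one I expect to be the main obstacle, is the counting constraint for $\calL_3$ with threshold $\max\{T_1, T_2\}$. This is exactly the computation that appears at the end of the regression quasi-convexity proof. Assume without loss of generality that $T_2 = \max\{T_1, T_2\}$. For any polynomial $p$ with $\norm{\calR(p)}_2 \leq 1$, I would write
\begin{align*}
    \calL_3 \paren{\sum_{j \in [n]} w_j - n + T_2} p^2 &= \lambda \calL_1 \paren{\sum_{j \in [n]} w_j - n + T_1} p^2 + \lambda (T_2 - T_1) \calL_1 p^2 \\
    &\quad + (1-\lambda) \calL_2 \paren{\sum_{j \in [n]} w_j - n + T_2} p^2 \\
    &\geq -5\lambda \tau T_1 n - 5(1-\lambda) \tau T_2 n - \lambda (T_2 - T_1) \tau T_1,
\end{align*}
where the last line uses the approximate feasibility of $\calL_1$ and $\calL_2$ together with the approximate nonnegativity bound $\calL_1 p^2 \geq -\tau T_1$. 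Bounding $(T_2 - T_1) T_1 \leq (T_2 - T_1) n$ gives the required lower bound $-5 \tau T_2 n$. This concludes that $\calL_3$ is a $(\alpha, \tau, T_2)$ certificate and hence $\calS(\tilde{\mu}_3; x; \alpha, \tau) \leq \max\{T_1, T_2\}$. The whole argument is essentially the same as for regression since our new closeness constraint was designed, via the $\Sigma'$-proxy, to behave linearly enough in $\calL$ that convex combinations preserve it.
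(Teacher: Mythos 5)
Your proposal is correct and follows essentially the same route as the paper's proof: form the convex combination $\calL_3 = \lambda\calL_1 + (1-\lambda)\calL_2$, note all constraints except the counting and closeness ones are linear in $\calL$, verify closeness via convexity of $z \mapsto z^2$, and handle the counting constraint by the add-and-subtract-$T_1$ trick. Your treatment of the residual term $\lambda(T_2-T_1)\calL_1 p^2$ is in fact slightly more careful than the paper's write-up in the mean-estimation section, matching the fuller computation given in the regression version.
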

\begin{proof}
Suppose $\tilde{\mu}_1, \tilde{\mu}_2 \in \mathbb{B}^d \paren{2R + n \tau + \alpha \sqrt{\covscale}}$, and $\calS\paren{\tilde{\mu}_1; x} = T_1$, and $\calS\paren{\tilde{\mu}_2; x} = T_2$. We will show that for any $\lambda \in [0, 1]$, we have $\calS\paren{\lambda \tilde{\mu}_1 + (1-\lambda) \tilde{\mu}_2; x} \le \max\{T_1, T_2\}$. 

Let $\tilde{\mu}_3 = \lambda \tilde{\mu}_1 + (1-\lambda) \tilde{\mu}_2$. Since $\calS\paren{\tilde{\mu}_1; x} = T_1$, there exists a $\paren{\alpha, \tau, T_1}$ certificate $\mathcal L_1$ for $(x, \tilde{\mu}_1)$. Similarly, since $\calS\paren{\tilde{\mu}_2; x} = T_2$, there exists a $\paren{\alpha, \tau, T_2}$ certificate $\mathcal L_2$ for $(x, \tilde{\mu}_2)$. We will construct a $\paren{\alpha, \tau, \max\{T_1, T_2\}}$ certificate $\mathcal L_3$ for $(x, \tilde{\mu}_3)$. We construct
$\mathcal L_3$ as follows: $\mathcal L_3 = \lambda \mathcal L_1 + (1-\lambda) \mathcal L_2 $.
All of the constraints would then be satisfied trivially, except for the counting constraint and the closeness constraint.
Without loss of generality assume $T_2 = \max\{T_1, T_2\}$.
To verify the counting constraint, we have
to show that for any polynomial $p$ where $\norm{\mathcal R \paren{p}}_2 \le 1$ that $\mathcal L_3 \paren{\sum_{j \in [n]} w_j - n + T_2} p^2 \geq -5 \cdot T_2 \cdot n$.
We have
\begin{align*}
\mathcal L_3 \paren{\sum_{j \in [n]} w_j - n + T_2} p^2 
&= \lambda \mathcal L_1 \paren{\sum_{j \in [n]} w_j - n + T_2 + T_1 - T_1} p^2 + (1-\lambda) \mathcal L_2 \paren{\sum_{j \in [n]} w_j - n + T_2} p^2 \\
&\ge - 5 \lambda \tau \cdot T_1 \cdot n - 5 (1-\lambda) \tau \cdot T_2 \cdot n + \lambda  \paren{T_2 - T_1} \mathcal L_1  p^2 \\
&\ge - 5 \lambda \tau \cdot T_1 \cdot n - 5 (1-\lambda) \tau \cdot T_2 \cdot n \\
&\ge - 5 \lambda \tau \cdot T_2 \cdot n - 5 (1-\lambda) \tau \cdot T_2 \cdot n  \\
& = - 5 \tau \cdot T_2 \cdot n.
\end{align*}
We now consider the counting constraint. Note that we have that
\[ \Iprod{u, \calL_3 \mu' - \tilde{\mu}_3}^2 = \Iprod{u, \lambda (\calL_1 \mu' - \tilde{\mu}_1) + (1-\lambda) (\calL_2 \mu' - \tilde{\mu}_2)}^2 \leq \lambda \Iprod{u, \calL_1 \mu' - \tilde{\mu}_1}^2 +  \Iprod{u, \calL_2 \mu' - \tilde{\mu}_2}^2\,,\]
by the convexity of the function $\Iprod{u, \cdot}^2$. Furthermore, applying that $\calL_1, \calL_2$ satisfy the closeness constraints for $\tilde{\mu}_1, \tilde{\mu}_2$ respectively we have that
\[ \lambda \Iprod{u, \calL_1 \mu' - \tilde{\mu}_1}^2 +  \Iprod{u, \calL_2 \mu' - \tilde{\mu}_2}^2 \leq \lambda \calL_1 u^\top \Sigma' u + (1-\lambda) \calL_2 u^\top \Sigma' u = \calL_3 u^\top \Sigma' u\,.\]
This verifies that the closeness constraint holds for $\calL_3$ and $\tilde{\mu}_3$ and that $\mathcal L_3$ is a $\paren{\alpha, \tau, \max\{T_1, T_2\}}$ certificate for $(x, y)$, as desired.
\end{proof}

\paragraph{Efficient Computability.}
\begin{lemma}[Efficient Computation of Mean Score Function]
\label{lem:efficient-comp-mean}
    Let $\tilde{\mu}$ be some point in the domain of the score function and let $T = \calS(\tilde{\mu}, \calY; \alpha, \tau)$ be the score of $\tilde{\mu}$. Then we can compute $T$ in time $\poly(n^d, \log(R \sqrt{\covscale}), \log(1/\gamma))$ up to accuracy $\bigO{\gamma}$.
\end{lemma}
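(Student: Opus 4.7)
The plan is to compute the score by binary searching over $T \in \{0, 1, \ldots, n\}$ and, for each candidate value, using the ellipsoid algorithm to test whether $\tmu$ is an $(\alpha, \tau, T)$-certifiable mean for $\calY$. Since the set of admissible linear functionals $\calL$ lives in a space of dimension $n^{\bigO{1}}$ (the coefficients indexing monomials up to degree $\bigO{1}$), and since by~\cref{def:cert_parameter_mean} we require $\norm{\calR(\calL)}_F \leq R' + \tau$, the ellipsoid method converges in $\poly(n, d, \log(R\sqrt{L}), \log(1/\gamma))$ iterations provided we can exhibit a weak separation oracle for the feasible set.

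The separation oracle decomposes into three pieces. For the approximately-satisfied robustness constraints in $\calB_T$, the constraint system is syntactically identical (modulo the closeness constraint and the norm constraint) to the one used in~\cite{Hopkins2023Robustness}, so Lemma C.6 of that paper provides the required separating hyperplanes within our time budget. The norm constraint $\norm{\calL \mu'}_2 \leq 2R + \tau$ is linear in $\calL$, so a violation is detected by simply computing the vector $\calL \mu'$ and returning the obvious hyperplane. The only new ingredient is the closeness constraint
\[
    \Iprod{u, \calL \mu' - \tilde{\mu}}^2 \;\leq\; \bigO{\alpha^2}\cdot \calL\!\Iprod{u, \Sigma' u} + \tau T\qquad \text{for all }u \in \R^d,
\]
which is quadratic in $\calL$ but, crucially, has exactly the same structural form as the closeness constraint we handled in the private regression section with $\calL[\Sigma']$ playing the role of $\calL[Q]$. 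In particular, the feasible set is convex: the constraint can be rewritten as the PSD condition
\[
    \bigO{\alpha^2}\cdot \calL[\Sigma'] + \tau T \cdot I_d - (\calL[\mu']-\tmu)(\calL[\mu']-\tmu)^\top \succeq 0,
\]
which via the Schur complement is equivalent to
\[
    \begin{pmatrix} \bigO{\alpha^2}\cdot \calL[\Sigma'] + \tau T \cdot I_d & \calL[\mu']-\tmu \\ (\calL[\mu']-\tmu)^\top & 1 \end{pmatrix} \succeq 0,
\]
a linear matrix inequality in $\calL$. Hence, given a candidate $\calL$, we form this $(d+1)\times(d+1)$ block matrix and compute its smallest eigenvalue; if it is nonnegative, the constraint holds, and otherwise the corresponding eigenvector yields a separating hyperplane. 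This is exactly the separation procedure established in~\cref{lem:closeness-separation-oracle}, which applies verbatim.

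Combining these three separation routines gives a $\poly(n, d, \log(R\sqrt{L}), \log(1/\gamma))$-time weak separation oracle for the feasible set of $\calL$ at any fixed $T$. Binary search over $T$ incurs only a $\log n$ factor, so the total runtime is within the claimed bound, and the ellipsoid method recovers the optimal $T$ up to additive error $\bigO{\gamma}$. The main (mildly subtle) point is the Schur-complement linearization of the quadratic closeness constraint, but this is precisely what was engineered into the definition of the constraint in the first place and is inherited from the regression analysis.
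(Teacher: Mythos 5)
Your proposal is correct and follows the same overall skeleton as the paper: run the ellipsoid method over the space of linear functionals $\calL$ for each candidate $T$, inherit the separation oracles for the robustness constraints from Lemma C.6 of~\cite{Hopkins2023Robustness}, and supply new separation oracles only for the norm constraint on $\calL\mu'$ and the quadratic closeness constraint. The one place where you diverge is in how the oracle for the closeness constraint is realized, and your claim that your procedure is ``exactly'' the one in \cref{lem:closeness-separation-oracle} is not quite accurate. You linearize the constraint via the Schur complement into the $(d+1)\times(d+1)$ LMI and separate using the minimum eigenvector of that augmented affine matrix; the paper instead works directly with the $d\times d$ matrix $c\,M(z_0) - x_0x_0^\top$, takes its minimum eigenvector $v$, and writes down the tangent hyperplane $2\langle v,x_0\rangle\langle v,x\rangle - c\,v^\top M v \le \langle v,x_0\rangle^2$, verifying validity by hand via $2ab \le a^2+b^2$ (the ``tangent to the parabola'' argument from the technical overview). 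The two are essentially equivalent — restricting your $(d+1)$-dimensional eigenvector $(u,t)$ to its optimal $t = -\langle b_0,u\rangle$ recovers the paper's hyperplane — but your LMI packaging is arguably cleaner and makes convexity of the feasible set immediate, at the cost of invoking the Schur complement rather than an elementary inequality. Everything else (feasibility of being $(\alpha,\tau,T)$-certifiable being monotone in $T$ so that binary search is valid, the linear norm constraint, the dimension and bit-complexity bounds for ellipsoid) is fine and matches what the paper does implicitly.
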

We will show efficient computability via the ellipsoid algorithm. Note that our constraints are the same as the constraints in~\cite{Hopkins2023Robustness} except for the closeness constraint and the norm constraint on $\calL \mu'$. Their efficient computability proof (Lemma C.6 of~\cite{Hopkins2023Robustness}) mostly applies to our program, with the exception of requiring a separation oracle for the two new constraints. We give a proof of the separation oracle for these constraints in the appendix~\cref{lem:closeness-separation-oracle}.

\subsubsection{Other Properties of the Score Function}

\paragraph{Well Definedness of the Score Function.}
\begin{lemma}[Score Function Upper Bound]
\label{lem:score-function-ub-mean}
Let $n \geq \tilde{\Omega}\Paren{ \frac{d^2 + \log^{2}(1/\beta)}{ \eta^2  }}$. For any input $x$ and $\tilde{\mu} \in \mathbb{B}^d \paren{2R + n \tau + \alpha \sqrt{\covscale}}$, we have $\calS\paren{\tilde{\mu}; x} \le n$, for $\calS$ as defined in \Cref{def:score-mean}.
\end{lemma}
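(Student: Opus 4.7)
The plan is to imitate the proof of \cref{lem:score-function-ub-regression}: for any $\tilde{\mu}$ in the domain, we exhibit a single linear functional $\calL$, corresponding to a point-mass distribution over the SoS variables, that witnesses $\calS(\tilde{\mu}; x; \alpha, \tau) \leq n$. The first move is to set $w_i := 0$ for all $i \in [n]$, which trivially satisfies $w_i^2 = w_i$, $w_i(x_i' - x_i) = 0$, and the counting constraint $\sum_i w_i \geq n - T$ for $T = n$. The remaining SoS variables $x_i', \muprime, \Sigma'$ are then free parameters that we choose to satisfy the norm bound on $\calL \muprime$ and, crucially, the closeness constraint from \cref{def:cert_parameter_mean}.

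Let $\mu_\pi$ be the orthogonal projection of $\tilde{\mu}$ onto the ball $\mathbb{B}^d(2R + \tau/2)$, and set $L' := C \cdot L$ for a sufficiently large absolute constant $C$. Draw $x_i' \sim N(\mu_\pi, L' \cdot I_d)$ iid, and set $\muprime := \tfrac{1}{n}\sum_i x_i'$ and $\Sigma' := \tfrac{1}{n}\sum_i (x_i' - \muprime)(x_i' - \muprime)^\top$. Define $\calL$ to evaluate every monomial at this single point with $w_i = 0$. Our sample size $n \geq \tilde{\Omega}((d^2 + \log^2(1/\beta))/\eta^2)$ together with standard Gaussian concentration imply that, with probability at least $1/2$, the following hold simultaneously: (i) $\norm{\muprime - \mu_\pi} \leq \tau/2$, so that $\norm{\calL \muprime} \leq 2R + \tau$; (ii) $\Sigma' \succeq (L'/2) I_d$; and (iii) the SoS-certified fourth-moment bound $\sststile{4}{v} \{\tfrac{1}{n}\sum \Iprod{x_i' - \muprime, v}^4 \leq (3 + \eta \log^2(1/\eta))(v^\top \Sigma' v)^2\}$, via Gaussian hypercontractivity at the sample level as in \cref{fact:cert_bounded_moments_from_concentration}. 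We fix any realization in this event. The norm $\norm{\calR(\calL)}_F$ is polynomially bounded in $n, d, R, L$, so condition 1 of \cref{def:cert_parameter_mean} holds with a suitable choice of $R'$.

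It remains to verify the closeness inequality: for every fixed $u \in \R^d$,
\[
    \Iprod{u, \muprime - \tilde{\mu}}^2 \leq \bigO{\alpha^2} \cdot u^\top \Sigma' u + \tau n.
\]
By the triangle inequality, $\norm{\muprime - \tilde{\mu}} \leq \norm{\muprime - \mu_\pi} + \norm{\mu_\pi - \tilde{\mu}} \leq \tau/2 + (n\tau + \alpha\sqrt{L})$, since $\tilde{\mu}$ lies in a ball of radius $2R + n\tau + \alpha\sqrt{L}$ while $\mu_\pi$ lies in a ball of radius $2R + \tau/2$. Cauchy--Schwarz then yields $\Iprod{u, \muprime - \tilde{\mu}}^2 \leq \bigO{\alpha^2 L + n^2 \tau^2} \norm{u}^2$, while the spectral bound on $\Sigma'$ gives $u^\top \Sigma' u \geq (L'/2)\norm{u}^2$. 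Taking $C$ large enough absorbs the $\alpha^2 L$ contribution into $\bigO{\alpha^2} u^\top \Sigma' u$; the $n^2 \tau^2$ piece is negligible because $\tau$ is chosen polynomially small in $n, d, R, L, \alpha^{-1}$. The main (mild) obstacle is the SoS certification in item (iii) above, but this follows from the standard sample-level Gaussian hypercontractivity machinery already invoked elsewhere in the paper; all other verifications follow directly from $w_i = 0$ and the freedom to choose the synthetic samples $x_i'$.
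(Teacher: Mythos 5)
Your construction is the same one the paper uses: take the all-zeros assignment $w_i = 0$ (so the counting constraint with $T = n$ and the constraints $w_i(x_i' - x_i) = 0$ are vacuous), generate synthetic Gaussian samples with covariance $\Theta(\covscale)\cdot I_d$ centered at a projection of $\tilde{\mu}$ onto a slightly smaller ball, and let $\calL$ be the point evaluation at the resulting empirical mean and covariance. The verification of the closeness constraint via $\Sigma' \succeq \Omega(\covscale) I_d$ and $\norm{\muprime - \tilde{\mu}} \leq \bigO{\alpha\sqrt{\covscale}}$ is also the paper's argument.

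There is, however, one step that fails quantitatively: your claim (i) that $\norm{\muprime - \mu_\pi} \leq \tau/2$ with probability at least $1/2$. The empirical mean of $n$ i.i.d.\ samples from $N(\mu_\pi, L' I_d)$ deviates from $\mu_\pi$ by $\Theta(\sqrt{L' d/n})$, which for $n = \tilde{\Omega}\paren{(d^2 + \log^2(1/\beta))/\eta^2}$ is on the order of $\alpha\sqrt{\covscale/d}$ up to logarithms — enormously larger than $\tau/2$, since $\tau = (n d R \alpha^{-1}\covscale)^{-C}$ is an inverse polynomial with a large exponent. Since your verification of the norm constraint $\norm{\calL\muprime} \leq 2R + \tau$ rests entirely on this claim, that verification breaks. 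The fix is exactly what the paper does: project $\tilde{\mu}$ inward by $\psi = \Theta(\alpha\sqrt{\covscale})$ rather than by $\tau/2$, i.e., take $\mu_\pi$ on the sphere of radius $2R + \tau - \psi$, and budget the (true) mean-concentration error $\norm{\muprime - \mu_\pi} \leq \bigO{\alpha\sqrt{\covscale}}$ against $\psi$. This does not disturb your closeness verification, since there you only need $\norm{\muprime - \tilde{\mu}} \leq \bigO{\alpha\sqrt{\covscale}}$, which still holds after enlarging the projection distance to $\psi + n\tau + \alpha\sqrt{\covscale} = \bigO{\alpha\sqrt{\covscale}}$. With that single correction your argument goes through and coincides with the paper's proof.
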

\begin{proof}
It suffices to show that for $T = n$, and for any $\tilde{\mu} \in \mathbb{B}^d \paren{2R + n \tau + \alpha \sqrt{\covscale}}$, there exists a $\paren{\alpha, \tau, n}$ certificate $\mathcal L$ for $(x, y)$. 

We will define a linear functional $\calL$ which for every monomial $p$ assigns a value equal to the expectation of $p$ over a distribution supported on a single point. Furthermore, we will show that $\calL$ is feasible for our robustness system as well as the additional constraints. If $\norm{\tilde{\mu}}\leq 2R + \tau T$ we will let $x_i' = \tilde{\mu}$, $w_i = 0$, and $\mu', \Sigma'$ to be the empirical mean and covariance. This trivially satisfies all constraints.

Otherwise, let $\mu$ be $\tilde{\mu}$ projected to the $2R + \tau T - \psi$ radius sphere for some choice of $\psi$ to be decided later. Let $x_i' \sim N(\mu, \covscale/2 \cdot I_d)$ and $\mu', \Sigma'$ be the empirical mean and covariance respectively.
Furthermore, let all $w_i = 0$. We will show that with non-zero probability this certifies that $\tilde{\mu}$ has score $n$.
Note that with probability $1-\beta$ this system is feasible for the robustness constraints and $\Sigma' \preceq \covscale \cdot I_d$. Furthermore, note that with probability $1-\beta$ we have that $\norm{\mu' - \mu} \leq \sqrt{\covscale/2} \cdot \alpha$. Thus, if we take $\psi = \sqrt{\covscale/2} \cdot \alpha$ we have that $\norm{\mu'} \leq 2R + \tau T$. 

It remains to show that the closeness constraint is satisfied. We will bound $\norm{(\Sigma')^{-1/2} (\mu' - \tilde{\mu})}$. By our PSD bounds on $\Sigma'$ and Triangle Inequality we have that 
\begin{align*}
    \norm{(\Sigma')^{-1/2} (\mu' - \tilde{\mu})} &\leq \frac{1}{\sqrt{\covscale}} \norm{\mu' - \mu} \\
    &\leq \frac{1}{\sqrt{\covscale}} \cdot \left( \norm{\mu' - \mu} + \norm{\mu - \tilde{\mu}}\right)\\
    &\leq \frac{1}{\sqrt{\covscale}} \cdot \left( 2\sqrt{\covscale/2} \cdot \alpha + \alpha \sqrt{\covscale} \right) \\
    &\leq \bigO{\alpha} \,.
\end{align*}
Note that this also implies the closeness constraint is satisfied. Therefore, we have given a linear functional $\calL$ which certifies that $\tilde{\mu}$ has score at most $n$.
\end{proof}

\paragraph{Bounded Sensitivity.}
\begin{lemma}[Bounded Sensitivity]
\label{lem:sensitivity-mean}
The score function $\calS\paren{\tilde{\mu}; x}$ as defined in \Cref{def:score_function_regression} has sensitivity $1$ with respect to the input data $x$.
\end{lemma}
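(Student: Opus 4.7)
\begin{proof}
The proof mirrors the structure of the bounded-sensitivity proof for regression (\cref{lem:bounded-sensitivity-regression}). The plan is as follows: given two neighboring datasets $x, x'$ differing only in the $i$-th entry and a $(\alpha,\tau,T)$-certificate $\calL$ for $(x,\tilde{\mu})$, we construct a $(\alpha,\tau,T+1)$-certificate $\calL'$ for $(x',\tilde{\mu})$ by ``deleting'' the $i$-th point, i.e., setting $\calL' p = \calL p$ for any monomial $p$ not containing $w_i$, and $\calL' p = 0$ for any monomial that does contain $w_i$. By symmetry this suffices to conclude $\lvert \calS(\tilde{\mu};x) - \calS(\tilde{\mu};x')\rvert \leq 1$.

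Decompose any polynomial as $p = q + w_i r$ where $q$ does not depend on $w_i$, so that $\calL' s p^2 = \calL s q^2$ for every polynomial $s$ not involving $w_i$, and $\norm{\calR(q)}_2 \leq \norm{\calR(p)}_2 \leq 1$. This immediately gives (i) the constraints of $\calB_{T+1}$ that do not mention $w_i$ (namely the SoS fourth-moment constraint, the definitions of $\mu'$ and $\Sigma'$, and the normalization $\calL' 1 = 1$) are inherited from the fact that $\calL$ $\tau$-approximately satisfies the corresponding constraints in $\calB_T(x)$; (ii) the constraints $w_j(x'_j - x'_j) = 0$ and $w_j^2 = w_j$ for $j \neq i$ are unaffected since they do not involve $w_i$; (iii) the constraints $w_i(x'_i - x'_i) = 0$ and $w_i^2 = w_i$ are satisfied because $\calL'$ annihilates every monomial containing $w_i$; (iv) the closeness constraint $\iprod{u, \calL' \mu' - \tilde{\mu}}^2 \leq \bigO{\alpha^2} \calL'\iprod{u,\Sigma' u} + \tau(T+1)$ holds since neither side depends on $w_i$ (so both sides equal their $\calL$-counterparts); and (v) the norm bound $\norm{\calL' \mu'}_2 \leq 2R + \tau$ is inherited similarly. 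The matrix-representation norm $\norm{\calR(\calL')}_F \leq \norm{\calR(\calL)}_F \leq R' + \tau$ is likewise inherited.

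The only nontrivial verification is the counting constraint $\sum_{j \in [n]} w_j \geq n - (T+1)$. For any polynomial $p = q + w_i r$ with $\norm{\calR(p)}_2 \leq 1$, computing exactly as in the regression proof,
\begin{align*}
\calL'\Paren{\sum_{j \in [n]} w_j - n + T + 1} p^2
&= \calL \Paren{\sum_{j \neq i} w_j - n + T + 1} q^2 \\
&= \calL \Paren{\sum_{j \in [n]} w_j - n + T} q^2 + \calL(1 - w_i) q^2 \\
&\geq -5 \tau T n + \calL' (1 - w_i)^2 q^2 + \calL'(w_i - w_i^2) q^2 \,,
\end{align*}
using that $w_i \cdot q^2$ does not contain $w_i$ after reduction via $w_i = w_i^2$ only in the combination $(1-w_i)q^2 + (w_i - w_i^2)q^2 = (1-w_i)q^2$ when $w_i^2 = w_i$; more formally, since $\norm{\calR((1-w_i)q)}_2 \leq 2\norm{\calR(q)}_2 \leq 2$, we have $\calL'(1-w_i)^2 q^2 \geq -4\tau T$, and $\calL'(w_i - w_i^2)q^2 \geq -\tau T$. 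Summing, the right-hand side is at least $-5 \tau T n - 5 \tau T \geq -5 \tau (T+1) n$, establishing the $\tau$-approximate satisfaction of the counting constraint at level $T+1$. Thus $\calL'$ is a $(\alpha, \tau, T+1)$-certificate for $(x', \tilde{\mu})$, so $\calS(\tilde{\mu}; x') \leq T + 1 = \calS(\tilde{\mu}; x) + 1$.
\end{proof}

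The main obstacle, as in the regression case, is ensuring the counting constraint is preserved with only an additive loss of one when absorbing the deleted sample; all other constraints are insensitive to the change because they either do not reference $w_i$ or are exactly annihilated by our construction. This is straightforward once one uses the decomposition $p = q + w_i r$ and carefully tracks the approximation slack via $\norm{\calR(\cdot)}_2$.
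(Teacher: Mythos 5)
Your proposal is correct and follows essentially the same route as the paper's proof: the same construction of $\calL'$ by annihilating all monomials containing $w_i$, the same decomposition $p = q + w_i r$, and the same verification of the counting constraint via $\calL'(1-w_i)^2 q^2 \geq -4\tau T$ and $\calL'(w_i - w_i^2)q^2 \geq -\tau T$. (Minor typo only: the constraint in your item (ii) should read $w_j(x_j' - x_j) = 0$, relating the program variable to the data point.)
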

\begin{proof}
Suppose $x, x'$ are two neighbouring datasets, and $\tilde{\mu} \in \mathbb{B}^d \paren{2 R + n \tau + \alpha \sqrt{\covscale}}$. Moreover, assume that $\calS\paren{\tilde{\mu}; x} = T$. If we show that $\calS\paren{\tilde{\mu}; x'}\le T + 1$, then we are done by symmetry. Since $\calS\paren{\tilde{\mu}; x} = T$, there exists a $\paren{\alpha, \tau, T}$ certificate  $\mathcal L$ for $x$. We will show that there exists a $\paren{\alpha, \tau, T+1}$ certificate $\mathcal L'$ for $x'$. If we show this we conclude that $\calS\paren{\tilde{\mu}; x'} \le T + 1$, and we are done.

Without loss of generality, assume $x$, and $x'$ differ in one point, say $x_i$ and $x_i'$.
We will construct $\mathcal L'$ from $\mathcal L$ by modifying the value of $\mathcal L$ on the monomials. We will show that $\mathcal L'$ is a $\paren{\alpha, \tau, T+1}$ certificate for $x'$. For any monomial $p$, let $\mathcal L' p = \mathcal L p$, if $p$ does not contain $w_i$, otherwise let $\mathcal L' p = 0$. 

Now let's verify that the conditions hold for this definition of $\mathcal L' p$. Note that if $p = q + w_i r$, where $q$ does not contain $w_i$, then $\mathcal L' s p^2 = \mathcal L s q^2$, for all polynomials $s$. Moreover, $\norm{\mathcal R \paren{q}}_2 \le 1$. Therefore, all of the constraints that do not contain $w_i$ will be satisfied, with the same value $T$. This also includes the closeness constraint.

It remains to verify the constraints that contain $w_i$ from the relaxed version of our robustness program. All of these except the counting constraint are satisfied because $\mathcal L' p = 0$ for all monomials that contain $w_i$. It remains to verify the counting constraint. We have
\begin{align*}
\mathcal L' \paren{\sum_{j \in [n]} w_j - n + T + 1} p^2 &= \mathcal L \paren{\sum_{j \neq i} w_j - n + T + 1} q^2 \\
&= \mathcal L \paren{\sum_{j \in [n]} w_j - n + T} q^2 + \mathcal L q^2 - \mathcal L w_i q^2 \\
& \ge - 5 \tau \cdot T \cdot n + \mathcal L \paren{1 - w_i} q^2  \\
& = - 5 \tau \cdot T \cdot n + \mathcal L' \paren{1 - w_j}^2 q ^2 + \mathcal L' (w_j - w_j^2)q^2.
\end{align*}
Now consider the polynomial $(1-w_j) q$, for the representation of this polynomial we have that $\norm{\mathcal R \paren{(1-w_j) q}}_2 \le 2 \norm{\mathcal R \paren{q}}_2 \le 2 $. Therefore, we have that $\mathcal L' \paren{1 - w_j}^2 q \ge  -4\tau \cdot T$. Moreover, we have that $\mathcal L' (w_j - w_j^2)q^2 \ge  -\tau T$. Therefore we have that 
\[\mathcal L' \paren{\sum_{j \in [n]} w_j - n + T + 1} p^2 \ge - 5 \tau \cdot T \cdot n - 4 \tau \cdot T - \tau \cdot T \ge - 5\tau \cdot \paren{T+ 1} \cdot n\,.\]
This verifies that $\mathcal L'$ is a $\paren{\alpha, \tau, T+1}$ certificate for $x'$, as desired.
\end{proof}

\paragraph{Efficiently Finding a Low Scoring Point.}
\begin{lemma}
\label{lem:find-low-score-mean}
There exists an algorithm which runs in time $\poly(n, d, \log(R \sqrt{\covscale})$ that outputs $\hat{\mu}$ such that $\calS(\hat{\mu}, \calY; \alpha, \tau) \leq \min_{\tilde{\mu}} \calS(\tilde{\mu}, \calY; \alpha, \tau) + 1$ and such that all $\mu$ within distance $\alpha/\sqrt{\covscale}$ have score at most $\calS(\hat{\mu}, \calY; \alpha, \tau)$.
\end{lemma}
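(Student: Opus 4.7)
The plan is to mirror the algorithm and analysis of \cref{lem:find-low-score-regression}. I would search over $T = 1, 2, \dots, n$ (equivalently, via binary search for a $\log n$ speed-up) for the smallest value $T_{\min}$ such that the constraint system $\calB_T$, augmented with the norm constraints $\norm{\calR(\calL)}_F \le R' + \tau$ and $\norm{\calL \mu'}_2 \le 2R + \tau$ from \cref{def:cert_parameter_mean} but with the closeness constraint \emph{dropped}, admits a $\tau$-approximately satisfying linear operator. Each feasibility check reduces to running the ellipsoid algorithm with the separation oracles already used in \cref{lem:efficient-comp-mean}; the two norm constraints are convex and admit standard separating hyperplanes. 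Let $\calL_{\min}$ witness feasibility at $T_{\min}$; the algorithm outputs $\hat{\mu} := \calL_{\min} \mu'$. The total runtime is $\poly(n, d, \log(R\sqrt{\covscale}))$ as required.

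For the score upper bound, I would argue that $\calL_{\min}$ itself is an $(\alpha, \tau, T_{\min})$ certificate for $(\calY, \hat{\mu})$ in the sense of \cref{def:cert_parameter_mean}: conditions 1 and 3 hold by construction, and the closeness requirement collapses to $\Iprod{u, \calL_{\min}\mu' - \hat{\mu}}^2 = 0$, which is trivially dominated by $\bigO{\alpha^2}\calL_{\min}\Iprod{u,\Sigma' u} + \tau T_{\min}$. For the matching lower bound, any $\tilde\mu^\ast$ attaining the minimum score $T^\ast$ admits a certificate that in particular witnesses feasibility of the reduced system at $T^\ast$, so by minimality of $T_{\min}$ we get $T_{\min} \le T^\ast$, i.e., $\calS(\hat\mu, \calY; \alpha, \tau) \le \min_{\tilde\mu} \calS(\tilde\mu, \calY; \alpha, \tau) + 1$.

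The remaining task is the $\alpha/\sqrt{\covscale}$-robustness claim: for every $\tilde\mu$ with $\norm{\tilde\mu - \hat\mu}_2 \le \alpha/\sqrt{\covscale}$, the same $\calL_{\min}$ must serve as an $(\alpha, \tau, T_{\min})$ certificate for $(\calY, \tilde\mu)$. Only the closeness condition requires rechecking; for any fixed $u$ I would bound
\[
    \Iprod{u, \calL_{\min}\mu' - \tilde\mu}^2 \;=\; \Iprod{u, \hat\mu - \tilde\mu}^2 \;\le\; \norm{u}_2^2 \cdot \frac{\alpha^2}{\covscale}\,.
\]
Using the standing assumption $\Sigma \succeq \tfrac{1}{\covscale} I_d$ together with the spectral closeness $\calL_{\min}\Iprod{u,\Sigma' u} \ge \Omega(1) \cdot u^\top \Sigma u$ obtained by applying \cref{cor:non_zero_mean_cov_close_sos} at the pseudo-expectation level, the right-hand side is at most $\bigO{\alpha^2}\calL_{\min}\Iprod{u,\Sigma' u}$. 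The main subtlety is that invoking an SoS proof on a $\tau$-approximately satisfying linear operator introduces additive slack of order $\tau T_{\min}$; this is absorbed by the slack budget in \cref{def:cert_parameter_mean} under the choice $\tau = 1/\poly(n,d,R,\alpha^{-1},\covscale)$. This is the only genuinely delicate step, and it parallels exactly the treatment in \cref{lem:find-low-score-regression}.
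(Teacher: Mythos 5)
Your proposal is correct and follows essentially the same route as the paper: drop the closeness constraint, find the smallest feasible $T$ by the ellipsoid method, output $\calL_{\min}\mu'$, observe that closeness is trivially satisfied at $\hat\mu$ itself, and argue minimality by noting any true certificate also witnesses feasibility of the reduced system. Your treatment of the $\alpha/\sqrt{\covscale}$-neighborhood (via $\Sigma \succeq \tfrac 1 {\covscale} I_d$ and the spectral closeness of $\calL_{\min}\Sigma'$ to $\Sigma$) is in fact more explicit than the paper's one-line assertion of the same fact.
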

\begin{proof}
    Consider the constraint system $\calB_T$ which is equal to $\calB_T$ without the closeness constraint. Our algorithm is as follows: we check for $T = 1, \ldots n$ whether $\calB_T$ is satisfiable. This is computable in time $\poly(n, d, \log(R \sqrt{\covscale})$ by~\cref{lem:efficient-comp-mean}. Let $T_{\min}$ be the smallest $T$ for which this is true and let $\calL$ be the corresponding linear operator. We have that $\hat{\mu}$ approximately minimizing score is just $\calL \mu'$. Note that $\calL$ also satisfies $\calB_{T_{\min}}$ with $\tilde{\mu} = \hat{\mu}$ since the additional constraint is the closeness constraint which is trivially satisfied since $\tilde{\mu} = \calL \mu'$. Thus, $\hat{\mu}$ has score at most $T_{\min}$. Furthermore, note that this also holds for all $\tilde{\mu}$ such that $\norm{\Sigma^{-1/2} (\tilde{\mu} - \hat{\mu})}_2 \leq \alpha$ which includes all $\tilde{\mu}$ within distance $\alpha / \sqrt{\covscale}$.

    Furthermore, the true minimum score is at most $T_{\min} - 1$. Consider $\calL$ which certifies the minimum score, and note that $\calL$ also satisfies $\calB$ with the same value of $T$. Therefore the true minimum score must be greater than $T_{\min} - 1$ since by definition $T_{\min}$ was the smallest integer $T$ which was feasible for $\calB$.
\end{proof}

\subsubsection{Proof of Main Private Mean Estimation Theorem (\cref{thm:main_private_cov_mean_est})}

We now return to the proof of the main theorem. We have shown all the preconditions for the reduction hold and thus it suffices to bound the number of samples and note that a point of $2\eta$ score has good accuracy.

\begin{proof}
    Again, we let $r = \alpha / \sqrt{L}, \eta = \eta(\alpha)$ and $R$ be the bound that we are promised on $\mu$.
    Assume that $2 \eta \leq \eta^*$.
    Note that for the reduction to succeed it suffices to take 
    \[ n \ge \Omega\left(\max\limits_{\eta': \eta \le \eta' \le 1}\frac{\log(V_{\eta'}(\mathcal{Y})/V_{\eta}(\mathcal{Y})) + \log (1/(\beta \cdot \eta))}{\eps \cdot \eta'}\right)\,.\] 
    Let $\eta^* = 1/e$. We have that the total volume of parameters is at most $\left(2R\right)^d$ by our bound on the domain\footnote{Note that this is technically a ball of radius $R + \alpha \sqrt{L} + \tau T$ however when $R > \Omega(\alpha \sqrt{L}$ this radius is $O(R)$ and if $R$ is small we can simply select a uniformly random point to output as our parameter and this will satisfy the accuracy guarantee.} and by~\cref{cor:volume_considerations_mean} and the lower bound on the covariance we have that the volume of low scoring points is at least $(\eta/\sqrt{\covscale})^d$. Thus,
    \begin{align*}
        \max\limits_{\eta': \eta^* \le \eta' \le 1}\frac{\log(V_{\eta'}(\mathcal{Y})/V_{\eta}(\mathcal{Y})) + \log (1/(\beta \cdot \eta))}{\eps \cdot \eta'} &\leq  \frac{\log(R^d/(\eta/\sqrt{\covscale})^d) + \log (1/(\beta \cdot \eta))}{\eps} \\
        &\leq \frac{d \log(R \sqrt{\covscale} / \eta) + \log (1/(\beta \cdot \eta))}{\eps}\,. \\
    \end{align*}
    Furthermore, note that by invoking~\cref{cor:volume_considerations_mean} we have that
    \begin{align*}
        \max\limits_{\eta': \eta \le \eta' \le \eta^*}\frac{\log(V_{\eta'}(\mathcal{Y})/V_{\eta}(\mathcal{Y})) + \log (1/(\beta \cdot \eta))}{\eps \cdot \eta'} &\leq \bigO{\frac{\log \left( (1/\eta)^d \right) + \log (1/(\beta \cdot \eta))}{\eps \cdot \eta}} \\
        &\leq \bigO{\frac{d \log(1/ \eta) + \log (1/(\beta \cdot \eta))}{\eps \cdot \eta}}\,.
    \end{align*}
    Combining these two bounds we have that
    \[\max\limits_{\eta': \eta \le \eta' \le 1}\frac{\log(V_{\eta'}(\mathcal{Y})/V_{\eta}(\mathcal{Y})) + \log (1/(\beta \cdot \eta))}{\eps \cdot \eta'} \leq \frac{d \log (R \sqrt{\covscale})}{\eps} + \frac{d \log (1 /\eta)}{\eps \cdot \eta}  + \frac{\log (1/(\beta \cdot \eta))}{\eps \cdot \eta}\,.\]
    Furthermore, we have that the algorithm outputs a point with score at most $2 \eta n$. By~\cref{lem:closeness_constraint_soundness_mean} we have that this implies that the outputted point $\hat{\mu}$ satisfies
    \[ \norm{\Sigma^{-1/2} (\hat{
    \mu} - \mu)}_2 \leq \bigO{\alpha}\,.\]
\end{proof}

\section*{Acknowledgements}
The authors thank Samuel B. Hopkins and Shyam Narayanan for helpful discussions.
ST was supported by the European Union’s Horizon research and innovation programme (grant agreement no. 815464). AB is supported by the NSF TRIPODS program (award DMS-2022448). 

\clearpage

\bibliography{bib/scholar}
\bibliographystyle{alpha}

\newpage 
\appendix

\section{Robust Covariance Estimation}
\label{sec:cov-estimation}

\subsection{Estimation in (Relative) Spectral Norm}
\label{sec:cov-estimation-spectral}

In this section, we provide an efficient algorithm for robustly estimating the covariance of a Gaussian using $\tilde{\Omega}(\tfrac{d^2 + \log^2(1/\beta)}{\eta^2})$ samples.
Again, our results also holds for fourth-moment matching reasonable sub-gaussian distributions (cf.~\cref{def:moment_match_sub_gaussian}).
Note that throughout we will assume the mean is zero, which can be achieved without loss of generality by pairing samples and subtracting them to sample from a distribution with mean zero and covariance $2\Sigma$.

\begin{theorem}[Robust Covariance Estimation in (Relative) Spectral Norm]
\label{thm:cov-est-sample-opt}
Let $\cD$ be a mean-zero distribution with covariance $\Sigma$ such that $\Sigma^{-1/2} \cD$ is fourth-moment matching reasonable sub-gaussian.
Let $\eta$ be smaller than a sufficiently small constant and let $0 < \beta$.
Given $0 < \eta $, and $n \geq \tilde{\Omega}(\tfrac{d^2 + \log^2(1/\beta)}{\eta^2})$ points that are an $\eta$-corruption of $n$ i.i.d.\ samples from $\cD$.
Then, there exists an algorithm that runs in time $(n \cdot d)^{\mathcal{O}(1)}$ and with probability at least $1-\beta$, outputs $\hat{\Sigma}$ such that 
\begin{equation*}
    \norm{ \hat{\Sigma} - \Sigma }_{\textrm{op}} \leq \bigO{\eta  \log(1/\eta) } \,.
\end{equation*}
and 
\begin{equation*}
    \norm{ { \Sigma}^{-1/2} \cdot  \hat{\Sigma}  \cdot { \Sigma}^{-1/2} - I }_{\textrm{op}} \leq \bigO{\eta  \log(1/\eta) } \,.
\end{equation*}
\end{theorem}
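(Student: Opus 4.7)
\textbf{Proof Proposal for Theorem~\ref{thm:cov-est-sample-opt}.}
The plan is to adapt the sum-of-squares machinery developed for covariance-aware mean estimation in Section~\ref{sec:mean_estimation} to the mean-zero setting, and to output a rounded version of the ``SoS covariance'' indeterminate. Concretely, I would work with a simplified version of the constraint system $\calA_\eta$ from Section~\ref{sec:mean_estimation} where $\mu' = 0$ is hard-coded and $\Sigma' = \tfrac{1}{n}\sum_{i \in [n]} x_i' (x_i')^\top$, the constraints $w_i^2 = w_i$, $\sum_i w_i = (1-\eta)n$, $w_i(x_i' - x_i) = 0$, and the certifiable hypercontractivity constraint $\tfrac 1 n \sum_i \Iprod{x_i', v}^4 \leq (3 + \eta \log^2(1/\eta)) (v^\top \Sigma' v)^2$. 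Feasibility of this system is immediate by setting $w_i = r_i$ and $x_i' = x_i^*$: under $\eta$-goodness and $\eta$-higher-order-goodness (which hold with probability $1-\beta$ whenever $n \geq \tilde \Omega((d^2 + \log^2(1/\beta))/\eta^2)$ by~\cref{lem:higher-order-stability}), the empirical hypercontractivity bound holds with the stated constant. The algorithm then computes a degree-$O(1)$ pseudo-distribution $\zeta$ satisfying the system via~\cref{fact:eff-pseudo-distribution} and outputs $\hat \Sigma = \pE_\zeta[\Sigma']$.

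The main technical step is the SoS analog of the argument sketched around~\cref{eqn:covariance-estimation-sos-intro}: for every \emph{fixed} direction $u \in \R^d$,
\[
\calA_\eta \sststile{O(1)}{w, x'} \Set{ \Iprod{\Sigma' - \Sigma, uu^\top}^2 \leq \bigO{\eta^2 \log^2(1/\eta)} \cdot (u^\top \Sigma u)^2 }\,.
\]
This is precisely the lemma (\cref{lem:main_identity_cov_sos_proof}) referenced in Section~\ref{sec:robust_regression_oneshot}, and its proof parallels that of~\cref{lem:cov_mean_sos} with $\mu = \mu' = 0$, which kills all mean cross-terms. I would split the expansion as $\Iprod{\Sigma' - \Sigma, uu^\top} = \tfrac 1 n \sum_i r_i w_i (\Iprod{x_i^*, u}^2 - u^\top \Sigma u) + \tfrac 1 n \sum_i (1 - r_i w_i)(\Iprod{x_i', u}^2 - u^\top \Sigma u)$, and bound each term via SoS Cauchy--Schwarz combined with the Selector Lemma (\cref{lemma:subset_selection}). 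The crucial point is that the Selector Lemma applies because the quantities $a_i = (\Iprod{x_i^*, u}^2 - u^\top \Sigma u)^2$ are \emph{fixed scalars} (since $u$ is fixed), whose subset sums of size $\eta n$ are controlled by $O(\eta \log^2(1/\eta))$ uniformly, by the fourth-moment-in-Frobenius part of higher-order goodness (conclusion~3b of~\cref{lem:higher-order-stability}). The ``corrupted'' side is handled using the fourth-moment constraint on $x_i'$ (bounding $\tfrac 1 n \sum_i \Iprod{x_i', u}^4$ in terms of $(u^\top \Sigma' u)^2$) followed by rearrangement to move the residual $\Iprod{\Sigma' - \Sigma, uu^\top}^2$ term back to the left-hand side.

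Given the key SoS inequality, the conclusion follows by applying pseudo-expectation Cauchy--Schwarz (\cref{fact:pe_cs}): for every fixed $u$,
\[
\Iprod{\hat\Sigma - \Sigma, uu^\top}^2 = \bigl(\pE_\zeta \Iprod{\Sigma' - \Sigma, uu^\top}\bigr)^2 \leq \pE_\zeta \Iprod{\Sigma' - \Sigma, uu^\top}^2 \leq \bigO{\eta^2 \log^2(1/\eta)} (u^\top \Sigma u)^2\,.
\]
For the operator norm bound, I set $u$ to be a unit top eigenvector of $\hat\Sigma - \Sigma$ (this is a fixed vector once $\hat\Sigma$ has been produced, so the SoS-to-rounding step is valid). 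For the relative spectral norm bound, I substitute $u \leftarrow \Sigma^{-1/2} v$ with unit $v$, which gives $v^\top(\Sigma^{-1/2}\hat\Sigma \Sigma^{-1/2} - I) v \leq O(\eta \log(1/\eta))$ uniformly.

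The main obstacle I anticipate is avoiding the ``bootstrapping'' circularity in the key SoS lemma: the fourth-moment constraint bounds moments in terms of the indeterminate $\Sigma'$ rather than the target $\Sigma$, so controlling the corrupted portion of the sum naively produces a term of the form $\eta^2 (u^\top \Sigma' u)^2$ on the right-hand side. Resolving this requires decomposing $(u^\top \Sigma' u)^2 \leq 2(u^\top \Sigma u)^2 + 2\Iprod{\Sigma' - \Sigma, uu^\top}^2$ and using that $\eta^2 \log^2(1/\eta) < 1/2$ for sufficiently small $\eta$ to absorb the residual, exactly as in the closing step of the proof of~\cref{lem:cov_mean_sos}. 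The $d^2/\eta^2$ sample complexity is then driven entirely by the Frobenius-norm concentration requirement of~\cref{lem:higher-order-stability}, which is necessary to make the Selector Lemma applicable to the fourth-moment-type quantities.
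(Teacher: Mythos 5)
Your proposal matches the paper's proof essentially step for step: the same constraint system, the same feasibility argument via $\eta$-goodness and $\eta$-higher-order-goodness, the same key SoS lemma (\cref{lem:main_identity_cov_sos_proof}) proved by splitting into the $r_i w_i$ and $(1-r_iw_i)$ parts with the Selector Lemma on fixed scalars and the hypercontractivity constraint followed by absorption of the $\Iprod{\Sigma'-\Sigma,uu^\top}^2$ residual, and the same rounding via pseudo-expectation Cauchy--Schwarz with $u = \Sigma^{-1/2}v$ for the top singular direction. The approach is correct and is the paper's own.
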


We will use the following constraint system in vector-valued variables $x_1', \ldots, x_n', \mu'$, scalar-valued variables $w_i$, and matrix valued variables $\Sigma'$.

\begin{equation*}
\calA_{\eta}\colon
  \left \{
    \begin{aligned}
      &\forall i\in [n].
      & w_i^2
      & = w_i \\
      & &\sum_{i\in [n]} w_i &= (1-\eta)n \\
      &\forall i\in [n] & w_i(x'_i - x_i) &=0 \\
      &&\tfrac{1} n \sum_{i \in [n]} x_i' {x_i'}^{\top} &= \Sigma' \\
    &\exists \text{ the following SoS proof }  & \sststile{4}{v} \Biggl\{\frac{1}{n} \sum_{i \in [n]} \Iprod{x_i'  - \mu', v}^4  \leq & \Paren{3 + \eta \log^2(1/\eta)} \Paren{ v^\top \Sigma' v}^2 \Biggr\}
    \end{aligned}
  \right \}
\end{equation*}
We denote by $x_1, \ldots, x_n$ the input to the algorithm.
Further, let $r_i \in \Set{0,1}$ be the indicator that $x_i^* = x_i$.

\paragraph{Feasibility.} 
The feasibility of the above program follows by setting $w_i = r_i$ and $\eta$-goodness and $\eta$-higher-order-goodness of the true samples.
We will give a formal proof in \cref{sec:feasibility}. 

\begin{mdframed}
  \begin{algorithm}[Robust Covariance Estimation in (Relative) Spectral Norm]
    \label{algo:cov_estimation}\mbox{}
    \begin{description}
    \item[Input:] $\eta$-corrupted sample $x_1, \ldots, x_n$.
    \item[Output:] An estimate $\hat{\Sigma}$ of the covariance attaining the guarantees of \cref{thm:cov-est-sample-opt}.
    
    \item[Operations:]\mbox{}
    \begin{enumerate}
    \item Find a degree-$\bigO{1}$ pseudo-distribution $\zeta$ satisfying $\calA_{\eta}$.
    \item Output $ \hat{\Sigma} = \pE_{\zeta}[\Sigma']$.
    \end{enumerate}
    \end{description}
  \end{algorithm}
\end{mdframed}

The key lemma is the following.
\begin{lemma}[SoS Covariance to True Covariance]
\label{lem:main_identity_cov_sos_proof}
Suppose $\eta$ is a sufficiently small constant and $\Sigma^{-1/2} x_1^*, \ldots, \Sigma^{-1/2} x_n^*$ are $\eta$-good and $\eta$-higher-order-good. %
Then, for any fixed $u \in \mathbb{R}^d$,
\begin{align*}
    \calA_\eta \sststile{6}{w} \Set{ \Iprod{ \Sigma' - \Sigma,   uu^\top }^2  \leq \bigO{\eta^2 \log^2 (1/\eta)}  \Paren{ u^\top \Sigma u }^2 } \,.
\end{align*}
\end{lemma}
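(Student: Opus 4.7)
The plan is to treat this as the mean-zero specialization of \cref{lem:cov_mean_sos} and follow the same three-step template. Let $r_i \in \{0,1\}$ indicate that the $i$-th sample is uncorrupted; the constraints $w_i(x_i' - x_i) = 0$ together with $r_i x_i = r_i x_i^*$ and $w_i^2 = w_i$ give $r_i w_i \langle x_i', u \rangle^k = r_i w_i \langle x_i^*, u \rangle^k$ in SoS. Using $\Sigma' = \tfrac{1}{n}\sum_i x_i' (x_i')^\top$, I would split
\[
\Iprod{\Sigma' - \Sigma, uu^\top} = \underbrace{\tfrac{1}{n}\textstyle\sum_i r_i w_i \Paren{\Iprod{x_i^*, u}^2 - u^\top \Sigma u}}_{\text{Term A}} + \underbrace{\tfrac{1}{n}\textstyle\sum_i (1-r_i w_i) \Paren{\Iprod{x_i', u}^2 - u^\top \Sigma u}}_{\text{Term B}},
\]
and apply SoS triangle inequality to reduce the claim to bounding $\text{Term A}^2$ and $\text{Term B}^2$ separately.

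The summands of Term A are fixed scalars. Applying $\eta$-higher-order-goodness of $\Sigma^{-1/2} x_i^*$ to $P \propto uu^\top$ (together with a routine sign-split) and the SoS Selector Lemma (\cref{lemma:subset_selection}) to the weights $1 - r_i w_i$ (of total mass at most $2\eta n$) yields $\calA_\eta \sststile{2}{w} \Set{\text{Term A}^2 \leq \bigO{\eta^2 \log^2(1/\eta)}(u^\top \Sigma u)^2}$.

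For Term B, the SoS Cauchy--Schwarz inequality (\cref{fact:sos_cs}) together with $(1-r_i w_i)^2 = 1 - r_i w_i$ gives $\text{Term B}^2 \leq 2\eta (B_1 - B_2)$, where $B_1 := \tfrac{1}{n}\sum(\Iprod{x_i', u}^2 - u^\top \Sigma u)^2$ and $B_2 := \tfrac{1}{n}\sum r_i w_i (\Iprod{x_i^*, u}^2 - u^\top \Sigma u)^2$. Expanding $B_1$ and applying the hypercontractivity constraint (combining it with the SoS Cauchy--Schwarz bound $\Iprod{\mu', u}^4 \leq (u^\top \Sigma' u)^2$ for $\mu' = \tfrac 1 n \sum x_i'$ to transfer from the centered to the uncentered fourth moment, so that $\tfrac 1 n \sum \Iprod{x_i', u}^4 \leq \bigO{1}(u^\top \Sigma' u)^2$), while lower-bounding $B_2$ via $\eta$-higher-order-goodness (which certifies $\tfrac 1 n \sum (\Iprod{x_i^*, u}^2 - u^\top \Sigma u)^2 \geq (2 - \bigO{\eta \log^2(1/\eta)})(u^\top \Sigma u)^2$) plus the Selector Lemma on the weights $1 - r_i w_i$, and denoting $A := u^\top \Sigma u$ and $B := u^\top \Sigma' u$, we arrive at $B_1 - B_2 \leq 3B^2 - 2AB - A^2 + \bigO{\eta \log^2(1/\eta)}(A^2 + B^2)$. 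The identity $3B^2 - 2AB - A^2 = (B-A)(3B+A)$ factors the dominant contribution as $\Iprod{\Sigma' - \Sigma, uu^\top} \cdot (3B + A)$.

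The main obstacle is the bootstrapping: after multiplying by $2\eta$, the cross term $2\eta \cdot \Iprod{\Sigma' - \Sigma, uu^\top} \cdot (3B+A)$ contains the quantity we are trying to bound. Following the absorption trick at the end of \cref{lem:cov_mean_sos}, I would apply SoS AM--GM $2cxy \leq \rho^{-1} c^2 y^2 + \rho x^2$ with $x = \Iprod{\Sigma' - \Sigma, uu^\top}$, $y = 3B + A$, $c = \bigO{\eta}$, and a small absolute constant $\rho$; the residual $\bigO{\eta^2/\rho}(3B+A)^2$ is then controlled via $B^2 \leq 2A^2 + 2\Iprod{\Sigma' - \Sigma, uu^\top}^2$ (SoS triangle inequality). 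Summing the contributions from Terms A and B and choosing $\eta$ small enough that the total coefficient of $\Iprod{\Sigma' - \Sigma, uu^\top}^2$ on the right is bounded strictly below $1$, one rearranges to obtain the claimed degree-$6$ SoS inequality.
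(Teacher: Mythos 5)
Your proposal is correct and follows essentially the same route as the paper's proof: the same split into the $r_iw_i$ and $(1-r_iw_i)$ parts, the Selector Lemma plus $\eta$-(higher-order-)goodness for the clean part, and SoS Cauchy--Schwarz plus the hypercontractivity constraint and the resilience lower bound on $B_2$ for the corrupted part, followed by the same absorption of the residual $\Iprod{\Sigma'-\Sigma,uu^\top}^2$ terms into the left-hand side (your factorization $(B-A)(3B+A)$ is the paper's $2(B^2-A^2)+(B-A)^2$ in disguise). One caveat: the entire improvement from error $\sqrt{\eta}$ to $\eta\log(1/\eta)$ hinges on the fourth-moment upper bound having leading constant $3+\bigO{\eta\log^2(1/\eta)}$, so that it nearly cancels against the $(2-\bigO{\eta\log^2(1/\eta)})(u^\top\Sigma u)^2$ lower bound on $B_2$; your centered-to-uncentered ``transfer'' via $\Iprod{\mu',u}^4\le(u^\top\Sigma'u)^2$ only yields a generic $\bigO{1}$ constant, which would degrade the final bound to $\bigO{\eta}(u^\top\Sigma u)^2$. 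For the mean-zero covariance system the constraint should simply be read as the uncentered bound $\tfrac1n\sum_i\Iprod{x_i',u}^4\le(3+\eta\log^2(1/\eta))(u^\top\Sigma'u)^2$ (as the paper's own proof does), after which your displayed inequality $B_1-B_2\le 3B^2-2AB-A^2+\bigO{\eta\log^2(1/\eta)}(A^2+B^2)$ holds with the tight constant and the rest of your argument goes through.
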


Given this lemma, we can prove \cref{thm:cov-est-sample-opt}.
\begin{proof}[Proof of \cref{thm:cov-est-sample-opt}]
Note that by~\cref{fact:mean_and_cov_of_1-eps,fact:cov_small_subset} and~\cref{lem:higher-order-stability}, with probability at least $1-\beta$, $\Sigma^{-1/2} x_1^*, \ldots, \Sigma^{-1/2} x_n^*$ are $\eta$-good and $\eta$-higher-order-good.
Thus, by~\cref{sec:cov-feasibility} we have that with probability $1-\delta$ the program is feasible.
Next, observe since we computed a degree-$6$ pseudo-distribution $\mu$ consistent with $\calA_\eta$, it follows from Cauchy-Schwarz for pseudo-expectations (cf.~\cref{fact:pe_cs}) that 
\begin{equation*}
    \Iprod{ \pexpecf{\mu}{\Sigma'} - \Sigma, u u^\top }^2 \leq \pexpecf{\mu}{ \Iprod{ \Sigma' - \Sigma, vv^\top }^2 }  \leq \bigO{\eta^2 \log^2(1/\eta)} (u^\top \Sigma u)^2 
\end{equation*}
Setting $u = \Sigma^{-1/2} v$ such that $v$ is the top singular vector of $\Sigma^{-1/2} \pexpecf{\mu}{\Sigma'} \Sigma^{-1/2} - I$,  we can conclude that 
\begin{align*}
    \Norm{  \Sigma^{-1/2} \pexpecf{\mu}{\Sigma'} \Sigma^{-1/2} - I }_{\textrm{op}} &\leq \bigO{\eta \log(1/\eta)} \,.
\end{align*}
\end{proof}

We next prove~\cref{lem:main_identity_cov_sos_proof}.
\begin{proof}
Let $r_i$ be the indicators for the uncorrupted samples where $x_i^* = x_i$. Using SoS almost triangle inequality (\cref{fact:sos-almost-triangle}) we have that
\begin{equation}
\label{eqn:cov-spectral-expansion-1}
\begin{split}
    \calA_\eta  \sststile{}{} \Biggl\{ & \Iprod{ \Sigma' - \Sigma , uu^\top }^2  \\
    &= \Paren{  \frac{1}{n } \sum_{i \in [n]} (1 - r_i w_i + r_i w_i) \Paren{ \Iprod{ x_i',u }^2 -  u^\top \Sigma u }  }^2    \\ 
    & \leq 2 \underbrace{ \Paren{  \frac{1}{n } \sum_{i \in [n]}  r_i w_i \Paren{ \Iprod{ x_i' ,u }^2 -  u^\top \Sigma u }  }^2}_{\eqref{eqn:cov-spectral-expansion-1}.(1) }  + 2 \underbrace{ \Paren{ \frac{1}{n } \sum_{i \in [n]}  (1-r_i w_i) \Paren{ \Iprod{ x_i' ,u }^2 -  u^\top \Sigma u } }^2}_{\eqref{eqn:cov-spectral-expansion-1}.(2) }   \Biggr\}
\end{split}
\end{equation}
We now handle each term separately, starting with the first one. We have that
\begin{equation}
\label{eqn:expanding-the-first-spect-cov-term}
\begin{split}
    \calA_{\eta}\sststile{}{} \Biggl\{ & \Paren{  \frac{1}{n } \sum_{i \in [n]}  r_i w_i \Paren{ \Iprod{ x_i' ,u }^2 -  u^\top \Sigma u }  }^2  \\
    & = \Paren{  \frac{1}{n } \sum_{i \in [n]}  r_i w_i \Paren{ \Iprod{ x_i^*   ,u }^2 -  u^\top \Sigma u }  }^2  \\
    & \leq 2 \Paren{  \frac{1}{n } \sum_{i \in [n]}  r_i  \Paren{ \Iprod{ x_i^*   ,u }^2 -  u^\top \Sigma u }  }^2 + 2 \underbrace{ \Paren{  \frac{1}{n } \sum_{i \in [n]}  r_i (1-w_i) \Paren{ \Iprod{ x_i^* ,u }^2 -  u^\top \Sigma u }  }^2 }_{\eqref{eqn:expanding-the-first-spect-cov-term}.(1) }  \,.
\end{split}
\end{equation}
Note that the first term is just the covariance over a large subset of samples and does not contain any indeterminates. Thus, by the $\eta$-goodness of the samples (\cref{fact:mean_and_cov_of_1-eps}) we have that it is at most $\bigO{\eta^2 \log^2 (1/\eta)} \cdot \Paren{u^\top \Sigma u}^2$.
We now bound \eqref{eqn:expanding-the-first-spect-cov-term}.(1). Using the SoS Cauchy Schwarz inequality (\cref{fact:sos_cs}) we have that 
\begin{equation}
    \begin{split}
        \calA_{\eta}\sststile{}{} \Biggl\{ \Paren{  \frac{1}{n } \sum_{i \in [n]}  r_i (1-w_i) \Paren{ \Iprod{ x_i^*   ,u }^2 -  u^\top \Sigma u }  }^2 \leq \eta \cdot \frac{1}{n } \sum_{i \in [n]}  r_i (1-w_i) \Paren{ \Iprod{ x_i^*   ,u }^2 -  u^\top \Sigma u }^2  \Biggr\} 
    \end{split}
\end{equation}
Note that the terms $\Paren{ \Iprod{ x_i^*   ,u }^2 -  u^\top \Sigma u }^2$ are fixed scalars and thus we can apply~\cref{lemma:subset_selection}.
In combination with the $\eta$-higher-order-goodness of our samples and the fact that $\sum_{i \in [n]} r_i(1-w_i) \leq \eta n$, the above is at most $\bigO{\eta^2 \log^2(1/\eta)} (u^\top \Sigma u)^2$.

We now focus on bounding term \eqref{eqn:cov-spectral-expansion-1}.(2): 
\begin{equation}
\label{eqn:futher-cov-spectral-expansion}
\begin{split}
& \Paren{ \frac{1}{n } \sum_{i \in [n]}  (1-r_i w_i) \Paren{ \Iprod{ x_i'  ,u }^2 -  u^\top \Sigma u } }^2 \\
& \leq \eta \cdot   \Paren{ \frac{1}{n } \sum_{i \in [n]}  (1-r_i w_i) \Paren{ \Iprod{ x_i'  ,u }^2 -  u^\top \Sigma u   }^2  } \\
&  =   \eta \cdot   \Biggl(  \underbrace{ \frac{1}{n } \sum_{i \in [n]}   \Paren{ \Iprod{ x_i'  ,u }^2 -  u^\top \Sigma u   }^2 }_{\eqref{eqn:futher-cov-spectral-expansion}.(1)}   - \underbrace{ \frac{1}{n } \sum_{i \in [n]}   r_i w_i \Paren{ \Iprod{ x_i'  ,u }^2 -  u^\top \Sigma u   }^2}_{\eqref{eqn:futher-cov-spectral-expansion}.(2)}    \Biggr)
\end{split}
\end{equation}
We bound Term \eqref{eqn:futher-cov-spectral-expansion}.1 as follows: 
\begin{equation}
\begin{split}
\calA_\eta \sststile{}{} \Biggl\{ & \frac{1}{n } \sum_{i \in [n]}   \Paren{ \Iprod{ x_i'  ,u }^2 -  u^\top \Sigma u   }^2 \\
& = \frac{1}{n } \sum_{i \in [n]}   \Paren{ \Iprod{ x_i'  ,u }^2 -  u^\top \Sigma' u  + (u^\top \Sigma' u - u^\top \Sigma u ) }^2 \\
& = \frac{1}{n } \sum_{i \in [n]}   \Paren{ \Iprod{ x_i'  ,u }^2 -  u^\top \Sigma' u }^2   +\Paren{ (u^\top \Sigma' u - u^\top \Sigma u ) }^2  \\
&+  \frac{2}{n} \sum_{i \in [n]} \Paren{ \Iprod{ x_i'  ,u }^2 -  u^\top \Sigma' u } \Paren{ (u^\top \Sigma' u - u^\top \Sigma u )  }  \\
& = \frac{1}{n } \sum_{i \in [n]}   \Paren{ \Iprod{ x_i'  ,u }^2 -  u^\top \Sigma' u }^2  +\Paren{ (u^\top \Sigma' u - u^\top \Sigma u ) }^2   \\
& \leq \Paren{ 2 + \eta \log^2(1/\eta) } \Paren{ u^\top \Sigma' u}^2  + \Paren{ (u^\top \Sigma' u - u^\top \Sigma u ) }^2\Biggr\}
\end{split}
\end{equation}\
Note that the first term is equivalent to $\frac{1}{n} \sum_{i\in[n]} \langle x_i', u\rangle^4 - (u^\top \Sigma' u)^2$ and thus by the last constraint we have that it is bounded by $(2 + \eta \log^2(1/\eta)) (u^\top \Sigma' u)^2$. 

We continue with term \eqref{eqn:futher-cov-spectral-expansion}.(2).
Note that we have that
\begin{equation}
\label{eqn:even-more-cov-expansion}
    \begin{split}
    \calA_\eta \sststile{}{} \Biggl\{ &\frac{1}{n } \sum_{i \in [n]}   r_i w_i \Paren{ \Iprod{ x_i'  ,u }^2 -  u^\top \Sigma u   }^2 = \frac{1}{n } \sum_{i \in [n]}   r_i w_i \Paren{ \Iprod{ x_i^*  ,u }^2 -  u^\top \Sigma u   }^2 \\
        &= \frac{1}{n } \sum_{i \in [n]}  \Paren{ \Iprod{ x_i^*  ,u }^2 -  u^\top \Sigma u   }^2 - \frac{1}{n } \sum_{i \in [n]}  (1-r_iw_i) \Paren{ \Iprod{ x_i  ,u }^2 -  u^\top \Sigma u   }^2 \\
        &\geq (2- \eta \log^2(1/\eta)) (u^\top \Sigma u)^2 - \frac{1}{n } \sum_{i \in [n]}  (1-r_iw_i) \Paren{ \Iprod{ x_i^*  ,u }^2 -  u^\top \Sigma u   }^2 \Biggr\} \,,
    \end{split}    
\end{equation}
where in the last inequality we applied the $\eta$-higher-order goodness of the $\Sigma^{-1/2}x_i^*$.
Further, again by $\eta$-higher-order-goodness and the SoS selector lemma (cf.~\cref{lemma:subset_selection}) it holds that
\[ \calA_\eta \sststile{}{} \Biggl \{ \frac{1}{n } \sum_{i \in [n]}   (1-r_i w_i) \Paren{ \Iprod{ x_i^*  ,u }^2 -  u^\top \Sigma u   }^2 \leq \bigO{\eta \log^2(1/\eta)} (u^\top \Sigma u)^2\Biggr\}\]
Combining these bounds we conclude that
\begin{equation}
    \begin{split}
        \calA_\eta \sststile{}{} \Biggl\{ &\eta \cdot   \Biggl( \frac{1}{n } \sum_{i \in [n]}   \Paren{ \Iprod{ x_i'  ,u }^2 -  u^\top \Sigma u   }^2    -  \frac{1}{n } \sum_{i \in [n]}   r_i w_i \Paren{ \Iprod{ x_i'  ,u }^2 -  u^\top \Sigma u   }^2 \\
        &\leq \bigO{\eta} \cdot \left(u^\top \Sigma' u - u^\top \Sigma u\right)^2 \\ &+ \eta \left( (2 + \eta \log^2(1/\eta) ) (u^\top \Sigma' u)^2 - (2  - \eta \log^2(1/\eta))(u^\top \Sigma' u)^2 \right) \\
        &\leq \bigO{\eta^2 \log^2 (1/\eta)} (u^\top \Sigma' u)^2 + \bigO{\eta} \cdot \left(u^\top \Sigma' u - u^\top \Sigma u\right)^2 \Biggr\}\,.
    \end{split}
\end{equation}
Finally we observe that if we combine the bounds on Terms \eqref{eqn:cov-spectral-expansion-1}.1 and \eqref{eqn:cov-spectral-expansion-1}.2 and move the $\bigO{\eta} \cdot \left(u^\top \Sigma' u - u^\top \Sigma u\right)^2$ terms over to the LHS and renormalize we get the desired SoS inequality.
\end{proof}

\subsection{Covariance Estimation in Relative Frobenius Norm}
\label{sec:covariance_rel_frob_norm}

In this section, we will prove the following theorem.
\begin{theorem}
    \label{thm:covariance_rel_frob_norm}
    Let $\cD$ be a mean-zero distribution with covariance $\Sigma$ such that $\Sigma^{-1/2} \cD$ is fourth-moment matching reasonable sub-gaussian.
    Let $\eta$ be smaller than a sufficiently small constant and let $\psi = O(\eta \log(1/\eta))$ and $\Sigma$ be such that $(1-\psi) I_d \preceq \Sigma \preceq (1+\psi) I_d$.
    Let $\beta > 0$.
    Given $0 < \eta$, and $n \geq \tilde{\Omega}(\tfrac{d^2 + \log^2(1/\beta)}{\eta^2})$ points that are an $\eta$-corruption of $n$ i.i.d.\ samples from $\cD$, there exists an algorithm that runs in time $(n \cdot d)^{\mathcal{O}(1)}$ and with probability at least $1-\beta$, outputs $\hat{\Sigma}$ such that
    \[
        \Norm{(\Sigma)^{-1/2} \hat{\Sigma} (\Sigma)^{-1/2} - I_d}_F \leq O(\eta \log(1/\eta)) \,.
    \]
\end{theorem}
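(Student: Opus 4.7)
}
The plan is to reuse the SoS algorithm of \cref{algo:cov_estimation} (outputting $\hat{\Sigma} = \pE_\zeta[\Sigma']$ for a degree-$\bigO{1}$ pseudo-distribution satisfying $\calA_\eta$) but sharpen the analysis of \cref{lem:main_identity_cov_sos_proof} so that it certifies closeness of $\Sigma'$ and $\Sigma$ against \emph{every} Frobenius-unit matrix test, not only rank-one ones. Since the theorem assumes $(1-\psi)I_d \preceq \Sigma \preceq (1+\psi)I_d$ with $\psi = O(\eta \log(1/\eta))$, the Frobenius bound on $\Sigma^{-1/2}\hat{\Sigma}\Sigma^{-1/2} - I_d$ is equivalent (up to constants) to a Frobenius bound on $\hat{\Sigma} - \Sigma$, since $\|\Sigma^{-1/2} A \Sigma^{-1/2}\|_F \leq \|\Sigma^{-1/2}\|_{\mathrm{op}}^2 \|A\|_F \leq (1 + O(\psi))\|A\|_F$.

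The central technical step will be to prove the following matrix-valued analogue of \cref{lem:main_identity_cov_sos_proof}: for every fixed (non-indeterminate) symmetric matrix $P$ with $\|P\|_F \leq 1$,
\[
    \calA_\eta \sststile{O(1)}{w, x'} \Set{ \Iprod{\Sigma' - \Sigma, P}^2 \leq O(\eta^2 \log^2(1/\eta)) } \,.
\]
Assuming this, the theorem follows exactly as in the spectral case: set $P = (\hat{\Sigma} - \Sigma)/\|\hat{\Sigma} - \Sigma\|_F$ (a fixed matrix once $\hat{\Sigma}$ is computed) and apply Cauchy--Schwarz for pseudo-expectations (\cref{fact:pe_cs}) to get
\[
    \|\hat{\Sigma} - \Sigma\|_F = \pE_\zeta \Iprod{\Sigma' - \Sigma, P} \leq \sqrt{\pE_\zeta \Iprod{\Sigma' - \Sigma, P}^2} \leq O(\eta \log(1/\eta)) \,.
\]

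To establish the SoS inequality, I would follow the template of the proof of \cref{lem:main_identity_cov_sos_proof}, splitting
\[
    \Iprod{\Sigma' - \Sigma, P} = \tfrac{1}{n}\sum_i r_i w_i \Iprod{x_i'(x_i')^\top - \Sigma, P} + \tfrac{1}{n}\sum_i (1 - r_i w_i) \Iprod{x_i'(x_i')^\top - \Sigma, P} \,.
\]
For the first sum, I would use $r_i w_i x_i' = r_i w_i x_i^*$ and then invoke the SoS Selector Lemma (\cref{lemma:subset_selection}) with scalars $a_i = \Iprod{x_i^*(x_i^*)^\top - I_d, P}$ together with parts~1 and~3(a) of \cref{lem:higher-order-stability}, which give uniform-in-$P$ concentration at the level $O(\eta \log(1/\eta))$ both on the whole sample and on every subset of size $\leq \eta n$. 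The near-isotropy of $\Sigma$ converts the identity-reference bounds into $\Sigma$-reference bounds at the same rate. For the second sum, SoS Cauchy--Schwarz reduces the square to $\eta \cdot \tfrac{1}{n}\sum_i (1-r_i w_i) \Iprod{x_i'(x_i')^\top - \Sigma, P}^2$. Splitting again by $r_i w_i$ and $(1-r_iw_i)$, the uncorrupted contribution is controlled by part~3(b) of \cref{lem:higher-order-stability} (which is precisely the Frobenius-matrix analogue of $\eta$-higher-order goodness needed here), and the remaining SoS term is $\tfrac{1}{n}\sum_i \Iprod{x_i'(x_i')^\top, P}^2$.

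The main obstacle, and the place where this proof genuinely departs from the rank-one argument in \cref{sec:cov-estimation-spectral}, is controlling this last fourth-moment-with-matrix-test quantity inside the SoS proof system: the constraint system $\calA_\eta$ only contains the rank-one hypercontractivity $\tfrac{1}{n}\sum_i \Iprod{x_i',v}^4 \leq (3+\eta\log^2(1/\eta))(v^\top \Sigma' v)^2$, and naively diagonalizing $P = \sum_k \lambda_k u_k u_k^\top$ and summing rank-one bounds loses an operator-norm or dimension factor rather than a Frobenius one. The cleanest fix is to augment $\calA_\eta$ with a matrix-valued hypercontractivity constraint of the form
\[
    \forall\, P \text{ symm., } \|P\|_F \leq 1 : \tfrac{1}{n}\sum_i \Iprod{x_i'(x_i')^\top - \Sigma', P}^2 \leq C \,,
\]
whose feasibility on the uncorrupted samples is guaranteed by part~2 of \cref{lem:higher-order-stability}; this constraint can be encoded with a constant number of auxiliary variables (as in~\cite{fleming2019semialgebraic}) and the remaining calculations then proceed exactly as in the rank-one analysis, with every appearance of $(u^\top \Sigma u)^2$ replaced by $\|P\|_F^2$. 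The sample complexity $\tilde{\Omega}((d^2 + \log^2(1/\beta))/\eta^2)$ is precisely what is needed to invoke \cref{lem:higher-order-stability}.
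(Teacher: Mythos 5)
Your proposal is essentially the paper's own proof: the constraint system the paper uses for the Frobenius-norm guarantee (\cref{sec:covariance_rel_frob_norm}) replaces the rank-one hypercontractivity constraint with exactly the matrix-valued second-moment constraint you propose to add, its feasibility on the uncorrupted samples is established from part 2 of \cref{lem:higher-order-stability}, and the decomposition into $r_iw_i$ versus $(1-r_iw_i)$ contributions followed by Cauchy--Schwarz/H\"older for pseudo-expectations against the fixed test matrix $P = (\hat{\Sigma}-\Sigma)/\|\hat{\Sigma}-\Sigma\|_F$ is the same. The one point where your write-up is too loose: the constant in the new constraint must be $2 + O(\eta\log^2(1/\eta))$ (as the paper writes it), not a generic $C$, because the argument hinges on this upper bound on $\tfrac{1}{n}\sum_i \Iprod{x_i'(x_i')^\top - \Sigma', \tilde{P}}^2$ cancelling against the matching lower bound $2 - O(\eta\log^2(1/\eta))$ on the $r_iw_i$-indicated portion of the true samples; with a loose constant the bad-set term only yields $O(\eta)$ and the final rate degrades to $O(\sqrt{\eta})$.
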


We denote the input to the algorithm by $x_1, \ldots, x_n$.
We consider the following SoS program in variables $x_i', w_i, \Sigma'$.
The $x_i'$ are $d$-dimensional vector-valued variables and $\Sigma'$ is a $d\times d$-dimensional matrix-valued variables.
Note that in the SoS proof we search for in the last constraint, $P$ is a $d \times d$ matrix-valued variable.
\begin{equation*}
\calA \colon
  \left \{
    \begin{aligned}
      &\forall i\in [n].
      & w_i^2
      & = w_i \\
      & &\sum_{i\in [n]} w_i &= (1-\eta)n \\
      &\forall i\in [n] & w_i(x'_i - x_i) &=0 \\
      && \Sigma' &= \tfrac{1} n \sum_{i \in [n]} x_i' {x_i'}^{\top} \\
      &\exists \text{ the following SoS proof }  & \sststile{4}{P} \Biggl\{\frac{1}{n} \sum_{i \in [n]} \Iprod{P, x_i'(x_i')^\top  - \Sigma'}^2  \leq & \Paren{2 + \bigO{\eta \log^2(1/\eta)}} \Norm{P}_F^2 \Biggr\}
    \end{aligned}
  \right \}
\end{equation*}
Further, let $r_i \in \Set{0,1}$ be the indicator that $x_i^* = x_i$.
Note that $\sum_{i=1}^n (1-r_i) \leq \eta n$ and $\calA \sststile{O(1)}{} \Set{r_i w_i x_i' = r_i w_i x_i^*}$.
Further, by triangle inequality it also holds that $\calA \sststile{O(1)}{} \Set{\sum_{i=1}^n (1-r_i w_i) \leq 2 \eta n}$.
\paragraph{Feasibility.} 
The feasibility of the above program follows by setting $w_i = r_i$ and $\eta$-goodness and $\eta$-higher-order-goodness of the true samples.
We will give a formal proof in \cref{sec:feasibility}. 

\begin{mdframed}
  \begin{algorithm}[Robust Covariance Estimation in Frobenius Norm]
    \label{algo:cov_estimation}\mbox{}
    \begin{description}
    \item[Input:] $\eta$-corrupted sample $x_1, \ldots, x_n$, corruption level $\eta$.
    \item[Output:] An estimate $\hat{\Sigma}$ of the covariance attaining the guarantees of \cref{thm:covariance_rel_frob_norm}.
    
    \item[Operations:]\mbox{}
    \begin{enumerate}
    \item Find a degree-$\bigO{1}$ pseudo-distribution $\zeta$ satisfying $\calA_{\eta}$.
    \item Output $ \hat{\Sigma} = \pE_{\zeta}[\Sigma']$.
    \end{enumerate}
    \end{description}
  \end{algorithm}
\end{mdframed}

The following is the key lemma.
\begin{lemma}
    \label{thm:covariance_rel_frob_norm_sos}
    Suppose $\eta$ is smaller than a sufficiently small constant and assume that $\Sigma^{-1/2} x_1^*, \ldots, \Sigma^{-1/2} x_n^*$ are $\eta$-good and $\eta$-higher-order-good.
    Then, with probability at least $1-\beta$, for all $P \in \R^{d\times d}$ such that $\norm{P}_F = 1$, the following SoS proof exists
    \[
        \calA \sststile{O(1)}{} \Set{\Iprod{P, (\Sigma)^{-1/2} \Sigma' (\Sigma)^{-1/2} - I_d}^2 \leq O(\eta^2 \log^2(1/\eta)) }\,.
    \]
\end{lemma}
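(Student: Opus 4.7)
}

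Fix a matrix $P \in \R^{d \times d}$ with $\norm{P}_F = 1$, and let $\tilde{P} = \Sigma^{-1/2} P \Sigma^{-1/2}$. Since $(1-\psi) I_d \preceq \Sigma \preceq (1+\psi) I_d$ with $\psi = O(\eta \log(1/\eta))$, we have $\norm{\tilde{P}}_F \leq (1+O(\psi)) \norm{P}_F = O(1)$, and moreover
\[
    \Iprod{P, \Sigma^{-1/2} \Sigma' \Sigma^{-1/2} - I_d} = \Iprod{\tilde P, \Sigma' - \Sigma} \,.
\]
Thus the plan is to show $\calA \sststile{O(1)}{} \Set{\Iprod{\tilde P, \Sigma' - \Sigma}^2 \leq O(\eta^2 \log^2(1/\eta))}$ for any fixed matrix $\tilde P$ with $\norm{\tilde P}_F = O(1)$, treating $\tilde P$ as a constant (so the SoS proof is only in the indeterminates $w, x', \Sigma'$).

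Using that $r_i w_i x_i' = r_i w_i x_i^*$, split
\[
    \Iprod{\tilde P, \Sigma' - \Sigma} = \underbrace{\tfrac 1 n \sum_i r_i w_i \Iprod{\tilde P, x_i^* (x_i^*)^\top - \Sigma}}_{\text{Term A}} + \underbrace{\tfrac 1 n \sum_i (1-r_i w_i) \Iprod{\tilde P, x_i'(x_i')^\top - \Sigma}}_{\text{Term B}} \,.
\]
Noting that $\Iprod{\tilde P, x_i^* (x_i^*)^\top - \Sigma} = \Iprod{P, (\Sigma^{-1/2} x_i^*)(\Sigma^{-1/2} x_i^*)^\top - I_d}$, I will bound Term A by writing $r_i w_i = 1 - (1-r_i w_i)$ and applying item 1 of~\cref{lem:higher-order-stability} to the full average (which gives $O(\eta \log(1/\eta))$) together with the SoS Selector Lemma (\cref{lemma:subset_selection}) combined with item 3(a) applied to the $(1-r_i w_i)$-weighted sum (using that $\calA \sststile{}{} \Set{\sum_i (1 - r_i w_i) \leq 2 \eta n}$ and $0 \leq 1-r_i w_i \leq 1$). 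This yields $\Set{\text{Term A}^2 \leq O(\eta^2 \log^2(1/\eta))}$.

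For Term B, I will first apply SoS Cauchy--Schwarz (\cref{fact:sos_cs}) to get
\[
    \Set{\text{Term B}^2 \leq 2\eta \cdot \tfrac 1 n \sum_i (1-r_iw_i) \Iprod{\tilde P, x_i'(x_i')^\top - \Sigma}^2} \,.
\]
To bound the second-moment sum, write it as (full average) minus ($r_i w_i$-weighted average). For the full average, decompose $x_i'(x_i')^\top - \Sigma = (x_i'(x_i')^\top - \Sigma') + (\Sigma' - \Sigma)$, expand the square, and use that the cross-term vanishes because $\tfrac{1}{n}\sum_i x_i'(x_i')^\top = \Sigma'$. The constraint in $\calA$ about the existence of an SoS proof for hypercontractivity of fourth moments bounds the first piece by $(2 + O(\eta \log^2(1/\eta))) \norm{\tilde P}_F^2$, while the second piece is exactly $\Iprod{\tilde P, \Sigma' - \Sigma}^2$, which is the quantity I am ultimately trying to bound (self-referencing). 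For the $r_i w_i$-weighted average, rewrite it on uncorrupted samples and apply item 2 of~\cref{lem:higher-order-stability} (the moment-matching identity equal to $2 \norm{\tilde P}_F^2 \pm O(\eta \log^2(1/\eta))$) together with item 3(b) and the SoS Selector Lemma for the $(1-r_i w_i)$-weighted remainder. The two leading ``$2\norm{\tilde P}_F^2$'' terms cancel, leaving $O(\eta \log^2(1/\eta)) + \Iprod{\tilde P, \Sigma' - \Sigma}^2$. Multiplying by $2\eta$ gives $\Set{\text{Term B}^2 \leq O(\eta^2 \log^2(1/\eta)) + 2\eta \cdot \Iprod{\tilde P, \Sigma' - \Sigma}^2}$.

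Combining the two terms via SoS almost-triangle inequality, $\calA$ derives at constant degree that
\[
    \Iprod{\tilde P, \Sigma' - \Sigma}^2 \leq O(\eta^2 \log^2(1/\eta)) + 4 \eta \cdot \Iprod{\tilde P, \Sigma' - \Sigma}^2 \,.
\]
Since $4\eta < 1/2$, I rearrange to absorb the self-referencing term to the left-hand side and conclude the desired bound. The main subtlety is the cancellation in the $r_i w_i$-weighted second-moment sum, where the constant $2$ arising from the fourth-moment matching assumption (in the definition of reasonable sub-Gaussian) must match precisely with the constant $2$ from the fourth-moment condition baked into $\calA$; without this cancellation one would be left with a $\Theta(\eta)$ additive error rather than $\Theta(\eta \log^2(1/\eta))$, leading to the wrong rate. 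Apart from this, the proof is essentially a careful accounting of SoS Cauchy--Schwarz, subset-selection, and absorption; all other steps are routine.
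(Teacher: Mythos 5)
Your overall structure mirrors the paper's, but you decompose $\Iprod{\tilde P, \Sigma'-\Sigma}$ directly into its $r_iw_i$-weighted part on $x_i^*$ (Term~A) and $(1-r_iw_i)$-weighted part on $x_i'$ (Term~B), whereas the paper first inserts the empirical covariance $\bar\Sigma = \tfrac 1 n\sum x_i^*(x_i^*)^\top$ and applies the SoS triangle inequality to split into $\Iprod{\tilde P, \Sigma'-\bar\Sigma}$ and $\Iprod{\tilde P, \bar\Sigma-\Sigma}$. This is a real (if minor) difference: the paper's second piece is indeterminate-free, so it is handled directly by Point~1 of \cref{lem:higher-order-stability} with no Selector Lemma and no SoS Cauchy--Schwarz; your Term~A still carries the indeterminates $r_iw_i$ and so requires more machinery. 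The Term~B argument and the self-referencing/absorption step are essentially identical to the paper's, where the paper packages the second-moment bounds into the separate helper \cref{lem:cov_rel_helper}.

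There is, however, a genuine gap in your Term~A step. You write $r_iw_i = 1 - (1-r_iw_i)$ and propose to apply \cref{lemma:subset_selection} with Point~3(a) of \cref{lem:higher-order-stability} directly to the \emph{linear} quantity $\tfrac 1 n\sum_i (1-r_iw_i)\,a_i$ with $a_i = \Iprod{\tilde P, x_i^*(x_i^*)^\top - \Sigma}$, $0\le 1-r_iw_i\le 1$, $\sum_i(1-r_iw_i)\le 2\eta n$. But the Selector Lemma under the relaxed constraint $\sum_i z_i \le k$ only holds when the $a_i$ are \emph{non-negative}; for signed $a_i$ it requires $\sum_i z_i = k$ exactly, which is not implied by $\calA$ (the constraint fixes $\sum_i(1-w_i) = \eta n$, but $\sum_i(1-r_iw_i)$ floats between $\eta n$ and $2\eta n$ depending on which uncorrupted samples the $w_i$ select). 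The remedy is the same move you already make for Term~B and the paper makes throughout: apply SoS Cauchy--Schwarz (or \cref{fact:sos-holder}) first, $\bigl(\tfrac 1 n\sum(1-r_iw_i)a_i\bigr)^2 \le \bigl(\tfrac 1 n\sum(1-r_iw_i)\bigr)\bigl(\tfrac 1 n\sum(1-r_iw_i)a_i^2\bigr) \le 2\eta\cdot\tfrac 1 n\sum(1-r_iw_i)a_i^2$, and only then apply the Selector Lemma with Point~3(b) to the inner sum of non-negative squares. With this modification Term~A is bounded by $O(\eta^2\log^2(1/\eta))$ and the rest of your argument goes through.
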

Given this lemma, we can deduce~\cref{thm:covariance_rel_frob_norm}.
\begin{proof}[Proof of~\cref{thm:covariance_rel_frob_norm}]
    Note that by~\cref{fact:mean_and_cov_of_1-eps,fact:cov_small_subset} and~\cref{lem:higher-order-stability}, with probability at least $1-\beta$, $\Sigma^{-1/2} x_1^*, \ldots, \Sigma^{-1/2} x_n^*$ are $\eta$-good and $\eta$-higher-order-good.
    Thus, by~\cref{sec:cov-feasibility} we have that with probability $1-\delta$ the program is feasible.
    Let $\zeta$ be a degree-$O(1)$ pseudo-expectation satisfying $\calA$ and let $\hat{\Sigma} = \pE_\zeta \Sigma'$.
    Let
    \[
        U = \frac{(\Sigma)^{-1/2} \hat{\Sigma}(\Sigma)^{-1/2} - I_d}{\norm{(\Sigma)^{-1/2} \hat{\Sigma}(\Sigma)^{-1/2} - I_d}_F} \,.
    \]
    Then by H\"older's inequality for pseudo-expectations (cf.~\cref{fact:pseudo-expectation-holder}) and~\cref{thm:covariance_rel_frob_norm_sos}, it holds that
    \begin{align*}
        \Norm{(\Sigma)^{-1/2} \hat{\Sigma}(\Sigma)^{-1/2} - I_d}_F^2 &= \Iprod{U, (\Sigma)^{-1/2} \hat{\Sigma}(\Sigma)^{-1/2} - I_d}^2 = \Paren{\pE_\zeta \Iprod{U, (\Sigma)^{-1/2} \Sigma' (\Sigma)^{-1/2} - I_d}}^2 \\
        &\leq \pE_\zeta \Iprod{U, (\Sigma)^{-1/2} \Sigma'(\Sigma)^{-1/2} - I_d}^2 \leq O(\eta^2 \log^2(1/\eta)) \,.
    \end{align*}
    The claim now follows by taking square roots.
\end{proof}

The following lemma captures the essence of the proof of~\cref{thm:covariance_rel_frob_norm_sos}.
It shows that our constraints allow us to derive (an approximate version of) the second-order stability conditions for both the true samples as well as the "SoS" samples.
\begin{lemma}
    \label{lem:cov_rel_helper}
    Suppose $\eta$ is smaller than a sufficiently small constant and assume that $\Sigma^{-1/2} x_1^*, \ldots, \Sigma^{-1/2} x_n^*$ are $\eta$-good and $\eta$-higher-order-good.
    For a matrix $M \in \R^{d \times d}$, let $\tilde{M} = (\Sigma)^{-1/2} M (\Sigma)^{-1/2}$.
    With probability at least $1-\beta$ for all $P \in \R^{d \times d}$ such that $\norm{P}_F = 1$, then the following two SoS proof exists
    \begin{align*}
        \calA &\sststile{O(1)}{} \tfrac 1 n \sum_{i=1}^n (1-r_i w_i) \Iprod{\tilde{P}, x_i^* (x_i^*)^\top - \Sigma}^2 \leq O(\eta \log^2(1/\eta)) \,, \\
        \calA &\sststile{O(1)}{} \tfrac 1 n \sum_{i=1}^n (1-r_i w_i) \Iprod{\tilde{P}, x_i' (x_i')^\top - \Sigma'}^2 \leq O(\eta \log^2(1/\eta)) + (2 + O(\eta)) \Iprod{\tilde{P}, \Sigma' - \Sigma}^2 \,.
    \end{align*}
\end{lemma}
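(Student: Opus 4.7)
The first inequality is essentially an application of the SoS Selector Lemma (\cref{lemma:subset_selection}). With $z_i := \Sigma^{-1/2} x_i^*$, the quantities $a_i := \Iprod{\tilde{P}, x_i^*(x_i^*)^\top - \Sigma}^2 = \Iprod{P, z_i z_i^\top - I_d}^2$ are fixed non-negative scalars. The constraints $w_i^2 = w_i$, $\sum_i w_i = (1-\eta)n$, together with the fact that at most $\eta n$ samples are corrupted, imply at constant degree that $1 - r_i w_i$ is $\{0,1\}$-valued and $\sum_i (1 - r_i w_i) \leq 2\eta n$ (using $r_i w_i \geq w_i - (1-r_i)$). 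By conclusion~3(b) of $\eta$-higher-order-goodness applied to the $z_i$ with $\norm{P}_F = 1$, every subset $\calT$ of size at most $2\eta n$ satisfies $\tfrac{1}{n} \sum_{i \in \calT} a_i \leq O(\eta \log^2(1/\eta))$. The non-negative version of \cref{lemma:subset_selection} then immediately yields the claim.

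For the second inequality, I introduce the shorthand $B_i := \Iprod{\tilde P, x_i'(x_i')^\top - \Sigma'}$, $b_i := \Iprod{\tilde P, x_i^*(x_i^*)^\top - \Sigma}$ (a fixed scalar), $c := \Iprod{\tilde P, \Sigma' - \Sigma}$, and $e := \tfrac{1}{n} \sum_i r_i w_i b_i$. The axiom $w_i(x_i' - x_i) = 0$ combined with $r_i(x_i - x_i^*) = 0$ lifts at constant degree to $r_i w_i x_i'(x_i')^\top = r_i w_i x_i^*(x_i^*)^\top$, so $r_i w_i B_i^2 = r_i w_i (b_i - c)^2$ in SoS. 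Combined with $\tfrac{1}{n} \sum_i B_i = 0$ (directly from $\Sigma' = \tfrac{1}{n} \sum_i x_i'(x_i')^\top$), a direct expansion yields the clean identity
\[
  \tfrac{1}{n} \sum_i (1 - r_i w_i) B_i^2 \;=\; \tfrac{1}{n} \sum_i B_i^2 \;-\; \tfrac{1}{n} \sum_i r_i w_i b_i^2 \;-\; c^2 \;+\; 2 c e \;+\; \eta_w c^2,
\]
where $\eta_w := \tfrac{1}{n} \sum_i (1 - r_i w_i) \leq 2\eta$. The fourth-moment axiom of $\calA$, applied to the fixed matrix $\tilde P$ (whose Frobenius norm satisfies $\norm{\tilde P}_F^2 = 1 \pm O(\eta \log(1/\eta))$ because $\Sigma$ is spectrally close to $I$), bounds the first term by $2 + O(\eta \log^2(1/\eta))$. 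Conclusion~2 of goodness yields $\tfrac{1}{n} \sum_i b_i^2 \geq 2 - O(\eta \log^2(1/\eta))$, and conclusion~3(b) plus the Selector Lemma controls $\tfrac{1}{n} \sum_i (1 - r_i w_i) b_i^2 \leq O(\eta \log^2(1/\eta))$; subtracting, $\tfrac{1}{n} \sum_i r_i w_i b_i^2 \geq 2 - O(\eta \log^2(1/\eta))$.

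The main obstacle is the cross term $2 c e$: since $e$ is a polynomial in the indeterminates, I cannot substitute a scalar bound and instead need $e^2 \leq O(\eta^2 \log^2(1/\eta))$ as an SoS inequality. I plan to derive this by decomposing $e = \tfrac{1}{n} \sum_i b_i - \tfrac{1}{n} \sum_i (1 - r_i w_i) b_i$. The first piece obeys $\bigl(\tfrac{1}{n} \sum_i b_i\bigr)^2 \leq O(\eta^2 \log^2(1/\eta))$ by conclusion~1 (no indeterminates). For the second, I use idempotence to rewrite $(1 - r_i w_i) b_i = (1 - r_i w_i) \cdot \bigl[(1 - r_i w_i) b_i\bigr]$ and then apply SoS Cauchy--Schwarz (\cref{fact:sos_cs}):
\[
  \Bigl(\tfrac{1}{n} \sum_i (1 - r_i w_i) b_i\Bigr)^2 \;\leq\; \Bigl(\tfrac{1}{n} \sum_i (1 - r_i w_i)\Bigr)\Bigl(\tfrac{1}{n} \sum_i (1 - r_i w_i) b_i^2\Bigr) \;\leq\; O(\eta)\cdot O(\eta \log^2(1/\eta)),
\]
the last step again via the Selector Lemma on the non-negative scalars $b_i^2$. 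Combining this with the SoS inequality $2 c e \leq c^2 + e^2$ and plugging into the clean identity gives
\[
  \tfrac{1}{n} \sum_i (1 - r_i w_i) B_i^2 \;\leq\; O(\eta \log^2(1/\eta)) \;-\; c^2 \;+\; (c^2 + e^2) \;+\; \eta_w c^2 \;\leq\; O(\eta \log^2(1/\eta)) + 2\eta c^2,
\]
which is dominated by the claimed bound $O(\eta \log^2(1/\eta)) + (2 + O(\eta))\, \Iprod{\tilde P, \Sigma' - \Sigma}^2$.
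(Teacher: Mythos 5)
Your proof is correct and follows essentially the same route as the paper's: the first inequality via the Selector Lemma applied to the fixed scalars $\Iprod{\tilde P, x_i^*(x_i^*)^\top - \Sigma}^2$, and the second via the full-sum-minus-selected-sum decomposition, using the hypercontractivity axiom to upper-bound the full sum over the $x_i'$ and goodness plus the Selector Lemma to lower-bound the selected sum. The only difference is bookkeeping: where the paper applies an up-front SoS triangle inequality to re-center at $\Sigma$ (incurring the factor $2$ on $\Iprod{\tilde P,\Sigma'-\Sigma}^2$), you track the cross term $2ce$ exactly and bound $e^2$ separately, which yields the slightly stronger coefficient $O(\eta)$ on that term — either version suffices for the downstream use in \cref{thm:covariance_rel_frob_norm_sos}.
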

\begin{proof}
    By $\eta$-higher-order goodness we know that for every set $T \subseteq [n]$ of size at most $\eta n$ it holds that
    \[
        \tfrac 1 n \sum_{i \in T} \Iprod{\tilde{P}, x_i^* (x_i^*)^\top - \Sigma}^2 = \tfrac 1 n \sum_{i \in T} \Iprod{P, [(\Sigma)^{-1/2}x_i^*] [(\Sigma)^{-1/2}x_i^*]^\top - I_d}^2 \leq O\Paren{\eta \log^2 (1/\eta)} \,.
    \]
    The first SoS proof now follows directly by the SoS Selector lemma (\cref{lemma:subset_selection}).

    For the second one, first note that by the SoS triangle inequality
    \begin{align*}
        \calA \sststile{O(1)}{} &\tfrac 1 n \sum_{i=1}^n (1-r_i w_i) \Iprod{\tilde{P}, x_i' (x_i')^\top - \Sigma'}^2 \\
        &\leq \tfrac 2 n\sum_{i=1}^n (1-r_i w_i) \Iprod{\tilde{P}, x_i' (x_i')^\top - \Sigma}^2 + O(\eta) \Iprod{\tilde{P}, \Sigma - \Sigma'}^2 \,. 
    \end{align*}
    We focus on bounding the first sum
    \begin{equation}
    \begin{split}
    \label{eqn:cov_helper}
        \calA &\sststile{O(1)}{} \tfrac 1 n\sum_{i=1}^n (1-r_i w_i) \Iprod{\tilde{P}, x_i' (x_i')^\top - \Sigma}^2 \\
        &= \underbrace{\tfrac 1 n\sum_{i=1}^n  \Iprod{\tilde{P}, x_i' (x_i')^\top - \Sigma}^2}_{\ref{eqn:cov_helper}.1} - \underbrace{\tfrac 1 n\sum_{i=1}^n r_i w_i \Iprod{\tilde{P}, x_i' (x_i')^\top - \Sigma}^2}_{\ref{eqn:cov_helper}.2} \,.
    \end{split}
    \end{equation}
    We will first derive a lower bound on \ref{eqn:cov_helper}.2 and then an upper bound on \ref{eqn:cov_helper}.1.
    We start by observing that
    \begin{align*}
        \calA &\sststile{O(1)}{} \tfrac 1 n\sum_{i=1}^n r_i w_i \Iprod{\tilde{P}, x_i^* (x_i^*)^\top - \Sigma}^2 \\
        &= \tfrac 1 n\sum_{i=1}^n \Iprod{\tilde{P}, x_i^* (x_i^*)^\top - \Sigma}^2 - \tfrac 1 n\sum_{i=1}^n (1 - r_i w_i) \Iprod{\tilde{P}, x_i^* (x_i^*)^\top - \Sigma}^2 \,.
    \end{align*}
    By $\eta$-higher-order-goodness the first term is at least $2 - O(\eta \log^2(1/\eta))$.
    The second term is at most $O(\eta \log^2(1/\eta))$ by what we proved above.
    Hence, in total the term \ref{eqn:cov_helper}.2 is at least $2-O(\eta \log^2 (1/\eta))$.
    For the the term \ref{eqn:cov_helper}.1, we notice that
    \begin{align*}
        \calA &\sststile{O(1)}{} \tfrac 1 n\sum_{i=1}^n  \Iprod{\tilde{P}, x_i' (x_i')^\top - \Sigma}^2 = \tfrac 1 n\sum_{i=1}^n  \Iprod{\tilde{P}, x_i' (x_i')^\top - \Sigma' + \Sigma' - \Sigma}^2 \\
        &= \tfrac 1 n\sum_{i=1}^n  \Iprod{\tilde{P}, x_i' (x_i')^\top - \Sigma'}^2 + 2 \Paren{\tfrac 1 n \sum_{i=1}^n \Iprod{\tilde{P}, x_i' (x_i')^\top - \Sigma'}} \Iprod{\tilde{P}, \Sigma' - \Sigma} + \Iprod{\tilde{P}, \Sigma' - \Sigma}^2 \,.
    \end{align*}
    The middle term is zero by definition of $\Sigma'$.
    By our last constraint, the first term is at most
    \[
        \Paren{2+ O(\eta \log(1/\eta))} \Norm{\tilde{P}}_F^2 \leq \Paren{2+ O(\eta \log(1/\eta))} \Paren{1+ O(\eta \log(1/\eta))}^2 \leq 2+ O(\eta \log(1/\eta)) \,.
    \]
    Hence, putting everything together, we have shown that
    \[
        \calA \sststile{O(1)}{} \tfrac 1 n \sum_{i=1}^n (1-r_i w_i) \Iprod{\tilde{P}, x_i' (x_i')^\top - \Sigma'}^2 \leq  O(\eta \log^2(1/\eta)) + (2 + O(\eta)) \Iprod{\tilde{P}, \Sigma' - \Sigma}^2 \,.
    \]
\end{proof}

With this hand, we will now proof~\cref{thm:covariance_rel_frob_norm_sos}:
\begin{proof}[Proof of~\cref{thm:covariance_rel_frob_norm_sos}]
    Let $P \in \R^{d \times d}$ be a fixed matrix of unit Frobenius norm and define $\tilde{P} = (\Sigma)^{-1/2} P (\Sigma)^{-1/2}$.
    Further, let $\bar{\Sigma} = \tfrac 1 n \sum_{i=1}^n x_i^* (x_i^*)^\top$.
    Then by SoS triangle inequality (cf.~\cref{fact:sos-almost-triangle})
    \[
        \calA \sststile{O(1)}{}\Iprod{P, (\Sigma)^{-1/2} \Sigma' (\Sigma)^{-1/2} - I_d}^2 = \Iprod{\tilde{P}, \Sigma' - \Sigma}^2 \leq 2 \Iprod{\tilde{P}, \Sigma' - \bar{\Sigma}}^2 + 2 \Iprod{\tilde{P}, \Sigma - \bar{\Sigma}}^2 \,.
    \]
    Note that by~\cref{lem:higher-order-stability} (Point 1), it holds that
    \begin{align*}
        \Iprod{\tilde{P}, \Sigma - \bar{\Sigma}}^2 &= \Iprod{P, I_d - \tfrac 1 n \sum_{i=1}^n \Brac{(\Sigma)^{-1/2} x_i^*} \Brac{(\Sigma)^{-1/2} x_i^*}^\top}^2 \\
        &\leq \Norm{I_d - \tfrac 1 n \sum_{i=1}^n \Brac{(\Sigma)^{-1/2} x_i^*} \Brac{(\Sigma)^{-1/2} x_i^*}^\top}_F^2 \leq O(\eta^2 \log^2(1/\eta)) \,.
    \end{align*}
    Thus, it is enough to show that $\calA \sststile{O(1)}{} \Iprod{\tilde{P}, \Sigma' - \bar{\Sigma}}^2 \leq O(\eta^2 \log^2(1/\eta))$.
    We will try to reduce to a state in which we can apply~\cref{lem:cov_rel_helper}.
    We start by observing that
    \begin{align*}
        &\calA \sststile{O(1)}{} \Iprod{\tilde{P}, \Sigma' - \bar{\Sigma}}^2 = \Paren{\tfrac 1 n \sum_{i=1}^n \Iprod{\tilde{P}, x_i' (x_i')^\top} - \tfrac 1 n \sum_{i=1}^n \Iprod{\tilde{P}, x_i^* (x_i^*)^\top}}^2 \\
        &= \Paren{\tfrac 1 n \sum_{i=1}^n (1- w_i r_i) \Brac{\Iprod{\tilde{P}, x_i' (x_i')^\top - x_i^* (x_i^*)^\top}}}^2 \\
        &= (\tfrac 1 n \sum_{i=1}^n (1- w_i r_i) \Brac{\Iprod{\tilde{P}, x_i' (x_i')^\top - \Sigma'}} + \tfrac 1 n \sum_{i=1}^n (1- w_i r_i)\Brac{\Iprod{\tilde{P},\Sigma - x_i^* (x_i^*)^\top}} \\
        &+ \Paren{\tfrac 1 n \sum_{i=1}^n (1- w_i r_i)} \Brac{\Iprod{\tilde{P}, \Sigma' - \Sigma}})^2 \\
        &\leq 2 \Paren{\tfrac 1 n \sum_{i=1}^n (1- w_i r_i) \Brac{\Iprod{\tilde{P}, x_i' (x_i')^\top - \Sigma'}}}^2 + 2 \Paren{\tfrac 1 n \sum_{i=1}^n (1- w_i r_i)\Brac{\Iprod{\tilde{P},\Sigma -  x_i^* (x_i^*)^\top}}}^2 \\
        &+ 2\Paren{\Paren{\tfrac 1 n \sum_{i=1}^n (1- w_i r_i)} \Brac{\Iprod{\tilde{P}, \Sigma' - \Sigma}}}^2 \,.
    \end{align*}
    The last term is at most $2 \eta^2 \Iprod{\tilde{P}, \Sigma' - \Sigma}^2$ which is a small multiple of the left-hand side we started with.
    By subtracting this and renormalizing, we can hence ignore the last term.
    Combining all of the above and using the SoS Cauchy-Schwarz inequality, it follows that 
    \begin{equation}
    \label{eqn:cov_rel_main_terms}
    \begin{split}
    \calA \sststile{O(1)}{} \Iprod{\tilde{P}, \Sigma' - \Sigma}^2 &\leq O(1) \cdot \Paren{\tfrac 1 n \sum_{i=1}^n (1-r_i w_i)}\Paren{\tfrac 1 n \sum_{i=1}^n (1-r_iw_i) \Iprod{\tilde{P}, x_i' (x_i')^\top - \Sigma'}^2} \\
    &+ O(1) \cdot \Paren{\tfrac 1 n \sum_{i=1}^n (1-r_i w_i)} \Paren{\tfrac 1 n \sum_{i=1}^n (1-r_iw_i) \Iprod{\tilde{P}, x_i^* (x_i^*)^\top - \Sigma}^2} \\
    &= O(\eta) \cdot \underbrace{\Paren{\tfrac 1 n \sum_{i=1}^n (1-r_iw_i) \Iprod{\tilde{P}, x_i' (x_i')^\top - \Sigma'}^2}}_{\ref{eqn:cov_rel_main_terms}.1} \\
    &+ O(\eta) \cdot \underbrace{\Paren{\tfrac 1 n \sum_{i=1}^n (1-r_iw_i) \Iprod{\tilde{P}, x_i^* (x_i^*)^\top - \Sigma}^2}}_{\ref{eqn:cov_rel_main_terms}.2}
    \end{split}
    \end{equation}
    By~\cref{lem:cov_rel_helper} it follows  \ref{eqn:cov_rel_main_terms}.2 is at most $O(\eta \log^2 (1/\eta))$ and \ref{eqn:cov_rel_main_terms}.1 is at most $O(\eta \log^2(1/\eta)) + O(1) \Iprod{\tilde{P}, \Sigma' - \Sigma}^2$.
    Hence,
    \[
        \calA \sststile{O(1)}{} \Iprod{\tilde{P}, \Sigma' - \Sigma}^2 \leq O(\eta^2 \log^2(1/\eta)) + O(\eta) \Iprod{\tilde{P}, \Sigma' - \Sigma}^2 \,.
    \]
    Thus, by subtracting the last term on the right-hand side and renormalizing, we obtain the claim.
\end{proof}

\section{Feasibility and Concentration bounds}

\subsection{$\eta$-goodness}
\label{sec:eps_goodness_proof}

In this section, we will prove \cref{fact:mean_and_cov_of_1-eps,fact:cov_small_subset}.
We start with \cref{fact:cov_small_subset} restated below.
\restatefact{fact:cov_small_subset}

In fact, it will follow by the more general version below (\cref{lem:moment_control}) which also gives a guarantees for the sample mean over small subsets which will be used in the proof of \cref{fact:mean_and_cov_of_1-eps}.
\cref{fact:cov_small_subset} follows from the first part and the triangle inequality.
Note that in the proof below we prove that the bound in \cref{fact:cov_small_subset} holds with probability $1-\delta$, this of course does not make a difference.
\begin{proof}[Proof of \cref{fact:cov_small_subset}]
    We condition on the event that the conclusion of \cref{lem:moment_control} holds.
    Let $\calT \subset \calS$ be of size $\eta n$.
    Using the triangle inequality, $\sqrt{ab} \leq a + b$ for $a > 0, b > 0$ and for $\eta \leq 1/e$ we have that $\eta \leq \eta \log(1/\eta)$, we obtain
    \begin{align*}
        \Norm{\frac{1}{(1-\eta) n} \sum_{x_i^* \in \calT} \Paren{x_i^* - \mu} \Paren{x_i^* - \mu}^\top }_{\textrm{op}} &\leq \bigO{\eta \log(1/\eta) + \sqrt{\eta}\Paren{ \sqrt{\frac{d}{n}} + \sqrt{\frac{\log(1/\delta)}{n}}} + {\frac{d}{n}} + {\frac{\log(1/\delta)}{n}}} \\
&\leq \bigO{\eta \log(1/\eta) + {\frac{d + \log(1/\delta)}{n}}}
    \end{align*}
    as desired.
\end{proof}

\begin{lemma}\label{lem:moment_control}
Let $0 \leq \psi \leq O(\eta \sqrt{\log(1/\eta)})$. Let $\calS = \Set{x_1^*, \ldots, x_n^*}$ be \iid samples from a $d$-dimensional sub-gaussian distribution  with mean $\mu$ and covariance $\Sigma$ such that $(1-\psi) I_d \preceq \Sigma \preceq (1+\psi)I_d$. Let $\calT_{\eta} = \Set{ \calT \subset S \suchthat \Card{\calT} = \eta n }$ be the collection of subsets of $\calS$ size $\eta n$. Then, for $\eta \leq 1/e$, we have that with probability at least $1 - \delta$,
\[ \sup_{\calT \in \calT_\eta} \apnorm{\frac{1}{\eta n} \sum_{x_i^* \in \calT} (x_i^* - \mu)(x_i^* - \mu)^T- I_d}{op}  \leq \bigO{\sqrt{\frac{d}{\eta n}} + \sqrt{\frac{\log(1/\delta)}{\eta n}} + {\frac{d}{\eta n}} + {\frac{\log(1/\delta)}{\eta n}} + \log(1/\eta) + \psi} \]
and
\[ \sup_{\calT \in \calT_\eta} \apnorm{\frac{1}{(1-\eta)n} \sum_{x_i^* \in \calT} (x_i^* - \mu)}{2} \leq \bigO{\eta \sqrt{\log(1/\eta)} + \sqrt{\frac{d +\log(1/\delta)}{n}}} \,. \]
\end{lemma}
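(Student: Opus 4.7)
The plan is to reduce to the standard concentration bounds for a \emph{fixed} subset of size $\eta n$ and then absorb the loss from taking the supremum over subsets via a union bound. Observe that the number of subsets of size $\eta n$ is $\binom{n}{\eta n} \le (e/\eta)^{\eta n}$, so $\log \card{\calT_\eta} \le O(\eta n \log(1/\eta))$. Hence, if we can show the desired inequality for each fixed $\calT$ with failure probability $\delta'$, applying a union bound with $\delta' = \delta / \card{\calT_\eta}$ replaces $\log(1/\delta)$ by $\log(1/\delta) + O(\eta n \log(1/\eta))$, which gives rise to exactly the $\sqrt{\log(1/\eta)}$ and $\log(1/\eta)$ terms appearing in the claimed bounds after division by $\sqrt{\eta n}$ and $\eta n$, respectively.

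For the first bound, fix $\calT$ of size $\eta n$ and decompose
\begin{equation*}
    \tfrac{1}{\eta n} \sum_{x_i^* \in \calT} (x_i^* - \mu)(x_i^* - \mu)^\top - I_d
    \;=\; \bigl(\tfrac{1}{\eta n} \sum_{x_i^* \in \calT} (x_i^* - \mu)(x_i^* - \mu)^\top - \Sigma\bigr) \;+\; (\Sigma - I_d) \, .
\end{equation*}
The last term has operator norm at most $\psi$. For the first term, since the samples are sub-Gaussian with covariance $\Sigma$ of operator norm at most $1+\psi = O(1)$, standard matrix concentration (e.g.\ an $\eps$-net argument on the sphere together with a sub-exponential tail bound on $\Iprod{v, x_i^* - \mu}^2$, or the Hanson--Wright / Koltchinskii--Lounici bound) yields, with probability at least $1-\delta'$,
\begin{equation*}
    \Norm{\tfrac{1}{\eta n} \sum_{x_i^* \in \calT} (x_i^* - \mu)(x_i^* - \mu)^\top - \Sigma}_{\mathrm{op}}
    \;\le\; \bigO{\sqrt{\tfrac{d + \log(1/\delta')}{\eta n}} + \tfrac{d + \log(1/\delta')}{\eta n}} \, .
\end{equation*}
Plugging in $\log(1/\delta') = \log(1/\delta) + O(\eta n \log(1/\eta))$ splits the right-hand side into the four ``$d, \log(1/\delta), \sqrt{\log(1/\eta)}, \log(1/\eta)$'' contributions claimed by the lemma (noting that the $\sqrt{\log(1/\eta)}$ piece is dominated by the $\log(1/\eta)$ piece).

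For the second bound, write $\tfrac{1}{(1-\eta)n} \sum_{x_i^* \in \calT} (x_i^* - \mu) = \tfrac{\eta}{1-\eta} \cdot \tfrac{1}{\eta n} \sum_{x_i^* \in \calT} (x_i^* - \mu)$. For a fixed $\calT$ of size $\eta n$, the vector inside is a sample average of $\eta n$ sub-Gaussian mean-zero vectors, so by a standard covering argument, with probability at least $1-\delta'$,
\begin{equation*}
    \Norm{\tfrac{1}{\eta n} \sum_{x_i^* \in \calT} (x_i^* - \mu)}_2
    \;\le\; \bigO{\sqrt{\tfrac{d + \log(1/\delta')}{\eta n}}} \, .
\end{equation*}
Multiplying by $\eta/(1-\eta) = O(\eta)$ and taking $\log(1/\delta') = \log(1/\delta) + O(\eta n \log(1/\eta))$, the contribution of the $\eta n \log(1/\eta)$ factor becomes $O(\eta \sqrt{\log(1/\eta)})$, while the remaining $\sqrt{(d + \log(1/\delta))/(\eta n)}$ multiplied by $\eta$ is bounded (since $\eta \le 1$) by $\sqrt{(d + \log(1/\delta))/n}$, matching the claimed bound.

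The main (mild) obstacle is the need for a sub-Gaussian (rather than Gaussian) version of the concentration inequalities for fixed subsets; this is classical and follows from the assumption that coordinates are $O(1)$-sub-Gaussian after rescaling by $\Sigma^{-1/2}$, so this amounts to citing a standard reference (e.g.\ \cite{Vershynin_2018}). No other step requires new ideas beyond careful bookkeeping of the union-bound and the $\Sigma \to I_d$ offset of $\psi$.
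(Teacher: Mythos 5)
Your proposal is correct and follows essentially the same route as the paper: a union bound over the $\binom{n}{\eta n}\le (e/\eta)^{\eta n}$ subsets applied to the standard fixed-sample sub-Gaussian concentration bounds (the paper's Facts on empirical covariance and empirical mean), with the $O(\eta n\log(1/\eta))$ union-bound cost absorbed into the $\log(1/\eta)$ and $\eta\sqrt{\log(1/\eta)}$ terms and the renormalization $\eta/(1-\eta)$ handled exactly as you describe. No gaps.
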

\begin{proof}
Using a union bound on Fact~\ref{fact:empirical_cov_bound} for $|\calT| = \binom{n}{\eta n}$ subsets each of size $\eta n$, we get that with probability at least $1 - \tfrac \delta 2$,
\begin{align*}
&\sup_{\calT \in \calT_\eta} \apnorm{\frac{1}{\eta n} \sum_{x_i^* \in \calT}  (x_i^* - \mu)(x_i^* - \mu)^T - I_d}{op} \\
&\leq \bigO{\sqrt{\frac{d}{\eta n}} + \sqrt{\frac{\log(1/\delta)}{\eta n}} + {\frac{d}{\eta n}} + {\frac{\log(1/\delta)}{\eta n}} + \sqrt{\log(1/\eta)} + \log(1/\eta)} \\
&\leq \bigO{\sqrt{\frac{d}{\eta n}} + \sqrt{\frac{\log(1/\delta)}{\eta n}} + {\frac{d}{\eta n}} + {\frac{\log(1/\delta)}{\eta n}} + \log(1/\eta)} \,,
\end{align*}
where we additionally used that $\log(\binom{n}{\eta n}) \leq 2\eta n \log(1/\eta)$ and that $\eta \leq 1/e$ implies 
\[ 1 \leq \sqrt{\log(1/\eta)} \leq \log(1/\eta). \]

An analogous union bound argument relying on \cref{fact:empirical_mean_bound} instead of \cref{fact:empirical_cov_bound}, and using $\eta \leq 1$ instead of $\sqrt{ab} \leq a + b$, gives us that with probability $1 - \tfrac \delta 2$,
\[ \sup_{T \in \calT_\eta} \apnorm{\frac{1}{(1-\eta)n} \sum_{x_i^* \in \calT} (x_i^* - \mu)}{2} \leq \bigO{\eta \sqrt{\log(1/\eta)} + \sqrt{\frac{d + \log(1/\delta)} n}} \,.\]
\end{proof}

We next give the proof of \cref{fact:mean_and_cov_of_1-eps}.
\restatefact{fact:mean_and_cov_of_1-eps}
Again, we will prove that the bounds hold with probability $1-\delta$.
\begin{proof}[Proof of \cref{fact:mean_and_cov_of_1-eps}]
We know that with probability $1-\delta$ the bounds in \cref{lem:moment_control} hold.
We henceforth condition on this event.
Let $\calT \subseteq \calS$ be of size $(1-\eta)n$ and denote by $\calT^c$ its complement in $\calS$.
Then it holds that
\begin{align*}
    \Norm{\mu - \frac{1}{(1-\eta)n}\sum_{x_i^* \in \calT} x_i^* }_2 &= \Norm{\frac 1 {(1-\eta)n} \sum_{i=1}^n (x_i^* - \mu) - \frac 1 {(1-\eta)n} \sum_{x_i^* \in \calT^c} (x_i^* - \mu)}_2 \\
    &\leq \Norm{\frac 1 {(1-\eta)n} \sum_{i=1}^n (x_i^* - \mu)}_2 + \Norm{\frac 1 {(1-\eta)n} \sum_{x_i^* \in \calT^c} (x_i^* - \mu)}_2 \\
    &\leq \bigO{\eta \sqrt{\log(1/\eta)} + \sqrt{\frac{d+\log(1/\delta)}{n}}} \,, 
\end{align*}
where the last inequality follows from the second part of \cref{lem:moment_control} and \cref{fact:empirical_mean_bound}.
We obtain the following bound for the empirical covariance over $\calT$ in a similar way.
However, first observe that by an argument analogous to that in the proof of \cref{fact:cov_small_subset} it follows that
\[
\Norm{\frac{1}{(1-\eta) n} \sum_{x_i^* \in \calT^c} \left[\Paren{x_i^* - \mu} \Paren{x_i^* - \mu}^\top - I_d\right]}_{\textrm{op}} \leq \bigO{\eta \log(1/\eta) + \frac{d+\log(1/\delta)}{n} + \psi} \,.
\]
Using \cref{fact:empirical_cov_bound}, it follows that 
\begin{align*}
    &\Norm{\frac{1}{(1-\eta) n} \sum_{x_i^* \in \calT} \Paren{x_i^* - \mu} \Paren{x_i^* - \mu}^\top  - I_d}_{\textrm{op}} \\
    = &\Norm{\frac{1}{(1-\eta) n} \sum_{i=1}^n \left[\Paren{x_i^* - \mu} \Paren{x_i^* - \mu}^\top - I_d\right] - \frac{1}{(1-\eta) n} \sum_{x_i^* \in \calT^c} \left[\Paren{x_i^* - \mu} \Paren{x_i^* - \mu}^\top - I_d\right] }_{\textrm{op}} \\
    \leq &\bigO{\eta \log(1/\eta) + \sqrt{\frac d n} + \sqrt{\frac {\log(1/\delta)} n} + \frac {d + \log(1/\delta)} n + \psi} \,.
\end{align*}
\end{proof}

The proofs above use the following two facts.
The first one follows from \cite[Theorem 6.5]{wainwright_2019}.
\begin{fact}
\label{fact:empirical_cov_bound}
Let $0 \leq \psi \leq O(\eta \sqrt{\log(1/\eta)})$. Let $x_1^*, \ldots, x_n^*$ be \iid samples from a $d$-dimensional sub-gaussian distribution  with mean $\mu$ and covariance $\Sigma$ such that $(1-\psi) I_d \preceq \Sigma \preceq (1+\psi) I_d$. Then we have that with probability at least $1-\delta$,
\[ \apnorm{\frac{1}{n} \sum_{i=1}^n (x_i^* - \mu)(x_i^* - \mu)^T - I_d}{op} \leq \bigO{\sqrt{\frac{d}{n}} + \sqrt{\frac{\log(1/\delta)}{n}} + {\frac{d}{n}} + {\frac{\log(1/\delta)}{n}} + \psi} \,.  \]
\end{fact}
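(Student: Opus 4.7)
The plan is to deduce this fact as a short consequence of the standard sample covariance concentration bound for sub-Gaussian distributions (e.g.,~\cite[Theorem 6.5]{wainwright_2019}) combined with the near-isotropy hypothesis $(1-\psi) I_d \preceq \Sigma \preceq (1+\psi) I_d$. First, by centering we may assume without loss of generality that $\mu = 0$, and write $\hat{\Sigma} = \tfrac{1}{n}\sum_{i=1}^n x_i^*(x_i^*)^\top$. The triangle inequality in operator norm gives
\[
    \Norm{\hat{\Sigma} - I_d}_{\textrm{op}} \leq \Norm{\hat{\Sigma} - \Sigma}_{\textrm{op}} + \Norm{\Sigma - I_d}_{\textrm{op}} \leq \Norm{\hat{\Sigma} - \Sigma}_{\textrm{op}} + \psi,
\]
so it suffices to bound the first term.

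Next, I would invoke the sub-Gaussian sample covariance concentration bound, which yields that with probability at least $1-\delta$,
\[
    \Norm{\hat{\Sigma} - \Sigma}_{\textrm{op}} \leq C \Norm{\Sigma}_{\textrm{op}} \Paren{\sqrt{\frac{d + \log(1/\delta)}{n}} + \frac{d + \log(1/\delta)}{n}}
\]
for an absolute constant $C$. Since $\Norm{\Sigma}_{\textrm{op}} \leq 1+\psi \leq 2$ (as $\psi$ is small), combining with the triangle inequality above gives the claimed bound after splitting the $\sqrt{(d+\log(1/\delta))/n}$ term into $\sqrt{d/n}+\sqrt{\log(1/\delta)/n}$ (up to constants via $\sqrt{a+b} \leq \sqrt{a}+\sqrt{b}$) and similarly for the linear term.

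If one wants a self-contained proof rather than citing Wainwright, the standard route is: (i) discretize the unit sphere $S^{d-1}$ by a $1/4$-net $\mathcal{N}$ of size at most $9^d$, reducing the operator norm to $2 \sup_{v \in \mathcal{N}} |v^\top (\hat{\Sigma}-\Sigma) v|$; (ii) for each fixed $v$, note that $\langle x_i^*, v\rangle^2 - v^\top \Sigma v$ is a centered sub-exponential random variable with sub-exponential norm $O(\|\Sigma\|_{\textrm{op}})$; (iii) apply Bernstein's inequality to control $\tfrac{1}{n}\sum_i [\langle x_i^*, v\rangle^2 - v^\top \Sigma v]$ with probability at least $1 - \delta/|\mathcal{N}|$; (iv) union bound over $\mathcal{N}$, which contributes the $\log(1/\delta) + d\log 9$ factor and produces both the $\sqrt{\cdot/n}$ (sub-Gaussian regime of Bernstein) and $\cdot/n$ (sub-exponential tail regime) terms.

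The main ``obstacle'' is really just bookkeeping — in particular, tracking that the bound scales with $\Norm{\Sigma}_{\textrm{op}}$ and using the near-isotropy assumption to absorb it into a constant, and making sure the $\psi$ term enters only additively (not multiplicatively inside the square root). Given the citation the authors provide, I would simply state the Wainwright bound and verify these two points; no new technical idea is needed.
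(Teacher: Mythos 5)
Your approach is correct and essentially identical to the paper's: the paper states that \cref{fact:empirical_cov_bound} "follows from \cite[Theorem 6.5]{wainwright_2019}," which is exactly the sub-Gaussian sample covariance bound you invoke, and the triangle-inequality step to absorb $\|\Sigma - I_d\|_{\mathrm{op}} \leq \psi$ is the same (implicit) reduction. Your optional self-contained route via a $1/4$-net plus Bernstein is a standard proof of that Wainwright theorem, so it adds nothing the authors' citation wouldn't cover, but it is sound.
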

The second one follows from standard concentration bounds of sub-gaussian distributions and an $\eta$-net argument
\begin{fact}
    \label{fact:empirical_mean_bound}
    Let $0\leq \psi \leq O(\eta \sqrt{\log(1/\eta)})$ and let $x_1^*, \ldots, x_n^*$ be \iid samples from a $d$-dimensional sub-gaussian distribution  with mean $\mu$ and covariance $\Sigma$ such that $(1-\psi)I_d \preceq \Sigma \preceq (1+\psi)I_d$. Then we have that with probability at least $1-\delta$,
    \[
    \Norm{ \Paren{\frac 1 n \sum_{i=1}^n x_i^*} - \mu}_2 \leq \bigO{\sqrt{\frac {d + \log(1/\delta)} n}} \,.
    \]
\end{fact}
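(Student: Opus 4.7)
The plan is to reduce the claim to a one-dimensional tail bound via a net argument on the unit sphere. Without loss of generality assume $\mu = 0$, since $x_i^* - \mu$ remains mean-zero and sub-Gaussian with the same covariance. Writing $\bar{x} = \tfrac 1 n \sum_{i=1}^n x_i^*$, note that
\[
\Norm{\bar{x}}_2 = \sup_{v \in S^{d-1}} \Iprod{v, \bar{x}} \,.
\]
For a fixed unit vector $v \in S^{d-1}$, each $\Iprod{v, x_i^*}$ is a mean-zero sub-Gaussian random variable whose variance proxy is at most $O(v^\top \Sigma v) \leq O(1+\psi) = O(1)$, using our hypothesis $(1-\psi) I_d \preceq \Sigma \preceq (1+\psi) I_d$ and $\psi \leq 1$. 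By independence and the standard stability of sub-Gaussian tails under averaging, $\Iprod{v, \bar{x}}$ is sub-Gaussian with variance proxy $O(1/n)$, so
\[
\Pr\Paren{\Abs{\Iprod{v, \bar{x}}} > t} \leq 2\exp(-c n t^2)
\]
for some absolute constant $c > 0$.

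Next I would pass from a fixed direction to the supremum over $S^{d-1}$ via a standard $1/2$-net argument. Let $\calN \subset S^{d-1}$ be a $1/2$-net of cardinality $\Card{\calN} \leq 5^d$ (which exists by a standard volumetric bound). A union bound over $\calN$ with threshold $t = C \sqrt{(d+\log(1/\delta))/n}$ for a sufficiently large absolute constant $C$ gives a failure probability of at most
\[
2 \cdot 5^d \cdot \exp\Paren{-c C^2 \Paren{d + \log(1/\delta)}} \leq \delta \,.
\]
By the standard net-approximation inequality, $\sup_{v \in S^{d-1}} \Iprod{v, \bar{x}} \leq 2 \max_{v \in \calN} \Iprod{v, \bar{x}}$, so on the good event $\Norm{\bar{x}}_2 \leq O\bigl(\sqrt{(d+\log(1/\delta))/n}\bigr)$, as desired.

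There is no substantive obstacle here; the only minor care needed is to make precise what notion of sub-Gaussianity is being used. Under the fourth moment matching reasonable sub-Gaussian assumption of \cref{def:moment_match_sub_gaussian}, the isotropized marginals $\Iprod{v, \Sigma^{-1/2}(x_i^* - \mu)}$ are $O(1)$-sub-Gaussian by standard results (a linear combination of independent sub-Gaussian coordinates is sub-Gaussian with variance proxy equal to the sum of squared coefficients). This justifies the one-dimensional Hoeffding bound used above, and the rest is routine.
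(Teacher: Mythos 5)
Your proof is correct and follows exactly the route the paper intends: the paper gives no detailed proof of this fact, stating only that it "follows from standard concentration bounds of sub-gaussian distributions and an $\eta$-net argument," which is precisely your fixed-direction sub-Gaussian tail bound combined with a $1/2$-net and union bound over $S^{d-1}$. The details you supply (variance proxy $O(1)$ per direction from the covariance bound, averaging to $O(1/n)$, net of size $5^d$, and the factor-$2$ net-approximation step) are all standard and correct.
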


\subsection{Feasibility}
\label{sec:feasibility}

\subsubsection{Certifiably Hypercontractivity via Concentration}
In this section we show that the samples $x_i$ are certifiably hypercontractive whenever higher order goodness holds. In particular, we show that they satisfy $4-2$ certifiable hypercontractivity or that there exists an SoS proof that
\[ \sum_{i=1}^n \langle x_i, v \rangle^4 \leq \left(3 + \bigO{\eta \log^2(1/\eta)} \right) \cdot \left(v^\top \left(\frac{1}{n} \sum_{i \in [n]} x_i x_i^\top\right) v\right)^2\,. \]
This will be used in the feasibility arguments for regression, covariance aware mean estimation, and covariance estimation in spectral norm. 

\begin{fact}[Certifiably Hypercontractivity from Concentration]
\label{fact:cert_bounded_moments_from_concentration}
    Let $\Sigma^{-1/2}x_1, \ldots, \Sigma^{-1/2}x_n$ be $\eta$-higher-order-good.
    I.e., in particular let them satisfy
    \[ \abs{\sum_{i \in [n]} \iprod{(\Sigma^{-1/2}x_i)(\Sigma^{-1/2}x_i)^\top - I_d}{P}^2 - 2} \leq O(\eta \log^2 ( 1 /\eta)) \,,\]
    for all fixed $P \in \mathbb{R}^{d \times d}$ such that $\norm{P}_F = 1$. Then we have that
    \[ \sststile{v}{} \Set{\frac{1}{n} \sum_{i=1}^n \langle x_i, v \rangle^4 \leq \left(3 + \bigO{\eta \log^2(1/\eta)} \right) \cdot \left(v^\top \left(\frac{1}{n} \sum_{i \in [n]} x_i x_i^\top\right) v\right)^2} \,.\]
\end{fact}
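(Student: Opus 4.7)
The plan is to pass to the whitened variables $y_i = \Sigma^{-1/2} x_i$ and introduce an auxiliary indeterminate $u = \Sigma^{1/2} v$. Since $v \mapsto \Sigma^{1/2} v$ is a fixed linear change of variable, the Substitution Rule transfers any SoS proof in $u$ into an SoS proof in $v$ of the same degree. Observe that $\Iprod{x_i, v} = \Iprod{y_i, u}$ and $v^\top \tilde\Sigma\, v = u^\top \tilde M\, u =: Q$, where $\tilde M = \tfrac{1}{n}\sum_{i} y_i y_i^\top$. Hence it suffices to produce, in the variable $u$, an SoS proof of
\[
    \frac{1}{n}\sum_{i=1}^n \Iprod{y_i, u}^4 \;\le\; \bigl(3 + O(\eta \log^2(1/\eta))\bigr)\, Q^2.
\]

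The second-moment part of $\eta$-higher-order-goodness (applied to the $y_i$) gives $\tfrac{1}{n}\sum_i \iprod{y_i y_i^\top - I_d}{P}^2 \le (2 + O(\eta \log^2(1/\eta)))\norm{P}_F^2$ for every fixed $P \in \R^{d \times d}$. Equivalently, the $d^2 \times d^2$ PSD matrix $T = \tfrac{1}{n}\sum_i \mathrm{vec}(y_i y_i^\top - I_d)\mathrm{vec}(y_i y_i^\top - I_d)^\top$ satisfies $T \preceq (2 + O(\eta \log^2(1/\eta)))\, I_{d^2}$. By \cref{fact:operator_norm} this yields a degree-$2$ SoS proof in the entries of a matrix variable $P$, and substituting $P = uu^\top$ (via the Substitution Rule) yields a degree-$4$ SoS proof in $u$ of $\tfrac{1}{n}\sum_i (\Iprod{y_i, u}^2 - \norm{u}^2)^2 \le (2 + O(\eta \log^2(1/\eta)))\norm{u}^4$. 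Expanding the square and using $\tfrac{1}{n}\sum_i \Iprod{y_i, u}^2 = Q$ rearranges this to
\[
    \frac{1}{n}\sum_{i=1}^n \Iprod{y_i, u}^4 \;\le\; 2\,Q\,\norm{u}^2 + \bigl(1 + O(\eta \log^2(1/\eta))\bigr)\,\norm{u}^4.
\]

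It remains to convert the $\norm{u}^4$ and $Q\norm{u}^2$ terms into $Q^2$. The first-moment part of higher-order-goodness gives $\norm{\tilde M - I_d}_F \le O(\eta \log(1/\eta))$, hence $\tilde M \succeq (1 - O(\eta \log(1/\eta)))\, I_d$, and \cref{fact:operator_norm} turns this into the degree-$2$ SoS inequality $Q \ge (1 - O(\eta \log(1/\eta)))\norm{u}^2$ in $u$. Since $\tilde M$ is PSD, $Q$ is itself SoS (and so is $\norm{u}^2$), so \cref{fact:sos-squaring} lets us square the previous inequality to obtain $Q^2 \ge (1 - O(\eta\log(1/\eta)))^2 \norm{u}^4$, which rearranges to $\norm{u}^4 \le (1 + O(\eta \log(1/\eta)))\, Q^2$. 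Similarly, multiplying the inequality $Q \ge (1 - O(\eta \log(1/\eta)))\norm{u}^2$ by the SoS-nonnegative factor $Q$ yields $Q\norm{u}^2 \le (1 + O(\eta \log(1/\eta)))\, Q^2$. Substituting both bounds into the previous display gives the target SoS inequality (with the $O(\eta \log)$ error absorbed into $O(\eta \log^2)$), and substituting $u = \Sigma^{1/2} v$ finishes the proof. The only delicate point is this last step: we have to multiply and square SoS inequalities while tracking degrees and ensuring that the factors we multiply by are SoS-nonnegative; the PSD-ness of $\tilde M$ (and the fact that $\norm{u}^2$ is trivially a sum of squares) is exactly what makes this legitimate, and all intermediate proofs stay at degree $\le 8$ in $u$ (equivalently in $v$).
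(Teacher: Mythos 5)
Your proposal is correct and follows essentially the same route as the paper: both proofs interpret the second‑moment concentration condition as a spectral bound on the $d^2\times d^2$ matrix $\tfrac 1 n\sum_i \opvec(x_ix_i^\top-\Sigma)\opvec(x_ix_i^\top-\Sigma)^\top$, evaluate the quadratic form at $v^{\otimes 2}$ (your $P=uu^\top$ after whitening), and center the fourth moment at the true covariance so that the square term contributes $2+O(\eta\log^2(1/\eta))$ and the remaining terms contribute $1+O(\eta\log(1/\eta))$. The only difference is bookkeeping: you whiten up front and convert $\norm{u}^4$ and $Q\norm{u}^2$ into $Q^2$ at the end via the first‑moment bound, whereas the paper keeps everything in terms of $(v^\top\Sigma v)^2$ and swaps in the empirical covariance in the last line — both steps rest on the same closeness of $\Sigma$ and the empirical covariance.
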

\begin{proof}
     Let $\opvec(\cdot)$ be the operation that flattens a $d\times d$ matrix into a $d^2$-dimensional vector. We have that
     \begin{align*}
         \sststile{v}{4} \Biggl\{\frac{1}{n} \sum_{i=1}^n \langle x, v \rangle^4 &= \Paren{v^{\otimes 2}}^\top \Paren{\frac{1}{n} \sum_{i=1}^n \Paren{x_i^{\otimes 2}}\Paren{x_i^{\otimes 2}}^\top} \Paren{v^{\otimes 2}} \\
         &= \Paren{v^{\otimes 2}}^\top \Paren{\frac{1}{n} \sum_{i=1}^n \Paren{x_i^{\otimes 2} - \opvec(\Sigma) + \opvec(\Sigma)}\Paren{x_i^{\otimes 2} - \opvec(\Sigma) + \opvec(\Sigma)}^\top} \Paren{v^{\otimes 2}} \Biggr\} \,.
     \end{align*}
     We now expand the inner product. We thus have that
     \begin{align*}
         \sststile{v}{} \Biggl\{ \frac{1}{n} \sum_{i=1}^n \langle x, v \rangle^4 &= \underbrace{\Paren{v^{\otimes 2}}^\top \Paren{\frac{1}{n} \sum_{i=1}^n \Paren{x_i^{\otimes 2} - \opvec(\Sigma)}\Paren{x_i^{\otimes 2} - \opvec(\Sigma)}^\top} \Paren{v^{\otimes 2}}}_{\text{Term A}} \\
         &\qquad + \underbrace{\Paren{v^{\otimes 2}}^\top \Paren{\frac{1}{n} \sum_{i=1}^n \opvec(\Sigma)\Paren{x_i^{\otimes 2} - \opvec(\Sigma)}^\top} \Paren{v^{\otimes 2}}}_{\text{Term B}} \\
         &\qquad + \underbrace{\Paren{v^{\otimes 2}}^\top \Paren{\frac{1}{n} \sum_{i=1}^n \Paren{x_i^{\otimes 2} - \opvec(\Sigma)}\opvec(\Sigma)^\top} \Paren{v^{\otimes 2}}}_{\text{Term C}} \\
         &\qquad + \underbrace{\Paren{v^{\otimes 2}}^\top \Paren{\frac{1}{n} \sum_{i=1}^n \opvec(\Sigma)\opvec(\Sigma)^\top} \Paren{v^{\otimes 2}}}_{\text{Term D}} \Biggr\}\,.
     \end{align*}
     We first observe that Term D is exactly equal to $\left(v^\top \Sigma v\right)^2$. We now bound Term C. Rewriting this term we have that
     \begin{align*}
         \sststile{v}{}& \Biggl\{\Paren{v^{\otimes 2}}^\top \Paren{\frac{1}{n} \sum_{i=1}^n \Paren{x_i^{\otimes 2} - \opvec(\Sigma) }\opvec(\Sigma)^\top} \Paren{v^{\otimes 2}} \\
         &= \Paren{v^{\otimes 2}}^\top \Paren{\frac{1}{n} \sum_{i=1}^n \Paren{x_i^{\otimes 2} - \opvec(\Sigma)}} \left(v^\top \Sigma v\right) \Biggr\}\,.
     \end{align*}
     The inner product of the first two terms is equal to $v^\top \left(\frac{1}{n} \sum_{i=1}^n x_i x_i^\top - \Sigma\right)v$ and is thus bounded by $\bigO{\eta \cdot (v^\top \Sigma v)}$ by~\cref{fact:empirical_cov_bound}. We conclude that
     \[ \sststile{v}{}\Set{\Paren{v^{\otimes 2}}^\top \Paren{\frac{1}{n} \sum_{i=1}^n \Paren{x_i^{\otimes 2} - \opvec(\Sigma)}\opvec(\Sigma)^\top} \Paren{v^{\otimes 2}} \leq \bigO{\eta \cdot \left(v^\top \Sigma v\right)^2}}\,.\]
     The bound on Term B is analogous and we have that 
     \[ \sststile{v}{} \Set{\Paren{v^{\otimes 2}}^\top \Paren{\frac{1}{n} \sum_{i=1}^n \opvec(\Sigma)\Paren{x_i^{\otimes 2} - \opvec(\Sigma)}^\top} \Paren{v^{\otimes 2}} \leq \bigO{\eta \cdot \left(v^\top \Sigma v\right)^2}}\,.\]
     Finally, we bound Term A. Note that it suffices to bound the spectral norm of the following matrix:
     \[ \frac{1}{n} \sum_{i=1}^n \Paren{x_i^{\otimes 2} - \opvec(\Sigma)}\Paren{x_i^{\otimes 2} - \opvec(\Sigma)}^\top\,.\]
     Consider the expression $u^\top \Paren{\frac{1}{n} \sum_{i=1}^n \Paren{x_i^{\otimes 2} - \opvec(\Sigma)}\Paren{x_i^{\otimes 2} - \opvec(\Sigma)}^\top} u$ for any fixed unit vector $u\in\mathbb{R}^{d^2}$. Note that we have that 
     \[ u^\top \Paren{\frac{1}{n} \sum_{i=1}^n \Paren{x_i^{\otimes 2} - \opvec(\Sigma)}\Paren{x_i^{\otimes 2} - \opvec(\Sigma)}^\top} u = \frac{1}{n} \sum_{i=1}^n \langle x_i^{\otimes 2} - \opvec(\Sigma), u \rangle^2\,.\]
     Let $U$ be the reshaping of the $d^2$-dimensional vector $u$ into a $d \times d$ matrix. We thus have that this expression is also equivalent to $\frac{1}{n} \sum_{i=1}^n \langle x_ix_i^\top - \Sigma, U\rangle^2$. We can then apply the concentration assumption to bound this term by $\left(2 + \eta \log^2(1/\eta)\right) \cdot \left(v^\top \Sigma v\right)^2$. Thus, using the spectral norm bound we can conclude that
     \[ \sststile{v}{} \Set{\Paren{v^{\otimes 2}}^\top \Paren{\frac{1}{n} \sum_{i=1}^n \Paren{x_i^{\otimes 2} - \opvec(\Sigma)}\Paren{x_i^{\otimes 2} - \opvec(\Sigma)}^\top} \Paren{v^{\otimes 2}} \leq \left(2 + \eta \log^2(1/\eta)\right) \cdot \left(v^\top \Sigma v\right)^2}\,,\]
     and combining the bounds on Terms A, B, C, and D we get that 
     \[ \sststile{v}{} \Set{\frac{1}{n} \sum_{i=1}^n \langle x, v \rangle^4 \leq \left(3 + 3\eta \log^2(1/\eta) \right) \cdot \left(v^\top \Sigma v\right)^2} \,.\]
     Finally we note that $\Sigma$ and the empirical covariance $\frac{1}{n} \sum_{i \in [n]} x_i x_i^\top$ have difference at most $\bigO{\eta}$ in operator norm and thus we have that
     \[ \sststile{v}{} \Set{\frac{1}{n} \sum_{i=1}^n \langle x, v \rangle^4 \leq \left(3 + 3\eta \log^2(1/\eta) \right) \cdot \left(v^\top \left(\frac{1}{n} \sum_{i \in [n]} x_i x_i^\top\right) v\right)^2} \,.\]
\end{proof}

\subsubsection{Regression}
\label{sec:feasibility_regression}

We now show that the program in~\cref{sec:robust_regression_oneshot} is feasible with probability at least $1-\delta$. Simply, we let $x_1^*, \ldots, x_n^*$ be the uncorupted samples and set $w_i = r_i$, $\theta' = \thetals$. The first four constraints are trivially satisfied and we have that the fifth constraint is satisfied by the definition of $\thetals$. We now consider the sixth constraint.

\begin{fact}
    Let $x_1^*, \ldots, x_n^*$ be $n \geq \tilde{\Omega}\paren{\frac{d^2 + \log^2(1/\beta)}{\eta^2}}$ i.i.d samples from $N(0, \Sigma)$ and let $y_i^* = \langle x_i^*, \thetastar \rangle + \zeta_i$ where $\zeta_i$ are i.i.d. samples from $N(0,1)$. Then we have that with probability $1-\beta$
    \[ \frac{1}{n} \sum_{i \in [n]} \Paren{y_i^* - \langle \thetals, x_i^* \rangle}^2 \leq 1 + O(\eta) \,.\]
\end{fact}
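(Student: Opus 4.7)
The plan is to exploit optimality of the least-squares estimator together with standard chi-squared concentration for the noise. First I would observe that, by definition, $\thetals = \arg\min_{\theta} \tfrac{1}{n}\sum_i (y_i^* - \langle \theta, x_i^*\rangle)^2$, so in particular
\[
\frac{1}{n}\sum_{i\in[n]} \Paren{y_i^* - \langle \thetals, x_i^*\rangle}^2 \leq \frac{1}{n}\sum_{i\in[n]} \Paren{y_i^* - \langle \thetastar, x_i^*\rangle}^2 = \frac{1}{n}\sum_{i\in[n]} \zeta_i^2,
\]
since $y_i^* - \langle \thetastar, x_i^*\rangle = \zeta_i$ by definition of the model.

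The next step is to control $\tfrac{1}{n}\sum_i \zeta_i^2$. Because the $\zeta_i$ are i.i.d.\ standard Gaussians, the sum $\sum_i \zeta_i^2$ is $\chi^2_n$-distributed, and by standard tail bounds (e.g., Laurent--Massart) we have
\[
\mathbb{P}\Paren{\frac{1}{n}\sum_{i \in [n]} \zeta_i^2 \geq 1 + 2\sqrt{\tfrac{\log(1/\beta)}{n}} + 2\tfrac{\log(1/\beta)}{n}} \leq \beta.
\]
So with probability at least $1-\beta$, $\tfrac{1}{n}\sum_i \zeta_i^2 \leq 1 + O(\sqrt{\log(1/\beta)/n} + \log(1/\beta)/n)$.

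Finally, I would plug in the hypothesis $n \geq \tilde{\Omega}((d^2 + \log^2(1/\beta))/\eta^2)$, which gives $\sqrt{\log(1/\beta)/n} \leq O(\eta)$ and $\log(1/\beta)/n \leq O(\eta^2) \leq O(\eta)$, so the right-hand side is bounded by $1 + O(\eta)$. Combining with the optimality bound yields the claim. There is no real obstacle here: the argument is purely a two-line chi-squared concentration bound once the variational characterization of $\thetals$ is invoked, and no sum-of-squares or robust machinery is needed for this particular fact.
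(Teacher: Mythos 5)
Your proof is correct, but it takes a genuinely different and more economical route than the paper's. The paper expands the loss around $\thetastar$ into three terms, bounds the noise term $\tfrac 1 n \sum_i \zeta_i^2$ by $1+O(\eta)$ via its empirical-covariance concentration fact in one dimension, bounds the quadratic term $\tfrac 1 n \sum_i \langle \thetastar - \thetals, x_i^*\rangle^2$ using the standard rate $\norm{\Sigma^{1/2}(\thetastar - \thetals)} \leq O(\sqrt{(d+\log(1/\beta))/n}) \leq O(\eta)$, and then controls the cross term by Cauchy--Schwarz. You instead invoke the variational characterization of $\thetals$ to drop straight to $\tfrac 1 n \sum_i \zeta_i^2$ and finish with a single $\chi^2$ tail bound; this bypasses both the cross term and any appeal to the convergence rate of the least-squares estimator, and the sample-size hypothesis comfortably absorbs the $\sqrt{\log(1/\beta)/n}$ deviation into $O(\eta)$. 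The only thing the paper's longer decomposition buys is that the intermediate quantities (in particular the closeness of $\thetals$ and $\thetastar$) are reused elsewhere in its feasibility arguments, but for this specific fact your two-line argument is complete and cleaner.
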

\begin{proof}
    We have that 
    \begin{align*}
        \frac{1}{n} \sum_{i \in [n]} \Paren{y_i^* - \langle \thetals, x_i^* \rangle}^2 &= \frac{1}{n} \sum_{i \in [n]} \Paren{y_i^* - \langle \thetastar, x_i^* \rangle}^2 + \frac{2}{n} \sum_{i \in [n]} \Paren{y_i^* - \langle \thetastar, x_i^* \rangle}\langle \thetastar - \thetals, x_i^*\rangle \\
        &+ \frac{1}{n} \sum_{i \in [n]} \langle \thetastar - \thetals, x_i^*\rangle^2\,.
    \end{align*}
    The first term is equal to $\frac{1}{n}\sum_{i \in [n]} \zeta_i^2$ which is bounded by $1+ \bigO{\eta}$ by~\cref{fact:empirical_cov_bound}. The last term is equal to $(\thetastar - \thetals)^\top \left(\frac{1}{n} \sum_{i \in [n]} x_i^* (x_i^*)^\top \right) (\thetastar - \thetals)$ which is bounded by $\bigO{1} \cdot \norm{\Sigma^{1/2}(\thetastar - \thetals)}_2^2$ by~\cref{fact:empirical_cov_bound}. Note that for the true least-squares solution we have that \[\norm{\Sigma^{1/2}(\thetastar - \thetals)} \leq \bigO{\sqrt{\frac{d+\log(1/\delta)}{n}}} \leq \bigO{\eta}\] and thus we conclude that the last term is bounded by $\bigO{\eta}$.
    
    Finally, we consider the cross-term. Applying Cauchy Schwartz we have that 
    \[\frac{2}{n} \sum_{i \in [n]} \Paren{y_i^* - \langle \thetastar, x_i^* \rangle}\langle \thetastar - \thetals, x_i^*\rangle \leq \sqrt{\frac{1}{n} \sum_{i\in[n]} \zeta_i^2} \cdot \left\Vert\left(\frac{1}{n}\sum_{i \in [n]} x_i^*(x_i^*)^\top\right)^{1/2}  (\theta^*-\thetals) \right\Vert,.\]
    By~\cref{fact:empirical_cov_bound} we have that $\left\Vert\left(\frac{1}{n}\sum_{i \in [n]} x_i^*(x_i^*)^\top\right)^{1/2}  (\theta^*-\thetals) \right\Vert \leq \bigO{\norm{\Sigma^{1/2}(\theta^*-\thetals)}}$ which is at most $\bigO{\eta}$. Furthermore, the sum in the square root is just the empirical variance of the noise which as previously mentioned is at most $\bigO{1}$ by~\cref{fact:empirical_cov_bound}. Thus, we bound the cross term by $\bigO{\eta}$ which completes the proof.
\end{proof}

Finally, we consider the last constraint. Assuming the inequalities from~\cref{lem:higher-order-stability} hold, we have that we have the empirical distribution has certifiably hypercontractivity. 

\begin{fact}[Certifiably Hypercontractivity from Concentration]
\label{fact:cert_bounded_moments_from_concentration-regression}
    Let $x_1, \ldots, x_n$ be $n \geq \bigO{\frac{d^2 + \log^2(1/\beta)}{\eta^2}}$ i.i.d samples from $N(0,\Sigma)$ satisfying
    \[ \abs{\sum_{i \in [n]} \iprod{(\Sigma^{-1/2} x_i)(\Sigma^{-1/2} x_i)^\top - I_d}{P}^2 - 2} \leq O(\eta \log^2 ( 1 /\eta)) \,,\]
    for all fixed $P \in \mathbb{R}^{d \times d}$ such that $\norm{P}_F = 1$. Then with probability $1-\beta$ we have that
    \[ \sststile{v}{} \Set{\frac{1}{n} \sum_{i=1}^n \langle x, v \rangle^4 \leq \bigO{1} \cdot \left(v^\top \left(\frac{1}{n} \sum_{i \in [n]} x_ix_i^\top\right) v\right)^2} \,.\]
\end{fact}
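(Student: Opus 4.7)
The plan is to reduce directly to \cref{fact:cert_bounded_moments_from_concentration}, which establishes essentially the same SoS inequality (with the sharper constant $3 + O(\eta \log^2(1/\eta))$) under the hypothesis that $\Sigma^{-1/2}x_1, \ldots, \Sigma^{-1/2}x_n$ are $\eta$-higher-order-good. The stated concentration bound is precisely property~2 of higher-order-goodness, but the proof of \cref{fact:cert_bounded_moments_from_concentration} also uses the first-moment bound (property~1) when controlling the cross terms via operator-norm closeness of the empirical and true covariances, as well as the conversion of $(v^\top \Sigma v)^2$ to $(v^\top \hat{\Sigma} v)^2$ at the end. This is exactly what the ``with probability $1-\beta$'' clause is accounting for.

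First, I would observe that since $x_i \sim N(0,\Sigma)$ i.i.d., the normalized samples $\Sigma^{-1/2}x_i \sim N(0, I_d)$ form a mean-zero fourth-moment-matching reasonable sub-Gaussian i.i.d.\ sequence. The assumed sample size $n \geq \tilde{\Omega}((d^2 + \log^2(1/\beta))/\eta^2)$ is exactly the one required by \cref{lem:higher-order-stability}, so with probability at least $1-\beta$ the samples $\Sigma^{-1/2}x_i$ are $\eta$-higher-order-good in the sense of \cref{def:higher_order_goodness}. Condition on this event.

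Next, apply \cref{fact:cert_bounded_moments_from_concentration} directly to $x_1, \ldots, x_n$. Its conclusion yields the degree-$4$ SoS proof
\[
  \sststile{v}{}\Set{\tfrac{1}{n}\sum_{i=1}^n \langle x_i, v\rangle^4 \leq \bigl(3 + O(\eta \log^2(1/\eta))\bigr)\cdot \bigl(v^\top \hat{\Sigma}\, v\bigr)^2},
\]
where $\hat{\Sigma} = \tfrac{1}{n}\sum_i x_i x_i^\top$. Since $\eta$ is smaller than a sufficiently small absolute constant, the prefactor is bounded by an absolute constant, giving the claimed $O(1)$ bound.

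There is no substantive obstacle: the argument is entirely a bookkeeping step combining \cref{lem:higher-order-stability} (to promote the single stated concentration hypothesis to full higher-order-goodness on a $(1-\beta)$-event) with the already-proven \cref{fact:cert_bounded_moments_from_concentration}. The only minor thing to verify is that the sample complexity matches, which it does by construction.
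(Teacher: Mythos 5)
Your proof is correct and follows the same route as the paper, which likewise just invokes \cref{fact:cert_bounded_moments_from_concentration} (together with the ambient concentration events that hold with probability $1-\beta$ at this sample size) and absorbs the $3+O(\eta\log^2(1/\eta))$ prefactor into the $O(1)$. Your version merely spells out the bookkeeping that the paper leaves implicit.
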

\begin{proof}
    Applying~\cref{fact:cert_bounded_moments_from_concentration} immediately yields the claim.

\end{proof}

Finally, we consider the last constraint.

\begin{fact}
    Let $x_1^*, \ldots, x_n^*$ be $n \geq \tilde{\Omega}\paren{\frac{d^2 + \log^2(1/\beta)}{\eta^2}}$ i.i.d samples from $N(0, \Sigma)$ and let $y_i^* = \langle x_i^*, \thetastar \rangle + \zeta_i$ where $\zeta_i$ are i.i.d. samples from $N(0,1)$. Then we have that with probability $1-\beta$
    \[ \frac{1}{n} \sum_{i \in [n]} \Paren{y_i^* - \langle \thetals, x_i^* \rangle}^4 \leq O(1) \,.\]
\end{fact}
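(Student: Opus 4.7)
The plan is to decompose the residual $y_i^* - \Iprod{\thetals, x_i^*}$ into the noise and model-misspecification part and handle each separately. Explicitly, writing $y_i^* = \Iprod{\thetastar, x_i^*} + \zeta_i$ we have
\[
y_i^* - \Iprod{\thetals, x_i^*} = \zeta_i + \Iprod{\thetastar - \thetals, x_i^*} \,.
\]
Applying the elementary inequality $(a+b)^4 \leq 8(a^4 + b^4)$ termwise, it suffices to bound
\[
\frac{1}{n}\sum_{i \in [n]} \zeta_i^4 \quad \text{and} \quad \frac{1}{n}\sum_{i \in [n]} \Iprod{\thetastar - \thetals, x_i^*}^4
\]
each by $O(1)$ with probability $1-\beta/2$.

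For the first term, since the $\zeta_i$ are i.i.d.\ $\calN(0,1)$, we have $\E \zeta_i^4 = 3$, and by standard concentration bounds for sub-exponential random variables (or equivalently, Hanson--Wright applied to the rank-one case), for $n \geq \tilde{\Omega}((d^2 + \log^2(1/\beta))/\eta^2) \geq \Omega(\log(1/\beta))$ it holds that $\tfrac{1}{n} \sum_{i} \zeta_i^4 \leq 4$ with probability at least $1-\beta/4$.

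For the second term, let $u = \thetastar - \thetals$ (a fixed vector, depending only on the samples). Standard least-squares theory implies $\norm{\Sigma^{1/2}(\thetals - \thetastar)}_2 \leq O(\sqrt{(d + \log(1/\beta))/n}) \leq O(\eta)$ with probability at least $1-\beta/4$ for our choice of $n$. Conditioned on this, the samples $x_i^*$ are Gaussian in the fixed direction $u$, so Gaussian hypercontractivity (cf.~\cref{fact:hypercontractivity}) gives $\E \Iprod{x_i^*, u}^4 \leq O(1) (u^\top \Sigma u)^2 \leq O(\eta^4)$. By another standard concentration argument (the quantity $\Iprod{x_i^*, u}^4$ is sub-Weibull of order $1/2$), for $n \geq \Omega(\log(1/\beta))$ it holds with probability at least $1-\beta/4$ that $\tfrac{1}{n}\sum_i \Iprod{u, x_i^*}^4 \leq O(\eta^4) \leq O(1)$.

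Combining these two bounds via the triangle inequality and union bound over the three $\beta/4$-probability events yields $\tfrac{1}{n}\sum_i (y_i^* - \Iprod{\thetals, x_i^*})^4 \leq O(1)$ with probability at least $1-\beta$. The main (minor) bookkeeping obstacle is simply tracking that $\thetals$ is itself a random quantity correlated with the $x_i^*$'s—but since the closeness bound on $\norm{\Sigma^{1/2}(\thetals - \thetastar)}$ already uses a high-probability event, the subsequent hypercontractivity argument can be applied conditionally, and nothing more delicate is needed.
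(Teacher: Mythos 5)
Your decomposition of the residual into $\zeta_i + \Iprod{\thetastar - \thetals, x_i^*}$ and the $(a+b)^4 \leq 8(a^4+b^4)$ step match the paper's proof exactly, and your treatment of the noise term $\tfrac 1n \sum_i \zeta_i^4$ is fine (modulo the fact that fourth powers of Gaussians are sub-Weibull of order $1/2$, so one really needs $n \gtrsim \log^2(1/\beta)$ rather than $\log(1/\beta)$ — harmless here since $n \geq \log^2(1/\beta)/\eta^2$).

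The gap is in the second term. You set $u = \thetastar - \thetals$ and then apply hypercontractivity plus a fixed-direction concentration bound for $\tfrac 1n\sum_i \Iprod{x_i^*, u}^4$. But $u$ is a function of the entire sample (both the $x_i^*$ and the $\zeta_i$), so it is not a fixed direction, and conditioning on the event $\norm{\Sigma^{1/2}(\thetals-\thetastar)} \leq O(\eta)$ does not repair this: under that conditioning the $x_i^*$ are no longer i.i.d.\ Gaussian, and even if they were, $u$ still varies with the realization of the sample. A pointwise concentration bound at a single $u$ cannot be "applied conditionally" to a data-dependent $u$; the step as written would fail. What is actually needed is a bound that holds \emph{uniformly over all directions simultaneously}, i.e.\ that with probability $1-\beta$, for every $u \in \R^d$, $\tfrac 1n \sum_i \Iprod{x_i^*, u}^4 \leq O(1)\,(u^\top \Sigma u)^2$. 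This is precisely the $\eta$-higher-order-goodness / certifiable hypercontractivity property (\cref{lem:higher-order-stability}, \cref{fact:cert_bounded_moments_from_concentration}) that the paper invokes, and it is the step for which the $n \gtrsim d^2/\eta^2$ sample complexity is genuinely used — a fixed-direction argument would only need $n \gtrsim \log(1/\beta)$, which should have been a warning sign. Once you replace your pointwise bound with the uniform one, the rest of your argument goes through and coincides with the paper's.
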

\begin{proof}
    Note that we have that
    \[ \frac{1}{n} \sum_{i \in [n]} \Paren{y_i^* - \langle \thetals, x_i^* \rangle}^4 \leq 4 \frac{1}{n} \sum_{i \in [n]} \Paren{\Paren{y_i^* - \langle \theta^*, x_i^* \rangle}^4 + \Iprod{x_i^*, \thetals - \theta^*}}\,.\]
    Note that the sum of the first terms is just the fourth moment of the Gaussian noise in the model. Thus, with probability $1-\beta$ it is bounded by $\bigO{1}$ by~\cref{lem:higher-order-stability}. Furthermore, by~\cref{lem:higher-order-stability} we also have with probability $1-\beta$ that
    \[ \frac{1}{n} \sum_{i \in [n]} \Iprod{x_i^*, \thetals - \theta^*} \leq \bigO{\Paren{(\thetals - \theta^*)^\top \Sigma (\thetals - \theta^*)}^2} \leq \bigO{\norm{\Sigma^{1/2} (\thetals - \theta^*)}_2^2} \leq \bigO{\alpha^2} \,.\]
    Thus we have that 
    \[ \frac{1}{n} \sum_{i \in [n]} \Paren{y_i^* - \langle \thetals, x_i^* \rangle}^4 \leq O(1) \,,\]
    as desired.
\end{proof}

\subsubsection{Mean Estimation}

In this section, we will show that the program in \cref{sec:mean_estimation} is feasible with probability at least $1-\delta$.
Indeed, let $x_1^*, \ldots, x_n^*$ be the uncorrupted samples and set $w_i = r_i, \mu' = \tfrac 1 n \sum_{i \in [n]} x_i^*$.
Then all but the last constraints are trivially satisfied.
That the last one (certifiable hypercontractivity) is satisfied with probability at least $1-\delta$ is a simple correlary of~\cref{fact:cert_bounded_moments_from_concentration}.

\begin{lemma}
    Let $x_1, \ldots, x_n$ be $n \geq \bigO{\frac{d^2 + \log^2(1/\beta)}{\eta^2}}$ i.i.d samples from $N(\mu,\Sigma)$ satisfying
    \[ \abs{\frac 1 n \sum_{i \in [n]} \iprod{(\Sigma^{-1/2} (x_i-\mu))(\Sigma^{-1/2} (x_i-\mu))^\top - I_d}{P}^2 - 2} \leq O(\eta \log^2 ( 1 /\eta)) \,,\]
    for all fixed $P \in \mathbb{R}^{d \times d}$ such that $\norm{P}_F = 1$. Let $\bar{\mu} = \frac{1}{n} \sum_{i \in [n]} x_i$. Then with probability $1-\beta$ we have that
    \[ \sststile{v}{} \Set{\frac{1}{n} \sum_{i=1}^n \langle x - \frac{1}{n} \sum_{i \in [n]} x_i, v \rangle^4 \leq (3 + \eta \log^2(1/\eta) \cdot \left(v^\top \left(\frac{1}{n} \sum_{i \in [n]} (x_i- \bar{\mu}) (x_i - \bar{\mu})^\top\right) v\right)^2} \,.\]
\end{lemma}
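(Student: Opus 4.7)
The plan is to reduce to the already-proved mean-zero statement \cref{fact:cert_bounded_moments_from_concentration}, applied to the shifted samples $y_i := x_i - \mu$, and then transfer the SoS conclusion from $y_i$ to $z_i := x_i - \bar{\mu} = y_i - \bar{y}$, where $\bar{y} := \bar{\mu} - \mu$. The lemma's concentration hypothesis is exactly the higher-order-goodness condition needed by \cref{fact:cert_bounded_moments_from_concentration} for the $y_i$, and standard concentration (\cref{fact:mean_and_cov_of_1-eps}) yields, with probability at least $1-\beta$, both $\|\Sigma^{-1/2}\bar{y}\|_2 = O(\eta)$ and $\Sigma_y := \tfrac 1 n \sum_i y_iy_i^\top$ within $1 \pm O(\eta\log(1/\eta))$ of $\Sigma$ in relative spectral norm. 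Applying \cref{fact:cert_bounded_moments_from_concentration} then gives
\[
\sststile{v}{}\Set{\tfrac 1 n \sum_i \langle y_i, v\rangle^4 \le (3 + O(\eta \log^2(1/\eta)))(v^\top \Sigma_y v)^2}.
\]

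Next, I expand $\langle z_i, v\rangle^4 = (\langle y_i, v\rangle - \langle \bar{y}, v\rangle)^4$, average over $i$, and use the identities $\tfrac 1 n \sum_i \langle y_i, v\rangle = \langle \bar{y}, v\rangle$ and $\tfrac 1 n \sum_i \langle y_i, v\rangle^2 = v^\top \Sigma_y v$ to obtain
\[
\tfrac 1 n \sum_i \langle z_i, v\rangle^4 = \tfrac 1 n \sum_i \langle y_i, v\rangle^4 - 4\langle \bar{y}, v\rangle \cdot \tfrac 1 n \sum_i \langle y_i, v\rangle^3 + 6\langle \bar{y}, v\rangle^2 (v^\top \Sigma_y v) - 3\langle \bar{y}, v\rangle^4.
\]
The quartic-in-$y_i$ term is controlled by the SoS bound above. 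The fixed-matrix PSD inequality $\bar y \bar y^\top \preceq O(\eta^2)\Sigma$ (coming from $\|\Sigma^{-1/2}\bar y\|_2 = O(\eta)$) together with \cref{fact:nonnegative-quadratic} gives the SoS proof $\langle \bar y, v\rangle^2 \le O(\eta^2) v^\top \Sigma_y v$, which reduces both $\bar y$-only terms to $O(\eta^2)(v^\top \Sigma_y v)^2$. For the cubic cross term, I apply the per-sample SoS inequality $-2 s \langle \bar y, v\rangle \langle y_i, v\rangle^3 \le s^2\langle y_i, v\rangle^4 + \langle \bar y, v\rangle^2 \langle y_i, v\rangle^2$ (from expanding $(s\langle y_i, v\rangle^2 + \langle \bar y, v\rangle \langle y_i, v\rangle)^2 \ge 0$); averaging, multiplying by $2/s$, and using the bounds above gives an SoS certificate bounding the cubic cross term by $O(s)(v^\top \Sigma_y v)^2 + O(\eta^2/s)(v^\top \Sigma_y v)^2$. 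Choosing the constant $s \sim \eta$ yields an $O(\eta)$ bound, comfortably within the $O(\eta\log^2(1/\eta))$ slack. Finally, $\Sigma_y = \Sigma_z + \bar{y}\bar{y}^\top$ yields the SoS relation $v^\top \Sigma_y v \le (1 + O(\eta^2)) v^\top \Sigma_z v$, converting the right-hand side to $(3 + O(\eta \log^2(1/\eta)))(v^\top \Sigma_z v)^2$.

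The main obstacle is the cubic cross term, since a naive AM--GM with a fixed scalar of order $1$ leaves a residual proportional to $(\tfrac 1 n \sum_i \langle y_i, v\rangle^3)^2 \le (v^\top\Sigma_y v)(\tfrac 1 n \sum_i \langle y_i, v\rangle^4) = O((v^\top \Sigma_y v)^3)$, which swamps the allowed slack. The resolution above is to use a weighted per-sample SoS identity with a small scalar $s$ whose value can be optimized \emph{as a constant} (independent of the indeterminate $v$), exploiting the available SoS fact $\langle \bar y, v\rangle^2 \le O(\eta^2) v^\top \Sigma_y v$ to shrink the cubic-term residual to $O(\eta)(v^\top \Sigma_y v)^2$ without invoking non-polynomial operations such as square roots or division by $v^\top \Sigma_y v$.
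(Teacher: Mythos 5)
Your proof is correct, but it routes the argument differently from the paper. The paper's proof spends most of its effort transferring the \emph{hypothesis}: it shows by a scalar binomial-expansion-plus-Cauchy--Schwarz computation that the second-moment concentration condition $\lvert\tfrac 1n\sum_i\iprod{(\Sigma^{-1/2}(x_i-c))(\Sigma^{-1/2}(x_i-c))^\top-I_d}{P}^2-2\rvert\le O(\eta\log^2(1/\eta))$ continues to hold when the centering $c$ is changed from $\mu$ to $\bar\mu$, and then re-runs the tensor-decomposition proof of \cref{fact:cert_bounded_moments_from_concentration} on the $\bar\mu$-centered samples (noting only that Terms B and C change benignly). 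You instead invoke \cref{fact:cert_bounded_moments_from_concentration} as a black box on the $\mu$-centered samples $y_i$ and transfer the \emph{conclusion}: you expand $\tfrac 1n\sum_i\iprod{y_i-\bar y}{v}^4$ binomially, kill the purely-$\bar y$ terms via the fixed PSD inequality $\bar y\bar y^\top\preceq O(\eta^2)\Sigma_y$ (which is SoS-certified by \cref{fact:nonnegative-quadratic}), and handle the problematic cubic cross term with a weighted per-sample square $(s\iprod{y_i}{v}^2+\iprod{\bar y}{v}\iprod{y_i}{v})^2\ge 0$ with the constant $s\sim\eta$ tuned so that both residuals land at $O(\eta)(v^\top\Sigma_y v)^2$; the final passage from $\Sigma_y$ to $\Sigma_z$ uses $\Sigma_y\preceq(1+O(\eta^2))\Sigma_z$ together with \cref{fact:sos-squaring}. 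Both proofs lean on the same concentration inputs (\cref{fact:empirical_mean_bound,fact:empirical_cov_bound}) and both ultimately fight the same odd cross term — the paper does so at the level of the scalar concentration statement, you at the level of the degree-4 SoS certificate. Your version is somewhat more modular, since it never reopens the proof of \cref{fact:cert_bounded_moments_from_concentration}, at the cost of having to produce the explicit weighted-AM--GM certificate for the cubic term; the constant you obtain, $3+O(\eta\log^2(1/\eta))$, matches what the paper's own argument actually delivers.
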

\begin{proof}           
    Let $\bar{\mu} = \frac{1}{n} \sum_{i=1}^n x_i$ be the empirical mean. We first will argue that if 
    \[ \abs{\frac 1 n \sum_{i \in [n]} \iprod{(\Sigma^{-1/2} (x_i-\mu))(\Sigma^{-1/2} (x_i-\mu))^\top - I_d}{P}^2 - 2} \leq O(\eta \log^2 ( 1 /\eta)) \,,\]
    then with probability $1-\beta$ we also have that 
    \[ \abs{\frac 1 n \sum_{i \in [n]} \iprod{(\Sigma^{-1/2} (x_i-\bar{\mu}))(\Sigma^{-1/2} (x_i-\bar{\mu}))^\top - I_d}{P}^2 - 2} \leq O(\eta \log^2 ( 1 /\eta)) \,.\]
    Note that we can consider the samples $y_i = \Sigma^{-1/2} x_i$ which are distributed i.i.d. according to $N(\bar{\mu}, I_d)$. Therefore, the statement reduces to bounding
    \[ \abs{\frac 1 n \sum_{i \in [n]} \Iprod{(y_i - \mu)(y_i - \mu)^\top - I_d, P}^2 - \Iprod{(y_i - \bar{\mu})(y_i - \bar{\mu})^\top - I_d, P}^2} \,.\]
    We have that 
    \begin{align*}
        &\frac 1 n \sum_{i \in [n]} \Iprod{(y_i - \mu \pm \bar{\mu})(y_i - \mu \pm \bar{\mu})^\top - I_d, P}^2 \\
        &= \frac 1 n \sum_{i \in [n]} \left( \Iprod{y_i - \bar{\mu})(y_i - \bar{\mu})^\top - I_d, P} + 2 \Iprod{(y_i - \bar{\mu})(\bar{\mu} - \mu)^\top, P} + \Iprod{(\bar{\mu} - \mu)(\bar{\mu} - \mu)^\top, P}\right)^2 \,.
    \end{align*}
    Note that if we expand out the square the first term is $\frac 1 n \sum_{i \in [n]} \Iprod{y_i - \bar{\mu})(y_i - \bar{\mu})^\top - I_d, P}^2$ so it suffices to bound the other terms. First, we consider the term $\frac 1 n \sum_{i \in [n]} \Iprod{(y_i - \bar{\mu})(\bar{\mu} - \mu)^\top, P}^2$. Note that we have that
    \[ \frac 1 n \sum_{i \in [n]} \Iprod{(y_i - \bar{\mu})(\bar{\mu} - \mu)^\top, P}^2 \leq \frac 1 n \sum_{i \in [n]} \left( \norm{\bar{\mu} - \mu} \norm{y_i - \bar{\mu}} \norm{P}_F\right)^2 \leq \norm{\bar{\mu} - \mu}  \cdot \frac{1}{n} \sum_{i \in [n]} \norm{y_i - \bar{\mu}}^2 \,,\]
    and that the empirical covariance is bounded by $\bigO{1}$ (\cref{fact:empirical_mean_bound} and~\cref{fact:empirical_cov_bound}). Thus, applying~\cref{fact:empirical_mean_bound} to bound the difference between the true and empirical mean we get that this is at most $\bigO{\eta^2}$.

    Now, consider the term $\Iprod{(\bar{\mu} - \mu)(\bar{\mu} - \mu)^\top, P}^2$. Note that this is at most $\norm{\bar{\mu} - \mu}^2 \cdot \norm{P}_F \leq \norm{\bar{\mu} - \mu}^2$ and thus by the same reasoning as above it is bounded by $\bigO{\eta^2}$.

    Next, we consider the cross terms. First, we consider all cross terms containing $\Iprod{(\bar{\mu} - \mu)(\bar{\mu} - \mu)^\top, P}$. Note that we have that
    \begin{align*}
        &\frac{1}{n} \sum_{i=1}^n \Iprod{y_i - \bar{\mu})(y_i - \bar{\mu})^\top - I_d, P} \Iprod{(\bar{\mu} - \mu)(\bar{\mu} - \mu)^\top, P} \\
        & \qquad \leq \sqrt{\frac 1 n \sum_{i \in [n]} \Iprod{y_i - \bar{\mu})(y_i - \bar{\mu})^\top - I_d, P}^2} \cdot \sqrt{\Iprod{(\bar{\mu} - \mu)(\bar{\mu} - \mu)^\top, P}^2} \,.
    \end{align*}
    Thus, it suffices to bound $\frac 1 n \sum_{i \in [n]} \Iprod{y_i - \bar{\mu})(y_i - \bar{\mu})^\top - I_d, P}^2 \leq \bigO{1}$, which holds by the assumption as well as~\cref{fact:empirical_mean_bound}. We therefore have that this is bounded by $\bigO{\eta}$ by applying our bound on $\Iprod{(\bar{\mu} - \mu)(\bar{\mu} - \mu)^\top, P}^2$ from above. Similarly, we can also bound 
    \[\frac{1}{n} \sum_{i=1}^n \Iprod{(y_i - \bar{\mu})(\bar{\mu} - \mu)^\top, P} \Iprod{(\bar{\mu} - \mu)(\bar{\mu} - \mu)^\top, P} \leq \bigO{\eta} \,.\]
    Finally, it remains to bound $\frac 1 n \sum_{i \in [n]} \Iprod{y_i - \bar{\mu})(y_i - \bar{\mu})^\top - I_d, P} \Iprod{(y_i - \bar{\mu})(\bar{\mu} - \mu)^\top, P}$. Similarly, to the previous two terms, we have that this is bounded by $\bigO{1} \cdot \sqrt{\frac 1 n \sum_{i \in [n]} \Iprod{(y_i - \bar{\mu})(\bar{\mu} - \mu)^\top, P}^2}$ which is at most $\bigO{\eta}$. Thus, we have bounded all error terms and we can conclude that 
    \[ \abs{\frac 1 n \sum_{i \in [n]} \Iprod{(y_i - \mu)(y_i - \mu)^\top - I_d, P}^2 - \Iprod{(y_i - \bar{\mu})(y_i - \bar{\mu})^\top - I_d, P}^2}  \leq \bigO{\eta} \,.\]
    This implies that 
    \[ \abs{\frac 1 n \sum_{i \in [n]} \iprod{(\Sigma^{-1/2} (x_i-\bar{\mu}))(\Sigma^{-1/2} (x_i-\bar{\mu}))^\top - I_d}{P}^2 - 2} \leq O(\eta \log^2 ( 1 /\eta)) \,.\]
    Then, the SoS proof follows a similar way to~\cref{fact:cert_bounded_moments_from_concentration}. The main differences are that in Terms B, C we have $(x_i - \bar{\mu})^{\otimes 2}$ instead of $(x_i - \mu)^{\otimes 2}$. However, this is fine because $\bar{\mu}, \mu$ are close by~\cref{fact:empirical_mean_bound}. 
\end{proof}

\subsubsection{Covariance Estimation}
\label{sec:cov-feasibility}

\paragraph{Estimation in Spectral Norm.} We first argue that the first program in~\cref{sec:cov-estimation} is feasible. Let $x_i'$ be the true samples $x_i^*$, $w_i$ be the indicators of whether a given sample is corrupted, and $\Sigma'$ be the empirical covariance. Note that that all but the last constraint are satisfied trivially. Finally, by~\cref{fact:cert_bounded_moments_from_concentration} we have that the last constraint is satisfied whenever the concentration bounds from~\cref{lem:higher-order-stability} hold.

\paragraph{Estimation in Frobenius Norm.} 

Note that that the system $\calA$ is feasible since $\Sigma^{-1/2}x_1^*, \ldots, \Sigma^{-1/2}x_n^*$ are $\eta$-good and $\eta$-higher-order-good.
    In particular, let $x_i' = x_i^*$ and $w_i = r_i$.
    Since the input $x_i$ is an $\eta$-corruption of $x_1^*, \ldots, x_n^*$, clearly the first three sets of constraints are satisfied.
    Setting $\Sigma' = \bar{\Sigma} \coloneqq \tfrac 1 n \sum_{i=1}^n x_i^* (x_i^*)^\top$, the fourth constraint is also satisfied.
    It remains to argue that the following SoS proof exists
    \[
        \sststile{4}{P} \Biggl\{\frac{1}{n} \sum_{i \in [n]} \Iprod{P, x_i'(x_i')^\top  - \Sigma'}^2  \leq \Paren{2 + \bigO{\eta \log^2(1/\eta)}} \Norm{P}_F^2 \,.
    \]
    Note that when we can view $P$ as a $d^2$-dimensioanl vector-variable and thus, it suffices to show that the matrix $\tfrac{1}{n}\sum_{i=1}^n \opvec(x_i^*(x_i^*)^\top - \tilde{\Sigma}) \opvec(x_i^*(x_i^*)^\top - \tilde{\Sigma})^\top$ is PSD-dominated by $(2+\bigO{\eta \log^2(1/\eta)}) I_{d^2}$
    This is equivalent to showing that for all matrices $P \in \R^{d \times d}$ of Frobenius norm 1, it holds that
    \[
        \tfrac 1 n \sum_{i=1}^n \Iprod{P, x_i^* (x_i^*)^\top - \bar{\Sigma}}^2 \leq 2 + O(\eta \log^2(1/\eta)) \,.
    \]
    Fix such a matrix $P$ and let $\hat{P} = (\Sigma)^{1/2} P (\Sigma)^{1/2}$ and $\tilde{x}_i = (\Sigma)^{-1/2}x_i^*$.
    Then our condition is equivalent to showing that
    \[
        \tfrac 1 n \sum_{i=1}^n \Iprod{\hat{P}, \tilde{x}_i \tilde{x}_i^\top - \tfrac 1 n \sum_{i=1}^n \tilde{x}_i \tilde{x}_i^\top }^2 \leq 2 + O(\eta \log^2(1/\eta)) \,.
    \]
    Note that
    \begin{align*}
        &\tfrac 1 n \sum_{i=1}^n \Iprod{\hat{P}, \tilde{x}_i \tilde{x}_i^\top - I_d + I_d - \tfrac 1 n \sum_{i=1}^n \tilde{x}_i \tilde{x}_i^\top}^2 \\
        &= \tfrac 1 n \sum_{i=1}^n \Iprod{\hat{P}, \tilde{x}_i \tilde{x}_i^\top - I_d}^2 + 2 \Paren{\tfrac 1 n \sum_{i=1}^n \Iprod{\hat{P}, \tilde{x}_i \tilde{x}_i^\top - I_d}}\Iprod{\hat{P}, I_d - \tfrac 1 n \sum_{i=1}^n \tilde{x}_i \tilde{x}_i^\top} + \Iprod{\hat{P}, I_d - \tfrac 1 n \sum_{i=1}^n \tilde{x}_i \tilde{x}_i^\top}^2\,.
    \end{align*}
    Note that the $\tilde{x}_i$ are $\eta$-higher-order-good by assumption.
    Also, since $\Sigma \preceq (1+O(\eta \log(1/\eta))) I_d$, it holds that $\Norm{\hat{P}}_F \leq \norm{\Sigma} \norm{P}_F \leq 1 + O(\eta \log(1/\eta))$.
    Hence, the first term is at most
    \[
    (2 + O(\eta \log(1/\eta))) \norm{\hat{P}}_F^2 \leq 2 + O(\eta \log(1/\eta))
    \]
    by~\cref{lem:higher-order-stability} (Point 2).
    Similarly, by Point 1 of~\cref{lem:higher-order-stability}, the last term is at most
    \[
        \Norm{\hat{P}}_F^2 \Norm{I_d - \tfrac 1 n \sum_{i=1}^n \tilde{x}_i \tilde{x}_i^\top}_F^2 \leq (1+O(\eta \log(1/\eta)))^2 O(\eta^2 \log^2(1/\eta)) = O(\eta^2 \log^2(1/\eta)) \leq O(\eta \log(1/\eta))\,.
    \]
    Finally, again by Point 1 of~\cref{lem:higher-order-stability}, the middle term is at most
    \[
        2\Norm{\hat{P}}_F^2 \cdot O(\eta \log(1/\eta)) \cdot  O(\eta \log(1/\eta)) \leq O(\eta \log(1/\eta)) \,.
    \]

\section{Lower Bounds}
In this section, we will present pure DP lower bounds for private regression and covariance-aware mean estimation under differential privacy. The following theorems are information theoretic lower bounds which follow from packing lower bounds \cite{hardt2010geometry,hopkins2022efficient}. 

The first theorem we show gives an information-theoretic lower bound for regression with unknown covariance under pure differential privacy.

\begin{theorem}[Pure DP Lower Bound for Private Regression]
\label{thm:pure-dp-lb-regresion}
Let $R > 8\alpha /\sqrt{\covscale}> 0$, $\covscale > 1$, and $ \epsilon, \beta \in \paren{0, 1}$.
Suppose there exists an $\epsilon$-DP algorithm $M$ such that for every distribution $D$ on $\R^d \times \R$ distributed as $(x, y)$, where $x$ is distributed as $\mathcal{N}(0, \Sigma)$, where $\Sigma \preccurlyeq \covscale \cdot I$ and $y = \iprod{\theta}{x} + \zeta$, where $\norm{\theta}_2 \le R$, and $\zeta \sim \mathcal N (0,1)$, given $(x_1, y_1), \dots, (x_n, y_n) \sim D$, with probability $1-\beta$ the algorithm $M$ outputs $\hat{\theta}$ such that $\norm{\Sigma^{1/2} \paren{\theta - \hat{\theta}}}_2 \le \alpha$. Then we have that
\begin{equation*}
n = \Omega \Paren{
\frac{d + \log\paren{1/\beta}}{\alpha^2}
+
\frac{d + \log\paren{1/\beta}}{\alpha \eps}
+
\frac{d \log \paren{R \sqrt{\covscale}/\alpha}+ \log\paren{1/\beta}}{\eps}
} \, .
\end{equation*}
\end{theorem}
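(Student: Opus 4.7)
The plan is to prove each of the three terms in the lower bound separately, each via a standard reduction to a testing or packing problem; throughout we may specialize to $\Sigma = L \cdot I_d$, so that the error metric $\norm{\Sigma^{1/2}(\hat\theta-\theta)}_2$ is exactly $\sqrt{L}\,\norm{\hat\theta-\theta}_2$ and the admissible parameter set becomes the Euclidean $R$-ball. The statistical term $(d+\log(1/\beta))/\alpha^2$ requires no use of privacy and follows from standard Le Cam / Fano arguments: over either a two-point or a $2^{\Omega(d)}$-sized $\Omega(\alpha/\sqrt{L})$-separated sub-packing of this ball, any estimator achieving accuracy $\alpha$ in the $\Sigma^{1/2}$-norm with failure probability $\beta$ must use $\Omega((d+\log(1/\beta))/\alpha^2)$ samples.

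For the term $d\log(R\sqrt{L}/\alpha)/\epsilon$, we invoke the classical pure-DP packing lemma of~\cite{hardt2010geometry}. A standard volumetric argument produces a $(2\alpha/\sqrt{L})$-packing $\{\theta_i\}_{i=1}^P$ of the Euclidean $R$-ball with $P \geq (R\sqrt{L}/(C\alpha))^d$ for an absolute constant $C$. Let $D_i$ denote the induced joint distribution on $(x,y)$ and define $B_i := \{\hat\theta : \norm{\Sigma^{1/2}(\hat\theta-\theta_i)}_2 \leq \alpha\}$; by the packing property these sets are pairwise disjoint, and by the accuracy assumption $M(X) \in B_i$ with probability at least $1-\beta$ whenever $X \sim D_i^{\otimes n}$. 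The high-probability form of the packing lemma then forces $n \geq \Omega((\log P + \log(1/\beta))/\epsilon)$, yielding the stated bound.

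For the remaining $(d+\log(1/\beta))/(\alpha\epsilon)$ term, we use a TV-scaled packing argument in the spirit of~\cite{hopkins2022efficient}. We take a $(2\alpha/\sqrt{L})$-packing $\{\theta_i\}_{i=1}^P$ with $P \geq 2^{\Omega(d)}$ inside a small $\ell_2$-ball of radius $O(\alpha/\sqrt{L})$ contained in the $R$-ball (possible by the assumption $R > 8\alpha/\sqrt{L}$). The crucial observation is that under our model the $x$-marginal is $\calN(0,\Sigma)$ regardless of $\theta$, so $D_{\theta_i}$ and $D_{\theta_j}$ differ only through the conditional $y\mid x \sim \calN(\Iprod{\theta, x},1)$; combining $\mathrm{TV}(\calN(\mu_1,1),\calN(\mu_2,1)) = \bigO{|\mu_1-\mu_2|}$ with $\expecf{x \sim \calN(0,\Sigma)}{|\Iprod{\theta_i-\theta_j, x}|} = \bigO{\norm{\Sigma^{1/2}(\theta_i-\theta_j)}_2}$, Jensen's inequality yields a per-sample TV bound of $\bigO{\alpha}$. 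The TV-scaled pure-DP packing lower bound, asserting $n \geq \Omega((\log P + \log(1/\beta))/(\epsilon\gamma))$ whenever the per-sample TV distance is at most $\gamma$, then delivers the stated bound. The main technical care lies in (i) the TV calculation above (for which the shared $x$-marginal is essential to avoid picking up extra factors), and (ii) boosting each packing lemma to its $(1-\beta)$-success-probability form rather than the standard constant-probability version; both are routine adaptations of the cited results.
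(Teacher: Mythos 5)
Your proposal follows the same overall route as the paper: the statistical term is dispatched by standard non-private minimax arguments, and both privacy terms come from the high-probability pure-DP packing meta-theorem (\cref{thm:lb-meta}, from~\cite{hopkins2022efficient}), applied once with a coarse packing of the full $R$-ball and trivial TV bound $\gamma \le 1$, and once with a fine $2^{\Omega(d)}$-packing of a tiny ball around the origin together with a per-sample TV bound of $O(\alpha)$. The one place where you genuinely diverge is the TV bound for the $\tfrac{d+\log(1/\beta)}{\alpha\eps}$ term: the paper's \cref{lem:regression-tv-distance} writes $(x,y)$ as a joint $(d+1)$-dimensional Gaussian, computes $\Sigma_{\theta_2}^{-1}\Sigma_{\theta_1} - I$ and $\operatorname{Tr}(M^2)$ explicitly, and invokes the Devroye--Mehrabian--Reddad formula (\cref{thm:tv-distance-gaussians}), whereas you exploit that the $x$-marginals coincide and bound $\mathrm{TV}(D_{\theta_1},D_{\theta_2}) = \E_x\,\mathrm{TV}\bigl(\calN(\iprod{\theta_1,x},1),\calN(\iprod{\theta_2,x},1)\bigr) \le O(1)\,\E_x\lvert\iprod{\theta_1-\theta_2,x}\rvert = O\bigl(\norm{\Sigma^{1/2}(\theta_1-\theta_2)}_2\bigr)$. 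Your conditional argument is more elementary (no matrix inversion or eigenvalue bookkeeping) and generalizes immediately beyond Gaussian covariates, while the paper's computation is exact and self-contained given the cited Gaussian TV formula; both yield the same $O(\alpha)$ bound and hence the same conclusion. The remaining steps you flag as routine (disjointness of the accuracy balls from the $2\alpha$-separation in $\Sigma^{1/2}$-norm, and the $(1-\beta)$-success form of the packing lemma) are indeed exactly what the paper uses.
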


The second theorem we show gives an information-theoretic lower bound for covariance-aware mean estimation under pure differential privacy.

\begin{theorem}[Pure DP Lower Bound for Covariance-Aware Gaussian Mean Estimation]
\label{thm:pure-dp-lb-mean-etimation}
Let $R > 8 \alpha  / \sqrt{\covscale}>  0$, $\covscale \ge 1$, and, $\eps, \beta \in \paren{0,1}$. Suppose there exists an $\eps$-DP algorithm $M$ such that for every distribution $\mathcal{N} (\mu, \Sigma)$ on $\R^d$, where $\norm{\mu}_2 \le R$, and $\Sigma \succcurlyeq \frac{1}{\covscale} \cdot I$, given $X_1 \dots, X_n \sim \mathcal{N} (\mu, \Sigma)$, with probability $1-\beta$ the algorithm $M$ outputs $\hat{\mu}$ such that $\norm{\Sigma^{-1/2}\paren{\mu - \hat{\mu}}}_2 \le \alpha$. Then we have that
\begin{equation*}
n = \Omega \Paren{
\frac{d + \log\paren{1/\beta}}{\alpha^2}
+
\frac{d + \log\paren{1/\beta}}{\alpha \eps}
+
\frac{d \log \paren{R \sqrt{\covscale}/\alpha}+ \log\paren{1/\beta}}{\eps}
} \, .
\end{equation*}
\end{theorem}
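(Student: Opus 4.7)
The plan is to decompose the lower bound into its three natural summands and establish each via standard packing or coupling arguments, specializing the covariance to $\Sigma = \tfrac{1}{\covscale} I$ (which is admissible since the hypothesis only requires $\Sigma \succeq \tfrac{1}{\covscale} I$). With this choice, $\|\Sigma^{-1/2}(\mu - \mu')\|_2 = \sqrt{\covscale}\,\|\mu - \mu'\|_2$, so a Mahalanobis-$\alpha$ accuracy guarantee translates to a Euclidean-$\alpha/\sqrt{\covscale}$ guarantee on the mean, and $N(\mu, \Sigma)$, $N(\mu', \Sigma)$ have total variation distance $\Theta(\sqrt{\covscale}\,\|\mu-\mu'\|_2) = \Theta(\|\Sigma^{-1/2}(\mu-\mu')\|_2)$ whenever that quantity is at most a constant. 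This reduces the whole problem to a packing/TV calculation in Euclidean space.

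For the statistical term $\tfrac{d + \log(1/\beta)}{\alpha^2}$, I would invoke the classical (non-private) Gaussian mean estimation lower bound: a standard Fano (or Assouad) argument over a $\Theta(\alpha/\sqrt{\covscale})$-packing of a radius-$\alpha/\sqrt{\covscale}$ ball yields the $d/\alpha^2$ term, and a two-point Le Cam bound with failure probability $\beta$ gives the $\log(1/\beta)/\alpha^2$ term. These lower bounds apply a fortiori to any $\eps$-DP mechanism.

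For the pure-DP packing term $\tfrac{d\log(R\sqrt{\covscale}/\alpha) + \log(1/\beta)}{\eps}$, I would construct a Euclidean $(3\alpha/\sqrt{\covscale})$-packing $\mu_1, \ldots, \mu_N$ of the ball $B(0,R)$ of size $N \geq (c R \sqrt{\covscale}/\alpha)^d$ (which exists by the standard volume argument whenever $R \gtrsim \alpha/\sqrt{\covscale}$, which is guaranteed by the hypothesis $R > 8\alpha/\sqrt{\covscale}$). By the accuracy assumption, on input $n$ i.i.d.\ samples from $N(\mu_i, \Sigma)$ the mechanism must output a point within $\alpha$ Mahalanobis distance of $\mu_i$ with probability $\geq 1-\beta$, and the $3\alpha$-packing separation guarantees that these events are disjoint. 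Applying the Hardt--Talwar packing lower bound for pure DP~\cite{hardt2010geometry} to these $N$ disjoint events yields $n \geq \Omega(\log N/\eps) = \Omega(d\log(R\sqrt{\covscale}/\alpha)/\eps)$; repeating with a fixed two-point instance and amplifying failure probability gives the $\log(1/\beta)/\eps$ piece.

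For the hybrid term $\tfrac{d + \log(1/\beta)}{\alpha\eps}$, which is the main obstacle, I would combine a local packing with the group-privacy / coupling inequality. Specifically, I would place a $(3\alpha)$-packing $\{\mu_i\}_{i=1}^{N'}$ of the Mahalanobis sphere of radius $\Theta(\alpha)$ around the origin; the sphere volume argument gives $N' \geq 2^{\Omega(d)}$. Any two such means satisfy $\|\Sigma^{-1/2}(\mu_i - \mu_j)\|_2 = O(\alpha)$, so $\mathrm{TV}(N(\mu_i,\Sigma), N(\mu_j,\Sigma)) \leq O(\alpha)$, and hence $n$ i.i.d.\ samples from the two distributions admit a coupling with expected Hamming distance $O(\alpha n)$. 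The $\eps$-DP group-privacy inequality then gives $e^{O(\alpha n\eps)} \geq N'$ for the mechanism to distinguish the $N'$ hypotheses with constant probability, yielding $n \geq \Omega(\log N'/(\alpha \eps)) = \Omega(d/(\alpha\eps))$; an analogous two-point version at TV distance $\Theta(\alpha)$ yields the $\log(1/\beta)/(\alpha\eps)$ term. Combining the three bounds and taking the maximum delivers the theorem. The technical care lies entirely in ensuring, uniformly over all three packings, that the Mahalanobis separation $\geq 3\alpha$ is compatible with the accuracy-$\alpha$ output and with the total-variation/Hamming coupling bound used in the group-privacy step.
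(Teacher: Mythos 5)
Your argument is correct and follows the same packing-based strategy as the paper: specialize to $\Sigma = \tfrac{1}{\covscale} I$ so the Mahalanobis-$\alpha$ guarantee becomes a Euclidean $\alpha/\sqrt{\covscale}$ guarantee, and then run packing lower bounds at the two natural scales (radius $\Theta(\alpha/\sqrt{\covscale})$ for the first two terms, radius $R$ for the pure-DP term). The only presentational difference is that the paper imports the $\tfrac{d+\log(1/\beta)}{\alpha^2}$, $\tfrac{d+\log(1/\beta)}{\alpha\eps}$, and $\tfrac{\log(1/\beta)}{\eps}$ contributions directly from Theorem~7.3 of~\cite{hopkins2022efficient} (the identity-covariance case) and spells out only the $\tfrac{d\log(R\sqrt{\covscale}/\alpha)}{\eps}$ term via \cref{thm:lb-meta}, whereas you re-derive all three from scratch; your Fano/Le~Cam and local-packing-plus-group-privacy sketches are precisely the arguments underlying the cited result, so the mathematical content is the same.
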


\subsection{Preliminaries}

We will make use of the following theorem in proving packing lower bounds.

\begin{theorem}[Theorem~7.1 in \cite{hopkins2022efficient}]
\label{thm:lb-meta}
Let $\mathcal P=\left\{P_1, \ldots, P_m\right\}$ be a set of distributions, and $P_O$ be a distribution such that for every $P_i \in \mathcal{P},\left\|P_i-P_O\right\|_{\mathrm{TV}} \leq \gamma$. Let $\mathcal{G}=\left\{G_1, \ldots, G_m\right\}$ be a collection of disjoint subsets of some set $\mathcal{Y}$. If there is an $\varepsilon$-DP algorithm $M$ such that $\mathbb{P}_{X \sim P_i^n}\left[M(X) \in G_i\right] \geq 1- \beta$ for all $i \in[m]$, then
$$
n \geq \Omega\left(\frac{\log m+\log (1 / \beta)}{\gamma\left(e^{2 \varepsilon}-1\right)}\right) .
$$
Note that for the usual regime $\varepsilon \leq 1$, we can replace the $e^{2 \varepsilon}-1$ in the denominator with $\varepsilon$.
\end{theorem}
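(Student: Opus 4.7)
The plan is to prove this packing lower bound by combining a maximal coupling between $P_i^n$ and $P_O^n$ with group privacy, and then exploiting the disjointness of the $G_i$'s to produce a contradiction of the form "too little probability mass is available for all $m$ targets simultaneously".

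First, since $\Norm{P_i - P_O}_{\mathrm{TV}} \leq \gamma$, I would invoke the maximal coupling lemma to construct, for each $i$, a joint distribution on pairs $(X, X')$ with marginals $P_i^n$ and $P_O^n$ such that the coordinate-wise disagreement indicators $\mathbf{1}\{X_j \neq X'_j\}$ are independent Bernoullis of mean at most $\gamma$. Consequently the Hamming distance $K = d_{\mathrm{Ham}}(X, X')$ is stochastically dominated by $\mathrm{Bin}(n,\gamma)$, and its moment generating function satisfies $\mathbb{E}[e^{2\varepsilon K}] \leq (1 - \gamma + \gamma e^{2\varepsilon})^n \leq \exp\!\bigl(n\gamma(e^{2\varepsilon}-1)\bigr)$.

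Second, I would combine this with group privacy: for any fixed datasets at Hamming distance $k$, an $\varepsilon$-DP mechanism $M$ satisfies $\Pr[M(X)\in G] \leq e^{k\varepsilon}\Pr[M(X')\in G]$ for every measurable $G$. Integrating over the coupling and applying Cauchy--Schwarz (together with $\Pr_M[M(X')\in G_i]^2 \leq \Pr_M[M(X')\in G_i]$) gives
\[
    \Pr_{X \sim P_i^n}\bigl[M(X) \in G_i\bigr]
    \;\leq\; \mathbb{E}_{(X,X')}\!\bigl[e^{\varepsilon K}\,\Pr_M[M(X') \in G_i]\bigr]
    \;\leq\; \sqrt{\mathbb{E}[e^{2\varepsilon K}]}\,\sqrt{q_i},
\]
where $q_i = \Pr_{X' \sim P_O^n}[M(X')\in G_i]$. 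The hypothesis $\Pr_{P_i^n}[M \in G_i] \geq 1-\beta$ therefore yields $(1-\beta)^2 \leq \exp\!\bigl(n\gamma(e^{2\varepsilon}-1)\bigr)\,q_i$ for every $i$. Now I would use disjointness: for any fixed $X'$ and any fixed internal randomness of $M$, the events $\{M(X') \in G_i\}_{i=1}^m$ are mutually exclusive, so $\sum_{i=1}^m q_i \leq 1$. Summing the previous bound over $i$ produces $m(1-\beta)^2 \leq \exp\!\bigl(n\gamma(e^{2\varepsilon}-1)\bigr)$, i.e.\ $n \geq \Omega\!\bigl(\log m /(\gamma(e^{2\varepsilon}-1))\bigr)$, which recovers the $\log m$ contribution to the numerator.

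The main obstacle is producing the extra $\log(1/\beta)$ summand: the Cauchy--Schwarz step above weakens the left-hand side to $(1-\beta)^2$, which is insufficient. I would recover it by replacing the $L^2$ step with a truncated-coupling argument: for any threshold $T$, splitting on $\{K \leq T\}$ gives $\Pr_{P_i^n}[M \in G_i] \leq e^{T\varepsilon} q_i + \Pr[K > T]$. Choosing $T$ via a Chernoff tail bound on $\mathrm{Bin}(n,\gamma)$ so that $\Pr[K > T] \leq \beta/2$ (which forces $T \lesssim n\gamma + \log(1/\beta)$) and summing over disjoint $G_i$'s forces $T \gtrsim \log(m(1-3\beta/2))/\varepsilon$; re-doing the same truncation with the MGF at parameter $2\varepsilon$ (to retain the $e^{2\varepsilon}-1$ factor coming from the per-coordinate coupling) and balancing the two inequalities yields the stated bound $n = \Omega\!\bigl((\log m + \log(1/\beta))/(\gamma(e^{2\varepsilon}-1))\bigr)$.
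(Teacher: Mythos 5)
First, a framing note: the paper does not prove this statement — it is imported verbatim as Theorem~7.1 of \cite{hopkins2022efficient} — so I am assessing your argument against the standard proof of that packing bound. Your first part is correct and is essentially that standard argument: the coordinate-wise maximal coupling, the MGF bound $\mathbb{E}[e^{2\varepsilon K}]\le(1-\gamma+\gamma e^{2\varepsilon})^n\le\exp(n\gamma(e^{2\varepsilon}-1))$, group privacy, Cauchy--Schwarz together with $Z^2\le Z$ for $Z=\mathbb{P}_M[M(X')\in G_i]$, and disjointness of the $G_i$ correctly give $m(1-\beta)^2\le\exp(n\gamma(e^{2\varepsilon}-1))$ and hence the $\Omega(\log m/(\gamma(e^{2\varepsilon}-1)))$ contribution.

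The gap is in your recovery of the $\log(1/\beta)$ term: the two inequalities you propose to balance combine with the wrong sign. Choosing $T$ so that $\mathbb{P}[K>T]\le\beta/2$ forces $T=O(n\gamma+\log(1/\beta))$, which is an \emph{upper} bound on the threshold, while disjointness forces the \emph{lower} bound $T\ge\log(m(1-3\beta/2))/\varepsilon$. Chaining these gives $n\gamma\ge\log m/\varepsilon-O(\log(1/\beta))$, in which $\log(1/\beta)$ enters with a minus sign; no balancing can turn this into $n=\Omega(\log(1/\beta)/(\gamma(e^{2\varepsilon}-1)))$, and when $\log(1/\beta)\gg\log m$ your combined bound is vacuous. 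The standard fix reuses exactly the machinery you already set up, but applied to a \emph{cross} probability rather than to $q_i$: fix $i\neq j$ (the claim implicitly requires $m\ge 2$, as a constant mechanism shows), couple $P_j^n$ with $P_i^n$ directly (per-coordinate disagreement at most $2\gamma$ by the triangle inequality through $P_O$), and observe that disjointness gives $\mathbb{P}_{X'\sim P_i^n}[M(X')\in G_j]\le 1-\mathbb{P}_{X'\sim P_i^n}[M(X')\in G_i]\le\beta$. Your Cauchy--Schwarz step then yields $(1-\beta)^2\le\mathbb{E}[e^{2\varepsilon K}]\cdot\beta\le\exp(2n\gamma(e^{2\varepsilon}-1))\cdot\beta$, i.e.\ $n\ge\Omega(\log(1/\beta)/(\gamma(e^{2\varepsilon}-1)))$ for $\beta\le 1/2$. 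Combined with your first part, this completes the proof.
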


In our analysis of the packing lower bound for regression, we will need to bound the TV distance of two Gaussians with mean zero and different covariances. The following theorem will help us in that analysis.

\begin{theorem}[Total variation distance between Gaussians with the same mean \cite{devroye2018total}]
\label{thm:tv-distance-gaussians}
 Let $\mu \in \mathbb{R}^d, \Sigma_1$ and $\Sigma_2$ be positive definite $d \times d$ matrices, and $\lambda_1, \ldots, \lambda_d$ denote the eigenvalues of $\Sigma_1^{-1} \Sigma_2-I_d$. Then,

$$
\frac{1}{100} \leq \frac{\operatorname{TV}\left(\mathcal{N}\left(\mu, \Sigma_1\right), \mathcal{N}\left(\mu, \Sigma_2\right)\right)}{\min \left\{1, \sqrt{\sum_{i=1}^d \lambda_i^2}\right\}} \leq \frac{3}{2} \,.
$$
\end{theorem}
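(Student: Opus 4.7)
The plan is to reduce to a diagonal case and then bound total variation on both sides via classical information-theoretic tools. By simultaneous diagonalization of the pencil $(\Sigma_1, \Sigma_2)$, there exists an invertible $T$ with $T\Sigma_1 T^\top = I_d$ and $T\Sigma_2 T^\top = \Lambda := \mathrm{diag}(1+\lambda_1, \ldots, 1+\lambda_d)$, where the $\lambda_i$ are exactly the eigenvalues of $\Sigma_1^{-1}\Sigma_2 - I_d$. Since TV distance is invariant under invertible affine transformations, it suffices to prove the bounds for $\mathrm{TV}(\mathcal{N}(0, I_d), \mathcal{N}(0, \Lambda))$. Both measures are now products over coordinates, enabling tensorization. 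Write $\sigma^2 := \sum_i \lambda_i^2$.

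For the \emph{upper bound}, I would use the squared Hellinger distance, which satisfies $\mathrm{TV} \leq \sqrt{2}\, H$ and tensorizes as $1 - H^2(P, Q) = \prod_i \bigl(1 - H^2(P_i, Q_i)\bigr)$ for product measures. A direct computation of the Bhattacharyya coefficient of two centered univariate Gaussians gives
\[
    1 - H^2\bigl(\mathcal{N}(0,1), \mathcal{N}(0, 1+\lambda_i)\bigr) = \left(\tfrac{2\sqrt{1+\lambda_i}}{2+\lambda_i}\right)^{1/2} = 1 - \tfrac{\lambda_i^2}{16} + O(\lambda_i^3),
\]
valid for $\lambda_i$ bounded away from $-1$. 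Multiplying over coordinates and applying $1 - \prod_i(1 - a_i) \leq \sum_i a_i$ yields $H^2 \leq O(\sigma^2)$, hence $\mathrm{TV} \leq O(\sigma)$; combined with the trivial $\mathrm{TV} \leq 1$ this produces the upper bound $\mathrm{TV} \leq \tfrac{3}{2}\min\{1,\sigma\}$ up to constants.

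For the \emph{lower bound}, when $\sigma \leq 1$ I would use the Chebyshev/second-moment method with the unbiased quadratic score $T(X) = \sum_i \lambda_i (X_i^2 - 1)$. Under $P = \mathcal{N}(0, I_d)$ one has $\mathbb{E}_P[T] = 0$ and $\mathrm{Var}_P[T] = 2\sigma^2$; under $Q = \mathcal{N}(0, \Lambda)$ one has $\mathbb{E}_Q[T] = \sigma^2$ and $\mathrm{Var}_Q[T] = 2\sum_i \lambda_i^2 (1+\lambda_i)^2$. Provided $\|\lambda\|_\infty$ is bounded by an absolute constant, the threshold event $\{T > \sigma^2/2\}$ has $P$-probability $O(1/\sigma^2 \cdot \sigma^2) \cdot (\text{small})$ and $Q$-probability bounded away from $0$, so Chebyshev yields $\mathrm{TV} \geq \Omega(\sigma)$. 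When $\sigma \geq 1$, I would split into two sub-cases: if some individual $|\lambda_i|$ is itself at least a positive constant, the one-coordinate TV already gives $\Omega(1)$; otherwise all $|\lambda_i|$ are small but many of them contribute, and the same Chebyshev argument (now with mean gap $\sigma^2$ dominating $\sqrt{\mathrm{Var}}$) saturates at $\Omega(1)$.

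The main obstacle is pinning down the universal constants ($1/100$ and $3/2$) smoothly across the parameter space, and in particular patching the two regimes $\sigma \leq 1$ and $\sigma \geq 1$ into a single uniform bound. The delicate cases are when some $\lambda_i$ drifts toward $-1$ (where the Hellinger factor blows up and Taylor expansion fails), or when a handful of $\lambda_i$'s are large while the rest are tiny (where the contributions to $T$'s variance under $Q$ must be tracked coordinatewise). Both issues can be handled by observing that a single extreme coordinate already contributes $\Omega(1)$ of one-dimensional TV, allowing a clean case-split. Getting the precise constants is then a quantitative exercise once the qualitative scaling has been established.
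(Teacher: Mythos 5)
First, a point of context: the paper does not prove this statement at all — it is imported verbatim from \cite{devroye2018total} and used as a black box in the packing lower bounds, so there is no in-paper argument to compare against. Your reduction to the diagonal case via simultaneous diagonalization is correct, and your upper bound via Hellinger tensorization is a sound (and standard) route, modulo the constant-chasing you already flag: you need a bound of the form $H^2(\mathcal{N}(0,1),\mathcal{N}(0,1+\lambda)) \leq c\min\{1,\lambda^2\}$ valid uniformly over all $\lambda > -1$, not just the Taylor expansion near $0$, but such a bound does hold and yields $\mathrm{TV} \leq \sqrt{2c}\,\sigma$.

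The genuine gap is in the lower bound, in precisely the regime where the theorem has content, namely $\sigma := \sqrt{\sum_i \lambda_i^2}$ small. Your statistic $T(X) = \sum_i \lambda_i(X_i^2-1)$ has mean $0$ and standard deviation $\sqrt{2}\,\sigma$ under $P$, and mean $\sigma^2$ and standard deviation $\approx \sqrt{2}\,\sigma$ under $Q$. When $\sigma \ll 1$ the mean separation $\sigma^2$ is far smaller than the common spread $\Theta(\sigma)$, so both $P(T > \sigma^2/2)$ and $Q(T > \sigma^2/2)$ are close to $1/2$ and Chebyshev gives vacuous bounds on each (indeed $\mathrm{Var}_P(T)/(\sigma^2/2)^2 = 8/\sigma^2 \gg 1$). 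No variance bound can certify that the two acceptance probabilities differ by $\Omega(\sigma)$; that requires an \emph{anti-concentration} or quantitative-CLT input showing that shifting $T$ by $\sigma^2$ moves $\Omega(\sigma^2/\mathrm{std}) = \Omega(\sigma)$ of probability mass across the threshold. The standard fixes are either (i) a Berry--Esseen estimate for the independent sum $T$, or (ii) bypassing the threshold test entirely and bounding $\mathrm{TV} = \tfrac12 \mathbb{E}_P\lvert e^{L}-1\rvert$ below via a moment comparison such as $\mathbb{E}\lvert Z\rvert \geq (\mathbb{E}Z^2)^{3/2}/(\mathbb{E}Z^4)^{1/2}$ applied to the (linearized) log-likelihood ratio $Z = \tfrac12\sum_i \lambda_i(X_i^2-1)$, whose second moment is $\Theta(\sigma^2)$ and fourth moment is $O(\sigma^4)$, giving $\mathbb{E}\lvert Z\rvert = \Omega(\sigma)$. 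The large-$\sigma$ and large-single-$\lvert\lambda_i\rvert$ cases you handle are fine, but as written the core small-$\sigma$ case does not go through.
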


\subsection{Proof of \Cref{thm:pure-dp-lb-regresion}}
Before proving \Cref{thm:pure-dp-lb-regresion}, it is useful to prove the following lemma that bounds the change in the distribution of the input in TV distance when the regression parameter $\theta$ changes.
\begin{lemma}[Regression TV Distance]
\label{lem:regression-tv-distance}
Let $D_\theta$ be the distribution over $\R^d \times \R$ distributed as $(x,y)$ where $x$ is distributed as $\mathcal{N} \paren{0, I_d}$, and $y = \iprod{\theta}{x} + \zeta$, where $\zeta \sim \mathcal N \paren{0,1}$. Then for any $\theta_1, \theta_2 \in \R^d$, such that $\norm{\theta_1 - \theta_2}\le 1$, we have
\begin{equation*}
\operatorname{TV}(D_{\theta_1}, D_{\theta_2})
\le 
3
\norm{\theta_1 - \theta_2}_2 \, .
\end{equation*}
\end{lemma}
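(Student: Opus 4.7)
The key observation is that $D_\theta$ decomposes into a common marginal on $x$ (which is $\mathcal{N}(0, I_d)$ regardless of $\theta$) and a conditional $y \mid x \sim \mathcal{N}(\iprod{\theta}{x}, 1)$. So two distributions $D_{\theta_1}$ and $D_{\theta_2}$ differ only through their conditionals, and the problem reduces to comparing univariate Gaussians with common variance but differing means. I will first bound the KL divergence using this decomposition, and then apply Pinsker's inequality to convert to a TV bound; the stated constant $3$ then falls out easily, with room to spare.

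\textbf{Step 1: KL via the chain rule.} By the chain rule for KL divergence (and the fact that the $x$-marginals are identical),
\begin{equation*}
\operatorname{KL}(D_{\theta_1} \,\|\, D_{\theta_2}) = \E_{x \sim \mathcal{N}(0,I_d)} \operatorname{KL}\bigl(\mathcal{N}(\iprod{\theta_1}{x}, 1) \,\|\, \mathcal{N}(\iprod{\theta_2}{x}, 1)\bigr).
\end{equation*}
For two univariate Gaussians with common unit variance and means $\mu_1, \mu_2$, the KL divergence equals $(\mu_1 - \mu_2)^2 / 2$, so the inner expectation becomes $\tfrac{1}{2}\E_x \iprod{\theta_1 - \theta_2}{x}^2 = \tfrac{1}{2}\|\theta_1 - \theta_2\|_2^2$, since $x$ is isotropic.

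\textbf{Step 2: Pinsker.} Applying Pinsker's inequality yields
\begin{equation*}
\operatorname{TV}(D_{\theta_1}, D_{\theta_2}) \le \sqrt{\tfrac{1}{2}\operatorname{KL}(D_{\theta_1} \,\|\, D_{\theta_2})} = \tfrac{1}{2} \|\theta_1 - \theta_2\|_2 \le 3\|\theta_1 - \theta_2\|_2,
\end{equation*}
which is the claimed bound (in fact with a substantially better constant).

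\textbf{Remark on the hypothesis and potential obstacles.} The hypothesis $\|\theta_1 - \theta_2\|_2 \le 1$ is not used in the argument above, which is clean and direct; it is there only to keep the statement meaningful (TV distance is at most $1$). There is no real obstacle here — the only subtlety is the chain rule step, which requires that the $x$-marginal be identical under both distributions, something that is built into the regression model since the covariates do not depend on $\theta$. If one preferred to avoid Pinsker and use the chain rule for TV directly (convexity), one could instead bound $\operatorname{TV}(\mathcal{N}(\mu_1,1), \mathcal{N}(\mu_2,1)) \le |\mu_1 - \mu_2|/\sqrt{2\pi}$ and take expectation, obtaining $\|\theta_1-\theta_2\|_2 / \pi$; both routes comfortably yield the constant $3$.
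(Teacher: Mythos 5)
Your proof is correct, and it takes a genuinely different route from the paper's. You exploit the product structure of $D_\theta$ (common covariate marginal, Gaussian conditional for $y \mid x$), apply the chain rule for KL to reduce to univariate Gaussians with shifted means, and finish with Pinsker, obtaining $\operatorname{TV}(D_{\theta_1},D_{\theta_2}) \le \tfrac 1 2 \norm{\theta_1-\theta_2}_2$ unconditionally. The paper instead observes that $(x,y)$ is jointly Gaussian with mean zero and covariance $\Sigma_\theta = \begin{pmatrix} I_d & \theta \\ \theta^\top & \norm{\theta}^2+1\end{pmatrix}$, computes $\Sigma_{\theta_2}^{-1}\Sigma_{\theta_1} - I_{d+1}$ explicitly, shows the sum of its squared eigenvalues is $2\norm{z}_2^2 + \norm{z}_2^4$ for $z = \theta_1-\theta_2$, and invokes the Devroye--Mehrabian--Reddad bound on the TV distance between mean-zero Gaussians; this is where the hypothesis $\norm{z}_2 \le 1$ is actually needed (to control the $\norm{z}_2^4$ term), whereas your argument never uses it. Your approach is shorter, yields a better constant, and avoids the external TV-between-Gaussians theorem, at the cost of relying on the specific conditional structure of the regression model rather than treating $D_\theta$ as a generic Gaussian; both are perfectly adequate for the packing lower bound where the lemma is used.
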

\begin{proof}
Since $x$ is a Gaussian and $y = \iprod{\theta}{x} + \zeta$, where $\zeta \sim \mathcal N (0,1)$, $(x, y)$ are jointly Gaussian. In order to understand the distribution $D_\theta$ it remains to understand the mean and covariance of $(x,y)$. $D_\theta$ is mean zero, and its covariance is the $(d+1) \times (d+1)$ block matrix
\begin{equation*}
\Sigma_\theta \coloneqq
\begin{pmatrix}
I_d & \theta \\
\theta^\top & \|\theta\|^2 + 1
\end{pmatrix} \, .
\end{equation*}
To see why note that $\E\brac{x_i y} = \E\brac{x_i \paren{\sum_i \theta_i x_i + \zeta}} = \theta_i$, and $\E\brac{y^2} = \E\brac{\paren{\sum_i \theta_i x_i + \zeta}^2} = \norm{\theta}_2^2 + 1$.
Now it remains to bound the TV distance of two Gaussians $\mathcal N \paren{0, \Sigma_{\theta_1}}$ and $\mathcal N \paren{0, \Sigma_{\theta_2}}$. We aim to apply \Cref{thm:tv-distance-gaussians}. We need to compute $\Sigma_{\theta_2}^{-1}\Sigma_{\theta_1} -I_{d+1}$. First it is easy to verify that
\begin{equation*}
\Sigma_\theta^{-1} =
\begin{pmatrix}
I_d+ \theta \theta^\top & -\theta \\
-\theta^\top &  1
\end{pmatrix} \, .
\end{equation*}
Let $z = \theta_1 - \theta_2$, then

\begin{equation*}
M \coloneqq
\Sigma_{\theta_2}^{-1}\Sigma_{\theta_1} -I_{d+1} = 
\begin{pmatrix}
\theta_2 \theta_2^{\top}-\theta_2 \theta_1^{\top} & \theta_1-\theta_2+\theta_2\left(\theta_2^{\top} \theta_1-\left\|\theta_1\right\|^2\right) \\
\left(\theta_1-\theta_2\right)^{\top} & \left\|\theta_1\right\|^2-\theta_2^{\top} \theta_1
\end{pmatrix}
=
\begin{pmatrix}
-\theta_2 z^{\top} & z-\theta_2\theta_1^\top z \\
z^{\top} & \theta_1^\top z
\end{pmatrix}\, .
\end{equation*}
We are interested in computing the sum of the squared eigenvalues of $M$. We know that the eigenvalues of $M^2$ are the squared eigenvalues of $M$, and that the trace of a matrix gives us the sum of its eigenvalues. Therefore it remains to compute $\operatorname{Tr}\paren{M^2}$.
We have 
\begin{align*}
\operatorname{Tr} \paren{M^2} &=
\operatorname{Tr} (\theta_2 z^\top \theta_2 z^\top + z z^\top - \theta_2 \theta_1^\top z z^\top) + \operatorname{Tr} \paren{z^\top z - z^\top \theta_2 \theta_1^\top z + \theta_1^\top z\theta_1^\top z} \\
&=
2 \norm{z}_2^2 +
\paren{\iprod{z}{\theta_1} - \iprod{z}{\theta_2}}^2 \\
&= 2 \norm{z}_2^2 + \norm{z}_2^4 \,.
\end{align*}
Applying \Cref{thm:tv-distance-gaussians} and noting $\norm{z}_2^2\le 1$, finishes the proof.
\end{proof}

Now using the lemma above we can prove \Cref{thm:pure-dp-lb-regresion}.
\begin{proof}[Proof of \Cref{thm:pure-dp-lb-regresion}]
The first term is folklore and is implied by non-private lower bounds. 
The second term is the cost of fine estimation under pure differential privacy. A lower bound of $d\alpha \eps$ is known even under the weaker approximate differential privacy \cite{cai2021cost}. 
Here we focus on showing lower bounds that include the failure probability $\beta$.

We first show the second term is a lower bound. Let $\{\theta_i\}$ be such that $\norm{\theta_i-\theta_j} \ge 2\alpha$, for $i\neq j \in I$, and $\norm{\theta_i} \le 8 \alpha$ for all $i$. Bounds on metric entropy imply that such a set with log-size at least $d\log 4$ exists. Let $P_O = \mathcal{N}\paren{0, I_d}$, and 
$P_i = \mathcal{N}\paren{0, \Sigma_\theta}$, where $\Sigma_\theta$ is as defined in \Cref{lem:regression-tv-distance}. Then \Cref{lem:regression-tv-distance} implies $\operatorname{TV}(P_O, P_i) \le 3 \norm{\theta_i} \le 24 \alpha$. Let $G_i$ be the ball of radius $\alpha$ around $\theta_i$, then applying \Cref{thm:lb-meta} we obtain
\begin{equation*}
n = \Omega\paren{\frac{d + \log\paren{1/\beta}}{\alpha \eps}} \,.
\end{equation*}
Finally, we prove the last term is a lower bound. Let $\{\theta_i\}$ be such that $\norm{\theta_i - \theta_j} \ge 2\alpha / \sqrt{\covscale}$, for $i \neq j \in I$, and $\norm{\theta_i} \le R$ for all $i$. Bounds on metric entropy imply that such a set with log-size at least $d \log\paren{R\sqrt{L}/ 2 \alpha}$ exists. Let $P_O$ and be $P_{i}$ be the distributions over $\R^d \times \R$ distributed as $(x,y)$ where $x$ is distributed as $\mathcal{N}(0, \covscale I_d)$, and $y = \iprod{0}{x} +\zeta$ and $y = \iprod{\theta_i}{x}$ respectively, where $\zeta \sim \mathcal{N}(0,1)$. Then $\operatorname{TV}(P_O, P_i) \le 1$. Let $G_i$ be the ball of radius $\alpha/ \sqrt{\covscale}$ around $\theta_i$. Note that $G_i$'s are disjoint, and $M$'s output when applied to sample from $P_i$ is inside $G_i$ with probability $1-\beta$. Applying \Cref{thm:lb-meta}, we obtain 
\begin{equation}
n= \Omega\paren{\frac{d \log \paren{R \sqrt{L} / \alpha} +\log\paren{1/\beta}}{\eps}} \, .
\end{equation}
\end{proof}

\subsection{Proof of \Cref{thm:pure-dp-lb-mean-etimation}}
\begin{proof}[Proof of \Cref{thm:pure-dp-lb-mean-etimation}]
The first three terms follow from the lower bound in the case where $\Sigma$ is known to be identity. For the theorem in the case where $\Sigma$ is known to be identity see Theorem~7.3 in \cite{hopkins2022efficient}. Therefore, the only term we need to prove here is the last term.

We show the last term is a lower bound by applying \Cref{thm:lb-meta}. Let $\{\mu_i \}_{i \in I}$ be such that $\norm{\mu_i - \mu_j} \ge 2 \alpha / \sqrt{\covscale}$, for $i \neq j \in I$, and $\norm{\mu_i} \le R$ for all $i$. Bounds on metric entropy imply that such a set with log-size at least $d \log (R \sqrt{\covscale} / 2\alpha)$ exists. 
Let $P_O = \mathcal{N}\paren{0, \frac{1}{\covscale}I_d}$, and $P_i = \mathcal{N} \paren{\mu_i, \frac{1}{\covscale}I_d}$. The utility guarantees of the algorithm $M$ imply that $M$ takes as input samples from $P_i$ and outputs $\hat{\mu}$ such that $\norm{\hat{\mu} - \mu_i}_2 \le \alpha / \sqrt{\covscale}$, with probability $1-\beta$. Let $G_i$ be the ball of radius $\alpha / \sqrt{\covscale}$ around $\mu_i$. Note that $G_i$'s are disjoint, and $M$'s output on an input sampled from $P_i$ falls inside $G_i$ with probability $1-\beta$. Therefore, we may apply \Cref{thm:lb-meta}, and we obtain that
\begin{equation*}
n = \Omega \Paren{\frac{d \log \paren{R \sqrt{\covscale} / \alpha} + \log \paren{1/\beta}}{\eps}} \, ,
\end{equation*}
as desired.
\end{proof}

\section{Deferred Proofs for Private Algorithms}
\subsection{Efficient Computability of Score Functions}

In this section we will show that our new constraints for our privacy programs admit a separation oracle. 
Recall, that it is of the following form (exemplified for covariance-aware mean estimation)
\[
    \bigO{\alpha^2} \pE \Sigma' - \pE[\mu' - \tilde{\mu}]\pE[\mu' - \tilde{\mu}]^\top \succeq 0 \,.\footnote{We include the version of this constraint without slack for simplicity, but a similar argument works when including the version with slack stated in the main body.}
\]
Note that this is indeed convex in $\pE$ (for a fixed $\tilde{\mu}$).
Consider $\pE_1, \pE_2$ both satisfying the constraint.
Fix an arbitrary unit vector $u$, then by convexity of the function $z \mapsto z^2$ it follows that\footnote{Note that formally, we would need to verify convexity when representing pseudo-expectations using there vector representation as described below. This argument is analogous and we give the one below for simplicity.}
\begin{align*}
    \Iprod{\tfrac 1 2 \pE_1 \mu' + \tfrac 1 2 \pE_2 \mu' - \tilde{\mu}, u}^2 &= \Paren{\tfrac 1 2 \Iprod{\pE_1 \mu' - \tilde{\mu}, u} + \tfrac 1 2 \Iprod{\pE_2 \mu' - \tilde{\mu}, u}}^2 \\
    &\leq \tfrac 1 2 \Iprod{\pE_1 \mu' - \tilde{\mu}, u}^2 + \tfrac 1 2 \Iprod{\pE_2 \mu' - \tilde{\mu}, u}^2 \leq \bigO{\alpha^2} \Paren{\tfrac 1 2 \pE_1 + \tfrac 1 2 \pE_2} \Iprod{u, \Sigma' u} \,.
\end{align*}
(And similarly for the one used for regression)

Recall that we run the ellipsoid algorithm to find a degree-$d$ pseudo-expectation that approximately satisfy our constraints.
We represent these pseudo-expectations by the values they assign to each multi-set of variables up to size $d$, excluding the empty set.
That is if we have variables $X_1, \ldots, X_n$, we are searching for a vector whose entries correspond to $\pE \prod_{i \in S} X_i$, where $S \subseteq [n]$ is a multi-set of size at most $d$
By considering the entries corresponding to (the upper-triangular part of) $\pE \Sigma'$ and $\pE \mu'$ in this representation of $\pE$, it is thus enough to show the following claim.
The second part corresponds to constraints of the form $\norm{\pE \mu'}^2 \leq R^2$ (which is obviously convex).
For a $d(d+1)/2$-dimensional vector $z$, we denote by $M(z)$ the $d \times d$-dimensional matrix that is symmetric and whose upper triangular part is populated by the entries of $z$.
\begin{lemma}
    \label{lem:closeness-separation-oracle}
    Consider constraints of the following form, where $x\in\mathbb{R}^d,y\in\mathbb{R}^d,z\in\mathbb{R}^{d (d+1)/2}$ are variables in a convex program and $c$ is some fixed constant:
    \begin{enumerate}
        \item $xx^\top  - c M(z) \preceq 0$\,.
        \item $\norm{y}_2^2 \leq c$\,.
    \end{enumerate}
    Then these constraints all admit a $\poly(d)$-time separation oracle.
\end{lemma}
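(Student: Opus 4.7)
The plan is to handle the two constraints separately, since each is convex in its own block of variables. Both separation oracles will reduce to a single eigenvalue computation plus an explicit linear formula, so the polynomial runtime bound is immediate once correctness is established.

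For constraint~2, the set $\{y : \|y\|_2^2 \leq c\}$ is a Euclidean ball. Given $y_0$ with $\|y_0\|_2^2 > c$, I would output the hyperplane $\{y : 2y_0^\top y = c + \|y_0\|_2^2\}$. That $y_0$ lies on the violating side is immediate ($2\|y_0\|_2^2 > c + \|y_0\|_2^2$), and any feasible $y$ satisfies $2 y_0^\top y \leq 2\|y_0\|_2 \|y\|_2 \leq \|y_0\|_2^2 + \|y\|_2^2 \leq \|y_0\|_2^2 + c$ by Cauchy--Schwarz and AM-GM.

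For constraint~1, the strategy is to reduce the matrix inequality $xx^\top \preceq c M(z)$ to a family of convex scalar inequalities indexed by a direction $v \in \R^d$, namely $(v^\top x)^2 \leq c \, v^\top M(z) v$. Given a candidate $(x_0, z_0)$, I would compute the smallest eigenvalue of $c M(z_0) - x_0 x_0^\top$; if it is nonnegative the point is feasible, otherwise let $v$ be an associated unit eigenvector, so that $(v^\top x_0)^2 > c\, v^\top M(z_0) v$. Note that $v^\top M(z) v$ is a linear functional $\ell(v)^\top z$ of $z$, while $(v^\top x)^2$ is a convex quadratic in $x$, so the set $\{(x,z) : (v^\top x)^2 \leq c\ell(v)^\top z\}$ is convex and contains the feasible region. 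I would then linearize the quadratic piece at $x_0$ and propose the hyperplane
\[
  2(v^\top x_0)\, v^\top x \;-\; c\, \ell(v)^\top z \;=\; (v^\top x_0)^2.
\]

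To verify this separates correctly: for any feasible $(x,z)$, Young's inequality gives $2(v^\top x_0)(v^\top x) \leq (v^\top x_0)^2 + (v^\top x)^2 \leq (v^\top x_0)^2 + c\,\ell(v)^\top z$, which rearranges to the desired half-space; for $(x_0,z_0)$ itself the left-hand side equals $2(v^\top x_0)^2 - c v^\top M(z_0) v$, which by the violation condition strictly exceeds $(v^\top x_0)^2$. The main (and only) nontrivial step is identifying a good violation direction $v$, and using the smallest eigenvector of $cM(z_0) - x_0 x_0^\top$ is the natural choice and costs $\poly(d)$ time; everything else is bookkeeping. I do not anticipate a real obstacle here, since the key idea is exactly the one-dimensional tangent-to-parabola argument sketched in the technical overview, extended from the scalar case $(x,y) \in \R^2$ to the vector case by composing with a fixed direction $v$.
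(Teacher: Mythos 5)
Your proposal is correct and matches the paper's proof essentially step for step: for the PSD constraint you take the minimum eigenvector $v$ of $cM(z_0)-x_0x_0^\top$ and use the hyperplane $2(v^\top x_0)\,v^\top x - c\,v^\top M(z)v = (v^\top x_0)^2$, verified via $2ab \leq a^2+b^2$, exactly as the paper does. The only cosmetic difference is the hyperplane for the ball constraint (your tangent form $2y_0^\top y = c + \|y_0\|_2^2$ versus the paper's normalized $\tfrac{\sqrt{c}}{\|y_0\|}y_0^\top y = c$), both of which are valid.
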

\begin{proof}
    Note that we can easily verify if these constraints are satisfied in both cases. Thus, it suffices to show that computing a separating hyperplane can be done efficiently when the constraints are not satisfied. 

    We start with the first constraint.
    For simplicity, we assume that $c = 1$, the argument for general $c$ is completely analogous.
    Thus, for the constraints on $x,M(z_0)$ we can consider some $x_0, M(z_0)$ such that $x_0x_0^\top \not\preceq M(z_0)$. We aim to find a separating hyperplane, given by $w \in \mathbb{R}^{d+d^2}$, $a \in \mathbb{R}$, such that $\Iprod{w, (x_0, M(z_0))} > a$ but $\Iprod{w, (x',M'(z))} \leq a$ for all $x',M'(z)$ satisfying the constraints.\footnote{Technically, we would have to convert back to the parametrization where we only have $(x_0,z_0)$ and $(x',z')$ instead of $(x_0,M(z_0))$ and $(x',M(z'))$. We omit this for simplicity.}

    Note that since the constraint is violated there exists some $v$ such that $v^\top M(z_0) v < \Iprod{v, x_0}^2$ and we can compute such a $v$ efficiently by computing eigendecomposition of the matrix $M(z_0) - x_0x_0^\top$ and taking $v$ to be the eigenvector associated with the minimum eigenvalue. We will let $w = (2\langle v, x_0\rangle v, - \opvec\Paren{vv^\top})$ and $a = \Iprod{v, x_0}^2$. For any $x,M$ we have that
    \[ \Iprod{w, (x, M)} = 2\langle v, x_0\rangle \langle v, x\rangle - v^\top M v\,. \]
    Plugging in $(x_0, M(z_=))$ and applying that the PSD constraint is violated we have that
    \[ \Iprod{w, (x_0, M_0)} = 2\langle v, x_0\rangle^2  - v^\top M(z_0) v  = \langle v, x_0\rangle^2 + \left(\langle v, x_0\rangle^2  - v^\top M(z_0) v\right) > \langle v, x_0\rangle^2\,, \]
    as desired. We now show that all points that satisfy the constraint must have that $\Iprod{w, (x, M(z))} \leq \Iprod{v, x_0}^2$. We have that
    \begin{align*}
        \Iprod{w, (x, M(z))} &= 2\langle v, x_0\rangle \langle v, x\rangle - v^\top M(z) v \\
        &\leq \langle v, x_0\rangle^2 + \langle v, x\rangle^2 - v^\top M(z) v \\
        &\leq \langle v, x_0\rangle^2\,,
    \end{align*}
    using that $2ab \leq a^2 + b^2$ and that $M(z) \succeq xx^\top$. Thus, we have given a separation oracle for the first two constraints.

    We now consider the constraint on $y$. We can simply take $w = \frac{\sqrt{c}}{\norm{y_0}} \cdot y_0$ and $a = c$. We have that
    \[ \Iprod{w, y_0} = \frac{\sqrt{c}}{\norm{y_0}} \cdot \norm{y_0}^2 > c\,,\]
    when $y_0$ violates the constraint. Now consider $y$ which satisfies this constraint. We have that
    \[ \Iprod{w, y} \leq \norm{w}_2 \cdot \norm{y}_2 \,.\]
    Note that $w$ has norm exactly $\sqrt{c}$ by construction and since $y$ satisfies the constraints it also has norm at most $\sqrt{c}$. Therefore, this inner product is bounded by $c$, as desired.
\end{proof}

\subsection{Deferred Privacy SoS Proofs}
\label{sec:deferred_privacy_sos_proofs}

In this section we will complete the proof from~\cref{sec:private_regression} that points of low score have good utility. Specifically, we need to show that our SoS variable $Q$ is close to $\Sigma^{-1}$.

    \paragraph{Accuracy of SoS inverse covariance matrix.}
    It remains to show~\cref{lem:sos_inverse_covariance}.
    Recall that $\tilde{\Sigma} = \tfrac 1 n \sum_{i=1}^n x_i^* (x_i^*)^\top$.
    We will use the following lemma about closeness of $\tilde{\Sigma}$ and $\Sigma'$.
    Note that we take $u$ and $v$ to be SoS indeterminates.
    \begin{lemma}
        \label{lem:sos_assymetric_cov_estimation}
        Let $\eta > 0$ be smaller than a sufficiently small absolute constant and assume that $x_1^*, \ldots, x_n^*$ are $\eta$-good and $\eta$-higher-order-good.
        Then it holds that
        \[
            \calA_{\eta n} \sststile{\bigO{1}}{u,v} \Iprod{\tilde{\Sigma} - \Sigma', u v^\top}^2 \leq \bigO{\eta} \paren{u^\top \Sigma' u}^2 + \bigO{\eta} \paren{v^\top \Sigma' v}^2 \,.
        \]
    \end{lemma}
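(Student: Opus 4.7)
The plan is to decompose the bilinear difference using the coupling indicator $r_i$, apply SoS Cauchy--Schwarz twice in a symmetric fashion, and invoke certifiable 4-2 hypercontractivity that is available on both sides of the coupling: for the true samples $x_i^*$ via $\eta$-higher-order-goodness (\cref{fact:cert_bounded_moments_from_concentration}) and for the SoS variables $x_i'$ via the fourth-moment constraint built into $\calA_{\eta n}$. The algebraic workhorse is the identity $(1-r_iw_i)^2 = 1 - r_iw_i$ at degree $O(1)$, which is what will allow a full factor of $\eta$ (rather than $\sqrt{\eta}$) to survive after decoupling $u$ and $v$.

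Concretely, I first use $r_iw_ix_i' = r_iw_ix_i^*$ (which the system implies via $w_i(x_i'-x_i)=0$ and $r_ix_i = r_ix_i^*$) together with the counting constraint $\sum_i(1-r_iw_i) \leq 2\eta n$ to write
\[
\Iprod{\tilde{\Sigma} - \Sigma',\, uv^\top} = \tfrac{1}{n}\sum_{i=1}^n (1-r_iw_i)\bigl[\Iprod{x_i^*,u}\Iprod{x_i^*,v} - \Iprod{x_i',u}\Iprod{x_i',v}\bigr].
\]
Squaring and applying SoS triangle inequality then reduces the problem to bounding two symmetric squared sums, one over the $x_i^*$ and one over the $x_i'$.

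Next, for the $x_i^*$-summand I apply SoS Cauchy--Schwarz with the symmetric pairing $a_i = (1-r_iw_i)\Iprod{x_i^*,u}$, $b_i = (1-r_iw_i)\Iprod{x_i^*,v}$, using idempotence to rewrite $a_i b_i = (1-r_iw_i)\Iprod{x_i^*,u}\Iprod{x_i^*,v}$. This yields $(\text{summand})^2 \leq A_u \cdot A_v$, where $A_u = \tfrac{1}{n}\sum(1-r_iw_i)\Iprod{x_i^*,u}^2$. A second SoS Cauchy--Schwarz (with $a_i = 1 - r_iw_i$ and $b_i = \Iprod{x_i^*,u}^2$) gives $A_u^2 \leq \bigl(\tfrac{1}{n}\sum(1-r_iw_i)\bigr)\bigl(\tfrac{1}{n}\sum\Iprod{x_i^*,u}^4\bigr) \leq O(\eta)(u^\top\tilde{\Sigma} u)^2$ by the counting constraint and certifiable Gaussian hypercontractivity for the true samples. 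The same strategy applied to the $x_i'$-summand invokes the fourth-moment SoS constraint built into $\calA_{\eta n}$ directly, giving the analogous bound with $\Sigma'$ in place of $\tilde{\Sigma}$.

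Finally, I combine the factors using the elementary SoS inequality $A_u A_v \leq \tfrac{1}{2}(A_u^2 + A_v^2)$, and convert $(u^\top\tilde{\Sigma} u)^2$ into $O(1)(u^\top\Sigma' u)^2$ via the covariance closeness already established in \cref{lem:boostrapping_covariance} (applied as an SoS statement in $u$). The main obstacle is that $u$ and $v$ are genuine SoS indeterminates, so the Selector Lemma used in the fixed-direction covariance arguments is unavailable for the quantity $\Iprod{x_i^*,u}^4$. I sidestep this entirely by replacing resilience with certifiable hypercontractivity, which SoS-izes on both sides of the coupling, and by exploiting idempotence of $1-r_iw_i$ to extract a genuinely quadratic (not merely linear) gain from the bound on the corrupted fraction.
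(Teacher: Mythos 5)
Your proposal is correct and follows essentially the same route as the paper's proof: the same coupling decomposition via $1-r_iw_i$, the same use of idempotence plus the counting constraint to extract the $O(\eta)$ factor, certifiable hypercontractivity invoked on both the true samples (via $\eta$-higher-order-goodness) and the SoS variables (via the fourth-moment constraint), and \cref{lem:boostrapping_covariance} to pass from $\tilde{\Sigma}$ to $\Sigma'$. The only difference is cosmetic: the paper applies Cauchy--Schwarz once to reach $\eta\cdot\tfrac1n\sum\Iprod{x_i,u}^2\Iprod{x_i,v}^2$ and then decouples termwise with $ab\leq a^2+b^2$, whereas you first factor into $A_uA_v$ and decouple at the level of the averaged quantities --- both are valid constant-degree SoS arguments giving the same bound.
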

    We will give the proof at the end of this section.
    \begin{proof}[Proof of~\cref{lem:sos_inverse_covariance}]
        First, note that by $\eta$-goodness it holds that, say, $\tfrac 1 2 \Sigma \preceq \tilde{\Sigma} \preceq 2 \Sigma$.
        And in particular, $\tilde{\Sigma}$ is invertible.
        Further, this implies that it is enough to show that 
        \begin{align*}
            \calA_{\eta n} \sststile{\bigO{1}}{} \Biggl\{ \Iprod{u, Q u}^2 \leq \bigO{1}  \Paren{u^\top \tilde{\Sigma}^{-1} u}^2 \Biggr\} \,.
        \end{align*}
        We start by observing that by SoS triangle inequality it holds that
        \[
            \calA_{\eta n} \sststile{\bigO{1}}{} \Biggl\{ \Iprod{u, Q u}^2 \leq 2 \Paren{u^\top \tilde{\Sigma}^{-1} u}^2 + 2 \Paren{u^\top \brac{Q - \tilde{\Sigma}^{-1} } u}^2 \Biggr\} 
        \]
        Thus, it remains to bound the second term.
        Repeatedly using the constraints that $\Sigma' Q \Sigma' = \Sigma', Q^\top = Q$, and $Q \Sigma' = I_d, \Sigma' Q = I_d$ we obtain that 
        \begin{align*}
            \calA_{\eta n} &\sststile{\bigO{1}}{} \Biggl\{ \Paren{u^\top \brac{Q - \tilde{\Sigma}^{-1} } u}^2 = \Iprod{Q - \tilde{\Sigma}^{-1}, u u^\top}^2 = \Iprod{Q \Sigma' Q \Sigma' Q - Q \Sigma' \tilde{\Sigma}^{-1} \Sigma' Q, u u^\top}^2 \\
            &= \Iprod{\Sigma' Q \Sigma' - \Sigma' \tilde{\Sigma}^{-1} \Sigma', \brac{Qu} \brac{Qu}^\top}^2 = \Iprod{\Sigma' Q \Sigma' - \Sigma' \tilde{\Sigma}^{-1} \Sigma', \brac{Qu} \brac{Qu}^\top}^2 \\
            &= \Iprod{\Sigma' Q \Sigma' - \Sigma' \tilde{\Sigma}^{-1} \Sigma', \brac{Qu} \brac{Qu}^\top}^2 = \Iprod{Q \Sigma' - \tilde{\Sigma}^{-1} \Sigma', \brac{\Sigma' Qu} \brac{Qu}^\top}^2 \\
            &= \Iprod{I_d - \tilde{\Sigma}^{-1} \Sigma', u \brac{Qu}^\top}^2 = \Iprod{\tilde{\Sigma}^{-1} \tilde{\Sigma} - \tilde{\Sigma}^{-1} \Sigma', u \brac{Qu}^\top}^2 = \Iprod{\tilde{\Sigma} -\Sigma', \brac{\tilde{\Sigma}^{-1} u} \brac{Qu}^\top}^2 \Biggr\} \,.
        \end{align*}
        Thus, applying~\cref{lem:sos_assymetric_cov_estimation} with $u = \tilde{\Sigma}^{-1} u$ and $v = Q u$ (overloading notation) and using the constraint $Q \Sigma' Q = Q$, it follows that
        \begin{align*}
            \calA_{\eta n} \sststile{\bigO{1}}{} \Biggl\{ \Paren{u^\top \brac{Q - \tilde{\Sigma}^{-1} } u}^2 &\leq \bigO{\eta} \paren{u^\top \brac{\tilde{\Sigma}^{-1} \Sigma' \tilde{\Sigma}^{-1}} u}^2 + \bigO{\eta} \paren{u^\top \brac{Q \Sigma' Q} u}^2  \\
            &= \bigO{\eta} \paren{u^\top \brac{\tilde{\Sigma}^{-1} \Sigma' \tilde{\Sigma}^{-1}} u}^2 + \bigO{\eta} \paren{u^\top Q u}^2\Biggr\} \,.
        \end{align*}
        The second term is a small multiple of the initial left-hand side and thus, we can ignore it by rearranging.
        For the first term, note that by~\cref{lem:main_identity_cov_sos_proof} our constraints imply that it is at most $\bigO{1} \paren{u^\top \tilde{\Sigma}^{-1} \Sigma \tilde{\Sigma}^{-1}}^2$.
        Using that $\Sigma \preceq 2 \tilde{\Sigma}$ now implies the claim.
    \end{proof}

    It remains to show~\cref{lem:sos_assymetric_cov_estimation}.
    \begin{proof}[Proof of~\cref{lem:sos_assymetric_cov_estimation}]
        Using the definition of $\tilde{\Sigma}$, SoS triangle and Cauchy-Schwarz inequality, and that $\sststile{2}{X,Y} \Set{XY \leq X^2 + Y^2}$, it follows that (recall that $r_i = 1$ if and only if $x_i^* = x_i$, i.e., the $i$-th sample of the input is uncorrupted)
        \begin{align*}
            \calA_{\eta n} \sststile{\bigO{1}}{u,v} \Iprod{\tilde{\Sigma} - \Sigma', u v^\top}^2 &= \paren{\tfrac{1}{n} \sum_{i=1}^n \Iprod{x_i^*, u}\Iprod{x_i^*, v} - \tfrac{1}{n} \sum_{i=1}^n \Iprod{x_i', u}\Iprod{x_i', v}}^2 \\
            &= \paren{\tfrac{1}{n} \sum_{i=1}^n (1 - r_i w_i)\Iprod{x_i^*, u}\Iprod{x_i^*, v} - \tfrac{1}{n} \sum_{i=1}^n (1 - r_i w_i)\Iprod{x_i', u}\Iprod{x_i', v}}^2 \\
            &\leq \paren{\tfrac{1}{n} \sum_{i=1}^n (1 - r_i w_i)\Iprod{x_i^*, u}\Iprod{x_i^*, v}}^2 +  \paren{\tfrac{1}{n} \sum_{i=1}^n (1 - r_i w_i)\Iprod{x_i', u}\Iprod{x_i', v}}^2 \\
            &\leq \eta \cdot \tfrac{1}{n} \sum_{i=1}^n \Iprod{x_i^*, u}^2\Iprod{x_i^*, v}^2 + \eta \cdot \tfrac{1}{n} \sum_{i=1}^n \Iprod{x_i', u}^2\Iprod{x_i', v}^2 \\
            &= \eta \cdot \tfrac{1}{n} \sum_{i=1}^n \paren{\Iprod{x_i^*, u}^4 + \Iprod{x_i^*, v}^4} + \eta \cdot \tfrac{1}{n} \sum_{i=1}^n \paren{\Iprod{x_i', u}^4 + \Iprod{x_i', v}^4 } \,.
        \end{align*}
        By our hypercontractive fourth moment constraint, we know that our constraints imply that there exists an SoS proof (also in variables $u$ and $v$) that$\tfrac{1}{n} \sum_{i=1}^n \Iprod{x_i', u}^4 \leq \bigO{1} \Paren{u^\top \Sigma' u}^2$ and the same for the term involving $v$.
        Similarly, it follows by $\eta$-higher-order goodness that there exists SoS proofs in variables $u$ and $v$ that (cf.~\cref{sec:cov-feasibility}) $\tfrac{1}{n} \sum_{i=1}^n \Iprod{x_i^*, u}^4 \leq \bigO{1} \paren{u^\top \tilde{\Sigma} u}^2$ and the same for the term involving $v$.
        Further, by~\cref{lem:boostrapping_covariance} we can conclude that there is an SoS proof in variables $u$ and $v$ respectively, that $\paren{u^\top \tilde{\Sigma} u}^2 \leq \bigO{1} \paren{u^\top \Sigma' u}^2$ and $\paren{v^\top \tilde{\Sigma} v}^2 \leq \bigO{1} \paren{v^\top \Sigma' v}^2$.
    \end{proof}

\section{Approximate DP via Product Pseudo-Expectations}
\label{sec:approx_dp_product_pe}

In this section, we prove our results for covariance-aware mean estimation (\cref{sec:approx_mean_est}) and linear regression (\cref{sec:approx_lin_reg}) under approximate differential privacy.
As opposed to the pure DP case, we will also need to show bounds on the volume ratio when the input is worst-case, instead of random.
\cite{Hopkins2023Robustness} showed such statements by cleverly constructing "substitute" parameters (for say the mean) also for worst-case inputs.
We show that these statements can also be derived by (modifications of) existing utility proofs of robust estimators.\footnote{We thank Samuel Hopkins for pointing this out to us.}

\subsection{Product Pseudo-Expectations}
\label{sec:product_pe}

In this section, we define the \emph{product} of two pseudo-distributions and derive useful properties that we will use in \cref{sec:approx_mean_est,sec:approx_lin_reg}.
The product of two pseudo-distribution is defined analogously to the product of two actual distributions.
For simplicity, we only define it when the degrees and number of variables of both distributions are the same.
\begin{definition}
\label{def:product_pE}
    Let $d, n\in \N$.
    Let $\pE_x$ and $\pE_y$ degree-$2d$ pseudo-expectations over variables $x = (x_1, \ldots, x_n) $ and $y = (y_1, \ldots, y_n)$, respectively.
    We define the \emph{product} of $\pE_1$ and $\pE_2$, denoted by $\pE$, as follows.
    For $\alpha \in \N^{n}, \beta \in \N^{n}$ such that $\card{\alpha} + \card{\beta} \leq 2d$, we define
    \[
        \pE x^\alpha y^\beta = \pE_x x^\alpha \cdot \pE_y y^\beta 
    \]
    and extend it linearly to all polynomials of degree at most $2d$ (jointly in $x$ and $y$). 
\end{definition}
Products of pseudo-expectations are themselves pseudo-expectations.
Further, if $\pE_x$ and $\pE_y$ satisfy constraints $\calA_x$ and $\calA_y$, then their product satisfies their union.
In particular, we have the following lemma.
\begin{lemma}
    \label{lem:product_pEs}
    Let $d, n \in \N$ and let $\pE_x$ and $\pE_y$ be degree-$2d$ pseudo-expectations over variables $x = (x_1, \ldots, x_n) $ and $y = (y_1, \ldots, y_n)$, respectively.
    Let $\pE$ be the product of $\pE_x$ and $\pE_y$.
    Then the following statements are true.
    \begin{enumerate}
        \item $\pE$ is a pseudo-expectation of degree $2d$ over the $2n$ variables $(x,y)$.
        \item Suppose $\pE_x$ and $\pE_y$ satisfy $\calA_x$ and $\calA_y$ at degree $2d$, respectively.
        Then $\pE$ satisfies $\calA_x \cup \calA_y$ at degree $2d$.\footnote{The proof straightforwardly extends to the setting when $\pE_x$ and $\pE_y$ only approximately satisfy the constraints.}
    \end{enumerate}
\end{lemma}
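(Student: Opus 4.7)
The plan is to verify both claims by directly expanding polynomials in $(x,y)$ along their $y$-degree, which reduces the positivity conditions to the elementary fact that the Frobenius inner product of two PSD matrices is non-negative; this is the standard way to show that a product of (pseudo-)distributions is itself a valid (pseudo-)distribution.

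For part (i), the normalization $\pE 1 = \pE_x 1 \cdot \pE_y 1 = 1$ is immediate. For positivity of squares, I would fix a polynomial $p(x,y)$ of degree at most $d$ and expand it in the $y$-variables as $p(x,y) = \sum_\alpha p_\alpha(x)\,y^\alpha$, where the sum ranges over multi-indices $\alpha$ with $\deg_x(p_\alpha) + |\alpha| \leq d$. By the product definition,
\[
\pE\, p^2 \;=\; \sum_{\alpha,\beta} \pE_x[p_\alpha\, p_\beta] \cdot \pE_y[y^{\alpha + \beta}] \;=\; \Iprod{M}{N},
\]
where $M_{\alpha,\beta} = \pE_x[p_\alpha p_\beta]$ and $N_{\alpha,\beta} = \pE_y[y^{\alpha+\beta}]$. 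Both matrices will be PSD: for any $v = (v_\alpha)$, one has $v^\top M v = \pE_x[(\sum_\alpha v_\alpha p_\alpha)^2] \geq 0$ and $v^\top N v = \pE_y[(\sum_\alpha v_\alpha y^\alpha)^2] \geq 0$, with both SoS polynomials having degree at most $2d$ and thus fitting in the degree budget of $\pE_x$ and $\pE_y$. Non-negativity of $\Iprod{M}{N}$ then completes part (i).

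For part (ii) I would follow the same template. Given a constraint subset $S = S_x \sqcup S_y$ and an SoS polynomial $q(x,y)$ meeting the degree budget, decomposing $q = \sum_k r_k^2$ and expanding each $r_k(x,y) = \sum_\alpha r_{k,\alpha}(x)\,y^\alpha$ gives
\[
\pE\!\left[q \cdot \prod_{i \in S_x} p_i \cdot \prod_{j \in S_y} p_j\right] \;=\; \sum_k \Iprod{M_k}{N_k},
\]
where $(M_k)_{\alpha,\beta} = \pE_x\!\left[r_{k,\alpha} r_{k,\beta} \prod_{i \in S_x} p_i\right]$ and $(N_k)_{\alpha,\beta} = \pE_y\!\left[y^{\alpha+\beta} \prod_{j \in S_y} p_j\right]$. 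The PSDness of $M_k$ will use the assumption that $\pE_x$ satisfies $\calA_x$ applied to the $x$-SoS polynomial $(\sum_\alpha v_\alpha r_{k,\alpha})^2$ together with the constraints $\prod_{i \in S_x} p_i$, and symmetrically for $N_k$.

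The main obstacle is purely bookkeeping: I need to check that every polynomial appearing inside $\pE_x$ or $\pE_y$ stays within its degree-$2d$ budget under the constraint-satisfaction definition. The key observation is $\deg_x(r_{k,\alpha}) \leq \deg(r_k)$ and $2\deg(r_k) \leq \deg(q)$, which gives $\deg\bigl((\sum_\alpha v_\alpha r_{k,\alpha})^2\bigr) \leq \deg(q)$. Combined with the hypothesis $\deg(q) + \sum_{i \in S} \max(\deg(p_i), d') \leq 2d$ (where $d' = \max(d_x, d_y)$ is the common degree bound of $\calA_x \cup \calA_y$), dropping the $S_y$-portion of the sum yields exactly the degree inequality needed for $\pE_x$'s satisfaction of $\calA_x$, and symmetrically for $\pE_y$. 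Once these degree checks are in place, summing $\Iprod{M_k}{N_k} \geq 0$ over $k$ concludes the argument.
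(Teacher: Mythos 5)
Your proof is correct. Part (i) is essentially the paper's argument in different packaging: the paper observes that the moment matrix of $\pE$ is the Kronecker product $M_x \otimes M_y$ of two PSD matrices and hence PSD, while you fix a square $p^2$, expand along the $y$-degree, and invoke $\Iprod{M}{N} \geq 0$ for PSD $M,N$; these two facts are interchangeable via the standard reshaping $\vec{s}^\top (M_x \otimes M_y)\vec{s} = \Iprod{M}{N}$, so nothing is gained or lost. Part (ii) is where you genuinely diverge. The paper first asserts, without loss of generality, that all constraints are equalities (encoding $p_i \geq 0$ as $p_i = s_i^2$ with a slack variable), after which it only needs to check that $\pE$ annihilates monomials times constraint products, which factors trivially. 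That reduction is stated in a footnote-level aside and is not entirely innocent: one must also extend $\pE_x$ and $\pE_y$ to pseudo-expectations over the slack variables satisfying the new equalities, which is not automatic. You instead verify the inequality-constraint definition head-on, reusing the Gram-matrix decomposition from part (i) with the constraint products $\prod_{i \in S_x} p_i$ and $\prod_{j \in S_y} p_j$ absorbed into $M_k$ and $N_k$ respectively, and your degree bookkeeping (dropping the $S_y$ contribution from the global degree budget to certify the $x$-side, and symmetrically) is sound. Your route is longer but more self-contained; the paper's is shorter but leans on an unproven reduction.
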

\begin{proof}
    We start with the first property.
    \paragraph{$\pE$ is a valid pseudo-expectation.}
    Normalization and linearity follow by definition.
    We have to show positive semidefiniteness.
    Note that $\pE_x$ and $\pE_y$ can be represented via the matrices $M_x = \pE_x(1,x)^{\otimes d} ((1,x)^{\otimes d})^\top$ and $M_y = \pE_x(1,y)^{\otimes d} ((1,y)^{\otimes d})^\top$.
    We represent $\pE$ by the matrix $M = M_x \otimes M_y$.
    By construction both $M_x$ and $M_y$ are positive semidefinite, and hence so is $M$.
    Now consider any degree at most $d$ polynomial $s$.
    Then there exists a vector representation $\vec{s}$ such that
    \[
        \pE s(x,y)^2 = \vec{s}^\top M \vec{s} \geq 0.
    \]

    \paragraph{$\pE$ satisfies $\calA_x \cup \calA_y$.}

    For the second property, without loss of generality we can assume that $\calA_x$ and $\calA_y$ only contain equality constraints.\footnote{Inequality constraints can be encoded as equality constraints using one additional variable per constraint.}
    That is, $\calA_x = \Set{q_1(x) = 0 , \ldots, q_{m_x}(x) = 0}$ and $\calA_y = \Set{q_1'(y) = 0 , \ldots, q_{m_y}'(y) = 0}$.
    Let $p(x,y)$ be an arbitrary polynomial and $S = S_x \cup S_y \subseteq [m_x] \cup [m_y]$ such that the degree of $\tilde{p}(x,y) = p(x,y) \cdot \prod_{j \in S_x} q_j(x)\cdot \prod_{k \in S_y} q_k'(y)$ is at most $2d$.
    We have to show that $\pE \tilde{p} = 0$.
    By lineary, we can assume that $p(x,y) = x^\alpha y^\beta$.
    But then
    \[
        \pE \tilde{p} = \Paren{\pE_x x^\alpha \prod_{j \in S_x} q_j(x)} \cdot \Paren{\pE_y y^\beta \prod_{k \in S_x} q_k(x)} = 0 \,.
    \]
\end{proof}

\paragraph{Product Pseudo-Expectations in Our Settings.}

The following facts will be useful to us.
Let $x_1, \ldots, x_n$ be arbitrary points and $0 \leq \eta < 1/2$.
Consider a constraint set $\calA$ over variables $w_1, \ldots, w_n, x_1', \ldots, x_n'$ (and potentially more) that contains constraints
\[
    \Set{w_i^2 = w_i\,, \sum_{i=1}^n w_i \geq (1-\eta)n\,, w_i x_i' = w_i x_i}
\]
(and potentially more).
We will use the following fact
\begin{fact}
    \label{fact:product_pE_overlap}
    Consider two ``independent'' copies of $\calA$ described above.
    Denote the copies by $\calA^{(1)}$ and $\calA^{(1)}$ and the variables by $w_i^{(1)},w_i^{(2)}$ and $x_i^{(1)},x_i^{(2)}$.
    Define $z_i = w_i^{(1)} \cdot w_i^{(2)}$.
    Then it holds that
    \begin{enumerate}
        \item For all $i \in [n] \colon \calA^{(1)} \cup \calA^{(2)} \sststile{4}{} z_i^2 = z_i$ and $\calA^{(1)} \cup \calA^{(2)} \sststile{3}{} z_ix_i^{(1)} = z_ix_i^{(2)}$.
        \item $\calA^{(1)} \cup \calA^{(2)} \sststile{2}{} \sum_{i=1}^n z_i \geq (1-2\eta)n$.
    \end{enumerate}
\end{fact}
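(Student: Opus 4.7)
The plan is to verify each of the three claims by direct algebraic manipulations using the constraints $(w_i^{(j)})^2 = w_i^{(j)}$, $w_i^{(j)} x_i^{(j)} = w_i^{(j)} x_i$, and $\sum_i w_i^{(j)} \ge (1-\eta)n$ from $\calA^{(1)}$ and $\calA^{(2)}$. The first two claims are routine identities, while the third (the counting bound) requires a small SoS-style inclusion-exclusion argument.

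For the identity $z_i^2 = z_i$, I will expand $z_i^2 = (w_i^{(1)})^2 (w_i^{(2)})^2$ and substitute $(w_i^{(j)})^2 = w_i^{(j)}$ twice to obtain $w_i^{(1)} w_i^{(2)} = z_i$; this uses two instances of the Boolean constraints so the resulting derivation has degree $4$ as stated. For $z_i x_i^{(1)} = z_i x_i^{(2)}$, I will rewrite $z_i x_i^{(1)} = w_i^{(2)} \cdot (w_i^{(1)} x_i^{(1)}) = w_i^{(2)} w_i^{(1)} x_i$ using the constraint $w_i^{(1)} x_i^{(1)} = w_i^{(1)} x_i$ from $\calA^{(1)}$, and symmetrically $z_i x_i^{(2)} = w_i^{(1)} w_i^{(2)} x_i$, matching the two sides. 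The degree bookkeeping works out to $3$ as stated.

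The main (though still elementary) step is the counting bound. I will use the polynomial identity
\[
    w_i^{(1)} w_i^{(2)} \;=\; w_i^{(1)} + w_i^{(2)} - 1 + (1 - w_i^{(1)})(1 - w_i^{(2)}) \,.
\]
Summing over $i$ and using $\sum_i w_i^{(j)} \ge (1-\eta) n$, one obtains
\[
    \sum_{i=1}^n z_i \;\ge\; (1-2\eta)\, n + \sum_{i=1}^n (1-w_i^{(1)})(1-w_i^{(2)}) \,,
\]
so it remains to certify that each $(1-w_i^{(1)})(1-w_i^{(2)})$ is non-negative in SoS modulo the Boolean constraints. The key trick is that the constraint $(w_i^{(j)})^2 = w_i^{(j)}$ is equivalent to $(1-w_i^{(j)}) = (1-w_i^{(j)})^2$, so applying this to both indices yields $(1-w_i^{(1)})(1-w_i^{(2)}) = [(1-w_i^{(1)})(1-w_i^{(2)})]^2$ (plus polynomial multiples of the Boolean constraints), exhibiting the quantity as a square. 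I expect the main subtlety to be bookkeeping the stated degree $2$ for the counting claim, which only pertains to the degree of the certified polynomial inequality itself; the supporting SoS identity for non-negativity of $(1-w_i^{(1)})(1-w_i^{(2)})$ is pushed into the preconditions via the Boolean constraints.
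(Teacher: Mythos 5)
Your proof is correct and matches the paper's argument: you use the same substitution chains for part 1 (expanding $z_i^2=(w_i^{(1)})^2(w_i^{(2)})^2$ and the chain $z_i x_i^{(1)} = w_i^{(1)}w_i^{(2)}x_i = z_i x_i^{(2)}$), and for the counting bound the paper simply invokes the standard fact $\Set{x^2=x,\, y^2=y} \sststile{2}{} \Set{xy \ge x+y-1}$, which is exactly what your identity together with the square $[(1-w_i^{(1)})(1-w_i^{(2)})]^2$ establishes. The only caveat is the one you already flag: that square has degree $4$, so the certificate is really degree $4$ rather than the stated degree $2$ — but the paper's own citation carries the same looseness, and the constant degree is immaterial for the downstream use.
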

\begin{proof}
    The first part of the first item follows directly since $(w_i^{(1)})^2 = w_i^{(1)}$ and $(w_i^{(2)})^2 = w_i^{(2)}$.
    The second part follows since
    \[
        z_i x_i^{(1)} = w_i^{(1)} w_i^{(2)} x_i^{(1)} = w_i^{(1)} w_i^{(2)} x_i = w_i^{(1)} w_i^{(2)} x_i^{(2)} = z_i x_i^{(2)} \,.
    \]
    For the second property we use that $\Set{x^2 = x, y^2 = y} \sststile{2}{} xy \geq x + y -1$.
    It follows that
    \[
        \sum_{i=1}^n z_i = \sum_{i=1}^n w_i^{(1)} w_i^{(2)} \geq \sum_{i=1}^n \Paren{w_i^{(1)} + w_i^{(2)} - 1} \geq (1-2\eta) n \,.
    \]
\end{proof}

\subsection{Approximate DP Covariance-Aware Mean Estimation}
\label{sec:approx_mean_est}

In this section, we prove our main result for covariance-aware mean estimation under approximate differential privacy.
\begin{theorem}
    \label{thm:approx_dp_mean_full}
    Let $\mu \in \R^d$ such that $\norm{\mu} \leq R$.
    Let $0 <\eta$ be less than a sufficiently small constant and let $0 < \alpha, \beta, \eps, \delta$ and $\alpha < 1$.
    Let $x_1^*, \ldots, x_n^*$ be $n \geq n_0$ i.i.d. samples from $N(\mu, \Sigma)$.\footnote{More generally, they can be i.i.d. samples from a distribution $\cD$ such that $\Sigma^{-1/2}(\cD - \mu)$ is fourth moment reasonable sub-gaussian.}
    Let $\cX = \Set{x_1, \ldots, x_n}$ be an $\eta$-corruption of $x_1^*, \ldots, x_n^*$.
    There exists an  $(\eps, \delta)$-differentially private algorithm that, given $\eta, \alpha, \eps, \delta$, and $\cX$, runs in time $\poly(n, \log R)$ and with probability at least $1-\beta$ outputs an estimate $\muhat$ satisfying
    \[ \Norm{\Sigma^{-1/2}\left(\muhat - \mu\right)} \leq \bigO{ \alpha } \,,
    \]
    whenever 
    $$n_0 = \tilde{\Omega}\Paren{ \frac{d^2 + \log^2(1/\beta)}{\alpha^2} + \frac{d + \log(1/\beta)}{\alpha \eps} + \frac{d  + \log(1/\delta)}{\eps} }, $$
and $\alpha \geq \Omega( \eta \log(1/\eta)) $.
\end{theorem}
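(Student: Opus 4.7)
The plan is to invoke the approximate DP reduction of \cref{thm:approx-dp-reduction} with the same score function defined in \cref{def:score-mean}, and the same closeness/counting constraints used for the pure DP version in \cref{sec:mean_estimation}. All of the preconditions needed by \cref{thm:approx-dp-reduction} other than the worst-case volume ratio were already established for the pure DP analysis: bounded sensitivity (\cref{lem:sensitivity-mean}), quasi-convexity (\cref{lem:mean-quasiconvexity}), efficient computability (\cref{lem:efficient-comp-mean}), and efficient computation of a low-scoring point (\cref{lem:find-low-score-mean}) all transfer verbatim because they never used anything about how the input samples were generated. The utility statement (points of score $\le 2\eta n$ are $O(\alpha)$-accurate in Mahalanobis distance) likewise transfers from \cref{lem:closeness_constraint_soundness_mean}.

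The new ingredient is a bound on $\log\paren{V_{\eta^*}(\calY)/V_{0.8 \eta^*}(\calY)}$ that holds for every dataset $\calY$ on which any point of score $\le 0.7 \eta^* n$ exists, not just for random $\calY$. Here we can no longer rely on $\eta$-goodness of the uncorrupted samples, so the pure DP volume bound via \cref{cor:volume_considerations_mean} does not apply. The key idea is to use the product pseudo-expectations machinery of \cref{sec:product_pe}: for any two candidates $\tilde{\mu}_1, \tilde{\mu}_2$ both of score $\le \eta^* n$ on the \emph{same} dataset, with certificates $\pE_1, \pE_2$, take the product pseudo-expectation $\pE$ on two independent copies of the variable set. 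By \cref{fact:product_pE_overlap} the pointwise product $z_i = w_i^{(1)} w_i^{(2)}$ of the selector variables satisfies $z_i^2 = z_i$, $z_i x_i^{(1)} = z_i x_i^{(2)}$, and $\sum_i z_i \geq (1-2\eta^*)n$. This lets us re-run the SoS proof of \cref{lem:main_identity_mean_sos_proof} on the ``intersection'' of the two certified large subsets to conclude that $\pE[\mu'^{(1)} - \mu'^{(2)}]$ is small in $\Sigma$-Mahalanobis norm; combined with the two individual closeness constraints it will then follow that $\norm{\Sigma^{-1/2}(\tilde{\mu}_1 - \tilde{\mu}_2)} \leq O(\alpha)$. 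Consequently all score-$\le \eta^* n$ points lie in an $O(\alpha)$-ball in $\Sigma^{-1/2}$ metric, and a standard metric-entropy argument gives $\log(V_{\eta^*}/V_{0.8 \eta^*}) = O(d)$.

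The main obstacle is the re-execution of the SoS argument of \cref{lem:main_identity_mean_sos_proof} in the product setting: the original proof crucially used $\eta$-goodness and $\eta$-higher-order-goodness of the uncorrupted samples to invoke the SoS selector lemma (\cref{lemma:subset_selection}) on subsets indicated by $r_i(1-w_i)$. In the product-PE regime we no longer have uncorrupted samples at all, but the $(1-2\eta^*)$-fraction ``overlap'' subset indicated by $z_i$ still couples the two copies via $z_i x_i^{(1)} = z_i x_i^{(2)}$, and \emph{within} each copy the fourth-moment certifiable hypercontractivity constraint still holds. So the proof should proceed by writing
\[
    \pE \Iprod{\Sigma^{-1/2}(\mu'^{(1)} - \mu'^{(2)}), u}^2 \leq 2\,\pE \Iprod{\Sigma^{-1/2}(\mu'^{(1)} - \tfrac{1}{\lvert Z\rvert}\sum_i z_i x_i^{(1)}), u}^2 + 2\,\pE \Iprod{\Sigma^{-1/2}(\mu'^{(2)} - \tfrac{1}{\lvert Z\rvert}\sum_i z_i x_i^{(2)}), u}^2,
\]
and bound each term by adapting the Term~\ref{eqn:quantity_of_interest}.1 / Term~\ref{eqn:quantity_of_interest}.2 decomposition of the proof of \cref{lem:main_identity_mean_sos_proof}, using the certifiable hypercontractivity in \emph{each} copy in place of higher-order goodness of the uncorrupted samples, and using \cref{lemma:subset_selection} applied to the SoS indicators $w_i^{(1)}(1-z_i)$ and $w_i^{(2)}(1-z_i)$.

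Given this volume bound, plugging into \cref{thm:approx-dp-reduction} with $\eta^* = \Theta(1)$ and $\eta = \eta(\alpha)$ yields the sample complexity
\[
    n \geq \tilde{\Omega}\Paren{ \frac{d + \log(1/\beta)}{\alpha \eps} + \frac{d + \log(1/\delta)}{\eps} },
\]
which combined with the robustness sample complexity $\tilde{\Omega}((d^2 + \log^2(1/\beta))/\alpha^2)$ from \cref{thm:mean-est-sample-opt} gives the claimed bound. The $(d + \log(1/\delta))/\eps$ term comes from the approximate DP reduction and replaces the packing term $d \log(R\sqrt{L})/\eps$ of the pure DP result, which is exactly the expected improvement.
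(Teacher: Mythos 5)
Your high-level architecture matches the paper exactly: reuse the pure-DP score function, transfer sensitivity/quasi-convexity/computability verbatim, and supply the one missing ingredient --- a worst-case bound on $\log(V_{\eta^*}/V_{0.8\eta^*})$ --- via product pseudo-expectations and \cref{fact:product_pE_overlap}. However, the execution of the key new step has a concrete flaw. You propose to ``re-run'' the proof of \cref{lem:main_identity_mean_sos_proof} by applying the SoS selector lemma (\cref{lemma:subset_selection}) to the indicators $w_i^{(1)}(1-z_i)$. The selector lemma requires the quantities being summed to be \emph{fixed scalars} admitting a uniform bound $B$ over all fixed subsets of size $k$; in the product-PE regime the relevant quantities $\Iprod{x_i^{(1)}-\mu^{(1)},u}^2$ are polynomials in the indeterminates $x_i^{(1)}$, and even restricting to the observed data $x_i$ there is no subset bound for worst-case inputs. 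The correct (and simpler) replacement, which is what the paper uses, is SoS Cauchy--Schwarz combined with the certifiable-hypercontractivity constraint that is \emph{part of the program} and hence holds for any feasible pseudo-expectation regardless of the data: $\bigl(\tfrac1n\sum_i(1-z_i)\Iprod{x_i^{(j)}-\mu^{(j)},u}^2\bigr)^2 \leq 2\eta^*\cdot\tfrac1n\sum_i\Iprod{x_i^{(j)}-\mu^{(j)},u}^4 \leq O(\eta^*)\bigl(u^\top\Sigma^{(j)}u\bigr)^2$. No appeal to goodness of uncorrupted samples, and no selector lemma, is needed or possible here.

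A second, related issue: you claim that any two score-$\leq\eta^* n$ points satisfy $\norm{\Sigma^{-1/2}(\tilde\mu_1-\tilde\mu_2)}\leq O(\alpha)$ and hence the log-volume-ratio is $O(d)$. This is too strong. The identifiability argument at corruption level $\eta^*$ (a fixed constant) only yields $\Iprod{u,\mu_1-\mu_2}^2 \leq O(\eta^*)\,u^\top(\Sigma_1+\Sigma_2)u$, i.e.\ an $O(\sqrt{\eta^*})=O(1)$ ball, and combining with the closeness constraints adds only $O(\alpha)$, so the upper-bound ball has radius $O(1)$, not $O(\alpha)$. The resulting log-volume-ratio is $O(d\log(1/\alpha))$ rather than $O(d)$; this is still absorbed by the $\tilde\Omega$ in the sample complexity, so the theorem survives, but the intermediate claim as you state it is false and should be weakened accordingly (you also need the accompanying spectral comparison $\Omega(1)\Sigma_1\preceq\Sigma_2\preceq O(1)\Sigma_1$ to pass between the two metrics, which your sketch omits).
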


Similarly to our pure-DP algorithm (cf.~\cref{thm:main_private_cov_mean_est}), this theorem follows from the reduction in \cite{Hopkins2023Robustness}, see \cref{thm:approx-dp-reduction}.
The main difference is that we also need a bound on the volume ratios of low-scoring points when given worst-case instead of (corrupted) Gaussian data.

Let $\eta^*$ be a sufficiently small absolute constant.
Also let $\calX = \Set{x_1, \ldots, x_n}$ be a dataset such that there exists a point $\tilde{\mu}$ of score $0.7\eta^* n$, witnessed by a pseudo-expectation $\pE$.
Let $V_{\gamma}(\calX)$ be the volume of the set of candidates that achieve score $\gamma n$ with respect to $\calX$.
It will be enough to show that $\log(V_{\eta^*}(\calX)/ V_{0.8\eta^*}(\calX)) \leq d \log (1/\alpha)$.
We show this in two parts.
First, we give a lower bound on $V_{0.8\eta^*}(\calX)$ and then an upper bound on $V_{\eta^*}(\calX)$.

\paragraph{Lower bound.}
By definition, $\pE$ is a certificate of score $0.7\eta^*n \leq 0.9 \eta^* n$ for any point $\bar{\mu}$ such that 
\begin{align*}
    \Iprod{u, \pE \mu' - \bar{\mu}}^2 \leq \bigO{\alpha^2} \pE \Iprod{u, \Sigma' u} \,.
\end{align*}
Let $\hat{\Sigma} = \pE \Sigma'$ and $\hat{\mu} = \pE \mu'$.
Then the above is equivalent to $\norm{\hat{\Sigma}^{-1/2} (\hat{\mu} - \bar{\mu})} \leq \bigO{\alpha}$.

\paragraph{Upper bound.}
Thus, if we can show that for all points $\bar{\mu}$ of score $\eta^*$ it holds that $\norm{\hat{\Sigma}^{-1/2} (\hat{\mu} - \bar{\mu})} \leq \bigO{1}$ we are done.
We do this in several steps.
First, we introduce some notation.
Let $\pE_1$ and $\pE_2$ be the witnessing pseudo-expectations of $\tilde{\mu}$ and $\bar{\mu}$ respectively.
Denote the variables associated with $\pE_1$ using a superscript $(1)$ and the ones of $\pE_2$ using a superscript $(2)$.
Let
\begin{align*}
\Sigma_1 = \pE_1 \Sigma^{(1)} = \hat{\Sigma}\,, \Sigma_2 = \pE_2 \Sigma^{(2)} \quad\quad&\text{and}\quad\quad\mu_1 = \pE_1 \mu^{(1)} = \hat{\mu}\,, \mu_2 = \pE_2 \mu^{(2)} \,.
\end{align*}
We will first show that $\Omega(1) \Sigma_1 \preceq \Sigma_2 \preceq \bigO{1} \Sigma_1$ and then that for all vectors $u$ it holds that $\Iprod{u, \mu_1 - \mu_2}^2 \leq \bigO{1} u^\top (\Sigma_1 + \Sigma_2)u$.
Together they imply the claim.

We show this as follows.
Let $\pE$ be the product of $\pE_1$ and $\pE_2$ as defined in \cref{sec:approx_dp_product_pe}.
Recall that both $\pE_1$ and $\pE_2$ satisfy
\begin{equation*}
\calB_{\eta^* n}\colon
  \left \{
    \begin{aligned}
      &\forall i\in [n].
      & w_i^2
      & = w_i \\
      & &\sum_{i\in [n]} w_i &\geq (1-\eta^*)n \\
      &\forall i\in [n] & w_i(x'_i - x_i) &=0 \\
      &&\tfrac{1} n \sum_{i \in [n]} x_i' &= \mu' \\
    && \frac{1}{n} \sum_{i \in [n]} \Paren{ x_i' - \mu' } \Paren{ x_i' - \mu' }^\top & = \Sigma'  \\ 
    & \exists \text{ the following SoS proof }  & \sststile{4}{v} \Biggl\{\frac{1}{n} \sum_{i \in [n]} \Iprod{x_i'  - \mu', v}^4  \leq & \Paren{3 + \eta \log^2(1/\eta)} \Paren{ v^\top \Sigma' v}^2 \Biggr\}
    \end{aligned}
  \right \}
\end{equation*}
And thus $\pE$ satisfies $\calB_{\eta^* n}^{(1)} \cup \calB_{\eta^* n}^{(2)}$.
We deviate from the notation above and denote the variables in $\calB_{\eta^* n}^{(1)}$ and $\calB_{\eta^* n}^{(1)}$ using a superscript of 1 and 2 respectively.
Let $z_i = w_i^{(1)} \cdot w_i^{(2)}$.
We will show that for any vector $u$,
\begin{align}
\label{eq:cov_error}
\calB_{\eta^* n}^{(1)} \cup \calB_{\eta^* n}^{(2)} \sststile{}{} \Iprod{u,(\Sigma^{(1)}-\Sigma^{(2)})u}^2 \leq \bigO{1} \cdot \eta^* \Iprod{u, (\Sigma^{(1)}+\Sigma^{(2)})u}^2
\end{align}
and
\begin{align}
\label{eq:mean_error}
\calB_{\eta^* n}^{(1)} \cup \calB_{\eta^* n}^{(2)} \sststile{}{} \Iprod{u, \mu^{(1)} - \mu^{(2)}}^2 \leq \bigO{1}\cdot  \eta^* \Iprod{u, (\Sigma^{(1)}+\Sigma^{(2)})u} \,.
\end{align}
Using that $\Sigma_1 - \Sigma_2 = \pE [\Sigma^{(1)} - \Sigma^{(2)}]$ and $\mu_1 - \mu_2 = \pE [\mu^{(1)} - \mu^{(2)}]$ and the same arguments as before, this implies the claims above.

The proofs of this are basically the same as in \cite{kothari2018robust} and we only sketch them.
We will first show that 
\begin{align}
\label{eq:cov_err_intermediate}
\calB_{\eta^* n}^{(1)} \cup \calB_{\eta^* n}^{(2)} \sststile{}{} \Paren{u^\top \Paren{\Sigma^{(1)} - \Sigma^{(2)}} u}^2 \leq \bigO{1} \cdot \eta^* \Iprod{u, (\Sigma^{(1)}+\Sigma^{(1)})u}^2 + \bigO{1} \cdot \Iprod{u, \mu^{(1)} - \mu^{(2)}}^4 \,.
\end{align}
Recall that $\calB_{\eta^* n}^{(1)} \cup \calB_{\eta^* n}^{(2)}$ implies $z_i x_i^{(1)} = z_i x_i^{(2)}$.
Thus, using that
\[
    \Sigma^{(1)} = \tfrac 1 n \sum_{i=1}^n (x_i^{(1)} - \mu^{(2)})(x_i^{(1)} - \mu^{(2)})^\top + (\mu^{(1)} - \mu^{(2)})(\mu^{(1)} - \mu^{(2)})^\top \,,
\]
it follows by almost triangle inequality that
\begin{align*}
    \calB_{\eta^* n}^{(1)} \cup \calB_{\eta^* n}^{(2)} \sststile{}{} & \Paren{u^\top \Paren{\Sigma^{(1)} - \Sigma^{(2)}} u}^2 \\
    &\leq \bigO{1} \Paren{\tfrac 1 n \sum_{i=1}^n (1-z_i) \Iprod{u,x_i^{(1)} - \mu^{(2)}}^2}^2 + \bigO{1}  \Paren{\tfrac 1 n \sum_{i=1}^n (1-z_i) \Iprod{u,x_i^{(2)} - \mu^{(2)}}^2}^2 \\
    &+ \bigO{1}  \Iprod{\mu^{(1)} - \mu^{(2)},u}^4 \\
    &\leq \bigO{1} \Paren{\tfrac 1 n \sum_{i=1}^n (1-z_i) \Iprod{u,x_i^{(1)} - \mu^{(1)}}^2}^2 + \bigO{1}  \Paren{\tfrac 1 n \sum_{i=1}^n (1-z_i) \Iprod{u,x_i^{(2)} - \mu^{(2)}}^2}^2 \\
    &+ \bigO{1}  \Iprod{\mu^{(1)} - \mu^{(2)},u}^4
\end{align*}
By Cauchy-Schwarz and hypercontractivity, the first two terms are at most a constant multiple of $\eta^* \Iprod{u, (\Sigma^{(1)}+\Sigma^{(1)})u}^2$.

We next show \cref{eq:mean_error}.
Together with \cref{eq:cov_err_intermediate}, this immediately implies \cref{eq:cov_error}.
As before, we can bound using Cauchy-Schwarz
\begin{align*}
    \calB_{\eta^* n}^{(1)} \cup \calB_{\eta^* n}^{(2)} \sststile{}{} &\Iprod{u, \mu^{(1)} - \mu^{(2)}}^2 \\
&\leq \bigO{1} \Paren{\tfrac 1 n \sum_{i=1}^n (1-z_i) \Iprod{u, x_i^{(1)} - \mu^{(1)}}} + \bigO{1}\Paren{\tfrac 1 n \sum_{i=1}^n (1-z_i) \Iprod{u, x_i^{(2)} - \mu^{(2)}}} \\
&+ \bigO{1} \cdot \eta^* \Iprod{u, \mu^{(1)} - \mu^{(2)}}^2 \\
&\leq \bigO{1} \cdot \eta^* \Iprod{u, (\Sigma^{(1)} + \Sigma^{(1)} )u} + \bigO{1} \cdot \eta^* \Iprod{u, \mu^{(1)} - \mu^{(2)}}^2 \,,
\end{align*}
which implies the claim by taking $\eta^*$ small enough.

\subsection{Approximate DP Linear Regression}
\label{sec:approx_lin_reg}

In this section, we prove our main result for linear regression under approximate differential privacy.
\begin{theorem}
    \label{thm:approx_dp_regression_full}
    Let $\theta \in \R^d$ such that $\norm{\theta} \leq R$.
    Let $0 < \eta$ be less than a sufficiently small constant and let $0 < \alpha, \beta, \eps, \delta$ and $\alpha < 1$.
    Let $(x_1^*, y_1^*), \ldots, (x_n^*, y_n^*)$ be $n \geq n_0$ i.i.d. samples from a linear regression instance with parameter $\theta$ and covariance $\Sigma$ (as defined in~\cref{model:robust_regression}).
    Let $(\cX, \calY) = \Set{(x_1, y_1), \ldots, (x_n, y_n)}$ be an $\eta$-corruption of $(x_1^*, y_1^*), \ldots, (x_n^*, y_n^*)$.
    There exists an  $(\eps, \delta)$-differentially private algorithm that, given $\eta, \alpha, \eps, \delta$, and $(\cX, \calY)$, runs in time $\poly(n, \log R)$ and with probability at least $1-\beta$ outputs an estimate $\muhat$ satisfying
    \[ \Norm{\Sigma^{1/2}\left(\muhat - \mu\right)} \leq \bigO{ \alpha } \,,
    \]
    whenever 
    $$n_0 = \tilde{\Omega}\Paren{ \frac{d^2 + \log^2(1/\beta)}{\alpha^2} + \frac{d + \log(1/\beta)}{\alpha \eps} + \frac{d  + \log(1/\delta)}{\eps} }, $$
and $\alpha \geq \Omega( \eta \log(1/\eta)) $.
\end{theorem}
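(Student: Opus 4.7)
}

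The plan is to instantiate the approximate-DP reduction \cref{thm:approx-dp-reduction} with the same score function that was used in the pure-DP regression algorithm (\cref{def:score_function_regression}). All of the preconditions of the reduction---bounded sensitivity, quasi-convexity, efficient computability, existence of a low-scoring point via the robust algorithm, and the volume lower bound at low scores---were already established in \cref{sec:private_regression} in a way that does not depend on $\delta$ and that holds for any realization of the data. The only genuinely new ingredient needed to pass from pure to approximate DP is a bound on the ratio $V_{\eta^*}(\mathcal{Y})/V_{0.8 \eta^*}(\mathcal{Y})$ for \emph{worst-case} datasets $\mathcal{Y}$ for which there exists a point of score at most $0.7 \eta^* n$, where $\eta^*$ is a sufficiently small absolute constant. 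Once this ratio is shown to be at most $(C/\alpha)^d$, plugging into \cref{thm:approx-dp-reduction} with the choice $\eta = \eta(\alpha)$ yields the sample complexity in the statement, since the $\log(R\sqrt{\covscale})/\eps$ term of the pure-DP bound is replaced by $\log(1/\delta)/\eps$.

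Following the strategy of \cref{sec:approx_mean_est}, the volume ratio will be controlled by showing the following worst-case closeness statement: if $\tilde{\theta}_1$ has score at most $0.7 \eta^* n$ witnessed by a pseudo-expectation $\pE_1$, and $\tilde{\theta}_2$ has score at most $\eta^* n$ witnessed by $\pE_2$, then $\|\Sigma_1^{1/2}(\tilde{\theta}_1 - \tilde{\theta}_2)\| \leq O(1)$, where $\Sigma_1 = \pE_1 \Sigma'$. Combined with the closeness constraint of the score function (which, as in the mean-estimation case, implies that the set of candidates witnessed by $\pE_1$ at score $\le 0.9 \eta^* n$ contains a $\Sigma_1^{1/2}$-ball of radius $\Omega(\alpha)$ around $\pE_1 \theta'$), this gives the desired bound on the volume ratio. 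To obtain the worst-case closeness statement, form the product pseudo-expectation $\pE = \pE_1 \otimes \pE_2$ as in \cref{def:product_pE}, which by \cref{lem:product_pEs} satisfies the union $\calB_{\eta^* n}^{(1)} \cup \calB_{\eta^* n}^{(2)}$ of two independent copies of the constraint system, and apply \cref{fact:product_pE_overlap} to the variable $z_i = w_i^{(1)} w_i^{(2)}$, so that $z_i x_i^{(1)} = z_i x_i^{(2)}$, $z_i y_i^{(1)} = z_i y_i^{(2)}$ and $\sum_i z_i \geq (1-2\eta^*)n$.

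Using $z_i$ as a coupling between the two copies, one proves the following three SoS inequalities in variables $u$ that are fixed (not indeterminate):
\begin{align*}
\calB^{(1)} \cup \calB^{(2)} &\sststile{}{} \bigl(u^\top (\Sigma^{(1)} - \Sigma^{(2)}) u\bigr)^2 \leq O(\eta^*) \bigl(u^\top (\Sigma^{(1)} + \Sigma^{(2)}) u\bigr)^2, \\
\calB^{(1)} \cup \calB^{(2)} &\sststile{}{} \bigl\langle u, \Sigma^{(1)}(\theta^{(1)} - \theta^{(2)}) \bigr\rangle^2 \leq O(\eta^* \log^2(1/\eta^*)) \cdot u^\top \Sigma^{(1)} u, \\
\calB^{(1)} \cup \calB^{(2)} &\sststile{}{} \bigl\langle u, \theta^{(1)} - \theta^{(2)} \bigr\rangle^2 \leq O(1) \cdot u^\top (\Sigma^{(1)})^{-1} u.
\end{align*}
The first of these is exactly the covariance-closeness derivation sketched around equations \eqref{eq:cov_error}--\eqref{eq:cov_err_intermediate}, transplanted from the mean-estimation proof. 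The second is the direct analogue of the proof of \cref{lem:sos_regression_without_bootstrapping} combined with the bootstrapping bound of \cref{lem:sos_regression_bootstrapping}: in the key identity $\tilde{\Sigma} (\theta' - \theta_{\mathrm{LS}}) = \tfrac{1}{n} \sum (1-r_i w_i)[\cdots] - \tfrac{1}{n}\sum (1-r_i w_i)[\cdots]$ that drives those proofs, the role played by the indicator $r_i$ of ``uncorrupted'' samples (which was exogenous to the SoS system, but compared copies of $x_i^*$ to copies of $x_i'$) is now played by $z_i$, which algebraically couples $x_i^{(1)}$ with $x_i^{(2)}$ and $y_i^{(1)}$ with $y_i^{(2)}$ via \cref{fact:product_pE_overlap}; the gradient conditions from both copies, together with the bounded-noise and hypercontractivity constraints from both copies, then let us repeat the Cauchy--Schwarz/hypercontractivity/resilience chain of \cref{lem:sos_regression_without_bootstrapping,lem:sos_regression_bootstrapping}. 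The third inequality follows from the second after substituting $u \mapsto (\Sigma^{(1)})^{-1} u$, since the first inequality implies $\Sigma^{(2)} \preceq O(1) \Sigma^{(1)}$ in the proof system. Finally, applying $\pE[\,\cdot\,]$ and Cauchy--Schwarz for pseudo-expectations (\cref{fact:pe_cs}) to the third inequality gives $\|\Sigma_1^{1/2}(\pE_1 \theta' - \pE_2 \theta')\| \leq O(1)$, and combining this with the closeness constraints satisfied by $\pE_1$ and $\pE_2$ (invoked via the SoS inverse-covariance bound of \cref{lem:sos_inverse_covariance}, which applies to $\pE_1$) yields $\|\Sigma_1^{1/2}(\tilde{\theta}_1 - \tilde{\theta}_2)\| \leq O(1)$ as required.

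The main obstacle in executing this plan is the second SoS inequality above: all of the regression-specific intricacies---the gradient condition, the need to bootstrap a $\sqrt{\eta^*}$ covariance bound before deriving the $\eta^* \log(1/\eta^*)$ regression bound, and the interplay with the noise variance---must be carried out in the product system using $z_i$ rather than the indicator $r_i$. The only nontrivial novelty is that on the uncorrupted-sample side of the analysis, where the original proof used stability properties of i.i.d.\ Gaussian samples $(x_i^*, y_i^*)$, we now use only the fact that $\calB^{(1)}$ and $\calB^{(2)}$ each imply hypercontractivity, bounded noise, and the gradient condition; everything else is symbolic manipulation in the two copies. Once the three inequalities are in hand, the rest of the argument (volume lower bound from the closeness constraint, volume upper bound via the worst-case closeness statement, substitution into \cref{thm:approx-dp-reduction}) is mechanical and parallels \cref{sec:approx_mean_est} line by line.
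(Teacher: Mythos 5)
Your overall plan is the paper's: reduce to \cref{thm:approx-dp-reduction}, and obtain the worst-case volume-ratio bound by forming the product pseudo-expectation of two witnessing certificates and running an identifiability argument with the coupling $z_i = w_i^{(1)} w_i^{(2)}$. However, there are two concrete gaps in the execution. First, your third inequality is obtained by "substituting $u \mapsto (\Sigma^{(1)})^{-1}u$" into the second. This substitution was legitimate in the pure-DP analysis because there $\Sigma$ and $\tilde\Sigma$ were \emph{fixed} matrices built from the true Gaussian samples; in the worst-case product system, $\Sigma^{(1)}$ is an SoS indeterminate, and $(\Sigma^{(1)})^{-1}$ is not a polynomial in the program variables, so the step as written is not an SoS derivation. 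The paper resolves this by augmenting the constraint system with symbolic square roots $B, D$ satisfying $BB = \Sigma'$, $DD = Q$, $BD = I_d$, and then writing $\Iprod{u,\theta^{(1)}-\theta^{(2)}} = \Iprod{D^{(1)}u, B^{(1)}(\theta^{(1)}-\theta^{(2)})}$ and applying SoS Cauchy--Schwarz; this is precisely why the approximate-DP score function differs from the pure-DP one. Relatedly, you invoke \cref{lem:sos_inverse_covariance} to control $Q$, but that lemma is proved under $\eta$-goodness of i.i.d.\ Gaussian samples and is unavailable for worst-case data; what is actually needed (and what the paper proves separately) is the worst-case comparison $u^\top Q^{(2)} u \leq O(1)\, u^\top Q^{(1)} u$ derived purely from the constraints of the two copies via an asymmetric covariance bound.

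Second, your claimed rate $O(\eta^*\log^2(1/\eta^*))$ for the worst-case closeness $\Iprod{u,\Sigma^{(1)}(\theta^{(1)}-\theta^{(2)})}^2$ overreaches. The fine rate in \cref{lem:sos_regression_without_bootstrapping} is obtained by applying the Selector Lemma to fixed scalars computed from the \emph{actual} $\eta$-good Gaussian samples $x_i^*$; for worst-case data the only available tools are the hypercontractivity, gradient, and bounded-noise constraints of the two copies, and Cauchy--Schwarz plus hypercontractivity caps the achievable rate at $O(\sqrt{\eta^*})$ (this is exactly the bootstrapping-level bound). The paper accordingly proves only $(\theta^{(1)}-\theta^{(2)})^\top\Sigma^{(1)}(\theta^{(1)}-\theta^{(2)}) \leq O(\sqrt{\eta^*})$, which is all that is needed since $\eta^*$ is an absolute constant and the volume upper bound only requires $O(1)$ closeness. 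So this particular overstatement does not sink the argument, but the missing square-root variables and the worst-case $Q^{(1)}$-versus-$Q^{(2)}$ comparison are ideas you would need to supply to complete the proof.
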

The score function we will consider is very similar to the one used in the pure DP sections but contains a few extra variables, which we will need for technical reasons. The fact that the new variables do not affect the key properties of the score function established in the pure DP section easily follows from the existing analysis.
\begin{equation*}
\calB_{T}(x,y)=
  \left \{
    \begin{aligned}
      &\forall i\in [n].
      & w_i^2
      & = w_i \\
      & &\sum_{i\in [n]} w_i &\geq n - T \\
      &&\tfrac{1} n \sum_{i \in [n]} x_i'(x_i')^\top &= \Sigma' \\
      &\forall i\in [n] & w_i(x'_i - x_i) = 0&\,,\; w_i(y'_i - y_i) =0 \\
      &&\tfrac{1} n \sum_{i \in [n]} \Paren{\Iprod{x_i', \thetaprime} - y_i'}x_i' &= 0 \\
      &&\tfrac{1} n \sum_{i \in [n]} \Paren{y_i' - \Iprod{\thetaprime, x_i'}}^2 &\leq 1+O(\eta) \\
      &\forall v \in \mathbb{R}^d
      & \frac{1}{n}\sum_{i \in [n]}  {\langle x'_i , v\rangle^{4}} 
      &\leq  \Paren{3+\eta \log^2 (1/\eta)} \left(  v^\top \Sigma' v \right)^{2} \\
      & & \tfrac{1} n \sum_{i \in [n]} \Paren{y_i' - \Iprod{\thetaprime, x_i'}}^4 &\leq \bigO{1} \\
      & &Q = Q^\top \,,\quad \Sigma' Q \Sigma' = \Sigma' \,,\quad  Q \Sigma' Q &= Q \,,\quad Q \Sigma' = I_d \,,\quad \Sigma' Q = I_d \\
      & &B = B^{\top} \, \quad BB = \Sigma' \, \quad D &= D^\top \, \quad DD = Q \, \quad BD = I_d
    \end{aligned}
  \right \}
\end{equation*}

The score candidate point $\tilde{\theta}$ will be the smallest $T$ such that there exists a pseudo-expectation $\pE$ such that the following three conditions are met
\begin{enumerate}
    \item $\pE \sdtstile{}{} \calB_T$ at degree $\bigO{1}$.
    \item $\norm{\pE \theta'}_2 \leq R$.
    \item For all fixed vectors $u \in \R^d$ (i.e., that do not depend on indeterminates), it holds that
    \[
        \Iprod{u, \pE \theta' - \tilde{\theta}}^2 \leq \bigO{\alpha^2} \pE \Iprod{u, Q u} \,.
    \]
\end{enumerate}

Similarly to our pure-DP algorithm (cf.~\cref{thm:main_private_regression}), this theorem follows from the reduction in \cite{Hopkins2023Robustness}, see \cref{thm:approx-dp-reduction}.
The main difference is that we also need a bound on the volume ratios of low-scoring points when given worst-case instead of (corrupted) Gaussian regression data.

Let $\eta^*$ be a sufficiently small absolute constant and $(\calX, \calY) = \Set{(x_1, y_1), \ldots, (x_n, y_n)}$ be a dataset such that there exists a point $\tilde{\theta}$ of score $0.7\eta^* n$, witnessed by a pseudo-expectation $\pE$.
Let $V_{\gamma}(\calX, \calY)$ be the volume of the set of candidates that achieve score $\gamma n$ with respect to $(\calX, \calY)$.
It will be enough to show that $\log(V_{\eta^*}(\calX, \calY)/ V_{0.8\eta^*}(\calX, \calY)) \leq d \log (1/\alpha)$.
We show this in two parts.
First, we give a lower bound on $V_{0.8\eta^*}(\calX)$ and then an upper bound on $V_{\eta^*}(\calX)$. Both these upper and lower bounds will be in terms if $(\pE Q)^{-1/2}$-balls, where the psuedoexpectation is the one which certifies that there exists a point of score $0.7\eta^* n$.

\paragraph{Lower Bounds on Volume.}
We will show that if a single point has score $0.7\eta^* n$, then a non-trivial volume of points also have score at most $0.7\eta^* n$. Specificially, we prove the following lemma:
\begin{lemma}
    Let $(\calX, \calY)$ be a worst-case set of data points such that there exists a point $\tilde{\theta}$ satisfying the robust regression constraints with score $0.7\eta^* n$ and let the certifying pseudo expectation be $\pE$. Then there exists a $\left(\pE Q\right)^{-1/2}$ ball of radius $\Omega(\alpha)$ such that all $\tilde{u}'$ in the ball also have score at most $0.7 \eta^* n$.
\end{lemma}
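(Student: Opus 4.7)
The plan is to show that the very same pseudo-expectation $\pE$ that certifies $\tilde{\theta}$ as having score at most $0.7\eta^* n$ also certifies every point $\tilde{u}'$ in an appropriately small $(\pE Q)^{-1/2}$-ball around $\tilde{\theta}$. Among the conditions defining a $(\alpha, \tau, T)$ certificate (cf.~\cref{def:cert_parameter_regression}), only the closeness condition depends on the candidate parameter; the constraints in $\calB_T(x,y)$, the Frobenius norm bound on $\calR(\calL)$, and the $\ell_2$ bound on $\calL \theta'$ are all unaffected by replacing $\tilde{\theta}$ by $\tilde{u}'$. So the entire argument reduces to showing that, if the closeness condition
\[
C_1 \alpha^2 \pE Q \;-\; (\pE \theta' - \tilde{\theta})(\pE \theta' - \tilde{\theta})^\top \;\succeq\; -\tau T\cdot I_d
\]
holds for the original $\tilde{\theta}$, then a slightly weaker but still $\bigO{\alpha^2}$-form of it holds for any $\tilde{u}'$ with $\norm{(\pE Q)^{-1/2}(\tilde{u}' - \tilde{\theta})}_2 \leq c\alpha$, for a sufficiently small absolute constant $c > 0$.

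The key calculation is a PSD-matrix triangle inequality. Setting $a = \pE\theta' - \tilde{\theta}$ and $b = \tilde{\theta} - \tilde{u}'$, we have $\pE\theta' - \tilde{u}' = a+b$, and for any $s > 0$ the matrix Cauchy--Schwarz inequality gives $(a+b)(a+b)^\top \preceq (1+s)\,aa^\top + (1+1/s)\,bb^\top$. Next, since the constraints $Q = DD$ with $D = D^\top$ give a degree-2 SoS proof of $v^\top Q v \geq 0$, it follows that $\pE Q \succeq 0$, so we can speak of the $(\pE Q)^{-1/2}$-ball (on the range of $\pE Q$, padding with zeros elsewhere). The bound $\norm{(\pE Q)^{-1/2} b}_2 \leq c\alpha$ then translates, via Cauchy--Schwarz in the $\pE Q$-norm, into the PSD bound $bb^\top \preceq c^2 \alpha^2 \cdot \pE Q$. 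Taking $s=1$ and combining with the hypothesis on $aa^\top$, we obtain
\[
(2 C_1 + 2 c^2)\alpha^2\,\pE Q \;-\; (\pE\theta' - \tilde{u}')(\pE\theta' - \tilde{u}')^\top \;\succeq\; -2\tau T\cdot I_d,
\]
which is of the required form $C_2 \alpha^2 \pE Q - (\ldots)(\ldots)^\top \succeq -\tau' T\cdot I_d$ for constants $C_2, \tau'$ of the same order. Thus $\tilde{u}'$ is also a $(\alpha, \tau', 0.7\eta^* n)$-certifiable parameter (after rescaling $\tau$ by a constant, or by choosing the constant inside $\bigO{\alpha^2}$ slightly larger from the start), and so $\tilde{u}'$ has score at most $0.7 \eta^* n$.

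The main obstacle I anticipate is a bookkeeping one rather than a conceptual one: making sure the constants in the $\bigO{\alpha^2}$ in the definition of the closeness condition are chosen large enough from the start to absorb the factor $2C_1 + 2c^2$ that appears after the triangle-inequality step, and that the $\tau T$ slack is propagated correctly (this only changes $\tau$ by a constant factor, which can be absorbed into the choice of $\tau$). A minor secondary point is to verify the PSDness of $\pE Q$ really does follow from the auxiliary $B, D$ variables we added to $\calB_T$, but this is immediate once one writes $v^\top Q v = \norm{Dv}^2$ and invokes the soundness of pseudo-expectations (\cref{fact:sos-soundness}). With these details in place, the $(\pE Q)^{-1/2}$-ball of radius $\Omega(\alpha)$ around $\tilde{\theta}$ is entirely contained in the set of points of score at most $0.7\eta^* n$, giving the required lower bound on $V_{0.8\eta^*}(\calX, \calY)$ (via $V_{0.8\eta^*} \geq V_{0.7\eta^*}$) used in the approximate-DP reduction.
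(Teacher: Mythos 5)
Your proposal is correct and rests on the same core idea as the paper's proof: the single pseudo-expectation $\pE$ that certifies $\tilde{\theta}$ is reused verbatim to certify every point in the ball, since only the closeness condition depends on the candidate parameter. The execution differs in one respect worth noting. The paper does not center the ball at $\tilde{\theta}$ at all: it observes that the set of points certified by $\pE$ is, by definition, exactly $\{\tilde{\theta}' : \Iprod{u, \pE\theta' - \tilde{\theta}'}^2 \leq \bigO{\alpha^2}\,\pE\Iprod{u, Qu}\ \forall u\}$, which after the substitution $u \mapsto (\pE Q)^{-1/2}u$ is literally a $(\pE Q)^{-1/2}$-ball of radius $\bigO{\alpha}$ around $\pE\theta'$ — so the required ball is read off with no triangle inequality and no constant blow-up. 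Your route, centering at $\tilde{\theta}$ and applying the PSD triangle inequality $(a+b)(a+b)^\top \preceq 2aa^\top + 2bb^\top$, is also valid but replaces the constant $C_1$ in the closeness condition by $2C_1 + 2c^2$, so that $\tilde{u}'$ is only certified under a relaxed definition of the score; you correctly flag that this must be absorbed by choosing the $\bigO{\cdot}$ constant in \cref{def:cert_parameter_regression} with slack from the outset, but this is an avoidable complication. Your secondary observations — that $\pE Q \succeq 0$ follows from the auxiliary constraints $Q = DD$, $D = D^\top$ via soundness, and that $\norm{(\pE Q)^{-1/2}b} \leq c\alpha$ implies $bb^\top \preceq c^2\alpha^2\,\pE Q$ by Cauchy--Schwarz — are both sound and fill in details the paper leaves implicit.
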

\begin{proof}
    The proof is analogous to the lower bounds on volume in the pure DP section. We note that the same exact pseudoexpectation which certifies that $\tilde{\theta}$ has score at most $0.7 \eta^* n$ also certifies that any point such that 
    \[ \forall u, \,\Iprod{u, \pE \theta' - \tilde{\theta}'}^2 \leq \bigO{\alpha^2} \pE \Iprod{u, Q u}\]
    also has score at most $0.7 \eta^* n$. However, rearranging this via substituting $\left(\pE Q\right)^{-1/2} u$ for $u$ shows that this is equivalent to $\norm{\left(\pE Q\right)^{-1/2} \left( \pE \theta' - \tilde{\theta}\right)}^2 \leq \bigO{\alpha^2}$ which completes the proof.
\end{proof}

\paragraph{Upper Bounds on Volume.}
We will now describe how to modify the identifiability proofs for regression to show that two pseudoexpectations for the robust regression constraints must have close parameters, even when the original dataset $x_i$ are worst-case points. Note that it suffices to show the following closeness bound for $\tilde{\theta}_1, \tilde{\theta}_2$, where $\pE_1, \pE$ are pseudoexpectations which certify that they have a sufficiently small constant score.
\[ \norm{\left(\pE_1 Q \right)^{-1/2} \left(\tilde{\theta}_1 - \tilde{\theta}_2\right)}^2 \leq \bigO{1}\,. \]
We will proceed via defining a product psuedoexpecation (\cref{def:product_pE}) and executing the original identifiability proofs on the product pseudoexpectation. 

As in~\cref{fact:product_pE_overlap} if we have two pseudoexpectations satisfying the robust regression constraints we will let the variables in the first be $x^{(1)}_i, w^{(1)}_i,$ etc and the variables in the second to be $x^{(2)}_i, w^{(2)}_i,$ and we will also define $z_i$ as in the statement of~\cref{fact:product_pE_overlap}. Note that this is different from the original notation for these variables in the previous sections.

We will prove the following lemma:
\begin{lemma}
\label{lem:approx-dp-reg-identifiability}
    Let $\pE_1$ and $\pE_2$ be two pseudoexpectations satisfying the robust regression constraints $\calA_1, \calA_2$ with score $\eta n < n/e$. Let $\pE$ be the product of the two pseudoexpecations, which satisfies the constraints $\calA_1 \cup \calA_2$. Then we have that
    \[ \calA_1 \cup \calA_2 \sststile{}{} \Set{(\theta^{(2)} - \theta^{(1)})^\top \Sigma^{(1)} (\theta^{(2)} - \theta^{(1)}) \leq \bigO{\sqrt{\eta}}}\,.\]
    Furthermore, we also have that for all $u$
    \[ \Iprod{u, \pE \theta^{(1)} - \theta^{(2)}}^2 \leq \bigO{\sqrt{\eta}} \pE \Iprod{u, Q^{(1)} u} \,.\]
\end{lemma}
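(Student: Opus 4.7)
The plan is to adapt the SoS bootstrapping argument from \cref{lem:sos_regression_bootstrapping} to the product-pseudoexpectation setting, with the role previously played by uncorrupted true samples now played by the variables of the second pseudoexpectation. Instead of relying on concentration of the input, we will invoke the hypercontractivity, residual-moment, and gradient constraints that $\calA_1$ and $\calA_2$ both impose on their respective variables. By \cref{fact:product_pE_overlap}, the overlap indicators $z_i := w_i^{(1)} w_i^{(2)}$ satisfy at constant degree $z_i^2 = z_i$, $\sum_i z_i \geq (1-2\eta)n$, and $z_i x_i^{(1)} = z_i x_i^{(2)}$; the analogous argument using $w_i^{(k)}(y_i^{(k)} - y_i) = 0$ also gives $z_i y_i^{(1)} = z_i y_i^{(2)}$.

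For part (1), combining the two gradient conditions $\tfrac{1}{n}\sum_i x_i^{(k)}(\iprod{x_i^{(k)}, \theta^{(k)}} - y_i^{(k)}) = 0$ for $k = 1, 2$ with the overlap identities yields the key decomposition
\[
\Sigma^{(1)}(\theta^{(2)} - \theta^{(1)}) = \tfrac{1}{n}\sum_i (1-z_i)\, x_i^{(1)}\bigl(\iprod{x_i^{(1)}, \theta^{(2)}} - y_i^{(1)}\bigr) - \tfrac{1}{n}\sum_i (1-z_i)\, x_i^{(2)}\bigl(\iprod{x_i^{(2)}, \theta^{(2)}} - y_i^{(2)}\bigr).
\]
Taking inner product with $\theta^{(2)} - \theta^{(1)}$ and bounding each sum by iterated SoS Cauchy-Schwarz reduces the task to two types of estimates: (a) quantities of the form $\tfrac{1}{n}\sum_i (1-z_i)\iprod{x_i^{(k)}, \theta^{(2)} - \theta^{(1)}}^2$, controlled via $\sum_i (1-z_i) \leq 2\eta n$ and the hypercontractivity constraint in $\calA_k$; and (b) weighted residual sums $\tfrac{1}{n}\sum_i (1-z_i)(\iprod{x_i^{(k)}, \theta^{(1)}} - y_i^{(k)})^2$, controlled by the $L^2$ and $L^4$ residual constraints in $\calA_k$. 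Combining these with the symmetric inequality obtained by swapping $(1) \leftrightarrow (2)$ produces a self-bounding inequality in $A := (\theta^{(2)} - \theta^{(1)})^\top \Sigma^{(1)}(\theta^{(2)} - \theta^{(1)})$ and $A' := (\theta^{(2)} - \theta^{(1)})^\top \Sigma^{(2)}(\theta^{(2)} - \theta^{(1)})$, from which $A \leq \bigO{\sqrt{\eta}}$ follows by the cancellation step of \cref{fact:cancellation_two_sos}.

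For part (2), I would leverage the auxiliary matrix variables $B^{(1)}, D^{(1)}$, which by the constraints satisfy $B^{(1)} D^{(1)} = I_d$, $B^{(1)} B^{(1)} = \Sigma^{(1)}$, and $D^{(1)} D^{(1)} = Q^{(1)}$, all symmetric. For any fixed $u \in \R^d$, write $u = B^{(1)} D^{(1)} u$ and apply SoS Cauchy-Schwarz to get
\[
\Iprod{u, \theta^{(1)} - \theta^{(2)}}^2 = \Iprod{D^{(1)} u,\, B^{(1)}(\theta^{(1)} - \theta^{(2)})}^2 \leq (u^\top Q^{(1)} u) \cdot A.
\]
Combining with part (1) (using that $u^\top Q^{(1)} u \geq 0$ is an SoS-fact from $Q^{(1)} = D^{(1)} D^{(1)}$, so multiplying it against the inequality $A \leq c\sqrt{\eta}$ preserves SoS-ness) yields $\Iprod{u, \theta^{(1)} - \theta^{(2)}}^2 \leq \bigO{\sqrt{\eta}} \cdot u^\top Q^{(1)} u$. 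Applying $\pE$ and Cauchy-Schwarz for pseudo-expectations (\cref{fact:pe_cs}) then gives the second claim.

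The main obstacle will be the self-bounding step in part (1). In the single-pseudoexpectation setting of \cref{lem:sos_regression_bootstrapping}, the analogous estimate uses concentration of true uncorrupted samples to control fourth moments and residuals. Here, the two sums in our key identity involve indeterminates from different constraint sets, forcing us to invoke hypercontractivity and residual bounds from $\calA_1$ and $\calA_2$ separately on their respective sides. This introduces a coupled dependence on both $A$ and $A'$ which requires the symmetrization argument to close, and careful bookkeeping of absolute constants is needed to keep the leading coefficient small enough that the cancellation yields the claimed $\bigO{\sqrt{\eta}}$ rate.
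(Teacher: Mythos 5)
Your proposal matches the paper's proof essentially step for step: the same gradient-condition decomposition over the overlap indicators $z_i$ (using $z_i x_i^{(1)} = z_i x_i^{(2)}$ and $z_i y_i^{(1)} = z_i y_i^{(2)}$), the same add-and-subtract of $\theta^{(1)}$ followed by SoS Cauchy--Schwarz, hypercontractivity, and the fourth-moment residual constraints to bound the resulting three terms, and the same $B^{(1)}D^{(1)}$ factorization plus Cauchy--Schwarz for pseudo-expectations for the second claim. The only minor deviation is how the $\Sigma^{(2)}$-quadratic-form term is eliminated: you symmetrize over the two pseudoexpectations and add, whereas the paper instead invokes the worst-case covariance-closeness lemma (\cref{lem:approx-dp-cov}) to replace $(\theta^{(2)}-\theta^{(1)})^\top\Sigma^{(2)}(\theta^{(2)}-\theta^{(1)})$ by its $\Sigma^{(1)}$ counterpart; both routes close the self-bounding inequality.
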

We will also need the following lemma about the closeness of the two $Q$ parameters in our SoS system.
\begin{lemma}
\label{lem:sos_inverse_worst_case}
    Let $\pE_1$ and $\pE_2$ be two pseudoexpectations satisfying the robust regression constraints $\calA_1, \calA_2$ with score $\eta n < n/e$. Let $\pE$ be the product of the two pseudoexpecations, which satisfies the constraints $\calA_1 \cup \calA_2$. Then we have that
    \[  \calA_1 \cup \calA_2 \sststile{\bigO{1}}{} \Biggl\{ \Iprod{u, Q^{(2)} u} \leq \bigO{1}  \Paren{u^\top Q^{(1)} u} \Biggr\} \,.\]
\end{lemma}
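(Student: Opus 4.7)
The plan is to mirror the proof of \cref{lem:sos_inverse_covariance} in the product-pseudo-expectation setting, with the role of the empirical covariance $\tilde{\Sigma}$ and its inverse played by $\Sigma^{(1)}$ and $Q^{(1)}$. First, I would derive the matrix identity $Q^{(1)} - Q^{(2)} = Q^{(2)}(\Sigma^{(2)} - \Sigma^{(1)})Q^{(1)}$ at constant degree from $\calA_1 \cup \calA_2$: using the constraints $\Sigma^{(i)}Q^{(i)} = I_d$, write
\[
Q^{(1)} = Q^{(2)}\Sigma^{(2)}Q^{(1)} = Q^{(2)}\Sigma^{(1)}Q^{(1)} + Q^{(2)}(\Sigma^{(2)}-\Sigma^{(1)})Q^{(1)} = Q^{(2)} + Q^{(2)}(\Sigma^{(2)}-\Sigma^{(1)})Q^{(1)}\,.
\]
Taking the quadratic form in a fixed $u$ then yields $\Paren{u^\top(Q^{(2)} - Q^{(1)})u}^2 = \Iprod{\Sigma^{(2)} - \Sigma^{(1)},\,\Brac{Q^{(2)}u}\Brac{Q^{(1)}u}^\top}^2$.

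Next, I would establish a product-pseudo-expectation analog of the asymmetric covariance-closeness of \cref{lem:sos_assymetric_cov_estimation}, namely
\[
\calA_1 \cup \calA_2 \sststile{\bigO{1}}{a,b} \Iprod{\Sigma^{(1)} - \Sigma^{(2)}, ab^\top}^2 \leq \bigO{\eta}\Brac{\Paren{a^\top\Sigma^{(1)}a}^2 + \Paren{b^\top\Sigma^{(1)}b}^2 + \Paren{a^\top\Sigma^{(2)}a}^2 + \Paren{b^\top\Sigma^{(2)}b}^2}\,.
\]
The proof uses the overlap indicators $z_i = w_i^{(1)}w_i^{(2)}$ from \cref{fact:product_pE_overlap} (which satisfy $z_i x_i^{(1)} = z_i x_i^{(2)}$ and $\sum_i(1-z_i) \leq 2\eta n$) to expand $\Sigma^{(1)} - \Sigma^{(2)} = \tfrac{1}{n}\sum_i (1-z_i)\bigl[x_i^{(1)}(x_i^{(1)})^\top - x_i^{(2)}(x_i^{(2)})^\top\bigr]$, followed by SoS triangle and two applications of SoS Cauchy--Schwarz combined with the certifiable fourth-moment constraints of $\calA_1, \calA_2$.

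Substituting $a = Q^{(2)}u$ and $b = Q^{(1)}u$ into this bound and simplifying via $Q^{(i)}\Sigma^{(i)}Q^{(i)} = Q^{(i)}$ gives $(u^\top Q^{(1)}u)^2$ and $(u^\top Q^{(2)}u)^2$ for the ``diagonal'' terms. For each of the two ``mixed'' terms, e.g.\ $A = u^\top Q^{(2)}\Sigma^{(1)}Q^{(2)}u$ with $B = u^\top Q^{(2)} u$, writing $A - B = v^\top(\Sigma^{(1)}-\Sigma^{(2)})v$ with $v = Q^{(2)} u$ (treated as an SoS polynomial rather than a fixed vector) and invoking the symmetric closeness from Equation~\eqref{eq:cov_error} of \cref{sec:approx_mean_est} (or its straightforward regression analog) produces a self-referential inequality $(A - B)^2 \leq \bigO{\eta}(A + B)^2$. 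Combined with SoS triangle $(A+B)^2 \leq 2A^2 + 2B^2$ and AM--GM with an appropriately small constant, this rearranges to $A^2 \leq \bigO{1} B^2$. Assembling all the pieces yields the squared bound $\calA_1 \cup \calA_2 \sststile{\bigO{1}}{} \Paren{u^\top Q^{(2)}u}^2 \leq \bigO{1} \Paren{u^\top Q^{(1)}u}^2$.

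The main obstacle is converting this squared SoS bound to the linear bound of the lemma statement. The cleanest route I would pursue exploits the SoS square-root variables $B^{(i)}$: writing $u^\top Q^{(2)} u = u^\top Q^{(2)}\Sigma^{(1)}Q^{(1)}u = \Iprod{B^{(1)}Q^{(2)}u,\,B^{(1)}Q^{(1)}u}$ and applying SoS Cauchy--Schwarz with a tunable factor $\lambda$ gives $u^\top Q^{(2)} u \leq \tfrac{1}{2\lambda}\,u^\top Q^{(2)}\Sigma^{(1)}Q^{(2)}u + \tfrac{\lambda}{2}\,u^\top Q^{(1)}u$; combining this with the already-proved bootstrapping bound $u^\top Q^{(2)}\Sigma^{(1)}Q^{(2)}u \leq \Paren{1+\bigO{\sqrt{\eta}}} u^\top Q^{(2)}u$, choosing $\lambda = 1$, and absorbing the resulting $u^\top Q^{(2)}u$ term on the right-hand side back into the left yields the desired $u^\top Q^{(2)} u \leq \bigO{1}\,u^\top Q^{(1)} u$ for $\eta$ smaller than a sufficiently small constant.
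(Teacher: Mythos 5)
Your skeleton matches the paper's: the same telescoping identity $u^\top\Paren{Q^{(2)}-Q^{(1)}}u = \Iprod{\Sigma^{(1)}-\Sigma^{(2)}, \Brac{Q^{(2)}u}\Brac{Q^{(1)}u}^\top}$ derived from the inverse constraints, the same substitution $a = Q^{(2)}u$, $b = Q^{(1)}u$ into an asymmetric covariance-closeness statement, the same simplification via $Q^{(i)}\Sigma^{(i)}Q^{(i)} = Q^{(i)}$, and the same final absorption of the $u^\top Q^{(2)}u$ term into the left-hand side. The difference is that you route through a \emph{squared} asymmetric bound (mirroring \cref{lem:sos_assymetric_cov_estimation}) and then must undo the square, whereas the paper's \cref{lem:assymetric-approx-dp} is proved \emph{linearly} from the start: it applies $ab \leq \gamma a^2 + \gamma^{-1}b^2$ pointwise inside the empirical sums, giving $u^\top(\Sigma^{(1)}-\Sigma^{(2)})v \leq \bigO{1}\, u^\top\Sigma^{(2)}u + \tfrac{1}{2}v^\top\Sigma^{(2)}v$ directly, with the tunable $\gamma$ ensuring the coefficient on the term to be absorbed is below $1$. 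This settles for an $\bigO{1}$ rather than $\bigO{\eta}$ loss, which is all the lemma needs, and entirely sidesteps the obstacle you correctly identify as the main one.

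That obstacle is where your argument as written has a genuine gap. Your "bootstrapping bound" $u^\top Q^{(2)}\Sigma^{(1)}Q^{(2)}u \leq \Paren{1+\bigO{\sqrt{\eta}}}u^\top Q^{(2)}u$ is extracted from a self-referential \emph{squared} inequality $A^2 \leq \bigO{1}B^2$, but the inference $\Set{A^2 \leq C B^2} \sststile{}{}\Set{A \leq \sqrt{C}\, B}$ is not valid in SoS when $B$ is a polynomial rather than a constant (\cref{lem:square_root_sos} requires a scalar on the right; the best one gets from AM--GM is $AB \leq \sqrt{C}B^2$, and one cannot cancel the factor of $B$). The same issue afflicts invoking the squared closeness of \cref{eq:cov_error} to get the linear comparison $v^\top\Sigma^{(1)}v \leq \bigO{1}v^\top\Sigma^{(2)}v$ for an indeterminate $v = Q^{(2)}u$. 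The repair is exactly the paper's move: establish the needed comparisons linearly by splitting the sums with the overlap indicators $z_i$ and applying the weighted AM--GM term by term (keeping a small tunable constant in front of whatever must be absorbed), rather than proving squared statements and attempting to take square roots afterwards.
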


The combination of the above two lemmas easily yields the desired bound.

\paragraph{\cref{lem:approx-dp-reg-identifiability} and~\cref{lem:sos_inverse_worst_case} implies bounded volume.}
Observe that it suffices to prove that for every $u$,
\[ \Iprod{u, \left(\pE_1 Q \right)^{-1/2} \left(\tilde{\theta}_1 - \tilde{\theta}_2\right)}^2 \leq \bigO{\norm{u}^2}\,,\]
which is equivalent to 
\[ \Iprod{u, \tilde{\theta}_1 - \tilde{\theta}_2}^2 \leq \bigO{u^\top \pE Q^{(1)} u}\,.\]
We have by Triangle Inequality that 
\begin{align*}
    \Iprod{u, \tilde{\theta}_1 - \tilde{\theta}_2}^2 &\lesssim \Iprod{u, \tilde{\theta}_1 - \pE \theta^{(1)}}^2 + \Iprod{u, \pE \theta^{(1)} - \theta^{(1)}}^2 +  \Iprod{u,  \pE \theta^{(2)} - \tilde{\theta}_2}^2 \\
    &\leq \bigO{u^\top \pE Q^{(1)} u + u^\top \pE Q^{(2)} u}\,,
\end{align*}
using both closeness constraints as well as~\cref{lem:approx-dp-reg-identifiability}. Finally, using~\cref{lem:sos_inverse_worst_case} completes the proof.

We will now prove the main parameter closeness lemma (\cref{lem:approx-dp-reg-identifiability}). In the proof of~\cref{lem:approx-dp-reg-identifiability} we will also need the following covariance closeness result:
\begin{lemma}
\label{lem:approx-dp-cov}
    Let $\pE_1$ and $\pE_2$ be two pseudoexpectations satisfying the robust regression constraints $\calA_1, \calA_2$ with score $\eta n \leq n/e$. Let $\pE$ be the product of the two pseudoexpecations, which satisfies the constraints $\calA_1 \cup \calA_2$. Then we have that    
    \[\calA_1 \cup \calA_2 \sststile{O(1)}{u} \Biggl\{\Paren{u^\top\Paren{\Sigma^{(2)} - \Sigma^{(1)}} u}^2 \leq \bigO{\eta} \Paren{u^\top \Sigma^{(1)} u}^2 \Biggr\}\,.\]
\end{lemma}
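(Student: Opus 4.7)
The plan is to mimic the covariance-closeness argument used for approximate-DP mean estimation (Equation 8.1 in Section 8.2), but adapted to the uncentered second moments that appear in the regression program and strengthened to the asymmetric bound stated in the lemma. Let $z_i = w_i^{(1)} \cdot w_i^{(2)}$ as in \cref{fact:product_pE_overlap}. The two constraints I will repeatedly invoke are (a) $\calA_1 \cup \calA_2 \sststile{4}{} \{z_i^2 = z_i,\ z_i x_i^{(1)} = z_i x_i^{(2)}\}$ and (b) $\calA_1 \cup \calA_2 \sststile{2}{}\{\tfrac 1 n \sum_i (1-z_i) \leq 2\eta\}$, both furnished by \cref{fact:product_pE_overlap}.

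The first step is to rewrite the covariance gap as a sum over $(1-z_i)$-weighted contributions. Since $z_i^2 = z_i$, for each $i$ one has the degree-$4$ identity $z_i \langle x_i^{(1)},u\rangle^2 = (z_i \langle x_i^{(1)},u\rangle)^2 = \langle z_i x_i^{(1)},u\rangle^2 = \langle z_i x_i^{(2)},u\rangle^2 = z_i \langle x_i^{(2)},u\rangle^2$, so that
\begin{equation*}
u^\top(\Sigma^{(1)} - \Sigma^{(2)})u \;=\; \tfrac{1}{n}\sum_i (1-z_i)\langle x_i^{(1)},u\rangle^2 \;-\; \tfrac{1}{n}\sum_i (1-z_i)\langle x_i^{(2)},u\rangle^2.
\end{equation*}
Squaring and applying SoS triangle inequality (\cref{fact:sos-almost-triangle}) reduces the problem to bounding each of the two terms $(\tfrac 1 n \sum_i (1-z_i)\langle x_i^{(k)},u\rangle^2)^2$.

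For each such term I apply SoS Cauchy--Schwarz (\cref{fact:sos_cs}), using $(1-z_i)^2 = 1-z_i$, to get
\begin{equation*}
\Bigl(\tfrac{1}{n}\sum_i (1-z_i)\langle x_i^{(k)},u\rangle^2\Bigr)^2 \leq \Bigl(\tfrac{1}{n}\sum_i (1-z_i)\Bigr)\cdot\Bigl(\tfrac{1}{n}\sum_i (1-z_i)\langle x_i^{(k)},u\rangle^4\Bigr) \leq 2\eta \cdot \tfrac{1}{n}\sum_i \langle x_i^{(k)},u\rangle^4,
\end{equation*}
and then invoke the certified hypercontractivity constraint of $\calA_k$ to bound $\tfrac 1 n \sum_i \langle x_i^{(k)},u\rangle^4 \leq (3 + \eta\log^2(1/\eta))(u^\top \Sigma^{(k)} u)^2 \leq O(1)(u^\top \Sigma^{(k)} u)^2$. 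Combining these gives the symmetric estimate
\begin{equation*}
(u^\top(\Sigma^{(1)}-\Sigma^{(2)})u)^2 \leq O(\eta)\bigl[(u^\top \Sigma^{(1)} u)^2 + (u^\top \Sigma^{(2)} u)^2\bigr].
\end{equation*}

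The final step, which is where the asymmetric form of the lemma shows up and which will be the main (though routine) obstacle, is to eliminate the $\Sigma^{(2)}$ term. By SoS triangle inequality, $(u^\top \Sigma^{(2)} u)^2 \leq 2(u^\top \Sigma^{(1)} u)^2 + 2(u^\top(\Sigma^{(1)}-\Sigma^{(2)})u)^2$, so that
\begin{equation*}
(u^\top(\Sigma^{(1)}-\Sigma^{(2)})u)^2 \leq O(\eta)(u^\top \Sigma^{(1)} u)^2 + O(\eta)(u^\top(\Sigma^{(1)}-\Sigma^{(2)})u)^2.
\end{equation*}
For $\eta \leq 1/e$ taken sufficiently small so that the prefactor $O(\eta)$ on the absorbed term is at most $1/2$, rearranging yields the desired bound $(u^\top(\Sigma^{(1)}-\Sigma^{(2)})u)^2 \leq O(\eta)(u^\top \Sigma^{(1)} u)^2$. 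Throughout, every step is a constant-degree SoS manipulation in $u$ (and in the program variables), so the final inequality carries the degree-$O(1)$ SoS proof claimed.
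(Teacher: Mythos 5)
Your proof is correct and follows essentially the same route as the paper's: decompose $u^\top(\Sigma^{(1)}-\Sigma^{(2)})u$ over the overlap indicators $z_i$, apply SoS triangle and Cauchy--Schwarz together with $\tfrac1n\sum_i(1-z_i)\le 2\eta$, invoke the certified hypercontractivity constraint, and absorb the $(u^\top\Sigma^{(2)}u)^2$ term. In fact you spell out the final absorption step more explicitly than the paper does (the paper compresses it into ``together, this implies the lemma''), so no changes are needed.
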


Note that in contrast to the covariance closeness shown in~\cite{Hopkins2023Robustness} this shows that there is a \emph{SoS proof} that the two parameters are close, rather than just showing the closeness of the two moments. We will now prove the regression identifiability result using the covariance closeness lemma.

\begin{proof}[Proof of~\cref{lem:approx-dp-reg-identifiability}]
    Note that since for $C > 0$, $\Set{a^4 \leq C a^2} \sststile{\bigO{1}}{a} \Set{a^2 \leq C}$ and $\Set{a^2 \leq C} \sststile{\bigO{1}}{a} \Set{a \leq \sqrt{C}}$ (cf.~\cref{lem:square_root_sos,fact:cancellation_two_sos}), it is enough to show that
    \[
        \calA_1 \cup \calA_2 \sststile{O(1)}{} \Paren{(\theta^{(2)} - \theta^{(1)})^\top \Sigma^{(1)} (\theta^{(2)} - \theta^{(1)})}^4 \leq \bigO{\eta} \Paren{(\theta^{(2)} - \theta^{(1)})^\top \Sigma^{(1)} (\theta^{(2)} - \theta^{(1)})}^2 \,.
    \]
    Analogously as in the proof of~\cref{lem:sos_regression_without_bootstrapping} and using that $z_i x_i^{(1)} = z_i x_i^{(2)}$ (cf~\cref{fact:product_pE_overlap}) it holds that
    \begin{align*}
        \calA_1 \cup \calA_2 \sststile{\bigO{1}}{} \Biggl\{ &\Sigma^{(1)} (\theta^{(2)} - \theta^{(1)}) \\
        &= \tfrac 1 n \sum_{i=1}^n (1- z_i) x_i^{(1)} \cdot \Paren{\Iprod{x_i^{(1)}, \theta^{(2)} } - y_i^{(1)}} - \tfrac{1} n \sum_{i \in [n]} (1- z_i) x_i^{(2)} \cdot \Paren{\Iprod{x_i^{(2)}, \theta^{(2)}} - y_i^{(2)}} \Biggr\} \,.
    \end{align*}
    Adding and subtracting $\theta^{(1)}$ in the inner product of the first sum, we obtain that $\calA$ implies at constant degree that
    \begin{align*}
       \Sigma^{(1)} (\theta^{(2)} - \theta^{(1)})  &= \tfrac 1 n \sum_{i=1}^n (1- z_i) x_i^{(1)} \cdot \Paren{\Iprod{x_i^{(1)}, \theta^{(1)} } - y_i^{(1)}} - \tfrac{1} n \sum_{i \in [n]} (1- z_i) x_i^{(2)} \cdot \Paren{\Iprod{x_i^{(2)}, \theta^{(2)}} - y_i^{(2)}}  \\
        &+ \tfrac 1 n \sum_{i=1}^n (1-z_i) x_i^{(1)} \cdot \Iprod{x_i^{(1)}, \theta^{(2)} - \theta^{(1)} } \,.
    \end{align*}
    Thus, by SoS triangle inequality we obtain that
    \begin{align*}
        \calA_1 \cup \calA_2 \sststile{\bigO{1}}{} \Biggl\{ &\Paren{(\theta^{(2)} - \theta^{(1)})^\top \Sigma^{(1)} (\theta^{(2)} - \theta^{(1)})}^4 \\
        &\leq \bigO{1} \underbrace{\Paren{\tfrac 1 n \sum_{i=1}^n (1-z_i) \Iprod{x_i^{(1)}, \theta^{(2)} - \theta^{(1)}} \Paren{\Iprod{x_i^{(1)},\theta^{(1)}} - y_i^{(1)}}}^4}_{\text{Term A}} \\
        &+\bigO{1} \underbrace{\Paren{\tfrac 1 n \sum_{i=1}^n (1-z_i) \Iprod{x_i^{(2)}, \theta^{(2)} - \theta^{(1)}} \Paren{\Iprod{x_i^{(2)},\theta^{(2)}} - y_i^{(2)}}}^4}_{\text{Term B}} \\
        &+\bigO{1} \underbrace{\Paren{\tfrac 1 n \sum_{i=1}^n (1-z_i) \Iprod{x_i^{(1)}, \theta^{(2)} - \theta^{(1)}}^2}^4}_{\text{Term C}}\Biggr\} \,.
    \end{align*}
By~\cref{fact:sos_cs} (Cauchy-Schwartz), our constraint on SoS proof of hypercontractive fourth moments, and the large overlap of the two pseudoexpectations (cf~\cref{fact:product_pE_overlap}) it follows that Term C above is at most $\eta^2 \Paren{(\theta^{(2)} - \theta^{(1)})^\top \Sigma^{(1)} (\theta^{(2)} - \theta^{(1)})}^4$ which is a small multiple of the left-hand side.
Thus, we can ignore the last summand by rearranging.

\paragraph{Bounding Terms A and B.}
We continue to bound Terms A and B.
For Term A, we obtain using again our constraint of hyper contractive fourth moments, SoS Cauchy Schwarz, our SoS constraint on bounded moments of noise, and the large overlap of the two pseudoexpectations (cf~\cref{fact:product_pE_overlap})
\begin{align*}
    \calA_1 \cup \calA_2 &\sststile{\bigO{1}}{} \Biggl\{ \Paren{\tfrac 1 n \sum_{i=1}^n (1-z_i) \Iprod{x_i^{(1)}, \theta^{(2)} - \theta^{(1)}} \Paren{\Iprod{x_i^{(1)},\theta^{(1)}} - y_i^{(1)}}}^4 \\
    &\leq \eta ^2 \Paren{\tfrac 1 n \sum_{i=1}^n \Iprod{x_i^{(1)}, \theta^{(2)} - \theta^{(1)}}^2 \Paren{\Iprod{x_i^{(1)},\theta^{(1)}} - y_i^{(1)}}^2}^2 \\
    &\leq \eta ^2 \Paren{\tfrac 1 n \sum_{i=1}^n \Iprod{x_i^{(1)}, \theta^{(2)} - \theta^{(1)}}^4} \Paren{ \tfrac 1 n \sum_{i=1}^n \Paren{\Iprod{x_i^{(1)},\theta^{(1)}} - y_i^{(1)}}^4}  \\
    &\leq \bigO{\eta^2} \Paren{(\theta^{(2)} - \theta^{(1)})^\top \Sigma^{(1)} (\theta^{(2)} - \theta^{(1)})}^2\Biggr\}
\end{align*}

Similarly, using the constraints that there is an SoS proof in variables $v$ that $\tfrac{1}{n} \sum_{i=1}^n \Iprod{x_i^{(2)},v}^4 \leq \bigO{1} \Iprod{u, \Sigma^{(2)} u}^2$, that $\tfrac{1}{n} \sum_{i=1}^n (\Iprod{x_i^{(2)}, \theta^{(2)}} - y_i^{(2)})^4 \leq \bigO{1}$, and the large overlap of the two pseudoexpectations (cf~\cref{fact:product_pE_overlap}), we can bound Term B as follows
\begin{align*}
    \calA_1 \cup \calA_2 &\sststile{\bigO{1}}{} \Biggl\{ \Paren{\tfrac 1 n \sum_{i=1}^n (1-z_i) \Iprod{x_i^{(2)}, \theta^{(2)} - \theta^{(1)}} \Paren{\Iprod{x_i^{(2)},\theta^{(2)}} - y_i^{(2)}}}^4 \\
    &\leq \eta^2 \Paren{\tfrac 1 n \sum_{i=1}^n \Iprod{x_i^{(2)}, \theta^{(2)} - \theta^{(1)}}^2 \Paren{\Iprod{x_i^{(2)},\theta^{(2)}} - y_i^{(2)}}^2}^2 \\
    &\leq \eta^2 \Paren{\tfrac 1 n \sum_{i=1}^n \Iprod{x_i^{(2)}, \theta^{(2)} - \theta^{(1)}}^4} \Paren{ \tfrac 1 n \sum_{i=1}^n\Paren{\Iprod{x_i^{(2)},\theta^{(2)}} - y_i^{(2)}}^4} \\
    &\leq \bigO{\eta^2} \Paren{(\theta^{(2)} - \theta^{(1)})^\top \Sigma^{(2)} (\theta^{(2)} - \theta^{(1)})}^2\Biggr\}
\end{align*}

Thus, overall we have shown that
\begin{align*}
    \calA_1 \cup \calA_2 &\sststile{\bigO{1}}{} \Biggl\{\Paren{(\theta^{(2)} - \theta^{(1)})^\top \Sigma^{(1)} (\theta^{(2)} - \theta^{(1)})}^4 \\
    &\leq \bigO{\eta^2} \Paren{\Paren{(\theta^{(2)} - \theta^{(1)})^\top \Sigma^{(1)} (\theta^{(2)} - \theta^{(1)})}^2 + \Paren{(\theta^{(2)} - \theta^{(1)})^\top \Sigma^{(2)} (\theta^{(2)} - \theta^{(1)})}^2}\Biggr\} \,.
\end{align*}
Applying SoS triangle and~\cref{lem:approx-dp-cov} (covariance closeness for worst case points), we can deduce that $\calA_\eta$ implies at constant degree that the right-hand side of the above is at most
\begin{align*}
    &\bigO{\eta^2} \Paren{\Paren{(\theta^{(2)} - \theta^{(1)})^\top \Sigma^{(1)} (\theta^{(2)} - \theta^{(1)})}^2 + \Paren{(\theta^{(2)} - \theta^{(1)})^\top \Sigma^{(2)} (\theta^{(2)} - \theta^{(1)})}^2}  \\
    &\leq \bigO{\eta^2} \Paren{\Paren{(\theta^{(2)} - \theta^{(1)})^\top \Sigma^{(1)} (\theta^{(2)} - \theta^{(1)})}^2 + \Paren{(\theta^{(2)} - \theta^{(1)})^\top \Paren{\Sigma^{(2)} - \Sigma^{(1)}} (\theta^{(2)} - \theta^{(1)})}^2}  \\
    &\leq \bigO{\eta^2} \Paren{(\theta^{(2)} - \theta^{(1)})^\top \Sigma^{(1)} (\theta^{(2)} - \theta^{(1)})}^2 \,,
\end{align*}
as desired. Now we note that for any $u$, we have via Cauchy-Schwartz (cf.~\cref{fact:sos_cs}) and our bound on the quadratic form of the difference of $\theta^{(1)}, \theta^{(2)}$ on $\Sigma^{(1)}$ that 
\begin{align*}
    \calA_1 \cup \calA_2 \sststile{}{} \Biggl\{ \Iprod{u, \theta^{(1)} - \theta^{(2)}}^2 &= \Iprod{B^{(1)}D^{(1)}u, \theta^{(1)} - \theta^{(2)}}^2 \\
    &= \Iprod{D^{(1)}u, B^{(1)} \left( \theta^{(1)} - \theta^{(2)}\right)}^2  \\
    &\leq \left(u^\top D^{(1)}D^{(1)} u\right) \cdot \left( \theta^{(1)} - \theta^{(2)}\right)^\top B^{(1)}B^{(1)} \left( \theta^{(1)} - \theta^{(2)}\right) \\
    &\leq \bigO{\eta} \left(u^\top Q^{(1)} u\right)\Biggr\}\,.
\end{align*}
Noting that the LHS is convex and using this to move the psuedoexpectation inside the square completes the proof.
\end{proof}

It remains to prove~\cref{lem:approx-dp-cov}.
\begin{proof}[Proof of~\cref{lem:approx-dp-cov}]
    Note that for any $u$ by Triangle Inequality (cf~\cref{fact:sos-almost-triangle}) and Cauchy-Schwartz (cf~\cref{fact:sos_cs}) we have that 
    \begin{align*}
        \calA_1 \cup \calA_2 &\sststile{O(1)}{u} \Biggl\{\Paren{u^\top\Paren{\Sigma^{(2)} - \Sigma^{(1)}} u}^2 \\
        &\leq \bigO{1} \Paren{\tfrac 1 n \sum_{i=1}^n (1-z_i) \Iprod{u, x_i^{(2)}}^2}^2 + \bigO{1}\Paren{\tfrac 1 n \sum_{i=1}^n (1-z_i) \Iprod{u, x_i^{(1)}}^2}^2 \\
        &\leq \bigO{\eta} \cdot \tfrac 1 n \sum_{i=1}^n \Iprod{u, x_i^{(2)}}^4 + \bigO{\eta} \cdot \tfrac 1 n \sum_{i=1}^n \Iprod{u, x_i^{(1)}}^4
        \Biggr\} \,.
    \end{align*}
    Now, by our constraint on the fourth moments, we now that there exists an SoS proof in variables $u$ that the sum in the first term is at most $\bigO{1} (u^\top \Sigma^{(2)} u)^2$ and similarly for the second with $\bigO{1} (u^\top \Sigma^{(1)} u)^2$.
    Together, this implies the lemma.
\end{proof}

Now the last remaining piece is to prove~\cref{lem:sos_inverse_worst_case}. In the proof we will need the following lemma, whose proof is inspired by~\cref{lem:sos_assymetric_cov_estimation}.
\begin{lemma}
\label{lem:assymetric-approx-dp}
    Let $\pE_1$ and $\pE_2$ be two pseudoexpectations satisfying the robust regression constraints $\calA_1, \calA_2$ with score $\eta n < n/e$. Let $\pE$ be the product of the two pseudoexpecations, which satisfies the constraints $\calA_1 \cup \calA_2$. Then we have that
    \[ \calA_1 \cup \calA_2 \sststile{}{} \Set{u^\top \left(\Sigma^{(1)} - \Sigma^{(2)}\right) v \leq \bigO{1} \cdot u^\top \Sigma^{(2)} u + \frac{1}{2} v^\top \Sigma^{(2)} v}\,.\]
\end{lemma}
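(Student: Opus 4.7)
The plan is to combine a standard SoS polarization identity with the quadratic covariance closeness of Lemma~\ref{lem:approx-dp-cov}. The constraint system certifies $\Sigma^{(j)} = B^{(j)} B^{(j)}$ with $B^{(j)}$ symmetric for each $j \in \{1,2\}$, so for any fixed scalar $c > 0$ the polynomial $(cB^{(j)}u \mp c^{-1} B^{(j)} v)^\top(cB^{(j)}u \mp c^{-1} B^{(j)} v)$ expands to $c^2 u^\top \Sigma^{(j)} u \mp 2 u^\top \Sigma^{(j)} v + c^{-2} v^\top \Sigma^{(j)} v$ and is manifestly a sum of squares in the indeterminates $u, v, B^{(j)}$. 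Rearranging gives the SoS polarization
\[
\calA_1 \cup \calA_2 \sststile{2}{u,v} \Set{\pm u^\top \Sigma^{(j)} v \leq \tfrac{c^2}{2} u^\top \Sigma^{(j)} u + \tfrac{1}{2 c^2} v^\top \Sigma^{(j)} v}.
\]
I would then apply the $+$ case with $j=1$ and the $-$ case with $j=2$, and sum to obtain
\[
u^\top (\Sigma^{(1)} - \Sigma^{(2)}) v \leq \tfrac{c^2}{2}\Paren{u^\top \Sigma^{(1)} u + u^\top \Sigma^{(2)} u} + \tfrac{1}{2c^2}\Paren{v^\top \Sigma^{(1)} v + v^\top \Sigma^{(2)} v}.
\]

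The second step of the plan converts the $\Sigma^{(1)}$ quadratic forms on the right-hand side into $\Sigma^{(2)}$ forms. Lemma~\ref{lem:approx-dp-cov} gives the SoS inequality $(u^\top (\Sigma^{(2)} - \Sigma^{(1)}) u)^2 \leq \bigO{\eta}(u^\top \Sigma^{(1)} u)^2$. Since $u^\top \Sigma^{(1)} u = \|B^{(1)} u\|^2$ is SoS-nonnegative, I would apply Fact~\ref{lem:square_root_sos} to both $\pm(u^\top(\Sigma^{(1)} - \Sigma^{(2)})u)$ to obtain
\[
\calA_1 \cup \calA_2 \sststile{O(1)}{u} \Set{u^\top \Sigma^{(1)} u - u^\top \Sigma^{(2)} u \leq \bigO{\sqrt{\eta}} \cdot u^\top \Sigma^{(1)} u}.
\]
Because $\eta$ is smaller than a sufficiently small absolute constant, the coefficient $1 - \bigO{\sqrt{\eta}}$ is bounded below by a positive constant, so rearranging in SoS yields $u^\top \Sigma^{(1)} u \leq \bigO{1} \cdot u^\top \Sigma^{(2)} u$; the identical derivation with $v$ in place of $u$ produces the analogous bound for $v^\top \Sigma^{(1)} v$. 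Substituting back into the polarization inequality gives $u^\top (\Sigma^{(1)} - \Sigma^{(2)}) v \leq \bigO{c^2} u^\top \Sigma^{(2)} u + \bigO{c^{-2}} v^\top \Sigma^{(2)} v$, and fixing $c$ to be a sufficiently large absolute constant so that the second coefficient is at most $1/2$ completes the proof.

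The hard part will be Step 2, namely turning the squared closeness of Lemma~\ref{lem:approx-dp-cov} into a linear one-sided SoS bound. Fact~\ref{lem:square_root_sos} is stated only for constant right-hand sides, whereas here the right-hand side is the polynomial $u^\top \Sigma^{(1)} u$; making the square-root extraction rigorous in SoS requires essentially using the auxiliary symmetric square-root variable $B^{(1)}$ (so that $u^\top \Sigma^{(1)} u$ is SoS-nonnegative) together with the quantitative smallness of $\eta$, so that the final rearrangement $(1 - \bigO{\sqrt{\eta}})\, u^\top \Sigma^{(1)} u \leq u^\top \Sigma^{(2)} u$ closes up to a positive constant factor within the SoS proof system rather than only in real arithmetic.
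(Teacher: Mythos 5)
Your proposal is correct and lands on exactly the same intermediate inequality as the paper, but reaches it by a slightly different route. The paper first uses the overlap identity $z_i x_i^{(1)} = z_i x_i^{(2)}$ to write $u^\top(\Sigma^{(1)}-\Sigma^{(2)})v$ as a difference of two sums supported on the indices with $z_i \neq 1$, and then applies $ab \leq \gamma a^2 + \gamma^{-1}b^2$ term by term (discarding the $(1-z_i)$ factors), arriving at $u^\top(\Sigma^{(1)}-\Sigma^{(2)})v \leq \gamma\,(u^\top\Sigma^{(1)}u + u^\top\Sigma^{(2)}u) + \gamma^{-1}(v^\top\Sigma^{(1)}v + v^\top\Sigma^{(2)}v)$; your global polarization of each bilinear form $u^\top\Sigma^{(j)}v$ gives the identical inequality with $\gamma = c^2/2$ while bypassing the product-pseudo-expectation overlap entirely, which is a genuine (if mild) simplification — the only thing the paper's route buys is the option of an extra $\eta$ factor from the $(1-z_i)$ weights, which neither proof uses and the lemma does not need. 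From that point on the two arguments coincide: both invoke \cref{lem:approx-dp-cov} to replace the $\Sigma^{(1)}$ quadratic forms by $\Sigma^{(2)}$ ones and then fix the constant. On the step you flag as hard: the paper performs exactly the move you describe — it cites the squared closeness of \cref{lem:approx-dp-cov} and immediately uses it in the linear form $u^\top\Sigma^{(1)}u \leq \bigO{1}\,u^\top\Sigma^{(2)}u$ for indeterminate $u$ — with no more justification than you give, so your concern identifies a subtlety shared with the paper's own proof rather than a defect of your route; you are right that \cref{lem:square_root_sos} does not apply when the right-hand side is a polynomial in indeterminates, and making this square-root extraction rigorous (e.g.\ via the square-root variables $B^{(j)}$ and the smallness of $\eta$, as you suggest) is the one place where both write-ups leave work to the reader.
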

We now proceed to the proof of~\cref{lem:sos_inverse_worst_case} and then prove ~\cref{lem:assymetric-approx-dp} afterwards.

\begin{proof}[Proof of~\cref{lem:sos_inverse_worst_case}]
    We start by observing that we have that
    \[
        \calA_1 \cup \calA_2 \sststile{\bigO{1}}{} \Biggl\{ \Iprod{u, Q^{(2)} u} = \Paren{u^\top Q^{(1)} u} +  \Paren{u^\top \brac{Q^{(2)} - Q^{(1)} } u} \Biggr\} 
    \]
    Thus, it remains to bound the second term.
    Repeatedly using the constraints on $\Sigma^{(1)}, \Sigma^{(2)}, Q^{(1)}, Q^{(2)}$ we have that
    \begin{align*}
        \calA_1 \cup \calA_2 &\sststile{\bigO{1}}{} \Biggl\{ u^\top \brac{Q^{(2)} - Q^{(1)} } u = \Iprod{Q^{(2)} - Q^{(1)}, u u^\top} \\
        &= \Iprod{Q^{(2)} \Sigma^{(2)} Q^{(2)} \Sigma^{(2)} Q^{(2)} - Q^{(2)} \Sigma^{(2)} Q^{(1)} \Sigma^{(2)} Q^{(2)}, u u^\top} \\
        &= \Iprod{\Sigma^{(2)} Q^{(2)} \Sigma^{(2)} - \Sigma^{(2)} Q^{(1)} \Sigma^{(2)}, \brac{Q^{(2)}u} \brac{Q^{(2)}u}^\top} \\
        &=\Iprod{Q^{(2)} \Sigma^{(2)} - Q^{(1)} \Sigma^{(2)}, \brac{\Sigma^{(2)} Q^{(2)}u} \brac{Q^{(2)}u}^\top} \\
        &= \Iprod{I_d - Q^{(1)} \Sigma^{(2)}, u \brac{Q^{(2)}u}^\top} \\
        &= \Iprod{Q^{(1)} \Sigma^{(1)} - Q^{(1)} \Sigma^{(2)}, u \brac{Q^{(2)}u}^\top} = \Iprod{\Sigma^{(1)} -\Sigma^{(2)}, \brac{Q^{(1)} u} \brac{Q^{(2)}u}^\top} \Biggr\} \,.
    \end{align*}
    Applying~\cref{lem:assymetric-approx-dp} with $u = Q^{(1)} u$ and $v = Q^{(2)} u$ (overloading notation) and using the constraint $Q^{(2)} \Sigma^{(2)} Q^{(2)} = Q^{(2)}$, it follows that
    \begin{align*}
        \calA_{\eta n} \sststile{\bigO{1}}{} \Biggl\{ \Paren{u^\top \brac{Q^{(2)} - Q^{(1)} } u} &\leq \bigO{1} \paren{u^\top \brac{Q^{(1)} \Sigma^{(2)} Q^{(1)} }u} + \frac 1 2  \paren{u^\top \brac{Q^{(2)} \Sigma^{(2)} Q^{(2)}} u}  \\
        &= \bigO{1} \paren{u^\top \brac{Q^{(1)} \Sigma^{(2)} Q^{(1)}} u} + \frac 1 2 \paren{u^\top Q^{(2)} u}\Biggr\} \,.
    \end{align*}
    The second term is a small multiple of the initial left-hand side and thus, we can ignore it by rearranging.
    For the first term, note that by~\cref{lem:approx-dp-cov} our constraints imply that it is at most $\bigO{1} \paren{u^\top Q^{(1)} \Sigma^{(1)} Q^{(1)}u} = \bigO{1} \paren{u^\top Q^{(1)} u}$.
\end{proof}

Finally, we have to prove~\cref{lem:assymetric-approx-dp}.
\begin{proof}[Proof of~\cref{lem:assymetric-approx-dp}]
    Expanding the equality constraints for the covariance variables we have that
    \begin{align*}
         \calA_1 \cup \calA_2 \sststile{}{} \Biggl\{ &u^\top \left(\Sigma^{(1)} - \Sigma^{(2)}\right) v  \\
         &= \frac{1}{n} \sum_{i \in [n]} (1-z_i) \Iprod{x_i^{(1)}, u} \Iprod{x_i^{(1)}, v} - \frac{1}{n} \sum_{i \in [n]} (1-z_i) \Iprod{x_i^{(2)}, u} \Iprod{x_i^{(2)}, v}\Biggr\}\,.
    \end{align*}
    Using that there exists an SoS proof that $ab \leq a^2 + b^2$ we have that for any $\gamma$ the first term is bounded as follows:
    \begin{align*}
        \calA_1 \cup \calA_2 \sststile{}{} \Biggl\{ \frac{1}{n} \sum_{i \in [n]} (1-z_i) \Iprod{x_i^{(1)}, u} \Iprod{x_i^{(1)}, v} &\leq \gamma \cdot \frac{1}{n} \sum_{i \in [n]} (1-z_i)^2 \Iprod{x_i^{(1)}, u}^2  + \gamma^{-1} \cdot \frac{1}{n} \sum_{i \in [n]} \Iprod{x_i^{(1)}, v}^2 \\
        &\leq \gamma u^\top \Sigma^{(1)} u + \gamma^{-1} v^\top \Sigma^{(1)} v \Biggr\} \,.
    \end{align*}
    Applying a similar bound on the second term we can conclude that there exists an SoS proof that
    \[ u^\top \left(\Sigma^{(1)} - \Sigma^{(2)}\right) v \leq \gamma u^\top \Sigma^{(1)} u + \gamma^{-1} v^\top \Sigma^{(1)} v + \gamma u^\top \Sigma^{(2)} u + \gamma^{-1} v^\top \Sigma^{(2)} v\,.\]
    Applying~\cref{lem:approx-dp-cov} to bound $u^\top \Sigma^{(1)} u \leq \bigO{1} u^\top \Sigma^{(2)} u$ (along with bounding the quadratic form on $v$) and choosing $\gamma$ to be a sufficiently large constant yields the lemma. 
\end{proof}

\end{document}